\documentclass[12pt]{article} \usepackage{amsthm}
\usepackage{latexsym} \usepackage{eucal} 
\usepackage{amssymb,amsmath,amsfonts,amssymb}
\usepackage{latexsym} \usepackage{eucal} \usepackage{srcltx}
\usepackage{amssymb,amsmath,amsfonts,amssymb}
\usepackage{amssymb,amsmath,amsfonts,amssymb}
\usepackage{amsmath,amsthm,amscd,amssymb}
\usepackage{latexsym}
\usepackage {verbatim}

\addtolength{\textwidth}{1.4in} \addtolength{\oddsidemargin}{-0.7in}
\addtolength{\evensidemargin}{-0.7in}

\usepackage{graphics}
\usepackage{amsmath}
\usepackage{amsfonts,amssymb,amsthm,eucal}
\usepackage{pstricks,multido,graphicx,psfrag}
\textheight 21cm \topmargin -1cm \leftmargin -3cm \marginparwidth
1.5cm \textwidth 16cm \hsize \textwidth \advance \hsize by
-\marginparwidth \oddsidemargin -4mm \evensidemargin
\oddsidemargin
\advance\hoffset by 5mm



\def\@abssec#1{\vspace{.05in}\footnotesize \parindent .2in
{\bf #1. }\ignorespaces}

\newtheorem{theorem}{Theorem}[section]
\newtheorem{thm}[theorem]{Theorem}
\newtheorem{lemma}[theorem]{Lemma}

\newtheorem{corollary}[theorem]{Corollary}
\newtheorem{definition}[theorem]{Definition}
\newtheorem{remark}[theorem]{Remark}

\def\1{{\bf 1}}

\chardef\set=35

\def\C{\mathbb C}

\def\x{{\vec x}}
\def\y{{\vec y}}

\def\u{{\bf u}}
\def\p{{\vec p}}

\def\j{{\bf j}}
\def\wt{\widetilde}

\def\n{{\bf n}}

\def\k{{\vec \varkappa}}

\def\b{{\bf b}}

\def\q{{\bf q}}
\def\l{{\bf l}}
\def\a{{\bf a}}
\def\h{{\vec h}}
\def\s{{\bf s}}
\def\v{{\bf v}}
\def\m{{\bf m}}

\def\R{{\mathbb R}}
\def\Z{{\mathbb Z}}
\def\N{{\mathbb N}}

\def\D{{\bf D}}

\def\DD{{\cal D}}

\def\E{{\cal E}}
\def\RR{{\cal R}}

\def\embb{\supset \kern-11.6pt\lower.335ex\hbox{$\scriptscriptstyle <$} \;}

\def\M{{\cal M}}

\def\W{{\cal W}}
\def\SS{{\cal S}}

\def\B{{\cal B}}

\def\OO{{\cal O}}
\def\MM{{\cal M}}

\def\QQ{{\cal Q}}

\def\g{{\bf g}}

\title{\ Multiscale Analysis in Momentum Space for Quasi-periodic Potential in Dimension Two}
\author{Yu. Karpeshina, R. Shterenberg}


\begin{document}

\maketitle

\begin{abstract} We consider a polyharmonic operator $H=(-\Delta)^l+V(\x)$ in
dimension two with $l\geq 2$, $l$ being an integer, and a
quasi-periodic potential $V(\x)$. We prove that the absolutely
continuous spectrum of $H$ contains a semiaxis and there is a family
of generalized eigenfunctions at every point of this semiaxis with
the following properties. First, the eigenfunctions are close to
plane waves $e^{i\langle \k,\x\rangle }$ at the high energy region.
Second, the isoenergetic curves in the space of momenta $\k$
corresponding to these eigenfunctions have a form of slightly
distorted circles with holes (Cantor type structure). A new method
of multiscale analysis in the momentum space is developed to prove
these results.\end{abstract} \tableofcontents

\newpage
\section{Introduction} \setcounter{equation}{0} We study  an operator
    \begin{equation}
    H=(-\Delta)^l+V(\x) \label{main0}
    \end{equation}
    in two dimensions, where $l$ is an integer, $l\geq 2$, $V(\x)$ is a quasi-periodic
potential being a trigonometric polynomial:
\begin{equation}\label{V}
V=\sum\limits_{\s_1,\s_2\in\Z^2,\,0<|\s_1|+|\s_2|\leq
Q}V_{\s_1,\s_2}e^{2\pi i\langle \s_1+\alpha \s_2,\x\rangle},\ \
1\leq Q<\infty . \end{equation} We assume  that the irrationality
measure $\mu$ of $\alpha$ is finite: $\mu<\infty$, or in other
words, that $\alpha$ is not a Liouville number \footnote{ Note,  that $\mu\geq 2$ for any irrational number $\alpha $.}.


 The one-dimensional situation $d=1$, $l=1$ is thoroughly
   investigated in discrete and continuum settings,
   see e.g. \cite{DiSi}--\cite{FK} and references there.
   It is known that a one-dimensional  quasi-periodic Schr\"{o}dinger operator demonstrates spectral and transport
   properties which
   are \underline{not} close to those of a periodic operator.
    The spectrum of the quasi-periodic
   operator is, as a rule, a Cantor set, while in the periodic case, it has a band structure.
   In the periodic case the spectrum is absolutely continuous, while in the
   quasi-periodic case, it
   can have any nature: absolutely continuous, singular continuous and  pure point.
   The transition between different types  of spectrum can happen even with a small change of
   a coefficient in a quasi-periodic operator \cite{J1}.
    The mechanism of the difference in
   spectral behavior between periodic and quasi-periodic cases can be explained by a
   phenomenon
    which is known as resonance tunneling in quantum mechanics. It is associated with small
    denominators appearing in formal series of perturbation theory. Since the spectrum of
    the one-dimensional Laplacian is thin  (multiplicity 2),  resonance
    tunneling can produce an effect strong enough to destroy
    the spectrum. If a potential is periodic, then  resonance tunneling  produces
    gaps in the spectrum near the points $\lambda _n=(\pi n/a)^2$, $n\in \Z$, $a$ being the
    period of the potential. If the potential is quasi-periodic, then it
    can be thought as a sort of combination of  infinite number of periodic potentials,
     each
    of them producing gaps near its own $\lambda _n$-s. Since the set of all
    $\lambda _n$-s can be
    dense, the number of points surrounded by gaps can be dense too. Thus, the spectrum gets a
    Cantor like structure. The properties of the operator in the high energy region for the continuum case $d=1$ are studied in
   \cite{DiSi}-\cite{MP}, \cite{E}. The KAM method is used to prove  absolute continuity of the
      spectrum and existence of quasiperiodic solutions at high
      energies.

    There are  important results on the density of states, spectrum,
     localization concerning the  quasi-periodic operators in $\Z^d$ and, partially, in
     $\R^d$, $d>1$, e.g. \cite{Sh1}-\cite{68a}.
        However, it is still much less known  about (\ref{main0}) then about its
    one-dimensional analog. The properties of the spectrum in the high energy region, existence of extended states and
    quantum transport are  still the wide open problems
    in the multidimensional case.

 Here we study properties of the spectrum and
eigenfunctions of (\ref{main0}) in the high energy region. We prove
the following results for the case $d=2$, $l\geq 2$.
    \begin{enumerate}
    \item The spectrum of the operator (\ref{main0})
    contains a semiaxis.

    This is a generalization of  a renown Bethe-Sommerfeld conjecture, which states that in the case of a periodic
    potential, $l=1$ and $d\geq 2$, the spectrum of \eqref{main0} contains a semiaxis.
    There is a variety of proofs for the periodic case, the earliest one is \cite{14r}. For a limit-periodic periodic potential, being periodic in one direction,  the conjecture is proved in \cite{SS}. For a general case
    of limit-periodic potential the conjecture is proven in \cite{KL1}--\cite{KL3}. Here we present the first proof of (a generalized) Bethe-Sommerfeld conjecture for a quasi-periodic potential.

    \item There are generalized eigenfunctions $\Psi_{\infty }(\k, \x )$,
    corresponding to the semi-axis, which are close to plane waves:
    for every $\k $ in an extensive subset $\cal{G} _{\infty }$ of
$\R^2$, there is
    a solution $\Psi_{\infty }(\k, \x)$ of the  equation
    $H\Psi _{\infty }=\lambda _{\infty }\Psi _{\infty }$ which can be
described by
    the formula:
    \begin{equation}
    \Psi_{\infty }(\k, \x)
    =e^{i\langle \k, \x \rangle}\left(1+u_{\infty}(\k,
    \x)\right), \label{qplane}
    \end{equation}
    \begin{equation}
    \|u_{\infty}\|_{L_{\infty }(\R^2)}\underset{|\k| \rightarrow
     \infty}{=}O\left(|\k|^{-\gamma _1}\right),\ \ \ \gamma _1>0,
    \label{qplane1}
    \end{equation}
    where $u_{\infty}(\k, \x)$ is a quasi-periodic
    function, namely a point-wise convergent series of exponentials $e^{2\pi i\langle\n+\alpha \m,\x\rangle}$, $\n ,\m \in \Z^2$.
    The  eigenvalue $\lambda _{\infty }(\k)$, corresponding to
    $\Psi_{\infty }(\k, \x)$, is close to $|\k|^{2l}$:
    \begin{equation}
    \lambda _{\infty }(\k)\underset{|\k| \rightarrow
     \infty}{=}|\k|^{2l}+
    O\left(|\k|^{-\gamma _2}\right),\ \ \ \gamma _2>0. \label{16a}
    \end{equation}
     The ``non-resonant" set $\cal{G} _{\infty }$ of
       vectors $\k$, for which (\ref{qplane}) -- (\ref{16a}) hold, is
       an extensive
       Cantor type set: ${\cal G} _{\infty }=\cap _{n=1}^{\infty }{\cal G}
_n$,
       where $\{{\cal G} _n\}_{n=1}^{\infty}$ is a decreasing sequence of
sets in $\R^2$. Each ${\cal G} _n$ has a finite number of holes  in each
bounded
       region. More and more holes appear when $n$ increases,
       however
       holes added at each step are of smaller and smaller size.
       The set $\cal{G} _{\infty }$ satisfies the estimate:
       \begin{equation}\left|\cal{G} _{\infty }\cap
        \bf B_R\right|\underset{R \rightarrow
     \infty}{=}|{\bf B_R}| \bigl(1+O(R^{-\gamma _3})\bigr),\ \ \ \gamma
_3>0,\label{full}
       \end{equation}
       where $\bf B_R$ is the disk of radius $R$ centered at the
       origin, $|\cdot |$ is the Lebesgue measure in $\R^2$.

       \item The set $\cal{D}_{\infty}(\lambda)$,
defined as a level (isoenergetic) set for $\lambda _{\infty }(
\k)$,
$$ {\cal D} _{\infty}(\lambda)=\left\{ \k \in \cal{G} _{\infty }
:\lambda _{\infty }(\k)=\lambda \right\},$$ is proven
to be a slightly distorted circle with infinite number of holes. It
can be described by  the formula:
 \begin{equation} {\cal
D}_{\infty}(\lambda)=\left\{\k:\k=\varkappa_{\infty}(\lambda, \vec
\nu){\vec \nu},
    \ {\vec \nu} \in {\cal B}_{\infty}(\lambda)\right
    \}, \label{Dinfty}
    \end{equation}
where ${\cal B}_{\infty }(\lambda )$ is a subset of the unit circle
$S_1$. The set ${\cal B}_{\infty }(\lambda )$ can be interpreted as
the set of possible  directions of propagation for  almost plane
waves (\ref{qplane}). The set ${\cal B}_{\infty }(\lambda )$ has a
Cantor type structure and an asymptotically full measure on $S_1$ as
$\lambda \to \infty $:
\begin{equation}
L\bigl({\cal B}_{\infty }(\lambda )\bigr)\underset{\lambda
\rightarrow
     \infty}{=}2\pi +O\left(\lambda^{-\gamma _4/2l}\right),\ \ \
     \gamma_4>0,
\label{B}
\end{equation}
here and below $L(\cdot)$ is a length of a curve. The value
$\varkappa_{\infty }(\lambda ,{\vec \nu} )$ in (\ref{Dinfty}) is the
``radius" of ${\cal D}_{\infty}(\lambda)$ in a direction ${\vec \nu}
$. The function $\varkappa_{\infty }(\lambda ,{\vec \nu}
)-\lambda^{1/2l}$ describes the deviation of ${\cal
D}_{\infty}(\lambda)$ from the perfect circle of the radius
$\lambda^{1/2l}$. It is proven that the deviation is asymptotically
small:
\begin{equation}
\varkappa_{\infty }(\lambda ,{\vec \nu} )\underset{\lambda
\rightarrow
     \infty}{=}\lambda^{1/2l}+O\left(\lambda^{-\gamma _5 }\right),
\ \ \ \gamma _5>0. \label{h}
\end{equation}

\item The branch of the spectrum of the operator (\ref{main0}) corresponding to the generalized eigenfunctions $\Psi_{\infty }(\k, \x
)$ is absolutely continuous.

\end{enumerate}


To prove the results listed above we  suggest a method which can be
described as {\em multiscale analysis in the space of momenta}. This
is a development of the method, which is used in \cite{KL1}--\cite{KL3} for the
case of limit-periodic potentials. The essential difference is that
in \cite{KL1}--\cite{KL3} we constructed a modification of KAM method, where the
space variable  $\x$ still plays some role (e.g. in the uniform  in
$\x$ approximation of a limit-periodic potential by periodic ones),
while in the present situation all considerations are happening in
the space of the dual variable $\k$. The KAM method in \cite{KL1}--\cite{KL3}
was motivated by \cite{2}--\cite{3}, where the method is used for
periodic problems. Multiscale analisys which we apply here is deeply
analogous to the original multiscale method developed in \cite{FrSp}
(see also \cite{BG}, \cite{74}) for the proof of localization. The
essential difference is that
 in \cite{FrSp}, \cite{BG}, \cite{74} the multiscale procedure is constructed with respect to space variable $\x$ to prove localization, while we construct a multiscale procedure in the space of momenta $\k$ to prove delocalization.

        Here is a brief description of the iteration procedure which leads to the results described above.
        Indeed,  let $\k \in \R^2$. We consider a set of finite linear combinations of plane waves
        $e^{i\langle\k+2\pi(\n+\alpha \m),\x\rangle}$,
        $ \n,\m \in \Z^2$.
         The set is invariant under action of the differential expression (\ref{main0}). Let $H(\k)$ be a
         matrix describing action of  (\ref{main0}) in the linear set of the exponents.
          Obviously,
          $$H(\k)=H_0(\k)+V,\ \  H_0(\k)_{(\n,\m), (\n',\m ')}=|\k+2\pi(\n+\alpha \m)|_{\R^2}^2\delta _{(\n,\n')}
          \delta_{ (\m,\m')},$$ $$V_{(\n,\m), (\n',\m')}=V_{\n-\n', \m-\m'}.$$
          Next, we consider  an expanding sequence of finite sets $\Omega_n$ in the space $\Z^2\times \Z^2$ of indices
          $(\n,\m)$: $\Omega_n\subset \Omega_{n+1}$, $\lim _{n\to \infty } \Omega_n= \Z^2\times \Z^2$. Let $P_n$ be
          the characteristic projection of  set $\Omega_n$ in the space $\ell^{2}(\Z^2\times \Z^2)$. We consider a sequence
          of finite matrices $H^{(n)}(\k)=P_nH(\k)P_n$. Each matrix corresponds to a finite dimensional operator in
          $\ell^{2}(\Z^2\times \Z^2)$, given that the operator acts as zero on $(I-P_n){\ell}^2$.
          For each $n$ we construct a ``non-resonant" set ${\cal G}_n$ in the space $\R^2$ of momenta $\k$, such that:
          if $\k \in {\cal G}_n$, then $H ^{(n)}(\k)=P_nH(\k)P_n$ has an eigenvalue $\lambda _n(\k)$ and
          its spectral projector $\E _n(\k)$ which can be described by perturbation formulas with respect to the previous
          operator $H ^{(n-1)}(\k)$. If $\k \in \cap _{n=1}^{\infty} {\cal G} _n$ then $\lambda _n(\k)$ and $\E _n(\k)$
          have  limits. The linear combinations of the exponentials, corresponding to the projectors $\E _n(\k)$, have a
          point-wise limit in $\x$, the limit being a generalized eigenfunction of \eqref{main0}.
         The generalized eigenfunction is close to the plane wave $e^{i\langle\k,\x\rangle}$ in the high energy region.

          Each matrix $H ^{(n)}$ is considered as a perturbation of a matrix $\hat H ^{(n)}$, the latter has a block
          structure, i.e., consists of a variety of blocks $H^{(s)}(\k +2\pi(\n +\alpha \m))$, $s=1,...,n-1$, and, naturally,
          some diagonal terms. Blocks with different indices $(s)$ have    sizes of different orders of magnitude
          (the size increasing
          with $s$). Thus we have a multiscale structure  in the definition of $\hat H ^{(n)}$.
          We use $\hat H ^{(n)}(\k)$ as a starting operator to construct perturbation series for $H^{(n)}(\k)$.
          At a step $n$ we apply our knowledge of spectral properties of $H^{(s)}(\k +2\pi(\n'+\alpha \m'))$, $s=1,...,n-1$,
          $\n', \m' \in  \Z^2$, obtained in the previous steps, to
        describe spectral properties of  $H^{(n)}(\k +2\pi(\n +\alpha \m))$, $\n,\m\in \Z^2$ and to construct ${\cal G} _n$.

At step one we use a regular perturbation theory and elementary geometric considerations to prove the following results. There is a set ${\cal G}_1\subset \R^2$ such that: if $\k\in {\cal G}_1$, then the operator $H^{(1)}(\k)$ has a single eigenvalue close to the unperturbed one:
\begin{equation} \lambda ^{(1)}(\k)\underset{|\k| \rightarrow
     \infty}{=}|\k|^{2l}+
    O\left(|\k|^{-\gamma _2}\right),\ \ \ \gamma _2>0.\label{lambda-1} \end{equation}
     A normalized eigenvector $\u^{(1)}$ is also close to the unperturbed one: $\u^{(1)}=\u^{(0)}+\tilde \u^{(1)}$, where
     $(\u^{(0)})_{(\n,\m)}=\delta _{\n,{\bf 0}}\delta_{ \m,{\bf 0}}$ and the $l^{1}$-norm of $\tilde \u^{(1)}$ is small:
     $\|\tilde \u^{(1)}\|_{l^{1}}<|\k|^{-\gamma _1}$, $\gamma _1>0$. It follows that:
 \begin{equation}
    \Psi_1 (\k, \x)
    =e^{i\langle \k, \x \rangle}+\tilde u_{1}(\k,
    \x),\ \ \
\|\tilde u_{1}\|_{L_{\infty }(\R^2)}\underset{|\k| \rightarrow
     \infty}{=}O(|\k|^{-\gamma _1}),\ \
\ \gamma _1>0, \label{na}
\end{equation}
    where  $\Psi_1 (\k, \x)$, $\tilde u_{1}(\k, \x)$ are the linear combinations of the exponentials corresponding to
    vectors $\u^{(1)}$ and $\tilde \u^{(1)}$, respectively. It is shown that function $\Psi_1 (\k, \x)$
    satisfies the equation for eigenfunctions with a good accuracy:
    \begin{equation} \label{eqforeigenfunctions-1}-\Delta \Psi_1+V\Psi_1=|\k|^{2l}\Psi_1+f_1,\ \ \|f_1\|_{L_{\infty }(\R^2)}\underset{|\k| \rightarrow
     \infty}{=}O(|\k|^{-\gamma _6}),\ \
\ \gamma _6>0.
\end{equation}
Relation \eqref{lambda-1} is differentiable:
\begin{equation} \nabla \lambda ^{(1)}(\k)\underset{|\k| \rightarrow
     \infty}{=}2l|\k|^{2l-2}\k+
    O\left( |\k|^{-\gamma _7}\right),\ \ \ \gamma _7>0.\label{dlambda-1} \end{equation}
     Next, we construct a sequence ${\cal G}_n$, $n\geq 2$, such  for any $\k \in {\cal G}_n$ the operator $H^{(n)}(\k)$ has a single eigenvalue $\lambda ^{(n)}(\k)$ in a super exponentially small neighborhood of $\lambda ^{(n-1)}(\k)$:
\begin{equation} \lambda ^{(n)}(\k)\underset{|\k| \rightarrow
     \infty}{=}\lambda ^{(n-1)}(\k)
   + O\left(|\k|^{-|\k|^{\gamma _8n}}\right),\ \ \ \gamma _8>0.\label{lambda-n} \end{equation}
   Similar estimates hold for the eigenvectors and the  corresponding functions $\Psi_n(\k,\x)$:
   \begin{equation}
   \Psi _n(\k,\x)=\Psi _{n-1}(\k,\x)+\tilde u _n(\k,\x),\ \ \ \|\tilde u_{n}\|_{L_{\infty }(\R^2)}\underset{k \rightarrow
     \infty}{=} O\left(|\k|^{-|\k|^{\gamma _9n}}\right),\ \ \ \gamma _9>0.\ \ \label{na-n}
\end{equation}
\begin{equation} \label{eqforeigenfunctions-1*}-\Delta \Psi_n+V\Psi_n=\lambda ^{(n)}(\k)\Psi_n+f_n,\ \ \|f_n\|_{L_{\infty }(\R^2)}\underset{|\k| \rightarrow
     \infty}{=}O\left(|\k|^{-|\k|^{\gamma _{10}n}}\right),\ \ \
\ \gamma _{10}>0.
\end{equation}
Formula \eqref{lambda-n} is differentiable with respect to $\k$:
\begin{equation} \nabla \lambda ^{(n)}(\k)\underset{|\k| \rightarrow
     \infty}{=}\nabla \lambda ^{(n-1)}(\k)+O\left(|\k|^{-|\k|^{\gamma _8n}}\right),\ \ \
\ \gamma _8>0.\label{dlambda-n} \end{equation}
In fact, for large $n$ estimates \eqref{lambda-n} -- \eqref{dlambda-n} are even stronger.
\begin{figure}
\begin{minipage}[t]{8cm}
\centering
\includegraphics
[totalheight=.25\textheight]{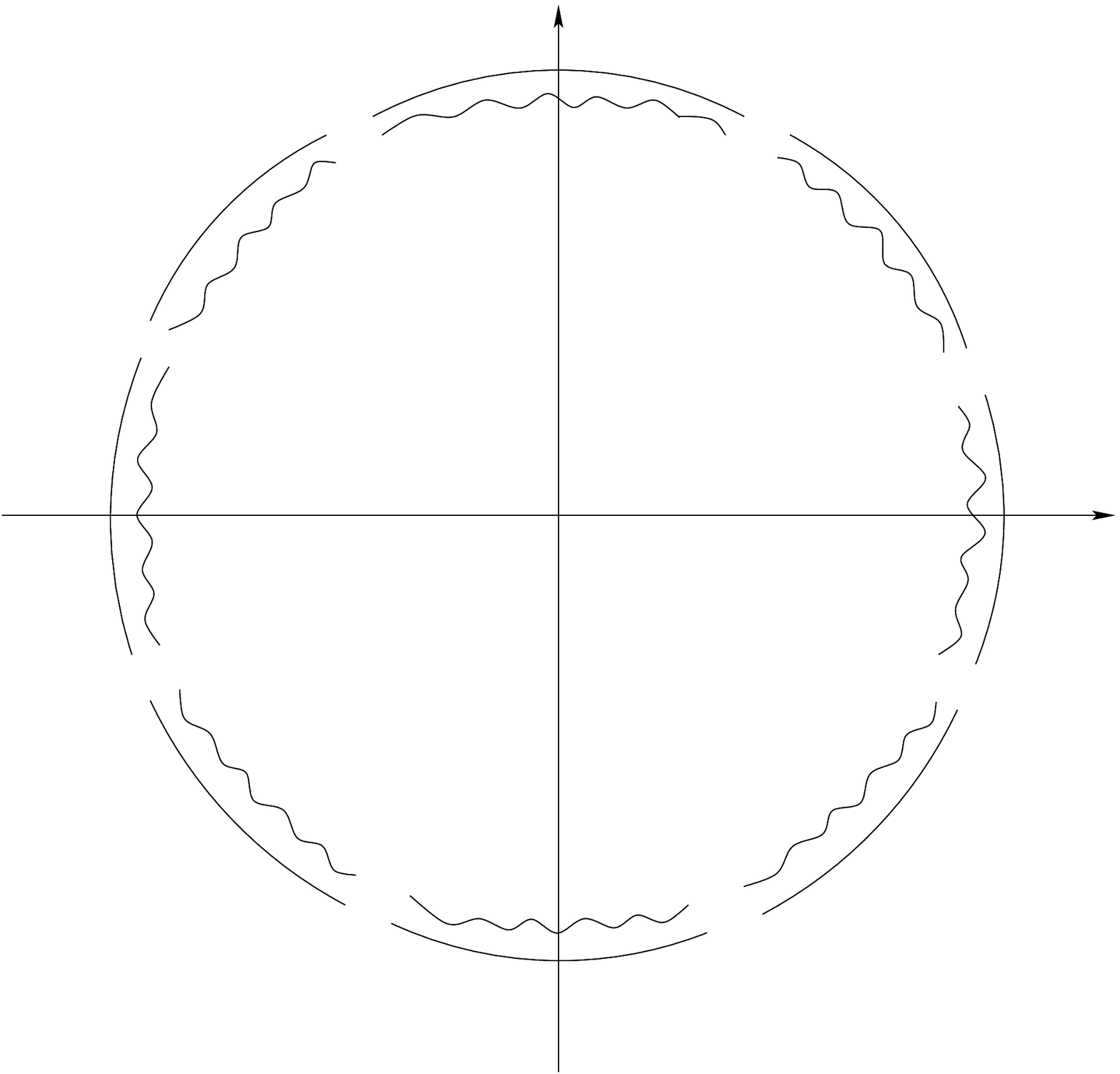} \caption{Isoenergetic curve
${\cal D}_1(\lambda)$}\label{F:1}
\end{minipage}
\hfill
\begin{minipage}[t]{8cm}
\centering
\includegraphics[totalheight=.25\textheight]{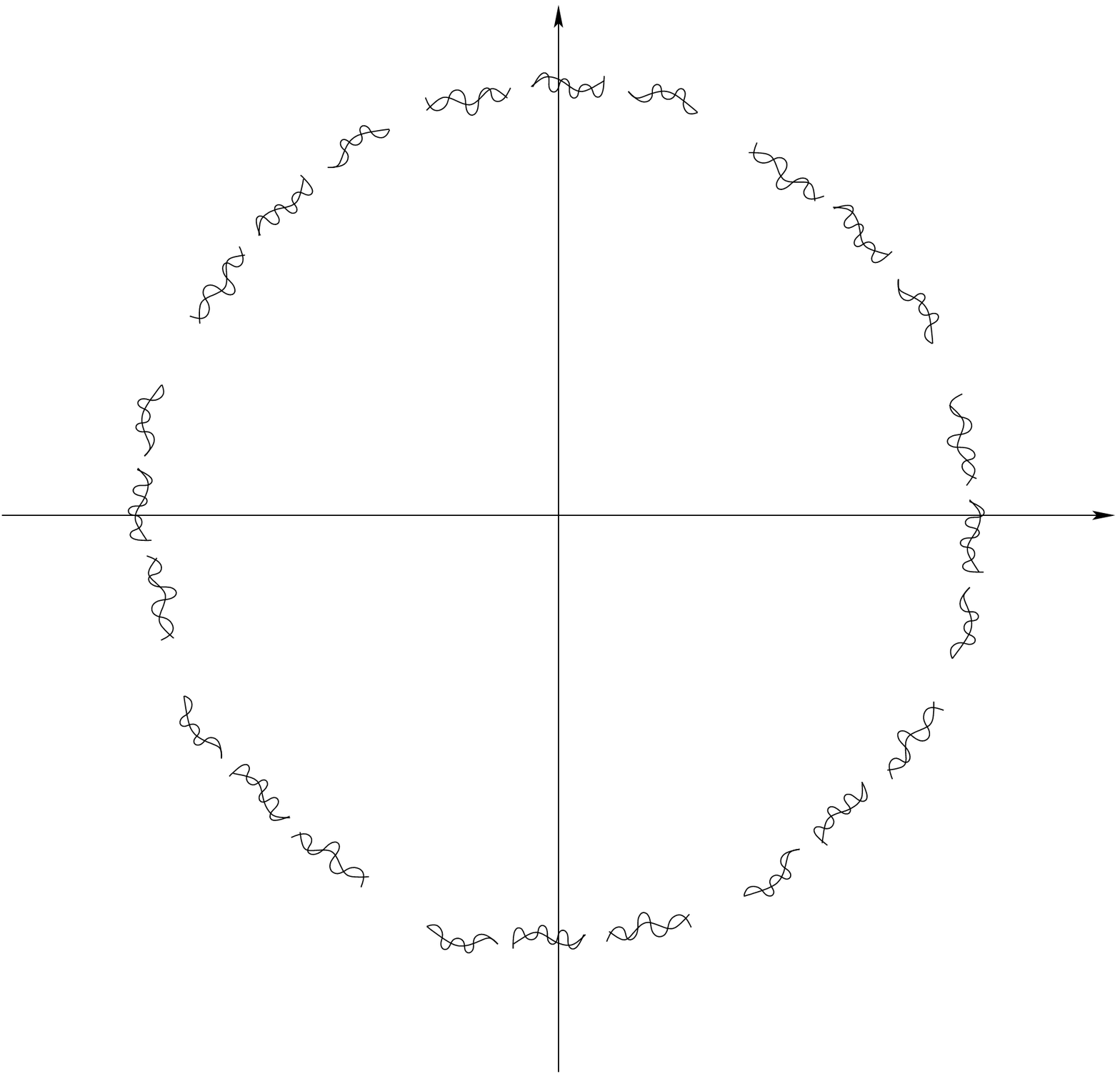}
\caption{Isoenergetic curve ${\cal D}_2(\lambda)$}\label{F:2}
\end{minipage}\hfill
\end{figure}
The non-resonant set ${\cal G} _{n}$
        is proven to be
       extensive in $\R^2$:
       \begin{equation}
       \left|{\cal G} _{n}\cap
       \bf B_R\right|\underset{R \rightarrow
     \infty}{=}|{\bf B_R}|\bigl(1+O(R^{-\gamma _3})\bigr). \label{16b}
       \end{equation}
       Estimates (\ref{lambda-n}) -- (\ref{16b}) are uniform in $n$.

The set ${\cal D}_{n}(\lambda)$ is defined as the level
(isoenergetic) set for the non-resonant eigenvalue $\lambda ^{(n)}(\k)$:
$$ {\cal D} _{n}(\lambda)=\left\{ \k \in {\cal G} _n:\lambda ^{(n)}(\k)=\lambda \right\}.$$
This set is proven to be a slightly distorted circle with a
finite number of holes (see Fig. \ref{F:1}, \ref{F:2}). The set
${\cal D} _{n}(\lambda)$ can be described by the formula:
\begin{equation}
{\cal D}_{n}(\lambda)=\left\{\k:\k=
    \varkappa^{(n)}(\lambda, {\vec \nu})\vec \nu ,
    \ \vec \nu  \in {\cal B}_{n}(\lambda)\right\}, \label{Dn}
    \end{equation}
where ${\cal B}_{n}(\lambda )$ is a subset  of the unit circle
$S_1$. The set ${\cal B}_{n}(\lambda )$ can be interpreted as the
set of possible directions of propagation for  almost plane waves $\Psi _n(\k,\x)$, see
(\ref{na}), (\ref{na-n}). It has an asymptotically full measure on $S_1$ as
$\lambda \to \infty $:
\begin{equation}
L\bigl({\cal B}_{n}(\lambda )\bigr)\underset{\lambda \to \infty
}{=}2\pi +O\left(\lambda^{-\gamma _4/2l}\right). \label{Bn}
\end{equation}
Each set ${\cal B}_{n}(\lambda)$ has only a finite number of holes,
however their number is growing with $n$. More and more holes of a
smaller and smaller size are added at each step. The value
$\varkappa^{(n)}(\lambda ,{\vec \nu} )-\lambda^{1/2l}$ gives the
deviation of ${\cal D}_{n}(\lambda)$ from the perfect circle of the
radius $\lambda^{1/2l}$  in the direction ${\vec \nu} $. It is
proven that the deviation is asymptotically small:
\begin{equation}
\varkappa^{(n)}(\lambda ,{\vec \nu})
=\lambda^{1/2l}+O\left(\lambda^{- \gamma _5}\right),\ \ \ \
\frac{\partial \varkappa^{(n)}(\lambda ,{\vec \nu})}{\partial
\varphi }=O\left(\lambda^{- \gamma _{11} }\right),\ \
\gamma_5,\gamma _{11}>0,\label{hn}
\end{equation}
$\varphi $ being an angle variable, ${\vec \nu} =(\cos \varphi ,\sin
\varphi )$.  Estimates (\ref{Bn}), (\ref{hn}) are uniform in $n$.

On each step more and more points are excluded from the
non-resonant sets ${\cal G} _n$, thus $\{ {\cal G} _n \}_{n=1}^{\infty }$ is a
decreasing sequence of sets. The set ${\cal G} _\infty $ is defined as the
limit set: ${\cal G} _\infty=\cap _{n=1}^{\infty }{\cal G} _n $. It
has an infinite number of holes, but nevertheless satisfies the
relation (\ref{full}). For every $
\k \in {\cal G} _\infty $ and every $n$, there is a generalized
eigenfunction of $H^{(n)}$ of the type  (\ref{na}), (\ref{na-n}). It is
proven that  the sequence of
$\Psi _n(\k, \x)$ has a limit in $L_{\infty }(\R^2)$ when $
\k \in {\cal G} _\infty $.
The function $\Psi _{\infty }(\k, \x)
=\lim _{n\to \infty }\Psi _n(\k, \x)$ is a generalized
eigenfunction of $H$. It can be written in the form
(\ref{qplane}) -- (\ref{qplane1}).
Naturally, the corresponding eigenvalue $\lambda _{\infty }(\k) $ is
the limit of $\lambda ^{(n)}(\k )$ as $n \to \infty $.

It is shown that $\{{\cal B}_n(\lambda)\}_{n=1}^{\infty }$  is a
decreasing sequence of sets,  on each step more and more directions
being excluded. We consider the limit ${\cal B}_{\infty}(\lambda)$
of ${\cal B}_n(\lambda)$:
    \begin{equation}{\cal B}_{\infty}(\lambda)=\bigcap_{n=1}^{\infty} {\cal
    B}_n(\lambda).\label{Dec8a}
    \end{equation}
    This set has a Cantor type structure on the unit circle.
    It is proven that ${\cal B}_{\infty}(\lambda)$ has an asymptotically
    full measure on the unit circle (see (\ref{B})).
    We prove
    that the sequence $\varkappa^{(n)}(\lambda ,{\vec \nu} )$, $n=1,2,... $,
describing the
     isoenergetic curves ${\cal D}_n(\lambda)$, quickly converges as $n\to
\infty$. We show that ${\cal D}_{\infty}(\lambda)$ can be described
as the limit of  ${\cal D}_n(\lambda)$ in the sense (\ref{Dinfty}),
where $\varkappa_{\infty}(\lambda, \vec \nu )=\lim _{n \to \infty}
\varkappa^{(n)}(\lambda, \vec \nu )$ for every $\vec \nu  \in {\cal
B}_{\infty}(\lambda)$. It is shown that the derivatives of the
functions $\varkappa^{(n)}(\lambda, \vec \nu )$ (with respect to the
angle variable on the unit circle) have a limit as $n\to \infty $
for every $\vec \nu  \in {\cal B}_{\infty}(\lambda)$. We denote this
limit by $\frac{\partial \varkappa_{\infty}(\lambda ,\vec
\nu)}{\partial \varphi }$. Using (\ref{hn}), we  prove that
    \begin{equation}\frac{\partial \varkappa_{\infty}(\lambda ,\vec \nu)}{\partial
\varphi }=O\left(\lambda^{- \gamma _{11} }\right).\label{Dec9a}
\end{equation} Thus, the limit curve ${\cal D}_{\infty}(\lambda)$ has a
tangent vector in spite of its Cantor type structure, the tangent
vector being the limit of corresponding tangent vectors for ${\cal
D}_n(\lambda)$ as $n\to \infty $.  The curve  ${\cal
D}_{\infty}(\lambda)$ looks as
  a slightly distorted circle with
infinite number of holes for every sufficiently large $\lambda $, $\lambda >\lambda _*(V)$. It immediately follows that $[\lambda _*, \infty )$ is in the spectrum of $H$ (Bethe-Sommerfeld conjecture).

The main technical difficulty to overcome is the construction of
   non-resonant sets
${\cal B} _n(\lambda)$ for every fixed sufficiently large $\lambda $,
$\lambda >
\lambda_0 (V)$,
where $\lambda_0(V)$ is the same for all $n$. The set
${\cal B} _n(\lambda)$ is obtained  by deleting a ``resonant" part
from
${\cal B}_{n-1}(\lambda)$.
Definition of  ${\cal B} _{n-1}(\lambda)\setminus {\cal B}_{n}(\lambda)$
includes
eigenvalues of $H^{(n-1)}(\k)$. To describe  ${\cal B} _{n-1}(\lambda)\setminus
{\cal B}_{n}(\lambda)$ one
has to consider
  not only non-resonant
eigenvalues of the type (\ref{lambda-1}),  (\ref{lambda-n}), but also
resonant
eigenvalues, for which no suitable  formulas are known. Absence of
formulas causes difficulties in estimating the size of ${\cal B}
_{n-1}(\lambda)\setminus {\cal B}_n(\lambda)$.
       To treat this problem we start with introducing
        an angle variable $\varphi \in [0,2\pi
       )$,  $\vec \nu  = (\cos \varphi ,
       \sin \varphi )\in S_1$ and consider sets ${\cal B}_n(\lambda)$ in
terms of this
       variable.
        Next, we show that
the resonant set  ${\cal B} _{n-1}(\lambda)\setminus
{\cal B}_{n}(\lambda)$ can be described as the set of zeros of  functions of the type
        $$ \det \Bigl(H^{(s)}\bigl(\k _{n-1}(\varphi )+2\pi(\n+\alpha
        \m)
       \bigr)-\lambda-\varepsilon \Bigr), \ \ \ s=1,...,n-1,\ \ \ (\n,\m)\in \Omega_n\setminus{(\bf 0,\bf 0)},$$
       where $\k _{n-1}(\varphi )$ is a vector-function
describing ${\cal D}_{n-1}
       (\lambda)$: $\k _{n-1}(\varphi )=\varkappa_{n-1}(\lambda
,{\vec \nu} ){\vec \nu} $. To obtain ${\cal B} _{n-1}(\lambda)\setminus
       {\cal B}_{n}(\lambda)$ we take all values of $\varepsilon $ in a small
interval
       and $(\n,\m)$ in some subset of $\Omega_n$.
        Further, we extend our
       considerations to
        a complex neighborhood $\varPhi _0$ of $[0,2\pi
       )$. We show that the    determinants are analytic functions of
        $\varphi $ and, by this,
         reduce the
        problem of estimating the size of the resonant set
         to
        a problem in complex analysis. We use theorems for analytic
functions to count
          zeros of the determinants and to investigate how far  the zeros move
when
         $\varepsilon $ changes. It enables us to estimate the size
         of the zero set of the determinants, and, hence, the size of
         the non-resonant set $\varPhi _n\subset \varPhi _0$,  which is
defined as a
         non-zero set for the determinants.
          Proving that the non-resonant set $\varPhi _n$
         is sufficiently large, we
          obtain estimates (\ref{16b}) for ${\cal G} _n$ and (\ref{Bn}) for
         ${\cal B}_n$, the set  ${\cal B}_n$ corresponding to the real part of
         $\varPhi _n$.
         \begin{figure}
\begin{minipage}[t]{8cm}
\centering \psfrag{Phi_2}{$\Phi_2$}
\includegraphics
[totalheight=.25\textheight]{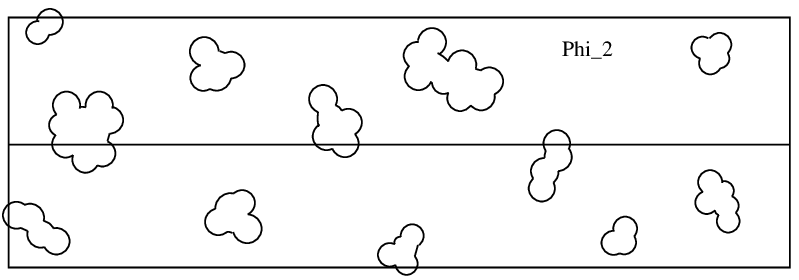} \caption{Set $\varPhi
_2$}\label{F:3}
\end{minipage}
\hfill
\end{figure}
\begin{figure}
\begin{minipage}[t]{8cm}
\centering \psfrag{Phi_3}{$\Phi_3$}
\includegraphics[totalheight=.25\textheight]{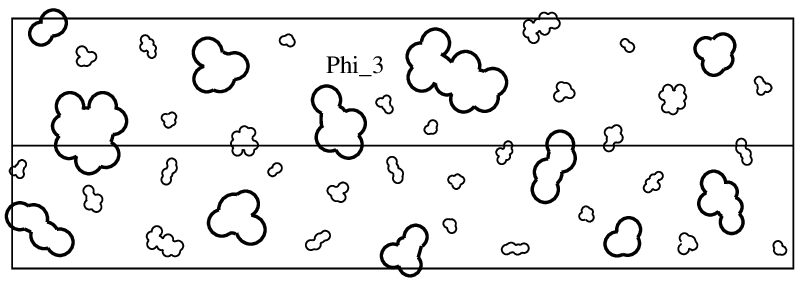}
\caption{Set $\varPhi _3$}\label{F:4}
\end{minipage}\hfill
\end{figure}

          To obtain $\varPhi _n$ we delete  from $\varPhi _0$ more and more
discs (holes) of smaller and
         smaller radii at each step. Thus, the non-resonant set $\varPhi
         _n\subset \varPhi _0$ has a structure of Swiss
         Cheese (Fig. \ref{F:3}, \ref{F:4}). Deleting  a resonance set from $\varPhi _0$ at each
         step of the recurrent procedure  we call a ``Swiss
         Cheese Method".  The essential
         difference of our method from constructions of non-resonant sets  in similar
         situations before (see e.g. \cite{2}--\cite{3}, \cite{B3}) is that
we construct a
         non-resonant set not only in the whole space of a parameter
         ($\k\in \R^2$ here), but also on isoenergetic curves
         ${\cal D}_n(\lambda )$ in
         the space of the parameter, when $\lambda $ is sufficiently large.
Estimates for the
         size of non-resonant sets on a curve require more subtle
         technical considerations than those sufficient for
         description of a non-resonant set in the whole space of
         the parameter. But as a reward, such estimates enable us to
         show that every isoenergetic set for $\lambda>\lambda_0$ is not empty and
         thus, to prove Bethe-Sommerfeld conjecture.

  Note that generalization of  the results from the case $l>1$, $l$ being an integer, to the case of rational $l$ satisfying the
         same inequality is relatively simple; it requires just slightly more careful technical considerations. The restriction $l>1
         $  is also technical, though it is more difficult to lift. The condition $l>1$ is needed only for
         the second step of the recurrent procedure. The authors plan to consider
         the case $l=1$ in a forthcoming paper.  The
         requirement $\mu < \infty $ is  essential, since we use it to estimate the minimal values of $|\n+\alpha \m|$ when
         $(\n, \m)\in \Omega_n\setminus{(\bf 0,\bf 0)}$. Such estimates are necessary for controlling small denominators in the perturbation series at each step.

         The plan of the paper is the following. Preliminary considerations are in Section 2. Sections 3 -- 7
         describe steps of the recurrent procedure. Note, that Steps I,II are designed to start the procedure. Step III is already typical,
         however uses some ``non-typical" estimates from
         Steps I,II. Step IV is completely typical: all other steps of the recurrent procedure differ from Step IV only by the change of indices. The proofs of
         convergence of the iteration procedure and of the results 1 -- 3, listed at the beginning of the introduction, are in Section
         8. The result 4 about absolutely continuous spectrum is
         proven in Section 9. Section 10 (Appendices) contains technical lemmas.

\vspace{5mm} \noindent {\bf Acknowledgement} The authors are very
grateful to Prof. Parnovski for useful
discussions and to Prof. Young-Ran Lee for allowing us to use figures 1-4 from [KL].

\section{Preliminary Remarks}

We consider two-dimensional quasi-periodic polyharmonic operator
\begin{equation}\label{main} (-\Delta)^l +V(\x),\ \
l>1 \end{equation} which is perturbation of the free operator
$H_0:=(-\Delta )^l$. Here $V$ is the potential of the form
\eqref{V}. Without  loss of generality we assume $0<\alpha <1$. We assume also that irrationality measure $\mu$ of
$\alpha$ is finite: $\mu<\infty$, or in other words, that $\alpha$
is not a Liouville number. Note, that for irrational number $\alpha$
we always have $\mu\geq 2$. It follows from the definition of the
irrationality measure that 1) {\it For any $\epsilon>0$ there exists
a constant $C_{\varepsilon} $ such that for any irreducible rational
number $\frac{\tilde{M}}{\tilde {N}}$ we have}
\begin{equation}\label{geq} \left|\alpha-\frac{\tilde{M}}{\tilde
{N}}\right|\geq\frac{C_{\varepsilon}}{\tilde {N}^{\mu+\epsilon}}.
\end{equation}

2) {\it For any $\epsilon>0$ there exists a sequence $\frac{M}{N}$
of irreducible rational numbers such that}
\begin{equation}\label{leq}
\left|\alpha-\frac{M}{N}\right|\leq\frac{1}{N^{\mu-\epsilon}}.
\end{equation}

For every pair of integer vectors $\s_1,\,\s_2\in\Z^2$ we consider
$\p_\s:=2\pi(\s_1+\alpha\s_2)$. We introduce the norm
$$ |\|\p_{\s}\||:=|\s_1|+|\s_2|.
$$
We will also use the notation $p_{\s}:=|\p_{\s}|$ and
$\p_{\s}=p_{\s}(\cos\varphi_{\s},\sin\varphi_{\s})$.
\begin{lemma}\label{psnorms}
For every $\p_\s\not=0$ we have
\begin{equation}\label{above}
p_\s\leq2\pi|\|\p_{\s}\||,
\end{equation}
\begin{equation}\label{below} p_{\s}\geq
2\pi C_\varepsilon\,\||\p_\s\||^{-(\mu-1+\epsilon)}. \end{equation}
\end{lemma}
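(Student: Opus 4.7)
The plan is to prove the two inequalities separately, and both are elementary consequences of the Diophantine property \eqref{geq}.

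For the upper bound \eqref{above}, the triangle inequality in $\R^2$ combined with $0<\alpha<1$ immediately gives
\[
p_\s = 2\pi|\s_1+\alpha\s_2|_{\R^2} \leq 2\pi(|\s_1|+\alpha|\s_2|) \leq 2\pi(|\s_1|+|\s_2|) = 2\pi|\|\p_\s\||.
\]

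For the lower bound \eqref{below}, I would split into two cases according to whether $\s_2=0$. If $\s_2=0$, then $\s_1\neq 0$ by the assumption $\p_\s\neq 0$, so $p_\s = 2\pi|\s_1|\geq 2\pi$; since $|\|\p_\s\||\geq 1$ and $\mu-1+\epsilon>0$, this dominates $2\pi C_\varepsilon|\|\p_\s\||^{-(\mu-1+\epsilon)}$ as long as $C_\varepsilon\leq 1$.

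If $\s_2\neq 0$, I choose a component index $i\in\{1,2\}$ with $s_2^{(i)}\neq 0$, where $\s_j=(s_j^{(1)},s_j^{(2)})$, and estimate
\[
p_\s \geq 2\pi|s_1^{(i)}+\alpha s_2^{(i)}| = 2\pi|s_2^{(i)}|\cdot\left|\alpha+\frac{s_1^{(i)}}{s_2^{(i)}}\right|.
\]
Writing $-s_1^{(i)}/s_2^{(i)}=\tilde M/\tilde N$ in irreducible form gives $\tilde N\leq|s_2^{(i)}|$, and applying \eqref{geq} yields
\[
\left|\alpha+\frac{s_1^{(i)}}{s_2^{(i)}}\right|\geq\frac{C_\varepsilon}{\tilde N^{\mu+\epsilon}}\geq\frac{C_\varepsilon}{|s_2^{(i)}|^{\mu+\epsilon}}
\]
(in the degenerate subcase $s_1^{(i)}=0$ we have $\tilde N=1$ and $|\alpha|\geq C_\varepsilon$ for a sufficiently small choice of the constant). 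Multiplying through and using $|s_2^{(i)}|\leq|\s_2|\leq|\|\p_\s\||$ gives
\[
p_\s\geq 2\pi C_\varepsilon|s_2^{(i)}|^{-(\mu-1+\epsilon)}\geq 2\pi C_\varepsilon|\|\p_\s\||^{-(\mu-1+\epsilon)},
\]
which is \eqref{below}.

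There is no real obstacle here; the only mild subtlety is that \eqref{geq} is a statement about the scalar $\alpha$, so one must first pass from the $\R^2$-norm $|\s_1+\alpha\s_2|$ to a single coordinate via $|v|\geq|v_i|$, and then ensure that reducing $-s_1^{(i)}/s_2^{(i)}$ to lowest terms only helps (since a smaller $\tilde N$ strengthens \eqref{geq}). After this reduction the bound follows directly from the finiteness of the irrationality measure of $\alpha$.
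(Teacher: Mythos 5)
Your proof is correct and follows essentially the same route as the paper: the upper bound via the triangle inequality with $0<\alpha<1$, and the lower bound by passing to a nonzero component of $\s_2$, factoring out $|s_2^{(i)}|$, and invoking the Diophantine estimate \eqref{geq}. The extra remarks about reducing $-s_1^{(i)}/s_2^{(i)}$ to lowest terms and the case $\s_2=0$ are exactly the (implicit or explicit) steps in the paper's argument.
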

\begin{proof} The estimate \eqref{above} is obvious. To prove \eqref{below} we notice that
if $\s_2=0$ then $p_{\s}=2\pi|\s_1|\geq2\pi$. Let
$\s_1=(s_{11},s_{12})$, $\s_2=(s_{21},s_{22})$. If, for example,
$s_{21}\not=0$ then from \eqref{geq} and definition of
$|\|\p_{\s}\||$ we obtain
\begin{equation} \label{12}
\begin{split}&
p_{\s}\geq2\pi\left|s_{11}+\alpha s_{21}\right|=
2\pi|s_{21}||\alpha+\frac{s_{11}}{s_{21}}|\geq\cr & 2\pi
C_\varepsilon|s_{21}|^{-\mu-\epsilon+1}\geq 2\pi
C_\varepsilon\,\||\p_\s\||^{-(\mu-1+\epsilon)}.
\end{split}
\end{equation}
\end{proof}

We introduce vector $\k(\varphi
):=(\varkappa_1,\varkappa_2)=\varkappa\vec
\nu:=\varkappa(\cos\varphi,\sin\varphi)$. Similar agreement will be
used for other vectors. Let $H(\k)=H(\varkappa,\varphi)$ be the
"fiber" operator acting in $L_2(\Z^4)$ with its matrix elements
given by $$
(H(\k))_{\s,\s+\q}=|\k+\p_{\s}|^{2l}\delta_{\s,\s+\q}+V_{\p_\q}. $$
Here $V_{\p_\q}:=V_{\q_1,\q_2}$.  (see \eqref{V})
\begin{equation}\label{V_q=0} V_{\p_\q}=0,\ \ \mbox{when  }
\||\p_\q\||>Q,\ \ \ (Q<\infty) .\end{equation} To simplify the
notation in what follows we will write $V_\q$ instead of $V_{\p_\q}$
when it does not lead to confusion.

\section{Step I}

\subsection{Operator $H^{(1)}$}
Let $\delta$ be some small parameter, $0<\delta <(100\mu )^{-1}$. We
put
$$ \Omega(\delta):=\{\m\in\Z^2:\ \ |\|\p_{\m}|\|\leq k^\delta\},\ \ \ \tilde\Omega(\delta):=\{\m\in\Z^2:\ \ |\|\p_{\m}|\|\leq 4k^\delta\}. $$

By $P(\delta)$ we denote orthogonal (diagonal) projection in $l^2(\Z^2)$ on the set
of elements supported in $\Omega(\delta)$. We call it the characteristic projector of $ \Omega(\delta)$. The dimension of the projector is equal to the number of elements in $ \Omega(\delta)$ and, obviously, does not exceed
$(8k^\delta )^4$. We have $$
\big(P(\delta)H_0(\k)P(\delta)\big)_{\m,\n}=|\k+\p_{\m}|^{2l}\delta_{\m,\n}\,\chi_{\Omega(\delta)}(\m),
$$ where as usual $\chi_{\Omega(\delta)}(\m)$ is the characteristic
function of the set $\Omega(\delta)$. We are going to consider
 $H^{(1)}(\k)=P(\delta)H(\k)P(\delta)$ as a perturbation of the operator
$P(\delta)H_0(\k)P(\delta)$.

\subsection{Perturbation Formulas}

Now we construct a ``non-resonant" set of $\varphi $, for which the operator $H^{(1)}(\k (\varphi ))$ can be constructively considered as a perturbation of $H^{(1)}_0(\k (\varphi ))$ corresponding to $V=0$.
In what follows $\tau $ is an auxiliary
parameter $\frac{1}{32}\leq \tau \leq 32$.

\begin{lemma}[Geometric] \label{L:G1} For every $k>800$ there is a subset $\omega^{(1)} (k,\delta
,\tau )$ of the interval $[0,2\pi )$ such that: \begin{enumerate}
\item For every $\varphi \in \omega^{(1)} (k,\delta ,\tau )$ and $\m\in
\tilde\Omega (\delta )\setminus \{0\}$, the following inequality
holds:
\begin{equation} \left||\vec k(\varphi )+\p_{\m}|^{2l}-k^{2l}\right|>l\tau k^{2l-1-40\mu
\delta },\ \ \ \vec k:=k(\cos \varphi, \sin \varphi). \label{G1-1}\end{equation} \item For every $\varphi $ in
the real $\frac{\tau }{16}k^{-(40\mu +1)\delta }$-neighborhood of
$\omega^{(1)} (k,\delta ,\tau )$ and $\varkappa\in \R:
|\varkappa-k|<\frac{\tau }{16}k^{-40\mu \delta }$, a slightly weaker
inequality holds for $\k(\varphi )=\varkappa(\cos \varphi ,\sin
\varphi )$ and $\m \in \tilde\Omega (\delta )\setminus \{0\}$:
\begin{equation}
\left||\k(\varphi )+\p_{\m}|^{2l}-{k}^{2l}\right|>\frac{\tau l
}{2}k^{2l-1-40\mu \delta }. \label{G1-2}\end{equation}
\item The set $\omega^{(1)} (k,\delta ,\tau )$ has an asymptotically full
measure in $[0,2\pi )$ as $k\to \infty $. Namely,
\begin{equation}
|\omega^{(1)} (k,\delta ,\tau )|=2\pi +O(k^{-37\mu \delta}),\ \
{k\to \infty }. \label{G1-3}
\end{equation}\end{enumerate}
\end{lemma}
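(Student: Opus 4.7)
The plan is to define $\omega^{(1)}(k,\delta,\tau)$ as the complement in $[0,2\pi)$ of the union over $\m\in\tilde\Omega(\delta)\setminus\{0\}$ of ``bad arcs'' on which the inequality (G1-1) fails, and then to bound the total measure of these bad arcs. First I would linearize: writing $|\vec k(\varphi)+\p_\m|^{2l}-k^{2l}=\bigl(|\vec k(\varphi)+\p_\m|^{2}-k^{2}\bigr)\cdot l\,\xi^{\,l-1}$ for some intermediate value $\xi$ between $k^2$ and $|\vec k(\varphi)+\p_\m|^2$, and observing that if (G1-1) fails then $|\vec k(\varphi)+\p_\m|^2-k^2=O(k^{\,1-40\mu\delta})$, so $\xi=k^{2}(1+o(1))$. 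Since $|\vec k(\varphi)+\p_\m|^{2}-k^{2}=2kp_\m\cos(\varphi-\varphi_\m)+p_\m^{2}$, the failure condition reduces, up to a universal constant, to
\[
\bigl|\,2k\, p_\m\cos(\varphi-\varphi_\m)+p_\m^{2}\,\bigr|\lesssim\tau\, k^{\,1-40\mu\delta}.
\]

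Next I would solve this inequality geometrically. Because $p_\m\leq 2\pi|\|\p_\m\||\leq 8\pi k^{\delta}$ and $\delta<(100\mu)^{-1}$, we have $p_\m/(2k)=O(k^{\delta-1})\ll 1$, so the critical equation $\cos(\varphi-\varphi_\m)=-p_\m/(2k)$ has two solutions modulo $2\pi$, both close to $\varphi-\varphi_\m=\pm\pi/2$, where the Jacobian $|\sin(\varphi-\varphi_\m)|\geq 1/2$. Consequently the bad set for a single $\m$ is contained in at most two arcs of total length $O(\tau\, k^{-40\mu\delta}/p_\m)$. Summing over $\m$ and using the lower bound from Lemma~\ref{psnorms} in the form $p_\m\gtrsim k^{-\delta(\mu-1+\epsilon)}$, together with $\#\tilde\Omega(\delta)=O(k^{4\delta})$, the total measure of the excluded set satisfies
\[
\sum_{\m}\frac{\tau\, k^{-40\mu\delta}}{p_\m}=O\!\left(\tau\, k^{4\delta-40\mu\delta+\delta(\mu-1+\epsilon)}\right)=O\!\left(k^{-(39\mu-3-\epsilon)\delta}\right),
\]
which, since $\mu\geq 2$ and $\epsilon$ can be chosen as small as needed, is $O(k^{-37\mu\delta})$. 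This yields part~3, and part~1 is immediate from the construction.

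For part~2 I would run a simple perturbation argument. Direct computation gives $\bigl|\partial_\varphi|\vec k(\varphi)+\p_\m|^{2l}\bigr|\leq C\, k^{2l-1+\delta}$ and $\bigl|\partial_\varkappa|\vec k(\varphi)+\p_\m|^{2l}\bigr|\leq C\, k^{2l-1}$, while separately $|\varkappa^{2l}-k^{2l}|\leq 2l\, k^{2l-1}|\varkappa-k|$. With the allowed shifts $|\Delta\varphi|\leq\tfrac{\tau}{16}k^{-(40\mu+1)\delta}$ and $|\Delta\varkappa|\leq\tfrac{\tau}{16}k^{-40\mu\delta}$, each of these contributions is at most a small fraction of $l\tau\, k^{2l-1-40\mu\delta}$, so the strict inequality (G1-1) survives with its constant reduced by at most a factor of $2$, yielding (G1-2).

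The main obstacle is the quantitative bookkeeping in the measure estimate: the factor $k^{-40\mu\delta}$ coming from the width of each bad arc must dominate both the cardinality loss $k^{4\delta}$ and the small-denominator loss $k^{\delta(\mu-1+\epsilon)}$ from Lemma~\ref{psnorms}. The only genuinely geometric step is verifying that the sine-Jacobian stays uniformly bounded below, which relies on $p_\m\ll k$; this is comfortably ensured by $\delta<(100\mu)^{-1}$. Beyond this, the argument is a direct application of perturbation theory and counting of small denominators, with the single pair of arcs per $\m$ coming from the two intersection points of the circles $|\vec k|=k$ and $|\vec k+\p_\m|=k$.
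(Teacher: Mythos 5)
Your construction—removing, for each $\m\in\tilde\Omega(\delta)\setminus\{0\}$, two arcs around the zeros of $p_\m^2+2kp_\m\cos(\varphi-\varphi_\m)$ of width $O\bigl(\tau k^{-40\mu\delta}/p_\m\bigr)$, reducing the $2l$-power inequality to this quadratic one by a mean-value factor $l\,\xi^{\,l-1}\approx l k^{2l-2}$, counting $O(k^{4\delta})$ indices, and controlling $1/p_\m$ via the Diophantine bound \eqref{below} of Lemma \ref{psnorms} to get $O(k^{-37\mu\delta})$, with a crude Lipschitz perturbation for part 2—is essentially identical to the paper's proof in Section \ref{GC} (the resonance sets \eqref{resonance}, the inclusion $\OO_\m\subset\cup_{\pm,j}(\Phi^\pm_\m+2\pi j)$, the count \eqref{meas}, and the stability estimate \eqref{complex}). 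The only substantive difference is that the paper performs the same construction for complex $\varphi$ in the strip $\W_0$, which is needed later for analyticity (Corollary \ref{Corollary 3.8}) but not for the lemma as stated, so your real-variable version proves the statement by the same route.
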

\begin{corollary} \label{C:L:G1} If $\varphi $ is in
the real $\frac{\tau }{16}k^{-(40\mu +1)\delta }$-neighborhood of $\omega^{(1)}
(k,\delta ,\tau )$  and $z$ is on the circle
\begin{equation} \label{G-4}
C_1=\{z:|z-k^{2l}|=\frac{\tau l}{4}k^{2l-1-40\mu \delta }\},
\end{equation}
then  the following inequality holds for all $\m \in \tilde\Omega
(\delta )$:
\begin{equation}
\left||\vec k(\varphi) +\p_{\m}|^{2l}-z\right|\geq \frac{\tau l
}{4}k^{2l-1-40\mu \delta }, \ \ z\in C_1. \label{G1-5}\end{equation}
\end{corollary}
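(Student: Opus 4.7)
The plan is to deduce the corollary directly from part 2 of Lemma \ref{L:G1} by the triangle inequality, after separating off the trivial case $\m={\bf 0}$.

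For $\m={\bf 0}$, we have $\p_{\m}=0$ and $\vec k(\varphi)=k(\cos\varphi,\sin\varphi)$ per the convention set in \eqref{G1-1}, so $|\vec k(\varphi)+\p_{\m}|^{2l}=k^{2l}$. Hence
$$\bigl||\vec k(\varphi)+\p_{\m}|^{2l}-z\bigr|=|k^{2l}-z|=\frac{\tau l}{4}k^{2l-1-40\mu\delta}$$
by the definition \eqref{G-4} of $C_1$, and the claimed inequality holds with equality.

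For $\m\in\tilde\Omega(\delta)\setminus\{{\bf 0}\}$, I would apply the triangle inequality
$$\bigl||\vec k(\varphi)+\p_{\m}|^{2l}-z\bigr|\geq\bigl||\vec k(\varphi)+\p_{\m}|^{2l}-k^{2l}\bigr|-|k^{2l}-z|.$$
Part 2 of Lemma \ref{L:G1}, invoked with the specific choice $\varkappa=k$ (which trivially satisfies $|\varkappa-k|=0<\frac{\tau}{16}k^{-40\mu\delta}$) and with $\varphi$ in the stated neighborhood, bounds the first term below by $\frac{\tau l}{2}k^{2l-1-40\mu\delta}$. The second term equals $\frac{\tau l}{4}k^{2l-1-40\mu\delta}$ since $z\in C_1$. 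Subtracting produces the required lower bound.

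No genuine obstacle arises: the radius of $C_1$ has been chosen to be exactly half of the spectral gap supplied by part 2 of the lemma, which is precisely what makes the triangle inequality leave a residual of size $\frac{\tau l}{4}k^{2l-1-40\mu\delta}$. All the real work sits in the geometric Lemma \ref{L:G1} itself; this corollary simply repackages that bound into a resolvent-friendly form, clearly in preparation for estimating the resolvent $\bigl(P(\delta)H_0(\vec k)P(\delta)-z\bigr)^{-1}$ along the contour $C_1$ that encloses the single unperturbed eigenvalue $k^{2l}$.
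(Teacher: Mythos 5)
Your proof is correct and follows essentially the same route as the paper: the case $\m={\bf 0}$ is read off from the definition of $C_1$, and for $\m\neq{\bf 0}$ one applies the triangle inequality against the spectral gap supplied by the geometric lemma, the radius of $C_1$ being chosen small enough that a margin of $\frac{\tau l}{4}k^{2l-1-40\mu\delta}$ survives. The only cosmetic difference is that you invoke the ready-made $2l$-power bound \eqref{G1-2} (with $\varkappa=k$), whereas the paper re-derives that gap from the quadratic estimate \eqref{jan28a}, obtaining $\tau l\bigl(1-O(k^{\delta-1})\bigr)k^{2l-1-40\mu\delta}$ before subtracting the radius.
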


The lemma is proved in Section \ref{GC} (Corollaries \ref{Parts 1,2}
and \ref{Part 3}.) The corollary from the lemma is proven at the end
of Section \ref{GC}. Note that in Section \ref{GC} we construct
non-resonance set of $\varphi $ in the set of complex numbers. Such
complex non-resonance set we need for construction of further steps
of approximation.

Let $r=1,2...$ and
\begin{equation}\label{g} g^{(1)}_r({\k}):=\frac{(-1)^r}{2\pi
ir}\hbox{Tr}\oint_{C_1}\left((P(\delta)(H_0({\k})-zI)P(\delta))^{-1}VP(\delta)\right)^rdz,
\end{equation} \begin{equation}\label{G}
G^{(1)}_r({\k}):=\frac{(-1)^{r+1}}{2\pi
i}\oint_{C_1}\left((P(\delta)(H_0({\k})-zI)P(\delta))^{-1}VP(\delta)\right)^r(P(\delta)(H_0({\k})-zI)P(\delta))^{-1}dz.
\end{equation}
Note that $g^{(1)}_1(\k)=0$ since $V=0$. Coefficient
$g^{(1)}_2(\k )$ admits representation:
\begin{equation}\begin{aligned}
g^{(1)}_2(\k)
    &=\sum _{\q\in \Omega (\delta )\setminus \{0\}}| V_\q| ^2(|\k|^{2l}-
        |{\k}+\p_\q|^{2l})^{-1} \\
    &=-\frac12\sum _{\q\in \Omega (\delta )\setminus \{0\}}\frac{| V_\q| ^2
        \left(|{\k}+\p_\q|^{2l}+
        |{\k}-\p_\q|^{2l}-2|{\k}|^{2l}\right)}{(|{\k}|^{2l}-
        |{\k}+\p_\q|^{2l})(|{\k}|^{2l}-
        |{\k}-\p_\q|^{2l})},\label{2.14}
 \end{aligned}\end{equation}
 From now on  $\|A \|_1$ means the norm of an operator $A$ in the trace class.

\begin{theorem} \label{Thm1} Suppose $\varphi $ is in
the real  $\frac{\tau }{16}k^{-(40\mu +1)\delta }$-neighborhood of
$\omega^{(1)} (k,\delta,\tau )$ and   $\varkappa\in\R$,
$|\varkappa-k|\leq \frac{\tau}{16}k^{-40\mu \delta }$,
$\k=\varkappa(\cos \varphi ,\sin \varphi )$. Then, for sufficiently
large $k>k_0(V,\delta ,\tau )$ there exists a single eigenvalue of
$H^{(1)}({\k})$ in the interval $\varepsilon _1( k,\delta,\tau )=(
k^{2l}-\frac{\tau l}{2}k^{2l-1-40\mu \delta }, k^{2l}+\frac{\tau l
}{2}k^{2l-1-40\mu \delta })$. It is given by the absolutely
converging series:
\begin{equation}\label{eigenvalue}\lambda^{(1)}({\k})=\varkappa^{2l}+\sum\limits_{r=2}^\infty
g^{(1)}_r({\k}).\end{equation} For coefficients $g^{(1)}_r({\k})$
the following estimates hold:
\begin{equation}\label{estg} |g^{(1)}_r({\k})|\leq
(Ck)^{-(r-1)(2l-1-40\mu\delta)+4\delta}. \end{equation}
 Moreover,
 \begin{equation}\label{estg_2}
|g^{(1)}_2({\k})|\leq Ck^{-2l+(80\mu+6)\delta}. \end{equation} The
corresponding spectral projection is given by the series:
\begin{equation}\label{sprojector}
\E^{(1)}({\k})=\E_0({\k})+\sum\limits_{r=1}^\infty G^{(1)}_r({\k}),
\end{equation} $\E_0({\k})$ being the unperturbed spectral
projection. The operators $G^{(1)}_r({\k})$ satisfy the estimates:
\begin{equation}
\label{jan27}
\left\|G^{(1)}_r({\k})\right\|_1<(Ck)^{-r(2l-1-44\mu\delta)}.
\end{equation}
Matrix elements of $G^{(1)}_r({\k})$ satisfy the following
relations:
\begin{equation}
G^{(1)}_r({\k})_{\s\s'}=0,\ \ \mbox{if}\ \ \
rQ<\||\p_\s\||+\||\p_{\s'}\||. \label{zeros} \end{equation}
\end{theorem}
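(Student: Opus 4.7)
The plan is to apply the standard Riesz-projector contour method, treating $H^{(1)}(\k) = P(\delta)H(\k)P(\delta)$ as an analytic perturbation of its diagonal part $T_0(\k) := P(\delta)H_0(\k)P(\delta)$ on the contour $C_1$ from \eqref{G-4}. The essential input is Corollary \ref{C:L:G1}: for every $z\in C_1$ and every $\m\in\tilde\Omega(\delta)$ one has $\bigl||\k+\p_\m|^{2l}-z\bigr|\geq \frac{\tau l}{4}k^{2l-1-40\mu\delta}$. This yields the uniform free-resolvent estimate $\|(T_0(\k)-zI)^{-1}\|\leq Ck^{-(2l-1-40\mu\delta)}$ on $C_1$. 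On the other hand $|\varkappa^{2l}-k^{2l}|\leq 2lk^{2l-1}|\varkappa-k|\leq \frac{\tau l}{8}k^{2l-1-40\mu\delta}$, so the single unperturbed eigenvalue $\varkappa^{2l}$ sits strictly inside $C_1$ while all other diagonal values are outside.

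Setting $W := P(\delta)VP(\delta)$, which is uniformly bounded in $k$ since $V$ is a trigonometric polynomial, one obtains $\|(T_0-zI)^{-1}W\|\leq Ck^{-(2l-1-40\mu\delta)}<1/2$ on $C_1$ for $k>k_0(V,\delta,\tau)$. The Neumann series $(H^{(1)}-zI)^{-1}=\sum_{r\geq 0}(-1)^r((T_0-zI)^{-1}W)^r(T_0-zI)^{-1}$ then converges uniformly on $C_1$. Integrating term by term yields \eqref{sprojector}, and since $\mathrm{rank}(\E^{(1)})$ is an integer varying continuously with the perturbation, it stays equal to $\mathrm{rank}(\E_0)=1$, giving a single eigenvalue in $\varepsilon_1(k,\delta,\tau)$. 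For the eigenvalue formula \eqref{eigenvalue} I would start from the trace identity $\lambda^{(1)}(\k)=-\frac{1}{2\pi i}\oint_{C_1}z\,\mathrm{Tr}(H^{(1)}-zI)^{-1}\,dz$, expand via Neumann, and invoke the cyclicity identity $\mathrm{Tr}\bigl(((T_0-zI)^{-1}W)^r(T_0-zI)^{-1}\bigr)=\frac{1}{r}\frac{d}{dz}\mathrm{Tr}\bigl(((T_0-zI)^{-1}W)^r\bigr)$ followed by integration by parts in $z$; the $r=0$ term contributes $\varkappa^{2l}$ by the residue theorem, and $r\geq 1$ terms give $g^{(1)}_r$.

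The norm estimates then reduce to elementary bookkeeping. Combining the operator bounds above with $\|A\|_1\leq \dim P(\delta)\cdot \|A\|\leq (8k^\delta)^4\|A\|$ and the contour length $\sim k^{2l-1-40\mu\delta}$ immediately yields both \eqref{estg} and \eqref{jan27}. For the sharper bound \eqref{estg_2} on $g^{(1)}_2$ I would exploit the symmetric representation \eqref{2.14}: by symmetry the first-order part of the second difference cancels, so the numerator is $O(p_\q^2k^{2l-2})=O(k^{2l-2+2\delta})$; the product of denominators is at least $ck^{2(2l-1-40\mu\delta)}$ by the Geometric Lemma applied to both $\pm\p_\q$; and the sum contains $O(k^{4\delta})$ nonzero terms.

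The sparsity relation \eqref{zeros} is where I expect the main conceptual obstacle. Writing $R_0(\s_i)(z):=(|\k+\p_{\s_i}|^{2l}-z)^{-1}$ for the diagonal entries of the free resolvent, one expands
\begin{equation*}
\bigl(((T_0-zI)^{-1}W)^r(T_0-zI)^{-1}\bigr)_{\s,\s'} = \sum_{\s_1,\ldots,\s_{r-1}} R_0(\s_0)V_{\s_1-\s_0}R_0(\s_1)\cdots V_{\s_r-\s_{r-1}}R_0(\s_r)
\end{equation*}
with $\s_0=\s$, $\s_r=\s'$, and each $V$-factor forcing $|\|\p_{\s_i-\s_{i-1}}\||\leq Q$ by \eqref{V_q=0}. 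The crux is that by Corollary \ref{C:L:G1} each $R_0(\s_i)$ is \emph{holomorphic} inside $C_1$ except when $\s_i=\mathbf{0}$: every other would-be pole is excluded by non-resonance. Hence the contour integral vanishes unless some intermediate $\s_i=\mathbf{0}$, in which case the triangle inequality for $|\|\cdot\||$ applied to the subpaths $\s\to\mathbf{0}$ (length $i$) and $\mathbf{0}\to\s'$ (length $r-i$) yields $|\|\p_\s\||+|\|\p_{\s'}\||\leq iQ+(r-i)Q=rQ$, contradicting the hypothesis. The delicate point is recognizing that the Geometric Lemma is designed precisely to eliminate every non-$\mathbf{0}$ pole uniformly on $C_1$, turning an otherwise opaque integrand into one with explicit simple-pole structure concentrated at $\s_i=\mathbf{0}$.
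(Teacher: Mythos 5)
Your proposal is correct and follows essentially the same route as the paper: a Neumann expansion of the resolvent on $C_1$ based on the Geometric Lemma bound for $(H_0^{(1)}-z)^{-1}$, termwise contour integration for $\E^{(1)}$ with the rank argument for uniqueness of the eigenvalue, trace-norm bookkeeping via $\dim P(\delta)=O(k^{4\delta})$ and the length of $C_1$, and the symmetric representation \eqref{2.14} for the sharper bound on $g^{(1)}_2$. Your path-expansion proof of \eqref{zeros} is just an unpacked version of the paper's decomposition $A=A_0+A_1+A_2$ relative to $\E_0$ (strings containing only $A_0$ are holomorphic inside $C_1$, so every surviving term must pass through $\s=\mathbf{0}$, forcing $rQ\geq\||\p_\s\||+\||\p_{\s'}\||$); the only detail worth adding is that $g^{(1)}_1=0$ because $V$ has zero mean value, which is why the series in \eqref{eigenvalue} starts at $r=2$.
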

\begin{corollary} \label{corthm1} For the perturbed eigenvalue and its spectral
projection the following estimates hold:
 \begin{equation}\label{perturbation}
\lambda^{(1)}({\k})=\varkappa^{2l}+O\left(k^{-2l+(80\mu+6)\delta}\right),
\end{equation}
\begin{equation}\label{perturbation*}
\left\|\E^{(1)}({\k})-\E_0({\k})\right\|_1<ck^{-2l+1+44\mu\delta}.
\end{equation}
Matrix elements of spectral projection $\E^{(1)}(\k)$ also satisfy
the estimate:
\begin{equation}
\label{matrix elements}
\left|\E^{(1)}(\k)_{\s\s'}\right|<(Ck)^{-d^{(1)}(\s,\s')},\ \ \
d^{(1)}(\s,\s')=Q^{-1}\left(\||\p_\s\||+\||\p_{\s'}\||\right)(2l-1-44\mu
\delta ).
\end{equation}
\end{corollary}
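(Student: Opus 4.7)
The three estimates of Corollary \ref{corthm1} follow from summing the convergent series provided by Theorem \ref{Thm1} together with the support property \eqref{zeros} of the matrix coefficients of $G^{(1)}_r$. I would proceed part by part.

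For \eqref{perturbation}, the plan is to split the eigenvalue series \eqref{eigenvalue} as
\[
\lambda^{(1)}(\k)-\varkappa^{2l}=g^{(1)}_2(\k)+\sum_{r=3}^{\infty}g^{(1)}_r(\k).
\]
The first term is already controlled by \eqref{estg_2}, giving $|g^{(1)}_2(\k)|\leq Ck^{-2l+(80\mu+6)\delta}$. For the tail, \eqref{estg} provides the bound $|g^{(1)}_r(\k)|\leq (Ck)^{-(r-1)(2l-1-40\mu\delta)+4\delta}$ for $r\geq 3$. Since $\delta<(100\mu)^{-1}$ and $l>1$, the quantity $2l-1-40\mu\delta$ is bounded below by a positive constant, so the sum $\sum_{r\geq 3}(Ck)^{-(r-1)(2l-1-40\mu\delta)}$ is a geometric series with ratio going to zero as $k\to\infty$; the whole tail is therefore asymptotically negligible compared to the $r=2$ term. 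Combining the two contributions yields $O(k^{-2l+(80\mu+6)\delta})$.

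For \eqref{perturbation*}, I would apply the triangle inequality in trace norm to \eqref{sprojector}, getting
\[
\|\E^{(1)}(\k)-\E_0(\k)\|_1\leq \sum_{r=1}^{\infty}\|G^{(1)}_r(\k)\|_1\leq \sum_{r=1}^{\infty}(Ck)^{-r(2l-1-44\mu\delta)}
\]
by \eqref{jan27}. This is again a geometric series whose ratio tends to $0$; summing shows that the total is controlled by twice the $r=1$ term, hence by $ck^{-(2l-1-44\mu\delta)}=ck^{-2l+1+44\mu\delta}$, as required.

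For the off-diagonal decay \eqref{matrix elements}, the key is to combine the trace-norm bound with the vanishing rule \eqref{zeros}. For a fixed pair $(\s,\s')$ with $\||\p_\s\||+\||\p_{\s'}\||>0$, property \eqref{zeros} forces $G^{(1)}_r(\k)_{\s\s'}=0$ whenever $rQ<\||\p_\s\||+\||\p_{\s'}\||$. Consequently, setting $r_0=\lceil Q^{-1}(\||\p_\s\||+\||\p_{\s'}\||)\rceil$, we obtain
\[
|\E^{(1)}(\k)_{\s\s'}|\leq \sum_{r\geq r_0}|G^{(1)}_r(\k)_{\s\s'}|\leq \sum_{r\geq r_0}\|G^{(1)}_r(\k)\|_1\leq \sum_{r\geq r_0}(Ck)^{-r(2l-1-44\mu\delta)},
\]
using the elementary fact that any matrix entry is bounded by the trace norm. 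This is again a geometric tail dominated by its first term, which is at most $(Ck)^{-d^{(1)}(\s,\s')}$ (absorbing constants by enlarging $C$). The case $\s=\s'=0$ is trivial since $d^{(1)}(0,0)=0$ and $|\E^{(1)}(\k)_{00}|\leq 1$.

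The only subtle point is verifying that the exponents $2l-1-40\mu\delta$ and $2l-1-44\mu\delta$ are positive and bounded below, so the geometric tails converge uniformly and may be absorbed into the leading term; this is ensured by the standing assumption $\delta<(100\mu)^{-1}$ together with $l\geq 2$, so there is no real obstacle. The entire argument is a routine post-processing of Theorem \ref{Thm1}.
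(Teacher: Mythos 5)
Your proposal is correct and follows the paper's own route: the paper likewise obtains \eqref{perturbation} and \eqref{perturbation*} by summing the series \eqref{eigenvalue}, \eqref{sprojector} with the bounds \eqref{estg}, \eqref{estg_2}, \eqref{jan27}, and states that \eqref{matrix elements} "easily follows" from \eqref{zeros} together with \eqref{jan27}, which is exactly your $r_0$-truncation argument. Nothing essential is missing.
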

The last estimate easily follows from the formula \eqref{zeros} and
estimate \eqref{jan27}.
\begin{proof}The proof is based on expansion of the resolvent in
perturbation series on the circle $C_1$. Indeed, let us consider
the series
\begin{equation}
\left(H^{(1)}-z \right)^{-1}=\sum ^{\infty }
_{r=0}\left(H_0^{(1)}-z\right)^{-1}\left( -P(\delta)VP(\delta
)\left(H_0^{(1)}-z\right)^{-1}\right)^{r} \label{seriesforresolvent}
\end{equation}
where $H_0^{(1)}=P(\delta )H_0$ and $z\in C_1$. It easily follows
from \eqref{G1-5} that
\begin{equation}\left\|\left(H_0^{(1)}(\k)-z\right)^{-1}\right\|<\frac{8}{\tau l
}k^{-2l+1+40\mu \delta }.   \label{ocenka}
\end{equation}
Hence,
\begin{equation}\left\|\left(H^{(1)}(\k)-z\right)^{-1}\right\|<\frac{16}{\tau l
}k^{-2l+1+40\mu \delta }   \label{ocenka*}
\end{equation}
for sufficiently large $k$.
Substituting the series into the formula $\E^{(1)}
(\k)=-\frac{1}{2\pi i}\oint _{C_1}(H^{(1)}(\k)-z)^{-1}dz$ and
integrating term-wise, we arrive at
 \eqref{sprojector}. Estimates \eqref{jan27} easily follow from \eqref{ocenka} and the obvious inequality $\|P(\delta)\|_1\leq (2k^{\delta })^4$.
 It follows $\E^{(1)}= \E_0+O(k^{-2l+1+44\mu \delta })$. This means that there is
 a single eigenvalue of $H^{(1)}(\k)$ inside $C_1$.   In a similar way (using \eqref{g}, \eqref{2.14} and $V=0$) we obtain
 the formula for the eigenvalue and \eqref{estg}, \eqref{estg_2}, for details see
 [K]. To prove \eqref{zeros} we consider the operator $A=VP(\delta
)\left(H_0^{(1)}-z\right)^{-1}$ and represent it as $A=A_0+A_1+A_2$,
where $A_0=\left(P(\delta )-\E_0({\k})\right)A \left(P(\delta
)-\E_0({\k})\right)$, $A_1=\left(P(\delta )-\E_0({\k})\right)A
\E_0({\k})$, $A_2= \E_0({\k})A \left(P(\delta )-\E_0({\k})\right)$.
It is easy to see that $\E_0({\k})A \E_0({\k})=0$ because of $V=0$.
Note that
$$\oint _{C_1}\left(H_0^{(1)}-z\right)^{-1}A_0^r dz=0,$$ since the
integrand is a holomorphic function inside $C_1$. Therefore,
$$G^{(1)}_r({\k})=\frac{(-1)^{r+1}}{2\pi i}\sum _{j_1,...j_r=0,1,2,\
j_1^2+...+j_r^2\neq 0}\oint
_{C_1}\left(H_0^{(1)}-z\right)^{-1}A_{j_1}.....A_{j_r} dz. $$ At
least one of indices in each term is equal to $1,2$. We take into
account that $(A_2)_{\s\s'}=(A_1)_{\s'\s}=0$ if $\s\neq 0$ and
$A_{\s\s'}=0$ if $\||\p_{\s-\s'}\||>Q$. It follows that
$G^{(1)}_r({\k})_{\s\s'}$ can differ from zero only if $rQ\geq
\||\p_\s\||+\||\p_{\s'}\||$.

\end{proof}


It will be shown (Corollary \ref{Corollary 3.8}) that coefficients
$g^{(1)}_r({\k})$ and operators $G^{(1)}_r({\k})$ can be
analytically extended into the complex $\frac{\tau }{16}k^{-(40\mu
+1)\delta }$-neighborhood of $\omega^{(1)}(k,\delta, \tau )$ as
functions of $\varphi $
 and to the complex
$\frac{\tau }{8}k^{-(40\mu+1)\delta }-$ neighborhood of $k$ as
functions of $\varkappa$, estimates \eqref{estg}, \eqref{estg_2},
\eqref{jan27} being preserved. Now, we use formulae \eqref{g},
\eqref{eigenvalue} to extend
$\lambda^{(1)}({\k})=\lambda^{(1)}(\varkappa,\varphi)$ as an
analytic function. Obviously, series \eqref{eigenvalue} is
differentiable. Using Cauchy integral we get the following lemma.
\begin{lemma} \label{L:derivatives-1}Under
conditions of Theorem \ref{Thm1} the following estimates hold when
$\varphi $ is in $\omega^{(1)}(k,\delta, \tau )$ or its complex
$\frac{\tau }{32}k^{-(40\mu +1)\delta }$-neighborhood and
$\varkappa$ is in the complex $\frac{\tau }{16}k^{-40\mu\delta
}$-neighborhood of $\varkappa=k$ :
 \begin{equation}\label{perturbation-C}
\lambda^{(1)}({\k})=\varkappa^{2l}+O\left(k^{-2l+(80\mu+6)\delta}\right),
\end{equation}
\begin{equation}\label{estgder1}
\frac{\partial\lambda^{(1)}}{\partial\varkappa}=2l\varkappa^{2l-1} +
O\left(k^{-2l +(120\mu+6)\delta}\right), \ \
\frac{\partial\lambda^{(1)}}{\partial \varphi }=
 O\left(k^{-2l +(120\mu+7)\delta}\right),\end{equation}
\begin{equation}\label{estgder2} \begin{split} &
\frac{\partial^2\lambda^{(1)}}{\partial\varkappa^2}=2l(2l-1)\varkappa^{2l-2}+O\left(k^{-2l+(160\mu+6)\delta}\right),\cr
& \frac{\partial^2\lambda^{(1)}}{\partial\varkappa\partial \varphi
}= O\left(k^{-2l+(160\mu+7)\delta}\right) ,\ \
\frac{\partial^2\lambda^{(1)}}{\partial\varphi ^2}=
O\left(k^{-2l+(160\mu+8)\delta}\right).
\end{split}\end{equation}
\end{lemma}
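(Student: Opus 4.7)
The plan is to combine the perturbation series representation for $\lambda^{(1)}(\k)$ from Theorem \ref{Thm1} with Cauchy's integral formula applied on the analytic extension of the series. The extension itself will be supplied by the subsequently-stated Corollary \ref{Corollary 3.8}, which asserts that each coefficient $g^{(1)}_r(\k)$ is holomorphic on the complex $\frac{\tau}{16}k^{-(40\mu+1)\delta}$-neighborhood of $\omega^{(1)}(k,\delta,\tau)$ in the variable $\varphi$ and on the complex $\frac{\tau}{8}k^{-40\mu\delta}$-neighborhood of $k$ in the variable $\varkappa$, with the bounds \eqref{estg} and \eqref{estg_2} preserved. Granting this, $\lambda^{(1)}(\k) - \varkappa^{2l} = \sum_{r\geq 2} g^{(1)}_r(\k)$ is holomorphic on that polydisc, and summing the dominant $r=2$ estimate \eqref{estg_2} together with the geometric tail produced by \eqref{estg} for $r\geq 3$ yields \eqref{perturbation-C}.

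For the first-order derivatives I would apply the Cauchy integral formula to the holomorphic function $\lambda^{(1)} - \varkappa^{2l}$ on the smaller polydisc of radii $\frac{\tau}{32}k^{-40\mu\delta}$ in $\varkappa$ and $\frac{\tau}{32}k^{-(40\mu+1)\delta}$ in $\varphi$, which lies strictly inside the domain of holomorphy. Each $\varkappa$-differentiation loses a factor $k^{40\mu\delta}$ and each $\varphi$-differentiation loses a factor $k^{(40\mu+1)\delta}$ in the Cauchy estimate. Applied once, this converts the $O(k^{-2l+(80\mu+6)\delta})$ bound on $\lambda^{(1)} - \varkappa^{2l}$ into $O(k^{-2l+(120\mu+6)\delta})$ and $O(k^{-2l+(120\mu+7)\delta})$ for the $\varkappa$- and $\varphi$-derivatives, respectively. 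Adding the explicit derivative $2l\varkappa^{2l-1}$ of the unperturbed main term (and noting that $\varkappa^{2l}$ is $\varphi$-independent) produces \eqref{estgder1}. One further application of the Cauchy estimate gives the second-derivative bounds in \eqref{estgder2} with exponents $160\mu+6$, $160\mu+7$, and $160\mu+8$, and the explicit second $\varkappa$-derivative $2l(2l-1)\varkappa^{2l-2}$ of $\varkappa^{2l}$ supplies the leading term in the first line of \eqref{estgder2}.

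The main technical obstacle is not the Cauchy manipulation itself, which is routine, but the preliminary analytic extension of the perturbation series. To establish it one has to verify that the geometric lemma \ref{L:G1} and the resolvent bound \eqref{ocenka} remain valid on the claimed complex neighborhood, so that the series \eqref{seriesforresolvent} converges uniformly on the contour $C_1$ and may be integrated term-wise to yield holomorphic coefficients $g^{(1)}_r(\k)$ and $G^{(1)}_r(\k)$. This is precisely the content of Section \ref{GC} and of Corollary \ref{Corollary 3.8}, and once that input is in hand the present lemma follows directly from the Cauchy formula applied on the shrunken polydisc described above.
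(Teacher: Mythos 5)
Your proposal is correct and follows essentially the same route as the paper: analytic extension of the coefficients $g^{(1)}_r$ via Corollary \ref{Corollary 3.8}, the bound \eqref{perturbation-C} from the dominant $r=2$ term \eqref{estg_2} plus the geometric tail \eqref{estg}, and then Cauchy estimates on a shrunken polydisc, losing $k^{40\mu\delta}$ per $\varkappa$-derivative and $k^{(40\mu+1)\delta}$ per $\varphi$-derivative, with the explicit derivatives of $\varkappa^{2l}$ supplying the leading terms. The exponent bookkeeping in \eqref{estgder1}--\eqref{estgder2} checks out exactly as you describe.
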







\subsection{\label{GC}Geometric Considerations}
In this section we prove Lemma \ref{L:G1} and its corollary.
However, we will prove a version of this lemma  for a complex set of
$\varphi $. We need this complex version for further steps. Lemma
\ref{L:G1} is a simple corollary of the result proven in this
section. We will use the notation $|\a|^2_\R:=(\a,\a)_\R$ where
$(\a,\b)_\R:=a_1b_1+a_2b_2$ when $\a, \b \in \C^2$. It is easy to
see that $|\k (\varphi )+\p_{\m}|^2_{\R}$ is an analytic extension
in $\varkappa$ and $\varphi$ of $$
|\k+\p_{\m}|^2=\varkappa^2+p_{\m}^2+2\varkappa
p_{\m}\cos(\varphi-\varphi_{\m})
$$ defined for real $\varkappa,\varphi $. Note that $|\cdot |$ is the
canonical norm in $\C$ or $\R^2$. For every fixed $k\geq1$ and
$\frac{1}{32}\leq \tau\leq 32$, we describe the resonance set
$\OO^{(1)}=\OO^{(1)}(k,\tau )$ of $\varphi\in \C$. We put
\begin{equation} \label{52a} \OO^{(1)}(k,\tau
):=\cup_{\m\in\tilde\Omega(\delta)\setminus\{0\}}\OO_{\m}(k,\tau
),\end{equation} where
\begin{equation}\label{resonance} \begin{split}& \OO_{\m}(k,\tau
):=\{\varphi\in\C:\ \ \left||\vec k+\p_{\m}|^2_{\R}-k^2\right|\leq
\tau k^{1-40\mu\delta}\}=\cr & \{\varphi\in\C:\ \
\left|p_{\m}^2+2kp_{\m}\cos(\varphi-\varphi_{\m})\right|\leq \tau
k^{1-40\mu\delta}\}. \end{split} \end{equation} In most cases
parameter $\tau$ will be equal to $1$. But sometimes we will use
different choice of $\tau$. It easily follows from the definition
\eqref{resonance} and the estimate \eqref{above} that for any
$\varkappa\in\C$ such that $|\varkappa-k|\leq1$ and any $\varphi\in
\OO_{\m}(k,\tau )$ we have
\begin{equation}\label{complex}
\left||p_{\m}^2+2\varkappa
p_{\m}\cos(\varphi-\varphi_{\m})|-|p_{\m}^2+2kp_{\m}\cos(\varphi-\varphi_{\m})|\right|\leq
\frac{\tau}{4} k^{1-40\mu\delta}, \end{equation} provided
$2(1+40\mu)\delta\leq1$ and $k\geq 800$ which will be assumed in
what follows.

Let  $\W_0:=\{\varphi \in \C: |\Im \varphi |<1\}.$ We introduce a complex non-resonant set:
\begin{equation} \label{W1} \W^{(1)}(k,\tau ):=\W_0 \setminus \OO^{(1)}(k,\tau
). \end{equation} Clearly, it  is open. We also note that the set $\OO^{(1)}\cap[0,2\pi]$ is
symmetric, i.e.
$\OO^{(1)}\cap[0,2\pi]+\pi\,(\hbox{mod}\,2\pi)\,=\OO^{(1)}\cap[0,2\pi]$,
since $\varphi _{-\m}=\varphi _\m+\pi $.
We define $\omega^{(1)} (k,\delta, \tau )$ as a real part of
$\W^{(1)} (k,\delta, \tau )$:
\begin{equation}\omega^{(1)} (k,\delta, \tau )=\W^{(1)}(k,\tau )\cap [0,2\pi ).
\label{omega} \end{equation}
\begin{lemma} Let $\varphi $ be in $\W^{(1)}(k,\tau )$, then
\begin{equation}\label{jan28a}
\left||\vec k (\varphi )+\p_{\m}|^2_{\R}-k^2\right|\geq \tau
k^{1-40\mu\delta}\mbox{   for all  }\m\in \tilde\Omega (\delta
)\setminus\{0\}.\end{equation} If $\varphi $ is in the complex
$k^{-(40\mu +1)\delta }$-neighborhood of $\W^{(1)}(k,\tau )$ and
$\varkappa\in \C: |\varkappa-k|<\frac{\tau }{8}k^{-40\mu \delta }$.
Then, for $\k =\varkappa (\cos \varphi ,\sin \varphi )$ the
following estimate holds:
\begin{equation}\label{jan28b}
\left||\k (\varphi )+\p_{\m}|^2_{\R}-k^2\right|\geq \frac{\tau}{2}
k^{1-40\mu\delta}\mbox{   for all  }\m\in \tilde\Omega (\delta
)\setminus\{0\}.\end{equation}\end{lemma}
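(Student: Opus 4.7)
The first assertion is immediate from the definitions. By \eqref{W1}, $\W^{(1)}(k,\tau) = \W_0 \setminus \OO^{(1)}(k,\tau)$, and by \eqref{52a}, $\OO^{(1)}(k,\tau) = \bigcup_{\m \in \tilde\Omega(\delta)\setminus\{0\}} \OO_\m(k,\tau)$. Hence $\varphi \in \W^{(1)}(k,\tau)$ forces $\varphi \notin \OO_\m(k,\tau)$ for every relevant $\m$, and the very definition \eqref{resonance} of $\OO_\m$ yields \eqref{jan28a} (in fact with strict inequality, which is even stronger than claimed).

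For the second assertion I would use an analytic perturbation argument. Fix any $\m \in \tilde\Omega(\delta)\setminus\{0\}$ and consider the entire function
\[
F(\varkappa,\varphi) := |\k(\varphi) + \p_\m|^2_\R - k^2 = (\varkappa^2 - k^2) + p_\m^2 + 2\varkappa p_\m \cos(\varphi - \varphi_\m).
\]
Pick $\varphi_0 \in \W^{(1)}(k,\tau)$ with $|\varphi - \varphi_0| \leq k^{-(40\mu+1)\delta}$. The first assertion applied at $(k,\varphi_0)$ gives $|F(k,\varphi_0)| \geq \tau k^{1-40\mu\delta}$, so by the reverse triangle inequality it suffices to prove
\[
|F(\varkappa,\varphi) - F(k,\varphi_0)| \leq \tfrac{\tau}{2}\, k^{1-40\mu\delta}.
\]

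I would split this perturbation along the path $(k,\varphi_0) \to (k,\varphi) \to (\varkappa,\varphi)$. The $\varkappa$-leg is covered directly by \eqref{complex}, recorded just after the definition of $\OO_\m$: since the actual displacement $\tfrac{\tau}{8}k^{-40\mu\delta}$ is much smaller than the $1$ used in \eqref{complex}, the contribution of this leg is dominated by a small fraction of $\tau k^{1-40\mu\delta}$. The $\varphi$-leg needs a derivative estimate: $\partial_\varphi F = -2\varkappa p_\m \sin(\varphi - \varphi_\m)$; using $|\varkappa| \leq k+1$, the bound $p_\m \leq 2\pi |\|\p_\m\|| \leq 8\pi k^\delta$ from Lemma \ref{psnorms} applied to $\m \in \tilde\Omega(\delta)$, together with $|\sin(\varphi-\varphi_\m)| \leq \cosh|\Im\varphi|$ on the complex tube of width slightly above $1$ around the real axis, one gets $|\partial_\varphi F| = O(k^{1+\delta})$; integrating along a path of length $k^{-(40\mu+1)\delta}$ contributes $O(k^{1-40\mu\delta})$.

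The principal obstacle is a careful constant chase in this $\varphi$-leg, since the resulting bound has exactly the same scaling in $k$ as the target bound $\tfrac{\tau}{2}k^{1-40\mu\delta}$. The absolute constants arising from the $\cosh$-factors on $\W_0$, from the factor $8\pi$ in $p_\m$, and from the factor $2$ in the derivative must combine with the $\varkappa$-contribution to yield at most $\tfrac{\tau}{2}k^{1-40\mu\delta}$; this is precisely why the radii $k^{-(40\mu+1)\delta}$ and $\tfrac{\tau}{8}k^{-40\mu\delta}$, the factor $4$ separating $\tilde\Omega(\delta)$ from $\Omega(\delta)$, and the width of the strip $\W_0$ are chosen as they are. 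Once this bookkeeping is completed, summing the two contributions and applying the reverse triangle inequality delivers \eqref{jan28b}.
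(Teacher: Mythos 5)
Your argument is essentially the paper's: the paper disposes of this lemma in one line ("easily follows from \eqref{resonance} and \eqref{complex}"), which is precisely your combination of the definition of $\OO_\m$ for the first claim and a stability estimate in $\varkappa$ (via \eqref{complex}) and in $\varphi$ (via the derivative of $p_\m^2+2\varkappa p_\m\cos(\varphi-\varphi_\m)$) for the second. The constant chase you flag in the $\varphi$-leg is genuinely there — with $p_\m$ as large as $8\pi k^{\delta}$ the crude derivative bound gives \eqref{jan28b} only after shrinking the neighborhood by an absolute factor (e.g.\ to the $\frac{\tau}{16}k^{-(40\mu+1)\delta}$-neighborhood, which is all the paper ever uses downstream) — but the paper glosses over exactly the same bookkeeping, so your proposal matches its intended proof.
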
 The lemma easily follows
from \eqref{resonance} and \eqref{complex}.
\begin{corollary} \label{Parts 1,2} Parts 1 and 2 of Lemma \ref{L:G1} hold.
\end{corollary}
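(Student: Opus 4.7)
The plan is to derive the $2l$-th power bounds (Parts 1 and 2 of Lemma \ref{L:G1}) from the quadratic bounds \eqref{jan28a}--\eqref{jan28b} just established for the complex non-resonant set $\W^{(1)}(k,\tau)$, by means of the elementary factorization
\[
A^l-B^l=(A-B)\sum_{j=0}^{l-1}A^{l-1-j}B^j,
\]
applied with $A=|\vec k(\varphi)+\p_\m|^2$ and $B=k^2$.

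For Part 1, fix $\varphi\in\omega^{(1)}(k,\delta,\tau)=\W^{(1)}(k,\tau)\cap[0,2\pi)$ and $\m\in\tilde\Omega(\delta)\setminus\{0\}$. Since $\varphi$ is real, $\vec k(\varphi)\in\R^2$ and $|\vec k(\varphi)+\p_\m|^2_\R=|\vec k(\varphi)+\p_\m|^2$. The strict version of \eqref{jan28a}, inherited from the strict inequality in the definition \eqref{resonance} of $\OO^{(1)}$, gives $|A-B|>\tau k^{1-40\mu\delta}$. By \eqref{above} together with $\m\in\tilde\Omega(\delta)$, one has $|\p_\m|\leq 8\pi k^\delta$, whence $A=k^2+O(k^{1+\delta})$; consequently each summand $A^{l-1-j}B^j$ equals $k^{2(l-1)}(1+o(1))$ uniformly in $j$, so the whole sum is $lk^{2(l-1)}(1+o(1))$, and for all $k>k_0(l,\delta,\mu)$ strictly exceeds $lk^{2(l-1)}$. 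Multiplying the two estimates yields $|A^l-B^l|>l\tau k^{2l-1-40\mu\delta}$, which is exactly Part 1.

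For Part 2 the argument is identical, with \eqref{jan28b} replacing \eqref{jan28a}. The hypotheses of the corollary (the real $\tfrac{\tau}{16}k^{-(40\mu+1)\delta}$-neighborhood of $\omega^{(1)}$ and $|\varkappa-k|<\tfrac{\tau}{16}k^{-40\mu\delta}$) sit inside the region covered by \eqref{jan28b} (the complex $k^{-(40\mu+1)\delta}$-neighborhood of $\W^{(1)}$ with $|\varkappa-k|<\tfrac{\tau}{8}k^{-40\mu\delta}$), so that $|A-B|\geq\tfrac{\tau}{2}k^{1-40\mu\delta}$. Since $\varphi$ and $\varkappa$ are real, $A=|\k(\varphi)+\p_\m|^2$ is non-negative; the estimate $|\p_\m|\leq 8\pi k^\delta$ together with $|\varkappa-k|\ll k$ again gives $A=k^2+O(k^{1+\delta})$, the factorization sum is $lk^{2(l-1)}(1+o(1))$, and the inequality $|A^l-B^l|>\tfrac{\tau l}{2}k^{2l-1-40\mu\delta}$ follows for all sufficiently large $k$.

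The only obstacle is bookkeeping: one must verify that the asymptotic lower bound $\sum_{j=0}^{l-1}A^{l-1-j}B^j\geq lk^{2(l-1)}(1-o(1))$ is strict enough to absorb the $O(k^{1+\delta})$ perturbation of $A$ with the prescribed numerical coefficients, and that the corollary's smaller real neighborhoods are indeed contained in the complex neighborhoods for which \eqref{jan28a}--\eqref{jan28b} were proved.
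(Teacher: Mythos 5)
Your argument is essentially the paper's own (implicit) one: the corollary is deduced from the quadratic bounds \eqref{jan28a}--\eqref{jan28b} by the factorization $A^l-B^l=(A-B)\sum_{j=0}^{l-1}A^{l-1-j}B^j$ with $A=|\k(\varphi)+\p_\m|^2_\R$, $B=k^2$ and $A=k^2+O(k^{1+\delta})$. One caveat on your bookkeeping: the claim that the sum strictly exceeds $lk^{2(l-1)}$ for large $k$ is false when $A<k^2$ (every summand carrying a positive power of $A$ then lies below $k^{2(l-1)}$), so this route gives the constants $l\tau$ and $\tau l/2$ only up to a factor $1-O(k^{-1+\delta})$; this is exactly how the paper itself uses Part 1 (note the factor $\tau l\,(1-O(k^{\delta-1}))$ in \eqref{perturbest}), and it is harmless since the downstream application (Corollary \ref{C:L:G1}) only needs the constant $\tau l/4$.
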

\begin{corollary} \label{Corollary 3.8} Coefficients $g^{(1)}_r({\k})$ and
operators $G^{(1)}_r({\k})$ can be analytically extended into the
complex $\frac{\tau }{16}k^{-(40\mu +1)\delta }$-neighborhood of
$\omega^{(1)}(k,\delta, \tau )$ as functions of $\varphi $
 and to the complex
$\frac{\tau }{16}k^{-(40\mu+1)\delta }-$ neighborhood of $k$ as
functions of $\varkappa$, estimates \eqref{estg}, \eqref{estg_2},
\eqref{jan27} being preserved. \end{corollary}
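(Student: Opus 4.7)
The plan is to transfer the construction in the proof of Theorem \ref{Thm1} verbatim to the complex neighborhood; the one point that needs an argument is that the diagonal resolvent bound \eqref{ocenka} survives, after which everything (Neumann series, contour integration, trace-class estimates) is a formal repetition. The first observation is that the diagonal entry $|\k(\varphi)+\p_\m|^{2l}_\R$ of $P(\delta)H_0(\k)P(\delta)$ is, by the very definition $(\a,\b)_\R=a_1b_1+a_2b_2$, a polynomial of degree $2l$ in $\varkappa,\cos\varphi,\sin\varphi$, hence an entire function of $(\varkappa,\varphi)\in\C^2$; consequently $P(\delta)(H_0(\k)-z)P(\delta)$ is an entire operator-valued function and its (diagonal) inverse is analytic on the open set where no diagonal entry equals $z$.

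The main technical step, which I expect to be the only real obstacle, is to upgrade the complex non-resonance bound \eqref{jan28b} from exponent $1$ to exponent $l$. Writing $w:=|\k(\varphi)+\p_\m|^2_\R$ and using the elementary factorization
\[
w^l-k^{2l}=(w-k^2)\bigl(w^{l-1}+w^{l-2}k^2+\cdots+k^{2(l-1)}\bigr),
\]
one first checks the a priori size estimate $|w-k^2|\leq C k^{1+\delta}$ on the specified complex neighborhood, using $p_\m\leq 8\pi k^\delta$ for $\m\in\tilde\Omega(\delta)$, $|\varkappa-k|\leq\frac{\tau}{16}k^{-40\mu\delta}$, and the smallness of $|\Im\varphi|$ (which keeps $|\cos(\varphi-\varphi_\m)|\leq 1+o(1)$). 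This forces $w=k^2(1+O(k^{-1+\delta}))$, so the symmetric sum above equals $l\,k^{2(l-1)}(1+o(1))$. Multiplying by the lower bound \eqref{jan28b} for $|w-k^2|$ and then subtracting $|k^{2l}-z|=\frac{\tau l}{4}k^{2l-1-40\mu\delta}$ on $C_1$ via the triangle inequality, one obtains for $k\geq k_0(V,\delta,\tau)$ the complex analogue of \eqref{G1-5},
\[
\bigl||\k(\varphi)+\p_\m|^{2l}_\R-z\bigr|\geq c\,k^{2l-1-40\mu\delta}\qquad (\m\in\tilde\Omega(\delta),\ z\in C_1),
\]
with an absolute constant $c>0$ independent of $k$, $\m$ and $z$.

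With this pointwise lower bound the diagonal operator is boundedly invertible on the complex neighborhood with $\|(P(\delta)(H_0(\k)-z)P(\delta))^{-1}\|\leq Ck^{-(2l-1-40\mu\delta)}$, which is \eqref{ocenka} up to an irrelevant constant. The Neumann expansion \eqref{seriesforresolvent} then converges in operator norm uniformly on compact subsets of the complex neighborhood, each summand being a composition of entire operator-valued functions of $(\varkappa,\varphi)$ with the constant matrix $V$. Hence the integrands of \eqref{g} and \eqref{G} are jointly analytic in $(z,\varkappa,\varphi)$ on a neighborhood of $C_1\times\{(\varkappa,\varphi)\}$, and integration along the fixed real contour $C_1$ preserves analyticity in $(\varkappa,\varphi)$, producing the claimed extensions of $g^{(1)}_r(\k)$ and $G^{(1)}_r(\k)$. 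The bounds \eqref{estg}, \eqref{estg_2}, \eqref{jan27} were derived in the proof of Theorem \ref{Thm1} only from \eqref{ocenka}, the length of $C_1$, and the trivial trace-class bound $\|P(\delta)\|_1\leq(2k^\delta)^4$, all three of which are now available verbatim on the complex neighborhood; hence the same estimates survive without change.
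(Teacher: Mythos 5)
Your proposal is correct and follows essentially the same route the paper intends: the corollary is meant to follow from the complex non-resonance bound \eqref{jan28b} exactly as Corollary \ref{C:L:G1} follows in the real case (the passage from exponent $2$ to $2l$ and the subtraction of the radius of $C_1$ as in \eqref{perturbest}), after which \eqref{ocenka}, the Neumann series, and the contour integrals \eqref{g}, \eqref{G} carry over verbatim with analytic dependence on $(\varkappa,\varphi)$. Your explicit factorization of $w^l-k^{2l}$ and the observation that \eqref{estg}, \eqref{estg_2}, \eqref{jan27} only used \eqref{ocenka} and $\|P(\delta)\|_1$ are precisely the details the paper leaves implicit.
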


\begin{lemma} The  measure of the resonance set
$\OO^{(1)}\cap[0,2\pi]$ satisfies the estimate:
\begin{equation}\label{meas1} meas(\OO^{(1)}\cap[0,2\pi])\leq  Ck^{-37\delta\mu}.
\end{equation}
\end{lemma}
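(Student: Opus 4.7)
The plan is to bound each $\OO_{\m}\cap[0,2\pi]$ separately and then sum over $\m\in\tilde\Omega(\delta)\setminus\{0\}$, using the Diophantine estimate \eqref{below} to convert lower bounds for $p_{\m}$ into combinatorial quantities that can be summed against the cardinality of $\tilde\Omega(\delta)$.

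First, I would rewrite the defining inequality of $\OO_{\m}$ from \eqref{resonance}. Dividing by $2kp_{\m}$ (which is positive for $\m\neq 0$), the condition becomes
$$\left|\cos(\varphi-\varphi_{\m})+\frac{p_{\m}}{2k}\right|\leq\frac{\tau\, k^{-40\mu\delta}}{2p_{\m}}.$$
Since $\m\in\tilde\Omega(\delta)$, the upper bound \eqref{above} gives $p_{\m}\leq 8\pi k^{\delta}$, so the shift $p_{\m}/(2k)=O(k^{\delta-1})$ is much less than $1/2$ for large $k$. Consequently the equation $\cos\theta=-p_{\m}/(2k)$ has two solutions in $[0,2\pi)$, each $O(k^{\delta-1})$-close to $\pi/2$ or $3\pi/2$, and on a neighborhood of each $|\sin\theta|\geq 1/2$. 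By the mean value theorem applied to $\cos$ near its zeros, the set $\{\theta\in[0,2\pi]:|\cos\theta+p_{\m}/(2k)|\leq\eta\}$ has measure at most $C\eta$ for every $\eta>0$. Applied with $\eta=\tau k^{-40\mu\delta}/(2p_{\m})$ this yields
$$meas(\OO_{\m}\cap[0,2\pi])\leq\frac{C\tau\, k^{-40\mu\delta}}{p_{\m}}.$$

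Second, I would invoke the Diophantine estimate \eqref{below}, $p_{\m}\geq 2\pi C_{\epsilon}\,|\|\p_{\m}\||^{-(\mu-1+\epsilon)}$, which converts the single-set bound into
$$meas(\OO_{\m}\cap[0,2\pi])\leq C\, k^{-40\mu\delta}\,|\|\p_{\m}\||^{\mu-1+\epsilon}.$$
Using $|\|\p_{\m}\||\leq 4k^{\delta}$ and summing over all $\m\in\tilde\Omega(\delta)\setminus\{0\}$, whose cardinality is $O(k^{4\delta})$ (since $\m=(\s_1,\s_2)\in\Z^2\times\Z^2$ with $|\s_1|+|\s_2|\leq 4k^{\delta}$), I obtain
$$meas(\OO^{(1)}\cap[0,2\pi])\leq C\, k^{-40\mu\delta}\cdot(4k^{\delta})^{\mu-1+\epsilon}\cdot k^{4\delta}=C\,k^{(-39\mu+3+\epsilon)\delta}.$$
Since $\mu\geq 2$, choosing $\epsilon\leq 2\mu-3$ (which is permitted by \eqref{geq}) makes the exponent $\leq -37\mu\delta$, which is \eqref{meas1}.

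The only delicate ingredient is the Diophantine input. Without \eqref{below}, an individual $\OO_{\m}$ could have arbitrarily large measure whenever $p_{\m}$ happens to be exceptionally small, and the sum over $\m$ would be uncontrolled. The hypothesis $\mu<\infty$ quantifies how small $p_{\m}$ can be relative to its combinatorial size $|\|\p_{\m}\||$, and the Diophantine loss $(\mu-1+\epsilon)\delta$ together with the counting loss $4\delta$ from $|\tilde\Omega(\delta)|$ total $(\mu+3+\epsilon)\delta$, which for $\mu\geq 2$ is comfortably smaller than $3\mu\delta$. The numerical coefficient $40\mu\delta$ in the definition \eqref{resonance} of $\OO_{\m}$ was tuned precisely to leave the $37\mu\delta$ margin in the final exponent.
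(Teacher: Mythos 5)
Your proof is correct and follows essentially the same route as the paper: localize each $\OO_{\m}\cap[0,2\pi]$ near the two angles $\varphi_{\m}\pm\pi/2$, bound its length using the Diophantine estimate \eqref{below} for $p_{\m}$, and multiply by the $O(k^{4\delta})$ cardinality of $\tilde\Omega(\delta)$. The only cosmetic difference is that the paper absorbs the Diophantine loss into a uniform radius $\tau k^{-39\mu\delta}$ for the covering disks $\Phi^{\pm}_{\m}$ before summing, while you keep the $1/p_{\m}$ dependence explicit and apply \eqref{below} afterwards.
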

\begin{corollary} \label{Part 3} Part 3 of Lemma \ref{L:G1} holds. \end{corollary}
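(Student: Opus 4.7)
My plan is to bound the measure of each $\OO_\m(k,\tau)\cap[0,2\pi]$ individually and then sum over $\m\in\tilde\Omega(\delta)\setminus\{0\}$, using the cardinality bound on $\tilde\Omega(\delta)$ and the Diophantine lower bound on $p_\m$ coming from Lemma \ref{psnorms}.

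First, fix $\m\in\tilde\Omega(\delta)\setminus\{0\}$ and look at the defining inequality
\[
\bigl|p_\m^2+2kp_\m\cos(\varphi-\varphi_\m)\bigr|\leq \tau k^{1-40\mu\delta}.
\]
Since $p_\m>0$, this is equivalent to
\[
\Bigl|\cos(\varphi-\varphi_\m)+\tfrac{p_\m}{2k}\Bigr|\leq \frac{\tau k^{-40\mu\delta}}{2p_\m}=:\eta_\m.
\]
Because $p_\m\leq 2\pi\||\p_\m\||\leq 8\pi k^\delta$ by \eqref{above}, the center $-p_\m/(2k)$ of this cosine-window is at most $4\pi k^{\delta-1}$ in absolute value, hence bounded away from $\pm1$ for $k\geq 800$. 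Therefore $|\sin|\geq 1/2$ on the relevant branches and the preimage in $[0,2\pi]$ has measure bounded by $C\eta_\m$. Using now the lower bound \eqref{below} and $\||\p_\m\||\leq 4k^\delta$, I get $p_\m\geq c\,k^{-\delta(\mu-1+\epsilon)}$, so
\[
meas\bigl(\OO_\m(k,\tau)\cap[0,2\pi]\bigr)\leq C\tau\,k^{-40\mu\delta}\cdot k^{\delta(\mu-1+\epsilon)}=C\tau\,k^{-\delta(39\mu+1-\epsilon)}.
\]

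Next, I count the terms. The set $\tilde\Omega(\delta)\subset\Z^2\times\Z^2$ contains vectors with $|\s_1|+|\s_2|\leq 4k^\delta$, so its cardinality is bounded by $Ck^{4\delta}$. Summing the previous estimate,
\[
meas\bigl(\OO^{(1)}\cap[0,2\pi]\bigr)\leq C k^{4\delta}\cdot k^{-\delta(39\mu+1-\epsilon)}=Ck^{-\delta(39\mu-3-\epsilon)}.
\]
Since $\mu\geq 2$, we have $39\mu-3-\epsilon\geq 37\mu+(2\mu-3-\epsilon)\geq 37\mu$ provided $\epsilon<1$, so choosing $\epsilon$ small enough in the application of Lemma \ref{psnorms} yields the claimed bound $Ck^{-37\mu\delta}$.

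The only delicate point is the justification that the cosine-window has center bounded away from $\pm 1$, which is what keeps the preimage of linear rather than square-root size; this is the reason the estimate $p_\m\leq Ck^\delta$ (combined with $k$ large) is used. Everything else is a direct counting/union-bound argument, with the Diophantine lower bound \eqref{below} controlling the small-denominator factor $1/p_\m$ uniformly over $\tilde\Omega(\delta)$.
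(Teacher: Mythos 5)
Your proposal is correct and follows essentially the same route as the paper: for each $\m$ the resonance set is a window of width $\sim\tau k^{-40\mu\delta}/p_\m\lesssim k^{-39\mu\delta}$ around the two roots of $p_\m^2+2kp_\m\cos(\varphi-\varphi_\m)=0$ (the paper covers it by two discs of radius $\tau k^{-39\mu\delta}$, you bound the measure directly via the derivative of the cosine), and then a union bound over the at most $Ck^{4\delta}$ indices of $\tilde\Omega(\delta)$ together with $\mu\geq 2$ gives $Ck^{-37\mu\delta}$. The only cosmetic difference is that you carry the $\epsilon$ from the Diophantine bound explicitly, whereas the paper absorbs it into the exponent $39\mu\delta$.
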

\begin{proof}

Let $\m \neq 0$ and $\varphi^{\pm}_{\m}$ be two (mod $2\pi$)
solutions of the equation $$
p_{\m}^2+2kp_{\m}\cos(\varphi-\varphi_{\m})=0. $$ Obviously,
$\varphi^{\pm}_{\m}-\varphi_{\m}=\pm \frac{\pi }{2}+O(k^{-1+\delta
})$.  Put $$ \Phi^{\pm}_{\m}:=\{\varphi\in\C:\ \
|\varphi-\varphi^{\pm}_{\m}|\leq \tau k^{-39\delta\mu}\}.
$$ Then, taking into account \eqref{below}, it is not difficult to see that $\OO_{\m}\subset
\cup_{\pm,j\in\Z}(\Phi^{\pm}_{\m}+2\pi j)$. Thus,
\begin{equation}\label{meas} meas(\OO^{(1)}\cap[0,2\pi])\leq4\tau
k^{-39\delta\mu}(8k^\delta)^4\leq C k^{-37\delta\mu}.
\end{equation} \end{proof}

{\bf Proof of Corollary \ref{C:L:G1}.} Let $C_1:=\{z\in\C:\ \
|z-k^{2l}|=\frac{\tau }{4}k^{2l-1-40\mu\delta}\}$ be the contour
around eigenvalue $k^{2l}$ of the unperturbed operator $H_0(\vec
k)$. Then it follows from \eqref{jan28a} that for any
$\varphi\in\W^{(1)}(k,\tau)$,
$\m\in\tilde\Omega(\delta)\setminus\{0\}$,
and $z:\
|z-k^{2l}|\leq \frac{\tau }{4}k^{2l-1-40\mu\delta}$ we have
\begin{equation}\label{perturbest} \begin{split}&
||\vec k+\p_{\m}|^{2l}_{\R}-z|\geq ||\vec
k+\p_{\m}|^{2l}_{\R}-k^{2l}|-\frac{\tau
l}{4}k^{2l-1-40\mu\delta}\geq \cr & {\tau l}(1-O(k^{\delta-1}))
k^{2l-1-40\mu\delta}-\frac{\tau l}{4}k^{2l-1-40\mu\delta}\geq
\frac{\tau l}{4} k^{2l-1-40\mu\delta},
\end{split}
\end{equation} for sufficiently large $k$. For $\m=0$ the estimate
follows from the definition of $C_1$.

\subsection{\label{IS1}Isoenergetic Surface for  Operator $H^{(1)}$}

\begin{lemma}\label{ldk} \begin{enumerate}
\item For every sufficiently large $\lambda $, $\lambda :=k^{2l}$, and $\varphi $ in the real $\frac{\tau }{32} k^{-(40\mu +1)\delta }$-neighborhood
of $\omega^{(1)}(k,\delta, \tau )$ , there is a unique
$\varkappa^{(1)}(\lambda, \varphi )$ in the interval
$I_1:=[k-\frac{\tau }{32}k^{-40 \mu \delta },k+\frac{\tau
}{32}k^{-40 \mu \delta }]$, such that
    \begin{equation}\label{2.70}
    \lambda^{(1)} \left(\k
^{(1)}(\lambda ,\varphi )\right)=\lambda ,\ \ \k ^{(1)}(\lambda
,\varphi ):=\varkappa^{(1)}(\lambda ,\varphi )\vec \nu(\varphi).
    \end{equation}
\item  Furthemore, there exists an analytic in $ \varphi $ continuation  of
$\varkappa^{(1)}(\lambda ,\varphi )$ to the complex  $\frac{\tau
}{32} k^{-(40\mu +1)\delta }$-neighborhood of
$\omega^{(1)}(k,\delta, \tau )$ such that $\lambda^{(1)} (\k
^{(1)}(\lambda, \varphi ))=\lambda $. Function
$\varkappa^{(1)}(\lambda, \varphi )$ can be represented as
$\varkappa^{(1)}(\lambda, \varphi )=k+h^{(1)}(\lambda, \varphi )$,
where
\begin{equation}\label{dk0} |h^{(1)}|=O(k^{-4l+1+(80\mu +6)
\delta }), \end{equation}

\begin{equation}\label{dk}
\frac{\partial{h}^{(1)}}{\partial\varphi}=O\left(k^{-4l+1+(120\mu+7)\delta}\right),\
\ \ \ \
\frac{\partial^2{h}^{(1)}}{\partial\varphi^2}=O\left(k^{-4l+1+(160\mu+8)\delta}\right),
\end{equation}
\begin{equation}\label{dk1} \frac{\partial \varkappa^{(1)}}{\partial \lambda }=\frac{1}{2lk^{2l-1}}\left(1+O(k^{-4l+1+(120\mu +6)\delta })\right). \end{equation}\end{enumerate}

 \end{lemma}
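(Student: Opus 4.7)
The plan is to invoke the implicit function theorem to solve the equation $\lambda^{(1)}(\varkappa,\varphi)=\lambda$ for $\varkappa$ as a function of $(\lambda,\varphi)$, with the unperturbed solution $\varkappa=k$ as starting point. All the analytical input comes from Lemma~\ref{L:derivatives-1}; the geometric setup gives us the required neighborhoods.

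First, for real $\varphi$ in the prescribed $\frac{\tau}{32}k^{-(40\mu+1)\delta}$-neighborhood of $\omega^{(1)}(k,\delta,\tau)$, estimate \eqref{perturbation-C} at $\varkappa=k$ gives $\lambda^{(1)}(k,\varphi)=k^{2l}+O(k^{-2l+(80\mu+6)\delta})$, so $\lambda^{(1)}(k,\varphi)-\lambda$ is extremely small. From the first estimate in \eqref{estgder1} the map $\varkappa\mapsto\lambda^{(1)}(\varkappa,\varphi)$ is strictly monotone on the interval $I_1$ with derivative equal to $2l\varkappa^{2l-1}(1+o(1))$. Hence there is a unique $\varkappa^{(1)}(\lambda,\varphi)\in I_1$ solving $\lambda^{(1)}(\varkappa^{(1)},\varphi)=\lambda$, and dividing the vertical offset by the slope yields
\begin{equation*}
|h^{(1)}|=|\varkappa^{(1)}-k|\leq \frac{|\lambda^{(1)}(k,\varphi)-k^{2l}|}{\min_{I_1}\partial_\varkappa\lambda^{(1)}}
=O\!\left(\frac{k^{-2l+(80\mu+6)\delta}}{k^{2l-1}}\right)=O\!\left(k^{-4l+1+(80\mu+6)\delta}\right),
\end{equation*}
which is \eqref{dk0}. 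Note that this $h^{(1)}$ is much smaller than the radius $\frac{\tau}{32}k^{-40\mu\delta}$ of $I_1$, so the solution indeed stays inside.

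Second, to obtain analytic continuation in $\varphi$, we recall from Corollary~\ref{Corollary 3.8} that $\lambda^{(1)}(\varkappa,\varphi)$ is jointly analytic in the complex $\frac{\tau}{16}k^{-40\mu\delta}$-neighborhood of $\varkappa=k$ and the complex $\frac{\tau}{16}k^{-(40\mu+1)\delta}$-neighborhood of $\omega^{(1)}(k,\delta,\tau)$, with the same bounds as in the real case. The bound $\partial_\varkappa\lambda^{(1)}=2l\varkappa^{2l-1}(1+o(1))$ persists throughout, so the complex implicit function theorem yields an analytic $\varkappa^{(1)}(\lambda,\varphi)$ on the complex $\frac{\tau}{32}k^{-(40\mu+1)\delta}$-neighborhood, and the estimate \eqref{dk0} extends by the same division argument.

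Third, the derivative estimates come from implicit differentiation of $\lambda^{(1)}(\varkappa^{(1)}(\lambda,\varphi),\varphi)=\lambda$. We get
\begin{equation*}
\frac{\partial h^{(1)}}{\partial\varphi}=-\frac{\partial_\varphi\lambda^{(1)}}{\partial_\varkappa\lambda^{(1)}},\qquad
\frac{\partial\varkappa^{(1)}}{\partial\lambda}=\frac{1}{\partial_\varkappa\lambda^{(1)}},
\end{equation*}
and substituting \eqref{estgder1} together with $\varkappa^{(1)}=k+O(k^{-4l+1+(80\mu+6)\delta})$ in the expansion $\partial_\varkappa\lambda^{(1)}=2l\varkappa^{2l-1}+O(k^{-2l+(120\mu+6)\delta})$ yields \eqref{dk1} and the $\partial_\varphi h^{(1)}$ bound in \eqref{dk}. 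For the second derivative we differentiate once more; the product $(\partial_\varphi\lambda^{(1)})^2$ and the second-order terms from \eqref{estgder2}, each divided by powers of $\partial_\varkappa\lambda^{(1)}\sim k^{2l-1}$, give exactly $O(k^{-4l+1+(160\mu+8)\delta})$. Alternatively, since $h^{(1)}$ is analytic in the complex neighborhood of size $\frac{\tau}{32}k^{-(40\mu+1)\delta}$, one may apply Cauchy's integral formula to the first-derivative bound to recover the second-derivative bound.

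The only real obstacle is bookkeeping of the powers of $k$ and $\delta$-exponents to confirm that every correction genuinely fits inside the advertised error terms; since all inputs are already packaged in Lemma~\ref{L:derivatives-1}, no new ideas are required.
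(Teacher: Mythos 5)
Your proposal is correct and follows essentially the same route as the paper: both arguments rest entirely on the perturbation formulas and derivative bounds of Lemma \ref{L:derivatives-1}, establish existence and uniqueness of $\varkappa^{(1)}$ in $I_1$ via monotonicity of $\varkappa\mapsto\lambda^{(1)}$ together with the smallness of $\lambda^{(1)}(k,\varphi)-k^{2l}$, obtain analyticity in $\varphi$ from the analytic extension of $\lambda^{(1)}$ plus the (complex) implicit function theorem (the paper additionally pins down the complex root with Rouch\'e's theorem, which is interchangeable with your quantitative IFT step), and derive \eqref{dk0}--\eqref{dk1} by the same offset-over-slope and implicit-differentiation/Cauchy-formula bookkeeping.
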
 \begin{proof} \begin{enumerate} \item Let us prove
 existence of  $\varkappa^{(1)}(\lambda, \varphi ) $. By Theorem \ref{Thm1}, there
exists an eigenvalue $\lambda ^{(1)} (\k)$, given by
(\ref{eigenvalue}), for all $\varkappa$ in the interval $I_1$.  Let
    ${\cal L}^{(1)}(\varphi ):=
    \{\lambda^{(1)}(\k) : \varkappa \in
I_1\}.$ Using the definition of $I_1$, \eqref{perturbation}, and
continuity of $\lambda^{(1)}(\k)$ is continuous in $\varkappa$, we
easily obtain
    $
    {\cal L}^{(1)}(\varphi ) \supset
    [k^{2l}-t,k^{2l}+t]$,
    $t=c_1k^{2l-1-40\mu \delta}$, $
    0<c_1 \neq c_1(k).
    $
     Hence,  there exists a $\varkappa^{(1)}$ such that
$\lambda^{(1)}(\k^{(1)})=k^{2l}$, $\varkappa^{(1)} \in I_1$.

 Now we show that there is only one $\varkappa^{(1)}$  in the interval $I_1$ satisfying
(\ref{2.70}). Indeed, by \eqref{estgder1},
    $
    \dfrac{\partial \lambda^{(1)}(\k)}{\partial \varkappa } \geq 2lk^{2l-1}\bigl( 1+o(1)
\bigr)$. This implies that $\lambda^{(1)}(\k)$ is monotone with
respect to $\varkappa$ in $I_1$. Thus, there is only one $\varkappa
\in I_1$ satisfying ~(\ref{2.70}).

\item  We  consider $\lambda^{(1)}\left(\k
(\varphi )\right)$ as a function of complex variable $\varkappa$ in
the disc $|\varkappa-k|<\frac{\tau }{32}k^{-40\mu \delta }$. Taking
into account \eqref{perturbation-C} and applying Rouch\'{e}'s
theorem, we obtain that for any $\varphi$ in $\frac{\tau
}{32}k^{-(40\mu +1)\delta }$-neighborhood of $\omega^{(1)}(k,\delta,
\tau )$ there exists unique value of $\varkappa^{(1)}(\varphi )$
such that $|\varkappa^{(1)}(\varphi )-k|<\frac{\tau }{32}k^{-40\mu
\delta }$ and $\lambda^{(1)}\left(\k^{(1)}(\varphi
)\right)=\lambda:=k^{2l}$. Actually,
\begin{equation} |\varkappa^{(1)}(\varphi )-k|<k^{-4l+1+(80\mu +6) \delta
}.\label{kappa1} \end{equation} Then it follows from
\eqref{estgder1} and  implicit function theorem that
$\varkappa^{(1)}(\varphi)$ is locally analytic. Combined with
uniqueness this implies global analyticity.

 The
estimate \eqref{dk0} follows from \eqref{kappa1}.  Applying standard
arguments with the Cauchy formula we obtain \eqref{dk}. Using
\eqref{estgder1} we get \eqref{dk1}.
\end{enumerate}
\end{proof}

Let us consider the set of points in $\R^2$ given by the formula:
$\k=\k^{(1)} (\varphi), \ \ \varphi \in \omega^{(1)} (k,\delta, \tau
)$. By Lemma \ref{ldk} this set of points is a slightly disturbed
circle with holes, see Fig. 1. All the points of this curve satisfy
the equation $\lambda^{(1)} (\k ^{(1)}(\lambda, \varphi ))=k^{2l}$.
We call it isoenergetic surface of the operator $H^{(1)}$ and
denote by ${\cal D}_{1}(\lambda)$, see figure \ref{F:1}. The ``radius"
$\varkappa^{(1)}(\lambda, \varphi )$ of ${\cal D}_{1}(\lambda)$
monotonously increases with $\lambda $, see \eqref{dk1}.


\subsection{Preparation for Step II. Construction
of the Second Nonresonant Set}
\subsubsection{Model Operator for Step II \label{MOforStep2}}
Here we will describe an operator $PHP$, see \eqref{PHP}, which will
be used  for constructing perturbation series in the second step.
The operator $PHP$ has a block structure, the size of blocks being
of order $k^{\delta}$.

Let $r_1$ be some fixed number $2<r_1$. An upper bound on $r_1$ we
will introduce in Step II. We defined $\OO _\m$ by formula
(\ref{resonance}) for all $\m $: $0<\||\p_\m\||\leq 4k^{\delta }$.
Now we define $\OO _\m$ by the formula
\begin{equation}\label{resonance1} \begin{split}& \OO_{\m}(k,\tau
):=\{\varphi\in\C:\ \ \left||\vec k+\p_{\m}|^2_{\R}-k^2\right|\leq
\tau k^{-40\mu\delta}\}=\cr & \{\varphi\in\C:\ \
\left|p_{\m}^2+2kp_{\m}\cos(\varphi-\varphi_{\m})\right|\leq \tau
k^{-40\mu\delta}\}. \end{split} \end{equation} for $\m $:
$4k^{\delta}<\||\p_\m\||\leq k^{r_1}$. Note that  the right-hand
part in the inequality here is smaller than the corresponding one in
\eqref{resonance}. Obviously, $\OO _\m$ contains the whole interval
$[0,2\pi )$ for sufficiently small $p_\m$. As in Step I let
$\varphi^{\pm}_{\m}$ be two (mod $2\pi$) solutions of the equation
\begin{equation} p_{\m}^2+2kp_{\m}\cos(\varphi-\varphi_{\m})=0.
\label{Jan23a} \end{equation}

\begin{lemma} \label{L:3.1} The set $\OO _\m (k, \tau )$ has the following properties:
\begin{enumerate}
\item If $p_\m>4k$, then $\W_0\cap \OO _\m (k, \tau )=\emptyset $.
\item If $k^{-1-39\mu \delta }\leq p_\m\leq 4k$ and $|4k^2-p_\m^2|>4\tau k^{-40\mu \delta }$, then
$\OO_{\m}\subset \cup_{\pm,j\in\Z}(\Phi^{\pm}_{\m}+2\pi
j)$, where
$$ \Phi^{\pm}_{\m}:=\left\{\varphi\in\C:\ \
|\varphi-\varphi^{\pm}_{\m}|\leq \frac{\tau k^{-1-40\mu\delta
}}{p_\m\sqrt{1-p_\m^2(2k)^{-2}}}\right\},
$$
and $\Phi^{+}_{\m}\cap \Phi^{-}_{\m}=\emptyset $.
\item
If  $|4k^2-p_\m^2|\leq 4\tau k^{-40\mu \delta }$, then
$\OO_{\m}\subset \cup_{\pm,j\in\Z}(\Phi^{\pm}_{\m}+2\pi j)$, where
$$ \Phi^{\pm}_{\m}:=\left\{\varphi\in\C:\ \
|\varphi-\varphi^{\pm}_{\m}|\leq 32\tau k^{-1-20\mu\delta }\right\}.
$$
\end{enumerate}
\end{lemma}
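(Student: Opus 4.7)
The plan is to treat $f(\varphi):=p_{\m}^{2}+2kp_{\m}\cos(\varphi-\varphi_{\m})$ as an entire function of $\varphi$, since the condition defining $\OO_{\m}(k,\tau)$ is precisely $|f(\varphi)|\leq \tau k^{-40\mu\delta}$. I would first dispose of the regime $p_{\m}>4k$ directly by estimating $|\cos|$ on the strip $\W_{0}$, and then in the remaining regimes locate the two (mod $2\pi$) zeros $\varphi_{\m}^{\pm}$ of $f$ from \eqref{Jan23a} and analyse the Taylor expansion of $f$ around each of them.

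For part 1, writing $\varphi=a+ib\in\W_{0}$ with $|b|<1$, a direct computation gives
\[
|\cos(\varphi-\varphi_{\m})|^{2}=\cos^{2}(a-\varphi_{\m})\cosh^{2}b+\sin^{2}(a-\varphi_{\m})\sinh^{2}b\leq \cosh^{2}1,
\]
so $|\cos(\varphi-\varphi_{\m})|\leq \cosh 1<2$. Hence $|f(\varphi)|\geq p_{\m}(p_{\m}-4k)$, which is positive whenever $p_{\m}>4k$ and, for $k>800$, vastly exceeds $\tau k^{-40\mu\delta}$; this proves $\W_{0}\cap\OO_{\m}=\emptyset$.

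For parts 2 and 3, I would exploit $\cos(\varphi_{\m}^{\pm}-\varphi_{\m})=-p_{\m}/(2k)$ to compute
\[
f'(\varphi_{\m}^{\pm})=\mp 2kp_{\m}\sqrt{1-p_{\m}^{2}/(4k^{2})},\qquad f''(\varphi_{\m}^{\pm})=p_{\m}^{2}.
\]
In part 2 the hypothesis $|4k^{2}-p_{\m}^{2}|>4\tau k^{-40\mu\delta}$ ensures $f'(\varphi_{\m}^{\pm})$ is non-degenerate, so a first-order Taylor bound $|f(\varphi)|\geq \tfrac12|f'(\varphi_{\m}^{\pm})||\varphi-\varphi_{\m}^{\pm}|$ on a suitable neighborhood, combined with $|f|\leq \tau k^{-40\mu\delta}$, pins $\varphi$ inside a disc of radius $\tau k^{-1-40\mu\delta}/(p_{\m}\sqrt{1-p_{\m}^{2}/(4k^{2})})$, as claimed. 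The disjointness $\Phi_{\m}^{+}\cap \Phi_{\m}^{-}=\emptyset$ would then follow by comparing these radii to the separation $|\varphi_{\m}^{+}-\varphi_{\m}^{-}|$, which is of order $\sqrt{1-p_{\m}^{2}/(4k^{2})}$ and dominates the radii precisely under the same assumption. In part 3 the two roots $\varphi_{\m}^{\pm}$ nearly coalesce (and drift off $\R$ if $p_{\m}>2k$), and the linear term degenerates; I would then retain the quadratic term, whose coefficient $\tfrac12 p_{\m}^{2}\sim 2k^{2}$ is large, and absorb the subleading contributions to obtain $|\varphi-\varphi_{\m}^{\pm}|\leq 32\tau k^{-1-20\mu\delta}$, the halving of the exponent on $\delta$ reflecting the quadratic rather than linear vanishing of $f$.

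The main obstacle is the borderline regime of part 3: there the nearly coalescing (possibly complex-conjugate) pair $\varphi_{\m}^{\pm}$ forces one to balance linear, quadratic, and remainder contributions to the Taylor expansion against each other while keeping all estimates uniform in $p_{\m}$ inside the thin annulus $|4k^{2}-p_{\m}^{2}|\leq 4\tau k^{-40\mu\delta}$. Careful bookkeeping of constants is needed to absorb everything into the explicit factor $32$ and to verify that each $\Phi_{\m}^{\pm}$ remains a well-defined disc rather than a single merged region, and the transition between the regimes of parts 2 and 3 must be checked to be continuous in $p_{\m}$ so that nothing is lost at the threshold.
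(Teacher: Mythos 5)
Your core strategy -- expand $f(\varphi)=p_\m^2+2kp_\m\cos(\varphi-\varphi_\m)$ about the zeros $\varphi_\m^{\pm}$ from \eqref{Jan23a}, let the linear term govern part 2 and the quadratic term govern part 3 -- is exactly the paper's (Appendix 1). But there is a genuine gap at the containment step. The set $\OO_\m$ is a \emph{global} sublevel set $\{|f|\le\tau k^{-40\mu\delta}\}$ on the strip, while your argument ``$|f(\varphi)|\ge\frac12|f'(\varphi_\m^{\pm})|\,|\varphi-\varphi_\m^{\pm}|$ on a suitable neighborhood, combined with $|f|\le\tau k^{-40\mu\delta}$, pins $\varphi$ inside the disc'' only constrains points that are \emph{already} inside the neighborhood where the linear (resp.\ quadratic) term dominates; it says nothing about a hypothetical point of $\OO_\m$ lying far from both zeros. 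To conclude $\OO_\m\subset\cup_{\pm,j}(\Phi_\m^{\pm}+2\pi j)$ you need a lower bound for $|f|$ on the whole strip outside the discs. The paper gets this by verifying $|f|>\tau k^{-40\mu\delta}$ on the circles $|\varphi-\varphi_\m^{\pm}|=r_\m$ (your Taylor estimates do exactly this) and then invoking the maximum principle for $1/f$ -- legitimate because the only zeros of $f$ in the strip, mod $2\pi$, are $\varphi_\m^{\pm}$, which lie inside the deleted discs -- to propagate the bound to the entire exterior. Without this step (or an equivalent global argument, e.g.\ Rouch\'e or inverting $\cos$ on the strip), parts 2 and 3 do not follow from the local expansion alone.

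A smaller but real slip in part 1: after correctly noting $|\cos(\varphi-\varphi_\m)|\le\cosh 1$ on $\W_0$, you weaken this to $2$ and conclude $|f|\ge p_\m(p_\m-4k)$, which can be arbitrarily small when $p_\m$ is only slightly larger than $4k$, so it does not ``vastly exceed'' $\tau k^{-40\mu\delta}$. Keep the bound you computed: $|f|\ge p_\m\bigl(p_\m-2k\cosh 1\bigr)\ge 4k\cdot(4-2\cosh 1)k>3k^2$, which settles part 1 (the paper instead observes via \eqref{Jan23a} that $p_\m>4k$ forces $|\Im\varphi_\m^{\pm}|>\operatorname{arccosh}2>1$, so the zeros leave the strip). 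Finally, your one-line justification of $\Phi_\m^{+}\cap\Phi_\m^{-}=\emptyset$ needs a short case split: when $p_\m$ is near $2k$ the separation is indeed $\asymp\sqrt{|1-p_\m^2/(4k^2)|}$ and the hypothesis $|4k^2-p_\m^2|>4\tau k^{-40\mu\delta}$ is what keeps the radii smaller, but when $p_\m\ll k$ the separation is of order $1$ and disjointness comes instead from the lower bound $p_\m\ge k^{-1-39\mu\delta}$ making the radii $o(1)$.
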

In the proof we use the Taylor series with respect to $\varphi $ for
$|\vec k(\varphi
)+\p_\m|_\R^2-k^{2}$ near its zeros, see Appendix 1.\\

Let $\varphi _0\in[0,2\pi)\setminus\OO^{(1)}(k,8)$, where $\OO^{(1)}(k,8)$ is given by \eqref{52a}. We define $\MM(\varphi
_0)\subset \Z^2$ as follows:  \begin{equation} \label{M} \MM(\varphi _0):=\{\m:\ \ \
0<\,|\|\p_{\m}|\|\leq k^{r_1}\ \hbox{and}\ \varphi
_0\in\OO_{\m}(k,1)\}. \end{equation} We will also need a larger set $$
\MM'(\varphi _0):=\{\m:\ \ \ 0<\,|\|\p_{\m}|\|\leq 2k^{r_1}\
\hbox{and}\ \varphi _0\in\OO_{\m}(k,1)\}. $$ In fact, $\MM(\varphi
_0)$, $\MM'(\varphi _0)$ do not include $\m:|\|\p_{\m}|\|<4k^{\delta
}$, since $\varphi _0\in[0,2\pi)\setminus\OO^{(1)}(k,8)$.

We split $\MM (\varphi _0)$ into two
 components  $\MM:=\MM_1\cup\MM_2$.
By definition, $\m\in \MM_1$ if $$\min _{\m'\in \MM' (\varphi _0),
\m'\neq \m}\||\p _{\m-\m'}\||>k^{\delta }.$$   Let $\MM_2=\MM
\setminus \MM_1$. Next, let $\tilde{\MM}_{\m}$ be $(k^{\delta
}/3)$-neighborhood of $\m$ in $\||\cdot \||$ norm:
$$\tilde{\MM}_{\m}:=\{\n:\ \ \ |\|\p_{\n-\m}|\|<
k^{\delta}/3\ \hbox{for a given}\ \m\in\MM(\varphi _0)\},$$
 Obviously,
$$\tilde{\MM}_{\m}(\varphi _0)\cap \tilde{\MM}_{\m'}(\varphi
_0)=\emptyset,\ \ \ \mbox{for any }\m \in \MM_1\mbox{ and }\m'\in
\MM',\ \ \m'\neq \m.$$
Let  $\tilde{\MM}_1(\varphi _0)$ be $(k^{\delta }/3)$-neighborhood of $\MM_1$ in $\||\cdot \||$ norm:$$\tilde{\MM}_1(\varphi _0):=\cup _{\m \in \MM_1(\varphi
_0)}\tilde{\MM}_{\m}(\varphi_0) =\{\n:\ \ \ |\|\p_{\n-\m}|\|<
k^{\delta}/3\ \hbox{for some}\ \m\in\MM_1(\varphi _0)\}.$$

\bigskip

Let us introduce an equivalence relation in $\M'$. We say $\m_0 \sim
\m_0'$ if there is a sequence $\m_j\in \MM'$, $j=1,...J,$ such that
$\min_{k<j}\||\p _{\m_j-\m_{k}}\||\leq k^{\delta }$ for all
$j=1,..,J$ and $\m_J=\m_0'$.  We denote the equivalence class
containing $\m \in \MM_2$ by  $\MM_2^{(\m)}$. By definition of
$\MM_2$ such equivalence class contains at least one more element.
In the next lemma we prove that an equivalence class contains no more than 4 elements. Namely in this lemma the restriction $l>1$ plays a crucial role.

\begin{lemma}\label{2.12} Let $\m_0\in \MM_2$ and $\m_j \in \M'$, $j=1,...,J$, are such that all $\m_{j}$, $j=0,...J$, are different  and
$\min_{k<j}\||\p _{\m_j-\m_{k}}\||\leq k^{\delta }$ for all
$j=1,..,J.$ Then, $1\leq J\leq 3$. \label{L:4}\end{lemma}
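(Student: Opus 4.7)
The plan is to argue by contradiction: suppose $J\geq 4$, so we have five distinct members $\m_0,\dots,\m_4$ of the chain. The argument has three stages.

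\emph{Stage 1: From the chain condition to a bound on every pair.} The hypothesis $\min_{k<j}\||\p_{\m_j-\m_k}\||\leq k^\delta$ encodes a spanning tree on $\{\m_0,\dots,\m_J\}$ in which adjacent nodes are at $\||\cdot\||$-distance at most $k^\delta$. Since $\||\cdot\||$ satisfies the triangle inequality and a tree on $J+1=5$ nodes has diameter $\leq J=4$, I get $\||\p_{\m_i-\m_j}\||\leq Jk^\delta\leq 4k^\delta$ for every pair, i.e.\ $\p_{\m_i-\m_j}\in\tilde\Omega(\delta)\setminus\{0\}$ for all $i\neq j$.

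\emph{Stage 2: Clash between the two non-resonance scales.} Write $f(\q):=|\vec k+\q|^2-k^2=2\langle\vec k,\q\rangle+|\q|^2$. Membership $\m_i\in\MM'$ feeds the tight (second-step) condition \eqref{resonance1}, giving $|f(\p_{\m_i})|\leq k^{-40\mu\delta}$. On the other hand, the defining exclusion $\varphi_0\notin\OO^{(1)}(k,8)$ applied to each $\m_i-\m_j\in\tilde\Omega(\delta)\setminus\{0\}$ gives, via the first-step formula \eqref{resonance} at $\tau=8$, the much weaker bound $|f(\p_{\m_i-\m_j})|>8k^{1-40\mu\delta}$ — a factor of $k$ larger. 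The algebraic identity
$$
f(\p_{\m_i-\m_j}) = f(\p_{\m_i}) - f(\p_{\m_j}) - 2\langle \p_{\m_j},\p_{\m_i}-\p_{\m_j}\rangle
$$
then forces $|\langle \p_{\m_j},\p_{\m_i}-\p_{\m_j}\rangle|>3k^{1-40\mu\delta}$ for every ordered pair $i\neq j$ (and symmetrically with the roles swapped).

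\emph{Stage 3: Arc geometry.} Each point $P_i:=\vec k+\p_{\m_i}$ lies within $O(k^{-1-40\mu\delta})$ of the circle $|P|=k$, and by Lemma~\ref{psnorms} every chord satisfies $|P_i-P_j|\leq 8\pi k^\delta$; hence all $P_i$ cluster on a single arc of angular width $O(k^{\delta-1})$. In coordinates with $\vec\nu=(1,0)$ I would write $P_i=k(\cos\theta_i,\sin\theta_i)$ and expand trigonometrically, turning the non-resonance for $\p_{\m_i-\m_j}$ into a lower bound of the form
$$
\left|\sin\tfrac{\theta_i-\theta_j}{2}\cos\tfrac{\theta_i}{2}\sin\tfrac{\theta_j}{2}\right|>k^{-1-40\mu\delta},
$$
which controls both the pairwise gaps $|\theta_i-\theta_j|$ from below and the position of the arc.

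The main obstacle is the final combinatorial step: showing that the $\binom{5}{2}=10$ such lower bounds cannot be simultaneously satisfied by five distinct $\theta_i$ inside an arc of width $O(k^{\delta-1})$. This is precisely where the assumption $l>1$ enters — it restricts the admissible exponent $r_1$ in the cap $\||\p_{\m_i}\||\leq 2k^{r_1}$, and hence restricts $|\p_{\m_i}|$ enough that the arc-width/separation window closes down to at most four angles. Once this is executed, the contradiction gives $J\leq 3$. The lower bound $J\geq 1$ is immediate since $\m_0\in\MM_2\subset\MM$ is by definition within $\||\cdot\||$-distance $k^\delta$ of some distinct element of $\MM'$.
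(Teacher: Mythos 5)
Your Stages 1--2 are fine (the pairwise bound $\||\p_{\m_i-\m_j}\||\leq 4k^{\delta}$, the identity for $f(\p_{\m_i-\m_j})$, and the resulting lower bound are all correct), but the proof stops exactly where the lemma actually lives. The claim that five points cannot coexist is left as "the main obstacle," and the mechanism you propose for closing it cannot work. The constraints you have extracted are purely metric: five points within $k^{-1-40\mu\delta}$ of the circle $|P|=k$, pairwise chord lengths between roughly $k^{-\mu\delta}$ (Lemma~\ref{psnorms}) and $8\pi k^{\delta}$, and each chord having a radial component at $\vec k$ of size at least $ck^{-40\mu\delta}$. These conditions are satisfiable by arbitrarily many points: place angles $\theta_j=\theta+j\epsilon$ with $|\sin\theta|\sim 1$ and $\epsilon\sim k^{-1+\delta}$, and every pairwise bound above holds. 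So no arc-counting or "window-closing" argument can yield the number $4$; the obstruction is arithmetic, not geometric, and your sketch never uses the structure $\p_{\q}=2\pi(\s+\alpha\s')$ beyond the single-vector bound \eqref{below}. Moreover, the lever you invoke is spurious: $l>1$ does not restrict $r_1$ (in the paper $r_1$ is a free parameter, later only required to satisfy $2<r_1<k^{\delta/8}$), the cap $\||\p_{\m_i}\||\leq 2k^{r_1}$ plays no role in this lemma, and in fact the paper's own proof of the lemma never uses $l$ at all.

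The paper's argument (Appendix 2) runs differently and shows what is really needed. Taking differences of the two \emph{tight} resonance conditions (both of size $k^{-40\mu\delta}$, not the step-one exclusion you used) gives \eqref{I5}: every $\p_{\q_j}$, $\q_j=\m_j-\m_0$, is nearly orthogonal to the long vector $\vec k+\p_{\m_0}$, hence the $\p_{\q_j}$ are pairwise nearly parallel, $[\p_{\q_i},\p_{\q_j}]=O(k^{-1+3\delta})$. Writing $\p_{\q_j}/2\pi=\s_j+\alpha\s_j'$, this produces relations $n+\alpha p+\alpha^{2}m=O(k^{-1+3\delta})$ with integers of size $O(k^{2\delta})$, and the finite irrationality measure of $\alpha$ forces rigidity: any two such triples are proportional, whence $\p_{\q_3}$ (and a hypothetical $\p_{\q_4}$) must be rational combinations of $\p_{\q_1},\p_{\q_2}$ with numerators and denominators $O(k^{3\delta})$. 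Feeding these back into the quantitative resonance conditions yields the quadratic relations \eqref{Est}, \eqref{Est1}, whose compatibility is then excluded by the Diophantine lower bound \eqref{below} applied to short combinations like $\p_{S\q_1\pm R\q_2}$. That interplay between near-collinearity, the $\Z^2+\alpha\Z^2$ structure, and $\mu<\infty$ is the entire content of the bound $J\leq 3$, and it is absent from your proposal; as written, the proof has a genuine gap that your suggested completion would not fill.
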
 The proof
is in Appendix 2.

Obviously, for any pair
$\m,\m'\in \MM_2$ either $ \MM_2^{(\m)}= \MM_2^{(\m')}$ or $
\MM_2^{(\m)}\cap \MM_2^{(\m')}=\emptyset .$ We can enumerate
different equivalence classes $ \MM_2^{(\m)}$ by an index $j$ and
denote them by $ \MM_2^{j}$, $j=1,...,J_0$. By construction, $\MM
_2\subset \cup _{j=1}^{J_0} \MM_2^{j}\subset \MM '$.

Let $\tilde \MM_2^{j}$ be $(k^{\delta }/3)$-neighborhood of $\MM_2^{j}$ in $\||\cdot \||$ norm:
$$\tilde{\MM}_2^{j}(\varphi _0):=\{\n:\ \ \ |\|\p_{\n-\m}|\|<
k^{\delta}/3\ \hbox{for an }\ \m\in\MM_2^{j}(\varphi _0)\}.$$
Obviously,
$$\tilde \MM _2^j =\cup _{\m\in \MM
_2^j}\tilde \MM _{\m},$$  $$\tilde \MM _2^j\cap \tilde \MM
_2^{j'}=\emptyset,\mbox{ when }j\neq j',$$ $$\tilde \MM _2^j\cap
\tilde \MM _\m =\emptyset,\mbox{ when }\m\in \MM_1.$$ Let $$\tilde
\MM _2=\cup _{j=1}^{J_0}\tilde \MM _2^j,$$
$$\tilde \MM =\tilde \MM _1\cup \tilde \MM _2.$$
Moreover, $\||\cdot \||$ distance between these sets is greater than $\frac{1}{3}k^{\delta }$.
It is easy to see that $\tilde \MM \subset \MM'$. Hence, the number
of elements in $\tilde \MM $ does not exceed $ck^{4r_1}$.

 We consider   the diagonal
projection $P$ corresponding to $\tilde{\MM}(\varphi _0)$:
$$P(\varphi _0)_{\m\m}=\left\{\begin{array}{ll}1,&\mbox{when } \m\in
\tilde{\MM}(\varphi _0),\\0,& \mbox{otherwise.}\end{array}\right.$$
We consider $PH(\k ^{(1)}(\varphi ))P:\ PL_2(\Z^2) \to PL_2(\Z^2)$
for $\varphi \in \C$, $|\varphi -\varphi _0|<k^{-2-\delta (40\mu
+1)}$. Since $\varphi _0\in[0,2\pi)\setminus\OO^{(1)}(k,8)$, perturbation series (\ref{eigenvalue}),
(\ref{sprojector}) converge in the disc.

By construction, the set $\tilde \MM (\varphi_0)$ is split into
several nonintersecting components: \begin{equation} \label{def-tildeM}\tilde \MM (\varphi
_0)=\left(\cup _{\m \in \MM_1}\tilde \MM _\m \right) \cup \left(\cup
_j\tilde \MM_2^j\right).\end{equation} Obviously, \begin{equation}\label{defP}P=\sum _{\m \in
\MM_1}P_{\m}+\sum _{j}P_{2}^j,\end{equation} where $P_{\m}$, $P_{2}^j$ are
diagonal projectors corresponding to the sets $\tilde \MM _{\m}$ and
$\MM _2^j$, the projectors being orthogonal. Considering
\eqref{V_q=0} and taking into account that $Q<k^{\delta }/3$ for
sufficiently large $k$, we readily  show:
\begin{equation}P_{\m}VP_{\m'}=P_{\m}VP_{2}^j=P_{2}^jVP_{\m}=P_{2}^jVP_{2}^{j'}=0,
\mbox{when } \m,\,\m'\in\MM_1,\  \m \neq \m',\ j\neq j'. \label{PVP}
\end{equation}
Therefore,
\begin{equation} \label{PHP} PHP=\sum _{\m \in
\MM_1}P_{\m}HP_{\m}+\sum _{j}P_{2}^jHP_{2}^j.\end{equation} Since
\eqref{G1-1} holds for any $\m\in \tilde \Omega (\delta )\setminus
\{0\}$, we have $\MM (\varphi _0)\cap \tilde \Omega (\delta
)=\emptyset $. This means that the $\||\cdot \||$-distance between
$\tilde \MM (\varphi _0)$ and $ \Omega (\delta )$ is no less than
$3k^{\delta }$. Hence,
\begin{equation}P_{\m}VP(\delta)=P(\delta )VP_{\m}=P(\delta )VP_{2}^{j}=P_{2}^jVP(\delta)=0.
 \label{PVP*}
\end{equation}
\subsubsection{Estimates for the Resolvent of the Model Operator}
In the next lemma we use the restriction $l>1$ for the first time. In fact, we need this restriction only in the second step of the procedure.
\begin{lemma}\label{L:estnonres1} Let $\varphi _0\in \omega ^{(1)}(k,8)$.
\begin{enumerate}
\item If $\m\in \M_1(\varphi _0): p_\m>4k^{\delta}, |2k-p_\m| \geq 1$, then,
the operator $$\left(P_\m\left(H\big( \k ^{(1)}(\varphi
)\big)-k^{2l}I\right)P_\m\right)^{-1}$$ has no more than one pole in
the disk $|\varphi -\varphi _0|<2k^{-2-\delta (40\mu +1) }$. The
following estimate holds: \begin{equation} \label{Mon3-1}
\left\|\left(P_\m\left(H\big(\k ^{(1)}(\varphi
)\big)-k^{2l}I\right)P_\m\right)^{-1}\right\|<ck^{-2l+1}\varepsilon
_0 ^{-1}, \  \ \varepsilon _0=\min\{\varepsilon , k^{-2-(40\mu +1)
\delta}\}, \end{equation} when $\varphi $ is in the smaller disk
$|\varphi -\varphi _0|<k^{-2-\delta (40\mu +1)}$, $\varepsilon $
being the distance from $\varphi $ to the nearest pole of the
operator.
\item If $\m\in \M: |2k-p_\m| <1$, then, in fact $\m\in \M_1$ and the operator $$\left(P_\m\left(H\big( \k ^{(1)}(\varphi
)\big)-k^{2l}I\right)P_\m\right)^{-1}$$ has no more than two poles
in the disk $|\varphi -\varphi _0|<2k^{-2-\delta (40\mu +1)}$. The
following estimate holds: \begin{equation} \label{Mon3-3}
\left\|\left(P_\m\left(H\big(\k ^{(1)}(\varphi
)\big)-k^{2l}I\right)P_\m\right)^{-1}\right\|<ck^{-2l}\varepsilon
_0^{-2},\ \ \varepsilon _0=\min\{\varepsilon, k^{-2-\delta(40\mu
+1)}\},
\end{equation}  when $\varphi $ is in the smaller disk $|\varphi -\varphi _0|<k^{-2-\delta (40\mu
+1)}$, $\varepsilon $ being the distance from $\varphi $ to the
nearest pole of the operator.
\item If $\m\in \M: p_\m<4k^{\delta}$, then, in fact $\m\in \M_1$ and the operator
$$\left(P_\m\left(H\big( \k ^{(1)}(\varphi
)\big)-k^{2l}I\right)P_\m\right)^{-1}$$ has no more than one pole in
the disk $|\varphi -\varphi _0|<2k^{-2-\delta (40\mu +1)}$. The
following estimate holds: \begin{equation} \label{Mon3-2}
\left\|\left(P_\m\left(H\big(\k ^{(1)}(\varphi
)\big)-k^{2l}I\right)P_\m\right)^{-1}\right\|<
8k^{-2l+1}p_\m^{-1}\varepsilon _0 ^{-1},\  \  \varepsilon _0=\min
\{\varepsilon , k^{-2l+1 +\delta }\},
\end{equation} when $\varphi $ is in the smaller disk $|\varphi -\varphi _0|<k^{-2-\delta (40\mu
+1)}$, $\varepsilon $ being the distance from $\varphi $ to the
nearest pole of the operator.
\item The operator
$\left(P_2^{j}\left(H\big( \k ^{(1)}(\varphi
)\big)-k^{2l}I\right)P_2^{j}\right)^{-1}$ has no more than four
poles in the disk $|\varphi -\varphi _0|<2k^{-2-\delta (40\mu +1)}$.
The following estimate holds: \begin{equation} \label{Mon3-4}
\left\|\left(P_2^j\left(H\big(\k ^{(1)}(\varphi
)\big)-k^{2l}I\right)P_2^j\right)^{-1}\right\|<ck^{-2l-2-120\mu
\delta }\varepsilon _0^{-4}, \ \ \varepsilon _0=\min \{\varepsilon ,
k^{-1-40\mu \delta }\},\end{equation}  when $\varphi $ is in the smaller disk  $|\varphi -\varphi
_0|<k^{-2-\delta (40\mu +1)}$, $\varepsilon $ being the distance
from $\varphi $ to the nearest pole of the operator.
\end{enumerate}
\end{lemma}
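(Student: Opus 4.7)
My plan is to reduce each diagonal block of $PHP$ to a Schur-reduced matrix on its resonant core: in parts 1--3 the core is the single index $\m$, since the definition of $\M_1$ together with $Q<k^{\delta}/3$ ensures that every other $\n\in\tilde\M_\m$ lies outside $\MM'(\varphi_0)$; in part 4 it is the whole equivalence class $\MM_2^{j}$, which has at most four elements by Lemma~\ref{L:4}. Let $Q$ be the orthogonal projector onto the core and $P'=P-Q$ the non-resonant complement. The first step is to invert $P'(H(\k^{(1)}(\varphi))-k^{2l})P'$. For any non-core $\n$, the isolation property above combined with $\varphi_0\in\omega^{(1)}(k,8)$ forces $\varphi_0\notin\OO_\n(k,1)$, so that $||\k+\p_\n|^{2}-k^{2}|\geq k^{-40\mu\delta}$ at $\varphi_0$, and a derivative estimate extends this (up to a factor $\tfrac12$) over the whole disk $|\varphi-\varphi_0|<2k^{-2-\delta(40\mu+1)}$. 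The telescoping factorization $|\k+\p_\n|^{2l}-k^{2l}=(|\k+\p_\n|^{2}-k^{2})\sum_{j=0}^{l-1}|\k+\p_\n|^{2j}k^{2(l-1-j)}$ promotes this to a diagonal lower bound of order $k^{2l-2-40\mu\delta}$, and a Neumann series in the bounded off-diagonal part $V$ produces an analytic inverse of norm $O(k^{-2l+2+40\mu\delta})$.

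With this in hand, the Schur complement $S(\varphi)=Q(H-k^{2l})Q-QHP'(P'(H-k^{2l})P')^{-1}P'HQ$ is holomorphic in the disk. In parts 1 and 3 it is a scalar whose leading term is $|\k+\p_\m|^{2l}-k^{2l}$; its derivative $-2lk^{2l-1}\varkappa p_\m\sin(\varphi-\varphi_\m)$ at the unique real root $\varphi_+$ is $\gtrsim k^{2l-3/2}p_\m$ in part~1 (using $|2k-p_\m|\geq 1$, so $|\sin|\geq (2k)^{-1/2}$) and $\gtrsim k^{2l-1}p_\m$ in part~3 (where $|\sin|\asymp 1$ because $p_\m$ is small), yielding one simple zero and the claimed bounds. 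In part~2 one first observes that $\m\in\M_1$: were $\m'\in\MM'$ within $\||\cdot\||$-distance $k^{\delta}$ of $\m$ while $p_\m\approx 2k$, then $\m-\m'\in\tilde\Omega(\delta)\setminus\{0\}$ would satisfy the common Step-I resonance condition, contradicting $\varphi_0\in\omega^{(1)}(k,8)$. Taylor expansion near the coalescing roots $\varphi_\pm\approx\varphi_\m+\pi$ then gives $|\k+\p_\m|^{2}-k^{2}\approx\varkappa p_\m(\theta^{2}-a)$ with $\theta=\varphi-\varphi_\m-\pi$ and $a=(2\varkappa-p_\m)/\varkappa$, so $|S(\varphi)|\gtrsim k^{2l}\varepsilon^{2}$ and $S$ has at most two zeros.

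The main obstacle is part~4, where $S(\varphi)$ is a $J\times J$ matrix with $J\leq 4$, and the direct $V$-couplings inside $\MM_2^{j}$ are of order $\|V\|$ and hence comparable to the diagonal entries at $\varphi_0$. I would analyze $\det S(\varphi)$ as an analytic scalar in the disk and compare it with the diagonal product $\prod_{\m\in\MM_2^{j}}(|\k+\p_\m|^{2l}-k^{2l})$. On the larger contour $|\varphi-\varphi_0|=ck^{-1-40\mu\delta}$, the non-resonance estimate yields diagonal factors of size $\asymp k^{2l-1-40\mu\delta}$, dominating the bounded $V$-couplings by a full power of $k$, so Rouch\'e's theorem forces $\det S$ to inherit the same zero count as the diagonal product: at most four zeros in the smaller disk. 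Bounding the adjugate by $(J-1)!$ times a product of $J-1$ diagonal-type entries (each $\lesssim k^{2l-40\mu\delta}$ on the contour) and dividing by the lower bound on $|\det S|$ near a cluster of up to four zeros produces the claimed $ck^{-2l-2-120\mu\delta}\varepsilon_0^{-4}$; the three factors of $k^{-40\mu\delta}$ in the exponent arise from the three factors in the adjugate. The restriction $l>1$ enters decisively here through Lemma~\ref{L:4}, whose proof requires an integer $l\geq 2$: for $l=1$ the equivalence classes could be arbitrarily large and the four-pole bound would fail.
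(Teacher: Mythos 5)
Your overall skeleton---block structure, analytic continuation in $\varphi$, Rouch\'e-type zero counting, and a ``size of the resonant zone over distance to the nearest pole'' amplification---is the same as the paper's, and your parts 1 and 2 could likely be pushed through (the paper counts poles by comparing determinants of the perturbed and free blocks rather than via a Schur complement, but with $p_\m>4k^{\delta}$, resp.\ $p_\m\approx 2k$, your Schur correction and its derivative are indeed dominated by the leading term). The genuine gap is in part 3. You treat the scalar Schur complement as the free entry $|\k^{(1)}(\varphi)+\p_\m|_{\R}^{2l}-k^{2l}$ plus a correction of size $O(k^{-2l+2+40\mu\delta})$, and read off the location, simplicity and derivative $\gtrsim k^{2l-1}p_\m$ of its zero from the free entry alone. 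But here $\m\in\M$ with $p_\m<4k^{\delta}$ has $\||\p_\m\||>4k^{\delta}$, so by \eqref{below} $p_\m$ is only bounded below by a power of $k^{-r_1}$, and $r_1$ can be as large as $k^{\delta/8}$: $p_\m$ may be smaller than any fixed negative power of $k$. The correction and (after a Cauchy estimate on the disc of radius $\sim k^{-2-(40\mu+1)\delta}$) its $\varphi$-derivative are only fixed negative powers of $k$, so for small $p_\m$ they are not dominated by $2lp_\m k^{2l-1}$; neither the simplicity of the zero, nor its localization within $k^{-2l+1+2\delta}$ of $\varphi_\m^{\pm}$, nor the lower bound $|S|\gtrsim k^{2l-1}p_\m\varepsilon$ (which is exactly what \eqref{Mon3-2} with $\varepsilon_0=\min\{\varepsilon,k^{-2l+1+\delta}\}$ encodes) follows from ``leading term plus uniformly small correction''. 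The paper obtains the $p_\m$-proportionality only through a cancellation: it works with the perturbed block eigenvalue $\lambda^{(1)}\big(\k^{(1)}(\varphi)+\p_\m\big)$ and compares it with $\lambda^{(1)}\big(\k^{(1)}(\varphi)\big)\equiv k^{2l}$, so the relevant quantity is a difference of one function at arguments a distance $p_\m$ apart; the estimate $\nabla f_1(\y)-\nabla f_1(\y-\p_\m)=O\big(p_\m\sup\|D^2f_1\|\big)$ in Lemma~\ref{L:Appendix 2}, combined with Lemmas~\ref{L: Appendix 1}, \ref{L:3.7.1} and \ref{L:July5}, is what replaces your derivative claim. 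Your proposal has no substitute for this step.

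Part 4 is also shaky as written. You run Rouch\'e on the circle $|\varphi-\varphi_0|=ck^{-1-40\mu\delta}$ and assert that the diagonal factors are $\asymp k^{2l-1-40\mu\delta}$ there; but for $\m\in\M_2^j$ the zone $\OO_\m(k,1)$ has angular width of order $k^{-1-40\mu\delta}\bigl(p_\m\sqrt{|1-p_\m^2(2k)^{-2}|}\bigr)^{-1}$ by Lemma~\ref{L:3.1}, which exceeds $k^{-1-40\mu\delta}$ whenever $p_\m\sqrt{|1-p_\m^2(2k)^{-2}|}$ is small (e.g.\ $p_\m\sim k^{\delta}$), so a circle of that radius need not leave the resonant zone and your lower bound on the contour is unjustified. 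The paper instead estimates on the boundary of the union $\tilde{\OO}^j=\cup_{\m\in\M_2^j}\OO_\m$ itself, where the defining non-resonance inequality holds by construction, and only then multiplies by $(\mbox{component size}/\varepsilon_0)^4$; note also that the exponent $-2l-2-120\mu\delta$ in \eqref{Mon3-4} arises there as $(-2l+2+40\mu\delta)-4(1+40\mu\delta)$, which your adjugate bookkeeping does not obviously reproduce.
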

\begin{corollary}\label{estnonres} Let $\varphi _0\in {\omega ^{(1)}}(k,8)$. Then, the operator
$\left(P\left(H\big( \k ^{(1)}(\varphi
)\big)-k^{2l}I\right)P\right)^{-1}$ has no more than $64k^{4r_1}$
poles in the disk $|\varphi -\varphi _0|<2k^{-2-\delta (40\mu +1)}$.
 The
following estimate holds: \begin{equation} \label{Mon3}
\left\|\left(P\left(H\big(\k ^{(1)}(\varphi
)\big)-k^{2l}I\right)P\right)^{-1}\right\|<ck^{\mu
r_1}\varepsilon _0^{-1}+ck^{-2l}\varepsilon  _0^{-2}+ck^{-2l-2}\varepsilon _0
^{-4},  \  \varepsilon _0=\min
\{\varepsilon , k^{-2l+1 +\delta }\},
\end{equation}  when $\varphi $ is in the smaller disk  $|\varphi -\varphi _0|<k^{-2-\delta
(40\mu +1)}$, $\varepsilon $ being the distance from $\varphi $ to
the nearest pole of the operator.\end{corollary}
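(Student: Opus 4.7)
The plan is to reduce the corollary to Lemma~\ref{L:estnonres1} via the block decomposition~\eqref{PHP}. By~\eqref{defP} and~\eqref{PHP}, $P\bigl(H(\k^{(1)}(\varphi))-k^{2l}I\bigr)P$ is the orthogonal direct sum of $P_\m(H-k^{2l}I)P_\m$ for $\m\in\MM_1$ and $P_2^j(H-k^{2l}I)P_2^j$ for $j=1,\dots,J_0$. Its inverse, wherever it exists, is the corresponding direct sum of block inverses, so the pole set inside the disk $|\varphi-\varphi_0|<2k^{-2-(40\mu+1)\delta}$ is the union of the block pole sets, and the operator norm is the maximum of the block norms.

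To count poles I observe that the blocks are indexed by a subset of $\MM'(\varphi_0)$, so their number is at most $|\MM_1|+J_0\leq|\MM'|\leq c\,k^{4r_1}$. By parts~(1)--(3) of Lemma~\ref{L:estnonres1} each $P_\m$-block contributes at most two poles inside the disk, and by part~(4) each $P_2^j$-block contributes at most four. Multiplying yields the stated bound of $64k^{4r_1}$ poles.

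For the norm I apply Lemma~\ref{L:estnonres1} block-by-block with a single common choice $\varepsilon_0=\min\{\varepsilon,k^{-2l+1+\delta}\}$ across all four cases. Using $l\geq 2$ and $\delta<(100\mu)^{-1}$, a short calculation gives $k^{-2l+1+\delta}\leq k^{-2-(40\mu+1)\delta}$ and $k^{-2l+1+\delta}\leq k^{-1-40\mu\delta}$, so this common $\varepsilon_0$ is no larger than the $\varepsilon_0$'s appearing in cases~(1),~(2),~(4); substituting it only enlarges each block bound. The four contributions then read: part~(1), $ck^{-2l+1}\varepsilon_0^{-1}$; part~(3), $8k^{-2l+1}p_\m^{-1}\varepsilon_0^{-1}\leq ck^{-2l+1+r_1\mu}\varepsilon_0^{-1}$ via the lower bound $p_\m\geq c|\|\p_\m\||^{-(\mu-1+\epsilon)}$ from Lemma~\ref{psnorms}; part~(2), $ck^{-2l}\varepsilon_0^{-2}$; part~(4), $ck^{-2l-2-120\mu\delta}\varepsilon_0^{-4}$. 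The first two fit inside $ck^{\mu r_1}\varepsilon_0^{-1}$ (using $l\geq 1/2$), and the last two inside $ck^{-2l}\varepsilon_0^{-2}$ and $ck^{-2l-2}\varepsilon_0^{-4}$ respectively. Taking the maximum over the finitely many blocks, which is bounded by the sum of these four contributions, produces~\eqref{Mon3}.

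I do not anticipate a serious obstacle: Lemma~\ref{L:estnonres1} carries the analytic content block-by-block, and the corollary is essentially a direct-sum aggregation. The only step needing genuine care is the $\varepsilon_0$-domination comparison, where the restriction $l\geq 2$ enters in the same spirit as in Lemma~\ref{2.12}; the lower bound on $p_\m$ when $p_\m<4k^\delta$ is then immediate from Lemma~\ref{psnorms} together with $|\|\p_\m\||\leq k^{r_1}$.
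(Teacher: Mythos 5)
Your proposal is correct and follows essentially the same route as the paper's own proof: decompose $P\left(H(\k^{(1)}(\varphi))-k^{2l}\right)P$ into its blocks via \eqref{PHP}, allow at most four poles per block, and aggregate the estimates \eqref{Mon3-1}--\eqref{Mon3-4} with a common $\varepsilon_0$ together with the lower bound $p_\m\gtrsim k^{-\mu r_1}$ coming from Lemma \ref{psnorms}. The only cosmetic difference is in the pole count: the paper bounds the number of blocks by the number of elements of $\Omega(r_1)$, namely $16k^{4r_1}$, which is what actually produces the stated constant $64$, whereas your bound through $|\MM'(\varphi_0)|$ (lattice points with $\||\p_\m\||\le 2k^{r_1}$) gives a larger constant, so the final sentence of your pole count should be adjusted accordingly.
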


Indeed, the number of blocks in $PHP$ (see \eqref{PHP})  does not
exceed $16k^{4r_1}$ (the number of elements in $\Omega (r_1)$). The
resolvent of each block has no more than  four poles. Therefore, the
resolvent of $PHP$ has no more than $64k^{4r_1}$ poles. Using
\eqref{Mon3-1}-\eqref{Mon3-4} and, using that $p_\m>k^{-\mu r_1}$ in
\eqref{Mon3-2}, we obtain the corollary.
\begin{corollary}\label{estnonres1}
If $\varepsilon =k^{-r_1'}$, $r_1'\geq\mu r_1$, then
\begin{equation} \label{Mon3a} \left\|\left(P\left(H\big(\k
^{(1)}(\varphi
)\big)-k^{2l}I\right)P\right)^{-1}\right\|<ck^{4r_1'},\end{equation}
\begin{equation} \label{Mon3a'}
\left\|\left(P\left(H\big(\k^{(1)}(\varphi
)\big)-k^{2l}I\right)P\right)^{-1}\right\|_1<ck^{4r_1'+4r_1}.\end{equation}  \end{corollary}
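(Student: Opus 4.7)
The two inequalities are direct consequences of Corollary~\ref{estnonres} combined with the elementary trace-norm inequality $\|A\|_1\leq (\operatorname{rank} A)\cdot\|A\|$ for finite-rank operators.

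For the operator-norm bound, the plan is to substitute $\varepsilon=k^{-r_1'}$ into \eqref{Mon3}. With this choice $\varepsilon_0=\min\{k^{-r_1'},k^{-2l+1+\delta}\}$, so $\varepsilon_0^{-1}=\max\{k^{r_1'},k^{2l-1-\delta}\}$. Using the hypothesis $\mu r_1\leq r_1'$, the first term $ck^{\mu r_1}\varepsilon_0^{-1}$ of \eqref{Mon3} is bounded by $ck^{\mu r_1+r_1'}\leq ck^{2r_1'}$; the second term $ck^{-2l}\varepsilon_0^{-2}$ by $ck^{-2l+2r_1'}\leq ck^{2r_1'}$; and the third term $ck^{-2l-2}\varepsilon_0^{-4}$ by $ck^{-2l-2+4r_1'}\leq ck^{4r_1'}$. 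In the alternative branch where $\varepsilon_0=k^{-2l+1+\delta}$ attains the minimum, the explicit prefactors $k^{-2l}$ and $k^{-2l-2}$ absorb the excess powers coming from $\varepsilon_0^{-2}$ and $\varepsilon_0^{-4}$, so each term still fits under the ceiling $ck^{4r_1'}$. Summing and adjusting the constant $c$ yields the first inequality.

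For the trace-norm bound, I exploit the rank of $P$. By construction, $P$ is the coordinate projection onto the finite index set $\tilde{\MM}(\varphi_0)\subset\MM'$, whose cardinality has already been shown in the excerpt not to exceed $ck^{4r_1}$. Hence $\operatorname{rank} P\leq ck^{4r_1}$, and the resolvent $(P(H(\k^{(1)}(\varphi))-k^{2l}I)P)^{-1}$ is supported on $\operatorname{Ran} P$ and has at most the same rank. Applying $\|A\|_1\leq(\operatorname{rank} A)\cdot\|A\|$ together with the operator-norm bound just proved gives $\|(\cdot)^{-1}\|_1\leq ck^{4r_1}\cdot ck^{4r_1'}=ck^{4r_1'+4r_1}$, as claimed.

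There is no genuine obstacle: the content is pure bookkeeping off \eqref{Mon3} and the rank estimate $\#\tilde{\MM}\leq ck^{4r_1}$. The only point requiring attention is checking that each of the three terms in \eqref{Mon3} remains below the ceiling $ck^{4r_1'}$ regardless of which branch of the $\min$ defining $\varepsilon_0$ is attained, which is handled by the case analysis above.
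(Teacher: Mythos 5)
Your main computation is exactly the paper's (one-line) argument: substitute $\varepsilon=k^{-r_1'}$ into \eqref{Mon3}, bound each of the three terms by $ck^{4r_1'}$ using $\mu r_1\leq r_1'$, and obtain \eqref{Mon3a'} from \eqref{Mon3a} by multiplying by the dimension of $P$, which does not exceed $ck^{4r_1}$. So for the branch $\varepsilon_0=k^{-r_1'}$ and for the trace-norm step your proposal coincides with the paper's proof.

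However, your treatment of the other branch is wrong as stated. If $\varepsilon_0=k^{-2l+1+\delta}$ (which under the stated hypotheses only requires $r_1'\leq 2l-1-\delta$, and is compatible with $r_1'\geq \mu r_1$ when $l$ is large), the three terms of \eqref{Mon3} become $ck^{\mu r_1+2l-1-\delta}$, $ck^{-2l}\varepsilon_0^{-2}=ck^{2l-2-2\delta}$ and $ck^{-2l-2}\varepsilon_0^{-4}=ck^{6l-6-4\delta}$; the prefactors $k^{-2l}$, $k^{-2l-2}$ do \emph{not} absorb the excess powers, and none of these is $\leq ck^{4r_1'}$ unless $r_1'$ is comparable to $l$, which does not follow from $r_1'\geq\mu r_1$ alone. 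So the corollary cannot be deduced from the lumped estimate \eqref{Mon3} in that branch. The gap is harmless in context but should be closed honestly: either observe that in the paper's use of the corollary one has $r_1'=40\mu r_1+2l$ (see \eqref{Aug13-1}), hence $k^{-r_1'}\leq k^{-2l+1+\delta}$ and only the first branch occurs, or, if one wants the statement for all $r_1'\geq\mu r_1$, go back to the individual block estimates \eqref{Mon3-1}--\eqref{Mon3-4} of Lemma \ref{L:estnonres1}, each with its own cap on $\varepsilon_0$; there the simple-cluster bound gives $8k^{-2l+1}p_\m^{-1}\varepsilon_0^{-1}\leq 8k^{\mu r_1-\delta}$ and the remaining blocks have caps of order $k^{-2}$ or $k^{-1-40\mu\delta}$, so every block resolvent is bounded by $ck^{4r_1'}$ in either branch. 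As written, the claim that "the explicit prefactors absorb the excess powers" is an unproved (and false) assertion, not bookkeeping.
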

 The first formula follows from \eqref{Mon3}. The second formula follows from the fact that the
dimension of $P$ does not exceed $k^{4r_1}$.\\

\begin{proof}\begin{enumerate} \item Let $|2k-p_\m|\geq 1$,
$p_\m>4k^{\delta }$.
 Clearly only
the case $p_\m<4k$ is significant, since otherwise $\OO _{\m}(k,1)$
cannot intersect the disc $|\varphi -\varphi _0|<k^{-2-40(\mu
+1)\delta }$ by Lemma \ref{L:3.1}. It is easy to see that the set
$\OO _{\m}(k,1)$ consists of two separate discs $\OO _{\m}^{\pm}(k,1)$, the
distance between them being greater than $ck^{-1/2}$. Let us assume for definiteness $\varphi _0\in \OO _{\m}^{+}(k,1)$. This means  the disc  $|\varphi -\varphi
_0|<k^{-2-\delta (40\mu +1)}$ does not intersect  $\OO _{\m}^-(k,1)$. Let us first show that the operator
\begin{equation}\left(P_\m\left(H _0\big(\k ^{(1)}(\varphi
)\big)-k^{2l}I\right)P_\m\right)^{-1} \label{freeres} \end{equation}
has exactly one pole inside $\OO _{\m}^+(k,1)$, which is, in fact,
inside $\OO _{\m}^+(k,1/4)$.  Note that $\k ^{(1)}(\varphi )$ is
defined in $\OO_\m^+(k,1)$, since the size of $\OO_\m^+(k,1)$ is
much less than that of any circle in $\OO^{(1)} $. It satisfies the
estimate $\k ^{(1)}(\varphi )= \vec k (\varphi )+o(k^{-2})$ in $\OO
_\m ^+$. If $\varphi _0\in \OO_\m ^+(k,1)\setminus \OO_\m^+(k,1/4)$,
then the estimates $\left||\k ^{(1)}(\varphi _0
)+\p_{\m+\q}|^2-k^2\right|>\frac{1}{4} k^{-40\mu \delta }$, hold for
$0\leq \||\p_\q\||<k^{\delta }$ (see definition of $\MM_1(\varphi
_0)$) and can be extended to the $(k^{-2-(40\mu +1) \delta
})$-neighborhood of $\varphi _0$ ($\frac14$ becomes $\frac18$).
Thus,
\begin{equation} \label{Mon3a*} \left\|\left(P_\m\left(H_0\big(\k
^{(1)}(\varphi
)\big)-k^{2l}I\right)P_\m\right)^{-1}\right\|<ck^{-2l+2+40\mu
\delta}
\end{equation}
$$\mbox{when}\ \ |\varphi -\varphi _0|<k^{-2-(40\mu +1)\delta },\ \
\varphi _0\in \OO_\m ^+(k,1)\setminus \OO_\m^+(k,1/4).$$
Clearly the resolvent \eqref{Mon3a*} does not have poles in the set
$|\varphi -\varphi _0|<k^{-2-(40\mu +1)\delta }$. The estimate
(\ref{Mon3-1})  with $\varepsilon _0=k^{-2-(40\mu +1)\delta}$ follows
from (\ref{Mon3a*}) and Hilbert identity.

 Now, suppose that $\varphi _0\in \OO _\m^+(k,\frac{1}{4})$.
The function $| \vec k (\varphi )+\p_\m|^{2l}_\R-k^{2l}$ has a
single zero inside $\OO_\m^+(k,\frac{1}{4})$. Using Rouch\'{e}'s
theorem, we obtain that $|\k ^{(1)}(\varphi )+\p_\m|^{2l}_\R-k^{2l}$
also has a single zero inside $\OO_\m^+(k,\frac{1}{4})$. Note that
the following inequality holds in $\OO_\m^+(k,\frac{1}{4})$ for
$0<\||\p_\q\||<k^{\delta }$: $$\left||\k ^{(1)}(\varphi
)+\p_{\m+\q}|^{2l}_\R-k^{2l}\right|>\frac{1}{4}k^{2l-2-40\mu\delta
}.$$ Indeed, if $\left||\k ^{(1)}(\varphi
)+\p_{\m+\q}|^{2l}_\R-k^{2l}\right|\leq
\frac{1}{4}k^{2l-2-40\mu\delta }$ for some $\q\neq (0,0)$ and
$\varphi \in \OO_\m^+(k,\frac{1}{4})$, then
\begin{equation} \left|2(\k ^{(1)}(\varphi
)+\p_{\m},\p_\q)_\R+p_\q^2\right|<\frac{1}{2}k^{-40\mu \delta
}.\label{Mon21}\end{equation}
Considering that the size of $\OO_\m^+(k,\frac{1}{4})$ is
$\frac{k^{-1-40\mu\delta }}{p_\m \sqrt{1-p_\m^2(2k)^{-2}}}(1+o(1))$
and that $p_\m>4k^{\delta}>4p_\q/2\pi$, we obtain the inequality
analogous to (\ref{Mon21}) for $\varphi _0$ with $\frac{3}{4}$
instead of $\frac{1}{2}$. This contradicts to the assumption
$\varphi _0\in \MM _1$. Thus,
 the following inequality holds for all
$\q:\||\p_\q\||<k^{\delta }$ including $\q=(0,0)$: $$\left||\k
^{(1)}(\varphi )+
\p_{\m+\q}|^{2l}_\R-k^{2l}\right|>\frac{1}{4}k^{2l-2-40\mu \delta
},$$ when $\varphi $ is on the boundary of
$\OO_\m^+(k,\frac{1}{4})$. Hence, the resolvent $$\left(P_\m\left(H
_0\big(\k ^{(1)}(\varphi )\big)-k^{2l}I\right)P_\m\right)^{-1}$$ of
the free operator $P_\m H_0$ has exactly one pole inside $\OO_\m
^+(k,\frac{1}{4})$ and \begin{equation} \left\|\left(P_\m\left(H
_0\big(\k ^{(1)}(\varphi )\big)-k^{2l}I\right)P_\m\right)^{-1}\right
\|\leq 4k^{-2l+2+40\mu \delta }, \label{Mon5} \end{equation} when
$\varphi $ is on the boundary on the disc $\OO _\m
^+(k,\frac{1}{4})$. Considering that the dimension of $P_\m$ does
not exceed $16k^{4\delta }$ we obtain:
\begin{equation} \left\|\left(P_\m\left(H _0\big(\k ^{(1)}(\varphi
)\big)-k^{2l}I\right)P_\m\right)^{-1}\right \|_1\leq
64k^{-2l+2+\delta (40\mu+4) }. \label{Mon5a} \end{equation} It
remains to prove the analogous result for the perturbed operator
$H$.  We introduce the determinant $$D(\varphi )=\det
\left(P_\m\left(H \big(\k ^{(1)}(\varphi
)\big)-k^{2l}I\right)P_\m\left(H _0\big(\k ^{(1)}(\varphi
)\big)-k^{2l}I\right)^{-1}P_\m \right).$$ Obviously, $D(\varphi
)=\det(I+A)$, where $I,A:P_\m L_2(\Z^2)\to P_\m L_2(\Z^2)$
$$A(\varphi )=P_\m V\left(H _0\big(\k ^{(1)}(\varphi
)\big)-k^{2l}I\right)^{-1}P_\m .$$ Taking into account that
$$D(\varphi )=\frac{\det \left(P_\m \left(H \big(\k ^{(1)}(\varphi
)\big)-k^{2l}I\right)P_\m \right)}{\det \left(P_\m \left(H_0\big(\k
^{(1)}(\varphi )\big)-k^{2l}I\right)P_\m \right)},$$ we see that
$D(\varphi )$ is a meromorphic function inside $\OO_\m
^+(k,\frac{1}{4})$. Next, we employ a well-known inequality for the
determinants, see \cite{RS}: \begin{equation} \left|\det (I+A)-\det
(I+B)\right|\leq \|A-B\|_1exp (\|A\|_1+\|B\|_1+1),\ \ A,B\in \bf
S_1. \label{determinants}\end{equation}  Putting $A=A(\varphi )$,
$B=0$, we obtain $$\left|\det (I+A)-1\right|\leq \|A\|_1exp
(\|A\|_1+1).$$ It is easy to see that $$\|A\|_1\leq
\|V\|\|P_\m\left(H_0 \big(\k ^{(1)}(\varphi
)\big)-k^{2l}I\right)^{-1}P_\m\|_1.$$ Considering the estimate
(\ref{Mon5a}) for the resolvent of the free operator, we obtain
$\|A_1(\varphi )\|_1<1/200$ on the boundary of
$\OO^+_\m(k,\frac{1}{4})$ for sufficiently large $k$. By
Rouch\'{e}'s theorem, $D(\varphi )$ has only one zero in $\OO_\m^+
(k,\frac{1}{4})$. Thus, $\det P_\m\left(H \big(\k ^{(1)}(\varphi
)\big)-k^{2l}I\right)P_\m$ has exactly one zero in
$\OO^+_\m(k,\frac{1}{4})$. Using this, we immediately obtain that
operator $\left(P_\m\left(H \big(\k ^{(1)}(\varphi
)\big)-k^{2l}I\right) P_\m\right)^{-1}$ has one pole inside
$\OO_\m^+(k,\frac{1}{4})$. Considering the estimate for the free
resolvent and using Hilbert
 identity, we immediately obtain,
\begin{equation}\left\|\left(P_\m\left(H (\k ^{(1)}(\varphi
))-k^{2l}I\right) P_\m \right)^{-1}\right\|\leq 8k^{-2l+2+40\mu
\delta } \label{Mon9} \end{equation} for all $\varphi $ on the
boundary of $\OO_\m^+(k,\frac{1}{4})$. Taking into account that the
size of $\OO^+_\m(k,\frac{1}{4})$ does not exceed $k^{-1-40\mu\delta
}$, we obtain:
\begin{equation} \left\|\left(P_\m\left(H \big(\k ^{(1)}(\varphi
)\big)-k^{2l}I\right)P_\m \right)^{-1}\right\|\leq 8k^{-2l+2+40\mu
\delta }(k^{-1-40\mu \delta }/\varepsilon
)\label{Jan10}\end{equation} when $\varphi \in
\OO_\m^+(k,\frac{1}{4})$  on the distance $\varepsilon $,
 from the pole. If $|\varphi -\varphi _0|<k^{-2-(40\mu +1)\delta }$, but $\varphi \not \in \OO_\m^+(k,\frac{1}{4})$,
 then $\varphi $ is on the distance less than $k^{-2-(40\mu +1)\delta }$ from the boundary of $\OO_\m^+(k,\frac{1}{4})$, since $\varphi _0$ is inside
  $\OO_\m^+(k,\frac{1}{4})$. The estimate  \eqref{Mon9} holds on the boundary and stable with respect to such a
  small perturbation of $\varphi $. Thus, estimate \eqref{Mon3-1} is proven.


\item Let $\m\in \MM$, $|2k-p_\m|<1$. Then,
$\OO_\m^+$ and $\OO_\m^-$ can overlap. The case $\varphi _0\in
\OO_\m (k,1)\setminus \OO_\m(k,1/4)$ we consider in the same way as
for $|2k-p_\m|\geq 1$. Suppose $\varphi _0\in \OO_\m(k,1/4)$.
Combining $\left|\bigl|\vec k(\varphi _0
)+\p_\m\bigr|_\R^2-k^2\right|<\frac{1}{4}k^{1-40\mu \delta }$ with
$|2k-p_\m|<1$, we obtain that the vectors $2\vec k(\varphi )$ and
$-\p_\m $ are close: $$|2\vec k(\varphi _0)+\p_\m|_\R^2<5k.$$
Therefore, $( \vec k(\varphi _0)+\p_\m , \p_\q)_\R =-( \vec
k(\varphi _0), \p_\q)_\R +O(k^{1/2+\delta })$ for all
$0<\||\p_\q|\|<k^{\delta }$. Considering that the size of $\OO_\m $
does not exceed $ck^{-1-20\mu \delta }$ (Lemma \ref{L:3.1}) and the
distance between $\OO_\m^+$ and $\OO_\m^-$ is $O(k^{-1/2})$, we
obtain the analogous estimate for all $\varphi $ in $\OO_\m$:
$$(\vec k(\varphi )+\p_\m , \p_\q )_\R=-(\vec k(\varphi ),
\p_\q)_\R+O(k^{1/2+\delta })\ \ \mbox{ for all
  }0<\||\p_\q\||<k^{\delta }.$$ It immediately follows:
$$|\vec k(\varphi )+\p_{\m+\q}|^2_\R-|\vec k(\varphi )+\p_{\m}|^2_\R=
|\vec k(\varphi )-\p_{\q}|^2_\R-|\vec k(\varphi )|^2_\R
+O(k^{1/2+\delta })$$ for all $0<\||\p_\q\||<k^{\delta }$. The size
of $\OO_\m$ is much smaller than that of $\OO_{-\q}$ and
  $\OO_\m$ is not completely in $\OO_{-\q}$. Hence,
  $$\left||\vec k(\varphi )+\p_{\m+\q}|^2_\R-|\vec k(\varphi
  )+\p_{\m}|^2_\R\right|>\frac{1}{2}k^{1-40\mu \delta }$$
  for all $\varphi \in \OO_\m$. Considering that $\left||\vec k(\varphi
  )+\p_{\m}|^2_\R-k^2\right|\leq
  \frac{1}{2}k^{-40\mu \delta }$ in $\OO_\m$, we obtain
  $$\left||\vec k(\varphi )+\p_{\m+\q}|^2_\R-k^2\right|>\frac{1}{2}k^{1-40\mu \delta }, \mbox{when}\ \varphi \in \OO_\m.$$
  In particular, $\m\in \MM_1(\varphi _0)$. Considering that
  $\left||\vec k(\varphi
  )+\p_{\m}|^2_\R-k^2\right|=
  \frac{1}{2}k^{-40\mu \delta }$ on the boundary of $\OO_\m$, we
  obtain \eqref{Mon5} and \eqref{Mon5a}.
Considering as before, we show that the resolvent (\ref{freeres})
has at most two poles inside $\OO_\m $. It follows that $$
\left\|\left(P_\m\left(H \big(\k ^{(1)}(\varphi
)\big)-k^{2l}I\right)P_\m \right)^{-1}\right\|\leq k^{-2l+2+40\mu
\delta }(ck^{-1-20\mu \delta }/\varepsilon )^2$$ when $\varphi $ is
on the distance $\varepsilon $ from the pole.


\item Let $\m\in \MM$, $0<p_\m\leq 4k^{\delta },\ \m\not \in \Omega
(\delta)$. The case $\varphi _0 \in {\cal O_\m}(k,1)\setminus {\cal
O_\m}(k,\frac{1}{4})$ is considered the same way as in the previous
steps, see (\ref{Mon3a*}).  From now on we assume $\varphi _0 \in
{\cal O}_{\m}(k,\frac{1}{4})$. There is an eigenvalue $\lambda^{(1)}
\big(\k^{(1)}(\varphi )+\p_\m\big)$ of $P_\m H(\k^{(1)}(\varphi
))P_\m$ given by the perturbation series. Indeed,
$$\left|\bigl|\vec k(\varphi
_0)+\p_\m\bigr|_\R^2-k^2\right|<\frac{1}{4}k^{-40\mu \delta },$$
since $\varphi _0 \in {\cal O}_{\m}(k,\frac{1}{4})$. Considering
that $\varphi _0 \not \in {\cal O}^{(1)}(k,8)$, we easily obtain
that $\left|( \vec k (\varphi _0 ),\p_{\q})_\R\right|\gtrsim
k^{1-40\mu \delta }$ for all $\q\in \Omega (\delta )\setminus
\{0\}$. Taking into account that and $p_\m\leq 4k^{\delta }$ we
arrive at the estimate:
$$\left|\bigl|\vec k (\varphi _0 )+\p_{\m+\q}\bigr|_\R^2-k^{2}\right|\gtrsim
k^{1-40\mu \delta }$$ for all $\q\in \Omega (\delta )\setminus
\{0\}$ and any $\varphi _0\in \omega^{(1)}(k,8)\cap {\cal
O_\m}(k,\frac{1}{4})$. It follows  $\m\in \MM_1$. By Lemma \ref{ldk},
$\k^{(1)}(\varphi )$ is defined in $
\frac{1}{4}k^{-(40\mu+1)\delta}$-neighborhood of $\omega^{(1)}(k,8)$ which
we denote by $\tilde{\cal W}^{(1)}(k,\frac14)$.
It is easy
to show that the estimates similar to the last two hold for
$\k^{(1)}(\varphi )$, $\varphi \in \tilde{\cal W}^{(1)}(k,\frac14))\cap
{\cal O_\m}(k,\frac{1}{2})$ . Therefore,
\begin{equation}\label{metka1}\left|\bigl|\k^{(1)}(\varphi )+\p_\m\bigr|_\R^{2l}-k^{2l}\right|\lesssim
\frac{1}{2}k^{2l-2-40\mu \delta },\end{equation} \begin{equation} \label{metka2}\left|\bigl|\k^{(1)}(\varphi
)+\p_{\m+\q}\bigr|^{2l}-k^{2l}\right|\gtrsim k^{2l-1-40\mu \delta
}\end{equation} for all $\q\in \Omega (\delta )\setminus \{0\}$. It follows from
the last two estimates that the perturbation series for
$\lambda^{(1)} \big(\k^{(1)}(\varphi )+\p_\m\big)$ and
$\lambda^{(1)} \big(\k^{(1)}(\varphi )\big)$ converge. Both are holomorphic functions of $\varphi $ in $\tilde{\cal W}^{(1)}(k,\frac14)\cap
{\cal O_\m}(k,\frac{1}{2})$. Using
Rouch\'{e}'s theorem, it is not difficult to show (for details see
Appendix 3, Lemma~\ref{L: Appendix 1}) that the equation
\begin{equation}\lambda^{(1)} \big(\k^{(1)}(\varphi
)+\p_\m\big)=k^{2l}+\varepsilon _0,\ \ \ |\varepsilon _0|\leq p_\m
k^{\delta },\label{25} \end{equation} has no more than two solutions
$\varphi^\pm(\varepsilon_0) $ in the $\tilde{\cal W}^{(1)}(k,\frac18)\cap
{\cal O}_{\m}(k,\frac{1}{2})$. They satisfy the estimates:
 \begin{equation} \label{87a} \big|\varphi^{\pm }(\varepsilon_0)-\varphi _\m ^{\pm
}\big|<4k^{-2l+1+2\delta }.\end{equation} Considering that $\varphi _\m ^{\pm
}=\varphi_\m\pm \pi/2+ O(k^{-1+\delta })$, we see that  the distance
between two solutions is approximately equal to $\pi $. For any
$\varphi \in \tilde{\cal W}^{(1)}(k,\frac14)\cap {\cal
O}_{\m}(k,\frac{1}{2})$ satisfying the estimate $\big|\varphi
-\varphi _\m ^{\pm }\big|<k^{-\delta },$
\begin{equation}\frac{\partial }{\partial \varphi }\lambda^{(1)}
\big(\k^{(1)}(\varphi )+\p_\m\big)=\pm 2lp_\m k^{2l-1}(1+o(1)),
\label{26}
\end{equation} for details see Appendix 3, Lemma~\ref{L:Appendix 2}. Therefore (for details
see Appendix 3, Lemma~\ref{L:3.7.1}),
\begin{equation}\big|\lambda^{(1)} \big(\k^{(1)}(\varphi
)+\p_\m\big)-k^{2l}\big|\geq k^{2l-1} p_\m\varepsilon
\label{Mon10}\end{equation} if $\varphi \in \tilde{\cal
W}^{(1)}(k,\frac18)\cap {\cal O}_{\m}(k,\frac{1}{2})$ is outside $\tilde
{\cal O}_{\m,\varepsilon}^+ \cup \tilde {\cal
O}_{\m,\varepsilon}^-$, here and below $\tilde {\cal
O}_{\m,\varepsilon }^{\pm }$ are the open discs of the radius
$\varepsilon $, $0<\varepsilon <k^{-2l+1+\delta}$ , centered at
$\varphi^\pm (0)$. It is shown in Appendix 3, Lemma~\ref{L:July5}
that
\begin{equation} \left \|(\lambda ^{(1)}(\y(\varphi
))-k^{2l})\left(P_\m(H(\k^{(1)}(\varphi))-k^{2l})P_\m\right)^{-1}\right\|\leq
8, \ \ \ \y(\varphi ):= \k^{(1)}(\varphi )+\p _\m, \label{July3a'''}
\end{equation}
for any $\varphi $ in $\tilde{\cal
W}^{(1)}(k,\frac18)\cap {\cal O}_{\m}(k,\frac{1}{2})$.

If $|\varphi -\varphi _0|<2k^{-2-\delta (40\mu +1)}$ and $\varphi _0\in \omega (k,\delta ,8)\cap {\cal O}_{\m}(k,\frac{1}{4})$, then $\varphi \in \tilde{\cal
W}^{(1)}(k,\frac18)\cap {\cal O}_{\m}(k,\frac{1}{2})$ and, hence, \eqref{Mon10}, \eqref{July3a'''} hold. Now \eqref{Mon3-2} easily follows from \eqref{Mon10} and \eqref{July3a'''}.



\item
Let, now, $\m,\,\m'$ be two elements from the same set $\M_2^j$. It
means that there exist elements $\p_{\q_i}\in\Omega(\delta)$,
$i=1,\dots, I$, such that $\m'=\m+\sum_{i=1}^I\q_i$ and
$\m+\sum_{i=1}^s\q_i\in\M_2$ for any $1\leq s\leq I$. We have proved
in Lemma \ref{L:4} that $I\leq 3$. Next, we consider $\tilde{\cal
O}^j=\cup _{\m\in \M_2^j}\cal O_{\m}$. Each connected component of
$\tilde{\cal O}^j$ contains no more than four discs. We have proven
above that all $\m: p_\m<4k^{\delta } \mbox{ or } |2k-p_\m|<1 $
belong to $\M_1$. Using this fact and applying Lemma~\ref{L:3.1} we
see that the size of each component does not exceed $o(k^{-1-40\mu
\delta })$. Considering as above, we show that
\begin{equation}\left\|\left(P_\m\left(H (\k ^{(1)}(\varphi
))-k^{2l}I\right) P_\m \right)^{-1}\right\|\leq 2k^{-2l+2+40\mu
\delta} \label{Mon9'}
\end{equation} for all $\varphi $ on the boundary of $\tilde{\cal
O}^j$. Each component contains no more than four poles of the
resolvent. It follows that $$ \left\|\left(P_\m\left(H \big(\k
^{(1)}(\varphi )\big)-k^{2l}I\right)P_\m \right)^{-1}\right\|\leq
ck^{-2l+2+40\mu \delta}(k^{-1-40\mu \delta }/\varepsilon _0)^4$$
with $\varepsilon _0=\min\{\varepsilon, k^{-1-40\mu \delta }\},$
when $\varphi $ is on the distance $\varepsilon $ from the poles.
\end{enumerate}
\end{proof}

\subsubsection{Resonant and Nonresonant Sets for Step II \label{GSII}}

We divide $[0,2\pi )$ into $[2\pi k^{2+\delta(40\mu+1)}]+1$
intervals $\Delta_j^{(1)}$ with the length not bigger than
$k^{-2-\delta(40\mu+1)}$. If a particular interval belongs to
$\OO^{(1)}(k,8)$ we ignore it; otherwise, let
$\varphi_0(j)\not\in\OO^{(1)}(k,8)$ be a point inside the $\Delta_j^{(1)}$. Let
$$\W_j^{(1)}=\{\varphi \in \W^{(1)}:\ |\varphi -\varphi
_j^{(0)}|<2k^{-2-\delta(40\mu+1)}\}.$$ Clearly, neighboring sets
$\W_j^{(1)}$ overlap (because of the multiplier 2 in the
inequality), they cover the $2k^{-2-\delta(40\mu+1)}$-neighborhood of $\omega (k,8)$. We denote this neighborhood by $\hat \W^{(1)}(k,2)$. For each $
\varphi $ in the neighborhood there is a $j$ such that $|\varphi
-\varphi _{0}^j|<k^{-2-\delta(40\mu+1)}$. We consider the poles of
the operator $\left(P(\varphi _j^0)\left(H(\k^{(1)}(\varphi))-k^{2l}\right)P(\varphi
_j^0)\right)^{-1}$ in a $\W_j^{(1)}$ and denote them by $\varphi _{jm}$,
$m=1,...,M_j$. By Corollary \ref{estnonres}, $M_j<64k^{4r_1}$. Next,
let $\OO^{(2)}_{jm}$ be the disc of the radius $k^{-r_1'}$ around
$\varphi _{jm}$, $r_1'>\mu r_1$.
\begin{definition} The set
\begin{equation}\OO^{(2)}=\cup _{jm}\OO^{(2)}_{jm} \label{O2}
\end{equation}
we call the second resonant set. The set
\begin{equation}\W^{(2)}= \hat \W^{(1)}(k,2)\setminus \OO^{(2)}\label{W2}
\end{equation}
is called the second nonresonant set. The set
\begin{equation}\omega^{(2)}= \W^{(2)}\cap [0,2\pi) \label{w2}
\end{equation}
is called the second real nonresonant set. \end{definition}
\begin{lemma}\label{L:geometric2}Let  $r_1'>\mu r_1$,
$\varphi \in \W^{(2)}$ and $\varkappa \in \C:
|\varkappa-\varkappa^{(1)}(\varphi )|<k^{-4r'_1-2l+1-\delta }$.
Then,
\begin{equation} \label{Mon3a**}
\left\|\left(P\left(H\big(\k(\varphi
)\big)-k^{2l}I\right)P\right)^{-1}\right\|<ck^{4r_1'},\end{equation}
\begin{equation} \label{Mon3a***}
\left\|\left(P\left(H\big(\k(\varphi
)\big)-k^{2l}I\right)P\right)^{-1}\right\|_1<ck^{4r_1'+4r_1},\end{equation}
where $P$ is the projection \eqref{defP} corresponding to the
interval $\Delta _j^{(1)}$ containing $\Re \varphi $.
\end{lemma}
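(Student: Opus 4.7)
The plan is to derive Lemma \ref{L:geometric2} from Corollary \ref{estnonres1} by a Neumann-series perturbation in the radial variable $\varkappa$. First I would locate the relevant block: since $\varphi\in\W^{(2)}\subset\hat\W^{(1)}(k,2)$ and the discs $\W_j^{(1)}$ of radius $2k^{-2-\delta(40\mu+1)}$ cover $\hat\W^{(1)}(k,2)$, there exists an index $j$ with $|\varphi-\varphi_0(j)|<k^{-2-\delta(40\mu+1)}$; the associated projector $P=P(\varphi_0(j))$ is the one appearing in the statement. Because $\varphi\notin\OO^{(2)}=\bigcup_{j,m}\OO^{(2)}_{jm}$, the distance from $\varphi$ to every pole $\varphi_{jm}$ of $(P(H(\k^{(1)}(\cdot))-k^{2l})P)^{-1}$ inside $\W_j^{(1)}$ is at least $k^{-r_1'}$. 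Corollary \ref{estnonres1} then furnishes
\begin{equation*}
\bigl\|\bigl(P(H(\k^{(1)}(\varphi))-k^{2l}I)P\bigr)^{-1}\bigr\|\leq ck^{4r_1'},\qquad \bigl\|\bigl(P(H(\k^{(1)}(\varphi))-k^{2l}I)P\bigr)^{-1}\bigr\|_1\leq ck^{4r_1'+4r_1}.
\end{equation*}

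Next I would turn on the perturbation in $\varkappa$. Writing $\k(\varphi)=\varkappa\vec\nu(\varphi)$ and $\k^{(1)}(\varphi)=\varkappa^{(1)}(\varphi)\vec\nu(\varphi)$, one has
\begin{equation*}
P\bigl(H(\k(\varphi))-k^{2l}\bigr)P=P\bigl(H(\k^{(1)}(\varphi))-k^{2l}\bigr)P+P\,\Delta H\,P,
\end{equation*}
where $\Delta H:=H_0(\k(\varphi))-H_0(\k^{(1)}(\varphi))$ is diagonal with $\m$-th entry $|\k(\varphi)+\p_\m|^{2l}_\R-|\k^{(1)}(\varphi)+\p_\m|^{2l}_\R$ for $\m\in\tilde\MM(\varphi_0(j))$. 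The key quantitative input is the uniform bound $|\k(\varphi)+\p_\m|\leq ck$ on this index set: for $\m\in\MM(\varphi_0(j))$ the defining condition $\varphi_0(j)\in\OO_\m(k,1)$ forces $|\vec k+\p_\m|^2=k^2+O(k^{1-40\mu\delta})$; the passage from $\MM$ to $\tilde\MM$ via its $k^\delta/3$-neighborhood contributes at most an $O(k^\delta)$ shift in $|\p_\m|$; and the complex deviations $|\varkappa-k|$ and $|\Im\varphi|$ admissible here are $o(k)$. The mean-value estimate then yields
\begin{equation*}
\|P\,\Delta H\,P\|\leq c|\varkappa-\varkappa^{(1)}(\varphi)|\,k^{2l-1}<ck^{-4r_1'-\delta},
\end{equation*}
using the hypothesis on $|\varkappa-\varkappa^{(1)}(\varphi)|$.

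The product of this norm with $ck^{4r_1'}$ is bounded by $c^2k^{-\delta}<1/2$ for $k$ large, so the Neumann series
\begin{equation*}
\bigl(P(H(\k(\varphi))-k^{2l})P\bigr)^{-1}=\sum_{s=0}^\infty\bigl(P(H(\k^{(1)}(\varphi))-k^{2l})P\bigr)^{-1}\Bigl[-P\,\Delta H\,P\cdot\bigl(P(H(\k^{(1)}(\varphi))-k^{2l})P\bigr)^{-1}\Bigr]^s
\end{equation*}
converges; summing gives \eqref{Mon3a**}, and \eqref{Mon3a***} follows from the same expansion using the trace-class bound on the leftmost factor together with the geometric decay $\|P\,\Delta H\,P\cdot(P(H(\k^{(1)}(\varphi))-k^{2l})P)^{-1}\|\leq c^2k^{-\delta}$ of the remaining factors. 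The only genuinely delicate point is the uniform bound $|\k(\varphi)+\p_\m|\leq ck$ on $\tilde\MM(\varphi_0(j))$; once that is granted the remainder is a textbook Neumann-series argument, and the specific radius $k^{-4r_1'-2l+1-\delta}$ in the hypothesis is tuned precisely so that the smallness of $\|P\Delta HP\|$ beats the resolvent bound $k^{4r_1'}$ with an extra $k^{-\delta}$ margin.
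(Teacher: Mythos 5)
Your proposal is correct and follows essentially the same route as the paper: the paper also obtains the bounds at $\k^{(1)}(\varphi)$ directly from the definition of $\W^{(2)}$ together with Corollary \ref{estnonres1}, and then invokes stability under the radial perturbation of size $k^{-4r_1'-2l+1-\delta}$ via the Hilbert (resolvent) identity, which is exactly your Neumann-series argument with the diagonal increment $\Delta H$. Your write-up merely makes explicit the bound $\|P\,\Delta H\,P\|\lesssim |\varkappa-\varkappa^{(1)}|k^{2l-1}$ that the paper leaves implicit.
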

\begin{proof} For $\k=\k^{(1)}(\varphi )$ the lemma follows
immediately from the definition of $\W^{(2)}$ and Corollary
\ref{estnonres1}. Considering the Hilbert identity, it is easy to
see that estimates \eqref{Mon3a} and \eqref{Mon3a'} are  stable with
respect to perturbation of $\varkappa^{(1)}$ of order
$k^{-4r_1'-2l+1-\delta}$. This stability ensure \eqref{Mon3a**} and
\eqref{Mon3a***}.

 \end{proof} By total size of
the set $\OO^{(2)}$ we mean the sum of the sizes of its connected
components.
\begin{lemma} \label{L:O2size} Let $r_1'\geq (\mu +4)r_1$. Then, the size of each connected component of $\OO^{(2)}$ is less
than $128k^{4r_1-r_1'}$. The total size of $\OO^{(2)}$ is less than
$128k^{2+\delta(40\mu+1)+4r_1-r_1'}$, where $2+\delta(40\mu+1)+4r_1-r_1'<0$.
\end{lemma}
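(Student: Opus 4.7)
The plan is to bound the connected components and the total measure of $\OO^{(2)}$ by combining two ingredients: (a) the pole count per interval from Corollary \ref{estnonres}, which gives at most $M_j<64k^{4r_1}$ discs $\OO^{(2)}_{jm}$ per index $j$, and (b) the uniform radius $k^{-r_1'}$ of each disc $\OO^{(2)}_{jm}$ together with the comparison $r_1'\geq(\mu+4)r_1$ against the scale $k^{-2-\delta(40\mu+1)}$ of the intervals $\Delta_j^{(1)}$.

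For the bound on a single connected component, I would first observe that any connected component $K\subset\OO^{(2)}$ is a union of overlapping closed discs of diameter $2k^{-r_1'}$, so the $1$-dimensional measure of $K$ is at most the sum of the diameters of the discs composing it. Since $r_1'\geq (\mu+4)r_1$ implies $128k^{4r_1-r_1'}$ is far smaller than the length $4k^{-2-\delta(40\mu+1)}$ of a single $\W_j^{(1)}$, a preliminary crude bound shows that any such chain of overlapping discs must remain inside a single $\W_j^{(1)}$ (and any $\W_{j'}^{(1)}$ it overlaps with). Then Corollary \ref{estnonres} caps the number of discs in the component by $64k^{4r_1}$, and summing diameters gives the required $128k^{4r_1-r_1'}$.

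For the total size bound, I would sum over all indices $j$: the number of intervals $\Delta_j^{(1)}$ is $[2\pi k^{2+\delta(40\mu+1)}]+1$, and each contributes at most $64k^{4r_1}$ discs of radius $k^{-r_1'}$, giving a total one-dimensional measure at most
\begin{equation*}
(2\pi k^{2+\delta(40\mu+1)}+1)\cdot 64k^{4r_1}\cdot 2k^{-r_1'}\leq 128k^{2+\delta(40\mu+1)+4r_1-r_1'},
\end{equation*}
after absorbing the constants into the leading $k$-power for $k$ large. Finally, to verify $2+\delta(40\mu+1)+4r_1-r_1'<0$, I would use $r_1'\geq (\mu+4)r_1$ to reduce the task to $\mu r_1>2+\delta(40\mu+1)$; since $\mu\geq 2$, $r_1>2$, and $\delta<(100\mu)^{-1}$, one has $\mu r_1>4$ while $2+\delta(40\mu+1)<2+\tfrac{40\mu+1}{100\mu}<3$, giving the strict inequality.

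I do not anticipate a substantive obstacle in this lemma; the only subtle point is the containment-of-chain argument used for the per-component bound, where one must be careful that a connected component cannot accidentally extend beyond a single overlap-family of $\W_j^{(1)}$'s. This is handled by the self-consistent smallness provided by $r_1'\geq(\mu+4)r_1$, which makes the per-component bound $128k^{4r_1-r_1'}$ negligible compared to the scale $k^{-2-\delta(40\mu+1)}$ on which the intervals are spaced.
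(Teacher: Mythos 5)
Your proof is correct and follows essentially the same route as the paper: bound the per-interval contribution by the pole count $64k^{4r_1}$ times the disc diameter $2k^{-r_1'}$, note this is much smaller than the length of $\Delta_j^{(1)}$ so no component spans a whole $\W_j^{(1)}$, and then sum over the $\sim k^{2+\delta(40\mu+1)}$ intervals. Your explicit verification of $2+\delta(40\mu+1)+4r_1-r_1'<0$ is a harmless addition the paper leaves implicit.
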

\begin{corollary} \label{C:O2size} If  a connected component of $\OO^{(2)}$ intersects $[0,2\pi)$ or its
$\frac{1}{2}k^{-2-\delta(40\mu+1)}$-neighborhood, then it is strictly inside
$\tilde \W^{(1)}$. \end{corollary}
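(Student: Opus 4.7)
The corollary will follow from Lemma \ref{L:O2size} with only minimal extra work, and my plan is to convert the diameter bound supplied by the lemma into the geometric containment statement via the triangle inequality.

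First I will check that $r_1'\geq(\mu+4)r_1$, combined with $\mu\geq 2$, $r_1>2$, and $\delta<(100\mu)^{-1}$, yields $4r_1-r_1'\leq -\mu r_1<-4$, and in particular $128k^{4r_1-r_1'}<\tfrac14 k^{-2-\delta(40\mu+1)}$ for all sufficiently large $k$. By Lemma \ref{L:O2size}, every connected component $C$ of $\OO^{(2)}$ has diameter strictly less than $128k^{4r_1-r_1'}$, which by the above is negligible relative to both $\tfrac12 k^{-2-\delta(40\mu+1)}$ and $2k^{-2-\delta(40\mu+1)}$.

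Next I would pick a point $\varphi_*\in C$ with $\mathrm{dist}(\varphi_*,[0,2\pi))<\tfrac12 k^{-2-\delta(40\mu+1)}$ (supplied by the hypothesis) and apply the triangle inequality: for any other $\varphi\in C$,
\[
\mathrm{dist}(\varphi,[0,2\pi))<\tfrac12 k^{-2-\delta(40\mu+1)}+128k^{4r_1-r_1'}<\tfrac34 k^{-2-\delta(40\mu+1)}.
\]
In parallel, I would use that each disc $\OO^{(2)}_{jm}\subset C$ has its pole $\varphi_{jm}\in\W_j^{(1)}$, which lies within $2k^{-2-\delta(40\mu+1)}$ of some $\varphi_0(j)\in\omega^{(1)}(k,8)$. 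Together with the disc radius $k^{-r_1'}$ and the component's diameter, this places every point of $C$ within
\[
2k^{-2-\delta(40\mu+1)}+k^{-r_1'}+128k^{4r_1-r_1'}
\]
of $\omega^{(1)}(k,8)$, which is strictly less than the defining radius of $\tilde\W^{(1)}$ for large $k$.

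The only delicate step will be bookkeeping the quantitative slack between $\hat\W^{(1)}(k,2)$ and the slightly larger $\tilde\W^{(1)}$, so that both small excess terms $k^{-r_1'}$ and $128k^{4r_1-r_1'}$ are absorbed with strict inequality. The negativity of the exponent $4r_1-r_1'$ coming from Lemma \ref{L:O2size} provides this slack with room to spare, so the obstacle is purely quantitative rather than conceptual.
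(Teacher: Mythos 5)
Your argument is correct and is essentially the derivation the paper intends: the paper in fact gives no separate proof of Corollary \ref{C:O2size} (the proof environment following it establishes the size bounds of Lemma \ref{L:O2size}), and the implicit reasoning is exactly yours — each disc $\OO^{(2)}_{jm}$ is centered at a pole $\varphi_{jm}\in\W_j^{(1)}$, hence within $2k^{-2-\delta(40\mu+1)}$ of a real point $\varphi_0(j)\in\omega^{(1)}(k,8)$, and the disc radius $k^{-r_1'}$ together with the component diameter $128k^{4r_1-r_1'}\leq 128k^{-\mu r_1}$ from the lemma are negligible on that scale. The one point to fix is your characterization of the target set: $\tilde\W^{(1)}$ is the set $\tilde{\cal W}^{(1)}(k,\tfrac14)$ introduced in the proof of Lemma \ref{L:estnonres1} (via Lemma \ref{ldk}), i.e. the complex $\tfrac14 k^{-(40\mu+1)\delta}$-neighborhood of $\omega^{(1)}(k,8)$ on which $\k^{(1)}(\varphi)$ is defined and analytic; its radius exceeds the radius $2k^{-2-\delta(40\mu+1)}$ of $\hat\W^{(1)}(k,2)$ by a factor of order $k^{2}$, so it is not ``slightly larger'' and the final comparison holds with enormous slack — in fact the bound $2k^{-2-\delta(40\mu+1)}+k^{-r_1'}$ coming from the pole location alone already gives strict containment, the hypothesis about meeting the $\tfrac12 k^{-2-\delta(40\mu+1)}$-neighborhood of $[0,2\pi)$ merely singling out the components that matter later. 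Conversely, be aware that if one tried to read $\tilde\W^{(1)}$ as $\hat\W^{(1)}(k,2)$ itself, your estimate $2k^{-2-\delta(40\mu+1)}$ plus tiny excess terms would not yield a strict containment, so the correct identification of $\tilde\W^{(1)}$ is the substantive step, not bookkeeping.
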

\begin{proof} Indeed, each set $\W_j^{(1)}$ contains no more than
$64k^{4r_1}$ discs $\OO_{jm}$. Therefore,  the size of $\OO^{(2)}\cap
\W_j^{(1)}$ is less than $128k^{-r_1'+4r_1}$. Considering that
$128k^{-r_1'+4r_1}$ is much smaller than the length of $\Delta _j^{(1)}$, we
obtain that there is no connected components which go across the
whole set $\W_j^{(1)}$ and the size of each connected component of
$\OO^{(2)}$ is less than $128k^{4r_1-r_1'}$. Considering that
$j<k^{2+\delta(40\mu+1)}$, we obtain the required estimate for the
total size of $\OO^{(2)}$. \end{proof}


We will also need the estimates for the resolvent in the
neighborhood of $\m=0$. From the definition of $\k^{(1)}(\varphi)$
we obtain the following lemma.
\begin{lemma}\label{estnonres0} Let
$\varphi\in\W^{(1)}_j$ and $C_2$ be the circle $|z-k^{2l}|=(2
ck^{4r_1'})^{-1}$ (where we use the same constant as in
Lemma~\ref{L:geometric2}). Then
$$
\left\|\left(P(\delta)(H(\k^{(1)}(\varphi))-z)P(\delta)\right)^{-1}\right\|\leq
8ck^{4r_1'}. $$ \end{lemma}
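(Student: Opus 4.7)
\textbf{Proof proposal for Lemma \ref{estnonres0}.}

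The plan is to use a Riesz-projector decomposition of the resolvent about the simple eigenvalue $\lambda^{(1)}(\k^{(1)}(\varphi))=k^{2l}$. Recall that by the very definition of $\k^{(1)}(\varphi)$ in Lemma \ref{ldk}, the value $k^{2l}$ is an eigenvalue of $H^{(1)}(\k^{(1)}(\varphi))=P(\delta)H(\k^{(1)}(\varphi))P(\delta)$, and by Theorem \ref{Thm1} it is the unique eigenvalue of this operator enclosed by the contour $C_1=\{w:|w-k^{2l}|=\frac{\tau l}{4}k^{2l-1-40\mu\delta}\}$. Since $(2ck^{4r_1'})^{-1}\ll \frac{\tau l}{4}k^{2l-1-40\mu\delta}$ for large $k$ (the LHS is negatively indexed in $k$ while the RHS is positively indexed), $C_2$ lies strictly inside $C_1$.

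For $z$ inside $C_1$, Cauchy's formula applied to the operator-valued function $(H^{(1)}(\k^{(1)}(\varphi))-w)^{-1}$, whose only singularity inside $C_1$ is a simple pole at $w=k^{2l}$ with residue $-\E^{(1)}(\k^{(1)}(\varphi))$, yields
\begin{equation*}
(H^{(1)}(\k^{(1)}(\varphi))-z)^{-1}=-\frac{\E^{(1)}(\k^{(1)}(\varphi))}{z-k^{2l}}+\frac{1}{2\pi i}\oint_{C_1}\frac{(H^{(1)}(\k^{(1)}(\varphi))-w)^{-1}}{w-z}\,dw.
\end{equation*}
For the singular term I use the bound $\|\E^{(1)}\|\leq 1+\|\E^{(1)}-\E_0\|_1 \leq 2$ for large $k$, which follows from Corollary \ref{corthm1}; on $C_2$ this gives $\|\E^{(1)}/(z-k^{2l})\|\leq 2\cdot 2ck^{4r_1'}=4ck^{4r_1'}$. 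For the contour integral I combine the estimate $\|(H^{(1)}(\k^{(1)}(\varphi))-w)^{-1}\|\leq \frac{16}{\tau l}k^{-2l+1+40\mu\delta}$ on $C_1$ from \eqref{ocenka*} with the geometric bound $|w-z|\geq \frac{\tau l}{4}k^{2l-1-40\mu\delta}(1-o(1))$ for $w\in C_1$, $z\in C_2$; the standard length-times-max estimate bounds the integral by $O(k^{-2l+1+40\mu\delta})$, which is negligible for large $k$. Summing the two contributions gives the claimed bound $8ck^{4r_1'}$.

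The main potential obstacle is that $\k^{(1)}(\varphi)$ is in general complex for $\varphi\in\W^{(1)}_j$, so $H^{(1)}(\k^{(1)}(\varphi))$ is only complex-symmetric and not self-adjoint, which rules out the naive identity $\|(H-z)^{-1}\|=1/\mathrm{dist}(z,\sigma(H))$. This forces the Riesz-projector route. All the inputs needed to make it work, however, have already been prepared: the uniform resolvent bound on $C_1$ from \eqref{ocenka*}, the perturbation estimate for $\E^{(1)}$ from Corollary \ref{corthm1}, and the analytic extension of $\lambda^{(1)}$, $\E^{(1)}$, $\k^{(1)}$ to the complex neighborhood of $\omega^{(1)}(k,\delta,\tau)$ from Corollary \ref{Corollary 3.8} and Lemma \ref{ldk}, so no further analytical work is required beyond assembling these ingredients.
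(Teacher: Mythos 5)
Your proposal is correct, and the inputs you assemble (the exact relation $\lambda^{(1)}(\k^{(1)}(\varphi))=k^{2l}$ from Lemma \ref{ldk}, the rank-one projector bound from Corollary \ref{corthm1}, the $C_1$-resolvent bound \eqref{ocenka*} in its complex form via Corollary \ref{Corollary 3.8}) are exactly the ones needed; the residue computation behind your splitting, including the vanishing of the cross term $\oint_{C_1}\E^{(1)}\,dw/\bigl((w-k^{2l})(w-z)\bigr)$, checks out, and the arithmetic $4ck^{4r_1'}+O(k^{-2l+1+40\mu\delta})\leq 8ck^{4r_1'}$ is fine. However, your route differs from the paper's. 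The paper does not use the Riesz/Laurent decomposition at all: it first shows, via the determinant inequality \eqref{determinants} and Rouch\'{e}'s theorem applied to $\det\bigl(P(\delta)(H-z)P(\delta)\,(P(\delta)(H_0-z)P(\delta))^{-1}\bigr)$, that the resolvent has exactly one pole inside $C_1$, located at $z=k^{2l}$, and then applies the maximum principle ("squeezing") to the holomorphic operator-valued function $(z-k^{2l})\bigl(P(\delta)(H-z)P(\delta)\bigr)^{-1}$, whose norm on $C_1$ is $O(1)$ by \eqref{ocenka*}, to propagate the bound inward to $C_2$. Your argument is arguably more elementary here (no determinant estimate, no Rouch\'{e}, no maximum principle), and it is available precisely because in this lemma the pole is known exactly and the associated spectral projector is explicitly controlled with $\|\E^{(1)}\|\leq 2$. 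What the paper's version buys is the template it reuses throughout (Lemmas \ref{L:estnonres1}, \ref{L:Pr}, etc.): there one only knows the number of poles inside a disc, not their locations or projectors, and the Rouch\'{e}-plus-maximum-principle squeezing still works, whereas the explicit residue decomposition would not transfer. One small point to make explicit if you write this up: for $\varphi\in\W^{(1)}_j$ the statement "$k^{2l}$ is the unique eigenvalue inside $C_1$, with simple pole and residue $-\E^{(1)}$" is the analytic continuation of Theorem \ref{Thm1} to complex $\varphi$, $\varkappa$, justified by Corollary \ref{Corollary 3.8} together with the fact that $\|\E^{(1)}-\E_0\|_1=o(1)$ forces $\operatorname{rank}\E^{(1)}=1$; you cite the right sources, but the rank-one argument is what guarantees the pole is simple, so it deserves a sentence.
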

\begin{proof} This estimate is sufficiently obvious and can be
obtained in many different ways. Here though we will use the
construction which we often will keep in mind when stating similar
estimates in what follows. We apply for the $z$ variable the
"squeezing" arguments which we used in the proof of
Lemma~\ref{L:estnonres1} for the variable $\varphi$. Namely, by
 \eqref{ocenka*} (or rather its
identical analogues for complex $\k^{(1)}(\varphi)$),
$$
\left\|\left(P(\delta)(H(\k^{(1)}(\varphi))-z)P(\delta)\right)^{-1}\right\|\leq
\frac{16}{\tau l}k^{-2l+1+40\mu\delta}, $$ when
$|z-k^{2l}|=\frac{\tau l}{4}k^{2l-1-40\mu\delta}$. Let us show that
analytic function
$\hbox{det}\,(P(\delta)(H(\k^{(1)}(\varphi))-z)P(\delta))$ has the
single simple zero $z=k^{2l}$ inside the circle $C_1$ . Indeed,
consider $$D(\varphi )=\det \left(P(\delta )\left(H \big(\k
^{(1)}(\varphi )\big)-zI\right)\left(P(\delta )\left(H _0\big(\k
^{(1)}(\varphi )\big)-zI\right)P(\delta )\right)^{-1} \right).$$
Obviously, $D(\varphi )=\det(I+A)$, where $I,A:P(\delta
)L_2(\Z^2)\to P(\delta ) L_2(\Z^2)$,
$$A(\varphi )=P(\delta )V\left(H _0\big(\k ^{(1)}(\varphi
)\big)-zI\right)^{-1}P(\delta ) .$$ Obviously, $D(\varphi )$ is a
meromorphic function inside $C_1$.  Next, we employ
\eqref{determinants} putting $A=A(\varphi )$, $B=0$. We obtain
$\left|\det (I+A)-1\right|\leq \|A\|_1exp (\|A\|_1+1).$  By
\eqref{ocenka}, $\|A(\varphi )\|_1<ck^{-2l+1+\delta (40\mu +4)}$
when $z\in C_1$. By Rouch\'{e}'s theorem, $D(\varphi )$ has only one
zero in $C_1$. Thus, $\det P(\delta )\left(H \big(\k ^{(1)}(\varphi
)\big)-zI\right)P(\delta )$ has one zero in $C_1$. Using this, we
immediately obtain that operator $\left(P(\delta )\left(H \big(\k
^{(1)}(\varphi )\big)-zI\right) P(\delta )\right)^{-1}$ has exactly
one pole inside $C_1$,  the pole being at the point $z=k^{2l}$.
Using the maximum principle, we obtain the required estimate on the
circle $C_2$.
\end{proof}
We also notice that the statement of the Lemma~\ref{L:geometric2}
still holds (with $2c$ instead of $c$) if we use $z\in C_2$ instead
of $k^{2l}$. Thus, if we put \begin{equation}P_j:=P(\varphi_j(0)),\
\  \ \wt{P}_j:=P_j+P(\delta),  \label{defP_j} \end{equation} then
(notice that $P_jVP(\delta)=0$ by \eqref{PVP*})
\begin{equation}\label{estfull}
\left\|\left(\wt{P}_j(H(\k^{(1)}(\varphi))-z)\wt{P}_j\right)^{-1}\right\|\leq
8ck^{4r_1'},\ \ \ |z-k^{2l}|=(2ck^{4r_1'})^{-1},\
\varphi\in\W^{(1)}_j. \end{equation} Here we also used
Lemma~\ref{estnonres0}. At last, considering from the beginning the
discs $\OO^{(2)}_{jm}$ with radius $\frac12 k^{-r_1'}$ instead of
$k^{-r_1'}$ one can easily see that similar estimates (with probably
larger constants) hold in $k^{-r_1'-\delta}$-neighborhood of ${\cal
W}^{(2)}$.

\section{Step II}
\subsection{Operator $H^{(2)}$. Perturbation Formulas}
Let $P(r_1)$ be an orthogonal projector onto $\Omega(r_1):=\{\m:\
|\|\p_\m\||\leq k^{r_1}\}$ and $H^{(2)}=P(r_1)HP(r_1) $.  From now
on we assume \begin{equation}r_1'=40\mu r_1+2l,\ \ \ 2<r_1<k^{\delta /8}. \label{Aug13-1} \end{equation} We
consider $H^{(2)}(\k^{(1)}(\varphi ))$ as a perturbation of
\begin{equation}\label{gulf1} \tilde H^{(1)}=\tilde P_jH(\k^{(1)}(\varphi ))\tilde
P_j+\left(P(r_1)-\tilde P_j\right)H_0(\k^{(1)}(\varphi
))\left(P(r_1)-\tilde P_j\right),\end{equation}where $\tilde P_j=P_j+P(\delta )$ and
$P_j$ is the projection $P$ corresponding to the interval $\Delta _j^{(1)}$
containing $\varphi $. By \eqref{PVP}, \eqref{PVP*}, the first term on the right-hand side of \eqref{gulf1} has a block structure:
$$ \tilde P_jH(\k^{(1)}(\varphi ))\tilde
P_j=  P(\delta)HP(\delta )+PHP=P(\delta)HP(\delta )+\sum _{\m \in
\MM_1}P_{\m}HP_{\m}+\sum _{j}P_{2}^jHP_{2}^j.$$ The second term in \eqref{gulf1} is, obviously, diagonal. Thus, $\tilde H^{(1)}$ has a block-diagonal structure.
Let $W$ be the perturbation of $\tilde H^{(1)}$, i.e, $W=H^{(2)}-\tilde H^{(1)}$. It is easy to see that:
\begin{equation}W=P(r_1)VP(r_1)-\tilde P_j V\tilde P_j. \label{W}\end{equation}
By analogy with \eqref{g}, \eqref{G},
\begin{equation}\label{g2} g^{(2)}_r({\k}):=\frac{(-1)^r}{2\pi
ir}\hbox{Tr}\oint_{C_2}\left(W(\tilde
H^{(1)}({\k})-zI)^{-1}\right)^rdz,
\end{equation} \begin{equation}\label{G2}
G^{(2)}_r({\k}):=\frac{(-1)^{r+1}}{2\pi i}\oint_{C_2}(\tilde
H^{(1)}({\k})-zI)^{-1}\left(W(\tilde
H^{(1)}({\k})-zI)^{-1}\right)^rdz.
\end{equation}

\begin{theorem} \label{Thm2} Suppose $\varphi $ is in
the real  $k^{-r_1'-\delta }$-neighborhood of $\omega
^{(2)}(k,\delta,\tau )$ and $\varkappa\in\R$,
$|\varkappa-\varkappa^{(1)}(\varphi )|\leq k^{-4r'_1-2l+1-\delta}$,
$\k=\varkappa(\cos \varphi ,\sin \varphi )$. Then, for sufficiently
large $k>k_1(V,\delta ,\tau )$ there exists a single eigenvalue of
$H^{(2)}({\k})$ in the interval\\ $\varepsilon _2( k,\delta,\tau
)=\left( k^{2l}-(2ck^{4r_1'})^{-1},
k^{2l}+(2ck^{4r_1'})^{-1}\right)$. It is given by the absolutely
converging series:
\begin{equation}\label{eigenvalue-2}\lambda^{(2)}({\k})=\lambda^{(1)}({\k})+
\sum\limits_{r=2}^\infty g^{(2)}_r({\k}).\end{equation} For
coefficients $g^{(2)}_r({\k})$ the following estimates hold:
\begin{equation}\label{estg2} |g^{(2)}_r({\k})|<ck^{-2k^{\delta
}Q^{-1}}k^{-(2l-2-40\mu \delta )(r-1)+2+8\delta }.
\end{equation}
The corresponding spectral projection is given by the series:
\begin{equation}\label{sprojector-2}
\E ^{(2)}({\k})=\E^{(1)}({\k})+\sum\limits_{r=1}^\infty
G^{(2)}_r({\k}), \end{equation} $\E^{(1)}({\k})$ being the spectral
projection of $H^{(1)}(\k)$. The operators $G^{(2)}_r({\k})$ satisfy
the estimates:
\begin{equation}
\label{Feb1a} \left\|G^{(2)}_r({\k})\right\|_1<ck^{-k^{\delta
}Q^{-1}}k^{-(2l-2-40\mu \delta )r+1+4\delta }.
\end{equation}
\begin{equation}G^{(2)}_r({\k})_{\s\s'}=0,\ \ \mbox{if}\ \ 10rk^{\delta}<\||\p_\s\||+\||\p_{\s'}\|| \label{Feb6a}
\end{equation}
\end{theorem}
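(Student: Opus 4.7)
The plan is to mirror the scheme used for Theorem~\ref{Thm1}: treat $H^{(2)}$ as a perturbation $H^{(2)} = \tilde H^{(1)} + W$ of the block-diagonal model $\tilde H^{(1)}$ from \eqref{gulf1}, and read the formulas \eqref{eigenvalue-2}, \eqref{sprojector-2} off the contour-integral representation
\[
\E^{(2)}(\k) = -\frac{1}{2\pi i} \oint_{C_2} (H^{(2)}(\k) - z)^{-1} \, dz
\]
applied to the Neumann series
\[
(H^{(2)}-z)^{-1} = \sum_{r=0}^{\infty} (\tilde H^{(1)}-z)^{-1}\bigl(-W(\tilde H^{(1)}-z)^{-1}\bigr)^r.
\]

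The first step is to obtain a uniform bound $\|(\tilde H^{(1)}(\k)-z)^{-1}\| \le ck^{4r_1'}$ for $z \in C_2$. The operator $\tilde H^{(1)}$ is block-diagonal with respect to the splitting $P(r_1) = \tilde P_j \oplus (P(r_1)-\tilde P_j)$: the $\tilde P_j$-block is controlled by \eqref{estfull} (extended to the enlarged $\varkappa$-disk exactly as in Lemma~\ref{L:geometric2}); on the complementary diagonal block, the membership $\m \in P(r_1)\setminus\tilde P_j$, i.e.\ $\m \notin \MM\cup\Omega(\delta)$, forces $\bigl||\k+\p_\m|_\R^{2l}-k^{2l}\bigr| \gtrsim k^{2l-1-40\mu\delta}$, which is far larger than the radius $(2ck^{4r_1'})^{-1}$ of $C_2$. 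The joint bound, together with $\|\tilde P_j\|_1 \le k^{4r_1}$, yields analogues of \eqref{ocenka*}, \eqref{jan27} and convergence of the Neumann series in trace class. Applying Rouch\'e's theorem to $\det(I+W(\tilde H^{(1)}-z)^{-1})$, exactly as in the proof of Theorem~\ref{Thm1}, simultaneously shows that $H^{(2)}$ has a single eigenvalue inside $C_2$ and identifies it with \eqref{eigenvalue-2}, \eqref{sprojector-2}.

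The main obstacle is extracting the super-exponentially small factor $k^{-2k^\delta Q^{-1}}$ in \eqref{estg2}, \eqref{Feb1a}. To obtain it, I would decompose $(\tilde H^{(1)}-z)^{-1} = -(z-\lambda^{(1)})^{-1}\E^{(1)} + R'(z)$ with $R'$ holomorphic in a neighborhood of $\lambda^{(1)}$; then the contour integral $\oint_{C_2}(WR)^r\,dz$ only collects residues at $\lambda^{(1)}$, so every surviving term contains at least one factor of the form $W\E^{(1)}$ or $\E^{(1)}W$. By \eqref{PVP}, \eqref{PVP*} and the range-$Q$ property \eqref{V_q=0} of $V$, a non-zero matrix element $W_{\m\m'}$ forces $\m$ and $\m'$ on opposite sides of the boundary of $\tilde P_j$ with $|\|\p_{\m-\m'}\||\le Q$; since $\E^{(1)}$ is concentrated in the $P(\delta)$-block, such a junction requires one of the indices to satisfy $|\|\p_{\cdot}\||\ge k^\delta-Q$. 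The decay estimate \eqref{matrix elements} of Corollary~\ref{corthm1} then bounds the matrix element by $(Ck)^{-(2l-1-44\mu\delta)(k^\delta-Q)/Q}$, which is $\le k^{-2k^\delta Q^{-1}}$ for sufficiently large $k$ since $l\ge 2$ and $\delta < (100\mu)^{-1}$.

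The remaining geometric decay $k^{-(2l-2-40\mu\delta)(r-1)}$ in \eqref{estg2} follows by pairing the remaining $r-1$ factors of $W(\tilde H^{(1)}-z)^{-1}$: each additional factor either lands in $\tilde P_j^c$, where the resolvent is of order $k^{-2l+1+40\mu\delta}$ while $\|W\|=O(1)$, or must cross back across the boundary of $\tilde P_j$ through another small $W\E^{(1)}$-type link of the same kind controlled above. Finally, the support property \eqref{Feb6a} is a direct induction on $r$: $W$ has matrix range $\le Q$ and the resolvent of $\tilde H^{(1)}$ restricted to any sub-block of $\tilde P_j$ has support of $|\|\cdot\||$-diameter $\le k^\delta/3$, so each factor $W(\tilde H^{(1)}-z)^{-1}$ extends the $|\|\cdot\||$-support by at most $k^\delta/3+Q < 5k^\delta$, and $r$ such factors by at most $10rk^\delta$, giving \eqref{Feb6a}.
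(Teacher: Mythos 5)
Your overall architecture matches the paper's (perturbation of the block operator $\tilde H^{(1)}$ of \eqref{gulf1}, contour integration over $C_2$, Rouch\'e for uniqueness, and the separation of the $\E^{(1)}$-pole so that only terms containing an $\E^{(1)}W$ or $W\E^{(1)}$ link survive, with \eqref{matrix elements} supplying the super-exponential factor). But there is a genuine gap at the step you dispose of in one sentence: you claim that the bound $\|(\tilde H^{(1)}-z)^{-1}\|\le ck^{4r_1'}$ on $C_2$, together with $\|\tilde P_j\|_1\le k^{4r_1}$, ``yields convergence of the Neumann series.'' It does not. Since $\|W\|\sim\|V\|=O(1)$, that bound only gives $\|W(\tilde H^{(1)}-z)^{-1}\|\lesssim k^{4r_1'}$, so the series $\sum_r(\tilde H^{(1)}-z)^{-1}\bigl(-W(\tilde H^{(1)}-z)^{-1}\bigr)^r$ has no reason to converge, and the per-order factor $k^{-(2l-2-40\mu\delta)(r-1)}$ in \eqref{estg2} is exactly the statement $\|W(\tilde H^{(1)}-z)^{-1}\|\lesssim k^{-(2l-2-40\mu\delta)}$, i.e.\ the paper's estimate \eqref{||A||}, which has to be proved, and whose proof is the main content of the theorem.

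Your paragraph on the geometric decay does not repair this, because its dichotomy (``lands in $\tilde P_j^c$, where the free resolvent is small, or crosses the boundary through a $W\E^{(1)}$-type link'') omits the hard case: a factor $W(\tilde H^{(1)}-z)^{-1}$ can carry the chain from the boundary region into a resonant block $P_\m H P_\m$ or $P_2^j H P_2^j$ with $\m\in\MM$, whose resolvent is \emph{not} small (only $O(k^{4r_1'})$ by Corollary \ref{estnonres1}) and which involves $\E^{(1)}$ not at all, since $\E^{(1)}$ lives in the $P(\delta)$-block near $\m={\bf 0}$ while these clusters are far from it; the decay \eqref{matrix elements} is of no use there. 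The paper's resolution is the second inequality of \eqref{step2tri}: expand $\wt P(\tilde H^{(1)}-z)^{-1}$ in powers of $(H_0-z)^{-1}\wt P V\wt P$ up to order $r_0=k^{\delta}/(6Q)-1$; because $W$ is supported within distance $Q$ of the boundary of $\tilde P_j$, and every resonant $\m$ (and ${\bf 0}$) lies at $\||\cdot\||$-distance at least $k^{\delta}/3$ from that boundary, locality of $V$ forces the first $r_0$ terms $B_r$ to see only the free resolvent at non-resonant sites, each contributing $O(k^{-2l+2+40\mu\delta})$, while the remainder carries $(Ck^{-2l+2+40\mu\delta})^{r_0+1}$, which overwhelms the factor $k^{4r_1'}$ precisely because $(2l-2-40\mu\delta)k^{\delta}/(6Q)>5r_1'$ (this is where the choices $r_1'=40\mu r_1+2l$ and $r_1<k^{\delta/8}$ enter). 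Without an argument of this kind your proposal establishes neither the existence of the single eigenvalue inside $C_2$ nor the estimates \eqref{estg2}, \eqref{Feb1a}; the $\E^{(1)}$-link mechanism you describe correctly accounts only for the extra factor $k^{-k^{\delta}Q^{-1}}$ (and $k^{-2k^{\delta}Q^{-1}}$ for the trace terms), not for convergence.
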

\begin{corollary} \label{corthm2} For the perturbed eigenvalue and its spectral
projection the following estimates hold:
 \begin{equation}\label{perturbation-2}
\lambda^{(2)}({\k})=\lambda^{(1)}({\k})+ O\left(k^{-2k^{\delta
}Q^{-1}}k^{-2l+4+48\mu \delta }\right),
\end{equation}
\begin{equation}\label{perturbation*-2}
\left\|\E^{(2)}({\k})-\E^{(1)}({\k})\right\|_1<ck^{-k^{\delta
}Q^{-1}}k^{-(2l-3-40\mu \delta -4\delta )}.
\end{equation}
\begin{equation}
\left|\E^{(2)}({\k})_{\s\s'}\right|<k^{-d^{(2)}(\s,\s')},\ \
\mbox{when}\ \||\p_\s\||>k^{\delta } \mbox{\ or }
\||\p_{\s'}\||>k^{\delta },\label{Feb6b}
\end{equation}
$$d^{(2)}(\s,\s')=\frac{1}{10}(\||\p_\s\||+\||\p_{\s'}\||)k^{-\delta }(2l-2-40\mu\delta )+k^{\delta
}Q^{-1}-1-4\delta.$$
\end{corollary}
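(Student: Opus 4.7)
The plan is to treat Theorem \ref{Thm2} as already established and extract the three estimates of Corollary \ref{corthm2} directly from the convergent series \eqref{eigenvalue-2}, \eqref{sprojector-2}, combined with the term-wise bounds \eqref{estg2}, \eqref{Feb1a}, and the vanishing property \eqref{Feb6a}. In all three cases the dominant contribution comes from a single low-order term of the series, and what remains is to verify that the tail is geometric and that the exponents in the corollary match those obtained by specializing the general bounds to that lowest index.

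For \eqref{perturbation-2}, I would start from \eqref{eigenvalue-2}, which gives $\lambda^{(2)}(\k)-\lambda^{(1)}(\k)=\sum_{r=2}^{\infty}g^{(2)}_r(\k)$. Applying \eqref{estg2} with $r=2$ yields an exponent $-(2l-2-40\mu\delta)+2+8\delta=-2l+4+40\mu\delta+8\delta$, and since $\mu\geq 2$ we have $40\mu\delta+8\delta\leq 48\mu\delta$, giving the advertised $k^{-2l+4+48\mu\delta}$. For $r\geq 3$ each additional factor contributes $k^{-(2l-2-40\mu\delta)}$, which is small for $\delta<(100\mu)^{-1}$ and $l\geq 2$, so the series is geometric and bounded by twice its $r=2$ term. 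The common prefactor $k^{-2k^{\delta}Q^{-1}}$ is the same in every term.

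For \eqref{perturbation*-2}, the same argument is applied to \eqref{sprojector-2}: $\E^{(2)}(\k)-\E^{(1)}(\k)=\sum_{r=1}^{\infty}G^{(2)}_r(\k)$. The $r=1$ term of \eqref{Feb1a} has exponent $-(2l-2-40\mu\delta)+1+4\delta=-(2l-3-40\mu\delta-4\delta)$, which is exactly the advertised bound, and higher $r$ again give a geometric tail dominated by $r=1$.

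For the off-diagonal bound \eqref{Feb6b}, observe first that when $\||\p_{\s}\||>k^{\delta}$ or $\||\p_{\s'}\||>k^{\delta}$, the corresponding index lies outside $\Omega(\delta)$, so by construction $\E^{(1)}(\k)_{\s\s'}=0$ (since every term in the defining series \eqref{sprojector} for $\E^{(1)}$ is supported on $\Omega(\delta)\times\Omega(\delta)$). Thus $\E^{(2)}(\k)_{\s\s'}=\sum_{r\geq 1}G^{(2)}_r(\k)_{\s\s'}$. The vanishing relation \eqref{Feb6a} forces this sum to start at
\[
r_0=\left\lceil\tfrac{1}{10}k^{-\delta}\bigl(\||\p_{\s}\||+\||\p_{\s'}\||\bigr)\right\rceil.
\]
Bounding each matrix entry by the trace norm, $|G^{(2)}_r(\k)_{\s\s'}|\leq\|G^{(2)}_r(\k)\|_1$, and substituting $r=r_0$ into \eqref{Feb1a} produces exactly the exponent
\[
-d^{(2)}(\s,\s')=-\tfrac{1}{10}k^{-\delta}\bigl(\||\p_{\s}\||+\||\p_{\s'}\||\bigr)(2l-2-40\mu\delta)-k^{\delta}Q^{-1}+1+4\delta,
\]
and the tail $\sum_{r>r_0}\|G^{(2)}_r\|_1$ is again geometric with the same lead term. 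The only point that requires a moment's care is the passage from $\|\cdot\|_1$ to the individual matrix entry, and the observation that $\E^{(1)}$ contributes nothing in the stated regime; both are essentially bookkeeping, so I do not foresee any real obstacle beyond keeping the exponents consistent.
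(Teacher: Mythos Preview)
Your proposal is correct and follows essentially the same route as the paper: the paper derives \eqref{perturbation-2} and \eqref{perturbation*-2} by summing the series \eqref{eigenvalue-2}, \eqref{sprojector-2} with the term-wise bounds \eqref{estg2}, \eqref{Feb1a}, and obtains \eqref{Feb6b} by combining \eqref{Feb1a} and the vanishing condition \eqref{Feb6a} together with the observation that $\E^{(1)}(\k)_{\s\s'}=0$ once $\||\p_\s\||>k^{\delta}$ or $\||\p_{\s'}\||>k^{\delta}$. Your write-up simply makes the exponent arithmetic and the choice of the starting index $r_0$ explicit.
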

Formulas \eqref{perturbation-2} and \eqref{perturbation*-2} easily
follow from \eqref{eigenvalue-2}, \eqref{sprojector-2} and
\eqref{estg2} and \eqref{Feb1a}. The estimate \eqref{Feb6b} follows
from \eqref{sprojector-2}, \eqref{Feb1a} and \eqref{Feb6a}. Indeed,
using these estimates, we obtain
$\left|\left(\E^{(2)}({\k})-\E^{(1)}({\k})\right)_{\s\s'}\right|
<k^{-d^{(2)}(\s,\s')}$. Considering that $\E^{(1)}({\k})_{\s\s'}=0$
when $\||\p_\s\||>k^{\delta }$ or $\||\p_{\s'}\||>k^{\delta }$, we
arrive at  \eqref{Feb6b}.

\begin{proof}Let
$\varphi\in\W^{(1)}_j$ for some $j$. Put $P'_j:=P(r_1)-\wt{P}_j$ (we
will omit the index $j$ in what follows). By \eqref{gulf1}, \eqref{W},
$$\tilde{H}^{(1)}\big(\k ^{(1)}(\varphi)\big):=\wt{P}H\big(\k
^{(1)}(\varphi)\big)\wt{P}+P'H_0\big(\k ^{(1)}(\varphi)\big)P',\ \ \
W:=P'VP'+P'V\wt{P}+\wt{P}VP'.$$
We will often omit $\k ^{(1)}(\varphi)$ in the arguments when it
cannot lead to confusion. By \eqref{estfull}, we have
\begin{equation}\label{step2raz}
\left\|(\tilde{H}^{(1)}-zI)^{-1}\right\| <8ck^{4r'_1}.
\end{equation} Let us consider the perturbation series
\begin{equation}\label{step2dva}
(H^{(2)}-z)^{-1}=\sum_{r=0}^\infty(\tilde H^{(1)}-z)^{-1}A^r,
\end{equation}
$$A=-W(\tilde H^{(1)}-z)^{-1}.$$
To check the convergence it is enough to show that
\begin{equation}\label{||A||}\|A\|<ck^{-2l+2+40\mu \delta }.
\end{equation}
Estimates \eqref{step2raz}, \eqref{||A||} yield:
\begin{equation}\label{step2raz*}
\left\|(H^{(2)}(\k ^{(1)}(\varphi))-z)^{-1}\right\| <16ck^{4r'_1}.
\end{equation}
To prove \eqref{||A||} it suffice to establish the following estimates:
\begin{equation}\label{step2tri}
\begin{split}& \|P'VP'(\tilde H^{(1)}-z)^{-1}\|<ck^{-2l+2+40\mu \delta },\cr &
\|P'V\wt{P}(\tilde H^{(1)}-z)^{-1}\|<ck^{-2l+2+40\mu \delta },\cr &
\|\wt{P}VP'(\tilde H^{(1)}-z)^{-1}\|<ck^{-2l+2+40\mu \delta }.
\end{split} \end{equation} The first and the third inequalities in
\eqref{step2tri} are simple.  They follow from the definition of
$P'$ and $\wt{P}$ and identities $$ (\tilde
H^{(1)}-z)^{-1}P'=P'(\tilde
H^{(1)}-z)^{-1}=P'(H_0\big(\k^{(1)}(\varphi)\big)-z)^{-1}P'.$$ Indeed, by
definition, $(P')_{\m\m}=1$ if and only if $\varphi
_j(0)\not \in \OO_\m(k,8)$ and $\||\p_\m\||>k^{\delta }$. Therefore,
$\left||\k^{(1)}(\varphi
_0)+\p_\m|_\R^{2l}-k^{2l}\right|>k^{2l-2-40\mu \delta }$ when
$(P')_{\m\m}=1$ and the estimate is stable with respect to
perturbation of order $k^{-1-40\mu \delta }$.  The first and third inequalities in \eqref{step2tri} easily follow. Let us prove the
second estimate. We represent $\wt{P}(\tilde H^{(1)}-z)^{-1}$ as follows:
\begin{align}\nonumber
&\wt{P}(\tilde H^{(1)}-z)^{-1}=\cr
&\sum_{r=0}^{r_0}\left((H_0-z)^{-1}\wt{P}V\wt{P}\right)^r(H_0-z)^{-1}\wt{P}+
\left((H_0-z)^{-1}\wt{P}V\wt{P}\right)^{r_0+1}(\tilde
H^{(1)}-z)^{-1}\wt{P}, \end{align}
 where $r_0$ to be fixed later.
Then,
\begin{equation}\label{step25}
\begin{split}& \|P'V\wt{P}(\tilde H^{(1)}-z)^{-1}\|\leq
\sum_{r=0}^{r_0}\left\|B_r\right\|+
\left\|P'V\left((H_0-z)^{-1}\wt{P}V\wt{P}\right)^{r_0+1}\right\|\|(\tilde
H^{(1)}-z)^{-1}\wt{P}\|, \cr
&
B_r:=P'V\left((H_0-z)^{-1}\wt{P}V\wt{P}\right)^r(H_0-z)^{-1}\wt{P}.
\end{split} \end{equation}
Note that $B_r=P'B_r\wt{P}$ and matrix elements $(B_r)_{\j\s}$ are
equal to zero if $|\|\p_\j-\p_\s\||>Q(r+1)$ (see \eqref{V_q=0}).
Thus, the only non-trivial elements $(B_r)_{\j\s}$ are such that $$
\j\in\Omega(r_1)\setminus\left(\tilde{\MM}(\varphi _0)\cup \Omega (\delta )\right),\ \ \
\s\in\tilde{\MM}(\varphi _0)\cup \Omega (\delta ),\ \ \ |\|\p_\j-\p_\s\||\leq Q(r+1). $$
Let $r:Q(r+1)\leq k^\delta/6$. It follows that  $(B_r)_{\j\s}=0$ if $\s\in \MM$ or $\s=0$, since  such $\s$ have a distance greater than $\frac{1}{3}k^{\delta }$ from $\j$. If
$\s \not \in \MM$ or $\s \neq 0$, then $\left||\k^{(1)}(\varphi) +\p_{\s}|_{\R}^{2l}-z\right|>k^{2l-2-40\mu \delta }$. Therefore, for $r:Q(r+1)\leq k^\delta/6$ we have:
$$ \|B_r\|\leq
(Ck^{-2l+2+40\mu\delta})^{r+1},\ \ \
\left\|P'V\left((H_0-z)^{-1}\wt{P}V\wt{P}\right)^{r+1}\right\|\leq
(Ck^{-2l+2+40\mu\delta})^{r+1}, $$ with some absolute constant
$C$. Now, we fix $r_0:=k^\delta /(6Q)-1$. Then the condition $Q(r+1)\leq
k^\delta/6$ is satisfied for all $r\leq r_0$ and $$ \|P'V\wt{P}(\tilde H^{(1)}-z)^{-1}\|\leq
\sum_{r=0}^{r_0}(Ck^{-2l+2+40\mu\delta})^{r+1}+(Ck^{-2l+2+40\mu\delta})^{r_0+1}4ck^{4r'_1}.
$$  Assuming that $k$ is
large enough (in particular, $\frac{(2l-2-40\mu\delta)
k^\delta}{6Q}>5r'_1$) we obtain the second inequality in \eqref{step2tri}.

To prove \eqref{Feb1a} we consider the operator
$AP(\delta)=WP(\delta )\left(\tilde H^{(1)}-z\right)^{-1}$ and
represent it as $AP(\delta)=A_0+A_1+A_2$, where
$A_0=\left(P(r_1)-\E^{(1)}({\k})\right)A \left(P(r_1)
-\E^{(1)}({\k})\right)$, $A_1=\left(P(r_1 )-\E^{(1)}({\k})\right)A
\E^{(1)}({\k})$, $A_2= \E^{(1)}({\k})A
\left(P(r_1)-\E^{(1)}({\k})\right)$. Note that we have
$\E^{(1)}({\k})W\E^{(1)}({\k})=0$, because of \eqref{W}. We see that
$$\oint _{C_2}\left(\tilde H^{(1)}-z\right)^{-1}A_0^r dz=0,$$ since
the integrand is a holomorphic function inside $C_2$. Therefore,
\begin{equation} \label{Feb1} G^{(2)}_r({\k})=\frac{(-1)^{r}}{2\pi i}\sum
_{j_1,...j_r=0,1,2,\ j_1^2+...+j_r^2\neq 0}I_{j_1...j_r},\ \ \ \
I_{j_1...j_r}:=\oint _{C_2}\left(\tilde
H^{(1)}-z\right)^{-1}A_{j_1}.....A_{j_r} dz.
\end{equation} At least one of indices in each term is equal to 1 or 2.
Let us show that
\begin{equation} \label{A_2}
\|A_2\|_1<ck^{-k^{\delta }Q^{-1}}k^{-(2l-2-40\mu \delta )+1+4\delta}.
\end{equation} First, we notice that
$\E^{(1)}W(P(r_1)-\E^{(1)})=\E^{(1)}WP'$ by \eqref{W} and \eqref{PVP*}. It suffices
to show that
\begin{equation}\label{Feb6}\|\E^{(1)}WP'\|_1<k^{-k^{\delta
}Q^{-1}+1+4\delta},\end{equation} since $\|P'\left(\tilde
H^{(1)}-z\right)^{-1}\|=\|P'\left(
H_0-z\right)^{-1}\|<k^{-(2l-2-40\mu \delta)}$ for $z\in C_2 $.
Indeed,
$$\left(\E^{(1)}WP'\right)_{\s\s'}=\sum _{\s'':\ \||\p_{\s''}\||\leq k^{\delta},\ \||\p_{\s''-\s'}\||\leq
Q}\E^{(1)}_{\s\s''}W_{\s''-\s'}$$ when $\||\p_{\s'}\||>k^{\delta }$
and it is equal to zero otherwise. Hence,
$$\left|\left(\E^{(1)}WP'\right)_{\s\s'}\right|\leq \|W\|
\sum _{\s'':\ k^{\delta }-Q\leq \||\p_{\s''}\||\leq k^{\delta
}}\E^{(1)}_{\s\s''}$$ if $\||\p_{\s'}\||<k^{\delta }+Q$ and zero
otherwise. Using \eqref{matrix elements}, we obtain
\begin{equation} \label{E1-1}\left|\left(\E^{(1)}WP'\right)_{\s\s'}\right|<
ck^{4\delta }\max _{\||\p_{\s''}\||>k^{\delta
}-Q}k^{-d^{(1)}(\s,\s'')}.\end{equation} It easily follows:
$$\left|\left(\E^{(1)}WP'\right)_{\s\s'}\right|<ck^{4\delta }k^{-(2l-1-40\mu\delta)(k^{\delta }Q^{-1}-1+\||\p_\s\||Q^{-1})}$$
when  $\||\p_{\s'}\||< k^{\delta }+Q$, and zero otherwise. It
follows $\left\|\E^{(1)}WP'\right\|<ck^{-k^{\delta
}Q^{-1}+1+4\delta}$. Considering that $\E^{(1)}$ is a
one-dimensional projection, we obtain the same estimate for $\bf S_1
$-norm, namely, \eqref{Feb6}. Thus, we have proved \eqref{A_2}.
Let us estimate $I_{j_1...j_r}$. Suppose one of the indices is equal to 2.
Substituting \eqref{A_2} into \eqref{Feb1} and taking into account
$\|\E^{(1)}\left(\tilde H^{(1)}-z\right)^{-1}\|<b_2^{-1}$, where
$b_2$ is the radius of $C_2$, we obtain:
$$\left\|I_{j_1...j_r}\right\|<ck^{-k^{\delta
}Q^{-1}}k^{-(2l-2-40\mu \delta )r+1+4\delta }.$$
Note that the operator
$A_1$ is always followed by $A_2$ unless $A_1$ occupies the very
last position in the product. Thus, it remains to consider the case
$A_{j_1}.....A_{j_r}= A_0^{r-1}A_1$. It is easy to see that
$$\left(\tilde
H^{(1)}-z\right)^{-1}A_{0}^{r-1}A_{1}=\left(\left(\tilde
H^{(1)}-\bar z\right)^{-1}A_2(\bar z)A_{0}^{r-1}(\bar z)\right)^*.$$
This implies the estimate for this case too.
Therefore,
$$\left\|G^{(2)}_r({\k})\right\|<ck^{-k^{\delta
}Q^{-1}}k^{-(2l-2-40\mu \delta )r+1+4\delta }.$$ The same estimate
can be written for the $\bf S_1$ norm of this operator, since
$\E^{(1)}$ is one-dimensional.

Let us obtain the estimate for $g_r({\k})$.
Obviously,\begin{equation} \label{Feb1'}
g^{(2)}_r({\k})=\frac{(-1)^r}{2\pi ir}\sum _{j_1,...j_r=0,1,2,\
j_1^2+...+j_r^2\neq 0}Tr\oint _{C_2}A_{j_1}.....A_{j_r} dz.
\end{equation}
Note that each term contains both $A_1$ and $A_2$, since we compute
the trace of the integral. Using \eqref{Feb6}, we obtain:
$\|A_1\|_1<cb^{-1}_2k^{-k^{\delta }Q^{-1}+1+4\delta}$. Combining
this estimate with \eqref{A_2} and \eqref{||A||}, we obtain
\eqref{estg2} for $r\geq 2$. Finally, applying \eqref{g2} in the
case $r=1$, we see that $g^{(2)}_1({\k})=0$, since
$\E^{(1)}W\E^{(1)}=0$.

To prove \eqref{Feb6a} it's enough to notice that the biggest block
of $\tilde H^{(1)}$ has the size not greater than $2k^\delta$.

\end{proof}

It is easy to see that coefficients $g^{(2)}_r({\k})$ and operators
$G^{(2)}_r({\k})$ can be analytically extended into the complex
$k^{-r_1'-\delta}$-neighborhood of $\omega ^{(2)}$ (in fact, into
$k^{-r_1'-\delta}$-neighborhood of $\W^{(2)}$) as functions of
$\varphi $ and to the complex $(k^{-4r'_1-2l+1-\delta})-$
neighborhood of $\varkappa =\varkappa^{(1)}(\varphi )$ as functions
of $\varkappa$, estimates \eqref{estg2}, \eqref{perturbation-2}
being preserved. Now, we use formulae \eqref{g2},
\eqref{eigenvalue-2} to extend
$\lambda^{(2)}({\k})=\lambda^{(2)}(\varkappa,\varphi)$ as an
analytic function. Obviously, series \eqref{eigenvalue-2} is
differentiable. Using Cauchy integral and Lemma
\ref{L:derivatives-1} we get the following lemma.
\begin{lemma} \label{L:derivatives-2}Under conditions of Theorem \ref{Thm2} the following
estimates hold when $\varphi \in \omega ^{(2)}(k,\delta )$ or its
complex $k^{-r_1'-\delta}$-neighborhood and $\varkappa\in \C:$
$|\varkappa-\varkappa^{(1)}(\varphi )|<k^{-4r'_1-2l+1-\delta}:$
\begin{equation}\label{perturbation-2c}
\lambda^{(2)}({\k})=\lambda^{(1)}({\k})+ O\left(k^{-2k^{\delta
}Q^{-1}}k^{-2l+4+48\mu \delta }\right),
\end{equation}
\begin{equation}\label{estgder1-2k}
\frac{\partial\lambda^{(2)}}{\partial\varkappa}=\frac{\partial\lambda^{(1)}}{\partial\varkappa}
+ O\left(k^{-2k^{\delta }Q^{-1}}k^{4r_1'+3+(48\mu +1)\delta
}\right), \end{equation}
\begin{equation}\label{estgder1-2phi}\frac{\partial\lambda^{(2)}}{\partial \varphi }=\frac{\partial\lambda^{(1)}}{\partial \varphi }+O\left(k^{-2k^{\delta }Q^{-1}}k^{r_1'-2l+4+(48\mu +1)\delta }\right),
 \end{equation}
\begin{equation}\label{estgder2-2} \frac{\partial^2\lambda^{(2)}}
{\partial\varkappa^2}= \frac{\partial^2\lambda^{(1)}}
{\partial\varkappa^2}+O\left(k^{-2k^{\delta
}Q^{-1}}k^{8r_1'+2l+2+(48\mu +2)\delta }\right), \end{equation}
\begin{equation} \label{gulf2} \frac{\partial^2\lambda^{(2)}}
{\partial\varkappa\partial \varphi
}=\frac{\partial^2\lambda^{(1)}}{\partial\varkappa\partial \varphi
}+ O\left(k^{-2k^{\delta }Q^{-1}}k^{5r_1'+3+(48\mu +2)\delta
}\right),
\end{equation}
\begin{equation} \label{gulf3}
\frac{\partial^2\lambda^{(2)}}{\partial\varphi ^2}=\frac{\partial^2\lambda^{(1)}}{\partial\varphi ^2}+O\left(k^{-2k^{\delta }Q^{-1}}k^{2r_1'-2l+4+(48\mu +2)\delta }\right).
\end{equation}\end{lemma}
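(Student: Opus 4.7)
The plan is to work from the absolutely convergent representation \eqref{eigenvalue-2} and split
\[
\lambda^{(2)}(\varkappa,\varphi) = \lambda^{(1)}(\varkappa,\varphi) + \sum_{r=2}^{\infty} g^{(2)}_r(\varkappa,\varphi).
\]
The derivatives of the first summand are already controlled by Lemma \ref{L:derivatives-1}. What remains is to differentiate the tail $S(\varkappa,\varphi):=\sum_{r\ge 2}g^{(2)}_r(\varkappa,\varphi)$ and show that each differentiation only costs the reciprocal of the appropriate radius of analyticity, with the series still dominated by its $r=2$ term.

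For the tail, fix $\varphi_0$ in the real $k^{-r_1'-\delta}$-neighborhood of $\omega^{(2)}$ and set $\varkappa_0=\varkappa^{(1)}(\varphi_0)$. By the analytic-extension statement recorded immediately after the proof of Theorem \ref{Thm2} (itself a consequence of Lemma \ref{L:geometric2} and Corollary \ref{estnonres1}, which keep the resolvent bounds \eqref{Mon3a**}, \eqref{Mon3a***} valid on the needed complex neighborhoods), every $g^{(2)}_r$ is holomorphic on the bi-disk
\[
D:=\{|\varphi-\varphi_0|<\rho_\varphi\}\times\{|\varkappa-\varkappa_0|<\rho_\varkappa\},\qquad \rho_\varphi:=k^{-r_1'-\delta},\ \ \rho_\varkappa:=k^{-4r_1'-2l+1-\delta},
\]
and the supremum bound \eqref{estg2} continues to hold throughout $D$. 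The Cauchy integral formula applied on a sub-bi-disk of half the radii then yields
\[
|\partial_\varphi g^{(2)}_r|\lesssim \rho_\varphi^{-1}\sup_D |g^{(2)}_r|,\qquad |\partial_\varkappa g^{(2)}_r|\lesssim \rho_\varkappa^{-1}\sup_D |g^{(2)}_r|,
\]
and the analogous bounds with $\rho_\varphi^{-2}$, $\rho_\varkappa^{-2}$, $\rho_\varphi^{-1}\rho_\varkappa^{-1}$ for the three second derivatives.

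Next, I combine these Cauchy estimates with the exponential-in-$r$ decay of $|g^{(2)}_r|$ coming from \eqref{estg2}. Since $l>1$ and $\delta$ is small, the factor $k^{-(2l-2-40\mu\delta)(r-1)}$ forces geometric convergence of the series for $S$ and each of its derivatives, so the dominant term is $r=2$, which is bounded by $ck^{-2k^\delta Q^{-1}}k^{-2l+4+(40\mu+8)\delta}$. Multiplying this by $\rho_\varkappa^{-1}=k^{4r_1'+2l-1+\delta}$ reproduces \eqref{estgder1-2k}; multiplying by $\rho_\varphi^{-1}=k^{r_1'+\delta}$ reproduces \eqref{estgder1-2phi}; and multiplying by the three second-order Cauchy factors reproduces \eqref{estgder2-2}, \eqref{gulf2}, and \eqref{gulf3}, respectively. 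The bound \eqref{perturbation-2c} is simply \eqref{perturbation-2} extended to the complex neighborhood, which is immediate from the preserved $C^0$ estimate on $D$. The choice $r_1'=40\mu r_1+2l$ from \eqref{Aug13-1} guarantees that none of these exponents overflows the "super-exponentially small" prefactor $k^{-2k^\delta Q^{-1}}$.

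The main technical point, rather than a genuine obstacle, is verifying that the analyticity radii $\rho_\varphi,\rho_\varkappa$ used above are indeed available uniformly on the $k^{-r_1'-\delta}$-neighborhood of $\omega^{(2)}$: a first-order Cauchy bound at a point requires genuine holomorphy on a slightly larger disk, and a second-order bound on yet a larger one. This is why $\W^{(2)}$ was defined by excluding disks of radius $k^{-r_1'}$ rather than $k^{-r_1'-\delta}$, leaving a safety margin; concretely, Lemma \ref{L:geometric2} and the remark at the end of Section \ref{GSII} provide the required resolvent control on a $k^{-r_1'-\delta}$-neighborhood of $\W^{(2)}$, which is exactly what allows the contour integrals \eqref{g2}, \eqref{G2} to be defined, and hence the estimate \eqref{estg2} to persist, on a slightly enlarged bi-disk around each $(\varphi_0,\varkappa^{(1)}(\varphi_0))$. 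With this in place the Cauchy-based bookkeeping above is routine.
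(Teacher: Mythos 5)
Your proposal is correct and follows essentially the same route as the paper: the paper also extends $g^{(2)}_r$ (with the estimate \eqref{estg2} preserved) analytically to the complex $k^{-r_1'-\delta}$-neighborhood in $\varphi$ and the $k^{-4r_1'-2l+1-\delta}$-neighborhood in $\varkappa$, differentiates the series \eqref{eigenvalue-2} term by term via the Cauchy integral, and combines with Lemma \ref{L:derivatives-1}. Your exponent bookkeeping (dominant $r=2$ term times the reciprocal analyticity radii) reproduces the stated bounds, and the margin issue you flag is exactly what the remark at the end of Section \ref{GSII} (shrinking the discs $\OO^{(2)}_{jm}$ to radius $\tfrac12 k^{-r_1'}$) is there to handle.
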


\subsection{\label{IS2}Isoenergetic Surface for Operator $H^{(2)}$}

\begin{lemma}\label{ldk-2} \begin{enumerate}
\item For every sufficiently large $\lambda $, $\lambda :=k^{2l}$, and $\varphi $ in the real  $\frac{1}{2} k^{-r_1'-\delta }$-neighborhood
of $\omega^{(2)}(k,\delta, \tau )$ , there is a unique
$\varkappa^{(2)}(\lambda, \varphi )$ in the interval
$I_1:=[\varkappa^{(1)}(\lambda, \varphi )-\frac{1
}{2}k^{-4r_1'-2l+1-\delta },\varkappa^{(1)}(\lambda, \varphi
)+\frac{1 }{2}k^{-4r_1'-2l+1\delta },]$, such that
    \begin{equation}\label{2.70-2}
    \lambda^{(2)} \left(\k
^{(2)}(\lambda ,\varphi )\right)=\lambda ,\ \ \k ^{(2)}(\lambda
,\varphi ):=\varkappa^{(2)}(\lambda ,\varphi )\vec \nu(\varphi).
    \end{equation}
\item  Furthermore, there exists an analytic in $ \varphi $ continuation  of
$\varkappa^{(2)}(\lambda ,\varphi )$ to the complex  $\frac{1}{2}
k^{-r_1'-\delta }$-neighborhood of $\omega^{(2)}(k,\delta, \tau )$
such that $\lambda^{(2)} (\k ^{(2)}(\lambda, \varphi ))=\lambda $.
Function $\varkappa^{(2)}(\lambda, \varphi )$ can be represented as
$\varkappa^{(2)}(\lambda, \varphi )=\varkappa^{(1)}(\lambda, \varphi
)+h^{(2)}(\lambda, \varphi )$, where
\begin{equation}\label{dk0-2} |h^{(2)}(\varphi )|=
O\left(k^{-2k^{\delta }Q^{-1}}k^{-4l+5+48\mu \delta }\right),
\end{equation}
\begin{equation}\label{dk-2}
\frac{\partial{h}^{(2)}}{\partial\varphi}= O\left(k^{-2k^{\delta
}Q^{-1}}k^{r_1'-4l+5+49\mu \delta }\right),\ \ \ \ \
\frac{\partial^2{h}^{(2)}}{\partial\varphi^2}=O\left(k^{-2k^{\delta
}Q^{-1}}k^{2r_1'-4l+5+50\mu \delta }\right).
\end{equation} \end{enumerate}\end{lemma}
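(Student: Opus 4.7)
The plan is to imitate the proof of Lemma \ref{ldk}, treating $H^{(2)}$ as a perturbation of $H^{(1)}$ by viewing $\lambda^{(2)}(\k)$ as a small holomorphic perturbation of $\lambda^{(1)}(\k)$. The role played by $k$ and $\lambda^{(0)}(\k) = \varkappa^{2l}$ in Lemma \ref{ldk} will now be played by $\varkappa^{(1)}(\lambda,\varphi)$ and $\lambda^{(1)}(\k)$, and the essential new input is Lemma \ref{L:derivatives-2}, which gives the super-exponentially small bound on the discrepancy $\lambda^{(2)}-\lambda^{(1)}$ together with differentiated versions.

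For part (1), fix real $\varphi$ in the $\frac{1}{2}k^{-r_1'-\delta}$-neighborhood of $\omega^{(2)}$ and set $\varkappa = \varkappa^{(1)}(\lambda,\varphi)+t$ with $|t|\leq \frac{1}{2}k^{-4r_1'-2l+1-\delta}$. By \eqref{estgder1-2k} together with \eqref{estgder1}, the function $t\mapsto \lambda^{(2)}(\varkappa^{(1)}+t,\varphi)$ is strictly increasing with derivative essentially $2l\varkappa^{2l-1}$. At $t=0$ one has $\lambda^{(2)}(\varkappa^{(1)},\varphi)-\lambda = O(k^{-2k^\delta Q^{-1}}k^{-2l+4+48\mu\delta})$ by \eqref{perturbation-2c} and $\lambda^{(1)}(\varkappa^{(1)},\varphi)=\lambda$. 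Comparing this tiny discrepancy with the much larger derivative times $|t|$ at the endpoints gives a change of sign and hence existence and uniqueness of $\varkappa^{(2)}$ inside $I_1$; in fact inside a much smaller subinterval of radius $O(k^{-2k^\delta Q^{-1}}k^{-4l+5+48\mu\delta})$, which already produces \eqref{dk0-2}.

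For part (2), regard $\lambda^{(2)}(\varkappa,\varphi)-\lambda$ as holomorphic in $\varkappa$ on the disk $D:=\{|\varkappa-\varkappa^{(1)}(\varphi)|<k^{-4r_1'-2l+1-\delta}\}$ and similarly in $\varphi$ on its complex $\frac{1}{2}k^{-r_1'-\delta}$-neighborhood. On $\partial D$ the function $\lambda^{(1)}(\varkappa,\varphi)-\lambda$ has modulus of order $lk^{-4r_1'-\delta}$ by \eqref{estgder1}, while by \eqref{perturbation-2c} the perturbation $\lambda^{(2)}-\lambda^{(1)}$ is bounded by the super-exponentially small quantity $O(k^{-2k^\delta Q^{-1}}k^{-2l+4+48\mu\delta})$, which is immensely smaller. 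Rouché's theorem then yields a unique zero $\varkappa^{(2)}(\lambda,\varphi)\in D$ of $\lambda^{(2)}(\cdot,\varphi)-\lambda$, depending analytically on $\varphi$ by the implicit function theorem (which applies because $\partial\lambda^{(2)}/\partial\varkappa\neq 0$ in $D$, again by \eqref{estgder1-2k}). Uniqueness guarantees that this analytic extension agrees with the real branch from part (1).

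For the derivative bounds in \eqref{dk-2}, implicit differentiation of $\lambda^{(2)}(\k^{(2)}(\lambda,\varphi))=\lambda$ gives
\[
\frac{\partial\varkappa^{(2)}}{\partial\varphi}=-\Big(\frac{\partial\lambda^{(2)}}{\partial\varkappa}\Big)^{-1}\frac{\partial\lambda^{(2)}}{\partial\varphi},
\]
and subtracting the analogous identity for $\varkappa^{(1)}$ one expresses $\partial h^{(2)}/\partial\varphi$ in terms of the differences $\partial\lambda^{(2)}/\partial\varkappa-\partial\lambda^{(1)}/\partial\varkappa$ and $\partial\lambda^{(2)}/\partial\varphi-\partial\lambda^{(1)}/\partial\varphi$ from \eqref{estgder1-2k}, \eqref{estgder1-2phi}, dominated by the leading $2l\varkappa^{2l-1}$ denominator; this yields the first estimate. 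The second derivative is obtained by differentiating once more and inserting \eqref{estgder2-2}--\eqref{gulf3}. Alternatively, since the $\varphi$-dependence is analytic on the $\frac{1}{2}k^{-r_1'-\delta}$-neighborhood of $\omega^{(2)}$ while $h^{(2)}$ is already shown to be of size $k^{-2k^\delta Q^{-1}}k^{-4l+5+48\mu\delta}$ there, the Cauchy integral formula with radius comparable to $k^{-r_1'-\delta}$ gives the derivative bounds directly, each derivative costing a factor $k^{r_1'+\delta}$ (and the $\mu$-loss absorbed as in Lemma \ref{L:derivatives-1}). The main technical nuisance will be verifying that Lemma \ref{L:derivatives-2} actually applies on a slightly enlarged neighborhood so that Cauchy's formula can be invoked, but this is automatic from the Corollary \ref{Corollary 3.8}-style extension remark already made after Theorem \ref{Thm2}.
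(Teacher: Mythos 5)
Your proposal is correct and follows essentially the same route as the paper: the paper's proof of this lemma simply states that it is completely analogous to that of Lemma \ref{ldk}, using the Step-II estimates \eqref{perturbation-2c}--\eqref{gulf3}, which is exactly the scheme (monotonicity/Rouch\'e for existence and uniqueness, implicit function theorem for analyticity, Cauchy formula for the derivative bounds) that you spell out.
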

\begin{proof}  The proof is completely analogous to that of Lemma \ref{ldk}, estimates \eqref{perturbation-2c} --\eqref{gulf3} being used. \end{proof}



Let us consider the set of points in $\R^2$ given by the formula:
$\k=\k^{(2)} (\varphi), \ \ \varphi \in \omega ^{(2)}(k,\delta, \tau
)$. By Lemma \ref{ldk-2} this set of points is a slight distortion
of ${\cal D}_{1}$, see Fig. 2. All the points of this curve satisfy
the equation $\lambda^{(2)} (\k ^{(2)}(\varphi ))=k^{2l}$. We call
it isoenergetic surface of the operator $H^{(2)}$ and denote by
${\cal D}_{2}$.


\subsection{Preparation for Step III - Geometric Part. Properties of the Quasiperiodic Lattice}\label{geomIII}
Let
\begin{equation} \label{Aug25-1}
\SS (k, \varepsilon _0)=\left\{ \k \in \R^2:\left\|\left(H^{(1)}(\k)-k^{2l}\right)^{-1}\right\|>\varepsilon _0^{-1}\right\}.\end{equation}
In this section we prove that the number of the lattice points $ \k _0+\p_\m$,
$\||\p_\m\||<k^{r_1}$  in $\SS (k, \varepsilon _0)$  does not exceed $Ck^{\frac{2r_1}{3}+1}$ when $\varepsilon _0$ is  sufficiently small and  $\k  _0$ is fixed. For this we split
$\SS $ into two subsets: `` non-resonant" and ``resonant", the non-resonant set being just a vicinity of $\DD _1(k^{2l})$.
An estimate for the number of lattice points in the non-resonant set is proven in Lemma \ref{L:number of points-1}. An estimate for the number of lattice points in the resonant set is proven in Lemma \ref{L:Resnumberofpoints}.
 These estimates play an important
role in the further construction.
\subsubsection{General Lemmas} We consider
$\p_\m=2\pi(\s_1+\alpha\s_2)$ with integer vectors $\s_j$ such that
$|\s_j|\leq 4k^{r_1}$.

It is easy to see that there exists a pair $(q,p)\in\Z^2$ such that
$0<q\leq 4k^{r_1}$ and
\begin{equation}\label{q}
|\alpha q+p|\leq 16k^{-r_1}.
\end{equation}
We choose a pair $(p,q)$ which gives the best approximation. In
particular, $p$ and $q$ are mutually simple. Put
$\epsilon_q:=\alpha+\frac{p}{q}$.
We have
\begin{equation}k^{-2r_1\mu}\leq |\epsilon_q| \leq 16q^{-1}k^{-r_1}.\label{epsilon_q}\end{equation}
 We  write any $\s_2$ in the form\begin{equation}\s_2=q\s_2'+\s_2'' \label{s1}
\end{equation}
with  integer vectors $\s_2'$ and $\s_2''$, $0\leq (\s_2'')_j< q$
for $j=1,2$. Hence, $|(\s_2')_j|\leq 4k^{r_1}/q+1$. It follows
$$
(2\pi)^{-1}\p_\m=(\s_1-p\s_2')+(-\frac{p}{q}\s_2''+\epsilon_q\s_2'')+\epsilon_q
q\s_2'.
$$
Denote $\s:=\s_1-p\s_2'$. Then $|\s|\leq 8k^{r_1}$. The number of
different vectors $\tilde{\s}:=-\frac{p}{q}\s_2''+\epsilon_q\s_2''$
is not greater than $(2q)^2$. For each fixed pair $\tilde \s,\ \s$
we obtain a lattice parameterized by $\s_2'$. We call this lattice a
cluster corresponding to given $\tilde \s,\ \s$. Each cluster,
obviously, is a square lattice with the step $\epsilon _qq$. It
contains no more than $\left(9k^{r_1}q^{-1}\right)^2$ elements,
since $|(\s_2')_j|\leq 4k^{r_1}q^{-1}+1$, $j=1,2$. The size of each
cluster is less than $5|\epsilon _q|k^{r_1}$. If $\epsilon _q$
satisfies slightly stronger inequality, than \eqref{epsilon_q} than
clusters don't overlap, see the following lemma.

     \begin{lemma}\label{Lattice-1}Suppose that $\epsilon _q$ satisfies the
     inequality
     \begin{equation}|\epsilon_q|\leq \frac{1}{64}q^{-1}k^{-r_1}.\label{epsilon_q'}\end{equation}
     Then, the size of each cluster is less that $\frac{1}{8q}$. The distance between clusters is greater than
     $\frac{1}{2q}$. \end{lemma}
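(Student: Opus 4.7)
The plan is to exploit the decomposition $(2\pi)^{-1}\p_\m=\s+\tilde\s+\epsilon_q q\,\s_2'$ derived just before the lemma, and to reduce both assertions to elementary arithmetic involving $\s_2'$, $\s_2''$, $\s$, and the coprime pair $(p,q)$.

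For the diameter claim, I would observe that a cluster (indexed by fixed $\tilde\s,\s$) is the affine image of the box $|(\s_2')_j|\leq 4k^{r_1}q^{-1}+1$ under $\s_2'\mapsto\s+\tilde\s+\epsilon_q q\,\s_2'$. The step in each coordinate is $|\epsilon_q q|$, so the diameter is at most a universal constant times $|\epsilon_q|k^{r_1}$. Substituting \eqref{epsilon_q'} then gives the desired bound $<1/(8q)$; this is just the quantitative form of the estimate ``size $<5|\epsilon_q|k^{r_1}$'' already recorded in the text.

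For the separation claim, pick points $P_i$ in two distinct clusters and set $\Delta\s:=\s^{(1)}-\s^{(2)}$, $\Delta\s_2'':=(\s_2'')^{(1)}-(\s_2'')^{(2)}$ (integer vector with $|(\Delta\s_2'')_j|<q$), and $\Delta\s_2':=(\s_2')^{(1)}-(\s_2')^{(2)}$. Then
$$P_1-P_2=\Bigl(\Delta\s-\tfrac{p}{q}\Delta\s_2''\Bigr)+\epsilon_q\Delta\s_2''+\epsilon_q q\,\Delta\s_2'.$$
The last two (``error'') terms are bounded componentwise by $|\epsilon_q|q+|\epsilon_q|(8k^{r_1}+2q)$, which by \eqref{epsilon_q'} and $q\leq 4k^{r_1}$ is strictly below $1/(2q)$. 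Distinctness of the clusters forces $\Delta\s\neq0$ or $\Delta\s_2''\neq0$, which splits the analysis into two cases. In Case A ($\Delta\s_2''=0$), distinctness yields $\Delta\s\neq0$, so the main part is a nonzero integer vector and $|P_1-P_2|\geq 1-o(1)\gg 1/(2q)$. In Case B (some $(\Delta\s_2'')_j\neq0$), I would use that $\gcd(p,q)=1$ together with $0<|(\Delta\s_2'')_j|<q$ to conclude that the integer $q(\Delta\s)_j-p(\Delta\s_2'')_j$ cannot vanish; being a nonzero integer it has absolute value $\geq 1$, so the $j$-th coordinate of the main part is at least $1/q$, and combining with the error bound shows the $j$-th coordinate of $P_1-P_2$ exceeds $1/(2q)$.

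The main obstacle, though routine, is Case B: the non-vanishing of $q(\Delta\s)_j-p(\Delta\s_2'')_j$ is where the number-theoretic input $\gcd(p,q)=1$ (from $(p,q)$ being the best rational approximation) is decisive, working jointly with the Euclidean-algorithm constraint $|(\Delta\s_2'')_j|<q$ baked into $\s_2=q\s_2'+\s_2''$. Everything else is bookkeeping; the constant $1/64$ in \eqref{epsilon_q'} is calibrated precisely so that the accumulated errors stay comfortably below $1/(2q)$.
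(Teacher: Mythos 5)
Your proof is correct and follows essentially the same route as the paper: the same decomposition $(2\pi)^{-1}\p_\m=\s+\tilde\s+\epsilon_q q\,\s_2'$, the same coprimality-plus-division argument ($\gcd(p,q)=1$ and $0\leq(\s_2'')_j<q$) forcing the main difference between distinct clusters to have a coordinate of size at least $1/q$, and the same calibration of the $\epsilon_q$-error terms against $1/(8q)$ and $1/(2q)$. The only cosmetic difference is that you compare arbitrary points of the two clusters and absorb the cluster diameters into the error term, whereas the paper compares the $\s_2'=\mathbf{0}$ base points first and then subtracts the cluster sizes.
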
\begin{proof}Let us estimate the
     distance between  points $\s_2'=\bf 0$ of two different
     clusters. Indeed, $\s-\frac{p}{q}\s_2''\neq \bf 0$, since $p\s_2''$,
see \eqref{s1},
     is not a multiple of $q$. Therefore,  $\left|(\s-\frac{p}{q}\s_2'')_j\right|\geq \frac{1}{q} $, $j=1,2$.
     Considering that
     $0\leq (\s_2'')_j<q$, $j=1,2$, we obtain that the distance between two points
     where $\s_2'=\bf 0$ is greater than $\frac{1}{q}-|\epsilon _q|q$, that is greater
     than $\frac{15}{16q}$. The size of each cluster is obviously
     less than $|\epsilon _q|q(4k^{r_1}q^{-1}+1)\leq\frac{1}{8q}$. Thus, two clusters cannot overlap,
     the distance between them being greater than
     $\frac{1}{2q}$.\end{proof}
     We need two more properties of the lattice $\p_\m$,
     $\||\p_{\m}\||<2k^{r_1}$.
     \begin{lemma}\label{Lattice-2} The number of vectors $\p_\m$,
      satisfying the inequalities $\||\p_{\m}\||<2k^{r_1}$,
     $p_{\m}<|\epsilon _q|qk^{r_1/3}$, does not exceed
     $k^{2r_1/3}$.\end{lemma}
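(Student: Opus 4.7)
The plan is to read the count directly off the cluster decomposition of Section~\ref{geomIII}. Recall that every $\p_\m$ with $|\|\p_{\m}\||<2k^{r_1}$ belongs to a unique cluster indexed by the pair $(\s,\s_2'')$, and within a cluster the vectors $\p_\m/(2\pi)$ form a square sub-lattice of step $|\epsilon_q|q$ containing at most $(9k^{r_1}/q)^2$ points. The cluster that contains $\bf 0$ is precisely the one with $\s=\bf 0$ and $\s_2''=\bf 0$: indeed, its base point $\s-(p/q)\s_2''+\epsilon_q\s_2''$ has a purely rational part $\s-(p/q)\s_2''$ that, because $\gcd(p,q)=1$ and $0\leq (\s_2'')_j<q$, vanishes exclusively in that case. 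Thus the origin cluster is exactly $\{2\pi\epsilon_q q\s_2':\s_2'\in\Z^2,\ |(\s_2')_j|\leq 4k^{r_1}/q+1\}$.

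I would carry out the proof in two steps. \textbf{Step 1 (origin cluster).} The constraint $p_\m<|\epsilon_q|qk^{r_1/3}$ on points of the origin cluster becomes $|\s_2'|<k^{r_1/3}/(2\pi)$, which admits at most $(k^{r_1/3}/\pi+1)^2\leq \tfrac12 k^{2r_1/3}$ integer vectors $\s_2'$. \textbf{Step 2 (non-origin clusters contribute nothing).} Here I would invoke Lemma~\ref{Lattice-1}, whose hypothesis \eqref{epsilon_q'} is the regime of interest; by that lemma the distance between any two clusters is at least $1/(2q)$, so every point of a non-origin cluster satisfies $|\p_\m|/(2\pi)\geq 1/(2q)$ (taking $\bf 0$ as a point of the origin cluster). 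On the other hand, the constraint $p_\m<|\epsilon_q|qk^{r_1/3}$ translates into $|\p_\m|/(2\pi)<|\epsilon_q|qk^{r_1/3}/(2\pi)\leq k^{-2r_1/3}/(128\pi)$ by \eqref{epsilon_q'}. Combining these, no non-origin cluster contributes a point.

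Adding the two contributions yields the bound $\tfrac12 k^{2r_1/3}<k^{2r_1/3}$, as claimed. The principal obstacle is reconciling the cluster-separation lower bound $1/(2q)$ with the target radius $|\epsilon_q|qk^{r_1/3}/(2\pi)$ uniformly in the admissible range of $q$: one must use the upper bound \eqref{epsilon_q'} on $|\epsilon_q|$ together with the Diophantine lower bound $|\epsilon_q|\geq k^{-2r_1\mu}$ of \eqref{epsilon_q} to confine $q$ to a regime in which the incompatibility $1/(2q)>|\epsilon_q|q^2k^{r_1/3}/(2\pi)$ holds. If this delicate Diophantine interplay fails, the hypothesis of Lemma~\ref{Lattice-1} no longer applies and a short separate treatment (using that the radius $|\epsilon_q|qk^{r_1/3}$ is then too small to accommodate points from clusters other than the origin one) is required to close the argument.
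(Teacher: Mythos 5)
There is a genuine gap, and it sits exactly where you flagged uncertainty: Step 2 is not just delicate, it is false in the relevant regimes. Your exclusion of non-origin clusters needs $|\epsilon_q|qk^{r_1/3}/(2\pi)<1/(2q)$, i.e. $|\epsilon_q|q^2k^{r_1/3}<\pi$. Even when the hypothesis \eqref{epsilon_q'} of Lemma \ref{Lattice-1} holds, this can fail badly: take $q\sim 4k^{r_1}$ and $|\epsilon_q|\sim\frac{1}{64}q^{-1}k^{-r_1}$, so that the ball radius is $\sim k^{-2r_1/3}$ while the cluster separation is only $\sim k^{-r_1}\ll k^{-2r_1/3}$; then on the order of $k^{2r_1/3}$ distinct clusters meet the ball, and each can contribute points. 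Your proposed fallback (``the radius is then too small to accommodate points from other clusters'') is exactly the statement that fails there, so it cannot close the argument. Moreover the lemma is stated with no restriction of the type \eqref{epsilon_q'} at all — it must hold throughout the range \eqref{epsilon_q}, including $\frac{1}{64}q^{-1}k^{-r_1}<|\epsilon_q|\leq 16q^{-1}k^{-r_1}$, where clusters may overlap and Lemma \ref{Lattice-1} gives you nothing.

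The paper's proof avoids the cluster decomposition entirely and is the argument you would in any case be forced back to: for any two distinct admissible vectors, $\||\p_{\m}-\p_{\m'}\||<4k^{r_1}$, and the best-approximation property of the pair $(p,q)$ gives the uniform minimal gap $(2\pi)^{-1}|\p_\m-\p_{\m'}|\geq|\epsilon_q|q$, with no case split on $q$ or $\epsilon_q$. Surrounding each point by a disc of radius $\pi|\epsilon_q|q$ (disjoint by the gap bound) and comparing areas with the disc of radius $2|\epsilon_q|qk^{r_1/3}$ yields at most $4k^{2r_1/3}/\pi^2<k^{2r_1/3}$ points. If you want to keep your cluster-based bookkeeping, you would have to add precisely this global minimal-gap estimate to control the many-cluster regime, at which point the clusters buy you nothing; your Step 1 count of the origin cluster is fine but is only a special case of that packing bound.
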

     \begin{proof} Suppose vectors $\p_{\m}$ and $\p_{\m'}$ satisfy
     the conditions of the lemma. Then, $\||\p_{\m}-\p_{\m'}\||<4k^{r_1}$.
     By definition of $\epsilon _q$, $(2\pi )^{-1}|\p_{\m}-\p_{\m'}|\geq |\epsilon _q|q$. Thus, the distance
     between the points $\p_{\m},\p_{\m'}$ is greater than $2\pi |\epsilon _q|q$ and each point can be
      surrounded by the disc  of the radius $\pi|\epsilon _q|q$,
      the discs being disjoint. Dividing the area of the disc of the radius $2 |\epsilon _q|qk^{r_1/3}$
      (we increased radius to take into account points $\p_\m$ near the boundary of the disc $p_{\m}<|\epsilon _q|qk^{r_1/3}$) by the area of a disc of the radius $\pi |\epsilon _q|q$, we
      obtain that the number of vectors satisfying the inequality
     $p_{\m}<|\epsilon _q|qk^{r_1/3}$ does not exceed
     $k^{2r_1/3}$.\end{proof}
     \begin{lemma} \label{Lattice-3} Suppose $q$ in the inequality
     \eqref{q} satisfies the estimate $q>k^{2r_1/3}$. Then, the
     number of vectors  $\p_\m$,
     $\||\p_{\m}\||<2k^{r_1}$, satisfying the inequality
     $p_{\m}<k^{-2r_1/3}$ does not exceed
     $2^{12}\cdot k^{2r_1/3}$.\end{lemma}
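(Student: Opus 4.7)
The plan is to reduce the two-dimensional counting to an independent one-dimensional counting problem, and then bound the 1D count using that $q$ is a best-approximation denominator of $\alpha$.

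First, the reduction. Writing $|\p_\m|^2 = 4\pi^2\sum_{i=1,2}\bigl((\s_1)_i + \alpha(\s_2)_i\bigr)^2$, the hypothesis $p_\m < k^{-2r_1/3}$ implies, for each component $(a,b) := ((\s_1)_i,(\s_2)_i)$, the estimate $|a+\alpha b| < \delta := k^{-2r_1/3}/(2\pi)$. Combined with the constraint $|a|,|b| \leq 2k^{r_1}$ coming from $|\|\p_\m\|| < 2k^{r_1}$, the total count in the lemma is bounded by $N_1^2$, where
\[
N_1 \,:=\, \#\{(a,b) \in \Z^2 : |a|,|b|\leq 2k^{r_1},\ |a+\alpha b|<\delta\}.
\]
So it suffices to prove $N_1 \leq 64\,k^{r_1/3}$.

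To bound $N_1$, let $\rho(x)$ denote the distance from $x\in\R$ to the nearest integer. Since $\delta < 1/2$, for each $b$ there is at most one admissible $a$, and the condition reduces to $\rho(b\alpha) < \delta$. Writing $b = nq + r$ with $0\leq r < q$ and using $\alpha q = -p + \epsilon_q q$, one obtains $\rho(b\alpha) = |n\epsilon_q q + s\,\rho(r\alpha)|$ for a suitable sign $s\in\{\pm1\}$ (the argument stays in $(-1/2,1/2)$ by the estimates below). The hypothesis $q > k^{2r_1/3}$ yields two consequences: first, the range $|n| \leq 2k^{r_1}/q$ contains at most $4k^{r_1/3}+1$ integers; second, $|n\epsilon_q q| \leq (2k^{r_1}/q)\cdot|\epsilon_q q| \leq 32/q < 32\,k^{-2r_1/3}$, so for each fixed $n$ the condition $\rho(b\alpha) < \delta$ confines $\rho(r\alpha)$ to an interval of length at most $2\delta$. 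The three-distance theorem for the sequence $\{r\alpha\}_{r=0}^{q-1}$, which has discrepancy $O(1/q)$ because $q$ is a continued-fraction convergent denominator, then bounds the number of admissible $r\in[0,q)$ by $2\delta q + O(1)$. Summing over $n$ and adding the at-most $4k^{r_1/3}+1$ multiples of $q$ lying in $|b|\leq 2k^{r_1}$ (the case $r=0$) gives $N_1 \leq 8\delta k^{r_1} + O(k^{r_1/3}) \leq 64\,k^{r_1/3}$, and squaring yields the claimed $N_1^2 \leq 2^{12}\,k^{2r_1/3}$.

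The main obstacle lies in the discrepancy estimate of the preceding paragraph: controlling $\#\{r\in[0,q) : \rho(r\alpha) \in I\}$ for a short interval $I$ requires the three-distance theorem applied at the convergent denominator $q$, and tightening the implicit constants so that the final bound becomes $2^{12}$ rather than something larger. A more geometric alternative avoids discrepancy entirely by splitting on the size of $|\epsilon_q q|$: when $|\epsilon_q q|\geq k^{-r_1}$ the disc-packing argument of Lemma \ref{Lattice-2} applies directly (since then $|\epsilon_q|q\,k^{r_1/3}\geq k^{-2r_1/3}$, so its hypothesis is satisfied); when $|\epsilon_q q|<k^{-r_1}$ each cluster of Lemma \ref{Lattice-1} has extent $\lesssim k^{-2r_1/3}$ along each axis, and one independently bounds the number of clusters whose centers lie within $O(k^{-2r_1/3})$ of the origin together with the number of lattice points per cluster inside the ball.
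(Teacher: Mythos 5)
Your argument is correct, but it is not the paper's proof: the paper establishes Lemma \ref{Lattice-3} by a purely two-dimensional packing argument with exactly the case split you relegate to your closing ``alternative'' paragraph. Namely, if $|\epsilon_q|>\frac{1}{64}q^{-1}k^{-r_1}$, the points $\p_\m$ with $\||\p_\m\||<2k^{r_1}$ are pairwise separated by at least $2\pi|\epsilon_q|q$ (this is where the best-approximation choice of $(p,q)$ in \eqref{q} enters), so dividing the area of the disc of radius $2k^{-2r_1/3}$ by that of disjoint discs of radius $\pi|\epsilon_q|q>\frac{1}{32}k^{-r_1}$ gives $2^{12}k^{2r_1/3}$; if $|\epsilon_q|\le\frac{1}{64}q^{-1}k^{-r_1}$, Lemma \ref{Lattice-1} applies, and one multiplies the number of clusters meeting the disc of radius $k^{-2r_1/3}$ (at most $(6k^{-2r_1/3}q)^2$, using that $\frac{1}{4q}<\frac14 k^{-2r_1/3}$ by the hypothesis $q>k^{2r_1/3}$) by the $(9k^{r_1}q^{-1})^2$ points per cluster. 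Your primary route --- reducing to the one-dimensional count $N_1$ of pairs $(a,b)$ with $|a+\alpha b|<\delta$, $\delta=k^{-2r_1/3}/(2\pi)$, writing $b=nq+r$, and using that for the convergent denominator $q$ the points $\{r\alpha\}$, $0\le r<q$, are $\tfrac{1}{2q}$-separated on the circle --- is sound and more elementary in flavor; the crucial input is precisely that the ``best approximation'' $q$ is a continued-fraction denominator, so that $\|j\alpha\|\ge\|q_{k-1}\alpha\|\ge\frac{1}{q_{k-1}+q_k}\ge\frac{1}{2q}$ for $0<|j|<q$. Two bookkeeping points: with that separation the per-$n$ count is $4\delta q+1$ rather than $2\delta q+O(1)$, and the number of admissible $n$ is at most $4k^{r_1}/q+3$; carrying these through still gives $N_1\le ck^{r_1/3}$ with $c$ well below $64$, so $N_1^2\le 2^{12}k^{2r_1/3}$ survives and nothing breaks. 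What the paper's geometric version buys is uniformity with the surrounding machinery: the same separation/cluster dichotomy is reused almost verbatim in Lemmas \ref{L:number of points-1}, \ref{L:Resnumberofpoints} and \ref{norm2/3} to count lattice points near curves, where your coordinatewise product reduction (which relies on the small ball being a product of two small one-dimensional conditions) would not transfer; your approach, on the other hand, needs neither the cluster construction of Lemma \ref{Lattice-1} nor the case split on $|\epsilon_q|$.
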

     \begin{proof} First assume
     $|\epsilon_q|>\frac{1}{64}q^{-1}k^{-r_1}$. Then, dividing
     the area of the disc of the radius $2 k^{-2r_1/3}$
      by the area of a disc of the radius $\pi |\epsilon _q|q>\frac{1}{32}k^{-r_1}$, we
      obtain that the number of vectors satisfying the inequality
     $p_{\m}<k^{-2r_1/3}$ does not exceed
     $2^{12}k^{2r_1/3}$.

     Second, we consider the case $|\epsilon_q|\leq
     \frac{1}{64}q^{-1}k^{-r_1}$. According to Lemma
     \ref{Lattice-1}, the clusters do not overlap. The distance
     between clusters is greater than $\frac{1}{2q}$. Therefore, dividing the area of a disc with radius $\frac32 k^{-2r_1/3}$ by the area of a disc with radius $\frac{1}{4q}$, the last number being smaller than $\frac14 k^{-2r_1/3}$ by the conjecture of the lemma, we obtain that the
     number of clusters intersecting the disc of the radius $k^{-2r_1/3}$ is
     less than $\left(6k^{-2r_1/3}q\right)^2$. Each cluster contains
     less than $\left(9k^{r_1}q^{-1}\right)^2$ points. Therefore, the
     total number of of vectors  $\p_\m$,
     $\||\p_{\m}\||<2k^{r_1}$, satisfying the inequality
     $p_{\m}<k^{-2r_1/3}$ does not exceed
     $\left(6k^{-2r_1/3}q\right)^2\cdot \left(9k^{r_1}q^{-1}\right)^2
     <2^{12}\cdot k^{2r_1/3}$. \end{proof}
     \subsubsection{Lattice Points in the Nonresonant Set\label{Lattice Points in a Nonresonant Set}}
     \begin{lemma} \label{L:number of points-1}
     Let $N(k, r_1, \k _0,\varepsilon _0)$  be the number of points $\k  _0+\p_{\n}$,
$\||\p_{\n}\||<k^{r_1}$ in the $\varepsilon _0$-neighborhood of
     ${\cal D}_1(k^{2l})$, where $\varepsilon _0=k^{-5\mu r_1}$ and
     $\k  _0\in\R^2$ being fixed. Then,
     $$ N(k,r_1,\k _0,\varepsilon _0)<1000  \cdot k^{\frac{2r_1}{3}+1}.$$
     \end{lemma}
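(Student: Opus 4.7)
My strategy is to reduce the problem to counting lattice points in a thin annular strip around a circle, and then exploit the cluster decomposition of the quasi-periodic lattice $\{\p_\m\}$ developed in Section~\ref{geomIII}.

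\emph{Reduction.} By Lemma~\ref{ldk}, in particular \eqref{dk0}, the isoenergetic curve ${\cal D}_1(k^{2l})$ lies in the narrow annulus $\{\z\in\R^2:\ \bigl||\z|-k\bigr|\le Ck^{-4l+1+(80\mu+6)\delta}\}$. For $l\ge 2$ and $\delta<(100\mu)^{-1}$, this radial deviation is much smaller than $\varepsilon_0=k^{-5\mu r_1}$, so it suffices to count those $\p_\n$ with $\||\p_\n\||<k^{r_1}$ for which $\bigl|\,|\k_0+\p_\n|-k\,\bigr|\le 2\varepsilon_0$, i.e.\ to count points of $\{\p_\n\}$ in a $2\varepsilon_0$-neighbourhood of the circle $\Gamma$ of radius $k$ centred at $-\k_0$.

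\emph{Cluster decomposition and a per-cluster estimate.} With $(p,q)$ the best rational approximation of $\alpha$ satisfying $|\alpha q+p|\le 16k^{-r_1}$, $q\le 4k^{r_1}$, the lattice $\{\p_\m:\||\p_\m\||<k^{r_1}\}$ is partitioned into at most $4q^2$ clusters, each a translate of a portion of a square sublattice of step $h:=|\epsilon_q|q$, containing at most $(9k^{r_1}/q)^2$ points and of diameter $s\le 5|\epsilon_q|k^{r_1}\le 80/q$ (using \eqref{epsilon_q}). Since $s\ll k$, inside one cluster the arc of $\Gamma$ deviates from a straight chord of length $\le s\sqrt{2}$ by at most $O(s^2/k)$, and the annulus restricts to a nearly straight strip of width $2\varepsilon_0$. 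Using \eqref{epsilon_q} again, $\varepsilon_0/h\le k^{-3\mu r_1}\ll 1$, so the strip crosses at most one row of the cluster lattice, contributing at most $C(s/h+1)=C(5k^{r_1}/q+1)$ lattice points; furthermore, only clusters whose centre lies within $s/\sqrt{2}+\varepsilon_0$ of $\Gamma$ contribute, which drastically limits the summation.

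\emph{Case analysis and conclusion.} The global count is obtained by a trichotomy in $q$ based on Lemmas~\ref{Lattice-1}--\ref{Lattice-3}. For $q\le k^{2r_1/3}$ the number of clusters is $\le 4q^2\le 4k^{4r_1/3}$; combining the per-cluster count $O(k^{r_1}/q+1)$ with the restriction that only near-$\Gamma$ clusters contribute gives at most $Ck^{2r_1/3+1}$ points. For $q>k^{2r_1/3}$, Lemma~\ref{Lattice-3} bounds the number of small-norm lattice vectors $\p_\m$, which in turn, via the translation argument applied to differences $\p_\n-\p_{\n'}$ of pairs of points of $\{\k_0+\p_\n\}$ in the annulus (a difference of two near-$\Gamma$ points has small radial component), limits the number of clusters that can simultaneously meet $\Gamma$, again producing at most $Ck^{2r_1/3+1}$ points. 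The most delicate point, and what I expect to be the main obstacle, is controlling the resonant configurations in which the tangent direction of $\Gamma$ is nearly parallel to a cluster lattice axis: many cluster points would then a priori lie in a single row of the thin strip. This is overcome via the convexity of $\Gamma$, whose tangent direction rotates at rate $1/k$, so that resonant tangencies occur only on short arcs and their contributions are absorbed by the small-norm bounds of Lemmas~\ref{Lattice-2} and~\ref{Lattice-3}.
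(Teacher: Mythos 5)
Your proposal has two genuine problems. First, the opening reduction is false in the main case of the paper: by \eqref{dk0} the curve ${\cal D}_1(k^{2l})$ deviates from the circle of radius $k$ by $O\bigl(k^{-4l+1+(80\mu+6)\delta}\bigr)$, which for $l=2$ is of size about $k^{-7}$, while $\varepsilon_0=k^{-5\mu r_1}\leq k^{-20}$ (since $\mu\geq2$, $r_1>2$). So the deviation is much \emph{larger} than $\varepsilon_0$, not smaller, and the $\varepsilon_0$-neighborhood of ${\cal D}_1$ is not contained in a $2\varepsilon_0$-annulus about any circle; one must work with the actual curve (whose curvature is $\frac1k(1+o(1))$ by Lemma \ref{ldk}), as the paper does.

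Second, and more seriously, your counting scheme does not yield the exponent $2/3$. In the clustered regime ($q\leq k^{2r_1/3}$, $|\epsilon_q|\leq\frac{1}{64}q^{-1}k^{-r_1}$) the number of clusters meeting the curve is of order $qk$, and your per-cluster bound $C(s/h+1)\sim k^{r_1}/q$ then gives a total of order $k^{r_1+1}$, far above $k^{2r_1/3+1}$; the claim that "only near-$\Gamma$ clusters contribute" closes this gap is unsubstantiated, and the "resonant tangency" issue you defer to the end is exactly where the difficulty lives (a row of the cluster sublattice can run almost parallel to the curve; moreover the sagitta of the arc across a cluster exceeds the step $h$ once $q\lesssim k^{(r_1-1)/2}$, so "at most one row" fails even for a single cluster). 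The missing ingredient is the Jarn\'{\i}k-type argument the paper uses: enumerate the points of the neighborhood by the angle variable, join consecutive points by segments $\p_{\n-\n'}$, and use strict convexity of ${\cal D}_1$ (for each segment with both ends in the thin neighborhood there is a point of the curve with tangent parallel to it, and the tangent direction is monotone) to conclude that no difference vector can occur twice. Then segments longer than $\frac{1}{64}k^{-2r_1/3}$ number at most $({\rm length\ of\ }{\cal D}_1)/\frac{1}{64}k^{-2r_1/3}\sim k^{2r_1/3+1}$, while each short segment is a distinct vector $\p_\m$ of small norm, and Lemmas \ref{Lattice-2} and \ref{Lattice-3} (together with the cluster separation of Lemma \ref{Lattice-1}) bound the supply of such vectors by $O(k^{2r_1/3})$. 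Without this injectivity step the cluster decomposition alone cannot produce the stated bound.
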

     \begin{proof}Let us consider the segment $\p_{\n-\n'}$ between
     two points $\k  _0+\p_{\n}$ and $\k  _0+\p_{\n'}$ in the neighborhood.
     Obviously, $\||\p_{\n-\n'}\||<2k^{r_1}$ and $p_{\n-\n'}>k^{-\mu
     r_1}>>\varepsilon _0$. This means that the direction of the
     segment cannot be orthogonal to the curve (in fact they are almost parallel to the curve) and each end can be
     assigned its own angle coordinate $\varphi _{\n}, \varphi _{\n'}$, $\varphi _{\n}\neq \varphi _{\n'}$. We
     enumerate the points $\k  _0+\p_{\n}$ in the order of increasing $\varphi
     _{\n}$ and connect neighboring points by segments. First we
     consider the segments with the length greater or equal to
     $\frac{1}{64}k^{-\frac{2r_1}{3}}$. Since the length of ${\cal D}_1(k^{2l})$ does
     not exceed $3\pi k$, the number of such segments does not
     exceed $650 k^{\frac{2r_1}{3}+1}$.

      It remains to estimate the
     number of segments with the length less than
     $\frac{1}{64}k^{-\frac{2r_1}{3}}$.
 First, we prove that no two
     segments $\p_{\n_1-\n'_1}$, $\p_{\n_2-\n'_2}$ can be equal to
     each other.
We use concavity of the curve ${\cal D}_1(k^{2l})$ and
      a small size $\varepsilon _0$ of its neighborhood. We show that for every
     $\p_{\n_1-\n'_1}$ with both ends in the neighborhood, there is a point on the
     curve where the tangent vector is parallel to $\p_{\n_1-\n'_1}$. Since the tangent vector
      changes monotonously with $\varphi $, no two vectors
      $\p_{\n_1-\n'_1}$ can have the same direction. Indeed, let us consider a segment $\p_{\n_1-\n'_1}$.
      Let $(x,y)$ be local coordinates associated with  $\p_{\n_1-\n'_1}$,  the beginning
      of the segment being at the origin and the end having
      the coordinates $(\tau, 0)$, $\tau=p_{\n_1-\n'_1}$. The curve is described by the equation $y=y(x)$. It easily follows
      from Lemma \ref{ldk} that $y'(x)=o(1)$ and the curvature $\kappa $ of the curve ${\cal D}_1(k^{2l})$ is
      $\frac{1}{k}(1+o(1))$ at all  points of the curve. Using the formula
      $\kappa (x)=|y''(x)|\left(1+y'(x)^2\right)^{-3/2}$, we easily obtain $y''(x)=-\frac{1}{k}(1+o(1))$.
      Using a Taylor formula, we get $y(\tau )=y(0)+y'(0)\tau
      -\frac{1}{2k}(1+o(1))\tau^2$. Note that $|y(0)|, |y(\tau
      )|<2\varepsilon _0$, since both ends of the segment are in the $\varepsilon
      _0$-neighborhood of the curve. Considering also that $\tau
      >k^{-r_1\mu }$ and the estimate on $\varepsilon _0$, we
      conclude: $\frac{\tau }{k}=2y'(0)(1+o(1))+O(k^{-4r_1\mu })$.
      Substituting this into the Taylor formula \begin{equation}
      y'(\tau
      )=y'(0)-\frac{\tau}{k}(1+o(1)),\label{y'}
      \end{equation}
       we obtain: $y'(\tau
      )=-y'(0)(1+o(1))+O(k^{-4r_1\mu })$.
      If $y'(\tau
      )$ and $y'(0)$ have the same sign or one of them is zero, the last relation yields $|y'(\tau
      )|+|y'(0)|=O(k^{-4r_1\mu })$. This contradicts to \eqref{y'},
      since $\tau >k^{-r_1\mu }$.
      Therefore, $y'(\tau
      )$ and $y'(0)$ have different signs. Considering that $y'(x)$
      is continuous, we obtain that there is a point $x_0$ in $(0,\tau
      )$ such that $y'(x_0)=0$. This means that the isoenergetic
      curve at this point is parallel to $\p_{\n_1-\n'_1}$.

      To finish the proof of the lemma  we consider two cases. Suppose $q$ in the inequality
     \eqref{q} satisfies the estimate $q>k^{2r_1/3}$. Then, by Lemma \ref{Lattice-3}, the
     number of vectors  $\p_\n$,
     $\||\p_{\n}\||<2k^{r_1}$, satisfying the inequality
     $p_{\n}<\frac{1}{64}k^{-2r_1/3}$ does not exceed
     $2^{12}\cdot k^{2r_1/3}$. Since each of them can be used only once, the
     total number of short segments  does not exceed $2^{12}\cdot k^{2r_1/3}$.

     Let  $q\leq k^{2r_1/3}$.
     If $|\epsilon _q|>\frac{1}{64}q^{-1}k^{-r_1}$. Then, obviously,
     $\frac{1}{64}k^{-2r_1/3}<|\epsilon _q|qk^{r_1/3}$. Applying Lemma
     \ref{Lattice-2}, we obtain that the number of segments with the
     length less than $\frac{1}{64}k^{-2r_1/3}$ is less than
     $k^{2r_1/3}$. Since each of them can be used only once, the
     total number of short segments  does not exceed $k^{2r_1/3}$.
     It remains to consider the case $q\leq
     k^{2r_1/3}$, $|\epsilon _q|\leq \frac{1}{64}q^{-1}k^{-r_1}$. By
     Lemma \ref{Lattice-1}, clusters are well separated. Considering
     that the distance between clusters is greater than
     $\frac{1}{2q}$ and the size of each cluster is less than
     $\frac{1}{8q}$, we obtain that no more than $8\pi qk$ clusters
     can intersect $\varepsilon _0$-neighborhood of ${\cal D}_1(k^{2l})$.
     The part of the curve inside the clusters has the length $L_{in}$ which is less
     than the double size of a cluster $10|\epsilon _q|k^{r_1}$ (the curve is concave) multiplied by
     the number of clusters $8\pi qk$, i.e., $L_{in}<80\pi |\epsilon
     _q|qk^{r_1+1}$.  Next, the segments with the length less than
     $\frac{1}{2}k^{-2r_1/3}$ cannot connect different clusters, since the
     distance between clusters is greater than $\frac{1}{2q}\geq
     \frac{1}{2}k^{-2r_1/3}$. Therefore, any segment of the length
     less than $\frac{1}{2}k^{-2r_1/3}$ is inside one
     cluster. If we consider the segments with the length greater
     than $|\epsilon _q|qk^{r_1/3}$, then the number of such segments
     is less than $L_{in}/|\epsilon _q|qk^{r_1/3}$, i.e., it is less
     than $80 \pi k^{2r_1/3+1}$. By Lemma \ref{Lattice-2}, the total
     number of segments of the length less than $|\epsilon
     _q|qk^{r_1/3}$ is less than $k^{2r_1/3}$. Each of them can be used only once. Thus, the total
     number of segments is less than $ 300 k^{2r_1/3+1}$.
\end{proof}
\subsubsection{Lattice Points in the Resonant Set\label{Lattice
Points in a Resonant Set}}
 Let $\QQ_0=\{\bf 0\}\cup \QQ $ where $\QQ =\{\q_1,...\q_J\}$, $J\geq 1$ and
$\min_{\q,\q'\in\QQ_0}\||\p_{\q-\q'}\||<k^{\delta}$. We assume that
all elements of $\QQ_0$ are different.
  We say that $\k \in
\RR_{\QQ _0}\subset \R^2$ if all the following inequalities hold:
$$\left||\k+\p_{\q}|_{\R}^2-k^2\right|<k^{-40\mu \delta }\
\ \mbox{when
  }\q\in \QQ _0,$$
  \begin{equation}\label{Sunday}\left||\k+\p_{\q'}|_{\R}^2-k^2\right|\geq k^{-40\mu \delta }\ \ \mbox{when
  }\q' \not \in \QQ _0\ \mbox{and}\ \min _{\q \in \QQ _0}\||\p_{\q'-\q}\||<k^{\delta }.\end{equation}
 By Lemma \ref{2.12} such $\k$ may exist only if  $J\leq 3$. Let
$P_{\QQ_0}$ be a diagonal projection: $(P_{\QQ_0})_{\n\n}=1$ if and
only if $\min _{\q\in \QQ _0}\||\p_{\n-\q}\||<k^{\delta }$. The
dimension of $P_{\QQ_0}$ clearly does not exceed
$\left(2(J+1)k^{\delta }\right)^4$. Suppose $\k \in \RR_{\QQ _0}$.
We consider operator $P_{\QQ _0}\left(H(\k)-(k^{2l}+\varepsilon
_0)I\right)P_{\QQ _0}$, $|\varepsilon _0|<1$, and its determinant
$D(\k, k^{2l}+\varepsilon _0)$. Let $\SS_{\QQ _0}(k,\varepsilon
_0)\subset \RR_{\QQ _0}$ be the set:
\begin{equation}\label{Feb21-1*} \SS_{\QQ _0}(k,\varepsilon _0)=
\left\{\k \in \RR_{\QQ _0}: D(\k, k^{2l}+\varepsilon _0')=0\mbox{
for some } |\varepsilon _0'|<\varepsilon _0\right\}.\end{equation}
Obviously, $ \SS_{\QQ _0}(k,\varepsilon _0)=\SS (k, \varepsilon _0) \cap \RR_{\QQ _0}$, see \eqref{Aug25-1}.
Let $(\tau _1,\tau _2)$ be new orthogonal coordinates with the
origin at the point $\frac{1}{2}\p_{\q_1}$, $ \tau _1$-axis being in
the direction of $\p_{\q_1}$. It is easy to see that $\RR_{\QQ
_0}\subset \{(\tau _1,\tau _2):\ |\tau _1 |<k^{-39 \mu \delta },\
\frac12k<|\tau _2|<2k\}$.

\begin{lemma}\label{L:curves-2} The set $D(\k, k^{2l})=0$ has the following properties
 in $\RR _{\QQ _0}$:
\begin{enumerate} \item  The equation
$D(\k, k^{2l})=0$ describes at most 8 curves. They are described by
the equations $\tau _2=f_i(\tau _1)$, where $|f_i'(\tau
_1)|<ck^{-1+\delta }$, $0\leq i\leq \tilde J$, $\tilde J\leq 8$.
\item The set $\SS_{\QQ_0}(k,\varepsilon _0)$ belongs to  $\cup
_i^{\tilde J}\SS_i(k,\varepsilon _0)$, $\SS_i(k,\varepsilon
_0)=\{\k: |\tau _2-f_i(\tau _1)|<2\varepsilon _0, |\tau
_1|<k^{-39\mu\delta } \}$.
\item The curves $\tau _2=f_i(\tau _1)$, $0\leq i\leq \tilde J$, all
together have no more than $2^{31}l^2\cdot k^{8\delta }$ inflection
points.
\item Let $\l$ is a segment of a straight line,
\begin{equation}\l=\{\k=(\tau_1, \beta _1 \tau _1+\beta _2),\  \tau _{1,0}<\tau _1<\tau _{1,0}+\eta
\},\ \ |\tau_{1,0}|<k^{-39\mu\delta},\label{segment-1}\end{equation}
such that both its ends belong to $\SS_{i}(k,\varepsilon _0)$,
    $2\varepsilon _0<\eta^8k^{-8l}$, $0<\eta<1$. Then, there is an
inner part $\l'$ of the segment
 which is not in $\SS_{i}(k,\varepsilon _0)$. Moreover, there is a point $(\tau _{1*},\tau _{2*})$ in $\l'$ such that
 $f'_i(\tau _{1*})=\beta _1$, i.e., the curve and the segment have
 the same direction when $\tau _{1}=\tau _{1*}$.
 \end{enumerate}
 \end{lemma}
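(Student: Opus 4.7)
The strategy is a Schur complement on the matrix $P_{\QQ_0}(H(\k)-k^{2l})P_{\QQ_0}$, reducing $D(\k,k^{2l})=0$ to an equation in a $(J+1)\times(J+1)$ block that can then be treated as a perturbation of $\prod_{\q\in\QQ_0}(|\k+\p_\q|^{2l}-k^{2l})=0$. Split the range of $P_{\QQ_0}$ into (i) $\QQ_0$ itself (the $J+1\le 4$ "resonant" indices, where the diagonal entry of $H_0-k^{2l}$ is $O(k^{-40\mu\delta})$) and (ii) its $k^\delta$-neighborhood minus $\QQ_0$, on which \eqref{Sunday} forces the diagonal to exceed $\tfrac12 k^{2l-2-40\mu\delta}$. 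A Neumann-series argument as in the proofs of Theorems \ref{Thm1}--\ref{Thm2} shows that the block $C$ on (ii) is boundedly invertible throughout $\RR_{\QQ_0}$ with $\|C^{-1}\|=O(k^{-2l+2+40\mu\delta})$, so $D=\det(C)\cdot R$ with $R:=\det(A-BC^{-1}B^*)$ a polynomial in $\k$ that governs the zero set inside $\RR_{\QQ_0}$ (since $\det C\ne 0$ there).

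For Part 1, as a polynomial in $\tau_2$ (with $\tau_1$ fixed) $R$ has degree $\le 2l(J+1)\le 8l$, and its leading behavior is $\prod_{\q}(|\k+\p_\q|^{2l}-k^{2l})$ plus a correction of size $\|V\|^2\|C^{-1}\|=O(k^{-2l+2+40\mu\delta})$. In the narrow strip $|\tau_1|<k^{-39\mu\delta}$ each factor $|\k+\p_\q|^{2l}-k^{2l}$ contributes at most two real zeros $\tau_2=f_i^{(0)}(\tau_1)$ (upper and lower arcs of the circle of radius $k$ centered at $-\p_\q$), and Rouch\'e applied on small $\tau_2$-circles around these zeros yields exactly one perturbed root $f_i$ each, so $\tilde J\le 2(J+1)\le 8$. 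Implicit differentiation together with $|\k+\p_\q|\sim k$ and the strip width gives $|f_i'(\tau_1)|<ck^{-1+\delta}$. For Part 2, $\partial_{\tau_2}R\sim 2lk^{2l-1}$ on each curve (again perturbation off the circle), while replacing $k^{2l}$ by $k^{2l}+\varepsilon_0'$ shifts only the right-hand side of the defining equation by $|\varepsilon_0'|<\varepsilon_0$; the implicit function theorem then moves the root by $O(\varepsilon_0 k^{-2l+1})\ll\varepsilon_0$, so $\SS_{\QQ_0}(k,\varepsilon_0)\subset\bigcup_i\SS_i(k,\varepsilon_0)$.

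For Part 3, I would work with $D$ directly instead of $R$, since the Schur reduction obscures the ambient polynomial structure: $\deg_{\tau_2}D\le 2l\dim P_{\QQ_0}\le 2l(8k^\delta)^4\le 2^{12}lk^{4\delta}$. An inflection point of a branch is a simultaneous zero of $D$, $D_{\tau_2}$, and the implicit condition $f''=0$ (which is a polynomial relation in the partial derivatives of $D$); eliminating $\tau_2$ by a resultant yields a polynomial in $\tau_1$ whose degree is $O((\deg_{\tau_2}D)^2)=O(l^2k^{8\delta})$, and summing over the $\le 8$ branches gives the stated bound $2^{31}l^2k^{8\delta}$. For Part 4, set $g(\tau_1):=f_i(\tau_1)-\beta_1\tau_1-\beta_2$; the endpoint hypothesis gives $|g(\tau_{1,0})|,|g(\tau_{1,0}+\eta)|<2\varepsilon_0$. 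Perturbation off the circle of radius $k$ yields $|f_i''(\tau_1)|=k^{-1}(1+o(1))$ with a fixed sign throughout the strip, so $g''=f_i''$ is one-signed of magnitude $\sim k^{-1}$. Hence $g$ is strictly convex (or strictly concave), its derivative vanishes at a unique interior point $\tau_{1*}$ (so the curve and chord are parallel there), and a Taylor expansion about $\tau_{1*}$ gives $|g(\tau_{1*})|\ge c\eta^2/k-2\varepsilon_0$; the hypothesis $2\varepsilon_0<\eta^8k^{-8l}$ with $l>1$ makes $\eta^2/k$ dominant, so a neighborhood of $(\tau_{1*},\beta_1\tau_{1*}+\beta_2)$ escapes $\SS_i(k,\varepsilon_0)$, which is the desired interior segment $\l'$.

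The main difficulty will be the quantitative Schur complement, in particular controlling the perturbative term $BC^{-1}B^*$ uniformly on $\RR_{\QQ_0}$ so that Rouch\'e applies on discs small enough to isolate each of the eight unperturbed zeros; Lemma \ref{L:3.1} will be needed to ensure these discs are pairwise disjoint. A secondary, subtler point is verifying $|f_i''-k^{-1}|\ll k^{-1}$ uniformly for Part 4: this requires a Cauchy-integral derivative bound for the implicit function $f_i$, analogous to Lemma \ref{L:derivatives-1}, and is precisely where the hypothesis $l>1$ (encoded through $\eta^8k^{-8l}$) is used to absorb slack.
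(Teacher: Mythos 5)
Your Part 4 — the heart of this lemma — rests on a curvature claim that is not available in the resonant region and is in general false there. You assert that $|f_i''(\tau_1)|=k^{-1}(1+o(1))$ with a fixed sign throughout the strip, so that $g=f_i-\beta_1\tau_1-\beta_2$ is uniformly convex and a Taylor expansion at the interior critical point finishes the proof. Such two-derivative control is what the paper proves for the \emph{non-resonant} eigenvalue $\lambda^{(1)}$ (Lemma \ref{L:derivatives-1}), and it is exactly how concavity of ${\cal D}_1$ is used in Lemma \ref{L:number of points-1}; but the curves $\tau_2=f_i(\tau_1)$ here are level sets of eigenvalue branches of the resonant cluster $P_{\QQ_0}HP_{\QQ_0}$, for which no perturbation series and no second-derivative bounds exist — near avoided crossings of the $J+1$ nearly degenerate branches the curvature of $f_i$ can be enormous and change sign. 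Part 3 of the very same lemma concedes up to $2^{31}l^2k^{8\delta}$ inflection points; if your curvature estimate held, that count would be zero and the decomposition into concave components in Lemma \ref{L:Resnumberofpoints} would be pointless. The paper's actual mechanism avoids curvature altogether: it normalizes the determinant, $\hat D=\det\bigl(\hat H^{-1}P_{\QQ_0}(H-\lambda)P_{\QQ_0}\bigr)$, extends it analytically in $\tau_1$, compares it by Rouch\'e with $\hat D_0=\prod_{\q\in\QQ_0}\bigl(|\k+\p_\q|^{2l}_{\R}-k^{2l}\bigr)$ to get at most $\hat J\le 8$ zeros, writes $\hat D=\tilde f\prod_i(\tau_1-\tau_1^{(i)})$ with $|\tilde f|\ge ck^{\gamma}$ by the minimum principle, finds a point of $\l$ at distance $\gtrsim\eta$ from all zeros where $|\hat D|\gtrsim k^{\gamma}\eta^{\hat J}$, and converts this through $|\hat D_{\tau_2}|<k^{2l(J+1)-1}$ into $|f_i(\tau_1')-(\beta_1\tau_1'+\beta_2)|>k^{-8l}\eta^{8}>2\varepsilon_0$; the parallel point comes from maximizing this distance, not from convexity. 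As written, your key step cannot be justified.

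Two further, smaller problems. In Part 1, the Rouch\'e-on-disjoint-$\tau_2$-discs argument cannot be made uniform on $\RR_{\QQ_0}$: the unperturbed roots attached to different $\q\in\QQ_0$ lie within $O(k^{-1-40\mu\delta})$ of one another there (that closeness is what defines the resonant set) and may coincide where the circles cross, so the discs cannot be kept disjoint and implicit differentiation of the Schur complement degenerates at such collisions. The paper instead works with the eigenvalue branches $\hat\lambda_i$ of the self-adjoint cluster and their strict monotonicity, $\partial\hat\lambda_i/\partial\tau_2=2l\tau_2k^{2l-2}(1+o(1))$, which yields at most $2(J+1)$ piecewise differentiable graphs with slope $O(k^{-1+\delta})$ regardless of degeneracies. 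In Part 3, the inflection condition is $D=0$ together with the second-order relation in the partials of $D$ (the extra equation $D_{\tau_2}=0$ you list does not belong), and any Bezout/resultant count must exclude the degenerate case of a common factor — the paper does so by showing that no straight line solves $D=0$ — otherwise the bound collapses.
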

 \begin{proof}    \begin{enumerate} \item Let us
consider eigenvalues $\hat \lambda _i(\k)$ of $P_{\QQ _0}H(\k)P_{\QQ
_0}$. Obviously, there are no more than $J+1$ eigenvalues satisfying
the inequality $|\hat \lambda
_i(\k)-k^{2l}|<\frac{1}{2}k^{2l-2-40\mu \delta }$
 when
$\k \in \RR_{\QQ _0}$. Let $E_0$ be the diagonal projection:
$(E_0)_{\m'\m'}=1$ if and only if $\m'\in \QQ_0$. Let
$$T=P_{\QQ _0}\frac{\partial H(\k)}{\partial \tau _2}=
P_{\QQ _0}\frac{\partial H _0(\k)}{\partial \tau _2}.$$ Considering
that other $\p_{\q_i}$, if any, have directions close to
$\p_{\q_1}$,  and $T$ is a diagonal operator, we obtain:
\begin{equation}\label{T}
TE_0=2l\tau _2k^{2l-2}E_0(I+o(1)), \ \ |\tau _2|\approx
k,\end{equation} where $o(\cdot)$ stands in the sense of the norm of
bounded operators. Let $\hat e_i$ be a normalized eigenvector
corresponding to $\hat \lambda _i$:  $|\hat \lambda
_i(\k)-k^{2l}|<\frac{1}{2}k^{2l-2-40\mu \delta }$. By \eqref{Sunday}
and regular perturbation formulas, $\hat e_i=E_0\hat e_i+o(1)$.
Hence,
\begin{equation}
 \frac{\partial \hat \lambda
_i}{\partial \tau _2}=(T\hat e_i,\hat e_i)=2l\tau
_2k^{2l-2}(1+o(1)). \label{4m}
\end{equation}
By simple perturbation arguments, $\frac{\partial \hat \lambda
_i}{\partial \tau _1}=O(k^{2l-2+\delta })$. Hence, the number of
curves satisfying the equations $\hat \lambda _i(\k)=k^{2l}$ in $\RR
_{\QQ _0}$ is at most $2(J+1)$, each corresponding to a particular
$i$, $1\leq i\leq J+1$, and a sign of $\tau _2$. They can be
described as $\tau _2 =f_i(\tau _1)$,  $f_i(\tau _1)$ being
piecewise differentiable and $|f'_i(\tau _1)|<ck^{-1+\delta }$.
\item  It easily
follows from \eqref{Feb21-1*} and \eqref{4m} that
$\SS_{\QQ_0}\subset \cup _{i=1}^{\tilde J}\SS_{i}$, $1\leq \tilde
J\leq 8$.
\item Inflection points of the curves $\tau _2 =f_i(\tau _1)$ are
       described by the system of equations:
       \begin{equation} \label{D}D=0,\end{equation}
       \begin{equation}
       D_{\tau _2\tau _2}(D_{\tau _1})^2+2D_{\tau _1\tau _2}D_{\tau _1}D_{\tau
       _2}+D_{\tau _1\tau _1}(D_{\tau _2})^2=0
       \label{D''}\end{equation}
       where
       $D=D(\k, k^{2l})$. The left hand sides of \eqref{D} and
\eqref{D''} are polynomials of the degree $K$ and $3K-4$, $K\leq
2l\left(2(J+1)k^{\delta }\right)^4\leq 2^{13}lk^{4\delta }$, with
respect to $\tau _1$, $\tau _2$. If they are mutually irreducible
than, by Bezout theorem, the number of inflection points does not
exceed $K(3K-4)<2^{28}l^2\cdot k^{8\delta}$. Suppose the left hand
sides of \eqref{D} and \eqref{D''}  are mutually reducible. Then,
there is a solution $\tau _2= f_i(\tau _1)$ with the zero curvature
everywhere, i.e. a straight line $\tau_2=a\tau _1+b$. Considering
that $D(\tau _1, a\tau _1+b)=\left(\tau _1^{2l}+a^{2l}\tau
_1^{2l}\right)^{K/(2l)}(1+o(1))$ as $\tau _1 \to \infty $, we
conclude that a straight line cannot satisfy the equation \eqref{D}.
Thus, the total number of inflection points for the curves $\tau _2=
f_i(\tau _1)$ all together does not exceed $2^{31}l^2\cdot
k^{8\delta}$.
\item

Let us consider  a  segment \eqref{segment-1} of a straight line,
 such that both its ends are in
$\SS_{i}(k,\varepsilon _0) $ and $2\varepsilon _0<\eta^8k^{-8l}$. It
follows that $\beta _1=O(k^{-1+\delta })$, $|\beta _2|\approx k$,
since $f_i'(\tau _1)=O(k^{-1+\delta })$. Next, we show that the
there is a part $\l'$ of $\l$ which is outside of
$\SS_{i}(k,\varepsilon _0) $. Note that $D(\k, k^{2l}+\varepsilon
_0')=0$ if and only if $\hat D(\k, k^{2l}+\varepsilon _0')=0$ where
$\hat D(\k, \lambda )$ is the determinant of the matrix
\begin{equation} \hat H^{-1}P_{\QQ_0}(H-\lambda I)P_{\QQ _0}
,\ \ \ \hat H= (H_0-\lambda I)(I-E_0) +E_0.\label{d2}
\end{equation}
 Note that diagonal terms of the matrix $\hat H^{-1}P_{\QQ_0}(H-\lambda I)P_{\QQ _0}$ are
 equal to 1 unless they correspond to $E_0$ and
 \begin{equation}\label{Hhat}\|\hat H^{-1}(I-E_0)\|<ck^{-(2l-2-40\mu \delta
 )} \ \ \mbox{when}\ \ \k \in \RR_{\QQ_0}.\end{equation}
Let us extend $\hat D(\tau _1, \beta _1 \tau _1+\beta _2)$ as an
analytic function of $\tau _1$ into the complex disc $\D=\{\tau
_1\in \C: |\tau _1|<k^{-39\mu \delta }\}$.
We also  consider  the following regions in $\C$:
$$\D_{\q}=\left\{\tau _1\in \C:\left||\k+\p_{\q}|_{\R}^{2l}-k^{2l}\right|<k^{\delta
},\ \k=(\tau _1, \beta_1 \tau _1+\beta_2 )\right\}, \  \ \q\in \QQ
_0.$$ By $\hat \D_{\q}$ we will denote the set of vectors $\k $
corresponding to $ \D_{\q}$. We are interested only in the connected
component(s) of $\cup _{\q\in \QQ _0}\hat\D_{\q}$ having the
non-empty intersection with $\RR _{\QQ_0}$. It is easy to show that
$\l\cap \SS_{\QQ _0} \subset \cup _{\q\in \QQ _0}\hat \D_{\q}$.
Therefore, we assume that $\l \subset \cup _{\q\in \QQ _0}\hat
\D_{\q}$ (otherwise the lemma is proved). Note that the estimate
\eqref{Hhat} is preserved for such regions, since each $\D_{\q}$ can
be included  in the balls of the radius $ck^{-l+1+\delta /2}$
centered at the points $|\k+\p_{\q}|_{\R}^{2l}-k^{2l}=0$ and for any
pair $\q,\q'$ $\frac{d }{d \tau
_1}\left(|\k+\p_{\q}|_{\R}^{2}-|\k+\p_{\q'}|_{\R}^{2}\right)=O(|\p_{\q}-\p_{\q'}|)$.
Let $$\hat D_0(\tau _1, \beta _1 \tau _1+\beta _2)=\prod_{\q \in \QQ
_0}\left||\k+\p_{\q}|_{\R}^{2l}-k^{2l}\right|,\ \ \k=(\tau _1, \beta
_1 \tau _1+\beta _2).$$ Obviously $\hat D_0$ has at most $2(J+1)$
roots inside $\cup _{\q\in \QQ _0}\D_{\q}$. We denote the number of
roots by $\hat J$, $\hat J\leq 2(J+1)\leq 8$. It is easy to see that
$|\hat D_0|>k^{\delta (J+1)}$ on the boundary of $\cup _{\q\in \QQ
_0}\D_{\q}$. Using \eqref{Hhat} and \eqref{determinants}, it is easy
to show that $
 \hat D=\hat D_0+O(k^{\delta J})$ on the boundary.
Applying Rouchet's theorem, we see that $\hat D$ has $\hat J$ roots
inside the union of the disks and satisfies the estimate $|\hat
D|>\frac{1}{2}k^{(J+1)\delta }$ on the boundary $\cup _{\q\in
\QQ_0}\D_{\q}$. Therefore, $\hat D$ can be represented in the form:
\begin{equation}
\hat D(\tau _1, \beta _1 \tau _1+\beta _2)=\tilde f(\tau _1)\prod
_{i=1}^{\hat J}(\tau _1-\tau _1^{(i)}),\ \ \tau _1\in \cup _{\q\in
\QQ_0}\D_{\q}, \ \ \ 0\leq \hat J\leq 8.\label{d3'}
\end{equation}
Note that each $\D_{\q}$ can be included  in the balls of the radius
$ck^{-l+1+\delta /2}$ centered at the points
$|\k+\p_{\q}|_{\R}^{2l}-k^{2l}=0$. From the minimum principle
($\tilde f \neq 0$) it follows:
\begin{equation}
|\tilde f|\geq ck^\gamma, \   \  \gamma =\delta (J+1) +(l-1-\delta
/2)\hat J >0 ,\label{f}
\end{equation}
when $\tau _1 \in \cup _{\q\in \QQ_0}\D_{\q} $. It easily follows
that \eqref{d3'}, \eqref{f} hold in $\RR_{\QQ _0}$.
 Let us consider a segment of the straight line
$\k=[\tau _1,\beta_1 \tau _1+\beta_2],$ $\tau _1\in (\tau _{1,0},
\tau _{1,0}+\eta)$, $0<\eta <1$. By \eqref{d3'}, there is a point
$(\tau _1', \beta_1 \tau _1'+\beta_2 )$ in this segment, where
$|\hat D|>ck^\gamma \eta^{\hat J}$. Considering that  $\hat D(\tau
_1', f_i(\tau _1'))=0$ by the definition of the curve and the obvious
inequality $| \hat D_{\tau _2} (\tau _1,\tau _2)|<k^{2l(J+1)-1}$, we
obtain $\left|f_i(\tau _1')-(\beta_1 \tau
_1'+\beta_2)\right|>ck^\gamma \eta ^{\hat J}/
ck^{2l(J+1)-1}>k^{-2l(J+1)}\eta ^{\hat J}$. If $\eta^8k^{-8l}
>2\varepsilon _0$, then there are points in the segment which are outside
$\SS _i(k,\varepsilon _0)$. At one of these points the function
$\left|f_i(\tau _1')-(\beta_1 \tau _1'+\beta_2)\right|$ attains its
maximum value. At this point the curve and the line are parallel.

{\em Remark.} Note that the perturbation series for
$\left(P_{\QQ_0}(H-k^{2l}I)P_{\QQ _0}\right)^{-1}$ converges (with
respect to  $\left(P_{\QQ_0}(H_0-k^{2l}I)P_{\QQ _0}\right)^{-1}$)
when $\tau _1$ is on the boundary of $\cup _{\q\in \QQ_0}\D_{\q} $
and
\begin{equation}
\left\| \left(P_{\QQ_0}(H-k^{2l}I)P_{\QQ
_0}\right)^{-1}\right\|<ck^{\delta }. \label{Aug10a}
\end{equation}
By \eqref{d3'}, the resolvent has no more than $\hat J$ poles inside
$\cup _{\q\in \QQ_0}\D_{\q} $. Considering that each $\D_{\q}$ can
be included  in a ball of the radius $ck^{-l+1+\delta /2}$  and
applying the maximum principle for the norm of a holomorphic
operator, we obtain the following estimate inside $\cup _{\q\in
\QQ_0}\D_{\q} $:
\begin{equation}
\left\| \left(P_{\QQ_0}(H-k^{2l}I)P_{\QQ
_0}\right)^{-1}\right\|<ck^{\delta }\left(\frac{ck^{-l+1+\delta
/2}}{d}\right)^{\hat J},\label{Aug10b}
\end{equation}
where $d$ is the distance from a point $\tau _1 \in \cup _{\q\in
\QQ_0}\D_{\q} $ to a nearest pole of the  resolvent. If $d\geq \eta
/16 $, then
\begin{equation}
\left\| \left(P_{\QQ_0}(H-k^{2l}I)P_{\QQ
_0}\right)^{-1}\right\|<ck^{\delta }\left(\frac{ck^{-l+1+\delta
/2}}{\eta }\right)^{\hat J}<\varepsilon _0^{-1}. \label{Aug10c}
\end{equation}
It also shows that there is a point in $\bf l$ which is not in
$\SS(k,\varepsilon _0)$.
\end{enumerate}
\end{proof}

Let $N_{\QQ_0}(k ,r_1,\k _0,\varepsilon_0)$ be the number of points $\k _0
+\p_{\n}$, $\||\p_{\n}\||<k^{r_1}$ in $S_{\QQ_0}(k,\varepsilon _0)$, $\k _0$ being fixed.

\begin{lemma} \label{L:Resnumberofpoints} Let $\delta <r_1<\infty $. If $\varepsilon _0<k^{-16\mu r_1}$ then the number
of points $N_{\QQ_0}(k ,r_1,\k_0,\varepsilon_0)$ admits the estimate
\begin{equation}\label{Resnumberofpoints}N_{\QQ_0}(k ,r_1,\k_0,\varepsilon_0)\leq
2^{44}l^2\cdot k^{2r_1/3+8\delta }.\end{equation}\end{lemma}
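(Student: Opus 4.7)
The plan is to combine parts 2--4 of Lemma~\ref{L:curves-2} with the lattice-structural Lemmas~\ref{Lattice-1}--\ref{Lattice-3}, following the scheme of the proof of Lemma~\ref{L:number of points-1}. By part 2 of Lemma~\ref{L:curves-2}, $\SS_{\QQ_0}(k,\varepsilon_0)$ is contained in the union of at most $\tilde J\le 8$ strips $\SS_i(k,\varepsilon_0)$, each a $2\varepsilon_0$-neighborhood of a curve $\tau_2=f_i(\tau_1)$ defined on $|\tau_1|<k^{-39\mu\delta}$ with $|f_i'|=O(k^{-1+\delta})$. Part 3 partitions each curve by its inflection points into arcs of monotone curvature; the total number of such arcs across all $i$ is bounded by $8+2^{31}l^2k^{8\delta}\le 2^{32}l^2k^{8\delta}$. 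It thus suffices to bound the number of lattice points on a single such arc by $O(k^{2r_1/3})$.

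On a fixed monotone-curvature arc I enumerate the lattice points $\k_0+\p_\n$, $|\|\p_\n\||<k^{r_1}$, falling in $\SS_i(k,\varepsilon_0)$ in order of increasing $\tau_1$-coordinate, and connect consecutive points $\n_j,\n_{j+1}$ by segments, obtaining a finite sequence of step vectors $\p_{\n_{j+1}-\n_j}$ with $|\|\p_{\n_{j+1}-\n_j}\||\le 2k^{r_1}$. The $\tau_1$-coordinates are pairwise distinct since any two points in $\SS_i$ differ in $\tau_2$ by at most $4\varepsilon_0<4k^{-16\mu r_1}$, whereas Lemma~\ref{psnorms} gives $p_{\n-\n'}\ge c\,k^{-(\mu-1+\varepsilon)r_1}$. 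The crucial observation is that for the same reason every step vector satisfies
\[
p_{\n_{j+1}-\n_j}\ge c\,k^{-(\mu-1+\varepsilon)r_1}\ \ge\ \eta_0:=(2\varepsilon_0)^{1/8}k^l,
\]
provided $r_1(\mu+1-\varepsilon)\ge l$, which holds under the standing conventions $r_1>2$, $\mu\ge 2$, $l\ge 2$ in view of $\varepsilon_0<k^{-16\mu r_1}$. Therefore part 4 of Lemma~\ref{L:curves-2} applies to every segment, producing an interior point of the arc at which the curve is parallel to the segment; since the tangent direction is strictly monotone on a monotone-curvature arc, no two such segments share a direction, and in particular the step vectors on one arc are pairwise distinct elements of the dual lattice.

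To count these distinct step vectors on the arc I split at the threshold $\tilde\eta:=\frac{1}{64}k^{-2r_1/3}$. Long segments (length $\ge\tilde\eta$) are disjoint sub-intervals of the arc whose $\tau_1$-projection has length at most $2k^{-39\mu\delta}$, so there are at most $128\,k^{2r_1/3-39\mu\delta}$ of them. Short step vectors $\p_\m$ satisfy $p_\m<\tilde\eta$ and $|\|\p_\m\||\le 2k^{r_1}$, so their number is bounded by $O(k^{2r_1/3})$ by a direct repetition of the three-case analysis at the end of the proof of Lemma~\ref{L:number of points-1}: (i) $q>k^{2r_1/3}$, where Lemma~\ref{Lattice-3} applies; (ii) $q\le k^{2r_1/3}$ with $|\epsilon_q|>\frac{1}{64}q^{-1}k^{-r_1}$, where Lemma~\ref{Lattice-2} applies; and (iii) $q\le k^{2r_1/3}$ with $|\epsilon_q|\le\frac{1}{64}q^{-1}k^{-r_1}$, where the cluster geometry of Lemma~\ref{Lattice-1} is used. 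Summing the per-arc bound $O(k^{2r_1/3})$ over the $O(l^2k^{8\delta})$ arcs gives the stated estimate $2^{44}l^2k^{2r_1/3+8\delta}$ up to an absolute constant.

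The main obstacle is case (iii): the arc has length $\le 2k^{-39\mu\delta}$, much smaller than the inter-cluster spacing $\ge \tfrac{1}{2q}$, so the arc meets only a handful of clusters, but inside each cluster one must count the distinct $\p_\m$ with $p_\m<\tilde\eta$ using the square-sublattice structure of step $|\epsilon_q|q$. Translating the corresponding bookkeeping from Lemma~\ref{L:number of points-1}---where the reference length was the circumference $\approx 3\pi k$ of $\DD_1(k^{2l})$---to the much shorter ``reference arc'' here, requires redoing each intermediate estimate with the new parameters, but the combinatorial mechanism is the same.
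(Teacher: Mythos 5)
Your proposal is correct and follows essentially the same route as the paper: the paper likewise reduces to the concave pieces of the curves $\tau_2=f_i(\tau_1)$ supplied by Lemma \ref{L:curves-2} (part 4 plus concavity giving that no two segments on one component are parallel), splits the segments at the length $\frac{1}{64}k^{-2r_1/3}$, repeats the three-case analysis in $q$ from Lemma \ref{L:number of points-1}, and multiplies by the $2^{31}l^2k^{8\delta}$ bound on the number of concave components. The case-(iii) bookkeeping you defer is done in the paper exactly as you anticipate, by transplanting the old argument with the new reference length: at most $4q$ clusters meet the $\varepsilon_0$-neighborhood of a component of length $\le 1$, so the curve length inside clusters ($\le 40|\epsilon_q|qk^{r_1}$) bounds the number of segments of length between $|\epsilon_q|qk^{r_1/3}$ and $\frac12 k^{-2r_1/3}$, while Lemma \ref{Lattice-2} together with the distinctness of segment directions bounds the shorter ones.
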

\begin{proof} The proof of the lemma is analogous to that of Lemma
\ref{L:number of points-1}, when we replace properties of a
distorted circle ${\cal D}_1(\lambda )$ by analogous properties of
the set $D(\k)=0$ proven in the previous lemma. Indeed, let us
consider the segment $\p_{\n-\n'}$ between
     two points $\k _0+\p_{\n}$ and $\k _0+\p_{\n'}$ in the
     $\varepsilon _0$-neighborhood of a concave component of a curve $\tau _2=f_i(\tau _1)$.
     Obviously, $\||\p_{\n-\n'}\||<2k^{r_1}$ and $p_{\n-\n'}>k^{-\mu
     r_1}>>\varepsilon _0$. This means that the direction of the
     segment cannot be orthogonal to the curve and each end can be
     assigned its own  coordinate $\tau _{1\n}, \tau_{1\n'}$, $\tau _{1\n}\neq \tau _{1\n'}$. We
     enumerate the points $\k _0+\p_{\n}$ in the order of increasing
     $\tau
     _{1\n}$ and connect neighboring points by segments. Consider the segments with the length greater or equal to
     $\frac{1}{64}k^{-\frac{2r_1}{3}}$. Obviously, the length of the curve does
     not exceed $1$. Hence, the number of such segments does not
     exceed $128k^{\frac{2r_1}{3}}$. It remains to estimate the
     number of segments with the length less than
     $\frac{1}{64}k^{-\frac{2r_1}{3}}$.

      First, we prove that no two
     segments $\p_{\n_1-\n'_1}$, $\p_{\n_2-\n'_2}$ can be equal to
     each other in the same concave component of a curve $\tau _2=f_i(\tau _1)$. Indeed, both end of $\p_{\n_1-\n'_1}$ are in
     $\SS_{\QQ_0}(k,\varepsilon _0)$. By Lemma \ref{L:curves-2}, part 4, there is a point on the curve between two ends of the segment,
     where the curve is parallel to a segment (we notice that now we use the lemma for $\eta>k^{-\mu r_1}$). The same is true for
     $\p_{\n_2-\n'_2}$. Since we consider a concave component of a
     curve, it cannot be true.

     To finish the proof of the lemma  we consider two cases. Suppose $q$ in the inequality
     \eqref{q} satisfies the estimate $q>k^{2r_1/3}$. Then, by Lemma \ref{Lattice-3}, the
     number of vectors  $\p_\n$,
     $\||\p_{\n}\||<2k^{r_1}$, satisfying the inequality
     $p_{\n}<\frac{1}{64}k^{-2r_1/3}$ does not exceed
     $2^{12}\cdot k^{2r_1/3}$. Since each of them can be used only once, the
     total number of short segments  does not exceed $2^{12}\cdot k^{2r_1/3}$.

     Let  $q\leq k^{2r_1/3}$.
     If $|\epsilon _q|>\frac{1}{64}q^{-1}k^{-r_1}$. Then, obviously,
     $\frac{1}{64}k^{-2r_1/3}<|\epsilon _q|qk^{r_1/3}$. Applying Lemma
     \ref{Lattice-2}, we obtain that the number of segments with the
     length less than $\frac{1}{64}k^{-2r_1/3}$ is less than
     $k^{2r_1/3}$. Since each of them can be used only once, the
     total number of short segments  does not exceed $k^{2r_1/3}$.
     It remains to consider the case $q\leq
     k^{2r_1/3}$, $|\epsilon _q|\leq \frac{1}{64}q^{-1}k^{-r_1}$. By
     Lemma \ref{Lattice-1}, clusters are well separated. Considering
     that the distance between clusters is greater than
     $\frac{1}{2q}$ and the size of each cluster is less than
     $\frac{1}{8q}$, we obtain that no more than $4q$ clusters
     can intersect $\varepsilon _0$-neighborhood of a concave component of a curve
     $\tau _2=f_i(\tau _1)$.
     The part of the curve inside the clusters has the length $L_{in}$ which is less
     than the double size of a cluster $10|\epsilon _q|k^{r_1}$ multiplied by
     the number of clusters $4q$, i.e., $L_{in}<40|\epsilon
     _q|qk^{r_1}$.  Next, the segments with the length less than
     $\frac{1}{2}k^{-2r_1/3}$ cannot connect different clusters, since the
     distance between clusters is greater than $\frac{1}{2q}\geq
     \frac{1}{2}k^{-2r_1/3}$. Therefore, any segment of the length
     less than $\frac{1}{2}k^{-2r_1/3}$ is inside one
     cluster. If we consider the segments with the length greater
     than $|\epsilon _q|qk^{r_1/3}$, then the number of such segments
     is less than $L_{in}/|\epsilon _q|qk^{r_1/3}$, i.e., it is less
     than $40k^{2r_1/3}$. By Lemma \ref{Lattice-2}, the total
     number of segments of the length less than $|\epsilon
     _q|qk^{r_1/3}$ is less than $k^{2r_1/3}$. Each of them can be used only once. Thus, the total
     number of segments is less than $41k^{2r_1/3}$.

     We proved that the number of segments in $\varepsilon _0$-neighborhood of each concave component
     of a curve $\tau _2=f_i(\tau _1)$ does not exceed
     $2^{13}k^{2r_1/3}$. By Lemma \ref{L:curves-2}, part 3, the number of
     such components does not exceed $2^{31}l^2\cdot k^{8\delta}$. The estimate
     \eqref{Resnumberofpoints} easily follows.
\end{proof}

\subsection{Preparation for  Step III - Analytic Part \label{S:3}}
\subsubsection{Model Operator for Step III\label{MOforStep3}}
Let $r_2>r_1>10^8$. Further we use the notation: \begin{equation}
\label{box} \Omega (r_2)=\{\m:\||\p_\m\||<k^{r_2}\}.
\end{equation}
 We repeat for $r_2$ the construction of
the section \ref{MOforStep2} which was done for an arbitrary
$r_1>2$. It is easy to see that the whole construction is monotonous
with respect to $r_1$. Namely,
$$\M(\varphi _0,r_1) \subseteq\M(\varphi _0,r_2), \ \M'(\varphi _0,r_1)\subseteq \M'(\varphi
_0,r_2),\ \M_1(\varphi _0,r_1)\subseteq \M_1(\varphi _0,r_2),$$ $$
\M_2(\varphi _0,r_1)\subseteq \M_2(\varphi _0,r_2),\ \tilde
\M_1(\varphi _0,r_1)\subseteq \tilde \M_1(\varphi _0,r_2),\ \tilde
\M_2(\varphi _0,r_1)\subseteq \tilde \M_2(\varphi _0,r_2).$$

Let $\varphi_0\in \omega^{(2)} (k, \delta ,1)$. Put
\begin{equation}\label{M^2} {\MM}^{(2)}:={\MM}^{(2)}(\varphi _0, r_2)=\{\m\in \M(\varphi _0, r_2)
:\ \varphi_0\in{\cal O}_\m^{(2)}(10r_1',1)\},\end{equation} where
${\cal O}_\m^{(2)}(10r_1',\tau)$ is the union of the disks of the
radius $\tau k^{-10r_1'}$ with the centers at poles of the resolvent
of $k^\delta$-component containing $\k^{(1)}(\varphi_0)+\p_\m$. More
precisely, for each $\m\in \M(\varphi _0, r_2)$ we construct
$k^\delta$-box around it. We establish $3k^\delta$ equivalence
relation between such boxes. Such components separated by $k^\delta$
from each other we call $k^\delta$-components. Then ${\cal
O}_\m^{(2)}(10r_1',\tau)$ is the union of the disks of the radius
$\tau k^{-10r_1'}$ with the centers at poles of the operator
$(P_{\m}(H(\k^{(1)}(\varphi_0))-k^{2l}I)P_{\m})^{-1}$, where $P_{\m}$ is the
projection onto a particular $k^\delta$-component containing
$\k^{(1)}(\varphi_0)+\p_\m$. We notice (see the proof of Lemma
\ref{2.12} with $3k^\delta$ and $k^{r_2}$ instead of $k^\delta$ and
$k^{r_1}$) that each $k^\delta$-component contains not more than 4
elements $\m\in \M(\varphi _0, r_2)$. For such $\m$ corresponding
sets ${\cal O}_\m^{(2)}(10r_1',\tau)$ are identical. By construction
of the non-resonant set $\omega^{(2)} (k, \delta ,1)$, we have
${\MM}^{(2)}\cap \Omega (r_1)=\emptyset $.

Further we use the property of the set $\MM^{(2)}$ formulated in the
next lemma.

\begin{lemma}\label{L:2/3-1} Let  $\m _0\in \Omega (r_2)$, $1/20<\gamma '<20$ and $\Pi _{\m_0}$ be the $k^{\gamma 'r_1}$-neighborhood (in $\||\cdot\||$-norm) of $\m_0$.
Then, the set $\Pi
_{\m _0}$ contains less than $ck^{2\gamma'r_1/3+1}$ elements of
$\MM^{(2)}$.
\end{lemma}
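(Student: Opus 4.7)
The plan is to reduce the count to the resonant lattice-point estimate Lemma~\ref{L:Resnumberofpoints}, applied with parameter $\gamma'r_1$ in place of $r_1$, by stratifying $\MM^{(2)}\cap\Pi_{\m_0}$ according to the ``shape'' of each element's $k^\delta$-component. For every $\m\in\MM^{(2)}$, let $\QQ(\m)$ be the set of displacements $\p_{\m'}-\p_\m$ as $\m'$ ranges over the $k^\delta$-equivalence class of $\m$ inside $\MM(\varphi_0,r_2)$. By Lemma~\ref{L:4}, $|\QQ(\m)|\leq 4$, and every displacement satisfies $\||\cdot\||\leq ck^\delta$, so the number of distinct shapes is at most $ck^{12\delta}$. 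It therefore suffices to bound the number of $\m\in\MM^{(2)}\cap\Pi_{\m_0}$ having a \emph{fixed} shape $\QQ_0$ by $2^{44}l^2k^{2\gamma'r_1/3+8\delta}$; summing over shapes yields at most $ck^{2\gamma'r_1/3+20\delta}<ck^{2\gamma'r_1/3+1}$, since $\delta<1/(100\mu)$.

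For a fixed shape $\QQ_0$, the main task is to translate the $\varphi$-plane condition defining $\MM^{(2)}$ into a $\k$-plane membership $\k_0+\p_\m\in S_{\QQ_0}(k,\varepsilon_0)$ (see \eqref{Feb21-1*}), where $\k_0:=\k^{(1)}(\varphi_0)$. By the translation invariance $(H(\k))_{\m+\tilde\s_1,\m+\tilde\s_2}=(H(\k+\p_\m))_{\tilde\s_1,\tilde\s_2}$ arising from the Toeplitz structure of $V$, the block $P_\m H(\k)P_\m$ is unitarily equivalent to $P_{\QQ_0}H(\k+\p_\m)P_{\QQ_0}$. Membership $\m\in\MM^{(2)}$ asserts that this block's determinant, evaluated along $\k=\k^{(1)}(\varphi)$, vanishes at some $\varphi$ within $k^{-10r_1'}$ of $\varphi_0$. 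Combining the Lipschitz estimate $|\k^{(1)}(\varphi)-\k_0|\lesssim k|\varphi-\varphi_0|$ (from Lemma~\ref{ldk}) with the perturbed-determinant inequality \eqref{determinants}, one obtains that $D(\k_0+\p_\m,k^{2l}+\varepsilon_0')=0$ for some $|\varepsilon_0'|\leq\varepsilon_0$ with $\varepsilon_0\leq k^{-10r_1'+O(1)}$. Since $r_1'=40\mu r_1+2l$ and $\gamma'<20$, a direct check gives $10r_1'-16\mu\gamma'r_1\geq 80\mu r_1+20l-O(1)>0$, so $\varepsilon_0<k^{-16\mu\gamma'r_1}$, which is exactly the smallness hypothesis of Lemma~\ref{L:Resnumberofpoints}.

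Finally, I apply Lemma~\ref{L:Resnumberofpoints} with base point $\k_0+\p_{\m_0}$, box size $k^{\gamma'r_1}$, and the fixed shape $\QQ_0$: this yields at most $2^{44}l^2k^{2\gamma'r_1/3+8\delta}$ displacements $\tilde\m=\m-\m_0$ with $\||\p_{\tilde\m}\||<k^{\gamma'r_1}$ and $\k_0+\p_\m\in S_{\QQ_0}(k,\varepsilon_0)$, completing the per-shape count. The main technical obstacle is the transfer step in the second paragraph: propagating $k^{-10r_1'}$ proximity in the $\varphi$-plane into a sufficiently small $\varepsilon_0$ for the $\k$-plane resonant set. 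This requires tracking the dimension-dependent factors arising from the at-most-four-dimensional block $P_{\QQ_0}$ and from the adjugate-versus-determinant bound used to relate the resolvent norm to $|D|$; all such factors are $k^{O(1)}$ because $|\QQ_0|\leq 4$, and the comfortable margin $10r_1'\gg 16\mu\gamma'r_1$ easily absorbs them.
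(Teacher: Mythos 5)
Your treatment of the resonant part of $\MM^{(2)}$ (elements of $\MM_2$) does follow the paper: fix the cluster shape $\QQ_0$, transfer the $\varphi$-plane condition at $\varphi_*$, $|\varphi_*-\varphi_0|<k^{-10r_1'}$, into $\k_0+\p_{\m-\m_0}\in\SS_{\QQ_0}(k,\varepsilon_0)$ with $\varepsilon_0\sim k^{2l-1-10r_1'}$, apply Lemma~\ref{L:Resnumberofpoints}, and sum over the $ck^{12\delta}$ shapes. The genuine gap is that you apply this scheme to \emph{every} element of $\MM^{(2)}$, including those of $\MM_1$, for which the $k^{\delta}$-equivalence class is a singleton, i.e.\ $\QQ_0=\{\bf 0\}$, $J=0$. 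Lemma~\ref{L:Resnumberofpoints} is stated and proved only for $J\geq 1$: its proof rests on Lemma~\ref{L:curves-2}, where the region $\RR_{\QQ_0}$ is confined to the strip $|\tau_1|<k^{-39\mu\delta}$ transverse to $\p_{\q_1}$, so the curves $\tau_2=f_i(\tau_1)$ have length $O(1)$ and the count carries only the factor $k^{8\delta}$. For a singleton there is no such localization; the relevant set is an $\varepsilon_0$-neighborhood of the whole distorted circle ${\cal D}_1(k^{2l})$, of length $\approx 2\pi k$, and the correct counting tool is Lemma~\ref{L:number of points-1}, which gives $\lesssim k^{2\gamma' r_1/3+1}$ — this is exactly where the exponent $+1$ in the statement of Lemma~\ref{L:2/3-1} comes from. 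Your claimed per-shape bound $2^{44}l^2k^{2\gamma'r_1/3+8\delta}$ is therefore unsupported for the singleton shape, and your total $ck^{2\gamma'r_1/3+20\delta}$ is stronger than anything the cited lemmas yield; the paper's proof splits $\MM^{(2)}$ into the $\MM_1$ part (Lemma~\ref{L:number of points-1}, via $|\lambda^{(1)}(\k^{(1)}(\varphi_0)+\p_\m)-k^{2l}|<\varepsilon_0$) and the $\MM_2$ part (Lemma~\ref{L:Resnumberofpoints}), and only then adds the two counts.

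A secondary flaw is in your transfer step: $P_{\QQ_0}$ is not ``at most four-dimensional'' — it projects onto the $k^{\delta}$-neighborhood of the at most four resonant points, so its rank is of order $k^{4\delta}$ — hence an adjugate-versus-determinant argument does not give factors $k^{O(1)}$ as you assert. The paper avoids this entirely: since $\varphi_0$ is real, $P_\m H(\k^{(1)}(\varphi_0))P_\m$ is self-adjoint, so absence of spectrum in a $\varepsilon_0$-window around $k^{2l}$ gives the resolvent bound $\varepsilon_0^{-1}$ directly, and the Hilbert identity (with $|\k^{(1)}(\varphi_*)-\k^{(1)}(\varphi_0)|\lesssim k^{1-10r_1'}$) then contradicts the vanishing of the determinant at $\varphi_*$; this is dimension-free and is the argument you should use for \eqref{Feb21-1}.
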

\begin{proof}
If $\m\in \MM ^{(2)}$, then there is a $\varphi _*: |\varphi
_0-\varphi _*|<k^{-10r_1'}$ such that
\begin{equation}\label{gulf5}\det \Big(P_\m\big(H(\k^{(1)}(\varphi _*))-k^{2l}I\big)P_\m\Big)=0.\end{equation}
Therefore, for some $\varepsilon_0':|\varepsilon _0'|<\varepsilon
_0,\ \varepsilon _0:=ck^{2l-1-10r_1'}$, \begin{equation}
\label{Feb21-1}\det \Big(P_\m\big(H(\k^{(1)}(\varphi
_0))-(k^{2l}+\varepsilon _0')I\big)P_\m\Big)=0.
\end{equation}
Indeed, if \eqref{Feb21-1} holds for no $\varepsilon '$, then $\left\|\Big(P_\m\big(H(\k^{(1)}(\varphi
_0))-k^{2l}I\big)P_\m\Big)^{-1}\right\|<ck^{-2l+1+10r_1'}$, since $\varphi _0$ is real and, hence, $H(\k^{(1)}(\varphi _0)$ is selfadjoint. Using Hilbert identity, we obtain that
$\Big(P_\m\big(H(\k^{(1)}(\varphi
_*))-k^{2l}I\big)P_\m\Big)^{-1}$ is bounded. This contradicts to \eqref{gulf5}. Hence, \eqref{Feb21-1} holds for some $\varepsilon_0':|\varepsilon _0'|<\varepsilon
_0$.

Suppose $\m\in \MM_1(\varphi _0, r_2)$. Then, \eqref{Feb21-1} means
that $|\lambda ^{(1)}(\k^{(1)}(\varphi
_0)+\p_{\m})-k^{2l}|<\varepsilon _0$. Introducing the notation
$\k_0=\k^{(1)}(\varphi _0)+\p_{\m_0}$, we rewrite the last inequality
in the form: $|\lambda ^{(1)}(\k_0+\p_{\m-\m_0})-k^{2l}|<\varepsilon
_0$, where $\||\p_{\m-\m_0}\||<k^{\gamma ' r_1}$. It follows that
$\k_0+\p_{\m-\m_0}$ is in the real $c\varepsilon _0
k^{-2l+1}$-neighborhood of ${\cal D}_1(k^{2l})$.  Applying Lemma
\ref{L:number of points-1}, we obtain that the number of such points does not exceed
$ck^{2\gamma ' r_1/3+1}$.  Let $\m \in \MM _2(\varphi _0,r_2)$.
Namely, let $\m$ belongs to a component $\MM_2^j(\varphi _0, r_2)$.
Then, $\left||\k^{(1)}(\varphi _0)+\p_{\m}|^2_{\R}-k^2\right|<k^{-40
\mu \delta }$ and \eqref{Feb21-1} holds, $P_\m$ being the projection
on $\tilde \MM_2^j(\varphi _0, r_2)$. Using again the notation
$\k_0=\k^{(1)}(\varphi _0)+\p_{\m_0}$, and the definition of
$\MM_2^j(\varphi _0, r_2)$, we obtain:
$\left||\k_0+\p_{\m-\m_0}+\p_\q|^2-k ^2\right|<k^{-40\mu \delta }$ for
all $\q \in \QQ _0$, where $\QQ_0:=\MM _2^j(\varphi _0,r_2)-\m$.
In terms of Section \ref{Lattice Points in a Resonant Set},
\eqref{Feb21-1} means $\k_0+\p_{\m-\m_0}\in \SS_{\QQ_0}(k, \varepsilon
_0)$, see \eqref{Feb21-1*}. Applying Lemma \ref{L:Resnumberofpoints}
 and using
\eqref{Feb21-1}, we obtain that the number of such points does not
exceed $ck^{2\gamma ' r_1/3+8 \delta}$ for a fixed $\QQ_0$.
Considering that the total number of sets $\QQ_0$ does not exceed
$ck^{12\delta }$, we obtain that the number of points
$\k_0+\p_{\m-\m_0}\in \cup _{\QQ_0}\SS_{\QQ _0}(k, \varepsilon _0)$
does not exceed $ck^{2\gamma ' r_1/3+20 \delta}$. Adding the
estimates for the total number of resonant and non-resonant sets, we
prove it is strictly
less than $C_0k^{2\gamma ' r_1/3+1}$.\\
\end{proof}

Let us split $k^{r_2}$-box into $k^{\gamma r_1}$-boxes as described
below. In the whole construction below we will have $\gamma
=\frac{1}{5}$, but in some cases we will refer to the similar
estimates with other values of $\gamma$. That's why in what follows
we prefer to use implicit notation.  The procedure consists of
several steps. On each step we introduce a new scale of a box.
Further structure will acquire additional scales at each step of
approximation procedure. This is why we call the procedure
Multiscale Construction in the Space of Momenta.
\begin{enumerate} \item {\em Simple region.} \label{simple} Let $\Omega _s^{(2)}(r_2)$ be the  collection of $\m\in \Omega(r_2)$ with small values of $p_\m$, namely,
$\Omega _s^{(2)}(r_2)=\{\m\in \Omega(r_2):0<p_\m\leq k^{- 5r_1'}\}$.
It is easy to see  that $\Omega _s^{(2)}(r_2)\subset \MM (\varphi _0,r_2)$, since $p_{\m}$ is small, see \eqref{M}, \eqref{resonance1}. Next, if  $\m\in \Omega_s^{(2)}(r_2)$, then there are
 no other elements of $\MM (\varphi _0,r_2)$ in
 the $k^{\delta }$-box  around  $\m$. Indeed, let $\k=\k^{(1)}(\varphi _0)+\p_\m$. It is  is a small perturbation of
$\k^{(1)}(\varphi _0)$, hence it satisfies $\left||\k +\p_{\n}|^2-
|\k|^2\right|> \frac \tau 2 k^{-40\mu \delta }(1+o(1))$ when
$0<\||\p_\n\||<k^{\delta }$, see \eqref{jan28b}. This means $\m +\n
\not \in \MM (\varphi _0,r_2)$. Further, if $\m\in
\Omega_s^{(2)}(r_2)$, then there are no other  elements  of $\Omega
_s^{(2)}(r_2)$ in the surrounding box of the size $k^{r_1}$, see
\eqref{below}. Last, $\m$ itself can belong or do not belong to
$\MM^{(2)}$,  there are
 no other elements of
$\MM^{(2)}$ in the $k^{r_1}$-box  around such $\m$. Indeed,
$\k^{(1)}(\varphi _0)$ satisfies the conditions of Lemma
\ref{L:geometric2}.  This means that the $k^{\delta }$-cluster
around each $\q$: $0<\||\p_\q\||<k^{r_1}$ is non-resonant. Moreover,
the $k^{\delta }$-box around each $\m+\q$: $0<\||\p_\q\||<k^{r_1}$
is non-resonant too, since $p_{\m}$ is sufficiently small. This
means $\m+\q \not \in \MM^{(2)}$.

For each $\m \in \Omega _s^{(2)}(r_2)$  we consider its $k^{r_1/2}$-neighborhood. The union of
such boxes we call the simple region and denote it by $\Pi _s(r_2)$. The
corresponding projection is $P_s$. Note, that the distance from the simple region to the nearest point of $\MM^{(2)}$ is greater than $\frac12k^{r_1}$.

\item {\em Black region.} Next, we split $\Omega (r_2)\setminus \left(\Omega (r_1)\cup \Pi _s\right)$ into boxes of the size $k^{\gamma r_1}$. All elements $\m \in \MM^{(2)}$ there satisfy
$p_\m>k^{-5 r_1'}$. We call a box black, if  together with its
neighbors it contains more than $k^{\gamma r_1/2+\delta _0r_1}$
elements of $\MM ^{(2)}$, $\delta_0=\gamma /100$ (in particular
$\delta_0r_1>100$). Let us consider all "black" boxes together with
their $k^{\gamma r_1+\delta _0r_1}$-neighborhoods. We call this the
black region.  Note that that the size of the neighborhoods involved is much smaller than the size of the neighborhoods $k^{r_1/2}$ for the simple region, since $\gamma +\delta _0<\frac12$.
The estimates for the size of the black region will be proven in Lemma \ref{L:black}. We denote the black  region by $\Pi _b$. The
corresponding projector is $P_{b}$. Obviously the distance between black and simple regions is greater than $\frac12k^{r_1}$.

\item {\em Grey region.} By a white box we mean a
$k^{\gamma r_1}$-box, which together with its neighbors contains no
more than $k^{\gamma r_1/2+\delta_0r_1}$ elements of $\MM ^{(2)}$.
Every white box we split into "small" boxes of the size $k^{\gamma
r_1/2+2\delta_0r_1}$. We call a small box "grey", if together with
its neighbors it contains more than $k^{\gamma r_1/6-\delta_0r_1}$
elements of $\MM ^{(2)}$.  The grey region is the union of all grey
small boxes together with their $k^{\gamma
r_1/2+2\delta_0r_1}$-neighborhoods.  Note that that the size of the
neighborhoods involved is much smaller than the size of the
neighborhoods  the simple and black regions. The estimates for the
size of the grey region will be proven in Lemma \ref{L:grey}. The
notation for this region is $\Pi _g$. The corresponding projector is
$P_{g}$. The part of the grey region, which is outside the black
region, we denote by $\Pi _g'$ and the corresponding projection by
$P_g'$.   Obviously, the distance between grey and simple regions is
greater than $\frac12k^{r_1}$.

\item {\em White region.} By a white small box we mean a small box, which
together with its neighbors has no more than $k^{\gamma
r_1/6-\delta_0r_1}$ elements of $\MM ^{(2)}$.  In each small white
box we consider $k^{\gamma r_1/6}$-boxes around each point of
$\MM^{(2)}$. The union of such $k^{\gamma r_1/6}$-boxes we call the
white region and denote it by $\Pi _w$. The corresponding projection
is $P_w$.  Note that the size of the neighborhoods involved is much
smaller than the size of the neighborhoods  the simple, black and
grey regions. The estimates for the size of the white region will be
proven in Lemma \ref{L:white}. The part of the white region which is
outside the black and grey regions, we denote $\Pi _w'$ and the
corresponding projection by $P_w'$. Obviously, the distance between
grey and simple regions is greater than $\frac12k^{r_1}$.

\item {\em Non-resonant region.} We also consider $\frac13 k^{\delta }$-neighborhoods of all points in the set
$\MM(r_2, \varphi _0)\setminus \left(\MM(r_1, \varphi _0)\cup
\MM^{(2)}\cup \Omega _s^{(2)}(r_2)\right)$. The union of this neighborhoods we call
the non-resonance region $\Pi _{nr}$. The corresponding projection
is $P_{nr}$.  The part of the non-resonant region which is outside
$\Pi_s\cup\Pi_b\cup\Pi_g\cup\Pi_w$, we denote $\Pi _{nr}'$ and the
corresponding projection by $P_{nr}'$.

\end{enumerate}

Let
$$P_r:=P_s+P_b+P_g'+P_w',\ \ \ P^{(2)}:=P_r+P_{nr}'+P(r_1)$$
index $r$ standing for "resonant".


First, we establish $3k^{\gamma r_1+\delta _0r_1}$-equivalence
relation between black boxes. Then the set $\Pi _b$ can be
represented as the union of components (clusters) separated by distance no less
than $k^{\gamma r_1+\delta _0r_1}$. We denote  such a component by
$\Pi _b^j$.
\begin{lemma} \label{L:black} \begin{enumerate} \item
Each $\Pi _b^j$ contains no more than $ck^{\gamma r_1/2-\delta
_0r_1+3}$ black boxes.
\item The size of $\Pi _b^j$ in $\||\cdot \||$ norm is less than
$ck^{3\gamma r_1/2+3}$.
\item Each $\Pi _b^j$ contains no more than $ck^{\gamma r_1+3}$ elements of
$\MM^{(2)}$. Moreover, any box of $\||\cdot \||$-size $ck^{3\gamma r_1/2+3}$ containing $\Pi _b^j$ has no more than $ck^{\gamma r_1+3}$ elements of
$\MM^{(2)}$ inside.
\end{enumerate}
\end{lemma}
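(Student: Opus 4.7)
The plan is to derive all three parts by playing a single two-sided estimate on the number of $\MM^{(2)}$-points contained in a cluster $\Pi_b^j$ against Lemma \ref{L:2/3-1}. Let $N_j$ denote the number of black $k^{\gamma r_1}$-boxes making up $\Pi_b^j$.

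For the lower bound, I would use the very definition of ``black'': the $3\times 3$ block consisting of a black box together with its eight neighbors contains strictly more than $k^{\gamma r_1/2+\delta_0 r_1}$ elements of $\MM^{(2)}$. Since every point of $\MM^{(2)}$ can appear in at most nine such $3\times 3$ blocks, summing the defining inequality over the $N_j$ black boxes of the cluster and dividing by $9$ gives
\[
\#\bigl(\MM^{(2)}\cap \Pi_b^j\bigr)\ \geq\ \tfrac{1}{9}\,N_j\,k^{\gamma r_1/2+\delta_0 r_1}.
\]
For the upper bound, I would use the ``chain'' definition of the cluster: successive black boxes are within $3k^{\gamma r_1+\delta_0 r_1}$ of each other, so $\Pi_b^j$ sits inside a $\||\cdot\||$-box of radius at most $3N_j\,k^{\gamma r_1+\delta_0 r_1}$ centered at any of its points. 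Applying Lemma \ref{L:2/3-1} to this enclosing box (with the choice $\gamma' r_1=\log_k(3N_j)+(\gamma+\delta_0)r_1$, which lies in the admissible range once $N_j$ is controlled in the expected range) yields
\[
\#\bigl(\MM^{(2)}\cap \Pi_b^j\bigr)\ \leq\ c\,(3N_j)^{2/3}\,k^{(2/3)(\gamma+\delta_0)r_1+1}.
\]

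Combining these two inequalities and canceling $N_j^{2/3}$ gives $N_j^{1/3}\le c\,k^{\gamma r_1/6-\delta_0 r_1/3+1}$, hence
\[
N_j\ \leq\ c\,k^{\gamma r_1/2-\delta_0 r_1+3},
\]
which is part 1. Substituting this back into the diameter estimate $\mathrm{diam}(\Pi_b^j)\leq 3N_j k^{\gamma r_1+\delta_0 r_1}$ gives $\mathrm{diam}(\Pi_b^j)\leq c\,k^{3\gamma r_1/2+3}$, which is part 2. Finally, for part 3 I would just apply Lemma \ref{L:2/3-1} directly to any $\||\cdot\||$-box of size $c\,k^{3\gamma r_1/2+3}$ containing $\Pi_b^j$: taking $\gamma' r_1=3\gamma r_1/2+3$ in that lemma produces the advertised bound $c\,k^{2\gamma' r_1/3+1}=c\,k^{\gamma r_1+3}$.

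The only delicate point I expect is the ``neighbor'' bookkeeping in the lower bound: one must check that the factor $\tfrac{1}{9}$ is really enough, i.e., that no single element of $\MM^{(2)}$ in the enlarged cluster is overcounted more than $9$ times in the summation over black boxes (this is immediate from the disjointness of the underlying $k^{\gamma r_1}$-boxes in the partition). One also has to verify that the $\gamma'$ produced at the upper-bound step lies in the range $(1/20,20)$ required by Lemma \ref{L:2/3-1}; for our choice $\gamma=1/5$, $\delta_0=\gamma/100$, together with the a posteriori bound on $N_j$, this is clear.
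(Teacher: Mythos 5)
Your proposal is correct and is essentially the paper's own proof: the paper combines exactly the same three inequalities ($L_b\lesssim n_b k^{\gamma r_1+\delta_0 r_1}$ from the chain structure, $N_b\gtrsim n_b k^{\gamma r_1/2+\delta_0 r_1}$ from the definition of a black box, and $N_b\lesssim L_b^{2/3}k$ from Lemma \ref{L:2/3-1}), solves for $n_b$ and then $L_b$, and reapplies Lemma \ref{L:2/3-1} to an enclosing box of size $ck^{3\gamma r_1/2+3}$ for part 3. Your extra bookkeeping (the factor $1/9$ for overcounting and the check that $\gamma'$ stays in the admissible range) only makes explicit what the paper absorbs into the constants.
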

\begin{proof} Let $n_b$ be the number of black boxes in $\Pi
_b^j$,
 $L_b$ be the size of $\Pi _b^j$ and $N_b$ the number of elements of
$\MM^{(2)}$ in $\Pi _b^j$.  Obviously, $L_b<n_b3k^{\gamma r_1+\delta
_0r_1}$ and $N_b>cn_bk^{\gamma r_1/2+\delta _0r_1}$. By Lemma
\ref{L:2/3-1}, $N_b<cL_b^{2/3}k$. Solving the last three
inequalities for $n_b$, we get $n_b<ck^{\gamma r_1/2-\delta
_0r_1+3}$. It follows $L_b<ck^{3\gamma r_1/2+3}$. Next, we consider
a box of the size $k^{3\gamma r_1/2+3}$, containing $\Pi _b^j$.
Using again  Lemma \ref{L:2/3-1}, we obtain that the number of
elements of $\MM^{(2)}$ in this box is less than $cL_b^{2/3}k$.
Therefore,  $N_b<ck^{\gamma r_1+3}$.
\end{proof}

Second, we establish $3k^{\gamma r_1/2+2\delta _0r_1}$-equivalence
relation between small grey boxes. Then the set $\Pi _g$ can be
represented as the union of components separated by distance no less
than $k^{\gamma r_1/2+2\delta _0r_1}$. We denote each such component
as $\Pi _g^j$.
\begin{lemma}\label{L:grey} \begin{enumerate} \item
Each $\Pi _g^j$ contains no more than $ck^{\gamma r_1/3+2\delta
_0r_1}$ grey boxes.
\item The size of $\Pi _g^j$ in $\||\cdot \||$ norm is less than
$ck^{5\gamma r_1/6+4\delta _0r_1}$.
\item Each $\Pi _g^j$ contains no more than $k^{\gamma r_1/2+\delta _0r_1}$ elements of
$\MM^{(2)}$.\end{enumerate}
\end{lemma}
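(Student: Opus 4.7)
My plan is to mirror the proof of Lemma~\ref{L:black}, but to replace the use of Lemma~\ref{L:2/3-1} for bounding the number of elements of $\MM^{(2)}$ in $\Pi_g^j$ by the sharper white-box estimate; this substitution is precisely what supplies the smaller exponents appearing in the present statement.

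I would first set up the three basic inequalities in direct analogy with the black case. Denote by $n_g$, $L_g$, and $N_g$ the number of small grey boxes in $\Pi_g^j$, its $\||\cdot\||$-diameter, and the number of elements of $\MM^{(2)}$ it contains. The $3k^{\gamma r_1/2+2\delta_0 r_1}$-equivalence defining $\Pi_g^j$ gives $L_g<3n_g k^{\gamma r_1/2+2\delta_0 r_1}$. Summing the greyness lower bound over all small grey boxes, with the bounded multiplicity of overlapping neighborhoods absorbed into a geometric constant, yields $N_g\geq c^{-1} n_g k^{\gamma r_1/6-\delta_0 r_1}$.

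Next I would invoke the white-box property: every $k^{\gamma r_1}$-box contains at most $k^{\gamma r_1/2+\delta_0 r_1}$ elements of $\MM^{(2)}$, and each small grey box sits by construction inside such a white box. Under the working hypothesis $L_g\leq k^{\gamma r_1}$, the cluster $\Pi_g^j$ meets only a bounded number of adjacent white boxes, hence $N_g\leq c k^{\gamma r_1/2+\delta_0 r_1}$, which is part~(3). Combining with the greyness bound immediately yields $n_g\leq c k^{\gamma r_1/3+2\delta_0 r_1}$, which is part~(1); inserting this into the cluster inequality gives $L_g<c k^{5\gamma r_1/6+4\delta_0 r_1}$, which is part~(2). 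Since $\delta_0=\gamma/100$ satisfies $5\gamma/6+4\delta_0<\gamma$, the resulting bound on $L_g$ is strictly smaller than $k^{\gamma r_1}$, so this bootstrap closes.

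The main obstacle I expect is excluding the opposite regime $L_g\geq k^{\gamma r_1}$. In that case the number of white boxes met by $\Pi_g^j$ is of order $(L_g/k^{\gamma r_1})^2$, so $N_g\leq c(L_g/k^{\gamma r_1})^2 k^{\gamma r_1/2+\delta_0 r_1}$; combined with the greyness and cluster inequalities this forces the lower bound $L_g\geq c^{-1}k^{7\gamma r_1/6-4\delta_0 r_1}$. On the other hand, applying Lemma~\ref{L:2/3-1} ($N_g<cL_g^{2/3}k$) together with the same two inequalities yields the upper bound $L_g\leq c k^{\gamma r_1+9\delta_0 r_1+3}$. For $\delta_0=\gamma/100$ one has $7\gamma/6-4\delta_0>\gamma+9\delta_0$, and since $r_1>10^8$ the $k$-powers ensure that the lower bound strictly exceeds the upper bound. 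This contradiction rules out the large-$L_g$ regime and completes the plan.
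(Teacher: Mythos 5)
Your three basic inequalities (the greyness lower bound $N_g\geq c^{-1}n_g k^{\gamma r_1/6-\delta_0 r_1}$, the white-box density bound, and the chaining bound $L_g<3n_gk^{\gamma r_1/2+2\delta_0 r_1}$) are exactly the ones the paper uses, and your treatment of the small-diameter case is essentially the paper's argument. The genuine gap is in your exclusion of the regime $L_g\geq k^{\gamma r_1}$. The index lattice here is $\Z^2\times\Z^2$ with $\||\p_\s\||=|\s_1|+|\s_2|$, i.e.\ it is four-dimensional: this is why the paper counts $(8k^{\delta})^4$ elements in $\Omega(\delta)$, $ck^{4r_1}$ elements in $\Omega(r_1)$, etc. Consequently a set of $\||\cdot\||$-diameter $L_g$ meets of order $(L_g/k^{\gamma r_1})^4$ big white boxes, not $(L_g/k^{\gamma r_1})^2$. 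Redoing your computation with the fourth power, the white-box bound gives $N_g\leq cL_g^4k^{-\frac72\gamma r_1+\delta_0 r_1}$, which together with $N_g\geq c^{-1}L_gk^{-\frac13\gamma r_1-3\delta_0 r_1}$ only forces $L_g\geq c'k^{\frac{19}{18}\gamma r_1-\frac43\delta_0 r_1}$. Since $\frac{19}{18}-\frac{4\delta_0}{3\gamma}\approx1.042$ while your upper bound from Lemma~\ref{L:2/3-1} has exponent $\gamma r_1\bigl(1+\tfrac{9\delta_0}{\gamma}\bigr)+3=1.09\,\gamma r_1+3$, the lower bound no longer exceeds the upper bound, no contradiction is obtained, and the large-diameter regime is not ruled out by your argument.

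The paper avoids this global counting altogether and handles the spreading issue locally: it first estimates the piece of $\Pi_g^j$ lying in a single big white box, where the neighbor-inclusive white-box count $N_g<k^{\gamma r_1/2+\delta_0 r_1}$ applies verbatim and yields at most $ck^{\gamma r_1/3+2\delta_0 r_1}$ grey boxes, hence a piece of size at most $ck^{5\gamma r_1/6+4\delta_0 r_1}\ll k^{\gamma r_1}$ (this is where $\delta_0<\gamma/24$ enters). Since $\Pi_g^j$ is chained at the scale $3k^{\gamma r_1/2+2\delta_0 r_1}$, a component whose trace in every big white box is that small cannot traverse a box, so it is confined to one white box together with its immediate neighbors, and the same estimates carry over to the whole component; in particular, because the white-box count already includes the neighbors, one gets part 3 without the extra constant $c$ that your version carries. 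Replacing your large-$L_g$ counting step by this local confinement argument repairs the proof.
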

\begin{proof} Let us consider a part of $\Pi _g^j$
belonging to one "big" white box. Let $n_g$ be the number of grey
boxes in $\Pi _g^j$.
 Let $L_g$ be the size of $\Pi _g^j$ and $N_g$ be  the number of elements of
$\MM^{(2)}$ in $\Pi _g^j$.  Obviously, $N_g>cn_gk^{\gamma r_1
/6-\delta _0r_1}$. By definition of a big white box $N_g<k^{\gamma
r_1/2+\delta _0r_1}$. Therefore, $n_g<ck^{\gamma r_1/3+2\delta
_0r_1}$. Clearly, $L_g<n_g3k^{\gamma r_1/2+2\delta
_0r_1}<ck^{5\gamma r_1/6+4\delta _0r_1}$. Since $\delta  _0<\gamma
/24$, we obtain that the size of each grey component is much less
than the size $k^{\gamma r_1}$ of a big box. The lemma is proven under condition that
$\Pi _g^j$ is inside one of white boxes. Suppose $\Pi _g^j$ intersects more
than one white box. Considering that the size of $\Pi _g^j$ in each
big white box is much less than the size of this box, we conclude that $\Pi
_g^j$
 fits into neighboring boxes and satisfies the estimates proven
above.
\end{proof}

Third, we consider points of $\MM^{(2)}$ in small white boxes. We
establish $3k^{\gamma r_1/6}$-equivalence relation between them.
Considering $k^{\gamma r_1/6}$-neighborhoods of the points in
$\MM^{(2)}$, we see that this neighborhoods form clusters $\Pi _w^j$
of $\Pi _w $ separated by the distance no less  than $k^{\gamma
r_1/6}$. The number of $\MM^{(2)}$ points in a white cluster we
denote by $N_w^j$.
\begin{lemma}\label{L:white} \begin{enumerate} \item The size of $\Pi _w^j$ in $\||\cdot \||$ norm is less than
$ck^{\gamma r_1/3-\delta _0r_1}$.
\item Each $\Pi _w^j$ contains no more
than $k^{\gamma r_1/6-\delta _0r_1}$ points of $\MM^{(2)}$.
\end{enumerate}
\end{lemma}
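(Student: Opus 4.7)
The plan is to follow the template used in Lemmas~\ref{L:black} and \ref{L:grey}: I will first establish both bounds under the simplifying assumption that $\Pi_w^j$ lies inside a single small white box together with its neighbors, and then remove this assumption by the size-comparison argument that closes the proof of Lemma~\ref{L:grey}.

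Under the one-box assumption, Part~2 will be immediate from the defining property of a small white box, namely that together with its neighbors it contains at most $k^{\gamma r_1/6-\delta_0 r_1}$ elements of $\MM^{(2)}$; since every $\MM^{(2)}$-point contributing a $k^{\gamma r_1/6}$-ball to $\Pi_w^j$ lies in this union, we get $N_w^j\leq k^{\gamma r_1/6-\delta_0 r_1}$. For Part~1 I will use connectedness of the union of $k^{\gamma r_1/6}$-balls defining $\Pi_w^j$ to arrange the contributing points $\m_1,\dots,\m_{N_w^j}$ in a spanning chain with $\||\p_{\m_{i+1}-\m_i}\||\leq 2k^{\gamma r_1/6}$ (exactly as in the chain arguments used implicitly in Lemmas~\ref{L:black} and \ref{L:grey}); adding the radii of the extremal balls then yields $L_w^j\leq 2N_w^j k^{\gamma r_1/6}+2k^{\gamma r_1/6}\leq c\,k^{\gamma r_1/3-\delta_0 r_1}$.

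To remove the one-box assumption I will observe that, with the standing choice $\gamma=1/5$ and $\delta_0=\gamma/100$, one has $\gamma/3-\delta_0<\gamma/2+2\delta_0$, so the candidate diameter bound $c\,k^{\gamma r_1/3-\delta_0 r_1}$ is far smaller than the $\||\cdot\||$-size $k^{\gamma r_1/2+2\delta_0 r_1}$ of a single small white box. Hence any cluster whose diameter satisfies the stated estimate is automatically contained in one small white box together with its immediate neighbors, and the one-box argument applies. This is precisely the ``fits into neighboring boxes'' reduction that closes the proof of Lemma~\ref{L:grey}.

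The main obstacle is the mild circularity of this reduction: containment in a single small white box (with its neighbors) is being justified by the very diameter bound that was derived under that containment. I plan to break the circularity by first establishing the chain estimate $L_w^j\leq 2N_w^j k^{\gamma r_1/6}$ unconditionally from connectedness of $\Pi_w^j$, and then combining it with the per-box cap $k^{\gamma r_1/6-\delta_0 r_1}$: because the chain jump $2k^{\gamma r_1/6}$ is tiny compared with the small-white-box side $k^{\gamma r_1/2+2\delta_0 r_1}$, any chain that tries to leave one box must immediately consume the allowed $\MM^{(2)}$-count in the adjacent box it enters, and iterating this observation pins both $N_w^j$ and $L_w^j$ to the stated bounds up to multiplicative constants that can be absorbed into $c$.
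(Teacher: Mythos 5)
Your proposal is correct and takes essentially the same route as the paper: the paper's proof likewise uses the defining cap of $k^{\gamma r_1/6-\delta _0r_1}$ on $\MM^{(2)}$-points in a small white box together with its neighbors, bounds the total extent of their $k^{\gamma r_1/6}$-neighborhoods by $\sim k^{\gamma r_1/3-\delta _0r_1}$, and observes that this is much smaller than the small-box size $k^{\gamma r_1/2+2\delta_0 r_1}$, so each $\Pi _w^j$ cannot spread beyond one small white box and its neighbors, which yields both statements at once. Your explicit chain estimate and the discussion of the apparent circularity are just a spelled-out version of this same comparison, so no substantive difference or gap.
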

\begin{proof} Let us consider points of $\MM^{(2)}$ in a small white
box. By the definition of the white small box, the number of such
points does not exceed $k^{\gamma r_1/6-\delta _0r_1}$. We consider
the $k^{\gamma r_1/6}$-neighborhoods of these points. They can form
clusters. The total contribution from all points of $\MM^{(2)}$ in
the small white box and its neighbors, obviously, does not exceed
$3k^{\gamma r_1/3-\delta _0r_1}$, which is much less than the size
of a small white box. Therefore, each $\Pi _w^j$ can't spread
outside of the small white box and its neighbors. This proves both
statements of the lemma.
\end{proof}

\hskip 1cm At last, we also establish $k^\delta$-equivalence
relation between all points in $\MM(r_2, \varphi _0)\setminus
\left(\MM(r_1, \varphi _0)\cup
\MM^{(2)}\cup\Omega^{(2)}_s(r_2)\right)$.  Then the set $\Pi _{nr}$
can be represented as the union of components (clusters) separated
by distance no less than $\frac13 k^{\delta}$. We denote such a
component by $\Pi _{nr}^j$.

Next, we slightly change definitions of the black, grey and white
areas to adjust their boundary to the structure of clusters. Namely,
if $k^{\delta }$-cluster containing points of $\MM(\varphi
_0,r_2)\setminus \MM^{(2)}$  has a distance less than $k^{\delta }$ to a white, grey or black
area, then we include it into the area with the lightest color.
 This ``addition" does not change
formulation of Lemmas \ref{L:black}, \ref{L:grey}, \ref{L:white},
since the size of a $k^{\delta }$-cluster is much smaller that the
sizes of  $\Pi _b^j$, $\Pi _g^j$, $\Pi _w^j$. If a white cluster has a distance less than $k^{\gamma r_1 /6}$ to the grey or black area, we include it into that with the lighter color.
This ``addition" also does not change formulation of Lemmas
\ref{L:black}, \ref{L:grey},  since the size of a white cluster is
much smaller that the sizes of $\Pi _b^j$, $\Pi _g^j$.  If a grey
cluster has a distance less than $k^{\gamma r_1/2+2\delta _0r_1}$  to the  black area, we include it into one of these  areas.
This ``addition" does not change formulation of Lemma \ref{L:black},
since the size of a grey cluster is much smaller that the size of
any $\Pi _b^j$. The new structure has the following properties: if
the intersection of the $k^{\delta }$-neighborhood of  a $k^{\delta }$-cluster with white, grey or
black area is not empty, then this cluster is completely in this
area. If the intersection of the $k^{\gamma r_1/6}$-neighborhood of a white cluster with  grey or black
area is not empty, then this cluster is completely in this area. If
  the intersection of the $k^{\gamma r_1/2+2\delta _0r_1}$-neighborhood of a grey cluster with  the black area is
not empty, then this cluster is completely in this area.
$\delta _0<\gamma /24$, $\gamma <1/3$. This means that each
component of the white, grey, black and non-resonance region is much
smaller in $\||\cdot \||$-size than $\Omega (r_1)$. Moreover, there
are no points of $\MM^{(2)}$ inside $\Omega (r_1)$. If the
$k^{\gamma r_1/6}$-neighborhood of a white cluster intersect $\Omega
(r_1)$, we reduce $\Omega (r_1)$ by this neighborhood. This
insignificant reduction does not change Step II. We make a similar
reduction of $\Omega (r_1)$ if it is intersected by neighborhoods of
grey or black clusters.
Sometimes it will be convenient to numerate the projections
$P(r_1)$, $P_b$, $P_g'$, $P_w'$, $P_{nr}'$, $P_s$ by indices
0,1,2,3,4,5 as $P_0,P_1,P_2,P_3,P_4,P_5$. The corresponding sets are
$\Pi _i$. Note that each $\Pi _i$ consists of the components
$\Pi_{ij}$, $j=1, ...,J(i)$ as described in the construction of the
sets $\Pi_i$. The distance between closest components $\Pi _{ij}$,
$i=1,2,3,4,5$ with the same first index is greater than $k^{\gamma r_1+\delta_0 r_1}$,  $k^{\gamma r_1/2+2\delta _0 r_1}$, $k^{\gamma r_1/6}$, $\frac13 k^{\delta }$, $k^{\gamma r_1}$, correspondingly. Then we can rewrite
\begin{equation}P^{(2)}=\sum _{i=0}^{5}P_i.\label{P(2)}
\end{equation}
We introduce the boundaries $\partial \Omega (r_1)$, $\partial \Pi
_b$, $\partial \Pi _g'$, $\partial \Pi _w'$, $\partial \Pi _{nr}'$,
$\partial \Pi _s$
of the sets $\Omega (r_1)$, $\Pi_b$, $\Pi_g'$, $\Pi_w'$,
$\Pi_{nr}'$, $\Pi_s$ as follows: $\partial \Omega (r_1)$, $\partial
\Pi _b$, $\partial \Pi _g'$, $\partial \Pi _w'$, $\partial \Pi
_{nr}'$, $\partial \Pi _s$ are the sets of points in $\Omega (r_1)$,
$\Pi_b$, $\Pi_g'$, $\Pi_w'$, $\Pi _{nr}'$, $\Pi_s$ whose $\||\cdot
\||$-distance to the complements of $\Omega (r_1)$, $\Pi _b$, $\Pi
_g'$, $\Pi _w'$, $\Pi _{nr}'$, $\Pi _s$, respectively, is less than
$\frac{1}{3}k^{\delta}$.
 The corresponding
projectors we denote as $P^{\partial }(r_1)$, $P_b^{\partial}$,
$P_g^{'\partial}$, $P_w^{'\partial}$, $P_{nr}^{'\partial}$,
$P_s^{\partial}$ or $P_i^{\partial}$, $i=0,1,2,3,4,5$.

\begin{lemma} \label{L:boundary} Let $i,i'=0,1,2,3,4,5$, $i\neq i'$. The following relations hold:
\begin{equation}
P_iP_{i'}=0, \label{ort}
\end{equation}
\begin{equation} \label{PVP-2} P_iVP_{i'}=0,
\end{equation}
\begin{equation}
(I-P^{(2)})VP_i=(I-P^{(2)})VP_{i}^{\partial}. \label{boundary}
\end{equation}\end{lemma}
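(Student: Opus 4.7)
My plan rests on two facts already assembled in the construction preceding Lemma \ref{L:boundary}: first, the six sets $\Omega(r_1)$, $\Pi_s$, $\Pi_b$, $\Pi_g'$, $\Pi_w'$, $\Pi_{nr}'$ are pairwise disjoint by construction; second, $V_{\m\n}=0$ whenever $\||\p_{\m-\n}\||>Q$, together with the bound $Q<\frac13 k^{\delta}$ which holds for sufficiently large $k$. With these in hand each of the three claims will reduce to quoting one of the separation properties built into the multiscale decomposition.

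For the orthogonality relation \eqref{ort}, I will simply trace through the definitions: the black, grey and white clusters are constructed inside $\Omega(r_2)\setminus(\Omega(r_1)\cup\Pi_s)$, and $\Omega(r_1)$ is then further trimmed by the $k^{\gamma r_1/6}$-neighborhoods of any white, grey or black cluster that would otherwise meet it; the primed sets $\Pi_g', \Pi_w', \Pi_{nr}'$ are defined by explicitly removing previously introduced regions. Thus the six underlying sets are pairwise disjoint and their characteristic projections mutually annihilate.

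For \eqref{PVP-2}, I will argue by contradiction: if $(P_iVP_{i'})_{\m\n}\ne 0$ for $i\ne i'$, then $\m\in\Pi_i$, $\n\in\Pi_{i'}$ and $\||\p_{\m-\n}\||\leq Q<\frac13 k^{\delta}$. I will then rule this out case by case. Pairs involving $\Pi_s$ are excluded by the explicitly stated separation $\frac12 k^{r_1}$ between the simple region and every other region. Pairs $(\Omega(r_1),\Pi_j)$ with $j\in\{1,2,3\}$ are excluded by the trimming of $\Omega(r_1)$ described above. Pairs involving $\Pi_{nr}'$ use the non-resonant absorption rule: every $k^{\delta}$-cluster in $\Pi_{nr}$ lying within $k^{\delta}$ of a white, grey or black area has been absorbed into that area, so the remaining $\Pi_{nr}'$-clusters are at $\||\cdot\||$-distance at least $k^{\delta}$ from $\Pi_b\cup\Pi_g\cup\Pi_w$. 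Pairs among $(\Pi_b,\Pi_g',\Pi_w')$ are handled by the analogous white-to-grey/black and grey-to-black absorption rules, which guarantee separations of $k^{\gamma r_1/6}$ and $k^{\gamma r_1/2+2\delta_0 r_1}$ respectively, both much larger than $\frac13 k^{\delta}$.

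For \eqref{boundary}, the key observation is that a nonzero entry of $(I-P^{(2)})V(P_i-P_i^{\partial})$ at position $(\m,\n)$ would require simultaneously $\m\notin\Pi^{(2)}$, $\n\in\Pi_i$ with $\||\cdot\||$-distance at least $\frac13 k^{\delta}$ from $\Pi_i^c$, and $\||\p_{\m-\n}\||\leq Q<\frac13 k^{\delta}$. The last two conditions force $\m\in\Pi_i\subset\Pi^{(2)}$, contradicting the first; so every such entry vanishes, giving $(I-P^{(2)})VP_i=(I-P^{(2)})VP_i^{\partial}$. The only non-routine step in the whole argument is the case analysis in \eqref{PVP-2}, but since each case reduces to invoking one of the absorption/trimming statements already established in the construction, I do not expect any additional technical obstacle.
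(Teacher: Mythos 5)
Your proof is correct and takes essentially the same route as the paper: the paper dispatches the lemma in one sentence by citing the construction of the projectors (which, after the merging/absorption adjustments, leaves distinct regions pairwise disjoint and separated by more than $\frac13 k^{\delta}$ in the $\||\cdot\||$-norm) together with the fact that $V_{\s\s'}=0$ once $\||\p_{\s-\s'}\||>Q<\frac13 k^{\delta}$. Your case-by-case verification of the separations for \eqref{ort}--\eqref{PVP-2} and the boundary-layer argument for \eqref{boundary} are simply an expanded write-up of that same observation.
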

\begin{corollary} \label{C:PHP-2}Operators $P^{(2)}VP^{(2)}$ and $P^{(2)}HP^{(2)}$ have a block structure. Namely,  \begin{equation}
P^{(2)}VP^{(2)}=\sum _{i=0}^5 P_iVP_i, \   \  \ P^{(2)}HP^{(2)}=\sum
_{i=0}^5 P_iHP_i. \label{PHP-2}
\end{equation}
\begin{equation}
P_iVP_i=\sum _{j}P_{ij}VP_{ij},\  \  \ P_iHP_i=\sum _{j}P_{ij}HP_{ij}, \label{blocks}
\end{equation}\end{corollary}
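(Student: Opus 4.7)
The plan is to deduce everything from two simple ingredients: the construction ensures the six sets $\Omega(r_1), \Pi_b, \Pi_g', \Pi_w', \Pi_{nr}', \Pi_s$ are pairwise disjoint and pairwise separated in $\||\cdot\||$-norm by more than $Q$, and the potential is banded: $V_{\n\m}=0$ whenever $\||\p_{\n-\m}\||>Q$, where $Q<\tfrac13 k^{\delta}$ for large $k$. Part (1) is immediate: the primes in $\Pi_g', \Pi_w', \Pi_{nr}'$ mean ``after removal of previously listed sets,'' and $\Omega(r_1)$ was explicitly shrunk by any overlapping neighborhoods of white, grey or black clusters, while $\Pi_s$ is disjoint from the other resonant sets because the $k^{r_1/2}$-neighborhoods used to build $\Pi_s$ sit in a region where all relevant resonance conditions are negated. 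Hence the diagonal projections onto these disjoint index sets are mutually orthogonal.

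For part (2), note that by construction the $\||\cdot\||$-distance between any two distinct sets $\Pi_i, \Pi_{i'}$ is at least $\tfrac13 k^{\delta}$ (the minimum separation among simple/black/grey/white/non-resonant clusters, and $\Omega(r_1)$ has been reduced so that no cluster of the other regions has a neighborhood intersecting it). Since $(P_i V P_{i'})_{\n\m}=\chi_{\Pi_i}(\n)V_{\n-\m}\chi_{\Pi_{i'}}(\m)$ can be nonzero only when $\||\p_{\n-\m}\||\le Q<\tfrac13 k^{\delta}$, no such matrix element survives; this gives \eqref{PVP-2}. Part (3) follows by the same bandedness argument applied from the other side: if $\n\notin\Pi_i$ (so $(I-P^{(2)})_{\n\n}=1$ is possible only when $\n$ is not in any $\Pi_{i'}$) and $\m\in\Pi_i$, then $V_{\n-\m}\ne 0$ forces $\||\p_{\n-\m}\||\le Q<\tfrac13 k^{\delta}$, so $\m$ lies within $\tfrac13 k^{\delta}$ of the complement of $\Pi_i$, which is precisely the defining condition for $\m\in\partial \Pi_i$. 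This yields \eqref{boundary}.

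For Corollary \ref{C:PHP-2}, write $P^{(2)}VP^{(2)}=\sum_{i,i'}P_iVP_{i'}$; by (1) and \eqref{PVP-2}, only the diagonal $i=i'$ terms survive, giving the first formula in \eqref{PHP-2}. Adding $P^{(2)}H_0P^{(2)}=\sum_i P_iH_0P_i$, which respects the decomposition because $H_0$ is diagonal, produces the corresponding identity for $H$. Finally, inside each $P_iVP_i$ one further decomposes $P_i=\sum_j P_{ij}$ using the sub-clusters, and the same bandedness argument applies: the minimum separation of sub-clusters within a single color class is $k^{\gamma r_1+\delta_0 r_1}$, $k^{\gamma r_1/2+2\delta_0 r_1}$, $k^{\gamma r_1/6}$, $\tfrac13 k^{\delta}$, or $k^{\gamma r_1}$ according to $i=1,\ldots,5$, all strictly greater than $Q$, so $P_{ij}VP_{ij'}=0$ for $j\ne j'$, yielding \eqref{blocks}.

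There is no real obstacle here; the content is entirely bookkeeping about the geometric construction made in Section \ref{MOforStep3}. The only point that requires care is verifying that every ``adjustment'' of the regions carried out at the end of that section (absorbing $k^{\delta}$-clusters of $\MM\setminus\MM^{(2)}$ into the lightest-colored neighboring region, absorbing white clusters into grey/black, absorbing grey into black, and reducing $\Omega(r_1)$) preserves the minimum separation $>\tfrac13 k^{\delta}$ between distinct $\Pi_i$; this was already the explicit purpose of those adjustments, so the argument is just a matter of citing them.
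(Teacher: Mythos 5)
Your argument is correct and is essentially the paper's own proof: the paper disposes of Lemma \ref{L:boundary} and Corollary \ref{C:PHP-2} with exactly this observation, namely that the construction makes the components pairwise separated by more than $\frac13 k^{\delta}>Q$ while $V_{\s\s'}=0$ for $\||\p_{\s-\s'}\||>Q$ and $H_0$ is diagonal, so only the diagonal terms $P_iVP_i$ (and, within each color, $P_{ij}VP_{ij}$) survive. Your extra care about the end-of-section adjustments preserving the separation is exactly the point the paper relies on implicitly, so nothing is missing.
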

The lemma easily follows from the construction of the projectors and the fact that $V_{jj'}=0$ when
$|j-j'|\geq \frac{1}{3}k^{\delta}$ .

{\it Remark.} Thus, we have constructed a multiscale structure
inside $P^{(2)}HP^{(2)}$,  blocks of different colors having
distinctly different size. Merging blocks of a smaller size
 (a lighter color) with neighboring blocks of a bigger size (a darker color), we made the blocks to be
 separated by the $\|||\cdot \||$ distance greater then $k^{\delta }$.

\begin{lemma}\label{L:Pnr}Let $\varphi _0\in \omega^{(2)}(k,\delta ,\tau )$,
$|\varphi-\varphi _0|<k^{-2l-40r_1'-\delta }$. Then,
\begin{equation}\label{Pnr}
\left\|\Bigl(P_{nr}\bigl(H(\k^{(2)}(\varphi
))-k^{2l}I\bigr)P_{nr}\Bigr)^{-1}\right\|<ck^{40r_1'}.
\end{equation} \end{lemma}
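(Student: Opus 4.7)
The plan is to reduce the resolvent bound for $P_{nr}(H(\k^{(2)}(\varphi)) - k^{2l}I)P_{nr}$ to the block estimates already proved in Lemma~\ref{L:estnonres1} (and its Corollaries~\ref{estnonres}, \ref{estnonres1}), and then propagate these estimates first from $\k^{(1)}(\varphi_0)$ to $\k^{(2)}(\varphi)$ by a Hilbert-identity / Neumann-series argument. First I would exploit block-diagonality: by construction $\Pi_{nr}$ is the union of $k^\delta$-clusters $\Pi_{nr,j}$ whose pairwise $\||\cdot\||$-distance exceeds $\tfrac13 k^\delta>Q$, and hence, by \eqref{V_q=0} together with the fact that $H_0$ is diagonal,
\[
P_{nr}\bigl(H(\k)-k^{2l}I\bigr)P_{nr}=\bigoplus_j P_{nr,j}\bigl(H(\k)-k^{2l}I\bigr)P_{nr,j}.
\]
Therefore it suffices to obtain the estimate $ck^{40r'_1}$ uniformly in $j$ on each block.

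Next I would fix a cluster $\Pi_{nr,j}$ and localize the estimate at $\k^{(1)}(\varphi_0)$. Each such cluster surrounds (up to the merging rules in the construction) a point $\m\in\MM(\varphi_0,r_2)\setminus\bigl(\MM(r_1,\varphi_0)\cup\MM^{(2)}\cup\Omega_s^{(2)}(r_2)\bigr)$. Since $\m\notin\MM^{(2)}$, the definition \eqref{M^2} gives $\varphi_0\notin\OO^{(2)}_\m(10r'_1,1)$; that is, the distance from $\varphi_0$ to every pole of $\bigl(P_{nr,j}(H(\k^{(1)}(\varphi))-k^{2l}I)P_{nr,j}\bigr)^{-1}$ in the disc $|\varphi-\varphi_0|<2k^{-2-(40\mu+1)\delta}$ is at least $k^{-10r'_1}$. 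Applying Lemma~\ref{L:estnonres1} (in all four cases, noting that $\m\notin\Omega_s^{(2)}$ excludes case~3) with $\varepsilon=k^{-10r'_1}$, the worst denominator is $\varepsilon_0^{-4}$ from part~4, producing
\[
\Bigl\|\bigl(P_{nr,j}(H(\k^{(1)}(\varphi_0))-k^{2l}I)P_{nr,j}\bigr)^{-1}\Bigr\|\leq \tfrac12 ck^{40r'_1}.
\]

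Finally I would upgrade the estimate from $\k^{(1)}(\varphi_0)$ to $\k^{(2)}(\varphi)$ using the identity $H(\k')-H(\k)=H_0(\k')-H_0(\k)$ and the Neumann series. By Lemma~\ref{ldk-2} and Lemma~\ref{L:derivatives-1},
\[
|\k^{(2)}(\varphi)-\k^{(1)}(\varphi_0)|\leq |\k^{(2)}(\varphi)-\k^{(2)}(\varphi_0)|+|h^{(2)}(\varphi_0)|\leq Ck\,|\varphi-\varphi_0|+O(k^{-4l+5+O(\delta)}),
\]
which, combined with $|\varphi-\varphi_0|<k^{-2l-40r'_1-\delta}$, yields $|\k^{(2)}(\varphi)-\k^{(1)}(\varphi_0)|\lesssim k^{-2l-40r'_1-\delta+1}$. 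On each block $P_{nr,j}$ the momenta $\k+\p_\m$ satisfy $|\k+\p_\m|=O(k)$ (since all relevant $\m$ lie in $\MM(\varphi_0,r_2)$, where $\bigl||\k^{(1)}(\varphi_0)+\p_\m|^2-k^2\bigr|\leq k^{-40\mu\delta}$), so
\[
\bigl\|P_{nr,j}\bigl(H_0(\k^{(2)}(\varphi))-H_0(\k^{(1)}(\varphi_0))\bigr)P_{nr,j}\bigr\|\leq C|\k^{(2)}(\varphi)-\k^{(1)}(\varphi_0)|\,k^{2l-1}\leq Ck^{-40r'_1-\delta}.
\]
The product of this norm with the preceding resolvent bound $ck^{40r'_1}$ is $\leq Ck^{-\delta}<\tfrac12$ for large $k$, so the Neumann series converges and gives $\bigl\|(P_{nr,j}(H(\k^{(2)}(\varphi))-k^{2l}I)P_{nr,j})^{-1}\bigr\|\leq ck^{40r'_1}$. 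Combining with the block-diagonal decomposition yields \eqref{Pnr}.

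I expect the main technical obstacle to be case~4 of Lemma~\ref{L:estnonres1} (the $\MM_2$ clusters), where the fourfold denominator $\varepsilon_0^{-4}$ forces the exponent $40r'_1$ in the statement and makes the perturbation window $|\varphi-\varphi_0|<k^{-2l-40r'_1-\delta}$ essentially tight; one must verify that the threshold $k^{-10r'_1}$ built into the definition of $\OO^{(2)}_\m$ is correctly matched to the exponent $40r'_1$ appearing in the perturbation window, and that the intermediate step at $\k^{(1)}(\varphi_0)$ (rather than directly at $\k^{(2)}(\varphi)$) is legitimate because $\k^{(1)}$ is the parameter used in defining $\MM^{(2)}$.
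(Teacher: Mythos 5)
Your proposal is essentially the paper's own proof: the same block-diagonal reduction over the well-separated $k^{\delta}$-clusters of $\Pi_{nr}$, the same application of Lemma \ref{L:estnonres1} at $\k^{(1)}(\varphi_0)$ with $\varepsilon_0=k^{-10r_1'}$ (coming from $\m\notin\MM^{(2)}$, i.e. $\varphi_0\notin\OO_\m^{(2)}(10r_1',1)$), and the same final perturbation step passing from $\k^{(1)}(\varphi_0)$ to $\k^{(2)}(\varphi)$ using the smallness of $\varkappa^{(2)}-\varkappa^{(1)}$ and of $\k^{(2)}(\varphi)-\k^{(2)}(\varphi_0)$. The only inaccuracy is your claim that $\m\notin\Omega_s^{(2)}(r_2)$ excludes case 3 of Lemma \ref{L:estnonres1}: that exclusion only gives $p_\m>k^{-5r_1'}$, while case 3 covers all $p_\m<4k^{\delta}$, so such clusters can occur; however, the same bound $p_\m>k^{-5r_1'}$ yields $8k^{-2l+1}p_\m^{-1}\varepsilon_0^{-1}\leq 8k^{-2l+1+15r_1'}\ll k^{40r_1'}$ (exactly as the paper's ``recall that now $p_\m>k^{-5r_1'}$''), so the conclusion is unaffected.
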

\begin{proof} The set $\Pi _{nr}$ can be presented as $\cup _{j} \Pi _{nr}
^j$,  each $\Pi _{nr}
^j$ being a $k^{\delta }$-cluster, and the distances between sets $\Pi _{nr} ^j$ is no less than
$\frac13 k^{\delta }$. Therefore, $P_{nr}=\sum _j
P_{nr}^{j}$, where $P_{nr}^{j}$ are projections corresponding
to $\Pi _{nr} ^j$, and $P_{nr}HP_{nr}=\sum
_jP_{nr}^{j}HP_{nr}^{j}$. Hence, it is enough to prove
\begin{equation}\label{Pnrj}
\left\|\Bigl(P_{nr}^j\bigl(H(\k^{(2)}(\varphi
))-k^{2l}I\bigr)P_{nr}^{j}\Bigr)^{-1}\right\|<ck^{40r_1'}.
\end{equation}
It follows from Lemma \ref{L:estnonres1}. Indeed,
by construction, each $\Pi _{nr} ^j$ contains $\m \in \M(\varphi , r_2)$, $\M(\varphi , r_2)=\M _1(\varphi , r_2)\cup \M _2(\varphi , r_2)$. We can apply Lemma \ref{L:estnonres1}, since it is proven for any $r_1$ (no restrictions from above). We take $\varepsilon _0=k^{-10r_1'}$ in Lemma \ref{L:estnonres1}, since the distance from $\varphi _0$ to the
nearest pole of the operator
$\Bigl(P_{nr}^{j}\bigl(H(\k^{(1)}(\varphi _0
))-k^{2l}I\bigr)P_{nr}^{j}\Bigr)^{-1}$ is greater than
$k^{-10r_1'}$.
By analogy with
Corollary \ref{estnonres1}, we obtain (recall that now
$p_\m>k^{-5r_1'}$)
\begin{equation}\label{Pnrj-1}
\left\|\Bigl(P_{nr}^{j}\bigl(H(\k^{(1)}(\varphi _0
))-k^{2l}I\bigr)P_{nr}^{j}\Bigr)^{-1}\right\|<ck^{40r_1'}.
\end{equation} Taking into account that
$\varkappa^{(2)}(\varphi _0)-\varkappa^{(1)}(\varphi
_0)=o(k^{-2l-40\mu r_1'})$ and $\k^{(2)}(\varphi )-\k^{(2)}(\varphi
_0)=o(k^{-2l+1-40r_1'})$, we arrive at \eqref{Pnrj}.
\end{proof}
\begin{lemma}\label{L:Pr} Let $\varphi _0\in \omega^{(2)}(k,\delta ,\tau )$,
and $|\varphi-\varphi _0|<k^{-44r_1'-2l-\delta}$, i=1,2,3. Then, \begin{enumerate}
\item The number of poles of the resolvent $\Bigl(P_i\bigl(H(\k^{(2)}(\varphi
))-k^{2l}I\bigr)P_i\Bigr)^{-1}$ in the disc  $|\varphi-\varphi _0|<k^{-44r_1'-2l-\delta}$ is no greater than $N_i^{(1)}$, where $N_1^{(1)}=k^{\gamma r_1+3}$, $N_2^{(1)}=k^{\gamma r_1/2+\delta _0r_1}$,
$N_3^{(1)}=k^{\gamma r_1/6-\delta _0r_1}$.
\item Let  $\varepsilon$ be the
distance to the nearest pole of the resolvent in ${\cal W}^{(2)}$
and let $\varepsilon_0:=\min\{\varepsilon,\,k^{-11r_1'}\}$. Then,
the following estimates hold:
\begin{equation}\label{Pr-1}
\left\|\Bigl(P_i\bigl(H(\k^{(2)}(\varphi
))-k^{2l}I\bigr)P_i\Bigr)^{-1}\right\|<ck^{44r_1'}\left(\frac{k^{-11r_1'}}{\varepsilon
_0}\right)^{N_{i}^{(1)}},
\end{equation}
\begin{equation}\label{Pr-2}
\left\|\Bigl(P_i\bigl(H(\k^{(2)}(\varphi
))-k^{2l}I\bigr)P_i\Bigr)^{-1}\right\|_1<ck^{44r_1'+8\gamma
r_1}\left(\frac{k^{-11r_1'}}{\varepsilon _0}\right)^{N_{i }^{(1)}}.
\end{equation}
\end{enumerate}
\end{lemma}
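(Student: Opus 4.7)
The plan is to reduce the lemma, via the block decomposition of Corollary~\ref{C:PHP-2}, to estimating the resolvent of a single connected component $P_{ij}\bigl(H(\k^{(2)}(\varphi))-k^{2l}I\bigr)P_{ij}$, and then to run the same kind of Rouch\'e/determinant argument as in Lemma~\ref{L:estnonres1} and Corollary~\ref{estnonres1}. The r\^ole of the ``small $k^\delta$-cluster $P_{\m}$'' from Step~II is now played by the whole component projector $P_{ij}$, and the r\^ole of the unperturbed operator is played by the block-diagonal Step~II model restricted to $\Pi_{ij}$. Since distinct components $\Pi_{ij},\Pi_{ij'}$ of $\Pi_i$ are $\||\cdot\||$-separated by more than $Q$, Corollary~\ref{C:PHP-2} yields $P_iHP_i=\sum_j P_{ij}HP_{ij}$, so both the pole count and the resolvent norm reduce to the maximum over $j$; by Lemmas~\ref{L:black}, \ref{L:grey}, \ref{L:white} the number $N_{ij}$ of elements of $\MM^{(2)}$ in each fixed $\Pi_{ij}$ is bounded (up to a harmless constant absorbed into the leading exponential) by $N_i^{(1)}$.

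Fix such a component and decompose $P_{ij}=\sum_{\m} P_{ij,\m}$ into its Step~II $k^\delta$-subclusters, which are pairwise $\||\cdot\||$-separated by more than $Q$ so that $V$ vanishes across them. Form the block-diagonal reference $\hat H_{ij}(\varphi):=\sum_{\m}P_{ij,\m}H(\k^{(2)}(\varphi))P_{ij,\m}$. By the definition \eqref{M^2} of $\MM^{(2)}$, a $k^\delta$-subcluster contributes a pole of $(\hat H_{ij}-k^{2l}I)^{-1}$ within $k^{-11r_1'}$ of $\varphi_0$ only if it intersects $\MM^{(2)}$, and parts~1--4 of Lemma~\ref{L:estnonres1} bound the number of such poles per subcluster by four. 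Consequently $\det(\hat H_{ij}-k^{2l}I)$ has at most $4N_{ij}$ zeros in the disk $\Delta_0$ of radius $k^{-44r_1'-2l-\delta}$ centered at $\varphi_0$, and $\|(\hat H_{ij}-k^{2l}I)^{-1}\|\leq ck^{44r_1'}$ whenever $\varphi\in\Delta_0$ stays $k^{-11r_1'}$-away from all of those poles.

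Compare $H$ to $\hat H_{ij}$ via the determinant inequality \eqref{determinants} applied to $I+W_{ij}(\hat H_{ij}-k^{2l}I)^{-1}$, where $W_{ij}=P_{ij}(H-\hat H_{ij})P_{ij}$ couples distinct $k^\delta$-subclusters only through $V$. Iterating the geometric series in the spirit of \eqref{step2tri} and cutting it off at $r_0\sim k^\delta/(6Q)$ forces the matrix entries of $W_{ij}(\hat H_{ij}-k^{2l}I)^{-1}$ to vanish between subclusters that are further than $r_0Q<k^\delta/3$ apart; this yields a super-polynomially small trace-norm bound on an auxiliary contour in $\Delta_0$, and Rouch\'e's theorem then identifies the number of zeros of $\det\bigl(P_{ij}(H(\k^{(2)}(\varphi))-k^{2l}I)P_{ij}\bigr)$ in $\Delta_0$ with that of $\det(\hat H_{ij}-k^{2l}I)$, proving part~1. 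For \eqref{Pr-1} I factor
\[
\det\bigl(P_{ij}(H(\k^{(2)}(\varphi))-k^{2l}I)P_{ij}\bigr)=f(\varphi)\prod_{s=1}^{N_{ij}}(\varphi-\varphi_s),
\]
bound $|f|$ from below on $\Delta_0$ by the minimum principle applied to the boundary comparison, and reconstruct the resolvent as adjugate over determinant as in Lemma~\ref{estnonres0}; each factor $(\varphi-\varphi_s)^{-1}$ then contributes a copy of $k^{-11r_1'}/\varepsilon_0$, producing the exponent $N_{ij}$ in \eqref{Pr-1}. The trace-norm bound \eqref{Pr-2} follows from $\|\cdot\|_1\leq(\dim P_{ij})\|\cdot\|$ combined with $\dim P_{ij}\leq ck^{8\gamma r_1}$, the latter obtained by raising the $\||\cdot\||$-diameter estimates of Lemmas~\ref{L:black}, \ref{L:grey}, \ref{L:white} to the fourth power (since $\m\in\Z^2\times\Z^2$).

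The main obstacle is the Rouch\'e comparison above: the perturbation $W_{ij}$ has entries of size $O(1)$ and is therefore not small in operator norm, so its \emph{trace-class} smallness has to be extracted combinatorially from the fact that any path of $V$-couplings between two distinct $k^\delta$-subclusters of $\Pi_{ij}$ must traverse at least $k^\delta/Q$ $V$-edges. The resulting super-polynomial decay $k^{-(2l-2-40\mu\delta)\cdot k^\delta/(6Q)}$ of the $r_0$-th iterate is what beats the polynomial dimension bound $\dim P_{ij}\leq ck^{8\gamma r_1}$ and, together with the squeezing factor $(k^{-11r_1'}/\varepsilon_0)^{N_{ij}}$, accounts for the exponents $44r_1'$ and $8\gamma r_1$ appearing in \eqref{Pr-1}, \eqref{Pr-2}.
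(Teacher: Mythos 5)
Your overall strategy is the same as the paper's (reduce to one colored component, compare with a $k^{\delta}$-cluster block model, transfer the pole count by Rouch\'e, then squeeze with a maximum-principle argument and use the dimension bound for the trace norm), but two steps as written do not work. The first is the model operator. You set $\hat H_{ij}=\sum_{\m}P_{ij,\m}H P_{ij,\m}$ over ``Step II $k^{\delta}$-subclusters'' that are simultaneously claimed to satisfy $\sum_{\m}P_{ij,\m}=P_{ij}$ and to be pairwise $\||\cdot\||$-separated by more than $Q$. These two claims are incompatible: the resonant $k^{\delta}$-clusters are indeed $Q$-separated, but they do not cover the (much larger, connected) component $\Pi_{ij}$, while any covering of $\Pi_{ij}$ by $k^{\delta}$-boxes has adjacent boxes coupled directly by $V$. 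If the $P_{ij,\m}$ are only the resonant clusters, then $W_{ij}=P_{ij}(H-\hat H_{ij})P_{ij}$ contains the full diagonal $|\k+\p_{\m}|^{2l}$ on the uncovered non-resonant part and is enormous, so the series for $(P_{ij}(H-k^{2l})P_{ij})^{-1}$ around $\hat H_{ij}$ diverges. The paper's model operator is $PHP+(P_{\Pi}-P)H_{0}$, i.e.\ the resonant $k^{\delta}$-clusters \emph{plus the free operator on the non-resonant remainder}; the convergence of the comparison series comes from the smallness $O(k^{-2l+2+40\mu\delta})$ of the free resolvent on non-resonant sites (exactly as in the proof of Theorem \ref{Thm2}), with the path-length/cut-off argument only needed to beat the large factor $k^{44r_1'}$ coming from the resonant blocks. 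Your write-up attributes all the smallness to path-length combinatorics and omits this non-resonance input, which is where the convergence actually comes from.

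The second problem is the passage from the pole count to \eqref{Pr-1}. Reconstructing the resolvent as adjugate over determinant and bounding $|f|$ from below is not what Lemma \ref{estnonres0} does, and it does not give an operator-norm bound of the stated size: the adjugate of a matrix of dimension $\sim k^{8\gamma r_1}$ with entries of polynomial size carries factors exponential in the dimension. The workable (and the paper's) route is: do the Rouch\'e comparison on the boundary of $\tilde{\cal O}^{(2)}_{\Pi}$ (the $k^{-11r_1'}$-discs around the poles of the cluster model), where a point $\varphi_{*}$ at distance $\gtrsim k^{-11r_1'}$ from all unperturbed poles exists and the perturbed resolvent is bounded by $ck^{44r_1'}$; then apply the maximum principle to $\bigl(\prod_{s}(\varphi-\varphi_{s})\bigr)\bigl(P_{\Pi}(H-k^{2l})P_{\Pi}\bigr)^{-1}$ inside $\tilde{\cal O}^{(2)}_{\Pi}$, which produces one factor $k^{-11r_1'}/\varepsilon_{0}$ per pole. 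Related to this, you also skip the case analysis on whether $\varphi_{0}$ lies in $\tilde{\cal O}^{(2)}_{\Pi}$: your ``auxiliary contour in $\Delta_{0}$'' (radius $k^{-44r_1'-2l-\delta}$, far smaller than $k^{-11r_1'}$) may pass arbitrarily close to unperturbed poles, where the bound $ck^{44r_1'}$ for the model resolvent is simply not available, so the Rouch\'e step cannot be run on such a contour; when $\varphi_{0}\notin\tilde{\cal O}^{(2)}_{\Pi}$ one must instead conclude there are no poles in the small disc at all.
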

\begin{proof}  Let $\Pi $ be a component  $\Pi _b^j$, $\Pi _g^j$ or $\Pi _w^j$.
By Lemmas \ref{L:black}, \ref{L:grey}, \ref{L:white} the number $N$
of elements  $\MM^{(2)}\cap \Pi$ does not exceed $ck^{\gamma
r_1+3}$.  Let us recall that the set ${\MM}^{(2)}$ is defined by the
formula \eqref{M^2}, where ${\cal O}_\m^{(2)}$ is the union of the
disks of the radius $k^{-10r_1'}$ with the centers at poles of the
resolvent of $k^\delta$-components containing
$\k^{(1)}(\varphi_0)+\p_\m$. Let us consider ${\cal O}^{(2)}_{\Pi
}=\cup_{\m\in \Pi \cap {\MM}^{(2)}}{\cal O}_\m^{(2)}$ and an
analogous set consisting of smaller discs: $\tilde{\cal
O}^{(2)}_{\Pi}=\cup_{\m\in \Pi \cap {\MM}^{(2)}}\tilde{\cal
O}_\m^{(2)}$, where $\tilde{\cal O}_\m^{(2)}$ have the radius
$k^{-11r_1'}$. Since $N<ck^{\gamma r_1+3}$, the total size of
$\tilde{\cal O}^{(2)}_{\Pi}$ is less than $k^{-11r_1'+\gamma
r_1+3}=o(k^{-10r_1'})$.

First, assume  $\varphi _0\not \in \tilde{\cal O}^{(2)}_{\Pi}$.
Then, we can apply Lemma \ref{L:estnonres1} and Corollary
\ref{estnonres1}. Indeed, let us consider a $k^{\delta }$-component
in $\Pi$. We denote it by $\tilde \MM (\m)$ and the corresponding
projection $P_\m$. By the definitions of $\OO _{\m}^{(2)}$, $\tilde
\OO _{\m}^{(2)}$, the distance from $\varphi _0$ to the nearest pole
of $\left(P_\m(H(\k^{(1)})-k^{2l}I)P_\m\right)^{-1}$ is greater than
$k^{-11r_1'}$. Applying Lemma \ref{L:estnonres1} to these
resolvents, we obtain (recall that now $p_\m>k^{-10\mu r_1}$):
\begin{equation} \label{March3-1a}\left\|\left(P_\m(H(\k
^{(1)}(\varphi
_0))-k^{2l}I)P_\m\right)^{-1}\right\|<ck^{44r_1'},\end{equation}
\begin{equation}
\label{March3-1b}\left\|\left(P_\m(H(\k ^{(1)}(\varphi
_0))-k^{2l}I)P_\m\right)^{-1}\right\|_1<ck^{44r_1'+4\delta}.\end{equation}
By analogy with Corollary \ref{estnonres1},
$$\left\|\left(P(H(\k^{(1)}(\varphi _0))-k^{2l}I)P\right)^{-1}\right\|<ck^{44r_1'},$$
$$\left\|\left(P(H(\k^{(1)}(\varphi _0))-k^{2l}I)P\right)^{-1}\right\|_1<ck^{44r_1'}L^4,$$ where
$P$ is the projection for all $k^{\delta }$ components in $\Pi $,
$L$ is the size of $\Pi $.  Arguing as in the proof of Theorem
\ref{Thm2}, we show that the perturbation series for the resolvent
$\left(P_{\Pi }(H(\k^{(1)}(\varphi _0))-k^{2l}I)P_{\Pi
}\right)^{-1}$ converges when we take $PH(\k^{(1)}(\varphi
_0))P+(P_{\Pi }-P)H_0$ as the unperturbed operator. Therefore,
$$\left\|\left(P_{\Pi }(H(\k^{(1)}(\varphi
_0))-k^{2l}I)P_{\Pi }\right)^{-1}\right\|<ck^{44 r_1'},$$ no poles being inside of the disc. Taking into
account that $|\varphi -\varphi _0|<k^{-44 r_1'-2l-\delta}$ and
$|\k^{(2)}-\k^{(1)}|=o(k^{-44 r_1'-2l})$,  we obtain
$$\left\|\left(P_{\Pi }(H(\k^{(2)}(\varphi
))-k^{2l}I)P_{\Pi }\right)^{-1}\right\|<ck^{44 r_1'}.$$ By Lemmas
\ref{L:black}, \ref{L:grey}, \ref{L:white}, $L<k^{2\gamma r_1}$,
$$\left\|\left(P_{\Pi }(H(\k^{(2)}(\varphi
))-k^{2l}I)P_{\Pi }\right)^{-1}\right\|_1<ck^{8\gamma r_1+44r_1'}.$$
Thus, the resolvent has no poles inside the disk around $\varphi _0$ and the estimates \eqref{Pr-1}, \eqref{Pr-2} hold with $\varepsilon_0:=k^{-11r_1'}$.
Second, if $\varphi \not \in \tilde{\OO }_{\Pi }^{(2)}$, then $\varphi
_0 \not \in \tilde{\OO}_{\Pi }^{(2)}(11r_1',\frac{1}{2})$. Therefore
the estimate similar to the last two  hold. Now estimates
\eqref{Pr-1}, \eqref{Pr-2} easily follow.

It remains to consider the case $\varphi _0, \varphi \in \tilde{\cal
O}_{\Pi }^{(2)}$.   Obviously, $\varphi _0, \varphi $ belong to the same connected component of $\tilde{\cal
O}_{\Pi }^{(2)}$ or to different components being at the distance less than $k^{-44r_1'-2l-\delta}$ from each other.  We consider a $\varphi _*\in \partial \tilde{\cal O}_{\Pi
}^{(2)}$, where $\partial \tilde{\cal O}_{\Pi
}^{(2)}$ is the boundary of the component(s) containing  $\varphi _0, \varphi $.
Note that  $\varphi _*\not \in \OO _{\m}^{(2)}(11r_1', 1)$ for all $\m \in \Pi$. Indeed, for $\m \in  \MM^{(2)}$, it follows from the relations $\varphi _*\in \partial \tilde{\cal O}_{\Pi }^{(2)}$ and the definitions of
${\cal
O}_{\Pi }^{(2)}, \tilde {\cal
O}_{\Pi }^{(2)}$.  If $\m\in \Pi \setminus \MM^{(2)}$, then $\varphi _0$ is not in $\OO _{\m}^{(2)}(10r_1', 1)$ by the definition of $\MM^{(2)}$.  Since  $\varphi _0, \varphi \in \tilde{\cal
O}_{\Pi }^{(2)}$ and
 the length of $\tilde{\cal O}_{\Pi }^{(2)}$ is $o(k^{-10r_1'})$,
we have $\varphi _*\not \in \OO _{\m}^{(2)}(10r_1', \frac{1}{2})$.   Now, considering as in the case
$\varphi _0\not \in \tilde{\cal O}_{\Pi }^{(2)}$, we obtain that the
perturbation series for the resolvent $\left(P_{\Pi
}(H(\k^{(1)}(\varphi _*))-k^{2l}I)P_{\Pi }\right)^{-1}$ converges
when we take $PH(\k^{(1)}(\varphi _*))P+(P_{\Pi }-P)H_0$ as the
unperturbed operator. Therefore,
$$\left\|\left(P_{\Pi }(H(\k^{(2)}(\varphi
_*))-k^{2l}I)P_{\Pi }\right)^{-1}\right\|<ck^{44 r_1'}.$$ The number
of poles of the resolvent $\left(P_{\Pi }(H(\k^{(2)}(\varphi
))-k^{2l}I)P_{\Pi }\right)^{-1}$ in $\tilde \OO_{\Pi }^{(2)}$ is the same
as the number of poles of the resolvent of unperturbed operator.
Hence, it is $N$. Using the Maximum principle, we get \eqref{Pr-1}
for the case $\varepsilon \leq k^{-11r_1'}$, where $N_i=N$ and
depends on color of $\Pi $. Considering that the dimension of
$P_{\Pi }$  does not exceed $k^{8\gamma r_1}$, we obtain
\eqref{Pr-2}
\end{proof}

At last, let $\Pi_s^j$ be a particular $k^{r_1/2}$-box around $\m$
containing. Let $P_s^j$ be corresponding projection.
\begin{lemma}\label{L:Ps} Let $\varphi _0\in \omega^{(2)}(k,\delta ,\tau )$. Then, the operator
$\Bigl(P_s^j\bigl(H(\k^{(2)}(\varphi
))-k^{2l}I\bigr)P_s^j\Bigr)^{-1}$ has no more than one pole in the
disk $|\varphi-\varphi _0|<k^{-r_1'-\delta}$. Moreover,
\begin{equation}\label{Ps-1}
\left\|\Bigl(P_s^j\bigl(H(\k^{(2)}(\varphi
))-k^{2l}I\bigr)P_s^j\Bigr)^{-1}\right\|<\frac{8k^{-2l+1}}{p_{\m}\varepsilon
_0},
\end{equation}
\begin{equation}\label{Ps-2}
\left\|\Bigl(P_s^j\bigl(H(\k^{(2)}(\varphi
))-k^{2l}I\bigr)P_s^j\Bigr)^{-1}\right\|_1<\frac{8k^{-2l+1+
4r_1}}{p_{\m}\varepsilon_0},
\end{equation}
$\varepsilon _0=\min \{\varepsilon , k^{-r_1'-\delta }\}$, where
$\varepsilon $ is the distance to the pole of the operator.
\end{lemma}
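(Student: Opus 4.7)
The plan is to mimic the proof of Lemma \ref{L:estnonres1}, part~3, replacing everywhere the first-step objects ($H^{(1)}$, $\k^{(1)}$, $\lambda^{(1)}$, $\OO^{(1)}$) by their second-step analogues ($H^{(2)}$, $\k^{(2)}$, $\lambda^{(2)}$, $\OO^{(2)}$) and exploiting the very particular geometry of the simple region. Recall that $\m \in \Omega^{(2)}_s(r_2)$ means $0<p_\m\leq k^{-5r_1'}$, and, by the discussion in Section \ref{MOforStep3} (item \ref{simple}), no other element of $\MM(\varphi_0,r_2)\cup\MM^{(2)}$ lies within the $k^{r_1}$-box around $\m$. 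In particular, for every $\n$ in the $k^{r_1/2}$-box $\Pi_s^j$ with $\||\p_{\n-\m}\||\geq \frac13 k^\delta$, the estimate
\[
\bigl||\k^{(2)}(\varphi)+\p_\n|^{2l}_\R-k^{2l}\bigr|\gtrsim k^{2l-1-40\mu\delta}
\]
will hold on the whole disk $|\varphi-\varphi_0|<k^{-r_1'-\delta}$, because $\k^{(2)}(\varphi)$ is a super-small perturbation of $\k^{(2)}(\varphi_0)$ (by Lemma \ref{ldk-2}) and the condition $\varphi_0\in\omega^{(2)}$ places $\k^{(2)}(\varphi_0)$ outside every resonance disc $\OO_{\n-\m}$ associated with a non-resonant neighbor.

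\emph{Step 1 (one pole via Rouché).} I would first look at the free operator $P_s^j(H_0(\k^{(2)}(\varphi))-k^{2l}I)P_s^j$: the only diagonal term of it that vanishes in the disk comes from the index $\m$ itself, via the one-dimensional factor $|\k^{(2)}(\varphi)+\p_\m|^{2l}_\R-k^{2l}$. All other diagonal entries remain $\gtrsim k^{2l-1-40\mu\delta}$ in absolute value by the bound above. Combining this with the estimate $\|V\|\leq C$ and the trivial dimension bound $\dim P_s^j\leq (2k^{r_1/2})^4\leq ck^{2r_1}$, I would form the determinant
\[
D(\varphi)=\det\bigl(P_s^j(H(\k^{(2)}(\varphi))-k^{2l}I)P_s^j\bigl(P_s^j(H_0(\k^{(2)}(\varphi))-k^{2l}I)P_s^j\bigr)^{-1}\bigr)
\]
on the boundary of a slightly smaller disk, show $|D-1|<1/200$ via the inequality \eqref{determinants} exactly as in the proof of Lemma \ref{L:estnonres1}, and conclude by Rouché that the perturbed resolvent has at most one pole in the disk.

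\emph{Step 2 (eigenvalue and derivative).} Next I would track this pole analytically by introducing $\y(\varphi):=\k^{(2)}(\varphi)+\p_\m$ and the eigenvalue $\lambda^{(2)}(\y(\varphi))$, which is well-defined since $\y(\varphi)$ differs from $\k^{(2)}(\varphi)$ by at most $p_\m\leq k^{-5r_1'}$, well within the analyticity domain furnished by Theorem \ref{Thm2}. The key computation is the derivative: using Lemma \ref{L:derivatives-2} (in particular \eqref{estgder1-2k}--\eqref{estgder1-2phi}, which show $\nabla\lambda^{(2)}=\nabla\lambda^{(1)}+o(\cdot)$) together with Lemma \ref{L:derivatives-1} and the fact that at $\varphi=\varphi_0$ the vector $\k^{(2)}(\varphi_0)$ is tangent to $\DD_2(k^{2l})$, one obtains
\[
\frac{\partial}{\partial\varphi}\lambda^{(2)}(\y(\varphi))
=\pm\,2lp_\m k^{2l-1}\bigl(1+o(1)\bigr).
\]
Thus the equation $\lambda^{(2)}(\y(\varphi))=k^{2l}$ has at most one solution $\varphi^*$ in the disk, and the standard analytic-function estimate (as in Lemma~\ref{L:3.7.1} of Appendix~3) yields
\[
\bigl|\lambda^{(2)}(\y(\varphi))-k^{2l}\bigr|\gtrsim k^{2l-1}p_\m\,\varepsilon_0,
\]
with $\varepsilon_0$ the distance to $\varphi^*$ truncated at $k^{-r_1'-\delta}$.

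\emph{Step 3 (passage from eigenvalue to resolvent).} Finally I would invoke the second-step analogue of Lemma~\ref{L:July5} from Appendix~3 to produce the bound
\[
\Bigl\|(\lambda^{(2)}(\y(\varphi))-k^{2l})\,\bigl(P_s^j(H(\k^{(2)}(\varphi))-k^{2l})P_s^j\bigr)^{-1}\Bigr\|\leq 8,
\]
whose proof is a direct adaptation: conjugate by the (one-dimensional) spectral projector of $\lambda^{(2)}(\y(\varphi))$ obtained from Theorem~\ref{Thm2}, and use that the complementary part of $P_s^j$ has diagonal dominance of order $k^{2l-1-40\mu\delta}$ by the non-resonance of the surrounding box. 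Dividing by the lower bound from Step~2 gives \eqref{Ps-1}, and the trace-norm estimate \eqref{Ps-2} follows by the dimension bound $\dim P_s^j\leq ck^{2r_1}\leq ck^{4r_1}$.

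The main obstacle is ensuring that the one-dimensional picture built around the index $\m$ really survives the perturbation by the $k^{r_1/2}$-size box $\Pi_s^j$: one must verify that no accidental small denominators appear among the many indices $\n\in\Pi_s^j$ with $\||\p_{\n-\m}\||<k^{r_1/2}$. This is precisely what item \ref{simple} of Section~\ref{MOforStep3} guarantees, together with stability of that estimate under the $O(k^{-r_1'-2l})$ perturbation $\k^{(2)}(\varphi)-\k^{(2)}(\varphi_0)$ from Lemma~\ref{ldk-2}. Once this is confirmed, the rest is essentially Lemma~\ref{L:estnonres1}(3) with updated indices.
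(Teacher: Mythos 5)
Your Steps 2 and 3 are, in outline, the paper's own route (pass to the eigenvalue $\lambda^{(2)}\bigl(\k^{(2)}(\varphi)+\p_\m\bigr)$, get the derivative $\pm 2lp_\m k^{2l-1}(1+o(1))$, the lower bound $\gtrsim k^{2l-1}p_\m\varepsilon_0$, the analogue of Lemma \ref{L:July5}, and the dimension bound for the trace norm). The genuine gap is in the structural input you feed into Steps 1 and 3. You assert that for every $\n$ in the $k^{r_1/2}$-box $\Pi_s^j$ with $\||\p_{\n-\m}\||\geq \tfrac13 k^{\delta}$ one has $\bigl||\k^{(2)}(\varphi)+\p_\n|^{2l}_\R-k^{2l}\bigr|\gtrsim k^{2l-1-40\mu\delta}$, i.e.\ free diagonal dominance away from the single index $\m$. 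That is not what the construction of the simple region gives: it excludes other elements of $\MM(\varphi_0,r_2)$ only from the $k^{\delta}$-box around $\m$, and excludes only $\MM^{(2)}$- and $\Omega_s^{(2)}$-points from the $k^{r_1}$-box. First-step resonant indices $\m+\q$ with $k^{\delta}<\||\p_\q\||<k^{r_1/2}$ and $\bigl||\k^{(2)}+\p_{\m+\q}|^2-k^2\bigr|\leq k^{-40\mu\delta}$ are generically present inside $\Pi_s^j$; they are harmless only because their $k^{\delta}$-clusters are effectively non-resonant at the second step (their distance in $\varphi$ to cluster poles exceeds $k^{-10r_1'}$ since they are not in $\MM^{(2)}$, see \eqref{M^2}), which yields cluster-resolvent bounds of size $k^{O(r_1')}$, not free-denominator bounds of size $k^{2l-1-40\mu\delta}$. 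So both your "one pole via Rouché against $H_0$" and the "diagonal dominance of order $k^{2l-1-40\mu\delta}$" used in your adaptation of Lemma \ref{L:July5} rest on a false premise, and the scales in the final estimates come out wrong.

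Moreover, the Rouché/determinant comparison with $H_0$ cannot be salvaged on any contour inside the disk $|\varphi-\varphi_0|<k^{-r_1'-\delta}$: since $p_\m\leq k^{-5r_1'}$, the free denominator at the index $\m$ itself is at most of order $p_\m k^{2l-1}k^{-r_1'-\delta}\leq k^{2l-1-6r_1'-\delta}\ll 1$ throughout the disk, so the operator $A=P_s^jV(H_0-z)^{-1}P_s^j$ has entries of size at least $k^{-2l+1+6r_1'}$ and $\|A\|_1\gg 1$; the inequality \eqref{determinants} then gives no control of $|D-1|$. This is precisely why the paper, both in Lemma \ref{L:estnonres1} (part 3) and in Lemma \ref{L:Ps}, abandons the direct comparison with $H_0$ for small $p_\m$ and instead locates and counts the poles through the perturbation-series eigenvalue: the series for $\lambda^{(2)}\bigl(\k^{(2)}(\varphi)+\p_\m\bigr)$ converges in the relevant complex neighborhood, it is $o(k^{-100r_1})$-close to $\lambda^{(1)}\bigl(\k^{(1)}(\varphi)+\p_\m\bigr)$ by \eqref{perturbation-2c}, the solution count (at most two roots, about $\pi$ apart, hence at most one in the tiny disk) is transferred from Lemma \ref{L: Appendix 1} by Rouch\'e applied to these two scalar analytic functions, and the passage from the eigenvalue to the block resolvent is made through the analogues of Lemmas \ref{L:Appendix 2}, \ref{L:3.7.1} and \ref{L:July5}, in whose proofs the non-$\m$ part of $\Pi_s^j$ is controlled via the second-step cluster structure rather than by first-step free denominators. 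Your write-up needs this replacement of Step 1 (and of the justification inside Step 3) to be a proof.
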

\begin{proof}  The proof is similar to that of Lemma \ref{L:estnonres1} (part 3).  Indeed, when $p_\m<k^{-5r_1'}$, the series for $\lambda ^{(2)}(\k^{(2)}(\varphi )+\p_{\m})$ converges
in the complex $k^{-r_1'-\delta }$ neighborhood of $\omega^{(2)}(k,\delta ,\tau )$ and $\lambda ^{(2)}(\k^{(2)}(\varphi )+\p_{\m})=\lambda ^{(1)}(\k^{(1)}(\varphi )+\p_{\m})+
o(k^{-100r_1})$, see \eqref{perturbation-2c}. By Lemma \ref {L: Appendix
1}, the equation $\lambda^{(1)} \big(\k^{(1)}(\varphi
)+\p_\m\big)=k^{2l}+\varepsilon _0,\
|\varepsilon _0|\leq p_\m k^{\delta }$ has no more than two solutions in this neighborhood  of $\omega^{(2)}(k,\delta ,\tau )$. Using Rouche's theorem, we obtain the same
 fact for $\lambda ^{(2)}(\k^{(2)}(\varphi )+\p_{\m})=\varepsilon _0$. It is easy to show that the analogs of Lemmas \ref{L:Appendix 2}, \ref{L:3.7.1} and
  \ref{L:July5} hold for  $\lambda ^{(2)}(\k^{(2)}(\varphi )+\p_{\m})$. Thus, we obtain  \eqref{Ps-1}, \eqref{Ps-2}. \end{proof}


\subsubsection{Resonant and Nonresonant Sets for Step III \label{GSIII}}

We divide $[0,2\pi )$ into $[2\pi k^{44r_1'+2l+\delta}]+1$ intervals
$\Delta_m^{(2)}$ with the length not bigger than $k^{-44r_1'-2l-\delta }$.
If a particular interval belongs to $\OO^{(2)}$ we ignore it;
otherwise, let $\varphi_0(m)\not\in\OO^{(2)}$ be a point inside the
$\Delta_m^{(2)}$. Let
\begin{equation}\W_m^{(2)}=\{\varphi \in \W^{(2)}:\ | \varphi -\varphi
_0(m)|<4k^{-44r_1'-2l-\delta }\}. \label{W2m} \end{equation} Clearly, neighboring sets
$\W_m^{(2)}$  overlap (because of the multiplier 4 in the
inequality), they cover $\hat \W^{(2)}$ , which is
the restriction of $\W^{(2)}$ to the $2k^{-44r_1'-2l-\delta
}$-neighborhood of $[0,2\pi )$. For each $\varphi \in \hat \W^{(2)}$
there is an $m$ such that $|\varphi -\varphi
_{0}(m)|<4k^{-44r_1'-2l-\delta }$. We consider the poles of the
resolvent  $\left(P^{(2)}
(H(\k^{(2)}(\varphi))-k^{2l})P^{(2)}\right)^{-1}$ in $\hat
\W_m^{(2)}$ and denote them by $\varphi^{(2)} _{mj}$, $j=1,...,M_m$.
By Corollary \ref{C:PHP-2} the resolvent has a block structure. The
number of blocks clearly cannot exceed the number of elements in
$\Omega (r_2)$, i.e. $k^{4r_2}$. Using the estimates for the number
of poles for each block, the estimate being provided by Lemma
\ref{L:Pr} Part 1, we can roughly estimate the number of poles of
the resolvent by $k^{4r_2+r_1}$.


 Next, let $r_2'>11r_1'$ and $\OO^{(3)}_{mj}$ be the disc of the radius
$k^{-r_2'}$ around $\varphi ^{(2)}_{mj}$.
\begin{definition} The set
\begin{equation}\OO^{(3)}=\cup _{mj}\OO^{(3)}_{mj} \label{O3}
\end{equation}
we call the third resonant set. The set
\begin{equation}\W^{(3)}= \hat  \W^{(2)}\setminus \OO^{(3)}\label{W3}
\end{equation}
is called the third non-resonant set. The set
\begin{equation}\omega^{(3)}= \W^{(3)}\cap [0,2\pi) \label{w3}
\end{equation}
is called the third real non-resonant set. \end{definition}
\begin{lemma}\label{L:geometric3}Let  $r_2'>\mu r_2>44r_1'$, $\varphi \in \W^{(3)}$, $\varphi
_0(m)$ corresponds  to an interval $\Delta _m^{(2)}$ containing $\Re
\varphi $. Let $\Pi $ be one of the components $\Pi _s^j(\varphi
_0(m))$, $\Pi _b^j(\varphi _0(m))$, $\Pi _g^j(\varphi _0(m))$, $\Pi
_w^j(\varphi _0(m))$ and
 $P(\Pi )$ be the projection corresponding to $\Pi $. Let also
 $\varkappa \in \C: |\varkappa-\varkappa^{(2)}(\varphi )|<k^{-r_2'k^{2\gamma
r_1}}$. Then,
\begin{equation} \label{March3-2} \left\|\left(P(\Pi )\left(H\big(\k(\varphi
)\big)-k^{2l}I\right)P(\Pi )\right)^{-1}\right\|<ck^{2\mu
r_2+r_2'N^{(1)}},\end{equation}
\begin{equation} \label{March3-3}
\left\|\left(P(\Pi )\left(H\big(\k(\varphi
)\big)-k^{2l}I\right)P(\Pi )\right)^{-1}\right\|_1<ck^{(2\mu+1)
r_2+r_2'N^{(1)}},\end{equation} $N^{(1)}$ corresponding to the color
of $\Pi $: $N^{(1)}=1,\ k^{\gamma r_1+3 },\ k^{\gamma r_1/2+\delta
_0r_1 },\ k^{\gamma r_1/6-\delta _0r_1 }$ for simple, black, grey
and white clusters, correspondingly.
\end{lemma}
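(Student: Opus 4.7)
The plan is to combine the block-resolvent estimates established in Lemmas~\ref{L:Pr} and \ref{L:Ps} with the defining property of $\W^{(3)}$. By \eqref{O3}--\eqref{W3}, for any $\varphi\in\W^{(3)}$ the distance from $\varphi$ to every pole $\varphi^{(2)}_{mj}$ of the resolvent $\bigl(P^{(2)}(H(\k^{(2)}(\varphi))-k^{2l})P^{(2)}\bigr)^{-1}$ is at least $k^{-r_2'}$. By Corollary~\ref{C:PHP-2}, $P^{(2)}HP^{(2)}$ decomposes into blocks \eqref{PHP-2}, so this separation transfers to the poles of each individual block resolvent associated with any of the components $\Pi_s^j,\Pi_b^j,\Pi_g^j,\Pi_w^j$.

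For $\Pi\in\{\Pi_b^j,\Pi_g^j,\Pi_w^j\}$ I would apply Lemma~\ref{L:Pr} with $\varepsilon=k^{-r_2'}$. Since $r_2'>11r_1'$, we have $\varepsilon_0=k^{-r_2'}$, and \eqref{Pr-1} immediately yields
\[
\bigl\|\bigl(P(\Pi)(H(\k^{(2)}(\varphi))-k^{2l})P(\Pi)\bigr)^{-1}\bigr\|\leq ck^{44r_1'+(r_2'-11r_1')N^{(1)}}\leq ck^{2\mu r_2+r_2'N^{(1)}},
\]
the last inequality using $\mu r_2>44r_1'$ and $N^{(1)}\geq 1$. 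For $\Pi=\Pi_s^j$ (where $N^{(1)}=1$), Lemma~\ref{L:Ps} combined with the Diophantine lower bound $p_\m\geq ck^{-(\mu-1+\epsilon)r_2}$ furnished by Lemma~\ref{psnorms} (valid since $\||\p_\m\||\leq k^{r_2}$) gives
\[
\bigl\|\bigl(P_s^j(H(\k^{(2)}(\varphi))-k^{2l})P_s^j\bigr)^{-1}\bigr\|\leq\frac{8k^{-2l+1}}{p_\m\varepsilon_0}\leq ck^{\mu r_2+r_2'},
\]
well within the claimed bound.

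It remains to pass from $\k^{(2)}(\varphi)$ to $\k$ with $|\varkappa-\varkappa^{(2)}(\varphi)|<k^{-r_2'k^{2\gamma r_1}}$, and then from operator to trace norm. The difference $P(\Pi)\bigl(H(\k)-H(\k^{(2)}(\varphi))\bigr)P(\Pi)$ is diagonal with entries of size $O(k^{2l-1})|\varkappa-\varkappa^{(2)}(\varphi)|=O(k^{2l-1-r_2'k^{2\gamma r_1}})$, which is super-polynomially smaller than the inverse of the operator-norm bound just obtained; hence the Hilbert identity transfers \eqref{March3-2} from $\k^{(2)}(\varphi)$ to $\k$ with only a harmless constant factor. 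The trace-norm estimate \eqref{March3-3} then follows by multiplying the operator norm by $\dim P(\Pi)$, which by Lemmas~\ref{L:black}, \ref{L:grey}, \ref{L:white} is at most $ck^{8\gamma r_1}$ for black, grey and white clusters, and by direct box-counting at most $ck^{2r_1}$ for simple clusters; in either case the extra factor is easily absorbed in the $k^{r_2}$ budget since $\gamma r_1\ll r_2$. There is no essential new obstacle here: the heavy lifting (pole counting via Rouché on block determinants, squeezing via the maximum principle on the discs $\tilde{\mathcal O}^{(2)}_\Pi$, and the color-dependent counts $N^{(1)}$) was carried out in Lemmas~\ref{L:Pr} and \ref{L:Ps}; this lemma simply repackages those bounds for the new scale $k^{-r_2'}$ and verifies stability under the super-polynomial perturbation in $\varkappa$.
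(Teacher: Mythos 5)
Your proposal is correct and follows essentially the same route as the paper: the paper's own proof simply invokes the definition of $\W^{(3)}$ together with Lemmas \ref{L:Pr} and \ref{L:Ps} (with $p_\m>k^{-2\mu r_2}$) for $\k=\k^{(2)}(\varphi)$, and then notes stability of the estimates under the perturbation of $\varkappa^{(2)}$ of order $k^{-r_2'k^{2\gamma r_1}}$, which is precisely your argument with the exponent arithmetic and dimension counting written out explicitly.
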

\begin{proof} For $\k=\k^{(2)}(\varphi )$ the lemma follows
immediately from the definition of $\W^{(3)}$ and Lemmas \ref{L:Pr}
and \ref{L:Ps} ($p_{\m}>k^{-2\mu r_2}$). It is easy to see that
estimates \eqref{March3-2} and \eqref{March3-3} are stable with
respect to perturbation of $\varkappa^{(2)}$ of order
$k^{-r_2'k^{2\gamma r_1}}$.
\end{proof}

 By total size of the set $\OO^{(3)}$ we mean the sum of
the sizes of its connected components.
\begin{lemma}\label{4.16} Let $r_2'\geq (\mu+10)r_2$,
$r_2>45r_1'+2l$. Then, the size of each connected component of
 $\OO^{(3)}$ is less
than $k^{5r_2-r_2'}$. The total size of $\OO^{(3)}$ is less than
$k^{-r_2'/2}$.
\end{lemma}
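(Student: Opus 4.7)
The plan is to mimic the argument from Lemma 3.20 (which bounded $|\OO^{(2)}|$), replacing the pole count $64k^{4r_1}$ per slab $\W_j^{(1)}$ by an analogous estimate for the number of poles of the full block operator $\left(P^{(2)}(H(\k^{(2)}(\varphi))-k^{2l})P^{(2)}\right)^{-1}$ in each slab $\W_m^{(2)}$, and then converting this into a size bound via the radius $k^{-r_2'}$ of the discs $\OO^{(3)}_{mj}$.

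First I would fix $m$ and count the poles of $\left(P^{(2)}(H(\k^{(2)}(\varphi))-k^{2l})P^{(2)}\right)^{-1}$ in $\W_m^{(2)}$. By Corollary \ref{C:PHP-2} this resolvent has block-diagonal form with blocks indexed by the components $\Pi_{ij}$ (simple, black, grey, white, non-resonant, and $\Omega(r_1)$). The total number of blocks is at most the number of lattice points in $\Omega(r_2)$, i.e., $\le k^{4r_2}$, and Part~1 of Lemma~\ref{L:Pr} (together with Theorem~\ref{Thm2} for the $\Omega(r_1)$-block and Lemma~\ref{L:Ps} for simple blocks and Lemma~\ref{L:Pnr} for the non-resonant blocks) bounds the number of poles per block by $N^{(1)}\le k^{\gamma r_1+3}\le k^{r_1}$. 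Hence the total pole count $M_m$ obeys $M_m\le k^{4r_2+r_1}$.

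Next, each connected component of $\OO^{(3)}\cap \W_m^{(2)}$ is a union of some subset of the discs $\OO^{(3)}_{mj}$, each of diameter $2k^{-r_2'}$, so its size does not exceed $2M_m k^{-r_2'}\le 2k^{4r_2+r_1-r_2'}<k^{5r_2-r_2'}$; this proves the per-component bound. Since the assumption $r_2'\ge (\mu+10)r_2$ and $\mu\ge 2$ give $r_2'\ge 12r_2$, we in fact have $k^{5r_2-r_2'}\ll k^{-44r_1'-2l-\delta}$, which is much smaller than the length of any interval $\Delta_m^{(2)}$ (using $r_2>45r_1'+2l$). Consequently, no component can straddle two neighboring slabs $\W_m^{(2)}$, so the per-slab bound really does bound each connected component, giving the corollary analogous to Corollary~\ref{C:O2size}: every component intersecting a neighborhood of $[0,2\pi)$ lies strictly inside $\hat\W^{(2)}$.

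For the total size I would sum over $m$: the number of slabs is at most $2\pi k^{44r_1'+2l+\delta}+1< k^{r_2}$ by the choice $r_2>45r_1'+2l$, so
\begin{equation*}
\mathrm{size}(\OO^{(3)})\;\le\; k^{r_2}\cdot 2k^{4r_2+r_1-r_2'}\;\le\; k^{6r_2-r_2'}\;\le\; k^{-r_2'/2},
\end{equation*}
the last inequality holding because $r_2'\ge 12 r_2$ forces $6r_2-r_2'\le -r_2'/2$, with a strict gap provided by the constant factors for $k$ large. There is no serious obstacle here; the only delicate point is making sure the pole count for the $P^{(2)}$ resolvent is really controlled by combining the block estimates of Lemmas~\ref{L:Pnr},~\ref{L:Pr},~\ref{L:Ps} and Theorem~\ref{Thm2} with the geometric bounds of Lemmas~\ref{L:black},~\ref{L:grey},~\ref{L:white} on the number of components. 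All other steps are a straightforward bookkeeping of scales analogous to the proof of Lemma~\ref{L:O2size}.
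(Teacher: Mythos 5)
Your proposal is correct and follows essentially the same route as the paper: the paper also counts at most $k^{4r_2+r_1}$ discs per slab $\W_m^{(2)}$ (block structure from Corollary \ref{C:PHP-2} plus the per-block pole bounds of Lemma \ref{L:Pr}), notes that the resulting size $k^{5r_2-r_2'}$ is much smaller than the length of $\Delta_m^{(2)}$ so no component crosses a whole slab, and then multiplies by the number $\sim k^{45r_1'+2l+\delta}<k^{r_2}$ of intervals to get the total bound $k^{-r_2'/2}$. No gaps.
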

\begin{proof} Indeed, each set $\W_m^{(2)}$ contains no more than
$k^{4r_2+r_1}$ discs $\OO_{mj}^{(3)}$. Therefore, the size of
$\OO^{(3)}\cap \W_m^{(2)}$ is less than $k^{-r_2'+5r_2}$.
Considering that $k^{-r_2'+5r_2}$ is much smaller that the length
of $\Delta _m^{(2)}$, we obtain that there is no connected components
which go across the whole set $\W_m^{(2)}$ and the size of each
connected component of $\OO^{(3)}$ is less than $k^{5r_2-r_2'}$.
Considering that the number of intervals $\Delta _m^{(2)}$ is less than
$k^{45r_1'+2l+\delta }$, we obtain the required estimate for the total size
of $\OO^{(3)}$.
\end{proof}


\begin{lemma}\label{estnonres0-1} Let $\varphi\in\W^{(2)}$ and
$C_3$ be the circle $|z-k^{2l}|=k^{-2r_2'k^{2\gamma r_1}}$. Then
$$
\left\|\left(P(r_1)(H(\k^{(2)}(\varphi))-z)P(r_1)\right)^{-1}\right\|\leq
4^2k^{2r_2'k^{2\gamma r_1}}.$$ \end{lemma}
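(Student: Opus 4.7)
The plan is to mirror the proof of Lemma~\ref{estnonres0} from Step~II, lifting it from $\k^{(1)}(\varphi)$ to $\k^{(2)}(\varphi)$ and squeezing from the circle $C_2=\{|z-k^{2l}|=(2ck^{4r_1'})^{-1}\}$ down to the much smaller circle $C_3$. First note that for $\varphi\in\W^{(2)}$, Lemma~\ref{ldk-2} gives $\lambda^{(2)}(\k^{(2)}(\varphi))=k^{2l}$, and by Theorem~\ref{Thm2} this is the unique eigenvalue of $H^{(2)}(\k^{(2)}(\varphi))$ inside the interval $\varepsilon_2$, with one-dimensional spectral projection $\E^{(2)}$. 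Equivalently, inside the disc bounded by $C_2$ the matrix $P(r_1)(H(\k^{(2)}(\varphi))-z)P(r_1)=H^{(2)}(\k^{(2)}(\varphi))-z$ is singular at exactly one point $z=k^{2l}$, and that zero of its determinant is simple.

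The first step is to upgrade the resolvent estimate~\eqref{step2raz*} (which in the proof of Theorem~\ref{Thm2} was established at $\k^{(1)}(\varphi)$ for $z\in C_2$) to the identical bound at $\k^{(2)}(\varphi)$. Since $|\varkappa^{(2)}-\varkappa^{(1)}|=|h^{(2)}|$ is super-exponentially small by~\eqref{dk0-2}, the difference $H^{(2)}(\k^{(2)}(\varphi))-H^{(2)}(\k^{(1)}(\varphi))$ is negligible against $(H^{(2)}(\k^{(1)}(\varphi))-z)^{-1}$ on $C_2$, and the Hilbert identity immediately yields
$$\|(H^{(2)}(\k^{(2)}(\varphi))-z)^{-1}\|\leq 16ck^{4r_1'},\qquad z\in C_2.$$
The second and crucial step is a maximum-principle squeeze. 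The operator-valued function
$$F(z):=(z-k^{2l})\left(H^{(2)}(\k^{(2)}(\varphi))-z\right)^{-1}$$
is holomorphic on the closed disc bounded by $C_2$: the only pole of the resolvent inside is the simple pole at $z=k^{2l}$, and the prefactor cancels it. On $C_2$ the outer bound gives $\|F(z)\|\leq(2ck^{4r_1'})^{-1}\cdot 16ck^{4r_1'}=8$. Applying the scalar maximum modulus principle to $\langle F(z)u,v\rangle$ for unit vectors $u,v\in P(r_1)\ell^2$, this bound propagates to the interior of the disc. Restricting to $z\in C_3$, where $|z-k^{2l}|=k^{-2r_2'k^{2\gamma r_1}}$, gives
$$\|(H^{(2)}(\k^{(2)}(\varphi))-z)^{-1}\|\leq 8\,k^{2r_2'k^{2\gamma r_1}}\leq 4^2\,k^{2r_2'k^{2\gamma r_1}},$$
which is the stated inequality.

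The argument is essentially mechanical given the Step~II machinery; the only delicate point is the bookkeeping of the first step, namely verifying that~\eqref{step2raz*} genuinely transfers from $\k^{(1)}(\varphi)$ to $\k^{(2)}(\varphi)$ without loss of constants so that the product $(\text{radius of }C_2)\cdot(\text{norm bound on }C_2)$ stays bounded by an absolute constant. Both this fact and the rank-one nature of $\E^{(2)}$ (needed for the simple-pole cancellation) were prepared by Theorem~\ref{Thm2} and Lemma~\ref{ldk-2}, so no further spectral analysis is required at this stage.
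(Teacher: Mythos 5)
Your proposal is correct and follows essentially the same route as the paper: transfer the bound \eqref{step2raz*} from $\k^{(1)}(\varphi)$ to $\k^{(2)}(\varphi)$ using \eqref{dk0-2} and the Hilbert identity, note that the resolvent has a single simple pole at $z=k^{2l}$ inside $C_2$, and squeeze to $C_3$ by the maximum principle applied to $(z-k^{2l})(H^{(2)}(\k^{(2)}(\varphi))-z)^{-1}$, which is exactly the mechanism of Lemma \ref{estnonres0} that the paper's proof invokes (with the same factor-of-$4$ bookkeeping you flag). Your use of Theorem \ref{Thm2} and Lemma \ref{ldk-2} to locate the pole, in place of re-running the determinant/Rouch\'{e} comparison with $H_0$, is an inessential variation.
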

\begin{proof} The proof is similar to the proof of Lemma~\ref{estnonres0} if we take into account
\eqref{step2raz*} and \eqref{dk0-2}. We notice also, that since in
the proof of the lemma we use the estimates from the previous step
along with some perturbation arguments: first, the series
decomposition (cf. \eqref{step2raz} and \eqref{step2raz*}), and
second, the shift from $\k^{(1)}$ to $\k^{(2)}$, we accumulate
additional factor $4$.\end{proof}


\section{Step III}
Let $k_*$ be sufficiently large to satisfy the estimates:
$$ k_*\geq k_{1}(V,\delta ,\tau), \ \ \  \ k^{\delta /8}_*>10^{8}+\|V\|+\mu +2l,$$
$k_{1}(V,\delta ,\tau)$ being introduced in the formulation of Theorem \ref{Thm2}. We also assume that $k_*$ is such that all constants $c$ in previous estimates (e.g. \eqref{March3-2}, \eqref{March3-3}) satisfy $c<k^{\delta /8}_*$. Since now on we consider $k>k_*$. This restriction on $k$ won't change in all consecutive steps.
We introduce a new notation $O_T(\cdot )$: let $f(k)=O_T(k^{-\gamma })$ mean that $|f(k)|<Tk^{-\gamma }$ when $k>k_*$.
\subsection{Operator $H^{(3)}$. Perturbation Formulas}
Let $P(r_2)$ be an orthogonal projector onto $\Omega(r_2):=\{\m:\
|\|\p_\m\||\leq k^{r_2}\}$ and $H^{(3)}=P(r_2)HP(r_2) $. From now on
we assume \begin{equation}
\label{r_2}k^\delta<r_2<k^{\gamma10^{-7}
r_1}.\end{equation} Note that $45r_1'+2l<k^\delta<k^{\gamma10^{-7}
r_1}$ for all $k>k_*$, since $10^{8}<r_1<k^{\delta /8}$. Let  $
\beta =2l-2-41\mu \delta$ and \begin{equation}\label{r_2'}5\mu r_2<r_2'<\frac{\beta
}{128}k^{\delta_0r_1-\delta -3}. \end{equation} We consider
$H^{(3)}(\k^{(2)}(\varphi ))$ as a perturbation of $\tilde
H^{(2)}(\k^{(2)}(\varphi ))$:
$$\tilde H^{(2)}:=\tilde
P_j^{(2)}H\tilde
P_j^{(2)}+\left(P(r_2)-\tilde
P_j^{(2)}\right)H_0,$$
where $H=H(\k^{(2)}(\varphi ))$, $H_0=H_0(\k^{(2)}(\varphi ))$ and $\tilde
P_j^{(2)}$
is the projection $P^{(2)}$, see \eqref{P(2)}, corresponding to $\varphi _{0}(j)$ in
the interval $\Delta _j^{(2)}$ containing $\varphi $.
Note that the operator $\tilde H^{(2)}$ has a block structure, the block $\tilde
P_j^{(2)}H\tilde
P_j^{(2)}$ being composed of smaller blocks $P_iHP_i$, $i=0,...,5$, see \eqref{PHP-2}, \eqref{blocks}.
 Let
\begin{equation}W^{(2)}=H^{(3)}-\tilde H^{(2)}=P(r_2)VP(r_2)-\tilde P_j^{(2)}V\tilde P_j^{(2)}, \label{W2*}\end{equation}
\begin{equation}\label{g3} g^{(3)}_r({\k}):=\frac{(-1)^r}{2\pi
ir}\hbox{Tr}\oint_{C_3}\left(W^{(2)}(\tilde
H^{(2)}({\k})-zI)^{-1}\right)^rdz,
\end{equation} \begin{equation}\label{G3}
G^{(3)}_r({\k}):=\frac{(-1)^{r+1}}{2\pi i}\oint_{C_3}(\tilde
H^{(2)}({\k})-zI)^{-1}\left(W^{(2)}(\tilde
H^{(2)}({\k})-zI)^{-1}\right)^rdz,
\end{equation}
where $C_3$ is the circle $|z-k^{2l}|=\varepsilon _0^{(3)}$,
$\varepsilon _0^{(3)}=k^{-2r_2'k^{2\gamma r_1}}.$
\begin{theorem} \label{Thm3} Suppose  $k>k_*$, $\varphi $ is in
the real  $k^{-r_2'-\delta }$-neighborhood of $\omega
^{(3)}(k,\delta,\tau )$ and $\varkappa\in\R$,
$|\varkappa-\varkappa^{(2)}(\varphi )|\leq \varepsilon
^{(3)}_0k^{-2l+1-\delta }$, $\k=\varkappa(\cos \varphi ,\sin \varphi
)$. Then,  there exists a single eigenvalue of $H^{(3)}({\k})$ in
the interval $\varepsilon _3( k,\delta,\tau )=\left(
k^{2l}-\varepsilon _0^{(3)}, k^{2l}+\varepsilon _0^{(3)}\right)$. It
is given by the absolutely converging series:
\begin{equation}\label{eigenvalue-3}\lambda^{(3)}({\k})=\lambda^{(2)}({\k})+
\sum\limits_{r=2}^\infty g^{(3)}_r({\k}).\end{equation} For
coefficients $g^{(3)}_r({\k})$ the following estimates hold:
\begin{equation}\label{estg3} |g^{(3)}_r({\k})|<k^{-\frac{\beta}{5} k^{r_1-\delta
}-\beta (r-1)}.
\end{equation}
The corresponding spectral projection is given by the series:
\begin{equation}\label{sprojector-3}
\E ^{(3)}({\k})=\E^{(2)}({\k})+\sum\limits_{r=1}^\infty
G^{(3)}_r({\k}), \end{equation} $\E^{(2)}({\k})$ being the spectral
projection of $H^{(2)}(\k)$. The operators $G^{(3)}_r({\k})$ satisfy
the estimates:
\begin{equation}
\label{Feb1a-3}
\left\|G^{(3)}_r({\k})\right\|_1<k^{-\frac{\beta}{10} k^{r_1-\delta
} -\beta r},
\end{equation}
\begin{equation}G^{(3)}_r({\k})_{\s\s'}=0,\ \mbox{when}\ \ \
2rk^{\gamma
r_1+3}+3k^{r_1}<\||\p_\s\||+\||\p_{\s'}\||.\label{Feb6a-3}
\end{equation}
\end{theorem}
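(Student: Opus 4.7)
The plan is to prove Theorem \ref{Thm3} by following exactly the scheme of Theorem \ref{Thm2}, but with the single--scale block--diagonal operator $\tilde H^{(1)}$ replaced by the genuinely multiscale operator $\tilde H^{(2)}$. The first step is to establish the uniform resolvent bound
\begin{equation*}
\bigl\|(\tilde H^{(2)}(\k^{(2)}(\varphi))-z)^{-1}\bigr\|\;\leq\; C\,k^{\,r_2' k^{\gamma r_1+3}},\qquad z\in C_3.
\end{equation*}
By Corollary \ref{C:PHP-2} the operator $\tilde H^{(2)}$ is block--diagonal, so this reduces to a bound on each block. For the coloured resonant blocks $P_iHP_i$ ($i=1,2,3,5$) the bound is Lemma \ref{L:geometric3}, after a one--line Hilbert--identity step to move from $z=k^{2l}$ to general $z\in C_3$; this is harmless, because $\varepsilon_0^{(3)}=k^{-2r_2'k^{2\gamma r_1}}$ is vastly smaller than the reciprocal of \eqref{March3-2}. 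For $P_{nr}'HP_{nr}'$ the bound is Lemma \ref{L:Pnr}, for the $P(r_1)$--block it is Lemma \ref{estnonres0-1}, and for the diagonal complement $(P(r_2)-\tilde P_j^{(2)})H_0$ it is immediate from the multiscale construction, since every $\m$ outside $\tilde\MM(\varphi_0(j),r_2)$ satisfies $\bigl||\k^{(2)}(\varphi)+\p_\m|^{2l}_\R-k^{2l}\bigr|>\tfrac12 k^{2l-2-40\mu\delta}$ by \eqref{jan28b}.

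With this in hand I would expand
\begin{equation*}
(H^{(3)}-z)^{-1}\;=\;\sum_{r=0}^{\infty}(\tilde H^{(2)}-z)^{-1}\bigl(-W^{(2)}(\tilde H^{(2)}-z)^{-1}\bigr)^{r}
\end{equation*}
and prove convergence by showing $\|W^{(2)}(\tilde H^{(2)}-z)^{-1}\|\leq k^{-\beta}$. Writing $W^{(2)}=P'VP'+P'V\tilde P+\tilde PVP'$ with $P':=P(r_2)-\tilde P_j^{(2)}$, the first and third pieces are immediate, since $(\tilde H^{(2)}-z)^{-1}P'=(H_0-z)^{-1}P'$ and by construction every $\m\in\mathrm{supp}\,P'$ satisfies $\bigl||\k^{(2)}(\varphi)+\p_\m|^{2l}-k^{2l}\bigr|>k^{2l-2-40\mu\delta}$. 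The middle piece $P'V\tilde P(\tilde H^{(2)}-z)^{-1}$ I would handle by the same Neumann--iteration trick as in Step II: write
\begin{equation*}
\tilde P(\tilde H^{(2)}-z)^{-1}\;=\;\sum_{r=0}^{r_0}\bigl((H_0-z)^{-1}\tilde PV\tilde P\bigr)^{r}(H_0-z)^{-1}\tilde P\;+\;\text{remainder},
\end{equation*}
and observe that for $r\leq r_0$ the truncated term has matrix elements connecting $\mathrm{supp}\,P'$ and $\mathrm{supp}\,\tilde P$ only at $\||\cdot\||$--distance at most $Q(r+1)$, so that every intermediate $H_0$--denominator falls in the non--resonance regime and contributes a factor $k^{-\beta}$.

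The main technical obstacle is precisely this iteration, which is more delicate than in Step II because $\tilde H^{(2)}$ carries many scales: one must verify that no $Q$--step of $V$ inside the partial Neumann sum can leak from a resonant cluster of one colour to a cluster of the same or different colour, or to a point of $\MM(\varphi_0(j),r_2)\setminus\tilde\MM(\varphi_0(j),r_2)$, because in all those situations the $H_0$--denominator fails to be large. This is exactly what is guaranteed by the multiscale separation built into the definitions after Corollary \ref{C:PHP-2}: black, grey, white, non--resonance and simple components are separated in $\||\cdot\||$--norm by at least $k^{\gamma r_1+\delta_0 r_1}$, $k^{\gamma r_1/2+2\delta_0 r_1}$, $k^{\gamma r_1/6}$, $\tfrac13 k^\delta$ and $\tfrac12 k^{r_1}$ respectively, while the $P(r_1)$--block is separated from $P'$ by a comparable distance after the small boundary--reductions noted in the construction. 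Consequently $r_0$ can be chosen of order $k^{r_1-\delta}/Q$, which after composition produces the very small factor $k^{-\beta k^{r_1-\delta}/5}$ appearing in \eqref{estg3} and \eqref{Feb1a-3}; the remainder at level $r_0+1$ is then absorbed by the (much cruder) resolvent bound of the first step. Once convergence is established, \eqref{eigenvalue-3} and \eqref{sprojector-3} follow by substituting the series into $-\frac{1}{2\pi i}\oint_{C_3}(H^{(3)}-z)^{-1}dz$ and its trace version respectively, while the support property \eqref{Feb6a-3} follows by tracking the $\||\cdot\||$--spread of $r$ applications of $W^{(2)}$: the $P(r_1)$--block contributes at most a single free pass of diameter $\leq 3k^{r_1}$, whereas every other block contributes at most $2k^{\gamma r_1+3}$, by Lemmas \ref{L:black}, \ref{L:grey}, \ref{L:white}.
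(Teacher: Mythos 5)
There is a genuine gap, in two places. First, your treatment of the middle piece $P'V\tilde P(\tilde H^{(2)}-z)^{-1}$ by a single Neumann expansion in the free resolvent, with $r_0$ of order $k^{r_1-\delta}/Q$ and the claim that ``every intermediate $H_0$--denominator falls in the non--resonance regime,'' is false. The range of $\tilde P_j^{(2)}$ contains many points of $\MM(\varphi_0,r_2)$ at which $\bigl||\k+\p_\m|^2_{\R}-k^2\bigr|\le k^{-40\mu\delta}$, and these sit within $O(k^{\delta})$ of the boundary of the non--resonant region $\Pi_{nr}$ and within $O(k^{\gamma r_1/6})$ of the boundaries of the coloured clusters; at such points the free denominator is not bounded below by $k^{2l-2-40\mu\delta}$, so the factor $k^{-\beta}$ per step is unavailable long before $r_0$ steps. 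This is why the paper does not iterate the free resolvent across $\tilde P$: it proves the block--by--block boundary estimates \eqref{5-}--\eqref{5-4}, expanding each coloured block's resolvent with respect to a block operator $\hat H_0$ built from the \emph{next finer} scale (non--resonant $k^{\delta}$--clusters for simple/white, white plus non--resonant for grey, grey plus white plus non--resonant for black). Moreover, for grey and black clusters the number of steps needed to reach the resonant core from the boundary is \emph{not} controlled by $\||\cdot\||$--distance divided by $Q$ or $k^{\delta}$, because an embedded lighter cluster is traversed in a single application of its block resolvent; the lower bounds $R>\frac{1}{64}k^{(\gamma/2+2\delta_0)r_1-\delta}$ and $R>\frac{1}{64}k^{\gamma r_1+\delta_0 r_1-\delta}$ require the combinatorial arguments of Appendices 4 and 5, which rest on the density estimate of Lemma \ref{L:2/3-1}. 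Your appeal to the mutual separation of same--coloured components does not address this long--jump mechanism.

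Second, you misattribute the origin of the super--exponentially small factors in \eqref{estg3} and \eqref{Feb1a-3}. The convergence estimate \eqref{March5} is only $k^{-\beta}$ per factor (as the paper's own Remark after Corollary \ref{"O"*} stresses), and the contour radius $\varepsilon_0^{(3)}=k^{-2r_2'k^{2\gamma r_1}}$ is far larger than $k^{-\frac{\beta}{5}k^{r_1-\delta}}$ since $2r_2'k^{2\gamma r_1}\ll \beta k^{r_1-\delta}$ under \eqref{r_2'}; so no ``composition'' of your middle--piece bound can produce \eqref{estg3}, \eqref{Feb1a-3}. In the paper these estimates come from a separate mechanism: the splitting $A=A_0+A_1+A_2$ relative to $\E^{(2)}(\k)$, the identity $\oint_{C_3}(\tilde H^{(2)}-z)^{-1}A_0^r\,dz=0$, and the bound $\|A_2\|_1<\frac12 k^{-\frac{\beta}{10}k^{r_1-\delta}-\beta}$, which uses the decay \eqref{Feb6b} of the matrix elements of $\E^{(2)}$ across the boundary of $\Omega(r_1)$ (via $\E^{(2)}W^{(2)}(P(r_2)-\E^{(2)})=\E^{(2)}P^{\partial}(r_1)W^{(2)}$). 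Without this ingredient your argument yields at best $|g_r^{(3)}|\lesssim k^{-\beta(r-1)}$--type bounds and cannot reach \eqref{estg3}, \eqref{Feb1a-3}, on which the later steps of the induction depend.
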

\begin{corollary}\label{corthm3} For the perturbed eigenvalue and its spectral
projection the following estimates hold:
 \begin{equation}\label{perturbation-3}
\lambda^{(3)}({\k})=\lambda^{(2)}({\k})+ O_2\left(k^{-\frac15 \beta
k^{r_1-\delta }-\beta }\right),
\end{equation}
\begin{equation}\label{perturbation*-3}
\left\|\E^{(3)}({\k})-\E^{(2)}({\k})\right\|_1<2k^{-\frac{\beta}{10}
k^{r_1-\delta }-\beta }.
\end{equation}
\begin{equation}
\left|\E^{(3)}({\k})_{\s\s'}\right|<k^{-d^{(3)}(\s,\s')},\ \
\mbox{when}\ \||\p_\s\||>4k^{r_1} \mbox{\ or }
\||\p_{\s'}\||>4k^{r_1 },\label{Feb6b-3}
\end{equation}
$$d^{(3)}(\s,\s')=\frac18(\||\p_\s\||+\||\p_{\s'}\||)k^{-\gamma r_1-3}\beta +\frac{1}{10}\beta k^{r_1-\delta
}.$$
\end{corollary}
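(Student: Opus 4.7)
The corollary is a direct bookkeeping consequence of Theorem \ref{Thm3}, exactly parallel to how Corollary \ref{corthm2} was deduced from Theorem \ref{Thm2}, so I will reuse that template. My plan is to read off the three estimates from the absolutely convergent series \eqref{eigenvalue-3} and \eqref{sprojector-3}, controlling the tails by the geometric bounds \eqref{estg3} and \eqref{Feb1a-3}, and then to combine the support relation \eqref{Feb6a-3} with the vanishing of $\E^{(2)}$ outside $\Omega(r_1)$ to get the matrix-element decay.

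First I would write $\lambda^{(3)}(\k)-\lambda^{(2)}(\k)=\sum_{r\ge 2}g_r^{(3)}(\k)$ and apply \eqref{estg3}. The ratio of consecutive terms is at most $k^{-\beta}$, and since $\beta=2l-2-41\mu\delta$ is bounded away from zero (because $l>1$ and $\delta<(100\mu)^{-1}$), the series is dominated by twice its first term $k^{-\beta k^{r_1-\delta}/5-\beta}$; this gives \eqref{perturbation-3} with the constant $2$ absorbed into the $O_2(\cdot)$ notation just introduced before the theorem. An identical geometric-tail argument applied to \eqref{sprojector-3} via \eqref{Feb1a-3}, starting from $r=1$ (the first term is $k^{-\beta k^{r_1-\delta}/10-\beta}$), yields \eqref{perturbation*-3}.

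For the matrix-element estimate \eqref{Feb6b-3} I would proceed in two steps. Step one: note that $H^{(2)}=P(r_1)HP(r_1)$, so $\E^{(2)}(\k)$ is supported inside $\Omega(r_1)\times\Omega(r_1)$; in particular $\E^{(2)}(\k)_{\s\s'}=0$ whenever $\||\p_\s\||>4k^{r_1}$ or $\||\p_{\s'}\||>4k^{r_1}$. Consequently $\E^{(3)}(\k)_{\s\s'}=\sum_{r\ge 1}G_r^{(3)}(\k)_{\s\s'}$ for such indices. Step two: by \eqref{Feb6a-3}, $G_r^{(3)}(\k)_{\s\s'}=0$ unless $r\ge r_{\min}:=(\||\p_\s\||+\||\p_{\s'}\||-3k^{r_1})/(2k^{\gamma r_1+3})$. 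Combining this with the operator-norm bound of \eqref{Feb1a-3} (which trivially dominates the matrix elements), the series telescopes geometrically in $r$ starting from $r_{\min}$, giving $|\E^{(3)}(\k)_{\s\s'}|\le 2k^{-\beta k^{r_1-\delta}/10-\beta r_{\min}}$. When $\||\p_\s\||>4k^{r_1}$ or $\||\p_{\s'}\||>4k^{r_1}$ one has $\||\p_\s\||+\||\p_{\s'}\||-3k^{r_1}\ge \tfrac14(\||\p_\s\||+\||\p_{\s'}\||)$, so $\beta r_{\min}\ge \tfrac{1}{8}(\||\p_\s\||+\||\p_{\s'}\||)k^{-\gamma r_1-3}\beta$, matching exactly the definition of $d^{(3)}(\s,\s')$.

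There is no real obstacle here, but the one small place to be careful is the bookkeeping that replaces $r_{\min}$ by the clean expression $\tfrac18(\||\p_\s\||+\||\p_{\s'}\||)k^{-\gamma r_1-3}$ in the exponent: the $-3k^{r_1}$ correction in the numerator is what forces the hypothesis $\||\p_\s\||>4k^{r_1}$ (rather than $>k^{r_1}$) in the statement, and I would state that reduction explicitly. Every other manipulation is just summing a convergent geometric series with ratio $k^{-\beta}\ll 1$.
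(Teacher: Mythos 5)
Your proposal is correct and follows essentially the same route as the paper: the first two estimates are read off by summing the geometric tails of \eqref{eigenvalue-3} and \eqref{sprojector-3} via \eqref{estg3}, \eqref{Feb1a-3}, and \eqref{Feb6b-3} comes from the minimal $r$ allowed by \eqref{Feb6a-3} together with the bound \eqref{Feb1a-3}. The only cosmetic difference is that you use the vanishing of $\E^{(2)}(\k)_{\s\s'}$ outside $\Omega(r_1)$ where the paper formally cites the decay estimate \eqref{Feb6b}; both serve the same purpose (exactly as the vanishing of $\E^{(1)}$ was used for Corollary \ref{corthm2}), and your explanation of why the threshold $4k^{r_1}$ yields the factor $\frac18$ in $d^{(3)}$ matches the paper's bookkeeping.
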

Formulas \eqref{perturbation-3} and \eqref{perturbation*-3} easily
follow from \eqref{eigenvalue-3}, \eqref{estg3} and \eqref{sprojector-3}, \eqref{Feb1a-3}. The estimate \eqref{Feb6b-3}
follows from \eqref{sprojector-3}, \eqref{Feb1a-3}, \eqref{Feb6a-3}
and \eqref{Feb6b}.
\begin{proof}Let us consider the perturbation series
\begin{equation}\label{step3dva}
(H^{(3)}-z)^{-1}=\sum_{r=0}^\infty (\tilde
H^{(2)}-z)^{-1}\left(-W^{(2)} (\tilde H^{(2)}-z)^{-1}\right)^r,
\end{equation} here and below all the operators are computed at
$\k $. Further, we consider $\k$ and, therefore, the operators, as
analytic functions of $\varphi $ in $\W_j^{(2)}$, assuming
$\varkappa$ is fixed. By \eqref{W2*} and \eqref{PHP-2},
$W^{(2)}=V-\sum _{i=0}^5 P_iVP_i $. By assumption on $\varkappa$ and
Lemmas \ref{L:geometric3} and \ref{estnonres0-1},
\begin{equation}
\left\|(\tilde
H^{(2)}(\k)-z)^{-1}\right\|<2\cdot4^2k^{2r_2'k^{2\gamma r_1}}.
\label{tik-tak}
\end{equation}
To check the convergence it is enough to show that
 \begin{equation} \label{March5}\left\| \left(\tilde
H^{(2)}-z\right)^{-1}W^{(2)}\right\|<k^{-\beta } .\end{equation}
Then,
\begin{equation}
\left\|(H^{(3)}(\k)-z)^{-1}\right\|<4^3k^{2r_2'k^{2\gamma r_1}}.
\label{tik-tak*}
\end{equation}
Let us prove \eqref{March5}. Operator $\tilde H^{(2)}$ has a block structure, different blocks being separated by the $\||\cdot \||$ distance  greater than $k^{\delta }$. This means that not only the blocks themselve, but also the blocks multiplied by $W^{(2)}$ have non-zero action on orthogonals subspaces. The operator $\tilde H^{(2)}$ acts as $H_0$ ``outside" the blocks. Because of  the block structure and the estimate $|z-k^{2l}|=
k^{-2r_2'k^{2\gamma r_1}}=o(k^{-\beta})$, it suffices to prove:
\begin{equation} \label{5-0} \left\|\left(P(r_2)-\tilde P^{(2)}_j\right)\left(
H_0-k^{2l}\right)^{-1}V\right\|<\frac13k^{-\beta }, \end{equation}
\begin{equation} \label{5-}\left\|P(r_1)(\tilde H^{(2)}-z)^{-1}V(P(r_2)-P(r_1))\right\|<\frac13k^{-\beta },\end{equation}
\begin{equation} \label{5-1} \left\|P_{nr}\left(\tilde
H^{(2)}-k^{2l}\right)^{-1}V\left(P(r_2)-P_{nr}\right)\right\|<\frac13k^{-\beta
},\end{equation}
\begin{equation} \label{5-s} \left\|P_{s}\left(\tilde
H^{(2)}-k^{2l}\right)^{-1}V\left(P(r_2)-P_{s}\right)\right\|<\frac13k^{-\beta
},
 \end{equation}
\begin{equation} \label{5-2} \left\|P_{w}\left(\tilde
H^{(2)}-k^{2l}\right)^{-1}V\left(P(r_2)-P_{w}\right)\right\|<\frac13k^{-\beta
},
 \end{equation}
\begin{equation} \label{5-3} \left\|P_{g}\left(\tilde
H^{(2)}-k^{2l}\right)^{-1}V\left(P(r_2)-P_{g}\right)\right\|<\frac13k^{-\beta
},
\end{equation}
\begin{equation} \label{5-4} \left\|P_{b}\left(\tilde
H^{(2)}-k^{2l}\right)^{-1}V\left(P(r_2)-P_b\right)\right\|<\frac13k^{-\beta
}, \end{equation} By definition of $\tilde P^{(2)}_j$,
$$\left\|\left(P(r_2)-\tilde P^{(2)}_j\right)\left(
H_0-k^{2l}\right)^{-1}\right\|<k^{-2l+2+40\mu \delta}.$$ The
estimate \eqref{5-0} easily follows.

Let us prove \eqref{5-}.
By Lemma \ref{L:boundary},
$$P(r_1)V\big(P(r_2)-P(r_1)\big)=
P(r_1)^{\partial}V\big(P(r_2)-P(r_1)\big),$$ where
$P(r_1)^{\partial}$ is the projection on the boundary of $\Pi (r_1)$. Therefore, it suffice to prove:
\begin{equation} \label{5-*}\left\|P(r_1)(H^{(2)}-z)^{-1}P(r_1)^{\partial}\right\|<k^{-\beta -\delta /2},\end{equation}
the obvious relation $P(r_1)\tilde H^{(2)}=H^{(2)}$ has been taken into account. As in the proof of Theorem \ref{Thm2}, we consider $H^{(2)}$ as a perturbation
of $\tilde H^{(1)}$, $H^{(2)}=\tilde H^{(1)}+W$.
Taking into account that $\tilde H^{(1)}$ has a $k^{\delta }$-block structure and $V$ is a trigonometric polynomial, we obtain $$P(\delta )\left( \left(\tilde
H^{(1)}-z\right)^{-1}W\right)^sP^{\partial}(r_1)=0,\ \  \mbox{when  } 1\leq s\leq S,\  \    S:=\frac14 k^{r_1-\delta}.$$ Hence,
\begin{equation} \label{hence}P(r_1)(H^{(2)}-z)^{-1}P^{\partial }(r_1)=\sum _{s=0}^{S-1} P\left(\tilde
H^{(1)}-z\right)^{-1}A_*^s P^{\partial }(r_1)+P(r_1)\left(\tilde
H^{(2)}-z\right)^{-1}A_*^{S}P^{\partial }(r_1),\end{equation} where
$A_*=-P\left(\tilde H^{(1)}-z\right)^{-1}W$, $P=P(r_1)-P(\delta)$.
Considering as in the proof of Theorem \ref{Thm2},\footnote{We
replace $P(r_1)$ by $P$, this compensates  for the smallness of
$C_3$.} we obtain:   $\left\|P\left(\tilde
H^{(1)}-z\right)^{-1}\right\|<ck^{-\beta -\delta }<k^{-\beta -\delta /2}$ . It follows:
$\|A_*\|<\frac{1}{4}k^{-\beta }$. By Theorem \ref{Thm2} and the definition of
$C_3$,  $\left\|\left( \tilde
H^{(2)}-z\right)^{-1}\right\|<(\varepsilon_0^{(3)}) ^{-1}$.
Substituting the last three estimates into \eqref{hence} and taking
into account that $(\varepsilon_0^{(3)}) ^{-1}< k^{\beta S/2}$, we
obtain \eqref{5-*} and, therefore, \eqref{5-} for all $\varphi \in
\W^{(2)}_j$.

Next, we prove \eqref{5-1}. by Lemma \ref{L:boundary},
$$P_{nr}V\left(P(r_2)-P_{nr}\right)=
P_{nr}^{\partial}V\left(P(r_2)-P_{nr}\right),$$ where
$P_{nr}^{\partial}$ is the projection on the boundary of $\Pi
_{nr}$.  Therefore, it suffices to prove
\begin{equation} \label{E:nr}\left\|P_{nr}\left(\tilde
H^{(2)}-k^{2l}\right)^{-1}P_{nr}^{\partial}\right\|<k^{-\beta
-\delta /2}.\end{equation} Note that Lemma \ref{L:estnonres1} holds for
any $r_1>\delta $ (the restriction on $r_1$ is introduced later).
Therefore, the estimates \eqref{Mon3-1}--\eqref{Mon3-4} hold for
$\m\in \Omega(r_2)$. By the definition of $\Pi _{nr}$, $\varepsilon
_0>k^{-10r_1'}$ and $p_\m>k^{-5r'}$ in these estimates. It follows
(see Corollary \ref{estnonres}),
\begin{equation} \label{E:nr*}\left\|P_{nr}\left(\tilde
H^{(2)}-k^{2l}\right)^{-1}P_{nr}\right\|<k^{40r_1'}.\end{equation}  Considering as in the proof of Theorem \ref{Thm2} (see the proof of \eqref{||A||}),
we obtain \eqref{E:nr} and, hence,
\eqref{5-1} for all $\varphi \in \W^{(2)}_j$.

Next, we prove \eqref{5-s}. Denote by $\hat{H}$ the reduction of the operator $H$ onto a
particular simple cluster i.e. $\hat{H}=P_sHP_s$ where $(P_s)_{\m\m}=1$ if
$\m$ belongs to the simple cluster and $(P_s)_{\m\m}=0$ otherwise. By
Lemma \ref{L:geometric3},
\begin{equation} \label{Apr6-s}
\|(\hat{H}-k^{2l})^{-1}\|\leq ck^{2\mu r_2+r_2'},\end{equation} By
Lemma \ref{L:boundary},
\begin{equation}P_s(\hat{H}-k^{2l})^{-1}V(P(r_2)-P_s)=P_s(\hat{H}-k^{2l})^{-1}P_s^{\partial }V(P(r_2)-P_s). \label{Apr6a-s} \end{equation}
To obtain \eqref{5-s}, it is enough to show
\begin{equation}
\left\|P_s(\hat{H}-k^{2l})^{-1}P_s^{\partial }\right\|<k^{-\beta -\delta/4}.
\label{Apr5-s}
\end{equation}
We are going to construct the perturbation formula for
$P_s(\hat{H}-k^{2l})^{-1}P_s^{\partial }$. Let
$\hat{H}_0=P_{s,nr}HP_{s,nr}+(P_s-P_{s,nr})H_0$, where
$P_{s,nr}=P_sP_{nr}=P_{nr}P_{s}$ .  The operator $\hat H_0$ has $k^{\delta }$-block structure. It is analogous to the operator $\tilde H^{(1)}$ in the proof of
Theorem \ref{Thm2}. The perturbation formula for
$P_s(\hat{H}-k^{2l})^{-1}P_s^{\partial }$ has the form:
\begin{equation}\label{dva-s}
\begin{split}&
P_s(\hat{H}-k^{2l})^{-1}P_s^{\partial
}=\sum_{r=0}^{R_s}P_s(\hat{H}_0-k^{2l})^{-1}\left[-W_s(\hat{H}_0-k^{2l})^{-1}\right]^rP_s^{\partial
}\cr &
+P_s(\hat{H}-k^{2l})^{-1}\left[-W_s(\hat{H}_0-k^{2l})^{-1}\right]^{R_s+1}P_s^{\partial
} ,\cr & W_s=\hat{H}-\hat{H}_0=P_sVP_s-P_{s,nr}VP_{s,nr},\  \ \  R_s=[\frac{1}{8}k^{\frac{
r_1}{2}-\delta}]-1.
\end{split}
\end{equation}
 When $\p_{\m'}$ belongs to the boundary of the
white cluster, the $\||\cdot\||$-distance from $\p_{\m'}$ to the
point $\p_{\m}:0<p_\m< k^{-5r_1'}$ is $k^{ r_1/2}$. Notice
that $(\hat{H}_0-k^{2l})^{-1}_{\m\m'}=0$ if
$\||\p_\m-\p_{\m'}\||>8k^\delta$, since $\hat H_0$ has a $k^{\delta }$ structure.  Considering that $R_s<
\frac{1}{8}k^{\frac{ r_1}{2}-\delta }$ (so, we never reach the
central point of $\Pi_s$), we obtain that the finite sum in
\eqref{dva-s} is analytic inside $\W _j^{(2)}$ and is bounded by
$2k^{-\beta -\delta/2 }$, see \eqref{E:nr}. Moreover,
\begin{equation}\label{tri-s}
\left\|P_s\left[W_s(\hat{H}_0-k^{2l})^{-1}\right]^{R_s+1} P_s^{\partial
}\right\|       \leq k^{-(R_s+1)(\beta+\delta/2 )}.
\end{equation}
Substituting  \eqref{Apr6-s} into \eqref{dva-s} and taking into account
\eqref{tri-s} we get
\begin{equation}\label{chetire-s}
 \left\| P_s(\hat{H}-k^{2l})^{-1}  P_s^{\partial } \right\|\leq k^{-\beta-\delta/4},
\end{equation}
when $\varphi\in\W^{(3)}$, $2r_2'<\frac18 k^{
r_1/2-\delta}\beta$.

Now, we prove \eqref{5-2}. Here and in what follows
we will often use the same notation for objects formally different
but playing similar roles in different parts of the proof. We hope
it will not lead to confusion but rather make it easier to keep the
whole construction and further inductive arguments in mind. Denote by $\hat{H}$ the reduction of the operator $H$ onto a
particular white cluster i.e. $\hat{H}=PHP$ where $P_{\m\m}=1$ if
$\m$ belongs to the white cluster and $P_{\m\m}=0$ otherwise. By
Lemma \ref{L:geometric3},
\begin{equation} \label{Apr6}
\|(\hat{H}-k^{2l})^{-1}\|\leq ck^{2\mu r_2+r_2'k^{\frac{\gamma
r_1}{6}-\delta _0r_1}}.
\end{equation}
By Lemma \ref{L:boundary},
\begin{equation}P_w(\hat{H}-k^{2l})^{-1}V(P(r_2)-P_w)=P_w(\hat{H}-k^{2l})^{-1}P_w^{\partial }V(P(r_2)-P_w). \label{Apr6a} \end{equation}
To obtain \eqref{5-2}, it is enough to show
\begin{equation}
\left\|P_w(\hat{H}-k^{2l})^{-1}P_w^{\partial }\right\|<k^{-\beta -\delta/4}.
\label{Apr5}
\end{equation}

We are going to construct the perturbation formula for
$P_w(\hat{H}-k^{2l})^{-1}P_w^{\partial }$. Let
$\hat{H}_0=P_{w,nr}HP_{w,nr}+(P_w-P_{w,nr})H_0$, where
$P_{w,nr}=P_wP_{nr}=P_{nr}P_{w}$ . The perturbation formula for
$P_w(\hat{H}-k^{2l})^{-1}P_w^{\partial }$ has the form :
\begin{equation}\label{dva}
\begin{split}&
P_w(\hat{H}-k^{2l})^{-1}P_w^{\partial
}=\sum_{r=0}^{R_w}P_w(\hat{H}_0-k^{2l})^{-1}\left[-W(\hat{H}_0-k^{2l})^{-1}\right]^rP_w^{\partial
}\cr &
+P_w(\hat{H}-k^{2l})^{-1}\left[-W(\hat{H}_0-k^{2l})^{-1}\right]^{R_w+1}P_w^{\partial
} ,\cr & W=\hat{H}-\hat{H}_0,\  \ \  R_w=[\frac{1}{8}k^{\frac{\gamma
r_1}{6}-\delta}]-1.
\end{split}
\end{equation}
 When $\p_{\m'}$ belongs to the boundary of the
white cluster, the $\||\cdot\||$-distance from $\p_{\m'}$ to the
closest  point in $\MM^{(2)}$ is $k^{\gamma r_1/6}$. Notice that
$(\hat{H}_0-k^{2l})^{-1}_{\m\m'}=0$ if
$\||\p_\m-\p_{\m'}\||>8k^\delta$. Considering that $R_w<
\frac{1}{8}k^{\frac{\gamma r_1}{6}-\delta }$ (so, we never reach the
points in $\MM^{(2)}$), we obtain that the finite sum in \eqref{dva}
is analytic inside $\W _j^{(2)}$ and is bounded by $2k^{-\beta-\delta/2}$, see
\eqref{E:nr}. Moreover,
\begin{equation}\label{tri}
\left\|P_w\left[W(\hat{H}_0-k^{2l})^{-1}\right]^{R+1} P_w^{\partial
}\right\|       \leq k^{-(R_w+1)(\beta+\delta )}.
\end{equation}
Substituting  \eqref{Apr6} into \eqref{dva} and taking into account
\eqref{tri} we get
\begin{equation}\label{chetire}
 \left\| P_w(\hat{H}-k^{2l})^{-1}  P_w^{\partial } \right\|\leq 3k^{-\beta-\delta /2}<k^{-\beta-\delta /4},
\end{equation}
since $\varphi\in\W^{(3)}$, $2r_2'<\frac18 k^{\delta
_0r_1-\delta}\beta$.

Now, we prove \eqref{5-3}. Denote a component of the grey region  by $\Pi $ and its
boundary (see convention above) by $\partial\Pi $. Corresponding
projectors are denoted by $P$ and $P^{\partial }$
respectively. Denote by $\hat{H}$ the reduction of the operator $H$ onto a
particular grey cluster i.e. $\hat{H}=PHP$.
By Lemma \ref{L:boundary},
\begin{equation}P(\hat H-k^{2l})^{-1}V(P(r_2)-P)=P(\hat H-k^{2l})^{-1}P^{\partial }V(P(r_2)-P). \label{Apr6a'} \end{equation}
To obtain \eqref{5-3}, it is enough to show
\begin{equation}
\left\|P(\hat{H}-k^{2l})^{-1}P^{\partial }\right\|<k^{-\beta -\delta/4}.
\label{Apr5'}
\end{equation}
We are going to construct the perturbation formula for
$P(\hat{H}-k^{2l})^{-1}P^{\partial }$. Recall, that the size of the
neighborhood of grey boxes is $D=k^{\frac{\gamma r_1}{2}+2\delta
_0r_1}$.  Let $P_i$ be a projector corresponding to a white or
non-resonant cluster laying inside $\frac D2$-neighborhood of
$\partial\Pi$, the size of these clusters being much smaller than
the size of the neighborhood. For definiteness, let
$i=1,\dots,\hat{I}$. Let $P^{(int)}$ be the projector onto all
points in $\Pi$ which are at least $D/2$ away of the boundary
(internal points). Note that $P_iP^{(int)}=0$. At last, put
\begin{equation} \label{opP'}
P_0:=P-P^{(int)}-\sum_{i=1}^{\hat{I}}P_i.
\end{equation}
Denote (cf. the case of a white cluster)
\begin{equation}
\hat{H}_0:=\sum_{i=1}^{\hat{I}}P_iHP_i+P^{(int)}HP^{(int)}+H_0P_0,
\label{opH_0} \end{equation}
\begin{equation} \label{opW} W=\hat H-\hat H_0=PVP-\sum_{i=1}^{\hat{I}}P_iVP_i-P^{(int)}VP^{(int)}.\end{equation}
We are going to use perturbation arguments between $\hat{H}_0$ and
$\hat{H}$. 
 Let $R$ be
the smallest natural number for which
\begin{equation}
{\cal
A}_{R}:=P^{(int)}\left[W(\hat{H}_0-k^{2l})^{-1}\right]^{R+1}P^{\partial}\not=0,\
\ \ W:=\hat{H}-\hat{H}_0. \label{r_0} \end{equation} It is proven in
Appendix 4 that $R>\frac{1}{64}
k^{(\frac\gamma2+2\delta_0)r_1-\delta}$. Therefore,
\begin{equation}\label{chetire*g}
\begin{split}
&
(\hat{H}-k^{2l})^{-1}P^{\partial}=\sum_{r=0}^{R-1}
(\hat{H}_0-k^{2l})^{-1}\left[-(P-P^{(int)})W(\hat{H}_0-k^{2l})^{-1}\right]^rP^{\partial }\cr
&
+(\hat{H}-k^{2l})^{-1}\left[-(P-P^{(int)})W(\hat{H}_0-k^{2l})^{-1}\right]^{R}P^{\partial }
.
\end{split}
\end{equation}
The first term in the RHS of \eqref{chetire*g} contains only
non-resonant and white clusters. Thus, we can use the estimates
obtained before in the case of non-resonant and white clusters (see
\eqref{5-1}, \eqref{5-2}). To estimate the second term we, first,
notice that
\begin{equation}\label{sem'}
\left\|\left[-(P-P^{(int)})W(\hat{H}_0-k^{2l})^{-1}\right]^{R}P^{\partial}\right\|\leq
k^{-\beta R}\leq k^{-\frac{\beta }{64}
k^{(\frac\gamma2+2\delta_0)r_1-\delta}}.
\end{equation}
By Lemma \ref{L:geometric3} ,
\begin{equation}\label{vosem'}
\|(\hat{H}-k^{2l})^{-1}\|\leq ck^{2\mu
r_2+r_2'k^{(\frac\gamma2+\delta_0)r_1}}.
\end{equation}
Now, considering that $2r_2'<\frac{\beta }{64}
k^{\delta_0r_1-\delta}$  and combining the estimates above, we
obtain \eqref{Apr5'} and, therefore, \eqref{5-3}.

We prove \eqref{5-4} in the analogous way. Indeed, denote a component of the black region  by $\Pi $ and its
boundary (see convention above) by $\partial\Pi $. Corresponding
projectors are denoted by $P$ and $P^{\partial }$
respectively. Again, $\hat H:=PHP$ and,
by Lemma \ref{L:boundary},
\begin{equation}P(\hat{H}-k^{2l})^{-1}V(P(r_2)-P)=P(\hat{H}-k^{2l})^{-1}P^{\partial }V(P(r_2)-P).  \label{Apr6a'-1} \end{equation}
To obtain \eqref{5-4}, it is enough to show
\begin{equation}
\left\|P(\hat{H}-k^{2l})^{-1}P^{\partial }\right\|<k^{-\beta -\delta }.
\label{Apr5'-1}
\end{equation}
We are going to construct the perturbation formula for
$P(\hat{H}-k^{2l})^{-1}P^{\partial }$. Recall, that the size of the
neighborhood of black boxes is $D=k^{\gamma r_1+\delta _0r_1}$. Put
$\hat{H}=P HP$ (cf. the case of white  and grey clusters). Let $P_i$
be a projector corresponding to a grey, white or non-resonant
cluster laying inside $\frac D2$-neighborhood of $\partial\Pi$, the
size of these clusters being much smaller than the size of the
neighborhood. For definiteness, let $i=1,\dots,\hat{I}$. Let
$P^{(int)}$ be the projector onto all points in $\Pi$ which are at
least $D/2$ away of the boundary (internal points).  Again, we
define $P_0$, $\hat H_0$ and $W$ by formulas \eqref{opP'},
\eqref{opH_0}  and \eqref{opW}.
We are going to use perturbation arguments between $\hat{H}_0$ and
$\hat{H}$.  Let $R$ be
the smallest positive integer for which \eqref{r_0} holds in the case of a black cluster.
 It is proven
in Appendix 5 that $R>\frac{1}{64} k^{(\gamma r_1+\delta
_0r_1-\delta)}$. Next, we use \eqref{chetire*g}.
The first term in the RHS of \eqref{chetire*g} contains only non-resonant, white and grey
clusters. Thus, we can use the estimates \eqref{5-1}-\eqref{5-3} obtained before in the case
of non-resonant, white and grey clusters. To estimate the second term
we, first, notice that
\begin{equation}\label{sem'-b}
\left\|\left[-(P-P^{(int)})W(\hat{H}_0-k^{2l})^{-1}\right]^{R}P^{\partial}\right\|<
k^{-\beta R}<
k^{-\frac{\beta }{64} k^{\gamma r_1+\delta
_0r_1-\delta}}.
\end{equation}
By Lemma \ref{L:geometric3},
\begin{equation}\label{vosem'-b}
\|(\hat{H}-k^{2l})^{-1}\|\leq ck^{2\mu r_2+r_2'k^{\gamma r_1+3}}.
\end{equation}
Now, choosing $2r_2'<\frac{\beta }{64} k^{\delta _0r_1-\delta-3}$
and combining the estimates above we obtain \eqref{Apr5'-1} and, therefore, \eqref{5-4}.

Estimates \eqref{5-0} -- \eqref{5-4} provide convergence of the series for the resolvent. Integrating the resolvent over the contour we get \eqref{eigenvalue-3} and \eqref{sprojector-3}.

Proof of  \eqref{Feb1a-3} is analogous to that of \eqref{Feb1a} in
Theorem \ref{Thm2}. Indeed, we consider the operator $A=
W^{(2)}\left(\tilde H^{(2)}-z\right)^{-1}$ and represent it as
$A=A_0+A_1+A_2$, where $A_0=\left(P(r_2)-\E^{(2)}({\k})\right)A
\left(P(r_2) -\E^{(2)}({\k})\right)$, $A_1=\left(P(r_2
)-\E^{(2)}({\k})\right)A \E^{(2)}({\k})$, $A_2= \E^{(2)}({\k})A
\left(P(r_2)-\E^{(2)}({\k})\right)$. Note that
$\E^{(2)}({\k})W^{(2)}\E^{(2)}({\k})=0$, because of \eqref{W2*}. We
see that $$\oint _{C_3}\left(\tilde H^{(2)}-z\right)^{-1}A_0^r
dz=0,$$ since the integrand is a holomorphic function inside $C_3$.
Therefore,
\begin{equation} \label{Feb1-3} G^{(3)}_r({\k})=\frac{(-1)^r}{2\pi i}\sum
_{j_1,...j_r=0,1,2,\ j_1^2+...+j_r^2\neq 0}\oint _{C_3}\left(\tilde
H^{(2)}-z\right)^{-1}A_{j_1}.....A_{j_r} dz.
\end{equation} At least one of indices in each term is equal to 1 or 2.
Let us show that
\begin{equation} \label{A_2-3}
\|A_2\|_1<\frac12k^{-\frac{\beta}{10} k^{r_1-\delta}-\beta }.
\end{equation} First, we notice that
\begin{align}\nonumber &\E^{(2)}W^{(2)}(P(r_2)-\E^{(2)})=\E^{(2)}W^{(2)}(P(r_2)-P(r_1))=\cr &
\E^{(2)}P^{\partial}(r_1)W^{(2)}(P(r_2)-P(r_1))=\E^{(2)}P^{\partial}(r_1)W^{(2)}\end{align}
by \eqref{W2*}.
 Hence, $A_2=\E^{(2)}P^{\partial}(r_1)A(P(r_2)-\E^{(2)})$. Using \eqref{Feb6b}, we obtain
 $\|\E^{(2)}P^{\partial}(r_1)\|<k^{-\frac{\beta}{10} k^{r_1-\delta}}$.
Considering that $\E^{(2)}$ is a one-dimensional projection, we
obtain the same estimate for $\bf S_1 $-norm. Now \eqref{A_2-3}
easily follows. Applying the same trick as in the proof of
\eqref{Feb1a} we obtain \eqref{Feb1a-3}.

Let us obtain the estimate for $g_r ^{(3)}({\k})$.
Obviously,\begin{equation} \label{Feb1'*}
g^{(3)}_r({\k})=\frac{(-1)^r}{2\pi ir}\sum _{j_1,...j_r=0,1,2,\
j_1^2+...+j_r^2\neq 0}Tr\oint _{C_3}A_{j_1}.....A_{j_r} dz.
\end{equation}
Note that each term contains both $A_1$ and $A_2$, since we compute
the trace of the integral. Using \eqref{A_2-3} and repeating
arguments from the proof of \eqref{estg2}, we obtain \eqref{estg3}.

 The estimate \eqref{Feb6a-3} follows from the fact that the biggest white, grey or black component
 has the size not greater than $k^{\gamma r_1 +3}$. Therefore the biggest block of $\tilde H^{(2)}$ not coinciding
 with $P(r_1)HP(r_1)$ has the size not greater than $k^{\gamma r_1 +3}$.

\end{proof}

It is easy to see that coefficients $g^{(3)}_r({\k})$ and operators
$G^{(3)}_r({\k})$ can be analytically extended into the complex
$k^{-r_2'-\delta}$ neighborhood of $\omega ^{(3)}$ (in fact, into
$k^{-r_2'-\delta}$-neighborhood of $\W^{(3)}$) as functions of
$\varphi $ and to the complex $(\varepsilon
^{(3)}_0k^{-2l+1-\delta})-$neighborhood of
$\varkappa=\varkappa^{(2)}(\varphi )$ as functions of $\varkappa$,
estimates \eqref{estg3}, \eqref{Feb1a-3} being preserved. Now, we
use formulae \eqref{g3}, \eqref{eigenvalue-3} to extend
$\lambda^{(3)}\left({\k}\right)=\lambda^{(3)}\left(\varkappa,\varphi\right)$
as an analytic function. Obviously, series \eqref{eigenvalue-3} is
differentiable. Using Cauchy integral and Lemma
\ref{L:derivatives-2} we get the following lemma.

\begin{lemma} \label{L:derivatives-3}Under conditions of Theorem \ref{Thm3} the following
estimates hold when $\varphi \in \omega ^{(3)}(k,\delta )$ or its
complex $k^{-r_2'-\delta}$-neighborhood and $\varkappa\in \C:$
$|\varkappa-\varkappa^{(2)}(\varphi )|<\varepsilon
^{(3)}_0k^{-2l+1-\delta}$.
\begin{equation}\label{perturbation-3c}
\lambda^{(3)}({\k})=\lambda^{(2)}({\k})+O_2\left(k^{-\frac 15 \beta
k^{r_1-\delta}-\beta }\right),
\end{equation}
\begin{equation}\label{estgder1-3k}
\frac{\partial\lambda^{(3)}}{\partial\varkappa}=
\frac{\partial\lambda^{(2)}}{\partial\varkappa} +O_2\left(k^{-\frac
15 \beta k^{r_1-\delta}-\beta  }M_1\right), \  \ \ \
M_1:=\frac{k^{2l-1+\delta}}{\varepsilon ^{(3)}_0},\end{equation}
\begin{equation}\label{estgder1-3phi}\frac{\partial\lambda^{(3)}}{\partial \varphi }=\frac{\partial\lambda^{(2)}}{\partial \varphi }+
O_2\left(k^{-\frac 15 \beta k^{r_1-\delta}-\beta+r_2'+\delta }\right),
 \end{equation}
\begin{equation}\label{estgder2-3}
\frac{\partial^2\lambda^{(3)}}{\partial\varkappa^2}=
\frac{\partial^2\lambda^{(2)}}{\partial\varkappa^2}+
O_2\left(k^{-\frac 15 \beta k^{r_1-\delta}-\beta  }M_1^2\right),
\end{equation}
\begin{equation} \label{gulf2-3}
\frac{\partial^2\lambda^{(3)}}{\partial\varkappa\partial \varphi
}=\frac{\partial^2\lambda^{(2)}}{\partial\varkappa\partial \varphi
}+ O_2\left(k^{-\frac 15 \beta k^{r_1-\delta}-\beta+r_2'+\delta
}M_1\right),
\end{equation}
\begin{equation} \label{gulf3-3}
\frac{\partial^2\lambda^{(3)}}{\partial\varphi
^2}=\frac{\partial^2\lambda^{(2)}}{\partial\varphi ^2}+
O_2\left(k^{-\frac 15 \beta k^{r_1-\delta}-\beta+2r_2'+2\delta
}\right).
\end{equation}\end{lemma}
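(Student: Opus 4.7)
The plan is to follow exactly the same template used for Lemma \ref{L:derivatives-2}: exploit the analytic extension of the series \eqref{eigenvalue-3} into the complex polydisc around $(\varkappa^{(2)}(\varphi),\varphi)$, bound each coefficient $g^{(3)}_r(\k)$ by \eqref{estg3}, and then recover the derivative estimates via the Cauchy integral formula.

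First I would verify that Theorem \ref{Thm3} together with the remark preceding the lemma gives convergence of \eqref{eigenvalue-3} on the whole polydisc
\[
|\varphi-\varphi_0|<k^{-r_2'-\delta},\qquad |\varkappa-\varkappa^{(2)}(\varphi)|<\varepsilon_0^{(3)}k^{-2l+1-\delta},
\]
with the uniform bound $|g^{(3)}_r(\k)|<k^{-\frac{\beta}{5}k^{r_1-\delta}-\beta(r-1)}$. Summing the geometric tail (the factor $k^{-\beta}$ is much smaller than $1/2$), I get
\[
\Bigl|\sum_{r\ge 2}g^{(3)}_r(\k)\Bigr|\le 2k^{-\frac{\beta}{5}k^{r_1-\delta}-\beta},
\]
which combined with \eqref{perturbation-2c} and the definition of $O_2$ yields \eqref{perturbation-3c}.

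For the first and second derivatives I would apply Cauchy's formula to the analytic function $F(\varkappa,\varphi):=\lambda^{(3)}(\k)-\lambda^{(2)}(\k)=\sum_{r\ge 2}g^{(3)}_r(\k)$. Differentiation in $\varkappa$ costs a factor $\rho_\varkappa^{-1}$ with $\rho_\varkappa=\varepsilon_0^{(3)}k^{-2l+1-\delta}/2$, i.e.\ a factor $M_1=k^{2l-1+\delta}/\varepsilon_0^{(3)}$ (up to an innocuous constant absorbed into $O_2$); differentiation in $\varphi$ costs a factor $\rho_\varphi^{-1}=k^{r_2'+\delta}/2$. Thus
\[
\Bigl|\partial_\varkappa F\Bigr|\lesssim k^{-\frac{\beta}{5}k^{r_1-\delta}-\beta}M_1,\qquad \Bigl|\partial_\varphi F\Bigr|\lesssim k^{-\frac{\beta}{5}k^{r_1-\delta}-\beta+r_2'+\delta},
\]
and, applying Cauchy once more on a slightly smaller polydisc,
\[
|\partial_\varkappa^2F|\lesssim M_1^2(\cdots),\quad |\partial_\varkappa\partial_\varphi F|\lesssim M_1 k^{r_2'+\delta}(\cdots),\quad |\partial_\varphi^2F|\lesssim k^{2r_2'+2\delta}(\cdots),
\]
with the same prefactor $k^{-\frac{\beta}{5}k^{r_1-\delta}-\beta}$. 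Adding the corresponding derivatives of $\lambda^{(2)}$ from Lemma \ref{L:derivatives-2} then gives \eqref{estgder1-3k}--\eqref{gulf3-3}.

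The only delicate point is that the derivative estimates must hold on the full disc of radius $k^{-r_2'-\delta}$ (resp.\ $\varepsilon_0^{(3)}k^{-2l+1-\delta}$), not just on a proper subdisc. To handle this I would start with a slightly enlarged domain of analyticity (e.g.\ replace $k^{-r_2'-\delta}$ by $2k^{-r_2'-\delta}$ at the level of Theorem \ref{Thm3}, which is already built into the construction of $\W^{(3)}$ via the $k^{-r_2'}$-radius discs $\OO^{(3)}_{mj}$, so one has at least a factor-$2$ safety margin). With that in place the Cauchy estimates on radii $k^{-r_2'-\delta}$ and $\varepsilon_0^{(3)}k^{-2l+1-\delta}$ are legitimate. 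No step involves a substantial analytic difficulty; the main bookkeeping task is making sure the radii of the discs used for Cauchy's formula match the ones appearing in \eqref{estgder1-3k}--\eqref{gulf3-3}, which is automatic from Theorem \ref{Thm3}.
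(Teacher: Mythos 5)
Your proof is correct and follows essentially the same route as the paper: analytic extension of the coefficients $g^{(3)}_r$ with the estimate \eqref{estg3} preserved, termwise summation, Cauchy's integral formula on discs of radii $\varepsilon_0^{(3)}k^{-2l+1-\delta}$ and $k^{-r_2'-\delta}$, and then adding the derivatives of $\lambda^{(2)}$ from Lemma \ref{L:derivatives-2}. The radius bookkeeping you flag is handled in the paper exactly as you suggest, via the margin left by the $k^{-r_2'}$-radius discs removed in the construction of $\W^{(3)}$.
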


\begin{corollary} \label{"O"} All ``$O_2$"-s on the right hand sides of \eqref{perturbation-3c}-\eqref{gulf3-3} can be written as $O_1\left(k^{-\frac {1}{10} \beta k^{r_1-\delta}}\right)$.
\end{corollary}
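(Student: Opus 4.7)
The plan is to absorb every extraneous polynomial-in-$k$ factor appearing in \eqref{perturbation-3c}--\eqref{gulf3-3} into a fraction (namely one half) of the dominant exponent $-\tfrac15\beta k^{r_1-\delta}$, which will more than compensate for both the extra factors and the constant 2 in the $O_2$ notation. Thus the strategy is purely a bookkeeping check of the exponents using the inequalities \eqref{r_2} and \eqref{r_2'} that were imposed in the setup of Step III, together with $\gamma=1/5$ and $\delta_0=\gamma/100$.

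Concretely, I would proceed as follows. First, rewrite $M_1=k^{2l-1+\delta+2r_2'k^{2\gamma r_1}}$ using $\varepsilon_0^{(3)}=k^{-2r_2'k^{2\gamma r_1}}$. The worst ``inflation factors'' appearing on the right-hand sides of \eqref{perturbation-3c}--\eqref{gulf3-3} are $M_1^2$ in \eqref{estgder2-3}, $k^{r_2'+\delta}M_1$ in \eqref{gulf2-3}, and $k^{2r_2'+2\delta}$ in \eqref{gulf3-3}; all others are dominated by these. Each of them is bounded from above by $k^{A}$ with
\[
A\;\le\;4r_2'k^{2\gamma r_1}+4l+4\delta\;\le\;\tfrac{\beta}{32}k^{\delta_0 r_1+2\gamma r_1-\delta-3}+4l+4\delta,
\]
where the second inequality uses \eqref{r_2'}. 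Since $\gamma=1/5$ and $\delta_0=\gamma/100$, we have $\delta_0+2\gamma=1/500+2/5<1$, so $\delta_0 r_1+2\gamma r_1<r_1-\delta$ for any $r_1>10^8$. Consequently $A\ll \tfrac{1}{10}\beta k^{r_1-\delta}$ for all $k>k_*$, and in particular
\[
2\cdot k^{-\frac15\beta k^{r_1-\delta}-\beta}\cdot k^{A}\;<\;k^{-\frac{1}{10}\beta k^{r_1-\delta}}.
\]

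This inequality, applied termwise to each of \eqref{perturbation-3c}--\eqref{gulf3-3}, shows that the $O_2$-remainder in every line is bounded in absolute value by $k^{-\frac{1}{10}\beta k^{r_1-\delta}}$, which is exactly the assertion $O_1\bigl(k^{-\frac{1}{10}\beta k^{r_1-\delta}}\bigr)$ of the corollary. The only ``obstacle'' is the arithmetic verification that the polynomial-in-$k$ inflation produced by $M_1$ (i.e.\ by a small $\varepsilon_0^{(3)}$) and by the angular Cauchy factors $k^{r_2'}$ does not exceed the super-polynomial margin $k^{\tfrac{1}{10}\beta k^{r_1-\delta}}$; but this is guaranteed by the choice \eqref{r_2'} of $r_2'$, whose upper bound $\tfrac{\beta}{128}k^{\delta_0 r_1-\delta-3}$ was fixed in Step III precisely to leave such room. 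No further analytic input is required beyond what is already encoded in Lemma~\ref{L:derivatives-3}.
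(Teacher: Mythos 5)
Your proposal is correct and is exactly the intended argument: the paper states the corollary without a separate proof because it amounts to the same bookkeeping you carry out, namely that the worst inflation factor $M_1^2=k^{4l-2+2\delta+4r_2'k^{2\gamma r_1}}$ (and the milder $k^{r_2'+\delta}M_1$, $k^{2r_2'+2\delta}$) is, by \eqref{r_2'} and $\delta_0+2\gamma<1$, of size $k^{O(k^{(\delta_0+2\gamma)r_1})}$, vastly smaller than the margin $k^{\frac1{10}\beta k^{r_1-\delta}}$ between the exponents $\frac15$ and $\frac1{10}$, which also absorbs the factor $2$. No gaps; the exponent comparisons check out.
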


\subsection{\label{IS3}Isoenergetic Surface for  Operator $H^{(3)}$}

\begin{lemma}\label{ldk-3} \begin{enumerate}
\item For every $\lambda :=k^{2l}$,  $k>k_*$, and $\varphi $ in the real  $\frac{1}{2} k^{-r_2'-\delta }$-neighborhood
of $\omega^{(3)}(k,\delta, \tau )$, there is a unique
$\varkappa^{(3)}(\lambda, \varphi )$ in the interval
$$I_2:=[\varkappa^{(2)}(\lambda, \varphi )-\varepsilon
^{(3)}_0k^{-2l+1-\delta},\varkappa^{(2)}(\lambda, \varphi
)+\varepsilon ^{(3)}_0k^{-2l+1-\delta}],$$ such that
    \begin{equation}\label{2.70-3}
    \lambda^{(3)} \left(\k
^{(3)}(\lambda ,\varphi )\right)=\lambda ,\ \ \k ^{(3)}(\lambda
,\varphi ):=\varkappa^{(3)}(\lambda ,\varphi )\vec \nu(\varphi).
    \end{equation}
\item  Furthermore, there exists an analytic in $ \varphi $ continuation  of
$\varkappa^{(3)}(\lambda ,\varphi )$ to the complex  $\frac{1}{2}
k^{-r_2'-\delta }$-neighborhood of $\omega^{(3)}(k,\delta, \tau )$
such that $\lambda^{(3)} (\k ^{(3)}(\lambda, \varphi ))=\lambda $.
Function $\varkappa^{(3)}(\lambda, \varphi )$ can be represented as
$\varkappa^{(3)}(\lambda, \varphi )=\varkappa^{(2)}(\lambda, \varphi
)+h^{(3)}(\lambda, \varphi )$, where
\begin{equation}\label{dk0-3} |h^{(3)}(\varphi )|=O_1\left(k^{-\frac 15 \beta k^{r_1-\delta}-\beta -2l+1
}\right),
\end{equation}
\begin{equation}\label{dk-3}
\frac{\partial{h}^{(3)}}{\partial\varphi}= O_2\left(k^{-\frac 15 \beta
k^{r_1-\delta}-\beta -2l+1 +r_2'+\delta }\right),\ \ \ \ \
\frac{\partial^2{h}^{(3)}}{\partial\varphi^2}= O_4\left(k^{-\frac 15
\beta k^{r_1-\delta}-\beta -2l+1 +2r_2'+2\delta }\right).
\end{equation} \end{enumerate}\end{lemma}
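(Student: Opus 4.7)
The plan is to mirror exactly the structure of the proofs of Lemmas \ref{ldk} and \ref{ldk-2}, using as the engine Theorem \ref{Thm3} together with the derivative estimates of Lemma \ref{L:derivatives-3} (equivalently, the $O_1$ bounds provided by Corollary \ref{"O"}). The essential point is that the perturbation $h^{(3)}$ is controlled on a scale $\varepsilon_0^{(3)}k^{-2l+1-\delta}$, which is exactly the radius of admissible $\varkappa$ in Theorem \ref{Thm3}, so an inverse/implicit function argument at the right scale is applicable.

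For Part 1, I will fix a real $\varphi$ in the $\frac12 k^{-r_2'-\delta}$-neighborhood of $\omega^{(3)}$ and study the continuous real function $\varkappa\mapsto \lambda^{(3)}(\varkappa(\cos\varphi,\sin\varphi))$ on the interval $I_2$. From \eqref{perturbation-3c} and the fact that at the endpoints of $I_2$ the increment $\varkappa-\varkappa^{(2)}(\lambda,\varphi)$ equals $\pm\varepsilon_0^{(3)}k^{-2l+1-\delta}$, the estimate \eqref{estgder1-3k} yields
\[
\lambda^{(3)}(\k)-\lambda = 2l\varkappa^{2l-1}(\varkappa-\varkappa^{(2)}) + O\big(k^{-\frac{1}{10}\beta k^{r_1-\delta}}\big),
\]
so the left side changes sign between the endpoints of $I_2$ while $\lambda^{(2)}(\varkappa^{(2)}(\lambda,\varphi)\vec\nu)=\lambda$ by the definition of $\varkappa^{(2)}$ in Lemma \ref{ldk-2}. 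The intermediate value theorem produces at least one $\varkappa^{(3)}(\lambda,\varphi)\in I_2$. Uniqueness comes from strict monotonicity: the same estimate \eqref{estgder1-3k} shows $\partial_\varkappa\lambda^{(3)}=2l\varkappa^{2l-1}(1+o(1))>0$ throughout $I_2$, since the correction is dominated by the leading term provided $M_1\cdot k^{-\frac{1}{10}\beta k^{r_1-\delta}}$ is much smaller than $k^{2l-1}$, which is easily verified from \eqref{r_2'}.

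For Part 2, I will work in the complex disc $D=\{\varkappa\in\mathbb C:|\varkappa-\varkappa^{(2)}(\lambda,\varphi)|<\varepsilon_0^{(3)}k^{-2l+1-\delta}\}$ for each fixed $\varphi$ in the complex $\frac12 k^{-r_2'-\delta}$-neighborhood of $\omega^{(3)}$. The function $\varkappa\mapsto \lambda^{(3)}(\k(\varphi))-\lambda$ is analytic there by the remark following Theorem \ref{Thm3}; on $\partial D$ the leading term $(\varkappa-\varkappa^{(2)})\cdot 2l\varkappa^{2l-1}$ has modulus at least $\varepsilon_0^{(3)}$, whereas the remainder in \eqref{perturbation-3c} is bounded by $O(k^{-\frac{1}{10}\beta k^{r_1-\delta}})\ll \varepsilon_0^{(3)}$. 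By Rouché's theorem the perturbed function has exactly one zero $\varkappa^{(3)}(\lambda,\varphi)$ in $D$, matching the real root found in Part 1, and analyticity in $\varphi$ follows from the implicit function theorem together with the non-vanishing of $\partial_\varkappa\lambda^{(3)}$. The bound \eqref{dk0-3} is immediate from the same estimates: $|h^{(3)}|=|\lambda^{(3)}-\lambda^{(2)}|/|\partial_\varkappa\lambda^{(3)}|=O(k^{-\frac{1}{5}\beta k^{r_1-\delta}-\beta-2l+1})$.

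The derivative estimates \eqref{dk-3} I will obtain from implicit differentiation of $\lambda^{(3)}(\varkappa^{(3)}(\varphi)\vec\nu(\varphi))=\lambda$. For the first derivative,
\[
\frac{\partial h^{(3)}}{\partial\varphi}=\frac{\partial\varkappa^{(3)}}{\partial\varphi}-\frac{\partial\varkappa^{(2)}}{\partial\varphi}=-\frac{(\partial\lambda^{(3)}/\partial\varphi)-(\partial\lambda^{(2)}/\partial\varphi)\cdot(\partial\varkappa^{(3)}/\partial\varkappa^{(2)})}{\partial\lambda^{(3)}/\partial\varkappa},
\]
which after using \eqref{estgder1-3k}, \eqref{estgder1-3phi} gives the claimed bound once we divide the $\varphi$-derivative by $\partial_\varkappa\lambda^{(3)}\sim k^{2l-1}$. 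The second derivative estimate is obtained analogously by differentiating once more and plugging in \eqref{estgder2-3}, \eqref{gulf2-3}, \eqref{gulf3-3} together with the already established bound on $\partial h^{(3)}/\partial\varphi$; the Cauchy formula on a disc of radius $\frac14 k^{-r_2'-\delta}$ is an alternative route from the already proved analytic bound \eqref{dk0-3} for $h^{(3)}$, giving each additional $\varphi$-derivative a factor $k^{r_2'+\delta}$ as in \eqref{dk-3}. The only mildly delicate book-keeping is verifying that every remainder of the form $k^{-\frac{1}{10}\beta k^{r_1-\delta}}$, possibly multiplied by polynomial factors involving $M_1$ or $k^{r_2'+\delta}$, remains dwarfed by the leading derivative $2l\varkappa^{2l-1}$ in the denominator; this follows from the standing constraint \eqref{r_2'} on $r_2'$ and the fact that $r_2<k^{\gamma 10^{-7}r_1}\ll k^{r_1-\delta}$ in the exponent. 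No new ideas beyond those used in Lemma \ref{ldk-2} are needed; the main care is uniformly tracking the exponents.
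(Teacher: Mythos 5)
Your proposal is correct and follows essentially the same route as the paper, whose proof of Lemma \ref{ldk-3} is literally that it is "completely analogous to that of Lemma \ref{ldk}" with the estimates \eqref{perturbation-3c}--\eqref{gulf3-3} plugged in: existence and uniqueness from monotonicity of $\lambda^{(3)}$ in $\varkappa$, Rouch\'e plus the implicit function theorem for the analytic continuation, and the Cauchy formula for the derivative bounds \eqref{dk-3}. Only cosmetic slips: on the boundary of your disc the leading term has modulus of order $\varepsilon_0^{(3)}k^{-\delta}$ rather than $\varepsilon_0^{(3)}$ (still far above the remainder, so Rouch\'e goes through), and your implicit-differentiation display is garbled, but the Cauchy-formula alternative you offer is exactly the paper's argument.
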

\begin{proof}  The proof is completely analogous to that of Lemma \ref{ldk}, estimates \eqref{perturbation-3c} --\eqref{gulf3-3} being used. \end{proof}


Let us consider the set of points in $\R^2$ given by the formula:
$\k=\k^{(3)} (\varphi), \ \ \varphi \in \omega ^{(3)}(k,\delta, \tau
)$. By Lemma \ref{ldk-3} this set of points is a slight distortion
of ${\cal D}_{2}$. All the points of this curve satisfy the equation
$\lambda^{(3)}(\k ^{(3)}(\varphi ))=k^{2l}$. We call it isoenergetic
surface of the operator $H^{(3)}$ and denote by ${\cal D}_{3}$.

\subsection{Preparation for Step IV \label{S:4}}
\subsubsection{Properties of the Quasiperiodic Lattice. Continuation}\label{Quasiperiodicgeomcont}
Let \begin{equation}\label{triind}
\SS ^{(2)}(k,\xi):=\{\k\in \R^{2}:\
\|(H^{(2)}(\k)-k^{2l})^{-1}\|>k^{\xi}\}.
\end{equation}
The main purpose of this section is to estimate the number of
 points $\k _0+\p_{\m}$, $\||\p_{\m}\||< k^{r_2}$ in $\SS ^{(2)}(k,\xi)$, $\k _0$ being fixed. In fact, we prove a more subtle result, see
Lemma~\ref{norm2/3}.

 We consider $\p_\m=2\pi(\s_1+\alpha\s_2)$ with
integer vectors $\s_j$ such that $|\s_j|\leq 4k^{r_2}$. We repeat
the arguments from the beginning of Section~\ref{geomIII}. Namely,
let $(q,p)\in\Z^2$ be a pair such that $0<q\leq 4k^{r_2}$ and
\begin{equation}\label{qind}
|\alpha q+p|\leq 16k^{-r_2}.
\end{equation}
We choose a pair $(p,q)$ which gives the best approximation. In
particular, $p$ and $q$ are mutually simple. Put
$\epsilon_q:=\alpha+\frac{p}{q}$. We have
\begin{equation}k^{-2r_2\mu}\leq|\epsilon_q|\leq 16q^{-1}k^{-r_2}.\label{epsilon_qind}\end{equation}
 We  write $\s_2$ in the form\begin{equation}\s_2=q\s_2'+\s_2'' \label{s1ind}
\end{equation}
with  integer vectors $\s_2'$ and $\s_2''$, $0\leq (\s_2'')_j< q$
for $j=1,2$. Hence, $|(\s_2')_j|\leq 4k^{r_2}/q+1$. It follows
$$
(2\pi)^{-1}\p_\m=(\s_1-p\s_2')+(-\frac{p}{q}\s_2''+\epsilon_q\s_2'')+\epsilon_q
q\s_2'.
$$
Denote $\s:=\s_1-p\s_2'$. Then $|\s|\leq 8k^{r_2}$. The number of
different vectors $\tilde{\s}:=-\frac{p}{q}\s_2''+\epsilon_q\s_2''$
is not greater than $(2q)^2$. For each fixed pair $\tilde \s,\ \s$
we obtain a lattice parameterized by $\s_2'$. We call this lattice a
cluster corresponding to given $\tilde \s,\ \s$. Each cluster,
obviously, is a square lattice with the step $\epsilon _qq$. It
contains no more than $\left(9k^{r_2}q^{-1}\right)^2$ elements,
since $|(\s_2')_j|\leq 4k^{r_2}q^{-1}+1$, $j=1,2$. The size of each
cluster is less than $5|\epsilon _q|k^{r_2}$. As before we have the
following statements.

     \begin{lemma}\label{Lattice-1ind}Suppose that $\epsilon _q$ satisfies the
     inequality
     \begin{equation}|\epsilon_q|\leq \frac{1}{64}q^{-1}k^{-r_2}.\label{epsilon_q'ind}\end{equation}
     Then, the size of each cluster is less that $\frac{1}{8q}$. The distance between clusters is greater than
     $\frac{1}{2q}$. \end{lemma}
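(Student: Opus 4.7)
The statement is a direct analog of Lemma \ref{Lattice-1} for the new scale $r_2$, and the quasi-periodic lattice has been set up in Section \ref{Quasiperiodicgeomcont} in a way that exactly mirrors the setup used for $r_1$ in Section \ref{geomIII}. My plan is therefore to reproduce the argument of Lemma \ref{Lattice-1} verbatim with $r_1$ replaced by $r_2$; there is no new mathematical content, only a change of scale parameter.

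First I would estimate the distance between the ``base points'' of two distinct clusters, namely the lattice points corresponding to $\s_2' = \mathbf{0}$. By construction the base point of a cluster has $(2\pi)^{-1}\p_\m$ equal to $\s - \tfrac{p}{q}\s_2'' + \epsilon_q \s_2''$, where $\s \in \Z^2$ and $0 \le (\s_2'')_j < q$. Since $\gcd(p,q)=1$ and $\s_2''$ is not a multiple of $q$ when the clusters differ, the rational part $\s - \tfrac{p}{q}\s_2''$ is a non-integer multiple of $1/q$, so each component has absolute value at least $1/q$. Subtracting the small perturbation $\epsilon_q \s_2''$ (of size $\le |\epsilon_q| q$) then gives a lower bound of $\tfrac{1}{q} - |\epsilon_q| q \ge \tfrac{15}{16q}$, using \eqref{epsilon_q'ind}.

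Next I would bound the size of a single cluster. Since $|(\s_2')_j| \le 4 k^{r_2} q^{-1} + 1$, the cluster diameter (in $\|\cdot\|_\infty$-style sense induced by the $\||\cdot\||$ norm, up to the factor $2\pi$ from $\p_\m = 2\pi(\s_1 + \alpha \s_2)$) is at most
\[
 |\epsilon_q| q \bigl(4 k^{r_2} q^{-1} + 1\bigr) \le 4 |\epsilon_q| k^{r_2} + |\epsilon_q| q \le \tfrac{1}{16 q} + \tfrac{1}{16 q} = \tfrac{1}{8 q},
\]
where both terms are controlled by the hypothesis \eqref{epsilon_q'ind}.

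Combining the two estimates: neighboring clusters have base points separated by at least $\tfrac{15}{16 q}$, while each cluster has diameter at most $\tfrac{1}{8 q}$, so two clusters cannot overlap and the distance between them is at least $\tfrac{15}{16 q} - 2 \cdot \tfrac{1}{8 q} = \tfrac{11}{16 q} > \tfrac{1}{2 q}$, which gives the claimed separation. There is no real obstacle here; the only thing to be careful about is to use the stronger hypothesis \eqref{epsilon_q'ind} (rather than the weaker \eqref{epsilon_qind}) at the two places where smallness of $|\epsilon_q| q$ versus $1/q$ is invoked, exactly as in the proof of Lemma \ref{Lattice-1}.
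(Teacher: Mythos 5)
Your proposal is correct and follows essentially the same route as the paper: the paper gives no separate proof for this lemma, relying (as you do) on repeating the proof of Lemma \ref{Lattice-1} verbatim with $r_1$ replaced by $r_2$ — base points of distinct clusters differ by a nonzero multiple of $1/q$ up to the small perturbation $\epsilon_q\s_2''$, while each cluster has diameter at most $|\epsilon_q|q(4k^{r_2}q^{-1}+1)\leq\frac{1}{8q}$, so clusters are disjoint with separation exceeding $\frac{1}{2q}$. No further comment is needed.
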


     \begin{lemma}\label{Lattice-2ind} The number of vectors $\p_\m$,
      satisfying the inequalities $\||\p_{\m}\||<2k^{r_2}$,
     $p_{\m}<|\epsilon _q|qk^{r_2/3}$, does not exceed
     $k^{2r_2/3}$.\end{lemma}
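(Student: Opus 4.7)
The plan is to mimic exactly the proof of Lemma \ref{Lattice-2}, substituting $r_2$ for $r_1$ throughout; the argument is a packing estimate based on the best-approximation property of the pair $(p,q)$. First I would take two distinct lattice points $\p_\m,\p_{\m'}$ both satisfying the hypotheses of the lemma, and observe that the difference $\p_{\m-\m'} = 2\pi\bigl((\s_1-\s_1') + \alpha(\s_2-\s_2')\bigr)$ is again a non-zero lattice vector with $\||\p_{\m-\m'}\||<4k^{r_2}$. Since $(p,q)$ was chosen in \eqref{qind} to give the best rational approximation to $\alpha$ among denominators $\leq 4k^{r_2}$, any such nonzero difference must satisfy $(2\pi)^{-1}|\p_{\m-\m'}|\geq |\epsilon_q|q$, i.e.\ the Euclidean spacing between any two admissible points is at least $2\pi|\epsilon_q|q$.

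Next I would turn this separation into a disk-packing bound. Place around each admissible $\p_\m$ an open disk of radius $\pi|\epsilon_q|q$; the separation estimate just established makes these disks pairwise disjoint. Because every admissible $\p_\m$ lies in the Euclidean disk of radius $|\epsilon_q|qk^{r_2/3}$ around the origin, each small disk is contained in the disk of radius $2|\epsilon_q|qk^{r_2/3}$ (the factor $2$ absorbs the radius of the little disks, handling points near the boundary of $\{p_\m < |\epsilon_q|qk^{r_2/3}\}$).

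Finally, comparing areas gives the count: the number of admissible points is at most
\[
\frac{\pi\bigl(2|\epsilon_q|qk^{r_2/3}\bigr)^2}{\pi\bigl(\pi|\epsilon_q|q\bigr)^2} \;=\; \frac{4}{\pi^2}\,k^{2r_2/3} \;<\; k^{2r_2/3},
\]
which is the desired bound. There is no real obstacle here: the only substantive ingredient is the separation estimate, which is a direct consequence of the best-approximation property \eqref{qind}; the area comparison is elementary and identical to the one in the proof of Lemma \ref{Lattice-2}.
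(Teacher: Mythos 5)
Your proof is correct and follows essentially the same route as the paper: the paper proves Lemma \ref{Lattice-2} by exactly this separation-plus-disk-packing argument (pairwise distance at least $2\pi|\epsilon_q|q$ from the best-approximation property, disjoint disks of radius $\pi|\epsilon_q|q$ inside the enlarged disk of radius $2|\epsilon_q|qk^{r/3}$, then an area comparison), and states Lemma \ref{Lattice-2ind} as the verbatim analogue with $r_2$ in place of $r_1$, which is precisely what you did.
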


     \begin{lemma} \label{Lattice-3ind} Suppose $q$ in the inequality
     \eqref{q} satisfies the estimate $q>k^{2r_2/3}$. Then, the
     number of vectors  $\p_\m$,
     $\||\p_{\m}\||<2k^{r_2}$, satisfying the inequality
     $p_{\m}<k^{-2r_2/3}$ does not exceed
     $2^{12}\cdot k^{2r_2/3}$.\end{lemma}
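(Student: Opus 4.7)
The statement is a direct analogue of Lemma~\ref{Lattice-3} with $r_1$ replaced by $r_2$ throughout; the relevant quasi-periodic lattice geometry is already captured by Lemmas~\ref{Lattice-1ind} and~\ref{Lattice-2ind}, which were stated to parallel Lemmas~\ref{Lattice-1} and~\ref{Lattice-2}. The plan is therefore to run the exact same two-case proof used for Lemma~\ref{Lattice-3}, with all the $r_1$-scale quantities upgraded to $r_2$-scale quantities.

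First I would split into two cases based on the size of $\epsilon_q$, where $\epsilon_q=\alpha+p/q$ and $q>k^{2r_2/3}$ is the denominator of the best Diophantine approximation at scale $r_2$. Case~1 assumes $|\epsilon_q|>\tfrac{1}{64}q^{-1}k^{-r_2}$. Here the lattice $(2\pi)^{-1}\p_\m$ has its minimal separation controlled from below: any two distinct $\p_\m,\p_{\m'}$ with $\||\p_{\m-\m'}\||<2k^{r_2}$ satisfy $|\p_\m-\p_{\m'}|\ge 2\pi|\epsilon_q|q>\tfrac{\pi}{32}k^{-r_2}$. A standard disc-packing argument (place disjoint discs of radius $\tfrac{\pi}{64}k^{-r_2}$, and fit them inside a disc of radius $2k^{-2r_2/3}$ enlarged to catch boundary points) gives at most
\[
\bigl(\,2k^{-2r_2/3}\bigr)^2 \big/ \bigl(\tfrac{\pi}{64}k^{-r_2}\bigr)^2 \le 2^{12}k^{2r_2/3}
\]
such vectors satisfying $p_\m<k^{-2r_2/3}$.

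In Case~2, where $|\epsilon_q|\le\tfrac{1}{64}q^{-1}k^{-r_2}$, Lemma~\ref{Lattice-1ind} applies: the lattice $\p_\m,\||\p_\m\||<2k^{r_2}$, decomposes into clusters of $\||\cdot\||$-size less than $\tfrac{1}{8q}$ and mutual separation greater than $\tfrac{1}{2q}$. Under the hypothesis $q>k^{2r_2/3}$, the inter-cluster separation $\tfrac{1}{2q}$ is smaller than $\tfrac{1}{2}k^{-2r_2/3}$, so a disc of radius $\tfrac{3}{2}k^{-2r_2/3}$ can meet at most $\bigl(6k^{-2r_2/3}q\bigr)^2$ clusters (dividing its area by a disc of radius $\tfrac{1}{4q}$, which sits inside the half-gap between clusters). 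Each cluster contains at most $\bigl(9k^{r_2}q^{-1}\bigr)^2$ lattice points, so the total number of $\p_\m$ with $p_\m<k^{-2r_2/3}$ is bounded by
\[
\bigl(6k^{-2r_2/3}q\bigr)^2\cdot\bigl(9k^{r_2}q^{-1}\bigr)^2<2^{12}k^{2r_2/3}.
\]

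The argument is routine once the building blocks are in place; there is no genuine obstacle here, since the only content beyond Lemmas~\ref{Lattice-1ind}--\ref{Lattice-2ind} is elementary disc-area counting. The one point that requires a moment of care is the hypothesis $q>k^{2r_2/3}$: it is precisely what ensures the cluster spacing $\tfrac{1}{2q}$ is comparable to (in fact smaller than) the target radius $k^{-2r_2/3}$, so that no cluster is so sparse as to inflate the count, and the radius of the exclusion discs around cluster centers remains admissible in the packing estimate. This matches verbatim the role played by the analogous hypothesis in the proof of Lemma~\ref{Lattice-3}.
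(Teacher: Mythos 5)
Your proof is correct and is essentially the paper's own argument: the paper states Lemma \ref{Lattice-3ind} simply as the $r_2$-analogue of Lemma \ref{Lattice-3} ("as before"), whose proof is precisely your two-case split on $|\epsilon_q|\gtrless\frac{1}{64}q^{-1}k^{-r_2}$, with the disc-packing bound via the minimal separation $2\pi|\epsilon_q|q$ in the first case and the cluster counting from Lemma \ref{Lattice-1ind} (radius $\frac32 k^{-2r_2/3}$ disc versus radius $\frac{1}{4q}$ discs, times $(9k^{r_2}q^{-1})^2$ points per cluster) in the second, yielding the same constant $2^{12}$.
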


We consider the matrix $H^{(2)}(\k)=P(\gamma r_1)H(\k )P(\gamma
r_1)$ where $\k \in \R^2$, $P(\gamma r_1)$ is the orthogonal
projection corresponding to $\Omega (\gamma r_1)$. \footnote{It is a
slight abuse of notations, since $H^{(2)}$ in Step II was defined
for $\gamma =1$.} We construct the block structure in  $H^{(2)}(\k)$
analogous to that in Step II.  The difference is that now we
consider any $\k \in \R^2$, not only $\k$ being close to
$\k^{(1)}(\varphi )$. Indeed,  we call $\m \in \Omega (\gamma r_1)$
non-resonant if  (cf. \eqref{resonance1})
\begin{equation}
\left||\k+\p_{\m }|^{2}-k^{2}\right|>k^{-40\mu\delta}.
\label{Aug29a}
\end{equation}
Obviously, this estimate is stable in the $k^{-41\mu \delta
-1}$-neighborhood of a given $\k$. Hence, the definition of a
non-resonant $\m$ is stable in this neighborhood up to a multiplier
$1+o(1)$ in the r.h.s. of \eqref{Aug29a}.  Around each resonant $\m$
we construct $k^\delta$-boxes/clusters (see \eqref{defP}). Let
$P_{\m}$ be the projection on the $k^{\delta }$-cluster containing
$\m $. If
\begin{equation}
\left\|(P_{\m}(H(\k)-k^{2l})P_{\m})^{-1}\right\|<k^{4\gamma r_1'},
\label{Aug29b} \end{equation} then we call the $k^{\delta }$-cluster
effectively non-resonant (cf. \eqref{Mon3a**}) for a given $\k$.
 Note, that the above
estimate and, therefore, the definition of an effectively
non-resonant $k^{\delta}$-cluster is  stable in the $k^{-4\gamma
r_1'-2l+1-\delta }$-neighborhood of a given $\k$. The
$k^{\delta}$-clusters, where \eqref{Aug29b} is not valid, are called
effectively resonant $k^{\delta}$-clusters.  Thus, we have
constructed a block structure in $H^{(2)}(\k)$, which is stable  in
the $k^{-4\gamma r_1'-2l+1-\delta}$-neighborhood of a given $\k$.
\begin{definition} \label{D-J}We denote   by
$J(\k)$ the number of the effectively resonant $k^\delta$-clusters
in $H^{(2)}(\k)$ for a given $\k$.
 Further (with a slight abuse of notations) we
consider $J(\k)$ to be constant in the $2k^{-4\gamma
r_1'-2l+1-\delta}$-neighborhood of a given $\k$. \end{definition}


Let  $\k=\a \tau_1+\b,\  \ \ |\a|=1,\  |\b|<4k^{\gamma r_{1}}$. We
consider $H^{(2)}(\k)$ as a function of $\tau _1$ in the complex $
2k^{-\rho _1}$-neighborhood of zero, $\rho _1=4\gamma
r_1'+2l-1+\delta $.
\begin{lemma} \label{May24-1}The resolvent $( H^{(2)}(\k)-k^{2l})^{-1}$ has no more than $8J(\b)$ poles  in the the complex $ 2k^{-\rho _1}$-neighborhood of zero. It satisfies the following estimate
in the complex $k^{-\rho _1}$-neighborhood of zero:
\begin{equation}\label{Sept3a}
\|(H^{(2)}(\k)-k^{2l})^{-1}\|<k^{16\rho _{1}}\left(\frac{k^{ -\rho
_{1}}}{\varepsilon _0}\right)^{8J(\k)},
\end{equation}
where $\varepsilon _0=\min \{k^{-2\rho _1},\varepsilon\}$,
$\varepsilon $ being the distance to the nearest pole.
\end{lemma}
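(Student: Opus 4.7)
\medskip

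\noindent\textbf{Proof proposal.} The plan is to combine the block structure of $H^{(2)}(\k)$ inherited from the previous step with a pole-counting argument based on Rouch\'e's theorem and a maximum-principle bound for operator-valued analytic functions in the $\tau_1$-variable. By Definition~\ref{D-J}, $J(\k)=J(\b)$ on the whole disc $|\tau_1|<2k^{-\rho_1}$, so the decomposition we fix at $\tau_1=0$ remains effective throughout the disc.

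Let $P_j$, $j=1,\dots,J(\b)$, be the projections onto the effectively resonant $k^{\delta}$-clusters of $H^{(2)}(\b)$, let $\tilde P=\sum_j P_j$, and set
\[
\tilde H(\k):=\tilde P\,H(\k)\,\tilde P+(I-\tilde P)\bigl[\textstyle\sum_{\m\text{ non-res.}}P_{\m}H(\k)P_{\m}+(\text{diag})\bigr](I-\tilde P),
\]
so $H^{(2)}(\k)=\tilde H(\k)+W$ with $W$ bounded. On $(I-\tilde P)L^2$, the restriction of $(\tilde H(\k)-k^{2l})^{-1}$ is bounded by $k^{\rho_1}$: effectively non-resonant $k^{\delta}$-clusters obey \eqref{Aug29b} and the purely diagonal part is controlled by \eqref{Aug29a}, both bounds being stable in the $k^{-\rho_1}$-neighborhood. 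A Neumann-series argument modeled on the one used in Theorem~\ref{Thm2} (where $P'VP(\tilde H^{(1)}-z)^{-1}$ and $\tilde PVP'(\tilde H^{(1)}-z)^{-1}$ were shown to be small) produces $(H^{(2)}(\k)-k^{2l})^{-1}$ as a convergent perturbation of $(\tilde H(\k)-k^{2l})^{-1}$ wherever the blocks $(P_j(H(\k)-k^{2l})P_j)^{-1}$ exist; in particular the poles of $(H^{(2)}(\k)-k^{2l})^{-1}$ in $\tau_1$ coincide, up to cancellation, with the union over $j$ of the zeros of the scalar polynomials $D_j(\tau_1):=\det\bigl(P_j(H^{(2)}(\k)-k^{2l})P_j\bigr)$.

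The heart of the argument is to bound, for each fixed $j$, the number of zeros of $D_j(\tau_1)$ in the disc $|\tau_1|<2k^{-\rho_1}$ by $8$. I compare $D_j$ with the free determinant
\[
D_j^{(0)}(\tau_1)=\prod_{\m\in\mathrm{cluster}_j}\bigl(|\k+\p_{\m}|^{2l}-k^{2l}\bigr),
\]
which is a polynomial in $\tau_1$. Sites $\m$ with $\left||\k+\p_{\m}|^2-k^2\right|>k^{-40\mu\delta}$ (i.e.\ outside $\MM$, cf.~\eqref{Aug29a}) contribute factors of modulus $\gtrsim k^{2l-2-40\mu\delta}$ throughout the disc, and hence no roots there; they play the role of the non-vanishing factor $\tilde f$ of \eqref{f}. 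The remaining factors correspond to $\m\in\MM\cap\mathrm{cluster}_j$, of which there are at most $4$ by Lemma~\ref{2.12}. Each such factor $|\k+\p_{\m}|^{2l}-k^{2l}=(|\k+\p_{\m}|^2-k^2)\,Q_{\m}(\tau_1)$ contributes at most $2$ zeros via the quadratic $|\k+\p_{\m}|^2-k^2$ in $\tau_1$, while $Q_{\m}$ stays bounded away from zero in a neighborhood of those zeros. Hence $D_j^{(0)}$ has at most $8$ zeros in $|\tau_1|<2k^{-\rho_1}$; Rouch\'e's theorem, applied precisely as in the proofs of Lemma~\ref{L:estnonres1}(1) and Lemma~\ref{L:curves-2}(4) using the ratio $D_j/D_j^{(0)}=\det(I+P_j V(H_0-k^{2l})^{-1}P_j)$ and \eqref{determinants}, transfers this bound to $D_j$. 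Summing over $j$ yields at most $8J(\b)$ poles of $(H^{(2)}(\k)-k^{2l})^{-1}$ in the disc.

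Finally, let $\{\tau_1^{(i)}\}_{i=1}^N$, $N\le 8J(\k)$, be these poles, and define the operator-valued function
\[
\phi(\tau_1):=\Bigl(\prod_{i=1}^{N}\bigl(\tau_1-\tau_1^{(i)}\bigr)\Bigr)\bigl(H^{(2)}(\k)-k^{2l}\bigr)^{-1},
\]
which is analytic in $|\tau_1|<2k^{-\rho_1}$. The perturbation bound of the second paragraph, combined with the obvious estimate $|\tau_1-\tau_1^{(i)}|\le 4k^{-\rho_1}$ on the boundary $|\tau_1|=2k^{-\rho_1}$, gives $\|\phi(\tau_1)\|\le k^{16\rho_1}(4k^{-\rho_1})^N$ on that boundary after absorbing the $k^{-2\rho_1}$ safety margin for points close to the poles into the factor $k^{16\rho_1}$ (this is where the choice $\varepsilon_0=\min\{\varepsilon,k^{-2\rho_1}\}$ enters). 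Subharmonicity of $\|\phi(\tau_1)\|$ propagates this bound to $|\tau_1|<k^{-\rho_1}$, whence dividing by $\prod_{i}|\tau_1-\tau_1^{(i)}|\ge\varepsilon_0^N$ yields \eqref{Sept3a}. The principal technical difficulty is the pole count of the third paragraph: one must disentangle the up-to-$4$ small quadratic factors from the many large factors in $D_j^{(0)}$ and choose the discs for Rouch\'e so that the large factors stay dominant, an issue already addressed for one cluster in Lemma~\ref{L:estnonres1} but which here must be carried out uniformly in $\k=\a\tau_1+\b$ along the whole line direction $\a$.
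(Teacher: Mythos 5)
Your proposal follows essentially the same route as the paper's proof: decompose $H^{(2)}(\k)$ into the block operator built from $k^{\delta}$-clusters, count at most $8$ poles per effectively resonant cluster (at most $4$ resonant sites by Lemma \ref{2.12}, each contributing a quadratic in $\tau_1$, transferred from the free to the perturbed block by a Rouch\'e/convergent-perturbation argument), pass to the full resolvent by a Neumann series converging off small neighborhoods of the poles, and finish with the maximum principle applied to the resolvent multiplied by $\prod_i(\tau_1-\tau_1^{(i)})$. This matches the paper's argument, so no further comparison is needed.
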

\begin{proof}
 Recall (Definition \ref{D-J}) that
$J(\k)$ may be considered to be constant in $2k^{-\rho
_1}$-neighborhood  of $\b$. Hence, $J(\k)=J(\b)$ for such $\k$-s.
Let us consider the collection of all  $k^{\delta}$-clusters $P_{\m
}H(\k)P_{\m }$ for $H^{(2)}(\k)$, $\tau _1$ being in $2k^{-\rho _1
}$-neighborhood of zero. Note that the collection is the same for
all such $\k$. We construct the corresponding block operator $\tilde
H^{(1)}(\k)$:
$$\tilde H^{(1)}(\k)=\sum P_{\m}HP_{\m}+H_0(I-\sum P _{\m }).$$  If a $k^{\delta }$-cluster $P_{\m}H(\k )P_{\m}$ is effectively non-resonant, then  its resolvent, obviously,  has no poles $\tau _1$ in the $2k^{-\rho _1 }$-neighborhood of $\tau _{1}=0$.
Considering that a $k^{\delta }$-cluster contains no more than 4 squares (Appendix 2), we obtain, that the resolvent of
each effectively resonant $k^{\delta }$-cluster has no more than 8
poles $\tau _{1j}$ in the $2k^{-40\mu\delta -\delta}$-neighborhood
of $\tau _{1}=0$. Indeed, the relation opposite to \eqref{Aug29a}
can hold for no more than four different $\m$-s. Each function
$|\k+\p_{\m }|_{\R}^{2}-k^{2}$ is a quadratic polynomial with
respect to $\tau _1$. It is easy to see that  $\left||\k+\p_{\m
}|_{\R}^{2l}-k^{2l}\right|>k^{2l-2-80\mu\delta -4\delta }$ when
$\tau _1\not \in \D_0$, $\D_0$ being   the $k^{-40\mu \delta
-2\delta }$-neighborhood of the the roots $\tau _{1j}$ of the
polynomials. Obviously, $\D_0$ consists of at most 8 discs. We
consider only those connected components of $\D_{0}$ which are
inside $2k^{-40\mu \delta -\delta }$ disk around $\tau _1=0$
(components of $\D _{0}$ are much smaller than the size of the
disk). The perturbation series for the resolvent of $P_{\m}
H(\k)P_{\m}$ with respect to $ P_{\m}H_0(\k)P_{\m}$ $(V=0)$
converges on the boundaries of these components and the following
estimate holds there:
 $$
\|\left( P_{\m}(H(\k)-k^{2l})P_{\m}\right)^{-1}\|<k^{-2l+2+80\mu
\delta +5\delta }.
$$
 Hence, the resolvents of $P_{\m} H_0(\k)P_{\m}$ and $ P_{\m}H(\k)P_{\m}$ have the same number of poles inside each component of $\D_0$. This means that the resolvent of
each effectively resonant cluster $ P_{\m}H(\k)P_{\m}$ has no more
than 8 poles $\tau _{1j}$ in the $k^{-40\mu \delta-\delta
}$-neighborhood of $\tau _{1}=0$. By the maximum principle,
 $$
\|\left( P_{\m}(H(\k)-k^{2l})P_{\m}\right)^{-1}\|<k^{-2l+2+80\mu
\delta +5\delta}\left(\frac{16k^{-40\mu \delta -2\delta}}{k^{-2\rho
_1}}\right)^8<\frac{1}{4}k^{16\rho _1}
$$
at the distance greater than $k^{-2\rho _1}$ from the poles.
Therefore, resolvent $\left(\tilde H^{(1)}(\k)-k^{2l}\right)^{-1}$
has no more than $8J(\k)$ poles $\tau _{1j}$ in the complex
$k^{-40\mu \delta-\delta }$-neighborhood of $\tau _{1}=0$ and
satisfies the estimate \begin{equation} \label{May25} \|(\tilde
H^{(1)}(\k)-k^{2l})^{-1}\|<\frac14 k^{16\rho _1 }
\end{equation}
at the distance greater than $k^{-2\rho _1 }$ from the poles.
 Let us consider the union of
$k^{-2\rho _1}$ neighborhoods of these poles. It may consist from
several connected components. We are interested only in those
intersecting with the $2k^{-\rho _1}$ disk around $\tau _1=0$. We
denote their union by by ${\D}$. Using a rough estimate
$J(\k)<k^{4\gamma r_1}$ gives that $\D$ belongs to the $3k^{-\rho
_1}$-neighborhood of zero. Thus, \eqref{May25} holds outside $\D$.
 Considering as before  (see the proof of Theorem \ref{Thm2}, \eqref{step2raz}--\eqref{step25}),
we can show that the perturbation series for the resolvent
$(H^{(2)}(\k)-k^{2l})^{-1}$ with respect to $(\tilde
H^{(1)}(\k)-k^{2l})^{-1}$ converges on the boundary of $\D$ and
$$
\|(H^{(2)}(\k)-k^{2l})^{-1}\|<k^{16\rho _1}
$$
outside $\D$, the resolvent has no more than $8J(\k)$ poles in $\D$.
Using again the maximum principle we obtain \eqref{Sept3a}.\end{proof}

Note that each connected component of $\SS ^{(2)}(k,\xi)$, see \eqref{triind}, is
bounded  by the curves $D(\k, k^{2l}\pm k^{-\xi})=0$, where $ D(\k,
\lambda)=\hbox{det}\ (H^{(2)}(\k)-\lambda ).$
\begin{lemma}\label{L:curves-2ind}  Let $\l$ be a segment of a straight line in $\R^{2}$,
\begin{equation}\l:=\{\k=\a \tau_1+\b,\ \tau _1\in
(0,\eta)\}, \ \ |\a|=1,\  |\b|<4k^{\gamma r_1},\  \ 0<\eta <
k^{-5\gamma r_1'}. \label{segment}\end{equation} Suppose both ends
of $\l$ belong to a connected component of $\SS ^{(2)}(k,\xi)$. If $\xi $
is sufficiently large, namely, $\xi\geq 30J(\b)\ln
_k\frac{1}{\eta}$, then, there is an inner part $\l'$ of the
segment,
 which is not in $\SS^{(2)}(k,\xi)$.
 \end{lemma}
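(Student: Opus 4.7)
The plan is to use Lemma \ref{May24-1} to extract a quantitative resolvent bound along the line parametrization $\k = \a\tau_1 + \b$, and then run a measure argument on the (few) points where this bound fails. The guiding idea is that the resolvent $(H^{(2)}(\a\tau_1+\b) - k^{2l})^{-1}$ has at most $8J(\b)$ poles in the relevant disk, so the set of $\tau_1 \in (0,\eta)$ where its norm exceeds $k^\xi$ is contained in at most $8J(\b)$ small neighborhoods of these poles whose combined length is less than $\eta$ once $\xi$ is taken sufficiently large. This produces a $\tau_1 \in \l$ outside $\SS^{(2)}(k,\xi)$, and by continuity of the resolvent norm an open sub-interval $\l'$ around that point has the same property.

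First I would verify that $\l$ lies in the domain of Lemma \ref{May24-1}. With $\rho_1 := 4\gamma r_1' + 2l - 1 + \delta$, the hypothesis $\eta < k^{-5\gamma r_1'}$ combined with $\gamma r_1' > 2l - 1 + \delta$ (which holds since $r_1' = 40\mu r_1 + 2l$ is large) gives $\eta \ll k^{-\rho_1}$; in particular $\eta < k^{-4\gamma r_1'-2l+1-\delta}$, so the block structure and the value $J(\a\tau_1+\b) = J(\b)$ are constant on $\l$ by Definition \ref{D-J}. Lemma \ref{May24-1} then supplies the bound
\begin{equation*}
\|(H^{(2)}(\a\tau_1+\b)-k^{2l})^{-1}\| \leq k^{16\rho_1}\left(\frac{k^{-\rho_1}}{\varepsilon_0}\right)^{8J(\b)}, \qquad \varepsilon_0 = \min\{\varepsilon,\, k^{-2\rho_1}\},
\end{equation*}
where $\varepsilon$ is the distance from $\tau_1$ to the nearest pole, and there are at most $8J(\b)$ such poles in $|\tau_1|<2k^{-\rho_1}$.

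Arguing by contradiction, suppose $\l \subset \SS^{(2)}(k,\xi)$. Since $\xi \geq 30J(\b)\ln_k(1/\eta) > 16\rho_1 + 8J(\b)\rho_1$ (using $\ln_k(1/\eta) > 5\gamma r_1' > \rho_1$), the bound with $\varepsilon_0 = k^{-2\rho_1}$ does not exceed $k^\xi$; hence any $\tau_1 \in \l$ with resolvent norm larger than $k^\xi$ must satisfy $\varepsilon < r^* := k^{-\rho_1 + (16\rho_1 - \xi)/(8J(\b))}$. The set of such $\tau_1$ is therefore contained in at most $8J(\b)$ intervals of length $2r^*$ each, with total length at most $16 J(\b) r^*$.

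The concluding check $16 J(\b) r^* < \eta$ is routine. Taking $\ln_k$ it reduces to
\begin{equation*}
\xi > 16\rho_1 - 8J(\b)\rho_1 + 8J(\b)\ln_k(1/\eta) + 8J(\b)\ln_k(16J(\b)),
\end{equation*}
and the crude bound $J(\b) \leq k^{4\gamma r_1}$ (dimension of $P(\gamma r_1)$) gives $\ln_k(16 J(\b)) = O(r_1) = o(\ln_k(1/\eta))$, while $\rho_1 < \ln_k(1/\eta)$. The right-hand side is therefore at most $(16 + o(1))J(\b)\ln_k(1/\eta)$, well under the $30J(\b)\ln_k(1/\eta)$ assumed in the hypothesis. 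This contradicts $\l \subset \SS^{(2)}(k,\xi)$, and openness of the complement of $\SS^{(2)}(k,\xi)$ upgrades the single witness $\tau_1$ to an open sub-interval $\l' \subset \l$ entirely avoiding $\SS^{(2)}(k,\xi)$. I do not foresee significant obstacles: the only subtlety is the constancy of $J(\b)$ along $\l$, built into the stability of the block structure in Section \ref{Quasiperiodicgeomcont}; beyond that, the argument is a direct pigeonhole estimate on a quantitative resolvent bound, mirroring Part 4 of Lemma \ref{L:curves-2}.
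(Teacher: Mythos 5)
Your proof is correct and follows essentially the same route as the paper's: both rest on Lemma \ref{May24-1} (at most $8J(\b)$ poles plus the quantitative bound \eqref{Sept3a}) together with a pigeonhole estimate showing that the union of the small neighborhoods of the poles on which that bound can fail has total length much smaller than $\eta$ — the paper simply fixes $\varepsilon=\eta^2$ and counts the discs crudely by $16k^{4\gamma r_1}$, while you solve for the critical radius $r^*$ and use the count $8J(\b)$. One small remark: the closing appeal to ``openness of the complement of $\SS^{(2)}(k,\xi)$'' is both unnecessary and inaccurate (that complement is closed, since $\SS^{(2)}(k,\xi)$ is open); your pigeonhole already shows that the points of $(0,\eta)$ lying in $\SS^{(2)}(k,\xi)$ are covered by at most $8J(\b)$ intervals of total length $16J(\b)r^*<\eta$, so the remaining set contains a nondegenerate subinterval, which is the required $\l'$.
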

 \begin{corollary} \label{C:curves-2ind} Let $\k\in \SS ^{(2)}(k,\xi)$ and
 ${\xi }/30J(\k)>10\gamma r_1'$. Then the distance from $\k$ to the boundary of
 $\SS^{(2)}(k,\xi)$ is less than $k^{-\xi/30J(\k)}$. \end{corollary}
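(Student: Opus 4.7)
My approach will transplant the strategy of Lemma \ref{L:curves-2}, part 4, into the setting of the larger operator $H^{(2)}$, using the refined resolvent estimate of Lemma \ref{May24-1} in place of the elementary counting argument from Step I. The core idea is that on $\l$ the resolvent $(H^{(2)}(\k)-k^{2l})^{-1}$ has only $O(J(\b))$ poles, so most of $\l$ must lie outside $\SS^{(2)}(k,\xi)$.

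First I will check that $\l$ is contained in the region where Lemma \ref{May24-1} applies. Since $\rho_1=4\gamma r_1'+2l-1+\delta$ and $\gamma r_1'\gg 2l$, we have $5\gamma r_1'>\rho_1$, so $\eta<k^{-5\gamma r_1'}<k^{-\rho_1}$ and the real segment $\l$ lies in the complex $k^{-\rho_1}$-neighborhood of $\b$. Definition~\ref{D-J} then gives $J(\k(\tau_1))=J(\b)$ throughout $\l$, and Lemma \ref{May24-1} yields at most $N:=8J(\b)$ poles $\tau_1^{(1)},\dots,\tau_1^{(N)}$ of the resolvent on $\l$ together with
\begin{equation*}
\|(H^{(2)}(\k(\tau_1))-k^{2l})^{-1}\|\leq k^{16\rho_1}(k^{-\rho_1}/\varepsilon_0)^{N},\qquad \varepsilon_0=\min\{k^{-2\rho_1},\varepsilon\},
\end{equation*}
where $\varepsilon$ denotes the distance from $\tau_1$ to the nearest pole.

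Next comes a pigeonhole step. Partitioning $(0,\eta)$ into $N+1$ equal subintervals, at least one contains no pole; let $\tau_1^*$ be its midpoint, so that $|\tau_1^*-\tau_1^{(i)}|\geq \eta/(2(N+1))\geq \eta/(18J(\b))$ for every $i$. Substituting this lower bound for $\varepsilon$ and taking $\log_k$, the resolvent norm at $\tau_1^*$ is bounded by $k^M$ with
\begin{equation*}
M\leq 16\rho_1+8J(\b)\bigl(\rho_1+\ln_k(18J(\b))+\ln_k(1/\eta)\bigr).
\end{equation*}
Because $\ln_k(1/\eta)>5\gamma r_1'>\rho_1$, the error terms are absorbed into $10J(\b)\ln_k(1/\eta)$ (say), and the hypothesis $\xi\geq 30J(\b)\ln_k(1/\eta)$ makes $M$ strictly less than $\xi$. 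Hence $\k(\tau_1^*)\notin \SS^{(2)}(k,\xi)$, which furnishes the required inner point.

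For the corollary I argue by contradiction. Given $\k\in \SS^{(2)}(k,\xi)$ with $\xi/(30J(\k))>10\gamma r_1'$, suppose the distance $d$ from $\k$ to the boundary of the component of $\SS^{(2)}(k,\xi)$ containing $\k$ satisfies $d\geq k^{-\xi/(30J(\k))}$. Choose any unit vector $\a$ and any $\eta\in[k^{-\xi/(30J(\k))},d)$; the bound $\xi/(30J(\k))>10\gamma r_1'>5\gamma r_1'$ ensures $\eta<k^{-5\gamma r_1'}$, while the segment $\{\k+\tau_1\a:\tau_1\in(0,\eta)\}$ has both endpoints inside the open ball $B(\k,d)\subset\SS^{(2)}(k,\xi)$. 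The lemma then produces an inner point outside $\SS^{(2)}(k,\xi)$, contradicting $B(\k,d)\subset\SS^{(2)}(k,\xi)$, so $d<k^{-\xi/(30J(\k))}$. The only substantive ingredient, the bound $8J(\b)$ on the pole count, has already been extracted in Lemma \ref{May24-1}; the present statement is essentially a pigeonhole consequence, and I expect no further obstacle beyond bookkeeping the logarithmic factors.
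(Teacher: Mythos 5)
Your proof of the corollary itself coincides with the paper's: one applies Lemma \ref{L:curves-2ind} to a segment of length $\eta=k^{-\xi/30J(\k)}$ issued from $\k$ (your contradiction via the ball $B(\k,d)$ is just the contrapositive of the paper's observation that such a segment must meet the boundary $D(\k,k^{2l}\pm k^{-\xi})=0$), and the hypothesis $\xi/30J(\k)>10\gamma r_1'$ indeed guarantees $\eta<k^{-5\gamma r_1'}$ and $\xi\geq 30J(\k)\log_k(1/\eta)$, so the lemma applies. The only deviation is that you also re-derive the lemma from Lemma \ref{May24-1} by a pigeonhole over the at most $8J(\b)$ poles, whereas the paper takes $\varepsilon=\eta^2$ in \eqref{Sept3a} and bounds the total size of the excluded discs by $32\eta^2 k^{4\gamma r_1}\ll\eta$; both routes are valid and yield the same conclusion.
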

 {\em Proof of the corollary.}  Let us consider a segment of the length $\eta =k^{-\xi/30J(\k)}$ starting at $\k $.
 By the statement of the lemma it intersects a boundary $D(\k, k^{2l}\pm k^{-\xi})=0$.

 \begin{proof}
 Choose
$\varepsilon =\eta ^2$ in \eqref{Sept3a}. Using the hypothesis of
the lemma, we obtain that the right-hand side of \eqref{Sept3a} is
less than $k^{{\xi }}$ outside the discs.  Let us estimate the total
size (sum of the sizes) of the discs. Indeed, the size of each disc
is $2\varepsilon $ and the number of discs is, obviously, less
$16k^{4\gamma r_1}$. Therefore, the total size admits the estimate
from above: $32\varepsilon k^{4\gamma r_1} <<\eta$, since $\eta
<k^{-5\gamma r_1'}$. This means there is a part $\l'$ of $\l$
outside these discs. By \eqref{Sept3a}, this part is
 outside $\SS ^{(2)}(k,\xi)$, when $\xi $ is as described in the statement of the lemma.
\end{proof}

Let $\k  _0\in \R^2$ be fixed and ${\cal N}(k,r_2,\k  _0,J_0)$ be the
following subset of the lattice $\k _0+\p_{\n}$, $\n \in \Omega (r_2)$:
$${\cal N}(k,r_2,\k _0,J_0)=\left\{\k _0+\p_{\n}:\n \in \Omega (r_2):\ J(\k  _0+\p_{\n})\leq J_0\right\},$$
$J$ being defined by Definition \ref{D-J}. Thus, ${\cal N}$ includes
only such $\n$ that the surrounding $k^{\gamma r_1}$- block contains
less than $J_0 $ of effectively resonant $k^{\delta }$-clusters. Let
$N(k,r_2,\k _0,J_0, \xi)$ be the number of points $\k _0+\p_{\n}$ in $\SS ^{(2)}
(k,\xi )\cap {\cal N}(k,r_2,\k _0,J_0)$.
\begin{lemma}\label{norm2/3}
If  $r_2>10\gamma r_1'$ and $\xi>60\mu r_2J_0$, then
\begin{equation}\label{eqnorm2/3}
N(k,r_2,\k _0,J_0, \xi)\leq k^{\frac{2}{3}r_2+43l\gamma r_1}.
\end{equation}
\end{lemma}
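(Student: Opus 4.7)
The plan is to mimic the strategy of Lemma~\ref{L:Resnumberofpoints}, replacing the small matrix $P_{\QQ_0}(H-\lambda)P_{\QQ_0}$ by the full block $H^{(2)}(\k)$, and combining Corollary~\ref{C:curves-2ind} with the quasi-periodic lattice estimates of Lemmas~\ref{Lattice-1ind}--\ref{Lattice-3ind}. The first step is to translate membership in $\SS^{(2)}(k,\xi)\cap{\cal N}$ into proximity to a curve. If $\k_0+\p_\n\in \SS^{(2)}(k,\xi)\cap{\cal N}$, then $J(\k_0+\p_\n)\leq J_0$, so the hypothesis of Corollary~\ref{C:curves-2ind} holds with $\eta=k^{-\xi/(30J_0)}\leq k^{-2\mu r_2}$; hence every such $\k_0+\p_\n$ lies within Euclidean distance $k^{-2\mu r_2}$ of the boundary of $\SS^{(2)}(k,\xi)$, which is a finite union of real algebraic curves $D(\k,k^{2l}\pm k^{-\xi})=0$, $D(\k,\lambda):=\det(H^{(2)}(\k)-\lambda I)$.

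Next I would analyse these curves as in parts~1--3 of Lemma~\ref{L:curves-2}. The matrix $H^{(2)}(\k)$ has dimension at most $c k^{4\gamma r_1}$, so $D$ is a polynomial in $(\k,\lambda)$ of bidegree bounded by $2l\cdot c k^{4\gamma r_1}$. Computing $\partial\hat\lambda_i/\partial\tau_2=2l\tau_2 k^{2l-2}(1+o(1))$ in coordinates chosen locally (as in (\ref{T})--(\ref{4m})) one obtains, away from caustic points, a decomposition of $D=0$ into a family of graphs $\tau_2=f_i(\tau_1)$ with $|f_i'|\ll 1$. Applying Bezout to the resultant of $D$ and the inflection-equation (as in part~3 of Lemma~\ref{L:curves-2}) bounds the total number of inflection points of these curves by $c_1 l^2 k^{8\gamma r_1}$, so they decompose into at most that many concave or convex arcs. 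The hypothesis $\xi>60\mu r_2 J_0$ makes the thickness $k^{-2\mu r_2}$ of the neighbourhood far smaller than $k^{-\mu r_2}$, which is the scale on which any segment between two lattice points is controlled by Lemma~\ref{psnorms}.

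On each concave/convex arc I then run the segment-counting argument of Lemma~\ref{L:number of points-1}. Given two lattice points $\k_0+\p_{\n},\k_0+\p_{\n'}$ in the $k^{-2\mu r_2}$-neighbourhood of one arc, the segment joining them is almost tangent to the arc (because $p_{\n-\n'}\geq k^{-\mu r_2}\gg k^{-2\mu r_2}$), so each end of the segment carries a well-defined coordinate $\tau_{1\n}\neq\tau_{1\n'}$. A Taylor expansion identical to the one after (\ref{y'}) shows that on a single concave arc no two segments can have the same direction $\p_\q$, for otherwise a Rolle-type argument would produce two tangent points with the same tangent direction. Short segments (length $<\tfrac1{64}k^{-2r_2/3}$) are then counted by the three cases of Lemmas~\ref{Lattice-1ind}--\ref{Lattice-3ind} exactly as in the last paragraph of the proof of Lemma~\ref{L:Resnumberofpoints}; this gives at most $c k^{2r_2/3}$ short segments per arc. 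Long segments contribute at most (arc length)/(length threshold), which on each arc is $O(k^{2r_2/3+1})$ since the arcs sit in a region of diameter $O(k)$.

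Multiplying the per-arc count $O(k^{2r_2/3+1})$ by the number of arcs $O(l^2 k^{8\gamma r_1})$, and summing over the (at most $O(k^{8\gamma r_1})$) connected components of the boundary, one gets $N(k,r_2,\k_0,J_0,\xi)\leq k^{2r_2/3+43 l\gamma r_1}$ for $k>k_*$. The main obstacle I expect is step two: justifying the graph description and the curvature-type lower bound $|f_i''|\gtrsim k^{-1}$ uniformly in the (effectively) resonant index $i$. Here the restriction to ${\cal N}$ is crucial: bounding $J(\k)\leq J_0$ means that on the relevant portion of the curve the effective resonant block has dimension $O(J_0 k^{4\delta})$, so the Bezout count and the perturbative derivative bounds from Lemma~\ref{L:curves-2} extend almost verbatim to $H^{(2)}$. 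Without this bound, the polynomial degree of $D$ would dominate the geometric estimates and the exponent $\tfrac23 r_2$ in $k^{2r_2/3}$ could not be preserved.
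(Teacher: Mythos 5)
Your overall strategy (reduce to thin neighbourhoods of the determinant curves $D(\k,k^{2l}\pm k^{-\xi})=0$, split them into concave arcs via Bezout, and count segments between consecutive lattice points with the quasi-periodic lattice Lemmas \ref{Lattice-1ind}--\ref{Lattice-3ind}) is the same as the paper's, but the central step is not justified as you state it. To rule out repeated segment directions you must first show that a segment whose two endpoints both lie in the $k^{-\xi/30J_0}$-neighbourhood of an arc is forced to leave $\SS^{(2)}(k,\xi)$ in its interior (equivalently, crosses the bounding curve twice), which is what produces the tangency point. You propose to get this from ``a Taylor expansion identical to the one after \eqref{y'}'' together with a curvature-type lower bound $|f_i''|\gtrsim k^{-1}$ carried over ``almost verbatim'' from Lemma \ref{L:curves-2}. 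No such curvature bound holds for these curves (they have inflection points, and between them they can be arbitrarily flat), and Lemma \ref{L:curves-2} does not prove one: its part 4 proves the escape property through the factorization $\hat D=\tilde f\prod(\tau_1-\tau_1^{(i)})$ with at most $\hat J\leq 8$ roots, and the quantitative threshold degrades like $\eta^{\hat J}$, i.e.\ exponentially in the number of resonant sites. For $H^{(2)}(\k)$ restricted to a cluster containing up to $J_0$ effectively resonant $k^{\delta}$-clusters this count is no longer $8$ but of order $8J(\k)\leq 8J_0$; controlling it is exactly the content of Lemma \ref{May24-1} (pole count plus maximum principle) and Lemma \ref{L:curves-2ind}, and it is precisely why the hypothesis of the present lemma reads $\xi>60\mu r_2 J_0$ rather than a $J_0$-independent condition. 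The paper's proof of Lemma \ref{norm2/3} invokes Lemma \ref{L:curves-2ind} (with $k^{-\mu r_2}<\eta<\tfrac{1}{64}k^{-2r_2/3}$) at exactly this point; you only use Corollary \ref{C:curves-2ind} for proximity to the boundary and never the escape statement, so the ``no repeated directions'' claim --- and with it the whole count --- is unsupported as written.

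Two secondary points. First, splitting the curves only at inflection points is not enough: to order the lattice points along an arc and to run the tangency argument you also need monotone graph pieces of bounded slope and of length at most $1$, and you must separate pieces where the two level curves $D=k^{2l}\pm k^{-\xi}$ intersect; this is what the paper's notion of elementary component (Definition \ref{elementary}, Lemma \ref{elpieces}, and the Bezout count between the two level curves) provides, and it is where the factor $k^{42l\gamma r_1}$ in the final exponent comes from. Second, your long-segment bookkeeping uses ``diameter $O(k)$'' for the region containing the arcs, whereas $\SS^{(2)}(k,\xi)$ lives in a disc of radius $O(k^{\gamma r_1})$ and an algebraic curve of degree $\sim k^{4\gamma r_1}$ there can have length far exceeding $k$; with length-$\leq 1$ elementary pieces (as in the paper) this issue disappears, and with your arcs it still fits inside the $k^{43l\gamma r_1}$ budget, but the exponent accounting has to be redone.
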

\begin{proof} Let us call a subset $\tilde \SS $ of $\SS ^{(2)}(k,\xi )$ elementary if it can be described by a formula of the type:
$$ \tilde \SS:=\{\k: a<\varkappa_1<b, f_1(\varkappa_1)<\varkappa_2<f_2(\varkappa_1)\},$$
where the curves $\varkappa_2=f_i(\varkappa_1)$, $i=1,2$, belong to
the boundary of $\SS ^{(2)}(k,\xi )$,   have the lengths less than 1,
functions $f_i(\varkappa_1)$ are monotone, continuously
differentiable and have no inflection points. We assume that the
boundaries $\varkappa_2=f_i(\varkappa_1)$ are parameterized by
$\varkappa_1$ for definiteness. The set where
$\varkappa_1=f_i(\varkappa_2)$, $a<\varkappa_2<b$, is completely
analogous.

Next, we show that the number of points in $\tilde \SS \cap {\cal
N}(k,r_2,\k _0,J_0)$ does not exceed $2^{14}k^{\frac{2}{3}r_2}$.
Indeed, let us consider a segment $\p_{\n-\n'}$ between
     two points $\k _0 +\p_{\n}$ and $\k _0 +\p_{\n'}$ in $\tilde \SS$.
     Obviously, $\||\p_{\n-\n'}\||<2k^{r_2}$ and $p_{\n-\n'}>k^{-\mu
     r_2}$. The direction of the
     segment cannot be parallel to the axis $\varkappa_2$ by Corollary \ref{C:curves-2ind}.
     We
     enumerate the points $ \k _0 +\p_{\n}\in \tilde \SS \cap
     {\cal N}(k,r_2,\k _0,J_0)$ in the order of the increasing first
    coordinate  and connect subsequent points by  segments. Consider all segments with the length greater or equal to
     $\frac{1}{64}k^{-\frac{2r_2}{3}}$. The number of such segments does not
     exceed $128k^{\frac{2r_2}{3}}$, since they are much longer than the width of $\tilde \SS $ (Corollary \ref{C:curves-2ind}). It remains to estimate the
     number of segments with the length less than
     $\frac{1}{64}k^{-\frac{2r_2}{3}}$.

      First, we prove that no more than two
     segments $\p_{\n_1-\n'_1}$, $\p_{\n_2-\n'_2}$ can be parallel to
     each other, if they are in the same elementary component $\tilde \SS $. Indeed, both ends of $\p_{\n_1-\n'_1}$ are in
     $\tilde \SS$. By the previous lemma there is a piece of the segment which is not in $\tilde \SS$ (we notice that now we use the lemma for $k^{-\mu r_2}<\eta <\frac{1}{64}k^{-\frac{2r_2}{3}}$). Hence, the segment intersects one of the curves $\varkappa_2=f_i(\varkappa_1)$ twice. It follows, that there is a point on the curve,
     where the curve is parallel to the segment. Suppose another segment $\p_{\n_2-\n'_2}$  intersects the same curve. Then, there is a point on the curve,
     where the curve is parallel to the second segment. Since the curve is concave, it can not be the case. Therefore, $\p_{\n_2-\n'_2}$  intersects another  curve. It follows that no more than two
     segments $\p_{\n_1-\n'_1}$, $\p_{\n_2-\n'_2}$ can be parallel to
     each other, if they are in the same elementary component $\tilde \SS $.

     To finish the proof of the lemma  we consider two cases. Suppose $q$ in the inequality
     \eqref{qind} satisfies the estimate $q>k^{2r_2/3}$. Then, by Lemma \ref{Lattice-3ind}, the
     number of vectors  $\p_\n$,
     $\||\p_{\n}\||<2k^{r_2}$, satisfying the inequality
     $p_{\n}<\frac{1}{64}k^{-2r_2/3}$ does not exceed
     $2^{12}k^{2r_2/3}$. Since each of them can be used only twice, the
     total number of short segments  does not exceed $2^{13}k^{2r_2/3}$.

     Let  $q\leq k^{2r_2/3}$.
     If $|\epsilon _q|>\frac{1}{64}q^{-1}k^{-r_2}$. Then, obviously,
     $\frac{1}{64}k^{-2r_2/3}<|\epsilon _q|qk^{r_2/3}$. Applying Lemma
     \ref{Lattice-2ind}, we obtain that the number of segments with the
     length less than $\frac{1}{64}k^{-2r_2/3}$ is less than
     $k^{2r_2/3}$. Since each of them can be used only twice, the
     total number of short segments  does not exceed $2k^{2r_2/3}$.
     It remains to consider the case $q\leq
     k^{2r_2/3}$, $|\epsilon _q|\leq \frac{1}{64}q^{-1}k^{-r_2}$. By
     Lemma \ref{Lattice-1ind}, clusters are well separated. Considering
     that the distance between clusters is greater than
     $\frac{1}{2q}$ and the size of each cluster is less than
     $\frac{1}{8q}$, we obtain that no more than $8q$ clusters can intersect $\tilde \SS $. Indeed, the distance between two clusters is greater than $\frac{1}{2q}$.
    By Corollary  \ref{C:curves-2ind},  the set $\tilde \SS $ belongs to the $k^{-{\xi }/30J_0}$-neighborhood of
    each curve $\varkappa_2=f_i(\varkappa_1)$, $i=1,2$. Using the hypothesis of the lemma we easily get that the size of the neighborhood is $o(q^{-1})$.
    If a cluster intersects $ \tilde \SS$, its $\frac{1}{4q}$-neighborhood intersects both curves $\varkappa_2=f_i(\varkappa_1)$, $i=1,2$. Since the distance between clusters is greater than
     $\frac{1}{2q}$, the distance along the curve between its intersection with $\frac{1}{4q}$-neighborhoods of different clusters is  greater than $\frac{1}{4q}$. Considering that the lengths of the curves is
     less than 1, we obtain that no more than $8q$ clusters
     can intersect  $\tilde \SS$.
     Next, the segments with the length less than
     $\frac{1}{2}k^{-2r_2/3}$ cannot connect different clusters, since the
     distance between clusters is greater than $\frac{1}{2q}\geq
     \frac{1}{2}k^{-2r_2/3}$. Therefore, any segment of the length
     less than $\frac{1}{2}k^{-2r_2/3}$ is inside one
     cluster.   The part of the shorter curve inside the clusters has the length $L_{in}$ which is less
     than the double size of a cluster $10|\epsilon _q|k^{r_2}$ (the curve is concave) multiplied by
     the number of clusters $8q$, i.e., $L_{in}<80|\epsilon
     _q|qk^{r_2}$. If we consider the segments with the length greater
     than $|\epsilon _q|qk^{r_2/3}$, then the number of such segments
     is less than $L_{in}/|\epsilon _q|qk^{r_2/3}$, i.e., it is less
     than $80k^{2r_2/3}$. By Lemma \ref{Lattice-2ind}, the total
     number of segments of the length less than $|\epsilon
     _q|qk^{r_2/3}$ is less than $k^{2r_2/3}$. Each of them can be used only twice. Thus, the total
     number of segments is less than $162k^{2r_2/3}$.

     We proved that the number of segments in $ \tilde \SS$ does not exceed
     $2^{14}k^{2r_2/3}$. Therefore, the number of points in
     $\tilde \SS \cap {\cal N}(k,r_2,\k _0,J_0)$ does not exceed
     $2^{14}k^{\frac{2}{3}r_2}+1$. Considering that
     $k^{\gamma lr_1}>2^{15}$, we obtain that the number of points
     inside $\tilde \SS \cap {\cal N}(k,r_2,\k _0,J_0)$ does not
     exceed $k^{\frac{2}{3}r_2+\gamma l r_1}$.

     If we show that  $\SS ^{(2)}(k,\xi )$ is the union of no more than $k^{42l\gamma r_1}$ elementary components $\tilde \SS$, then  estimate
     \eqref{eqnorm2/3} easily follows. Indeed, let us consider the boundary of $\SS ^{(2)}(k,\xi)$. It is described by  curves
$D(\k , k^{2l}\pm k^{-\xi})=0$, $\k \in \R^2$. We break each curve
into elementary components as described in Appendix 6. By Lemma
\ref{elpieces} the number of such pieces is less than $k^{17l\gamma
r_1}$. With each elementary piece of the boundary we associate the
part of the adjacent connected component of $\SS ^{(2)}(k ,\xi) $,
which is in the   $k^{-\xi/30J_0}$-neighborhood of the elementary
piece. By Corollary \ref{C:curves-2ind}, every point in $\SS
^{(2)}(k,\xi)$ belongs to such a component,
some  components  overlapping. Let us consider one of these
components $\hat \SS $. By construction, it is adjacent to a
boundary elementary component, which can be described in the form
$\varkappa_1=f_1(\varkappa_2)$ or $\varkappa_2=f_1(\varkappa_1)$.
Let us assume for definiteness that it is described by the formula
$\varkappa_2=f_1(\varkappa_1)$. By Corollary \ref{C:curves-2ind},
there is another boundary (described by
$\varkappa_2=f_2(\varkappa_1)$) of $\hat \SS $ in the
$k^{-\xi/30J_0}$-neighborhood of $\varkappa_2=f_1(\varkappa_1)$. It
also can be split into  no more than $k^{17l\gamma r_1}$ elementary
components. Further, each elementary component contains no more than
$k^{8l\gamma r_1}$  points $\k: D(\k , k^{2l}+ k^{-\xi})=D(\k ,
k^{2l}- k^{-\xi})=0$, unless the last equality is an identity on
this component (Bezout Theorem). We use these points to break each
elementary component into at most $k^{8l\gamma r_1}$ parts.
Correspondingly, we split the set $\hat \SS$ by lines
$\varkappa_1=C$ into at most $k^{25l\gamma r_1}$ components $\tilde
\SS $. The second boundary of $\tilde \SS $ also can be
parameterized by $\varkappa_2$, since $D_{\varkappa_2}\neq 0$ on an
elementary component of the boundary. By the definition of an
elementary component of the boundary (Appendix 6), both functions
$\varkappa_2=f_i(\varkappa_1)$ are monotone, continuously
differentiable and don't have inflection points, the length of the
corresponding curves being less than 1. Moreover, neither boundary
contains intersections with other pieces of the boundary of $\SS
^{(2)}(k,\xi )$. Thus, $\SS ^{(2)}(k,\xi )$ is the union of at most
$k^{42l\gamma r_1}$ elementary components $\tilde \SS$, each being
bounded by lines $\varkappa_i=a,b$ and elementary pieces of the
boundary of $\SS  ^{(2)}(k,\xi )$, which do not intersect with other
pieces of the boundary of $\SS ^{(2)}(k,\xi )$.
\end{proof}

\subsubsection{Model Operator for Step IV}
Let $r_3>r_2$.
We repeat for $r_3$ the construction from the
subsection~\ref{MOforStep3}, which was done for arbitrary $r_2>r_1$.
Let $\m \in \Omega (r_3)$. We denote the $k^{\gamma r_1}$-component
containing $\m$ by $\tilde \Pi (\m)$ and the corresponding projector
by $\tilde P(\m)$. For $\m$ belonging to the same $k^{\gamma
r_1}$-component, $\tilde \Pi (\m)$ and  $\tilde P(\m)$ are the same.
 Put
\begin{equation}\label{M^3} {\MM}^{(3)}:={\MM}^{(3)}(\varphi _0, r_3)=\{\m\in \MM^{(2)}(\varphi _0, r_3)\cup \Omega _{s}^{(2)}(r_3)
:\ \varphi_0\in{\cal O}_\m^{(3)}(r_2',1)\},\end{equation} where $\Omega _{s}^{(2)}(r_3)$ is the extension of $\Omega _s^{(2)}(r_2)$ to $\Omega (r_3)$,
\begin{equation} \label{se}\Omega _{s}^{(2)}(r_3)=\{\m\in \Omega (r_3),\ 0<p_\m\leq
k^{-5r_1'}\},\end{equation}
${\cal O}_\m^{(3)}(r_2',\tau)$ is the union of the disks of the
radius $\tau k^{-r_2'}$ with the centers at poles of the resolvent
$(\tilde P(\m)(H(\k^{(2)}(\varphi ))-k^{2l}I)\tilde P(\m))^{-1}$ in the $k^{-44r_1'-2l-\delta }$-neighborhood of $\varphi _0$.
(Here $\MM^{(2)}(\varphi _0, r_3)$ is defined as in \eqref{M^2} with
$r_3$ instead of $r_2$). For $\m$ belonging to the same $k^{\gamma
r_1}$-component, the sets ${\cal O}_\m^{(3)}(r_2',\tau)$ are
identical. We say that $\m \in {\MM}^{(3)}$ is $k^{\gamma
r_1}$-resonant. The corresponding $k^{\gamma r_1}$-clusters we call
resonant too.

 Let $\varphi_0\in \omega^{(3)} (k,
\delta ,1)$. By
construction of the non-resonant set $\omega^{(3)} (k, \delta ,1)$,
we have ${\MM}^{(3)}\cap \Omega (r_2)=\emptyset $.

Further we use the property of the set $\MM^{(3)}$ formulated in the
next lemma which is an analogue of the Lemma~\ref{L:2/3-1}.

\begin{lemma}\label{L:2/3-1ind} Let $r_2'>2k^{(\gamma +\delta_0)10^{-4}r_1-2\delta}$.\footnote{ We
also notice that this condition is consistent with the restriction
$r_2'<\frac{\beta}{128}k^{\delta_0 r_1-\delta-3}$ in \eqref{r_2'}.} Let  $1/20<\gamma '<20$, $\m _0\in\Omega
(r_3)$ and $\Pi _{\m_0}$ be the $k^{\gamma
'r_2}$-neighborhood (in $\||\cdot\||$-norm) of $\m_0$. Then the set
$\Pi _{\m _0}$ contains less than $k^{\frac 23 \gamma'r_2 +50l\gamma r_1}$
elements of $\MM^{(3)}$.
\end{lemma}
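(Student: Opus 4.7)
The strategy mirrors the proof of Lemma \ref{L:2/3-1}, with Lemma \ref{norm2/3} replacing the combination of Lemma \ref{L:number of points-1} and Lemma \ref{L:Resnumberofpoints} as the underlying counting device. First, I unfold the definition of $\MM^{(3)}$. If $\m\in\MM^{(3)}$, then $\varphi_0\in{\cal O}_\m^{(3)}(r_2',1)$, so there exists $\varphi_*$ with $|\varphi_0-\varphi_*|<k^{-r_2'}$ such that $\det\bigl(\tilde P(\m)(H(\k^{(2)}(\varphi_*))-k^{2l}I)\tilde P(\m)\bigr)=0$. Exactly as in the opening of the proof of Lemma \ref{L:2/3-1}, selfadjointness of $H(\k^{(2)}(\varphi_0))$ combined with the Hilbert identity yields some $\varepsilon_0'$ with $|\varepsilon_0'|\leq \varepsilon_0:=Ck^{2l-1-r_2'}$ for which $\det\bigl(\tilde P(\m)(H(\k^{(2)}(\varphi_0))-(k^{2l}+\varepsilon_0')I)\tilde P(\m)\bigr)=0$.

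Next, I translate this into membership in $\SS^{(2)}(k,\xi)$. Setting $\k_0:=\k^{(2)}(\varphi_0)+\p_{\m_0}$, the vanishing determinant above, together with the block-diagonal structure of $H^{(2)}$, is equivalent to the assertion that $\k_0+\p_{\m-\m_0}\in\SS^{(2)}(k,\xi)$ for some $\xi$ of order $r_2'-2l+1$ (the $k^{\gamma r_1}$-projection $\tilde P(\m)$ coincides with the $k^{\gamma r_1}$-box centered at the base point of $H^{(2)}(\k_0+\p_{\m-\m_0})$). Moreover $\||\p_{\m-\m_0}\||<2k^{\gamma'r_2}$ by the definition of $\Pi_{\m_0}$, so the count reduces to counting lattice points $\k_0+\p_{\m-\m_0}\in\SS^{(2)}(k,\xi)$ with $\||\p_{\m-\m_0}\||<2k^{\gamma'r_2}$.

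The third step applies Lemma \ref{norm2/3} with $\gamma'r_2$ in place of $r_2$. The hypothesis $r_2>10\gamma r_1'$ follows from \eqref{r_2}. For the hypothesis $\xi>60\mu r_2 J_0$, the required bound $J_0$ on the number of effectively resonant $k^\delta$-clusters in the $k^{\gamma r_1}$-box around $\k_0+\p_{\m-\m_0}$ is supplied by a prior application of Lemma \ref{L:2/3-1}, giving $J_0\leq k^{\frac{2}{3}\gamma r_1+C}$; the lower bound $r_2'>2k^{(\gamma+\delta_0)10^{-4}r_1-2\delta}$ in the hypothesis is precisely calibrated to ensure $\xi\geq r_2'/2>60\mu r_2 J_0$ for all admissible $r_2$. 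Lemma \ref{norm2/3} then produces the bound $k^{\frac{2}{3}\gamma'r_2+43l\gamma r_1}$ for the count of $\m\in\MM^{(2)}(\varphi_0,r_3)\cap\Pi_{\m_0}$. The contribution from $\Omega_s^{(2)}(r_3)\cap\Pi_{\m_0}$ is handled separately as in the original Step II treatment of simple points: since $p_\m\leq k^{-5r_1'}$ for these vectors, Lemma \ref{Lattice-3ind} (or its small-$q$ analogue) gives only $O(k^{2\gamma'r_2/3})$ such points. Summing the two contributions and absorbing the additive $43l\gamma r_1$ together with the logarithmic factors from summing over possible values of $J_0$ into the exponent $50l\gamma r_1$ finishes the proof.

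The main obstacle is the constant bookkeeping in the third step: one must verify that the $\xi$ emerging from the lower bound on $r_2'$ indeed clears the threshold $60\mu r_2 J_0$ of Lemma \ref{norm2/3}, using the admissible ranges $r_2<k^{\gamma 10^{-7}r_1}$ and $J_0\lesssim k^{\frac{2}{3}\gamma r_1}$, and that the final exponent $50l\gamma r_1$ correctly absorbs both the $43l\gamma r_1$ loss from Lemma \ref{norm2/3} and the overcount arising from the fact that each $k^{\gamma r_1}$-cluster in $\MM^{(3)}$ contributes several indices $\m$. Everything else is a direct transcription of the proof of Lemma \ref{L:2/3-1} one scale up.
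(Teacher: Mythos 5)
Your reduction to Lemma \ref{norm2/3} is the right general idea, but the way you verify its hypothesis $\xi>60\mu r_2 J_0$ does not work, and this is exactly the point where the paper has to do something more elaborate. You take $J_0\leq k^{\frac23\gamma r_1+C}$ (from Lemma \ref{L:2/3-1}) and $\xi\sim r_2'$, and claim the hypothesis $r_2'>2k^{(\gamma+\delta_0)10^{-4}r_1-2\delta}$ is "calibrated" so that $r_2'/2>60\mu r_2 J_0$. It is not: by \eqref{r_2'} one also has $r_2'<\frac{\beta}{128}k^{\delta_0 r_1-\delta-3}$ with $\delta_0=\gamma/100$, so $\xi\lesssim k^{\gamma r_1/100}$ while $60\mu r_2 J_0\gtrsim k^{\frac23\gamma r_1}$; the inequality fails by an enormous margin (a grey or black cluster can contain up to $k^{\gamma r_1/2+\delta_0 r_1}$ or $k^{\gamma r_1+3}$ resonant points, and the resolvent bound coming from $\m\in\MM^{(3)}$ is only of size $k^{r_2'-2l+1}$). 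The lower bound on $r_2'$ in the hypothesis serves a different purpose in the paper: it guarantees that $r_2'$ exceeds the resolvent thresholds of a \emph{substructure} built inside each $k^{\gamma r_1}$-cluster at the smaller scale $\tilde\gamma=10^{-4}\gamma$, so that (by a convergent-perturbation-series contradiction, as in the proof of Theorem \ref{Thm3}) the blow-up \eqref{1ind} forces at least one white, grey or black \emph{subcluster} to satisfy \eqref{estwind}--\eqref{estbind}. Lemma \ref{norm2/3} is then applied at the subcluster scale, with $\xi$ equal to the subcluster threshold (e.g. $k^{\tilde\gamma r_1/6-2\delta}$) and $J_0$ the subcluster point count; there the condition $\xi>60\mu r_2J_0$ does hold because $r_2<k^{\tilde\delta_0 r_1-3\delta}$ by \eqref{r_2}. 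Only when $r_2'$ happens to exceed the full-size color thresholds can one argue directly at scale $\gamma r_1$, and the paper splits into these two cases; your proof collapses the two cases into one and the single case does not close.

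Two secondary points. First, $\tilde P(\m)$ is a cluster of variable shape, not the $k^{\gamma r_1}$-box entering the definition \eqref{triind} of $\SS^{(2)}$; transferring the blow-up from the cluster to a box of fixed size and with a controlled number of resonant points is the content of Appendix 7 (Lemmas \ref{propw}, \ref{propg}, \ref{propb}), and those lemmas themselves require $\xi$ above color-dependent thresholds — another reason the case distinction is unavoidable. Second, your treatment of the simple part is not justified: points of $\Omega_s^{(2)}(r_3)$ satisfy $p_\m\leq k^{-5r_1'}$, and since $r_2\gg r_1'$ this threshold is far \emph{larger} than the scales $|\epsilon_q|qk^{\gamma'r_2/3}$ or $k^{-2\gamma'r_2/3}$ appearing in Lemmas \ref{Lattice-2ind}, \ref{Lattice-3ind}, so those lattice lemmas do not bound their number by $k^{2\gamma'r_2/3}$; the paper instead counts only the simple boxes that actually lie in $\MM^{(3)}$, again via Lemma \ref{norm2/3} with $J_0=1$ and $\xi=r_2'-2l+1$ (using that a simple box contains a single resonant element). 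Finally, the passage from "number of singular $k^{\gamma r_1}$-clusters" to "number of elements of $\MM^{(3)}$" is done in the paper by multiplying by the maximal number of lattice points in a cluster ($\leq k^{6\gamma r_1+12}$), which is what produces the exponent $50l\gamma r_1$ — not an absorption of logarithmic factors from summing over $J_0$.
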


\begin{proof}
If $\m\in\MM^{(3)}$, then there is a $\varphi_*$ such that
$|\varphi_*-\varphi_0|<k^{-r_2'}$ and
$$
\hbox{det}\,\left(\tilde P(\m)(H(\k^{(2)}(\varphi_*))-k^{2l}I)\tilde P(\m)\right)=0,
$$
where $\tilde P(\m)$ is the projection corresponding to the $k^{\gamma r_1}$-cluster $\tilde \Pi (\m)$, which includes $\m$. The cluster $\tilde \Pi (\m)$ can be simple, white, grey or black.
Since $\varphi _0$ is close to $\varphi_*$, perturbation arguments give:
\begin{equation}\label{1ind}
\left\|(\tilde P(\m)(H(\k^{(2)}(\varphi_0))-k^{2l}I)\tilde P(\m))^{-1}\right\|\geq \frac{1}{4l}k^{\xi },\ \ \ \xi \geq r_2'-2l+1.
\end{equation}
 We will apply
Lemma~\ref{norm2/3} to $\tilde \Pi(\m) $ with $\xi=r_2'-2l+1$ in
order to prove the lemma in hand in the same way we proved
Lemma~\ref{L:2/3-1}, using Lemmas \ref{L:number of points-1}, \ref{L:Resnumberofpoints}.  There are some technical complications though. Here is a detailed proof.

 We start with considering simple boxes $\tilde \Pi (\m)$,  $\m\in \MM^{(3)}\cap \Omega _{s}(r_3)$. Each box has the $\||\cdot \||$-size $2k^{r_1/2}$ and contains no other than $\m$ elements of $\MM^{(2)}(\varphi _0, r_3)\cup \Omega _s(r_3)$.  Indeed,
$\k^{(2)}(\varphi_0)$ satisfies the conditions of Lemma
\ref{L:geometric2}. This means that the $k^{\delta }$-cluster around
each $\q$: $\||\p_\q\||<k^{r_{1}}$ is non-resonant. Since
$\k=\k^{(2)}(\varphi_0)+\p_\m$ is a small perturbation of
$\k^{(2)}(\varphi_0)$, the $k^{\delta }$-box around each $\m+\q$:
$\||\p_\q\||<k^{r_{1}}$ is non-resonant too. This means $\m+\q \not
\in \MM^{(2)}$. Further, $\m+\q \not \in  \Omega _s(r_3)$ by
\eqref{below}, since $\m \in  \Omega _s(r_3)$ and
$\||\p_\q\||<k^{r_1}$. Thus, $\m+\q \not \in  \MM^{(2)}(\varphi _0,
r_3)\cup\Omega _s(r_3)$. Next, we apply Lemma \ref{norm2/3} with
$\k _0=\k^{(2)}(\varphi_0)+\p_{\m_0}$, $J_0=1$, $\xi =k^{r_2'-2l+1}$,
to conclude that the number of simple
 boxes $\tilde \Pi (\m)$, $\m\in \MM^{(3)}\cap \Omega _{s}(r_3)$ does not exceed $k^{\frac 23 \gamma'r_2 +43l\gamma r_1}$. Indeed, we rewrite $\k^{(2)}(\varphi_0) +\p_{\m}$ in the form:
 $\k^{(2)}(\varphi_0)+\p_{\m} =\k  _0+\p_{\n}$,  $\n=\m -\m_0\in \Omega(\gamma 'r_2)$.
 By \eqref{triind},
 $\k _0 +\p_{\n} \in \SS ^{(2)}(k,\xi )$ (the operator in formula \eqref{triind} having the size
$2k^{\frac{1}{2}r_1}$ and $\xi =r_2'-2l+1$, see \eqref{1ind}).
Since, $\tilde \Pi (\m)$ is simple, $\k _0 +\p_{\n}\in {\cal
N}(k,\gamma'r_2,\k _0,1)$ (here, $\gamma $ is taken to be equal to
$1/2$ in the definition of $\SS ^{(2)}(k,\xi)$). Thus, $\k _0 +\p_{\n}\in
\SS^{(2)}(k,\xi )\cap {\cal N}(k,\gamma'r_2,\k _0,1)$. By  Lemma
\ref{norm2/3}, the number of such $\k _0 +\p_{\n}$ does not exceed
 $k^{2\gamma'r_2/3+43l\frac12 r_1}$. Therefore, the number of
 $\MM^{(3)}$ elements in simple boxes also does not exceed
 $k^{2\gamma'r_2/3+43l\frac12 r_1}$.

 Next, let us consider white clusters $\tilde \Pi (\m)$, such that
 $\xi \geq k^{\frac 16 \gamma r_1-2\delta }$. Generally speaking,
 $\tilde \Pi (\m)$ has a shape (in $\Z^4$) more complicated than a
 rectangular. However, each such cluster can be put in a box of the
 size $3k^{\gamma r_1/2+2\delta _0 }$, the box containing less than
 $k^{\frac 16 \gamma r_1 -\delta _0 r_1}$ elements of $\MM ^{(2)}$
 and the box resolvent satisfying \eqref{1ind} with
$\xi =k^{\frac 16 \gamma r_1-2\delta }$ (Lemma \ref{propw}).
Applying Lemma~\ref{norm2/3} to such boxes
($\k _0=\k^{(2)}(\varphi_0)+\p_{\m_0}$, $J_0=k^{\frac 16 \gamma r_1
-\delta _0 r_1}$, $\xi =k^{\frac 16 \gamma r_1-2\delta }$), we
obtain that the number of $\Pi _{\m _0}$ points $\m$ in  such boxes
does not exceed $k^{2\gamma'r_2/3+43l\gamma r_1}$.  Similarly, we
can treat grey boxes when $\xi \geq k^{\frac 12 \gamma r_1+2\delta
_0r_1-2\delta }$ (Lemma \ref{propg}), black boxes when $\xi \geq
k^{\gamma r_1+\delta _0r_1-2\delta }$ (Lemma \ref{propb}).
 However, in some cases $\xi $ does not satisfy the previous estimates from below. For such $\xi $ a somewhat more complicated construction is needed.
Indeed, let us consider $(\tilde P(H(\k^{(2)}(\varphi_0))-k^{2l}I)\tilde
P)^{-1}$ for $\tilde \Pi$ being white, grey or black cluster
containing a point(s) of ${\MM}^{(3)}$. A cluster $\tilde \Pi$
consists of blocks with the minimal size $k^{\gamma r_1/6}$. Let us
create a substructure inside $\tilde \Pi$. Namely, we construct
 white, grey and black clusters corresponding to a smaller
$\gamma$ which we denote by $\tilde \gamma $,
$\tilde{\gamma}=10^{-4}\gamma$. Note, that there are no simple small clusters inside $\tilde \Pi$, since $\tilde \Pi$ is not simple. The size of these new clusters is much
smaller than $k^{\gamma r_1/6}$. However, they have  properties
analogous to those of the bigger clusters ($\gamma $). These new clusters we call
subclusters.
We assert that at least  one subcluster satisfies one of the following estimates (depending on whether this subcluster is white, grey or black):

\begin{equation}\label{estwind}
\left\|(P_{w,sub}(H(\k^{(2)}(\varphi_0))-k^{2l}I)P_{w,sub})^{-1}\right\|>k^{k^{\frac{\tilde{\gamma}r_1}{6}-2\delta}},
\end{equation}

\begin{equation}\label{estgind}
\left\|(P_{g,sub}(H(\k^{(2)}(\varphi_0))-k^{2l}I)P_{g,sub})^{-1}\right\|>
k^{k^{(\frac{\tilde{\gamma}}{2}+2\tilde{\delta}_0)r_1-2\delta}},
\end{equation}

\begin{equation}\label{estbind}
\left\|(P_{b,sub}(H(\k^{(2)}(\varphi_0))-k^{2l}I)P_{b,sub})^{-1}\right\|>
k^{k^{(\tilde{\gamma}+\tilde{\delta}_0)r_1-2\delta}},
\end{equation}
where $\tilde{\delta}_0=\tilde{\gamma}/100$ (cf. definition of
$\delta_0$). Indeed, if all subclusters satisfy the inequalities
opposite to the inequalities above, then the perturbation series for the resolvent of the
bigger cluster  ($\gamma $) (with respect to the block operator consisting of subclusters) converges, see the proof of Theorem \ref{Thm3}, in particular the proof of \eqref{5-s} -- \eqref{5-4}.
Hence, we have
$$
\left\|(\tilde P(H(\k^{(2)}(\varphi_0))-k^{2l}I)\tilde
P)^{-1}\right\|\leq
k^{k^{(\tilde{\gamma}+\tilde{\delta}_0)r_1-2\delta}},
$$
which contradicts to \eqref{1ind} under the hypothesis of the lemma
$r_2'>2k^{(\gamma +\delta_0)10^{-4}r_1-2\delta}$.

Now, let us consider a resonant $k^{\gamma r_1}$-cluster $\tilde
\Pi$, see \eqref{1ind}, and the substructure inside. Note that each
subcluster satisfying \eqref{estwind}-\eqref{estbind} can be treated
the same way we treated the bigger clusters for large $\xi $.
Namely, let us  consider all $k^{\gamma r_1}$-clusters $\tilde \Pi $
for which there exists a white subcluster satisfying
\eqref{estwind}. By Lemma \ref{propg} each such subcluster can be
put in a box of the size $3k^{\tilde \gamma r_1/2+2\tilde \delta _0
}$, the box resolvent satisfying \eqref{estwind}. Such box has less
than $k^{(\frac{\tilde{\gamma}}{6}-\tilde{\delta}_0)r_1}$ points $\m
$ of $ {\MM }^{(2)}$. Now, applying Lemma~\ref{norm2/3} with
$\k _0=\k^{(2)}(\varphi_0)+\p_{\m_0}$,
$J_0=k^{(\frac{\tilde{\gamma}}{6}-\tilde{\delta}_0)r_1}$,
$\xi=k^{\frac{\tilde{\gamma}r_1}{6}-2\delta}$, we obtain that the
number of  points $\m$ in white subclusters  \eqref{estwind}  does
not exceed $k^{\frac{2\gamma'r_2}{3}+43l\tilde \gamma r_1}$. Here we
notice that condition of Lemma~\ref{norm2/3} holds, since
$r_2<k^{\tilde{\delta}_0r_1-3\delta}$ by \eqref{r_2}. It follows
that the number of $k^{\gamma r_1}$-clusters $\tilde \Pi (\m)$,
containing at least one white subcluster \eqref{estwind}, does not
exceed $k^{\frac{2\gamma'r_2}{3}+43l\tilde \gamma r_1}$.

Next, we consider all $k^{\gamma r_1}$-clusters $\tilde \Pi (\m )$ for which there
exists a grey subcluster, satisfying \eqref{estgind}, but no white subclusters satisfying \eqref{estwind} . Applying
Lemma \ref{propg} and Lemma~\ref{norm2/3} with
$J_0=k^{(\frac{\tilde{\gamma}}{2}+\tilde{\delta}_0)r_1}$ and $\xi=k^{(\frac{\tilde{\gamma}}{2}+2\tilde{\delta}_0)r_1-2\delta}$, we obtain that the number of such $k^{\gamma
r_1}$-clusters $\tilde \Pi (\m )$ in $\Pi _{\m _0}$ does not exceed
$k^{\frac{2\gamma'r_2}{3}+43l\gamma r_1}$.

Similarly, applying Lemma \ref{propb} and Lemma~\ref{norm2/3} with
$\xi=k^{(\tilde{\gamma}+\tilde{\delta}_0)r_1-2\delta}$ and
$J_0 =ck^{\tilde{\gamma}r_1+3}$, we obtain
that the number of \ $k^{\gamma r_1}$-clusters $\tilde \Pi (\m )$, containing a
black subcluster \eqref{estbind} (and no grey or white subclusters, satisfying   \eqref{estwind},  \eqref{estgind}), does not exceed
$k^{\frac{2\gamma'r_2}{3}+43l\gamma r_1}$. Here, we also used
$r_2<k^{\tilde{\delta}_0r_1-3-3\delta}$.


Combining these estimates, we see that the number of clusters $\tilde \Pi $, containing at least one point of $\MM ^{(3)}$ does not exceed $k^{\frac{2\gamma'r_2}{3}+43l\gamma r_1}$. Taking into account that each
$k^{\gamma r_1}$-cluster has a size not greater than
$k^{\frac{3\gamma r_1}{2}+3}$ and, hence,  contains less than
$k^{6\gamma r_1+12}$ elements, we obtain that the total number of
elements of $\MM^{(3)}$ in $\Pi _{\m _0}$, does not exceed
$k^{\frac{2\gamma'r_2}{3}+50l\gamma r_1}$.

\end{proof}

We continue with constructing $k^{\gamma r_1}$-clusters in $\Omega
(r_3)$, $r_3>r_2$, the same way we did it for $\Omega (r_2)$ in
Section \ref{MOforStep3}. We call a $k^{\gamma r_1}$-cluster
resonant if it contains $\m \in {\MM}^{(3)}$, see \eqref{M^3}. Next,
we repeat the construction after Lemma \ref{L:2/3-1}. More
precisely, let us split $\Omega (r_3)\setminus \Omega (r_2)$ into
$k^{\gamma r_2}$-boxes, $\gamma =\frac{1}{5}$.

\begin{enumerate} \item {\em Simple region.} \label{simple-2} Let $\Omega _s^{(3)}(r_3)\subset \Omega _{s}^{(2)}(r_3)$ be defined by the formula:
\begin{equation} \label{Omega-s}\Omega _{s}^{(3)}(r_3)=\{\m\in \Omega (r_3),\ 0<p_\m\leq k^{- r_2'k^{2\gamma
r_1}}\}.\end{equation}
It is easy to see  that $\Omega _s^{(3)}(r_3)\subset \MM (\varphi _0,r_3)$, since $p_{\m}$ is small, see \eqref{M}, \eqref{resonance1}. Next, if  $\m\in \Omega_s^{(3)}(r_3)$, then there are
 no other elements of $\MM (\varphi _0,r_3)$ in the $k^{\delta }$-box  around  $\m$.  Further,
$\m$ itself can belong
or do not belong to $\MM^{(2)}(\varphi _0, r_3)$,  but there are
 no other elements of
$\MM^{(2)}(\varphi _0, r_3)$ in the $k^{r_1}$-box  around such $\m$.
The proof of these facts is analogous to that in Step III, see
``Simple region", page \pageref{simple}. Next, if $\m\in\Omega
_s^{(3)}(r_3)$, then  there are no other elements of $\Omega
_s^{(3)}(r_3)$ in the surrounding $\||\cdot\||$-box of the size
$k^{r_2}$, see \eqref{below}. Last, $\m$ can belong or do not belong
to $\MM^{(3)}(\varphi _0, r_3)$, but there are no other elements
from $\MM^{(3)}(\varphi _0, r_3)$ in the $k^{r_2}$-box around such
$\m$. Indeed, $\k^{(2)}(\varphi_0)$ satisfies the conditions of
Lemma \ref{L:geometric3}. This means that the $k^{\gamma
r_1}$-cluster around each $\q$: $0<\||\p_\q\||<k^{r_2}$ is
non-resonant. Since $\k=\k^{(2)}(\varphi_0)+\p_\m$ is a small
perturbation of $\k^{(2)}(\varphi_0)$, the $k^{\gamma r_1}$-box
around each $\m+\q$: $0<\||\p_\q\||<k^{r_2}$ is non-resonant too.
This means $\m+\q \not \in \MM^{(3)}(\varphi _0, r_3)$.

For each $\m \in \Omega _s^{(3)}(r_3)$ we consider its $k^{r_2/2}$-neighborhood.
The union of such boxes we call the simple region and denote it by
$\Pi _s(r_3)$. The corresponding projection is $P_s(r_3)$. Note that the
distance from the simple region to the nearest point of $\MM^{(3)}$
is greater than $\frac12 k^{r_2}$.
\item {\em Black, grey and white regions} are
defined in the same way as in the construction after Lemma
\ref{L:2/3-1} with $r_3$ instead of $r_2$, $r_2$ instead of $r_1$, $\MM^{(3)}$ instead of $\MM^{(2)}$ and the restriction $p_\m >k^{- r_2'k^{2\gamma
r_1}}$ instead of $p_\m >k^{- 5r_1'}$. We continue to use
notation $P_b, P_g, P_g', P_w, P_w'$ and $\Pi_b, \Pi_g, \Pi_g',
\Pi_w, \Pi_w'$. Sometimes, where it can lead to confusion we will
write $P_b(r_3)$ etc. to distinguish these objects from the ones
introduced in Step II.

\item {\em Non-resonant region.} \label{NRR} Now, the non-resonant region consists of two parts: the simpler part which was non-resonant
already in the previous step and the part which is new for the
current step. Namely, first we consider $k^{\delta}$-neighborhoods
of all points in the set $\MM(r_3, \varphi _0)\setminus
\left(\MM(r_2, \varphi _0)\cup
\MM^{(2)}(r_3,\varphi_0)\cup \Omega ^{(2)}_s(r_3)\right)$. The union of this
neighborhoods we denote $\Pi _{nr,\delta}$. The corresponding
projection is $P_{nr,\delta}$. These $k^{\delta }$-clusters can be treated by means of the second step. We also consider
 all points in the set $\MM^{(2)}(r_3,
\varphi _0)\cup \Omega ^{(2)}_s(r_3)\setminus \left(\MM ^{(2)}(r_2, \varphi _0)\cup
\MM^{(3)}(r_3,\varphi_0)\cup \Omega _{s}^{(3)}(r_3)\right)$.
 We construct simple, white, grey and black clusters around them exactly as in preparation to Step III. The union of these
clusters we denote $\Pi _{nr,r_1}$. The corresponding
projection is $P_{nr,r_1}$. The set $\Pi _{nr}:=\Pi
_{nr,\delta}\cup\Pi _{nr,r_1}$ is called the non-resonant set with
$P_{nr}$ being the corresponding projection. The part of the
non-resonant region which is outside
$\Pi_s\cup\Pi_b\cup\Pi_g\cup\Pi_w$, we denote $\Pi _{nr}'$ and the
corresponding projection by $P_{nr}'$.
\end{enumerate}

We put as before
\begin{equation}P_r(r_3):=P_s(r_3)+P_b(r_3)+P_g'(r_3)+P_w'(r_3),\ \ \ P^{(3)}:=P_r(r_3)+P_{nr}'(r_3)+P(r_2). \label{P(3)} \end{equation}
We also continue to use the similar agreement in the notation which
we set in Step II. We just note that now we use $r_2$ rather than
$r_1$ to establish equivalence between the boxes.

We continue construction from Step II. Repeating the arguments from
the proofs of Lemmas~\ref{L:black},~\ref{L:grey},~\ref{L:white} with
obvious changes (in particular, using Lemma~\ref{L:2/3-1ind} instead
of Lemma~\ref{L:2/3-1}) we obtain the following results.
\begin{lemma} \label{L:blackind} \begin{enumerate} \item
Each $\Pi _b^j$ contains no more than $k^{\gamma r_2/2-\delta
_0r_2+150l\gamma r_1}$ black boxes.
\item The size of $\Pi _b^j$ in $\||\cdot \||$ norm is less than
$k^{3\gamma r_2/2+150l\gamma r_1}$.
\item Each $\Pi _b^j$ contains no more than $k^{\gamma r_2+150l\gamma r_1}$ elements of
$\MM^{(3)}$. Moreover, any box of $\||\cdot \||$-size $k^{3\gamma
r_2/2+150l\gamma r_1}$ containing $\Pi _b^j$ has no more than
$k^{\gamma r_2+150l\gamma r_1}$ elements of $\MM^{(3)}$ inside.
\end{enumerate}
\end{lemma}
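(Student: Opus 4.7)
The plan is to replicate the argument used for Lemma~\ref{L:black} one level up in the multiscale hierarchy: substitute $r_1 \mapsto r_2$, $\MM^{(2)} \mapsto \MM^{(3)}$, and use Lemma~\ref{L:2/3-1ind} as the arithmetic input in place of Lemma~\ref{L:2/3-1}. The only substantive change is that Lemma~\ref{L:2/3-1ind} gives a weaker count: the number of $\MM^{(3)}$-points inside a $k^{\gamma'r_2}$-neighborhood is at most $k^{2\gamma'r_2/3+50l\gamma r_1}$ rather than the cleaner $k^{2\gamma'r_1/3+1}$ from Lemma~\ref{L:2/3-1}. When this extra multiplicative slack $k^{50l\gamma r_1}$ is funneled through the three coupled size/count inequalities satisfied by $n_b$, $L_b$, $N_b$, it gets cubed, which is exactly how the exponent $150l\gamma r_1$ in the statement arises.

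Concretely: let $n_b$ be the number of black $k^{\gamma r_2}$-boxes inside $\Pi_b^j$, let $L_b$ denote the $\||\cdot\||$-size of $\Pi_b^j$, and let $N_b$ be the number of points of $\MM^{(3)}$ lying inside $\Pi_b^j$. The construction of $\Pi_b^j$ via $3k^{\gamma r_2+\delta_0 r_2}$-equivalence and the defining property of a black box give the two elementary geometric bounds
\begin{equation*}
L_b \leq 3 n_b k^{\gamma r_2+\delta_0 r_2}, \qquad N_b \geq c n_b k^{\gamma r_2/2+\delta_0 r_2}.
\end{equation*}
Then I would enclose $\Pi_b^j$ inside a $\||\cdot\||$-box of size $L_b=k^{\gamma'r_2}$ and apply Lemma~\ref{L:2/3-1ind} to get the third inequality $N_b \leq L_b^{2/3} k^{50l\gamma r_1}$. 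Eliminating $L_b$ and $N_b$ from these three relations yields $n_b^{1/3} \leq c\, k^{\gamma r_2/6-\delta_0 r_2/3+50l\gamma r_1}$, i.e.\ Part~1, and substituting back produces Part~2. Part~3 follows by applying Lemma~\ref{L:2/3-1ind} a second time to any $\||\cdot\||$-box of size $k^{3\gamma r_2/2+150l\gamma r_1}$ containing $\Pi_b^j$.

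The only step that requires genuine verification (as opposed to bookkeeping) is that the hypotheses of Lemma~\ref{L:2/3-1ind} actually hold in the two applications above. The required condition $r_2' > 2k^{(\gamma+\delta_0)10^{-4}r_1-2\delta}$ is implied by the standing restriction \eqref{r_2'} imposed at the start of Step~III, and the auxiliary scale $\gamma'$ determined by $L_b$ lies comfortably in the window $(1/20,20)$ since $\gamma=1/5$ and $150l\gamma r_1 \ll r_2$ by \eqref{r_2}. With those two conditions checked the proof is essentially a transcription of the proof of Lemma~\ref{L:black}, so I do not expect a real analytic obstacle; the entire content is in correctly propagating the $k^{50l\gamma r_1}$ factor from the sharper arithmetic-geometric lemma at the higher scale.
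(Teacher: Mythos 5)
Your proposal is correct and follows the paper's own route: the paper proves Lemma \ref{L:blackind} precisely by repeating the proof of Lemma \ref{L:black} at the new scale ($r_1\mapsto r_2$, $\MM^{(2)}\mapsto\MM^{(3)}$) with Lemma \ref{L:2/3-1ind} replacing Lemma \ref{L:2/3-1}, and your bookkeeping of how the extra factor $k^{50l\gamma r_1}$ gets cubed into $k^{150l\gamma r_1}$ through the three inequalities for $n_b$, $L_b$, $N_b$ is exactly the intended argument. One small caveat: the condition $r_2'>2k^{(\gamma+\delta_0)10^{-4}r_1-2\delta}$ is stated in the paper as a hypothesis of Lemma \ref{L:2/3-1ind} that is merely \emph{consistent} with \eqref{r_2'}, not implied by it, so it should be carried as a standing assumption rather than deduced from \eqref{r_2'}.
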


\begin{lemma}\label{L:greyind} \begin{enumerate} \item
Each $\Pi _g^j$ contains no more than $k^{\gamma r_2/3+2\delta
_0r_2}$ grey boxes.
\item The size of $\Pi _g^j$ in $\||\cdot \||$ norm is less than
$k^{5\gamma r_2/6+4\delta _0r_2}$.
\item Each $\Pi _g^j$ contains no more than $k^{\gamma r_2/2+\delta _0r_2}$ elements of
$\MM^{(3)}$.\end{enumerate}
\end{lemma}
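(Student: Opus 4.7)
The plan is to mirror the proof of Lemma~\ref{L:grey} essentially verbatim, replacing $r_1$ by $r_2$ throughout and $\MM^{(2)}$ by $\MM^{(3)}$, since the combinatorial structure defining grey clusters at this step is the analogue of the corresponding structure at step~III. The only genuinely new ingredients are the revised definitions: at this scale, a big white $k^{\gamma r_2}$-box is one whose $\MM^{(3)}$-count (together with its neighbors) is at most $k^{\gamma r_2/2+\delta_0 r_2}$, and a grey small box of size $k^{\gamma r_2/2+2\delta_0 r_2}$ is one whose $\MM^{(3)}$-count (together with its neighbors) exceeds $k^{\gamma r_2/6-\delta_0 r_2}$. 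These density thresholds already encode exactly what is needed.

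First I would handle the case in which a single $\Pi_g^j$ lies inside one big white $k^{\gamma r_2}$-box. Let $n_g$ denote the number of grey small boxes making up $\Pi_g^j$, $L_g$ its $\||\cdot\||$-size, and $N_g$ the number of $\MM^{(3)}$ elements it contains. By the grey-box density bound,
\[
N_g \;>\; c\,n_g\,k^{\gamma r_2/6-\delta_0 r_2},
\]
while by the white-box density bound, $N_g < k^{\gamma r_2/2+\delta_0 r_2}$. Dividing gives
\[
n_g \;<\; k^{\gamma r_2/3+2\delta_0 r_2},
\]
and since each grey small box contributes at most $3k^{\gamma r_2/2+2\delta_0 r_2}$ to the size in $\||\cdot\||$-norm via the equivalence-neighborhood, we get
\[
L_g \;<\; 3\,n_g\,k^{\gamma r_2/2+2\delta_0 r_2} \;<\; k^{5\gamma r_2/6+4\delta_0 r_2}.
\]
Parts 1, 2, and 3 of the lemma then follow directly from these three inequalities.

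Next I would handle the case where $\Pi_g^j$ stretches across more than one big white $k^{\gamma r_2}$-box. Since $\delta_0 < \gamma/24$ and $\gamma=1/5$, the bound just obtained on $L_g$ is much smaller than the $\||\cdot\||$-side-length $k^{\gamma r_2}$ of a big white box. Consequently the portion of $\Pi_g^j$ inside each big white box it intersects is already much smaller than the box, so $\Pi_g^j$ can only extend into immediately neighboring white boxes; a fixed small number of such neighbors carries the same estimates (with the same constants absorbed into the exponent). This is exactly the observation used to close the analogous step in the proof of Lemma~\ref{L:grey}.

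I do not anticipate any serious obstacle: the proof is combinatorial bookkeeping based on the density definitions, and the key role previously played by Lemma~\ref{L:2/3-1} (which controlled $\MM^{(2)}$-densities in $k^{\gamma' r_1}$-boxes) enters here only implicitly through the fact that the white/grey definitions at scale $r_2$ use $\MM^{(3)}$-counts, for which the ambient density bound Lemma~\ref{L:2/3-1ind} is already established. The minor care needed is to verify that the numerical inequality $5\gamma/6+4\delta_0<\gamma$ (which is exactly $\delta_0<\gamma/24$, already stated in the construction) still holds so that a grey component cannot spread beyond its own and neighboring big white boxes; this is the only place where the smallness of $\delta_0$ relative to $\gamma$ is invoked.
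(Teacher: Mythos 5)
Your proposal is correct and follows essentially the same route as the paper, which proves Lemma~\ref{L:greyind} precisely by repeating the proof of Lemma~\ref{L:grey} with $r_1\to r_2$, $\MM^{(2)}\to\MM^{(3)}$, including the same two bookkeeping inequalities for $n_g$, $N_g$, $L_g$ and the same observation (via $\delta_0<\gamma/24$) that a grey component is much smaller than a big white box and so at worst spills into neighboring boxes.
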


\begin{lemma}\label{L:whiteind} \begin{enumerate} \item The size of $\Pi _w^j$ in $\||\cdot \||$ norm is less than
$k^{\gamma r_2/3-\delta _0r_2}$.
\item Each $\Pi _w^j$ contains no more
than $k^{\gamma r_2/6-\delta _0r_2}$ points of $\MM^{(3)}$.
\end{enumerate}
\end{lemma}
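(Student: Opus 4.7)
The plan is to repeat verbatim, with the scaling parameter shifted from $r_1$ to $r_2$ and the resonant set $\MM^{(2)}$ replaced by $\MM^{(3)}$, the argument used to prove Lemma \ref{L:white}. All the required ingredients are already built into the definitions given in items 1--3 of the construction preceding this lemma (the simple, black, grey, and white regions at scale $r_3$), so no new analytic input is needed.

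First I would fix an arbitrary small white box $B$ at the current scale. By construction, $B$ is a box of $\||\cdot\||$-size $k^{\gamma r_2/2+2\delta_0 r_2}$ which, together with its eight neighbors (boxes of the same size), contains at most $k^{\gamma r_2/6-\delta_0 r_2}$ elements of $\MM^{(3)}$. The set $\Pi_w$ is defined as the union of the $k^{\gamma r_2/6}$-neighborhoods (in $\||\cdot\||$-norm) of the points of $\MM^{(3)}$ lying in small white boxes, and the components $\Pi_w^j$ are the connected clusters of $\Pi_w$ after establishing the $3k^{\gamma r_2/6}$-equivalence relation between these points. Thus every $\Pi_w^j$ intersecting $B$ is a union of at most $k^{\gamma r_2/6-\delta_0 r_2}$ of these $k^{\gamma r_2/6}$-neighborhoods contributed by points in $B$ together with its neighbors.

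Adding up the diameters gives a total $\||\cdot\||$-extent bounded by
\[
3\,k^{\gamma r_2/6}\cdot k^{\gamma r_2/6-\delta_0 r_2}=3\,k^{\gamma r_2/3-\delta_0 r_2},
\]
which is strictly less than the size $k^{\gamma r_2/2+2\delta_0 r_2}$ of a single small white box (since $\delta_0<\gamma/24$). Consequently each component $\Pi_w^j$ is forced to sit inside the union of a small white box $B$ and its eight immediate neighbors, and cannot spill into the rest of the white region. This yields simultaneously the size bound in item 1 and the upper bound on the number of $\MM^{(3)}$-points in item 2, since the latter is bounded by the number of $\MM^{(3)}$-points in $B$ and its neighbors.

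There is no essential obstacle here: the lemma is purely combinatorial once the definitions in items 1--3 of the construction are in place, and the argument is literally the analog of the proof of Lemma \ref{L:white}. The only point requiring minor care is to verify that the various neighbors used in establishing the equivalence relation of the white subclusters (coming from the ``addition'' procedure in the boundary-adjustment paragraph on page of the construction) are again negligible on the scale $k^{\gamma r_2/2+2\delta_0 r_2}$; this was exactly the role played by the inequality $\delta_0<\gamma/24$ in the $r_2$-scale version, and it is preserved here.
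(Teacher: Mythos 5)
Your proof is correct and follows essentially the same route as the paper: the paper simply states that Lemma \ref{L:whiteind} is obtained by repeating the proof of Lemma \ref{L:white} with $r_1$ replaced by $r_2$ and $\MM^{(2)}$ by $\MM^{(3)}$, which is exactly the combinatorial argument you give (total extent $3k^{\gamma r_2/3-\delta_0 r_2}$ of the neighborhoods is much smaller than the small white box size $k^{\gamma r_2/2+2\delta_0 r_2}$, so each cluster stays in one small white box and its neighbors, giving both bounds).
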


The construction of the rest of Section~\ref{MOforStep3} stays unchanged. Let us
introduce corresponding notation, formulate the results and provide
some comments.

Next lemmas are the analogues of
Lemmas~\ref{L:Pnr},~\ref{L:Pr},~\ref{L:Ps}.

\begin{lemma}\label{L:Pnrn}Let $\varphi _0\in \omega^{(3)}(k,\delta ,\tau )$,
$|\varphi-\varphi _0|<k^{-k^{r_1}}$. Then,
\begin{equation}\label{Pnrn}
\left\|\Bigl(P_{nr}\bigl(H(\k^{(3)}(\varphi
))-k^{2l}I\bigr)P_{nr}\Bigr)^{-1}\right\|<k^{r_2'k^{2\gamma
r_1}}k^{r_2'}\leq k^{k^{3\gamma r_1}}.
\end{equation} \end{lemma}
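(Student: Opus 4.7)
The plan is to mimic the proof of Lemma \ref{L:Pnr} from Step~III, but now handling two scales of components in $\Pi_{nr}$ simultaneously. First I would exploit the block-diagonal structure of $P_{nr}H P_{nr}$. By the construction of $\Pi_{nr}=\Pi_{nr,\delta}\cup\Pi_{nr,r_1}$ described in item (3) on page \pageref{NRR}, its connected components are separated by $\||\cdot\||$-distance $\geq \frac13 k^\delta > Q$. Since $V$ has $\||\cdot\||$-support of width $\leq Q$, this gives
\begin{equation*}
P_{nr}H(\k)P_{nr}=\sum_j P_{nr}^j H(\k)P_{nr}^j,
\end{equation*}
so the resolvent is the direct sum of the resolvents on each component, and it suffices to bound each summand. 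The components split into two types: (I) the $k^\delta$-clusters in $\Pi_{nr,\delta}$, and (II) the simple/white/grey/black $k^{\gamma r_1}$-clusters in $\Pi_{nr,r_1}$.

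Next I would treat each type by invoking the matching resolvent lemma from an earlier step. For a Type~I component with center $\m\in\MM(r_3,\varphi_0)\setminus\bigl(\MM(r_2,\varphi_0)\cup\MM^{(2)}(r_3,\varphi_0)\cup\Omega^{(2)}_s(r_3)\bigr)$, the definition \eqref{M^2} of $\MM^{(2)}$ forces $\varphi_0\notin\OO^{(2)}_\m(10r_1',1)$; hence Lemma \ref{L:estnonres1} together with Corollary \ref{estnonres1} gives, exactly as in Lemma \ref{L:Pnr},
\begin{equation*}
\bigl\|\bigl(P_{nr}^j(H(\k^{(2)}(\varphi_0))-k^{2l})P_{nr}^j\bigr)^{-1}\bigr\|<ck^{44r_1'}.
\end{equation*}
For a Type~II component the center $\m$ lies in $\MM^{(2)}(r_3,\varphi_0)\cup\Omega^{(2)}_s(r_3)\setminus\MM^{(3)}(r_3,\varphi_0)$, so $\varphi_0\notin\OO^{(3)}_\m(r_2',1)$ by \eqref{M^3}. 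Lemma \ref{L:geometric3} (and, for the very small $p_\m$ case, Lemma \ref{L:Ps} using $p_\m>k^{-r_2'k^{2\gamma r_1}}$ since $\m\notin\Omega^{(3)}_s(r_3)$) then yields a bound of the form $ck^{2\mu r_2+r_2'N^{(1)}}$, with the worst case $N^{(1)}\leq ck^{\gamma r_1+3}$ coming from black clusters. Taking the maximum over all $j$ and using $r_2'<k^{\delta_0 r_1}$ together with $\gamma r_1+3<2\gamma r_1$, I get
\begin{equation*}
\bigl\|\bigl(P_{nr}(H(\k^{(2)}(\varphi_0))-k^{2l})P_{nr}\bigr)^{-1}\bigr\|<k^{r_2'k^{2\gamma r_1}}.
\end{equation*}

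Finally I would pass from $(\k^{(2)}(\varphi_0),\varphi_0)$ to $(\k^{(3)}(\varphi),\varphi)$ by a Hilbert-identity perturbation. Using Lemma~\ref{ldk-3} and Lemma~\ref{L:derivatives-2} together with the hypothesis $|\varphi-\varphi_0|<k^{-k^{r_1}}$, the total displacement $|\k^{(3)}(\varphi)-\k^{(2)}(\varphi_0)|$ is bounded by a quantity much smaller than $k^{-2r_2'k^{2\gamma r_1}}$, which is the stability radius required to preserve the resolvent estimate up to a factor $2$. This gives \eqref{Pnrn} with the extra factor $k^{r_2'}$ coming from the Hilbert-identity expansion, and the final simplification $k^{r_2'k^{2\gamma r_1}}k^{r_2'}\leq k^{k^{3\gamma r_1}}$ follows from $r_2'<k^{\delta_0 r_1}$ and $2\gamma+\delta_0<3\gamma$.

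\paragraph{Main obstacle.} The routine-looking calculations at the end are not the difficult part; the real work is the bookkeeping in step~two, namely verifying that every connected component of $\Pi_{nr}$ falls into exactly one of the two cases and that the corresponding earlier lemma applies with the correct value of the non-resonance parameter ($10r_1'$ versus $r_2'$). This depends sensitively on the layered definitions of $\MM^{(2)}$, $\MM^{(3)}$, $\Omega^{(2)}_s$, $\Omega^{(3)}_s$, and the ``lighter color absorbs darker color'' convention used in the construction of $\Pi_{nr}$; a careful case analysis of which set each center $\m$ lives in, and which neighborhood size was used to build its cluster, is needed to be sure the hypotheses of Lemmas \ref{L:estnonres1}, \ref{L:geometric3}, and \ref{L:Ps} are met uniformly in $j$.
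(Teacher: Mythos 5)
Your proposal is correct and follows essentially the same route as the paper: decompose $P_{nr}HP_{nr}$ into its well-separated blocks, bound the $k^{\delta}$-clusters by the Step-II estimates (Lemma \ref{L:Pnr}/\ref{L:estnonres1}) and the simple/white/grey/black $k^{\gamma r_1}$-clusters by Lemmas \ref{L:Pr} and \ref{L:Ps} with $\varepsilon_0=k^{-r_2'}$ and $p_\m>k^{-r_2'k^{2\gamma r_1}}$ (these are exactly what underlie your citation of Lemma \ref{L:geometric3}), then use \eqref{dk0-3} and the smallness of $|\varphi-\varphi_0|$ for stability. The only cosmetic slip is attributing the factor $k^{r_2'}$ to the Hilbert-identity perturbation, whereas it actually comes from the $1/\varepsilon_0$ in the simple-cluster bound of Lemma \ref{L:Ps}; the final estimate is unaffected.
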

\begin{proof} Construction in Section \ref{S:3} is made for an arbitrary large $r_2$. Here we repeat it for $r_3$. We use Lemma \ref{L:Pnr} for $\Pi _{nr, \delta }$ Lemma \ref{L:Pr} for white, grey and black clusters ($\varepsilon _0=k^{-r_2'}$).  We also use Lemma \ref{L:Ps} ($p_{\m}>k^{-r_2'k^{2\gamma r_1}}$, $\varepsilon _0=k^{-r_2'}$), for simple clusters in $\Pi _{nr, r_1}$.  We also use \eqref{dk0-3}. All together the estimates for the clusters resolvents yield \eqref{Pnrn}.  The estimate \eqref{Pnrn} is stable when $|\varphi-\varphi _0|<k^{-k^{r_1}}$, since
$k^{-k^{r_1}+2l }=o(k^{-k^{3\gamma r_1}})$.
\end{proof}

\begin{lemma}\label{L:Prn} Let $\varphi _0\in \omega^{(3)}(k,\delta ,\tau )$,
and $|\varphi-\varphi _0|<k^{-k^{r_1}}$, $i=1,2,3$. Then,
\begin{enumerate}
\item The number of poles of the resolvent $\Bigl(P_i\bigl(H(\k^{(3)}(\varphi
))-k^{2l}I\bigr)P_i\Bigr)^{-1}$ in the disc $|\varphi-\varphi
_0|<k^{-k^{r_1}}$ is no greater than $N_i^{(2)}$, where $N_1^{(2)}=k^{\gamma
r_2+150l\gamma r_1}$, $N_2^{(2)}=k^{\gamma r_2/2+\delta _0r_2}$,
$N_3^{(2)}=k^{\gamma r_2/6-\delta _0r_2}$.
\item Let $\varepsilon$ be the distance to the nearest pole of
the resolvent in ${\cal W}^{(3)}$ and
$\varepsilon_0=\min\{\varepsilon,\,k^{-r_2'}\}$. Then the following
estimates hold:

\begin{align}\label{Pr-1n}
\left\|\Bigl(P_i\bigl(H(\k^{(3)}(\varphi
))-k^{2l}I\bigr)P_i\Bigr)^{-1}\right\|<k^{2r_2'k^{2\gamma
r_1}}k^{r_2'}\left(\frac{k^{-r_2'}}{\varepsilon
_0}\right)^{N_{i}^{(2)}}\leq \cr k^{k^{3\gamma
r_1}}\left(\frac{k^{-r_2'}}{\varepsilon _0}\right)^{N_{i}^{(2)}},
\end{align}
\begin{align}\label{Pr-2n}
\left\|\Bigl(P_i\bigl(H(\k^{(3)}(\varphi
))-k^{2l}I\bigr)P_i\Bigr)^{-1}\right\|_1<k^{2r_2'k^{2\gamma
r_1}}k^{r_2'+8\gamma r_2}\left(\frac{k^{-r_2'}}{\varepsilon
_0}\right)^{N_{i}^{(2)}}\leq \cr k^{k^{3\gamma
r_1}}\left(\frac{k^{-r_2'}}{\varepsilon _0}\right)^{N_{i}^{(2)}}.
\end{align}
\end{enumerate}
\end{lemma}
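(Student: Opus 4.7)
The plan is to follow the same two-case strategy used in Lemma \ref{L:Pr}, now carried one scale up from $k^{\gamma r_1}$-clusters to $k^{\gamma r_2}$-clusters. I would begin by introducing the shrunk family of discs $\tilde{\OO}^{(3)}_\m$ of radius $\tfrac12 k^{-r_2'}$ centered at the poles of $(\tilde P(\m)(H(\k^{(2)}(\varphi))-k^{2l})\tilde P(\m))^{-1}$, and form $\tilde{\OO}^{(3)}_\Pi := \bigcup_{\m\in\Pi\cap\MM^{(3)}}\tilde{\OO}^{(3)}_\m$. By Lemmas~\ref{L:blackind}, \ref{L:greyind}, \ref{L:whiteind} the total linear size of $\tilde{\OO}^{(3)}_\Pi$ is at most $N^{(2)}_i\cdot k^{-r_2'}$, which, since $r_2' \gg r_2$, is dwarfed by the radius $k^{-r_2'}$ of the individual parent discs $\OO^{(3)}_\m$. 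Throughout, I would use \eqref{dk0-3}, \eqref{dk-3} to absorb the harmless shift from $\k^{(2)}$ to $\k^{(3)}$ and from $\varphi_0$ to $\varphi$ in the disc $|\varphi-\varphi_0|<k^{-k^{r_1}}$.

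\emph{Case 1: $\varphi_0\notin\tilde{\OO}^{(3)}_\Pi$.} Every $k^{\gamma r_1}$-subcluster $\tilde P(\m)$ of $\Pi$ is then at distance $\geq\tfrac12 k^{-r_2'}$ from any pole of its resolvent (non-resonant subclusters are already controlled by Lemma~\ref{L:geometric3}, resonant ones because $\varphi_0\notin\tilde{\OO}^{(3)}_\m$), so Lemma~\ref{L:geometric3} gives a subcluster bound $\leq k^{2\mu r_2+r_2' N^{(1)}}$ that is comfortably smaller than $k^{k^{3\gamma r_1}}$. I would then assemble a block-diagonal reference operator from these subclusters together with the diagonal $H_0$ on the complement inside $\Pi$, and run the Swiss-Cheese perturbation expansion exactly as in the proof of Theorem~\ref{Thm3} (\eqref{5-s}--\eqref{5-4}), using the exponential off-diagonal decay \eqref{Feb6b-3} together with the $k^{\gamma r_2}$-separation of $\Pi$ from the rest of $\Omega(r_3)$ and the fact that $Q\ll k^\delta$. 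The resulting series converges, yields the desired $k^{k^{3\gamma r_1}}$ bounds \eqref{Pr-1n}, \eqref{Pr-2n} at $\varphi_0$, and stability under the tiny shift $\varphi-\varphi_0$ (and under $\varkappa^{(3)}-\varkappa^{(2)}$) propagates them to the whole disc without producing new poles.

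\emph{Case 2: $\varphi_0\in\tilde{\OO}^{(3)}_\Pi$.} Pick $\varphi_*$ on the boundary of the connected component of $\tilde{\OO}^{(3)}_\Pi$ containing $\varphi_0$; by construction $\varphi_*$ lies outside every $\tilde{\OO}^{(3)}_\m$, so Case~1 applies at $\varphi_*$. To count poles inside the connected component and to extend the norm bound from $\varphi_*$ to $\varphi_0$, I would form the determinant $\hat D(\varphi)=\det\!\big(\hat H^{-1}P_i(H(\k^{(3)}(\varphi))-k^{2l})P_i\big)$ with $\hat H$ the Case-1 block-diagonal operator, compare $\hat D$ with the block-diagonal determinant via the trace-norm inequality \eqref{determinants}, and apply Rouché's theorem on the component. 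Each resonant $k^{\gamma r_1}$-subcluster in $\Pi$ contributes at most its own number of sub-poles, and by the definition of $\MM^{(3)}$ the total count is bounded by $N_i^{(2)}$ (Lemmas~\ref{L:blackind}--\ref{L:whiteind}). The norm bounds \eqref{Pr-1n}, \eqref{Pr-2n} then follow from the maximum principle exactly as in Lemma~\ref{L:Pr}, the factor $(k^{-r_2'}/\varepsilon_0)^{N_i^{(2)}}$ reflecting the at most $N_i^{(2)}$ zeros inside the component, and the trace-norm version picking up an extra dimension factor $\leq k^{8\gamma r_2}$.

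The main obstacle will be the bookkeeping of exponents in the Case-1 perturbation expansion: the block-diagonal reference resolvent can be as large as $k^{r_2'N^{(1)}}$ on each resonant $k^{\gamma r_1}$-subcluster (with $N^{(1)}$ as large as $k^{\gamma r_1+3}$ on black subclusters), so one must verify that the corresponding Neumann series, truncated at the appropriate depth dictated by the distances between white, grey, and black sub-regions (cf.\ the stopping times used in \eqref{dva-s}, \eqref{dva}, \eqref{chetire*g}), still beats these large factors by the exponential decay $k^{-\beta}$ per step. This is exactly the multiscale inductive mechanism set up in Steps II-III, and the choice of $r_2' < \tfrac{\beta}{128}k^{\delta_0 r_1-\delta-3}$ in \eqref{r_2'} is what makes it close, yielding the stated constants $k^{k^{3\gamma r_1}}$ on the right-hand sides of \eqref{Pr-1n}, \eqref{Pr-2n}.
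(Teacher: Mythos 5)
Your route is the same one the paper takes: its proof of this lemma is literally ``repeat the proof of Lemma~\ref{L:Pr} with $\MM^{(2)},\OO^{(2)}_\m,\delta,r_1$ replaced by $\MM^{(3)},\OO^{(3)}_\m,r_1,r_2$, using Lemmas~\ref{L:blackind}--\ref{L:whiteind} for the cluster counting and Lemmas~\ref{L:Pr}, \ref{L:Ps} (with $\varepsilon_0=k^{-r_2'}$, $p_\m>k^{-r_2'k^{2\gamma r_1}}$) in place of Lemma~\ref{L:estnonres1} for the subcluster resolvents,'' and your two-case scheme, block reference operator, Rouch\'e/determinant comparison via \eqref{determinants}, and maximum principle reproduce exactly that.

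There is, however, one concrete quantitative misstep. You shrink the discs only by a factor of $2$ (radius $\tfrac12 k^{-r_2'}$) and assert that the total size of $\tilde\OO^{(3)}_\Pi$, namely $N_i^{(2)}\cdot k^{-r_2'}$, is ``dwarfed by'' $k^{-r_2'}$ because $r_2'\gg r_2$. That is false as stated: $N_i^{(2)}\geq k^{\gamma r_2/6-\delta_0 r_2}\gg 1$, so $N_i^{(2)}k^{-r_2'}$ is much larger than $k^{-r_2'}$, whatever $r_2'$ is. The smallness of the total shrunk size relative to the parent radius is precisely what Case~2 needs, in two places: (i) for a subcluster $\m\in\Pi\setminus\MM^{(3)}$ all you know is that $\varphi_0$ is at distance $\geq k^{-r_2'}$ from its poles, so when you move from $\varphi_0$ to $\varphi_*$ on the boundary of the connected component of $\tilde\OO^{(3)}_\Pi$ you must not travel more than a fraction of $k^{-r_2'}$, otherwise $\varphi_*$ may land inside the resonant discs of those subclusters and the Case-1 expansion at $\varphi_*$ is no longer available; (ii) the maximum-principle step produces the factor $\bigl(\mbox{component size}/\varepsilon_0\bigr)^{N_i^{(2)}}$, and to land on the claimed $\bigl(k^{-r_2'}/\varepsilon_0\bigr)^{N_i^{(2)}}$ in \eqref{Pr-1n}, \eqref{Pr-2n} the component size must stay comparable to $k^{-r_2'}$, not $N_i^{(2)}k^{-r_2'}$. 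The fix is the same device the paper uses at the previous scale (radius $k^{-10r_1'}$ for $\OO^{(2)}_\m$ versus $k^{-11r_1'}$ for $\tilde\OO^{(2)}_\m$, with $N\leq ck^{\gamma r_1+3}\ll k^{r_1'}$): shrink by a power, e.g.\ take $\tilde\OO^{(3)}_\m$ of radius $k^{-r_2'-\gamma r_2-\delta_0 r_2}$ (any radius $\rho$ with $N_i^{(2)}\rho=o(k^{-r_2'})$ will do, and $r_2'\gg r_2$ guarantees such a choice is compatible with the stated $\varepsilon_0$), so that the whole component of $\tilde\OO^{(3)}_\Pi$ has size $o(k^{-r_2'})$ and both (i) and (ii) go through. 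With that correction the rest of your argument, including the convergence bookkeeping in Case~1 controlled by $r_2'<\tfrac{\beta}{128}k^{\delta_0 r_1-\delta-3}$, matches the paper.
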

\begin{proof} The proof of this lemma is analogous to that of Lemma \ref{L:Pr} up to the replacement of ${\MM}^{(2)}$ by ${\MM}^{(3)}$, $\OO^{(2)}_{\m}$ by $\OO^{(3)}_{\m}$, and the shift of indices: $\delta $ to $r_1$, $r_1$ to
$r_2$, etc. We apply Lemmas \ref{L:blackind}--\ref{L:whiteind} instead of \ref{L:black}--\ref{L:white} and Lemmas \ref{L:Pr}, \ref{L:Ps} with $\varepsilon _0=k^{-r_2'}$ and $p_\m>k^{-r_2'k^{2\gamma r_1}}$ instead of Lemma
\ref{L:estnonres1}. We also note that $N_i^{(1)}<k^{2\gamma r_1}$ in \eqref{Pr-1}, \eqref{Pr-2}. \end{proof}

\begin{lemma}\label{L:Psn} Let $\varphi _0\in \omega^{(3)}(k,\delta ,\tau )$. Then, the operator
$\Bigl(P_s^j\bigl(H(\k^{(3)}(\varphi
))-k^{2l}I\bigr)P_s^j\Bigr)^{-1}$ has no more than one pole in the
disk $|\varphi-\varphi _0|<k^{-k^{r_1}}$. Moreover,
\begin{equation}\label{Ps-1n}
\left\|\Bigl(P_s^j\bigl(H(\k^{(3)}(\varphi
))-k^{2l}I\bigr)P_s^j\Bigr)^{-1}\right\|<\frac{8k^{-2l+1}}
{p_\m\varepsilon _0},
\end{equation}
\begin{equation}\label{Ps-2n}
\left\|\Bigl(P_s^j\bigl(H(\k^{(3)}(\varphi
))-k^{2l}I\bigr)P_s^j\Bigr)^{-1}\right\|_1<\frac{8k^{-2l+1+
4r_2}}{p_\m\varepsilon_0},
\end{equation}
$\varepsilon _0=\min\{\varepsilon,\,k^{-r_2'}\}$, where
$\varepsilon$ is the distance to the pole of the operator.
\end{lemma}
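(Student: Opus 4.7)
The plan is to mimic the proof of Lemma~\ref{L:Ps} (the Step~III simple-cluster lemma), which in turn follows the pattern established in Lemma~\ref{L:estnonres1}, part~3, with the appropriate shift of scales: $\delta \mapsto r_2$, $r_1 \mapsto r_2$, $\k^{(2)} \mapsto \k^{(3)}$, $\MM^{(2)} \mapsto \MM^{(3)}$, and $r_1' \mapsto r_2'$. The crucial structural input is that a simple cluster at Step~IV consists of a single lattice site $\m$ with $0<p_\m\leq k^{-r_2'k^{2\gamma r_1}}$, so that $\k^{(3)}(\varphi)+\p_\m$ is a genuinely small perturbation of $\k^{(3)}(\varphi)$, and no other element of $\MM(\varphi_0,r_3)$ lies within the $k^{r_2/2}$-neighborhood of $\m$ that defines $P_s^j$ (this was established in the ``Simple region'' discussion under \eqref{Omega-s}).

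First, I would verify that the series \eqref{eigenvalue-3} for $\lambda^{(3)}(\k^{(3)}(\varphi)+\p_\m)$ converges throughout the disc $|\varphi-\varphi_0|<k^{-k^{r_1}}$. Because $p_\m$ is super-polynomially small in $k$ and the disc is much smaller than the domain of analyticity $k^{-r_2'-\delta}$ guaranteed by Theorem~\ref{Thm3} (since $r_2'<\frac{\beta}{128}k^{\delta_0 r_1-\delta-3}\ll k^{r_1}$), the shifted point $\k^{(3)}(\varphi)+\p_\m$ remains inside the domain where all estimates of Theorem~\ref{Thm3} and Lemma~\ref{ldk-3} apply. Combining \eqref{perturbation-3c} with the Step~III analogue gives
$$
\lambda^{(3)}\bigl(\k^{(3)}(\varphi)+\p_\m\bigr)=\lambda^{(2)}\bigl(\k^{(2)}(\varphi)+\p_\m\bigr)+o\bigl(k^{-k^{r_1}}\bigr),
$$
so $\lambda^{(3)}(\k^{(3)}(\varphi)+\p_\m)$ is a holomorphic function of $\varphi$ in the whole disc.

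Second, I would count zeros via Rouch\'e. By the analogue of Lemma~\ref{L: Appendix 1} applied to $\lambda^{(3)}$, the equation
$\lambda^{(3)}(\k^{(3)}(\varphi)+\p_\m)=k^{2l}+\varepsilon_0$, $|\varepsilon_0|\leq p_\m k^\delta$,
has at most two roots in the relevant $\OO_\m$-component, separated (mod $2\pi$) by approximately $\pi$; since the disc $|\varphi-\varphi_0|<k^{-k^{r_1}}$ is much smaller than this separation, it contains at most one such root. Third, I would establish the derivative estimate (the analogue of Lemma~\ref{L:Appendix 2})
$$
\frac{\partial}{\partial\varphi}\lambda^{(3)}\bigl(\k^{(3)}(\varphi)+\p_\m\bigr)=\pm 2l p_\m k^{2l-1}\bigl(1+o(1)\bigr)
$$
near the (unique) root, by differentiating the convergent series, using \eqref{estgder1-3k}--\eqref{estgder1-3phi} together with $p_\m \ll 1$; this immediately yields the scalar lower bound
$$
\bigl|\lambda^{(3)}(\k^{(3)}(\varphi)+\p_\m)-k^{2l}\bigr|\geq k^{2l-1}p_\m \varepsilon,
$$
whenever $\varphi$ is at distance $\varepsilon$ from the pole, this being the analogue of Lemma~\ref{L:3.7.1}.

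Finally, to convert this scalar estimate into the operator resolvent bound, I would invoke the analogue of Lemma~\ref{L:July5}, which provides
$$
\Bigl\|\bigl(\lambda^{(3)}(\k^{(3)}(\varphi)+\p_\m)-k^{2l}\bigr)\bigl(P_s^j(H(\k^{(3)}(\varphi))-k^{2l})P_s^j\bigr)^{-1}\Bigr\|\leq 8,
$$
giving \eqref{Ps-1n}; the trace-norm bound \eqref{Ps-2n} then follows by multiplying with $\dim P_s^j\leq 16k^{4r_2}$ (crude count of lattice points in a $k^{r_2/2}$-box in $\||\cdot\||$-norm over $\Z^4$). The saving feature, in contrast to Lemma~\ref{L:Prn}, is that the simple cluster carries a single rank-one resonance, so only one derivative cycle of Rouch\'e is needed and no amplification $(k^{-r_2'}/\varepsilon_0)^{N^{(2)}_i}$ appears. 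The main technical obstacle is ensuring that the lemmas of Appendix~3 transfer verbatim to the Step~III function $\lambda^{(3)}$: this requires checking that (i) all the smallness and monotonicity estimates used in their proofs are stable under the much larger disc size $k^{-k^{r_1}}$ (compared to $k^{-r_1'-\delta}$ in Step~III), and (ii) the derivative bounds from Lemma~\ref{L:derivatives-3} dominate the correction terms from Step~III uniformly in $\varphi$ in this disc, which is verified by the explicit inequality $k^{-\frac15\beta k^{r_1-\delta}}\ll p_\m k^{2l-1}\cdot k^{-k^{r_1}}$ that holds because $p_\m>k^{-r_2'k^{2\gamma r_1}}$ and $r_2'\ll \beta k^{r_1-\delta}$.
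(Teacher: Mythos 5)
Your overall route is exactly the paper's: Lemma \ref{L:Psn} is stated there without a separate proof, as the verbatim analogue of Lemma \ref{L:Ps}, and the intended argument is the one you outline — analyticity of $\lambda^{(3)}$ along $\k^{(3)}(\varphi)+\p_\m$, closeness to the previous step, Rouch\'e via the analogue of Lemma \ref{L: Appendix 1}, the derivative estimate and the lower bound $|\lambda^{(3)}-k^{2l}|\geq k^{2l-1}p_\m\varepsilon$ (analogues of Lemmas \ref{L:Appendix 2}, \ref{L:3.7.1}), the operator bound via the analogue of Lemma \ref{L:July5}, and the dimension count for the trace norm.

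However, your closing quantitative justification is wrong as written, and this is the one place where a real argument is needed. First, $p_\m>k^{-r_2'k^{2\gamma r_1}}$ is not available for a simple cluster: by \eqref{Omega-s} the inequality $p_\m\leq k^{-r_2'k^{2\gamma r_1}}$ is the \emph{defining upper} bound of $\Omega_s^{(3)}(r_3)$; the only lower bound is $p_\m\gtrsim k^{-\mu r_3}$ with $r_3>k^{r_1}$ by \eqref{r_2IV}. Second, the displayed error $o(k^{-k^{r_1}})$ and the inequality $k^{-\frac15\beta k^{r_1-\delta}}\ll p_\m k^{2l-1}k^{-k^{r_1}}$ are false: since $k^{r_1}=k^{\delta}k^{r_1-\delta}\gg\frac{\beta}{5}k^{r_1-\delta}$, one has $k^{-\frac15\beta k^{r_1-\delta}}\gg k^{-k^{r_1}}$, so \eqref{perturbation-3c} and \eqref{dk0-3} only give $\lambda^{(3)}(\k^{(3)}(\varphi)+\p_\m)-\lambda^{(2)}(\k^{(2)}(\varphi)+\p_\m)=O\bigl(k^{-\frac15\beta k^{r_1-\delta}}\bigr)$, and your inequality is in fact reversed. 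Consequently, an absolute-size comparison of the Step III corrections with $p_\m k^{2l-1}$ cannot establish $\partial_\varphi\lambda^{(3)}(\k^{(3)}(\varphi)+\p_\m)=\pm2lp_\m k^{2l-1}(1+o(1))$ in the regime where $p_\m$ is near its minimum $\sim k^{-\mu r_3}$, which is precisely the simple-cluster regime. The transfer of the Appendix 3 lemmas must instead use that all corrections are \emph{proportional to} $p_\m$: since $\lambda^{(3)}(\k^{(3)}(\varphi))\equiv k^{2l}$, one has
\begin{equation*}
\frac{\partial}{\partial\varphi}\lambda^{(3)}\bigl(\k^{(3)}(\varphi)+\p_\m\bigr)
=\Bigl(\nabla\lambda^{(3)}\bigl(\k^{(3)}(\varphi)+\p_\m\bigr)-\nabla\lambda^{(3)}\bigl(\k^{(3)}(\varphi)\bigr),\ \tfrac{\partial}{\partial\varphi}\k^{(3)}(\varphi)\Bigr)_{\R},
\end{equation*}
and the increment of the gradient over the shift $\p_\m$ is controlled by second-derivative bounds (this is the same device as the $T_3$ estimate in the proof of Lemma \ref{L:Appendix 2}); the corrections \eqref{estgder2-3}--\eqref{gulf3-3} remain super-exponentially small because $r_2'k^{2\gamma r_1}\ll\beta k^{r_1-\delta}$, so the resulting error is $o(p_\m k^{2l-1})$ uniformly in $p_\m$. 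With this repair the argument goes through exactly as in Lemma \ref{L:Ps}, and the rest of your proposal (Rouch\'e count of a single root in the tiny disc, the lower bound of Lemma \ref{L:3.7.1} type with $\varepsilon_0=\min\{\varepsilon,k^{-r_2'}\}$, the Lemma \ref{L:July5} analogue, and the factor $k^{4r_2}$ from the dimension of $P_s^j$) is fine.
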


Note that $p_{\m}>k^{-\mu r_3}$ when $\m \in \Omega (r_3)$. The
analogues of Lemma~\ref{L:boundary} and Corollary~\ref{C:PHP-2} also
hold.

\subsubsection{Resonant and Nonresonant Sets for Step IV \label{GSIV}}

We divide $[0,2\pi )$ into $[2\pi k^{k^{r_1}}]+1$ intervals
$\Delta_m^{(3)}$ with the length not bigger than $k^{-k^{r_1}}$. If a
particular interval belongs to $\OO^{(3)}$ we ignore it; otherwise,
let $\varphi_0(m)\not\in\OO^{(3)}$ be a point inside the $\Delta_m^{(3)}$.
Let
\begin{equation}\W_m^{(3)}=\{\varphi \in \W^{(3)}:\ | \varphi -\varphi
_0(m)|<4k^{-k^{r_1}}\}. \label{W2mind} \end{equation} Clearly,
neighboring sets $\W_m^{(3)}$  overlap (because of the multiplier 4
in the inequality), they cover $\hat \W^{(3)}$ , which is
the restriction of $\W^{(3)}$ to the $2k^{-k^{r_1}}$-neighborhood of
$[0,2\pi )$. For each $\varphi \in \hat \W^{(3)}$ there is an $m$
such that $|\varphi -\varphi _{0}(m)|<4k^{-k^{r_1}}$. We consider
the poles of the resolvent  $\left(P^{(3)}
(H(\k^{(3)}(\varphi))-k^{2l})P^{(3)}\right)^{-1}$ in $\hat
\W_m^{(3)}$ and denote them by $\varphi^{(3)} _{mj}$, $j=1,...,M_m$.
As before, the resolvent has a block structure. The number of blocks
clearly cannot exceed the number of elements in $\Omega (r_3)$, i.e.
$k^{4r_3}$. Using the estimates for the number of poles for each
block, the estimate being provided by Lemma \ref{L:Prn}, Part 1, we
can roughly estimate the number of poles of the resolvent by
$k^{4r_3+r_2}$. Next, let $r_3'>k^{r_1}$ and $\OO^{(4)}_{mj}$ be the
disc of the radius $k^{-r_3'}$ around $\varphi ^{(3)}_{mj}$.
\begin{definition} The set
\begin{equation}\OO^{(4)}=\cup _{mj}\OO^{(4)}_{mj} \label{O4}
\end{equation}
we call the forth resonant set. The set
\begin{equation}\W^{(4)}= \hat\W^{(3)}\setminus \OO^{(4)}\label{W4}
\end{equation}
is called the forth non-resonant set. The set
\begin{equation}\omega^{(4)}= \W^{(4)}\cap [0,2\pi) \label{w4}
\end{equation}
is called the forth real non-resonant set. \end{definition} The
following statements can be proven in the same way as
Lemmas~\ref{L:geometric3}, \ref{4.16} and \ref{estnonres0-1}.
\begin{lemma}\label{L:geometric4n}Let  $r_3'>\mu r_3>k^{r_1}$, $\varphi \in \W^{(4)}$, $\varphi
_0(m)$ corresponds  to an interval $\Delta _m^{(3)}$ containing $\Re
\varphi $. Let $\Pi $ be one of the components $\Pi _s^j(\varphi
_0(m))$, $\Pi _b^j(\varphi _0(m))$, $\Pi _g^j(\varphi _0(m))$, $\Pi
_w^j(\varphi _0(m))$ and
 $P(\Pi )$ be the projection corresponding to $\Pi $. Let also
 $\varkappa \in \C: |\varkappa-\varkappa^{(3)}(\varphi )|<k^{-r_3'k^{2\gamma
r_2}}$. Then,
\begin{equation} \label{March3-2n4} \left\|\left(P(\Pi )
\left(H\big(\k(\varphi )\big)-k^{2l}I\right)P(\Pi
)\right)^{-1}\right\|<k^{2\mu r_3+r_3'N^{(2)}},\end{equation}
\begin{equation} \label{March3-3n4}
\left\|\left(P(\Pi )\left(H\big(\k(\varphi
)\big)-k^{2l}I\right)P(\Pi )\right)^{-1}\right\|_1<k^{(2\mu+1)
r_3+r_3'N^{(2)}},\end{equation} $N^{(2)}$ corresponding to the color
of $\Pi $ ($N^{(2)}=1,\ k^{\gamma r_2+150l\gamma r_1},\ k^{\gamma
r_2/2+\delta _0r_2},\ k^{\gamma r_2/6-\delta _0r_2}$ for simple,
black, grey and white clusters, correspondingly).
\end{lemma}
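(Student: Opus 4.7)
The plan is to mirror the proof of Lemma \ref{L:geometric3} at the next scale, using the block-diagonal structure inherited from Corollary \ref{C:PHP-2} (now applied to $P^{(3)}HP^{(3)}$) together with the cluster-wise resolvent bounds of Lemmas \ref{L:Psn} and \ref{L:Prn}. First I would fix $\k=\k^{(3)}(\varphi)$. By the definition $\W^{(4)}=\hat\W^{(3)}\setminus \OO^{(4)}$, for any $\varphi\in\W^{(4)}$ the distance from $\varphi$ to every pole $\varphi^{(3)}_{mj}$ of the full block-diagonal resolvent $\bigl(P^{(3)}(H(\k^{(3)}(\varphi))-k^{2l})P^{(3)}\bigr)^{-1}$ inside $\hat\W_m^{(3)}$ is at least $k^{-r_3'}$. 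Since $P(\Pi)$ is a diagonal summand of $P^{(3)}$, the poles of $(P(\Pi)(H(\k^{(3)}(\varphi))-k^{2l})P(\Pi))^{-1}$ form a subset of the poles above, hence $\varepsilon\geq k^{-r_3'}$ in the hypotheses of Lemmas \ref{L:Psn} and \ref{L:Prn}.

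Next I would read off the desired estimates for $\k=\k^{(3)}(\varphi)$ according to the color of $\Pi$. For a simple cluster $\Pi=\Pi_s^j$, Lemma \ref{L:Psn} with $\varepsilon_0=k^{-r_3'}$ and the lower bound $p_\m\geq 2\pi C_\varepsilon k^{-(\mu-1+\epsilon)r_3}$ from Lemma \ref{psnorms} (applied to $\m\in\Omega(r_3)$) yields $N^{(2)}=1$ and a bound of order $k^{2\mu r_3+r_3'}$. For a white, grey, or black cluster, Lemma \ref{L:Prn} gives the bound $k^{k^{3\gamma r_1}}(k^{-r_3'}/\varepsilon_0)^{N_i^{(2)}}$, which under $\varepsilon_0=k^{-r_3'}$ reduces to $k^{r_3'N^{(2)}}$ times a subleading prefactor absorbed into $k^{2\mu r_3}$ (since $r_3>k^{r_1}$ dominates $k^{3\gamma r_1}$). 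Summing the contributions produces \eqref{March3-2n4}; multiplying by the trace-class dimension bound of $P(\Pi)$ (at most $k^{r_3}$ by Lemmas \ref{L:blackind}--\ref{L:whiteind}) yields the $\|\cdot\|_1$ estimate \eqref{March3-3n4}.

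Finally I would propagate the estimate from $\varkappa^{(3)}(\varphi)$ to a nearby $\varkappa$ with $|\varkappa-\varkappa^{(3)}(\varphi)|<k^{-r_3'k^{2\gamma r_2}}$ by a Hilbert identity argument: the perturbation $P(\Pi)(H(\k)-H(\k^{(3)}(\varphi)))P(\Pi)$ is diagonal with norm $O(k^{2l-1}|\varkappa-\varkappa^{(3)}(\varphi)|)$, and
\[
\bigl\|P(\Pi)(H(\k)-H(\k^{(3)}(\varphi)))P(\Pi)\bigr\|\,\bigl\|(P(\Pi)(H(\k^{(3)}(\varphi))-k^{2l})P(\Pi))^{-1}\bigr\| \leq ck^{2l-1-r_3'k^{2\gamma r_2}+2\mu r_3+r_3'N^{(2)}},
\]
which is $o(1)$ because $N^{(2)}\leq k^{2\gamma r_2}$ uniformly in the cluster color (this is where the hierarchy $k^{2\gamma r_2}\gg N^{(2)}$ is used). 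The resulting Neumann series converges, absorbing a factor $2$ into the constants while preserving both \eqref{March3-2n4} and \eqref{March3-3n4}.

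The one genuinely delicate point is the bookkeeping in the last step: one must verify that the exponent $r_3'N^{(2)}$ in the resolvent bound is strictly dominated by $r_3'k^{2\gamma r_2}$ from the perturbation radius, i.e.\ that $N^{(2)}<k^{2\gamma r_2}$ holds for every admissible color. This is built into the choice of scales ($N_1^{(2)}=k^{\gamma r_2+150l\gamma r_1}$ is the largest, and $150l\gamma r_1\ll \gamma r_2$ since $r_2>k^\delta$), so the check is routine once the parameters are aligned correctly; no new geometric input is needed beyond the previous lemmas.
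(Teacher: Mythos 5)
Your proposal is correct and follows essentially the same route as the paper, which proves this lemma exactly as Lemma \ref{L:geometric3}: for $\k=\k^{(3)}(\varphi)$ the bounds come from the definition of $\W^{(4)}$ (distance at least $k^{-r_3'}$ to the poles of the block-diagonal resolvent) combined with Lemmas \ref{L:Prn} and \ref{L:Psn} (with $p_\m>k^{-2\mu r_3}$), followed by stability of the estimate under a perturbation of $\varkappa^{(3)}$ of size $k^{-r_3'k^{2\gamma r_2}}$ via the Hilbert identity. Only note that the ratio in \eqref{Pr-1nlast}/\eqref{Pr-1n} is $k^{-r_2'}/\varepsilon_0$, not $k^{-r_3'}/\varepsilon_0$; with $\varepsilon_0\geq k^{-r_3'}$ this still gives the factor $k^{r_3'N^{(2)}}$ you use, so the conclusion is unaffected.
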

By total size of the set $\OO^{(4)}$ we mean the sum of the sizes of
its connected components.
\begin{lemma}\label{7.10} Let $r_3'\geq (\mu+10)r_3$, $r_3>k^{r_1}$. Then, the size of each connected component of
 $\OO^{(4)}$ is less
than $32k^{4r_3-r_3'}$. The total size of $\OO^{(4)}$ is less than
$k^{-r_3'/2}$.
\end{lemma}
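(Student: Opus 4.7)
The plan is to imitate the proof of Lemma \ref{4.16} one scale up: count how many disks $\OO^{(4)}_{mj}$ sit in a single window $\W_m^{(3)}$, observe that the resulting total measure is tiny compared to the length of $\Delta_m^{(3)}$, so no connected component of $\OO^{(4)}$ can straddle an entire $\W_m^{(3)}$, and then sum over the intervals $\Delta_m^{(3)}$ covering $[0,2\pi)$.

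First I would use the block decomposition from the analogue of Corollary~\ref{C:PHP-2} to bound the number of poles of $\left(P^{(3)}(H(\k^{(3)}(\varphi))-k^{2l})P^{(3)}\right)^{-1}$ inside $\hat\W_m^{(3)}$. The number of distinct blocks is at most the cardinality of $\Omega(r_3)$, namely $k^{4r_3}$, and by Lemma~\ref{L:Prn} Part 1 together with Lemma~\ref{L:Psn} each block contributes at most $\max_i N_i^{(2)}\leq k^{r_2}$ poles (using $N^{(2)}_1=k^{\gamma r_2+150l\gamma r_1}$ and $r_1\ll r_2$, and that simple and $k^\delta$-blocks contribute at most one pole each). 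This gives the crude bound of $k^{4r_3+r_2}$ disks $\OO^{(4)}_{mj}$ inside $\W_m^{(3)}$, each of radius $k^{-r_3'}$.

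Hence the total size of $\OO^{(4)}\cap\W_m^{(3)}$ is at most $2k^{4r_3+r_2-r_3'}$; under the hypotheses $r_3'\geq(\mu+10)r_3$ and $r_2\ll r_3$ this is much smaller than the length $8k^{-k^{r_1}}$ of $\W_m^{(3)}$ (since $r_3>k^{r_1}$ forces $r_3'-4r_3-r_2\gg k^{r_1}$). Consequently no connected component can go across $\W_m^{(3)}$, so each component is entirely contained in some $\W_m^{(3)}$ and its size is at most $32k^{4r_3-r_3'}$, the factor $32$ absorbing the small $k^{r_2}$ discrepancy and the diameter-vs-radius factor. Summing over the $[2\pi k^{k^{r_1}}]+1$ intervals $\Delta_m^{(3)}$, the total size of $\OO^{(4)}$ is bounded by $2k^{k^{r_1}}\cdot k^{4r_3+r_2-r_3'}$, and a rough accounting using $r_3'\geq(\mu+10)r_3\geq 2(4r_3+r_2+k^{r_1})$ yields the claimed bound $k^{-r_3'/2}$.

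The only genuine check is the arithmetic linking the three scale parameters $r_2$, $r_3$, $r_3'$ (and the auxiliary $k^{r_1}$ coming from the width of $\Delta_m^{(3)}$); every inequality needed is a strict consequence of $r_3>k^{r_1}$ and $r_3'\geq(\mu+10)r_3$ together with $r_2<k^{\gamma10^{-7}r_1}$, so no real obstacle arises. The underlying structural inputs---the pole counts for each colored cluster from Lemmas~\ref{L:Prn}--\ref{L:Psn} and the block decomposition---have already been established, which is why the proof reduces to bookkeeping.
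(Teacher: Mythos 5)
Your proposal is correct and is essentially the paper's own argument: the paper proves this lemma exactly as Lemma \ref{4.16} one scale up, using the rough per-window pole count $k^{4r_3+r_2}$ established just before the statement, the comparison of the resulting total measure with the length of $\Delta_m^{(3)}$ to rule out components spanning a whole window, and summation over the $\sim k^{k^{r_1}}$ intervals, with $r_3'\geq(\mu+10)r_3$ and $r_3>k^{r_1}$ supplying the arithmetic. The one caveat is that the factor $k^{r_2}$ is not literally ``absorbed'' by the constant $32$ in the single-component bound, but this looseness is present in the paper's own formulation as well and is harmless, since under the stated hypotheses both versions of the bound are far smaller than any scale used downstream.
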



\begin{lemma}\label{estnonres0-1n} Let $\varphi\in\W^{(3)}$ and
$C_4$ be the circle $|z-k^{2l}|=k^{-2r_3'k^{2\gamma r_2}}$. Then
$$
\left\|\left(P(r_2)(H(\k^{(3)}(\varphi))-z)P(r_2)\right)^{-1}\right\|\leq
4^3k^{2r_3'k^{2\gamma r_2}}. $$ \end{lemma}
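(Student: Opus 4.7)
The plan is to mimic the proof of Lemma \ref{estnonres0-1} with the indices shifted one step up. The target operator $P(r_2)(H(\k^{(3)}(\varphi))-z)P(r_2)$ is exactly $H^{(3)}(\k^{(3)}(\varphi))-z$, so the statement concerns the resolvent of the $\tilde{}$step\,III$\tilde{}$ operator at its own iterated quasi-momentum, on a new and much smaller circle $C_4$. The two inputs I expect to use are: (i) the step III resolvent bound \eqref{tik-tak*}, namely $\|(H^{(3)}(\k^{(2)}(\varphi))-z)^{-1}\|<4^3 k^{2r_2' k^{2\gamma r_1}}$ for $z\in C_3$, and (ii) the shift estimate \eqref{dk0-3} controlling $\k^{(3)}-\k^{(2)}$.

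First, I will work with $\k^{(2)}(\varphi)$ and a still larger annulus around $k^{2l}$ (the step III circle $C_3$), where \eqref{tik-tak*} already gives a good bound. Following the strategy used in Lemma \ref{estnonres0}, I introduce the analytic-in-$z$ ratio
\[
D(z):=\det\!\Big(P(r_2)(H(\k^{(2)}(\varphi))-z)P(r_2)\cdot\big(P(r_2)(\tilde H^{(2)}(\k^{(2)}(\varphi))-z)P(r_2)\big)^{-1}\Big),
\]
show via \eqref{determinants} and the convergent perturbation series that $D(z)=1+o(1)$ on the boundary of $C_3$, and invoke Rouch\'e's theorem to conclude that $\det(H^{(3)}(\k^{(2)}(\varphi))-z)$ has exactly one simple zero inside $C_3$, located at the single eigenvalue $\lambda^{(3)}(\k^{(2)}(\varphi))$ produced by Theorem \ref{Thm3}. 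Then the maximum principle applied to $z\mapsto (z-\lambda^{(3)})(H^{(3)}(\k^{(2)}(\varphi))-z)^{-1}$ transfers the bound from $C_3$ down to the tiny circle $C_4$, giving
\[
\|(H^{(3)}(\k^{(2)}(\varphi))-z)^{-1}\|\leq 4^2 k^{2r_3' k^{2\gamma r_2}}\qquad (z\in C_4),
\]
once we note that $|z-\lambda^{(3)}|\approx k^{-2r_3' k^{2\gamma r_2}}$ on $C_4$ and that $\lambda^{(3)}(\k^{(2)}(\varphi))=k^{2l}+o(k^{-2r_3' k^{2\gamma r_2}})$.

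Second, I will pass from $\k^{(2)}(\varphi)$ to $\k^{(3)}(\varphi)$. By \eqref{dk0-3}, $\k^{(3)}-\k^{(2)}$ is exponentially smaller than $(\varepsilon_0^{(4)})^2$, so the perturbation $\Delta:=H^{(3)}(\k^{(3)})-H^{(3)}(\k^{(2)})$ satisfies $\|\Delta\|\|{(H^{(3)}(\k^{(2)})-z)^{-1}}\|\ll 1$ on $C_4$. The Hilbert identity therefore gives
\[
(H^{(3)}(\k^{(3)}(\varphi))-z)^{-1}=(H^{(3)}(\k^{(2)}(\varphi))-z)^{-1}\sum_{r\geq 0}\bigl(-\Delta(H^{(3)}(\k^{(2)}(\varphi))-z)^{-1}\bigr)^{r},
\]
and summing the geometric series at most doubles the preceding bound, yielding the claimed factor $4^3 k^{2r_3' k^{2\gamma r_2}}$.

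The only delicate point is verifying that the perturbation series underlying $D(z)$ indeed converges on the boundary of $C_3$; this is where one must re-use the proof of \eqref{tik-tak*} and the inequalities \eqref{5-0}--\eqref{5-4}. Once that is in place, the Rouch\'e/maximum-principle/Hilbert-identity trio is routine. I expect the only real bookkeeping difficulty to be choosing the radii of the intermediate circles so that both the Rouch\'e step and the shift from $\k^{(2)}$ to $\k^{(3)}$ remain valid simultaneously; the constraints $r_3'>\mu r_3>k^{r_1}$ from Lemma \ref{7.10} and the hierarchy $r_3>k^{\gamma 10^{-7}r_1}$ from \eqref{r_2} (applied one level up) give ample slack for this.
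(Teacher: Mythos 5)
Your overall strategy (use \eqref{tik-tak*}, locate the single pole inside $C_3$, contract via the maximum principle, account for the change of quasi-momentum) is indeed the paper's strategy, but the order in which you perform the last two operations creates a genuine gap. You contract down to $C_4$ while still sitting at $\k^{(2)}(\varphi)$, and for that you need ``$\lambda^{(3)}(\k^{(2)}(\varphi))=k^{2l}+o(k^{-2r_3'k^{2\gamma r_2}})$''. That estimate is not available: all one knows is $\lambda^{(3)}(\k^{(3)}(\varphi))=k^{2l}$ together with \eqref{dk0-3} and \eqref{estgder1-3k}, which give only $\lambda^{(3)}(\k^{(2)}(\varphi))=k^{2l}+O\bigl(k^{-\frac15\beta k^{r_1-\delta}-\beta}\bigr)$. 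Since $r_2>k^{\delta}$ while $r_1<k^{\delta/8}$, the radius $\varepsilon_0^{(4)}=k^{-2r_3'k^{2\gamma r_2}}$ of $C_4$ is incomparably smaller than $k^{-\frac15\beta k^{r_1-\delta}}$, so the pole of $\bigl(H^{(3)}(\k^{(2)}(\varphi))-z\bigr)^{-1}$ may well lie at distance comparable to (or exactly equal to) $\varepsilon_0^{(4)}$ from $k^{2l}$, i.e.\ on or right next to $C_4$. In that case your intermediate bound $4^2k^{2r_3'k^{2\gamma r_2}}$ on $C_4$ is false, and the subsequent Hilbert-identity shift, whose convergence requires precisely that bound, cannot be run.

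The fix --- and the paper's actual one-line proof --- is to avoid any contraction at $\k^{(2)}$: the estimate \eqref{tik-tak*} was established for all $\varkappa$ with $|\varkappa-\varkappa^{(2)}(\varphi)|\le\varepsilon_0^{(3)}k^{-2l+1-\delta}$ and $z\in C_3$, a window which contains $\varkappa^{(3)}(\varphi)$ by Lemma \ref{ldk-3}, so you may work directly with $H^{(3)}(\k^{(3)}(\varphi))$. Its unique eigenvalue inside $C_3$ equals $k^{2l}$ exactly (this is the defining property of $\varkappa^{(3)}$), i.e.\ the pole sits at the common center of $C_3$ and $C_4$. Hence $(z-k^{2l})\bigl(H^{(3)}(\k^{(3)}(\varphi))-z\bigr)^{-1}$ is analytic inside $C_3$, bounded there by $4^3$ (radius of $C_3$ times the bound \eqref{tik-tak*}), and the maximum principle evaluated on $C_4$ gives the stated estimate with the constant $4^3$; no Rouch\'e argument and no extra shift are needed, since Theorem \ref{Thm3} already guarantees the single eigenvalue inside $C_3$. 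Equivalently, if you insist on starting from $\k^{(2)}(\varphi)$, perform the shift to $\k^{(3)}(\varphi)$ on $C_3$ first, where it costs only a bounded factor, and contract afterwards.
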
 We prove this lemma
using \eqref{tik-tak*}.

\section{STEP IV}

\subsection{Operator $H^{(4)}$. Perturbation Formulas}
Let $P(r_3)$ be an orthogonal projector onto $\Omega(r_3):=\{\m:\
|\|\p_\m\||\leq k^{r_3}\}$ and $H^{(4)}=P(r_3)HP(r_3) $. From now on,
we assume \begin{equation} \label{r_2IV}k^{r_1}<r_3<k^{\gamma
10^{-7}r_2},\ \ \ \ \ k^{2\gamma 10^{-4}r_2}<r_3'<k^{\delta_0
r_2/2}.\end{equation} We consider $H^{(4)}(\k^{(3)}(\varphi ))$ as a
perturbation of $\tilde H^{(3)}(\k^{(3)}(\varphi ))$:
$$\tilde H^{(3)}:=\tilde
P_j^{(3)}H\tilde P_j^{(3)}+\left(P(r_3)-\tilde
P_j^{(3)}\right)H_0,$$ where $H=H(\k^{(3)}(\varphi ))$,
$H_0=H_0(\k^{(3)}(\varphi ))$ and $\tilde P_j^{(3)}$ is the
projection $P^{(3)}$ corresponding to $\varphi _{0}(j)$ in the
interval $\Delta _j^{(3)}$ containing $\varphi $, see \eqref{P(3)}.  Note that the operator
$\tilde H^{(3)}$ has a block structure, the block $\tilde
P_j^{(3)}H\tilde P_j^{(3)}$ being composed of smaller blocks
$P_iHP_i$, $i=0,...,5$.
By analogy with \eqref{W2*}--\eqref{G3},
\begin{equation}W^{(3)}=H^{(4)}-\tilde H^{(3)}=P(r_3)VP(r_3)-\tilde P_j^{(3)}V\tilde P_j^{(3)}, \label{W2IV}\end{equation}
\begin{equation}\label{g3IV} g^{(4)}_r({\k}):=\frac{(-1)^r}{2\pi
ir}\hbox{Tr}\oint_{C_4}\left(W^{(3)}(\tilde
H^{(3)}({\k})-zI)^{-1}\right)^rdz,
\end{equation} \begin{equation}\label{G3IV}
G^{(4)}_r({\k}):=\frac{(-1)^{r+1}}{2\pi i}\oint_{C_4}(\tilde
H^{(3)}({\k})-zI)^{-1}\left(W^{(3)}(\tilde
H^{(3)}({\k})-zI)^{-1}\right)^rdz,
\end{equation}
where $C_4$ is the circle $|z-k^{2l}|=\varepsilon _0^{(4)}$,
$\varepsilon _0^{(4)}=k^{-2r_3'k^{2\gamma r_2}}.$

The proof of the following statements is analogous to the one in the
previous step (see Theorem~\ref{Thm3}, Corollary~\ref{corthm3} and
Lemma~\ref{L:derivatives-3}) up to the replacement of  $r_3$ by $r_4$, $r_2$ by $r_3$, $r_1$ by $r_2$, etc.

\begin{theorem} \label{Thm3IV} Suppose  $k>k_*$, $\varphi $ is in
the real  $k^{-r_3'-\delta }$-neighborhood of $\omega
^{(4)}(k,\delta,\tau )$ and $\varkappa\in\R$,
$|\varkappa-\varkappa^{(3)}(\varphi )|\leq \varepsilon
^{(4)}_0k^{-2l+1-\delta }$, $\k=\varkappa(\cos \varphi ,\sin \varphi
)$. Then,  there exists a single eigenvalue of $H^{(4)}({\k})$ in
the interval $\varepsilon _4( k,\delta,\tau )=\left(
k^{2l}-\varepsilon _0^{(4)}, k^{2l}+\varepsilon _0^{(4)}\right)$. It
is given by the absolutely converging series:
\begin{equation}\label{eigenvalue-3IV}\lambda^{(4)}({\k})=\lambda^{(3)}({\k})+
\sum\limits_{r=2}^\infty g^{(4)}_r({\k}).\end{equation} For
coefficients $g^{(4)}_r({\k})$ the following estimates hold:
\begin{equation}\label{estg3IV} |g^{(4)}_r({\k})|<k^{-\frac{\beta}{5}
k^{r_2-r_1 }-\beta (r-1)}.
\end{equation}
The corresponding spectral projection is given by the series:
\begin{equation}\label{sprojector-3IV}
\E ^{(4)}({\k})=\E^{(3)}({\k})+\sum\limits_{r=1}^\infty
G^{(4)}_r({\k}), \end{equation} $\E^{(3)}({\k})$ being the spectral
projection of $H^{(3)}(\k)$. The operators $G^{(4)}_r({\k})$ satisfy
the estimates:
\begin{equation}
\label{Feb1a-3IV}
\left\|G^{(4)}_r({\k})\right\|_1<k^{-\frac{\beta}{10} k^{r_2-r_1 }
-\beta r},
\end{equation}
\begin{equation}G^{(4)}_r({\k})_{\s\s'}=0,\ \mbox{when}\ \ \
2rk^{\gamma r_2+150l\gamma
r_1}+3k^{r_2}<\||\p_\s\||+\||\p_{\s'}\||.\label{Feb6a-3IV}
\end{equation}
\end{theorem}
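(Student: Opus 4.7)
The plan is to imitate the proof of Theorem~\ref{Thm3} at Step III verbatim, with the index shift $r_1\mapsto r_2$, $r_2\mapsto r_3$, $r_2'\mapsto r_3'$, $\varepsilon_0^{(3)}\mapsto \varepsilon_0^{(4)}$, $\tilde H^{(1)}\mapsto \tilde H^{(2)}$, $\tilde H^{(2)}\mapsto \tilde H^{(3)}$, etc., and to replace the analytic input about clusters supplied by Lemmas~\ref{L:geometric3}, \ref{estnonres0-1} with its Step~IV analog, Lemmas~\ref{L:geometric4n}, \ref{estnonres0-1n}, and to use the color-block estimates of Lemmas~\ref{L:Pnrn}, \ref{L:Prn}, \ref{L:Psn} in place of Lemmas~\ref{L:Pnr}, \ref{L:Pr}, \ref{L:Ps}. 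Concretely, I would write
\begin{equation*}
(H^{(4)}-z)^{-1}=\sum_{r=0}^{\infty}(\tilde H^{(3)}-z)^{-1}\bigl[-W^{(3)}(\tilde H^{(3)}-z)^{-1}\bigr]^r,\qquad z\in C_4,
\end{equation*}
and from Lemmas~\ref{L:geometric4n} and \ref{estnonres0-1n} derive $\|(\tilde H^{(3)}-z)^{-1}\|<2\cdot 4^3k^{2r_3'k^{2\gamma r_2}}$, so that the convergence reduces to the master bound
\begin{equation*}
\bigl\|(\tilde H^{(3)}-z)^{-1}W^{(3)}\bigr\|<k^{-\beta},
\end{equation*}
with $\beta=2l-2-41\mu\delta$, exactly as in \eqref{March5}.

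Next I would prove the master bound by splitting it, as in the proof of Theorem~\ref{Thm3}, into one inequality for each color in the multiscale decomposition $P^{(3)}=P(r_2)+P_{nr}'(r_3)+P_s(r_3)+P_w'(r_3)+P_g'(r_3)+P_b(r_3)$. The non-resonant and simple pieces follow directly from Lemmas~\ref{L:Pnrn} and \ref{L:Psn} combined with $P(r_3)V(P(r_3)-\tilde P_j^{(3)})=0$ on distant supports (Lemma~\ref{L:boundary} analog in the $r_2/r_3$ scale), and the $P(r_2)$ piece is handled exactly as in the $P(r_1)$ case of Step~III by re-expanding $(H^{(3)}-z)^{-1}$ as $\sum_s P(r_2)(\tilde H^{(2)}-z)^{-1}A_*^s P^\partial(r_2)$ and using that $V$ is a trigonometric polynomial of degree $Q\ll k^{r_1}$ so that $S-1$ applications of $W$ cannot reach the interior of $\Omega(r_2)$, where $S\asymp k^{r_2-r_1}$. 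For white, grey and black components I would set up the internal perturbation formula \eqref{chetire*g} between $\hat H=PHP$ and $\hat H_0$ built from sub-clusters (now at scales $k^{\gamma r_2/6}$, $k^{\gamma r_2/2+2\delta_0 r_2}$, $k^{\gamma r_2+\delta_0 r_2}$), combining the outer bound from Lemma~\ref{L:geometric4n} with the inner estimate $\|[W(\hat H_0-k^{2l})^{-1}]^R P^\partial\|\le k^{-\beta R}$, where $R$ is controlled by the trigonometric-polynomial reach of $V$. The choices in \eqref{r_2IV}, in particular $r_3'<k^{\delta_0 r_2/2}$, are tailored so that $2r_3'<\tfrac{\beta}{64}k^{\delta_0 r_2-\delta-3}$, which guarantees that the ``outer times inner'' product is much smaller than $k^{-\beta}$.

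Once the resolvent series converges with $\|(H^{(4)}-z)^{-1}\|<4^4 k^{2r_3'k^{2\gamma r_2}}$ on $C_4$, integrating over $C_4$ produces the eigenvalue formula \eqref{eigenvalue-3IV} and the projector formula \eqref{sprojector-3IV}. The bound \eqref{Feb1a-3IV} on $\|G^{(4)}_r\|_1$ is obtained by the standard $A=A_0+A_1+A_2$ decomposition with $A_i$ defined by flanking with $\E^{(3)}$ and $P(r_3)-\E^{(3)}$: the observation $\E^{(3)}W^{(3)}\E^{(3)}=0$ kills the pure $A_0^r$ terms, and the key gain comes from
\begin{equation*}
\|A_2\|_1=\|\E^{(3)}P^\partial(r_2)W^{(3)}(\tilde H^{(3)}-z)^{-1}(P(r_3)-\E^{(3)})\|_1<\tfrac12 k^{-\beta k^{r_2-r_1}/10-\beta},
\end{equation*}
which in turn follows from the off-diagonal decay \eqref{Feb6b-3} of $\E^{(3)}$ (applied with $\||\p_\s\||\ge k^{r_2}-Q$), together with the fact that $\E^{(3)}$ is one-dimensional so operator and trace-class norms agree. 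The bound on $g^{(4)}_r$ then follows from \eqref{Feb1'*}-type expansion since every surviving term contains both an $A_1$ and an $A_2$ factor. Finally, \eqref{Feb6a-3IV} is automatic from the block decomposition of $\tilde H^{(3)}$: the largest block not contained in $P(r_2)HP(r_2)$ has $\||\cdot\||$-size at most $k^{\gamma r_2+150l\gamma r_1}$ by Lemma~\ref{L:blackind}, so $r$ applications of $W^{(3)}$ reach at most $2rk^{\gamma r_2+150l\gamma r_1}+3k^{r_2}$.

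The main obstacle I anticipate is the black-cluster estimate, because at Step~IV a black cluster may itself contain a rich sub-structure of Step-III type simple/white/grey/black sub-clusters and we cannot apply Lemma~\ref{L:geometric3} directly to the outer $k^{\gamma r_2}$-cluster. One must replay the multiscale argument of the proof of \eqref{5-4}, verify the analog of Appendix~5 that the smallest $R$ with $P^{(\mathrm{int})}[W(\hat H_0-k^{2l})^{-1}]^{R+1}P^\partial\neq 0$ satisfies $R>\tfrac1{64}k^{\gamma r_2+\delta_0 r_2-\delta}$ (this uses the geometric fact that $V$ connects sites at $\||\cdot\||$-distance $\le Q$ and that the black neighborhood has depth $D=k^{\gamma r_2+\delta_0 r_2}$), and then balance this against the outer bound $\|(\hat H-k^{2l})^{-1}\|\le ck^{2\mu r_3+r_3' k^{\gamma r_2+150l\gamma r_1}}$ from Lemma~\ref{L:geometric4n}. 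The choice of $r_3'$ in \eqref{r_2IV} is precisely what makes $k^{-\beta R}$ beat $k^{2r_3' N^{(2)}}$ comfortably, giving the required $k^{-\beta-\delta/4}$ on each color and closing the proof.
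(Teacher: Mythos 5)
Your proposal is correct and is exactly the route the paper takes: the paper proves Theorem \ref{Thm3IV} by declaring it "analogous to the previous step" (Theorem \ref{Thm3}) under the index shift $r_1\mapsto r_2$, $r_2\mapsto r_3$, $r_2'\mapsto r_3'$, with the Step~III input lemmas replaced by their Step~IV analogs (Lemmas \ref{L:geometric4n}, \ref{estnonres0-1n}, \ref{L:Pnrn}, \ref{L:Prn}, \ref{L:Psn}, \ref{L:blackind}), which is precisely what you spell out, including the Appendix~5-type bound on $R$ for black clusters and the use of \eqref{Feb6b-3} in the $A_2$ estimate.
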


\begin{corollary}\label{corthm3IV} For the perturbed eigenvalue and its spectral
projection the following estimates hold:
 \begin{equation}\label{perturbation-3IV}
\lambda^{(4)}({\k})=\lambda^{(3)}({\k})+ O_2\left(k^{-\frac15 \beta
k^{r_2-r_1 }-\beta }\right),
\end{equation}
\begin{equation}\label{perturbation*-3IV}
\left\|\E^{(4)}({\k})-\E^{(3)}({\k})\right\|_1<2k^{-\frac{\beta}{10}
k^{r_2-r_1 }-\beta }.
\end{equation}
\begin{equation}
\left|\E^{(4)}({\k})_{\s\s'}\right|<k^{-d^{(4)}(\s,\s')},\ \
\mbox{when}\ \||\p_\s\||>4k^{r_2} \mbox{\ or }
\||\p_{\s'}\||>4k^{r_2 },\label{Feb6b-3IV}
\end{equation}
$$d^{(4)}(\s,\s')=\frac18(\||\p_\s\||+\||\p_{\s'}\||)k^{-\gamma r_2-150l\gamma r_1}\beta +\frac{1}{10}\beta
k^{r_2-r_1 }.$$
\end{corollary}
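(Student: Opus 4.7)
The plan is to derive this corollary directly from Theorem~\ref{Thm3IV}, in complete analogy with how Corollary~\ref{corthm3} follows from Theorem~\ref{Thm3}; only summation of geometric series and careful bookkeeping of exponents are required.

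For \eqref{perturbation-3IV} I would substitute \eqref{estg3IV} into \eqref{eigenvalue-3IV}, obtaining
\begin{equation*}
|\lambda^{(4)}({\k}) - \lambda^{(3)}({\k})| \leq \sum_{r=2}^\infty |g^{(4)}_r| \leq k^{-\frac{\beta}{5}k^{r_2-r_1}}\sum_{r=2}^\infty k^{-\beta(r-1)},
\end{equation*}
and the geometric sum is $<2k^{-\beta}$ for $k>k_*$. The estimate \eqref{perturbation*-3IV} is proved the same way: take the $\mathbf{S}_1$-norm of \eqref{sprojector-3IV}, insert \eqref{Feb1a-3IV}, and sum $\sum_{r\geq 1} k^{-\beta r} < 2k^{-\beta}$.

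For the matrix-element decay \eqref{Feb6b-3IV}, fix $\s,\s'$ with $\||\p_\s\|| > 4k^{r_2}$ (the other case is symmetric). Write $\E^{(4)} = \E^{(3)} + \sum_{r\geq 1} G^{(4)}_r$. The support property \eqref{Feb6a-3IV} forces $G^{(4)}_r({\k})_{\s\s'}=0$ unless
\begin{equation*}
r \geq R_0 := \frac{\||\p_\s\||+\||\p_{\s'}\||-3k^{r_2}}{2k^{\gamma r_2+150l\gamma r_1}} \geq \frac{\||\p_\s\||+\||\p_{\s'}\||}{8k^{\gamma r_2+150l\gamma r_1}},
\end{equation*}
where the last inequality uses $\||\p_\s\||>4k^{r_2}$ to absorb the $-3k^{r_2}$ term. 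Summing \eqref{Feb1a-3IV} from $r=\lceil R_0\rceil$ onward then yields a bound of order $k^{-\frac{\beta}{10}k^{r_2-r_1}-\beta R_0} \leq \tfrac{1}{2}k^{-d^{(4)}(\s,\s')}$. For the remaining term $\E^{(3)}_{\s\s'}$, the hypothesis $\||\p_\s\||>4k^{r_2}>4k^{r_1}$ permits invoking \eqref{Feb6b-3}, yielding $|\E^{(3)}_{\s\s'}|\leq k^{-d^{(3)}(\s,\s')}$.

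The main obstacle, to the small extent that there is one, is the final verification $d^{(3)}(\s,\s') \geq d^{(4)}(\s,\s')$, which is not a term-by-term comparison: the first summand of $d^{(3)}$ has the larger prefactor $\frac{\beta}{8}k^{-\gamma r_1-3}$ (vs.\ $\frac{\beta}{8}k^{-\gamma r_2-150l\gamma r_1}$ in $d^{(4)}$), while its additive piece $\frac{\beta}{10}k^{r_1-\delta}$ is \emph{smaller} than the corresponding $\frac{\beta}{10}k^{r_2-r_1}$ in $d^{(4)}$. The resolution is to use the lower bound $\||\p_\s\||>4k^{r_2}$ to inflate the first summand of $d^{(3)}$ to at least $\frac{\beta}{2}k^{r_2-\gamma r_1-3}$, which then dominates the whole of $d^{(4)}$ once $r_1(1-\gamma)>3$ --- an inequality that is already guaranteed by $r_1>10^8$. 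This is the only step that genuinely relies on the hierarchy $r_2 \gg r_1$, and no other step presents any difficulty.
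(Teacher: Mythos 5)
Your argument is the same one the paper intends (the paper only sketches this corollary by analogy with Corollaries \ref{corthm3} and \ref{corthm2}): sum \eqref{estg3IV} and \eqref{Feb1a-3IV} geometrically for \eqref{perturbation-3IV}, \eqref{perturbation*-3IV}, and for \eqref{Feb6b-3IV} combine the support property \eqref{Feb6a-3IV} with the previous-step bound \eqref{Feb6b-3} for $\E^{(3)}_{\s\s'}$, so the proposal is correct in substance. One small imprecision: the inflated first summand $\frac{\beta}{2}k^{r_2-\gamma r_1-3}$ of $d^{(3)}$ does \emph{not} dominate the whole of $d^{(4)}$ when $\||\p_\s\||+\||\p_{\s'}\||$ is as large as $\sim k^{r_3}$ (then the first summand of $d^{(4)}$ is itself enormous); what you actually need, and what does hold, is $d^{(3)}\geq d^{(4)}$, obtained by noting that the first summands compare term-by-term (since $k^{-\gamma r_1-3}>k^{-\gamma r_2-150l\gamma r_1}$) while the inflation, valid because $\||\p_\s\||>4k^{r_2}$ and $r_1(1-\gamma)>3$, covers the discrepancy $\frac{\beta}{10}\bigl(k^{r_2-r_1}-k^{r_1-\delta}\bigr)$ between the second summands.
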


\begin{lemma} \label{L:derivatives-3IV}Under conditions of Theorem \ref{Thm3IV} the following
estimates hold when $\varphi \in \omega ^{(4)}(k,\delta )$ or its
complex $k^{-r_3'-\delta}$-neighborhood and $\varkappa\in \C:$
$|\varkappa-\varkappa^{(3)}(\varphi )|<\varepsilon
^{(4)}_0k^{-2l+1-\delta}$.
\begin{equation}\label{perturbation-3cIV}
\lambda^{(4)}({\k})=\lambda^{(3)}({\k})+O_2\left(k^{-\frac 15 \beta
k^{r_2-r_1}-\beta }\right),
\end{equation}
\begin{equation}\label{estgder1-3kIV}
\frac{\partial\lambda^{(4)}}{\partial\varkappa}=\frac{\partial\lambda^{(3)}}{\partial\varkappa}
+O_2\left(k^{-\frac 15 \beta k^{r_2-r_1}-\beta  }M_2\right), \  \ \
\ M_2:=\frac{k^{2l-1+\delta}}{\varepsilon ^{(4)}_0},\end{equation}
\begin{equation}\label{estgder1-3phiIV}\frac{\partial\lambda^{(4)}}{\partial \varphi }=\frac{\partial\lambda^{(3)}}{\partial \varphi }+
O_2\left(k^{-\frac 15 \beta k^{r_2-r_1}-\beta+r_3'+\delta }\right),
 \end{equation}
\begin{equation}\label{estgder2-3IV}
\frac{\partial^2\lambda^{(4)}}{\partial\varkappa^2}=
\frac{\partial^2\lambda^{(3)}}{\partial\varkappa^2}+
O_2\left(k^{-\frac 15 \beta k^{r_2-r_1}-\beta  }M_2^2\right),
\end{equation}
\begin{equation} \label{gulf2-3IV}
\frac{\partial^2\lambda^{(4)}}{\partial\varkappa\partial \varphi
}=\frac{\partial^2\lambda^{(3)}}{\partial\varkappa\partial \varphi
}+ O_2\left(k^{-\frac 15 \beta k^{r_2-r_1}-\beta+r_3'+\delta
}M_2\right),
\end{equation}
\begin{equation} \label{gulf3-3IV}
\frac{\partial^2\lambda^{(4)}}{\partial\varphi
^2}=\frac{\partial^2\lambda^{(3)}}{\partial\varphi
^2}+O_2\left(k^{-\frac 15 \beta k^{r_2-r_1}-\beta+2r_3'+2\delta
}\right).
\end{equation}\end{lemma}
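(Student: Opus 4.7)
The plan is to mimic the proof of Lemma \ref{L:derivatives-3}, which itself follows the template set in Lemmas \ref{L:derivatives-1} and \ref{L:derivatives-2}: extend everything analytically, differentiate the perturbation series term by term via a Cauchy integral, and control each derivative through the size of the analyticity neighborhood.

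First I would show that the coefficients $g^{(4)}_r(\k)$ and the operators $G^{(4)}_r(\k)$ extend analytically into the complex $k^{-r_3'-\delta}$-neighborhood of $\omega^{(4)}(k,\delta,\tau)$ as functions of $\varphi$ and into the complex $\varepsilon^{(4)}_0 k^{-2l+1-\delta}$-neighborhood of $\varkappa=\varkappa^{(3)}(\varphi)$ as functions of $\varkappa$, with the bounds \eqref{estg3IV}, \eqref{Feb1a-3IV} preserved. This is the exact analogue of the analytic extension step used just before Lemma \ref{L:derivatives-3}: it follows from the fact that the construction of $\W^{(4)}$, $\tilde H^{(3)}$ and the contour $C_4$ only uses information from Step III (which is already analytic on a larger neighborhood), together with the stability estimates in Lemmas \ref{L:Pnrn}, \ref{L:Prn}, \ref{L:Psn} and Lemma \ref{L:geometric4n}. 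Consequently, summing \eqref{eigenvalue-3IV} yields $\lambda^{(4)}(\varkappa,\varphi)$ as a holomorphic function on the bidisc in $(\varkappa,\varphi)$, and $\lambda^{(4)}-\lambda^{(3)}=\sum_{r\geq 2}g^{(4)}_r$ obeys the estimate in Corollary \ref{corthm3IV}, which gives \eqref{perturbation-3cIV}.

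Next, I would apply the Cauchy integral formula to the analytic function $f:=\lambda^{(4)}-\lambda^{(3)}$ on that bidisc. Differentiating once in $\varkappa$ on the disc of radius $\varepsilon^{(4)}_0 k^{-2l+1-\delta}/2$ costs a factor $M_2=k^{2l-1+\delta}/\varepsilon^{(4)}_0$, and differentiating once in $\varphi$ on the disc of radius $k^{-r_3'-\delta}/2$ costs a factor $k^{r_3'+\delta}$; second derivatives pick up the corresponding squared factors, and the mixed derivative picks up the product $M_2\cdot k^{r_3'+\delta}$. Applied to $|f|\leq c k^{-\frac15\beta k^{r_2-r_1}-\beta}$ this gives \eqref{estgder1-3kIV}--\eqref{gulf3-3IV} after combining with Lemma \ref{L:derivatives-3} (written for $\lambda^{(3)}$ on the smaller neighborhood, which is admissible since the Step IV neighborhoods are strictly contained in the Step III ones by \eqref{r_2IV} and the choice $r_3'>\mu r_3$, $r_3>k^{r_1}$).

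The only subtle point — and the one I would check most carefully — is the consistency of the two nested neighborhoods: one has to verify that the Step III estimates of Lemma \ref{L:derivatives-3} are actually available on the disc where the Cauchy integrals for Step IV are taken. This reduces to the inclusions $k^{-r_3'-\delta}\ll k^{-r_2'-\delta}$ for the $\varphi$-variable and $\varepsilon^{(4)}_0 k^{-2l+1-\delta}\ll \varepsilon^{(3)}_0 k^{-2l+1-\delta}$ for the $\varkappa$-variable, both of which follow from $r_3'\gg r_2'$ and $\varepsilon^{(4)}_0\ll \varepsilon^{(3)}_0$ under \eqref{r_2IV}. Once this is in place the factors $O_2(\cdot)$ are simply assembled from Lemma \ref{L:derivatives-3} plus the Cauchy bounds on $f=\lambda^{(4)}-\lambda^{(3)}$. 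I do not expect any genuine obstacle beyond this bookkeeping; all the hard analytic work (stability of the block resolvents, convergence of the perturbation series, block decomposition of $\tilde H^{(3)}$) has already been done in Theorem \ref{Thm3IV} and in the lemmas of Section \ref{S:4}.
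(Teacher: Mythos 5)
Your proposal is correct and follows essentially the same route as the paper: the paper proves this lemma by noting it is the exact analogue of Lemma \ref{L:derivatives-3}, i.e.\ analytic extension of $g^{(4)}_r$, $G^{(4)}_r$ (hence of $\lambda^{(4)}-\lambda^{(3)}$) into the $k^{-r_3'-\delta}$-neighborhood in $\varphi$ and the $\varepsilon^{(4)}_0k^{-2l+1-\delta}$-neighborhood in $\varkappa$ with the bounds of Theorem \ref{Thm3IV} preserved, followed by Cauchy-integral differentiation producing precisely the factors $M_2$ and $k^{r_3'+\delta}$, together with the previous step's lemma. Your extra check on the nesting of the Step III and Step IV neighborhoods is exactly the bookkeeping implicit in the paper's argument, so there is no gap.
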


\begin{corollary} \label{"O"*} All ``$O_2$"-s on the right hand sides of \eqref{perturbation-3cIV}-\eqref{gulf3-3IV} can be written as $O_1\left(k^{-\frac {1}{10} \beta k^{r_2-r_1}}\right)$.
\end{corollary}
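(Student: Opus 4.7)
The plan is to imitate the argument of the preceding analog, Corollary \ref{"O"}: every $O_2$-term in \eqref{perturbation-3cIV}--\eqref{gulf3-3IV} is of the form $O_2\bigl(k^{-\frac{1}{5}\beta k^{r_2-r_1}}\cdot\Theta\bigr)$ for some stretched-exponential but comparatively tame factor $\Theta$, and I intend to surrender half of the super-exponential saving $k^{-\frac{1}{5}\beta k^{r_2-r_1}}$ in order to absorb both $\Theta$ and the multiplicative constant $2$ hidden in the notation $O_2$.

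First I would enumerate the auxiliary factors $\Theta$ appearing in the six estimates, namely $k^{-\beta}$, $k^{-\beta}M_2$, $k^{-\beta+r_3'+\delta}$, $k^{-\beta}M_2^2$, $k^{-\beta+r_3'+\delta}M_2$, and $k^{-\beta+2r_3'+2\delta}$. Using $\varepsilon_0^{(4)}=k^{-2r_3'k^{2\gamma r_2}}$ I would rewrite $M_2=k^{2l-1+\delta+2r_3'k^{2\gamma r_2}}$ and observe that the dominant case is $\Theta=k^{-\beta}M_2^2$, since the bare $r_3'$-powers occurring in the remaining cases are dwarfed by the stretched exponential already contained in $M_2$. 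Invoking the upper bound $r_3'<k^{\delta_0 r_2/2}$ from \eqref{r_2IV}, this gives
\begin{equation*}
\log_k\Theta\leq 4r_3'\,k^{2\gamma r_2}+O(1)<4\,k^{\delta_0 r_2/2+2\gamma r_2}+O(1).
\end{equation*}

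The claim then reduces to verifying the single scale inequality
\begin{equation*}
\tfrac{1}{10}\beta\,k^{r_2-r_1}>4\,k^{\delta_0 r_2/2+2\gamma r_2}+O(1).
\end{equation*}
Taking $\log_k$ and substituting $\gamma=\tfrac{1}{5}$ together with $\delta_0=\gamma/100=\tfrac{1}{500}$, this becomes $\bigl(1-\tfrac{1}{1000}-\tfrac{2}{5}\bigr)r_2>r_1+O(1)$, i.e.\ essentially $\tfrac{1}{2}r_2>r_1$, which is immediate since \eqref{r_2} forces $r_2>k^{\delta}$ whereas \eqref{Aug13-1} confines $r_1<k^{\delta/8}$.

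I do not expect any real obstacle; the entire argument is bookkeeping of the scale relations \eqref{Aug13-1}, \eqref{r_2}, \eqref{r_2IV}. The only mildly delicate point is that $\varepsilon_0^{(4),-1}=k^{2r_3'k^{2\gamma r_2}}$ itself grows stretched-exponentially in $r_2$, so the comparison against $k^{\frac{1}{10}\beta k^{r_2-r_1}}$ must be carried out at the level of $\log_k$ rather than via crude polynomial estimates. Once this is in place, each $O_2$-term automatically collapses to $O_1\bigl(k^{-\frac{1}{10}\beta k^{r_2-r_1}}\bigr)$, completing the proof.
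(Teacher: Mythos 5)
Your proposal is correct and matches the paper's (implicit) argument: the paper proves Corollary \ref{"O"*} exactly as its Step III analogue Corollary \ref{"O"}, by noting that the worst extra factor, $M_2^2=k^{O(1)}(\varepsilon_0^{(4)})^{-2}=k^{O(1)+4r_3'k^{2\gamma r_2}}$, together with the constant from $O_2$, is absorbed by half the super-exponential saving $k^{-\frac15\beta k^{r_2-r_1}}$, using $r_3'<k^{\delta_0 r_2/2}$ and $r_1<k^{\delta/8}<k^{\delta}<r_2$. Your $\log_k$ bookkeeping is precisely this check, so nothing further is needed.
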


\begin{remark}
In the proof of Theorem~\ref{Thm3IV} and similar statements in every
further step of the induction we obtain the estimate of the form
\eqref{March5}. It is important to notice that the right hand side
of these estimates is always exactly $k^{-\beta}$. It can't become
better since it comes from the estimate of the free resolvent on the
set of points satisfying $\left||\k+\p_\m|^{2l}_\R-k^{2l}\right|\geq
k^{2l-2-40\mu\delta}$. What changes is the first term in
the perturbation series, see  e.g. \eqref{estg3}, \eqref{Feb1a-3} vs \eqref{estg3IV}, \eqref{Feb1a-3IV}.
\end{remark}

\subsection{\label{IS3IV}Isoenergetic Surface for Operator $H^{(4)}$}

The following statement is an analogue of Lemma~\ref{ldk-3}.

\begin{lemma}\label{ldk-3IV} \begin{enumerate}
\item For every $\lambda :=k^{2l}$,  $k>k_*$, and $\varphi $ in the real  $\frac{1}{2} k^{-r_3'-\delta }$-neighborhood
of $\omega^{(4)}(k,\delta, \tau )$,$\ $ there is a unique
$\varkappa^{(4)}(\lambda, \varphi )$ in the interval
$I_3:=[\varkappa^{(3)}(\lambda, \varphi )-\varepsilon
^{(4)}_0k^{-2l+1-\delta},\varkappa^{(3)}(\lambda, \varphi
)+\varepsilon ^{(4)}_0k^{-2l+1-\delta}]$, such that
    \begin{equation}\label{2.70-3IV}
    \lambda^{(4)} \left(\k
^{(4)}(\lambda ,\varphi )\right)=\lambda ,\ \ \k ^{(4)}(\lambda
,\varphi ):=\varkappa^{(4)}(\lambda ,\varphi )\vec \nu(\varphi).
    \end{equation}
\item  Furthermore, there exists an analytic in $ \varphi $ continuation  of
$\varkappa^{(4)}(\lambda ,\varphi )$ to the complex  $\frac{1}{2}
k^{-r_3'-\delta }$-neighborhood of $\omega^{(4)}(k,\delta, \tau )$
such that $\lambda^{(4)} (\k ^{(4)}(\lambda, \varphi ))=\lambda $.
Function $\varkappa^{(4)}(\lambda, \varphi )$ can be represented as
$\varkappa^{(4)}(\lambda, \varphi )=\varkappa^{(3)}(\lambda, \varphi
)+h^{(4)}(\lambda, \varphi )$, where
\begin{equation}\label{dk0-3IV} |h^{(4)}(\varphi )|=O_1\left(k^{-\frac 15 \beta k^{r_2-r_1}-\beta -2l+1
}\right),
\end{equation}
\begin{equation}\label{dk-3IV}
\frac{\partial{h}^{(4)}}{\partial\varphi}= O_2\left(k^{-\frac 15 \beta
k^{r_2-r_1}-\beta -2l+1 +r_3'+\delta }\right),\ \ \ \ \
\frac{\partial^2{h}^{(4)}}{\partial\varphi^2}= O_4\left(k^{-\frac 15
\beta k^{r_2-r_1}-\beta -2l+1 +2r_3'+2\delta }\right).
\end{equation} \end{enumerate}\end{lemma}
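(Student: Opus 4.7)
The plan is to follow the template established by Lemma~\ref{ldk} and Lemma~\ref{ldk-3}, feeding in the new perturbation estimates from Lemma~\ref{L:derivatives-3IV} in place of their Step~III analogues. So the proof will be short once one checks that the numerology in \eqref{r_2IV} makes every inequality work.

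First I would establish part~(1). By Theorem~\ref{Thm3IV} the eigenvalue $\lambda^{(4)}(\k)$ is defined and continuous in $\varkappa$ throughout $I_3$, and by \eqref{perturbation-3cIV}
\[
\lambda^{(4)}(\k) = \lambda^{(3)}(\k) + O_2\!\left(k^{-\frac{1}{5}\beta k^{r_2-r_1}-\beta}\right).
\]
Because Lemma~\ref{ldk-3} gives $\lambda^{(3)}(\k^{(3)}(\lambda,\varphi)) = \lambda$ and $\partial\lambda^{(3)}/\partial\varkappa \geq 2l\varkappa^{2l-1}(1+o(1))$, the image ${\cal L}^{(4)}(\varphi):=\{\lambda^{(4)}(\k):\varkappa\in I_3\}$ contains an interval of length of order $\varepsilon_0^{(4)}$ centered at $\lambda$; since the correction term in the display above is much smaller than $\varepsilon_0^{(4)} \cdot 2l k^{2l-1}\cdot k^{-2l+1-\delta}$, the intermediate value theorem produces $\varkappa^{(4)}(\lambda,\varphi)\in I_3$ with $\lambda^{(4)}(\k^{(4)})=\lambda$. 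Uniqueness follows from \eqref{estgder1-3kIV}: the correction to $\partial\lambda^{(3)}/\partial\varkappa$ is $O(k^{-\frac{1}{5}\beta k^{r_2-r_1}}M_2)$, which is $o(k^{2l-1})$ under \eqref{r_2IV}, so $\partial\lambda^{(4)}/\partial\varkappa$ stays strictly positive on $I_3$ and the solution is unique.

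For part~(2), analytic continuation into the complex $\frac{1}{2}k^{-r_3'-\delta}$-neighborhood of $\omega^{(4)}$ is done by Rouch\'{e}'s theorem, exactly as in the proof of Lemma~\ref{ldk}. I would fix such a $\varphi$, treat $\lambda^{(4)}(\varkappa,\varphi)-\lambda$ as an analytic function of $\varkappa$ on the complex disc $|\varkappa-\varkappa^{(3)}(\varphi)|<\varepsilon_0^{(4)}k^{-2l+1-\delta}$, and compare it with $\lambda^{(3)}(\varkappa,\varphi)-\lambda$, which has a single zero at $\varkappa=\varkappa^{(3)}(\varphi)$ by Lemma~\ref{ldk-3}. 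The difference is $O_2(k^{-\frac{1}{5}\beta k^{r_2-r_1}-\beta})$ by \eqref{perturbation-3cIV}, while on the boundary of the disc $|\lambda^{(3)}-\lambda|\approx 2l k^{2l-1}\cdot \varepsilon_0^{(4)}k^{-2l+1-\delta}\gg k^{-\frac{1}{5}\beta k^{r_2-r_1}-\beta}$. Rouch\'{e}'s theorem yields a unique root $\varkappa^{(4)}(\varphi)$ of $\lambda^{(4)}-\lambda$ in this disc; the implicit function theorem applied locally (using the non-vanishing of $\partial\lambda^{(4)}/\partial\varkappa$) together with uniqueness promotes this to a globally analytic function of $\varphi$.

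Finally, the estimates \eqref{dk0-3IV}--\eqref{dk-3IV} for $h^{(4)}:=\varkappa^{(4)}-\varkappa^{(3)}$ come from the identity $\lambda^{(4)}(\k^{(4)}(\varphi))=\lambda = \lambda^{(3)}(\k^{(3)}(\varphi))$. Writing $\lambda^{(4)}=\lambda^{(3)}+(\lambda^{(4)}-\lambda^{(3)})$, expanding $\lambda^{(3)}(\k^{(4)})$ around $\k^{(3)}$ and solving for $h^{(4)}$ gives
\[
h^{(4)} = -\frac{\lambda^{(4)}(\k^{(3)})-\lambda^{(3)}(\k^{(3)})}{\partial\lambda^{(3)}/\partial\varkappa}\bigl(1+o(1)\bigr),
\]
so \eqref{perturbation-3cIV} and $\partial\lambda^{(3)}/\partial\varkappa\sim 2lk^{2l-1}$ give \eqref{dk0-3IV}. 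The derivatives of $h^{(4)}$ in $\varphi$ are then obtained by applying the Cauchy integral formula to $h^{(4)}$ on the disc of radius $\frac14 k^{-r_3'-\delta}$ about each real $\varphi$ (losing one factor of $k^{r_3'+\delta}$ per derivative), which yields \eqref{dk-3IV}. The only place where one must be careful is that the derivatives come with the $O_2,O_4$ prefactors in \eqref{estgder1-3phiIV}--\eqref{gulf3-3IV}; matching them against the main term $\partial\lambda^{(3)}/\partial\varkappa$ uses \eqref{r_2IV}, exactly as in Lemma~\ref{ldk-3}, and this compatibility of the scale parameters is the one point I would double-check, though it is essentially the same arithmetic as the verification done at the end of Step~III.
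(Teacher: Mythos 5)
Your proposal is correct and follows essentially the same route as the paper, which proves this lemma exactly as Lemma \ref{ldk} (existence and uniqueness via monotonicity of $\partial\lambda/\partial\varkappa$, analytic continuation via Rouch\'{e}'s theorem and the implicit function theorem, derivative bounds via the Cauchy formula), with the Step~III estimates replaced by those of Theorem \ref{Thm3IV} and Lemma \ref{L:derivatives-3IV}. Your numerological check that $\varepsilon_0^{(4)}k^{-\delta}$ dominates $k^{-\frac15\beta k^{r_2-r_1}-\beta}$ under \eqref{r_2IV} is precisely the compatibility the paper relies on implicitly.
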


Let us consider the set of points in $\R^2$ given by the formula:
$\k=\k^{(4)} (\varphi), \ \ \varphi \in \omega ^{(4)}(k,\delta, \tau
)$. By Lemma \ref{ldk-3IV} this set of points is a slight distortion
of ${\cal D}_{3}$. All the points of this curve satisfy the equation
$\lambda^{(4)}(\k ^{(4)}(\varphi ))=k^{2l}$. We call it isoenergetic
surface of the operator $H^{(4)}$ and denote by ${\cal D}_{4}$.

\section{Induction}

\subsection{Inductive formulas for $r_n$} Now, we are ready to introduce the induction. In fact, STEP IV has
been the first inductive step. Here, for the sake of convenience, we
reformulate the main statements from the previous step in terms of
$r_n$, $n\geq3$, and provide necessary comments. First, we choose
\begin{equation}\label{indrn}
k^{r_{n-2}}<r_n<k^{\gamma10^{-7}r_{n-1}},\ \ \
k^{2\gamma10^{-4}r_{n-1}}<r_n'<k^{\delta_0r_{n-1}/2},\ \ \ n\geq3.
\end{equation}
\subsection{Preparation for Step $n+1$, $n\geq 4$}
\subsubsection{Properties of the Quasiperiodic Lattice. Induction}\label{Lattice-Induction}
Here we prove the inductive version of the results from
Section~\ref{Quasiperiodicgeomcont}. We consider
$\p_\m=2\pi(\s_1+\alpha\s_2)$ with integer vectors $\s_j$ such that
$|\s_j|\leq 4k^{r_{n-1}}$. We repeat the arguments from the
beginning of Section~\ref{geomIII}. Namely, let $(q,p)\in\Z^2$ be a
pair such that $0<q\leq 4k^{r_{n-1}}$ and
\begin{equation}\label{qind-last}
|\alpha q+p|\leq 16k^{-r_{n-1}}.
\end{equation}
We choose a pair $(p,q)$ which gives the best approximation. In
particular, $p$ and $q$ are mutually simple. Put
$\epsilon_q:=\alpha+\frac{p}{q}$. We have
\begin{equation}k^{-2r_{n-1}\mu}\leq|\epsilon_q|\leq 16q^{-1}k^{-r_{n-1}}.\label{epsilon_qind-last}\end{equation}
The analogs of Lemmas \ref{Lattice-1ind}--\ref{Lattice-3ind} hold
with $n-1$ instead of $2$.

We consider the matrix $H^{(n-1)}(\k)=P(\gamma r_{n-2})H(\k )(\gamma
r_{n-2})$ where $\k \in \R^2$, $P(\gamma r_{n-2})$ is the orthogonal
projection corresponding to $\Omega (\gamma r_{n-2})$. \footnote{It
is a slight abuse of notations, since $H^{(n-1)}$ in Step $n-1$ was
defined for $\gamma =1$.} We construct the block structure in
$H^{(n-1)}(\k)$ analogous to that in Step $n-1$.  The difference is
that now we consider any $\k \in \R^2$, not only $\k$ being close to
$\k^{(n-2)}(\varphi )$. Correspondingly, we define non-resonant $\m$
not in terms of $\varphi $, but in more general terms of
inequalities providing convergence of perturbation series. Indeed,
we call $\m \in \Omega (\gamma r_{n-2})$ non-resonant if  (cf.
\eqref{resonance1})
\begin{equation}
\left||\k+\p_{\m }|^{2}-k^{2}\right|>k^{-40\mu\delta}.
\label{Aug29a-last}
\end{equation}
Obviously, this estimate is stable in the $k^{-41\mu \delta
-1}$-neighborhood of a given $\k$. Hence, the definition of a
non-resonant $\m$ is stable in this neighborhood up to a multiplier
$1+o(1)$ in the r.h.s. of \eqref{Aug29a-last}.  Around each resonant
$\m$ we construct $k^\delta$-boxes/clusters (see \eqref{defP}). Let
$P_{\m}$ be the projection on the $k^{\delta }$-cluster containing
$\m $. If
\begin{equation}
\left\|(P_{\m}(H(\k)-k^{2l})P_{\m})^{-1}\right\|<k^{4\gamma r_1'}
\label{Aug29b-last} \end{equation} (cf. \eqref{Mon3a**}), then we
call the $k^{\delta }$-cluster effectively non-resonant  for a given
$\k$.
 Note, that the above
estimate and, therefore, the definition of an effectively
non-resonant $k^{\delta}$-cluster is  stable in the $k^{-4\gamma
r_1'-2l+1-\delta }$-neighborhood of a given $\k$. The
$k^{\delta}$-clusters, where \eqref{Aug29b-last} is not valid, are
called  effectively resonant $k^{\delta}$-clusters.  Around each
effectively resonant $k^{\delta }$-cluster, we construct $k^{\gamma
r_1}$-clusters. We sort these clusters into four types: simple,
white and black clusters as in Section~\ref{MOforStep3}, using the
term ``$\m $ is effectively resonant" instead of ``$\m \in \MM
^{(2)}$". There is no need to consider a special case of simple
clusters here. Note that Lemmas \ref{L:black} -- \ref{L:white} are
valid for an arbitrary $\k$, since they are based on Lemmas
\ref{L:number of points-1} \ref{L:Resnumberofpoints} proven for an
arbitrary $\k$. Be analogy with \eqref{March3-2}, a $k^{\gamma
r_1}$-cluster is called effectively non-resonant if
\begin{equation}
\left\|(P_{\m}(H(\k)-k^{2l})P_{\m})^{-1}\right\|<k^{2\mu
r_2+r_2'N^{(1)}_i}, \label{Aug29c-last} \end{equation} where
$N^{(1)}_i$ corresponds to the color of a $k^{\gamma r_1}$-cluster,
$N^{(1)}_i=k^{\gamma r_1+3}, k^{\gamma r_1/2+\delta _0r_1}$ or
$k^{\gamma r_1/6-\delta _0r_1}$. If $n=4$ we stop here. If $n>4$, we
surround effectively resonant $k^{\gamma r_1}$-clusters by blocks of
the next size, etc.  The analogs of Lemmas \ref{L:black} --
\ref{L:white} are valid, see Lemmas \ref{L:blackind} --
\ref{L:whiteind}, \ref{L:blackindlast} -- \ref{L:whiteindlast}.
Eventually, the $k^{\gamma r_{n-3}}$-cluster  is effectively
non-resonant if
\begin{equation}
\left\|(P_{\m}(H(\k)-k^{2l})P_{\m})^{-1}\right\|<k^{\mu
r_{n-2}+r_{n-2}'N^{(n-3)}_i}, \label{Aug29d-last} \end{equation}
where $N^{(n-3)}_i$ is $N^{(n-3)}_i= k^{\gamma r_{n-3}+150l\gamma
r_{n-4} }, k^{\gamma r_{n-3}/2+\delta _0r_{n-3} }, k^{\gamma
r_{n-3}/6-\delta _0r_{n-3} }$, depending on the color of the cluster
(cf. \eqref{Aug29c-last}, \eqref{March3-2n4}). Further we put
$150l\gamma r_{0}=3$. This will make \eqref{Aug29c-last} to be a
special case of \eqref{Aug29d-last} ($n=4$). Thus, we have
constructed a block structure in $H^{(n-1)}(\k)$, which is stable in
the $k^{-\rho _{n-2}}$-neighborhood of a given $\k$, where $\rho
_1=4\gamma r_1'+2l-1+\delta $ and $$\rho _{n-2}=\mu
r_{n-2}+r_{n-2}'k^{\gamma r_{n-3}+150\gamma r_{n-4} }+2l-1+\delta,\
\mbox{when } n\geq 4.$$ It is not difficult to show that $\rho
_{n-2}<r_{n-1}$.
\begin{definition} \label{D-J-last}We denote   by
$J(\k)$ the number of the effectively resonant $k^{\gamma
r_{n-3}}$-clusters  in $H^{(n-1)}(\k)$ for a given $\k$.
 Further (with a slight abuse of notations) we
consider $J(\k)$ to be constant in the $k^{-\rho
_{n-2}}$-neighborhood of a given $\k$. \end{definition}
Let  $\k=\a \tau_1+\b,\  \ \ |\a|=1,\  |\b|<4k^{\gamma r_{n-2}}$. We
consider $H^{(n-1)}(\k)$ as a function of $\tau _1$ in the complex $
k^{-\rho _{n-2}}$-neighbothood of zero.
\begin{lemma} \label{May24-2}The resolvent $( H^{(n-1)}(\k)-k^{2l})^{-1}$ has no more than $k^{2\gamma r_{n-3}}J(\b)$ poles $\tau _{1j}$ in the the complex $ 2k^{-\rho _{n-2}}$-neighborhood of zero. It satisfies the following estimate in the the complex $k^{-\rho _{n-2}}$-neighborhood of zero.:
\begin{equation}\label{Sept3a-last}
\|(H^{(n-1)}(\k)-k^{2l})^{-1}\|<k^{\rho _{n-2}k^{2\gamma
r_{n-3}}}\left(\frac{k^{ -\rho _{n-2}}}{\varepsilon
_0}\right)^{J(\k)k^{2\gamma r_{n-3}}},
\end{equation}
where $\varepsilon _0=\min \{k^{-2\rho _{n-2}},\varepsilon\}$,
$\varepsilon $ being the distance to the nearest pole $\tau _{1,j}$.
\end{lemma}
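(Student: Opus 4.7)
The plan is to mimic the proof of Lemma \ref{May24-1} but iterate it across the nested block scales $k^{\delta}\subset k^{\gamma r_1}\subset\cdots\subset k^{\gamma r_{n-3}}$ that structure $H^{(n-1)}(\k)$. First, I would fix $\k=\a\tau_1+\b$ with $\tau_1$ in the complex $2k^{-\rho_{n-2}}$-disk around $0$. By Definition \ref{D-J-last} and the discussion preceding it, the classification of $k^{\delta}$-, $k^{\gamma r_1}$-, \dots , $k^{\gamma r_{n-3}}$-clusters as effectively resonant or effectively non-resonant is stable throughout this disk, so the model block operator $\tilde H^{(n-2)}(\k)$, obtained by taking the $k^{\gamma r_{n-3}}$-cluster blocks of $H^{(n-1)}(\k)$ and $H_0$ outside them, is well defined on the whole disk and has the same block pattern as at $\b$.

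Next, I would estimate $\bigl(P_\m(H(\k)-k^{2l})P_\m\bigr)^{-1}$ for each single effectively resonant $k^{\gamma r_{n-3}}$-cluster by induction on scale $j=0,1,\dots,n-3$ (with scale $0$ denoting $k^{\delta}$-clusters). At the base scale, I would argue exactly as in the proof of Lemma \ref{May24-1}: an effectively resonant $k^{\delta}$-cluster contains at most $4$ squares, so $|\k+\p_{\m}|_\R^{2l}-k^{2l}$ has at most $8$ roots $\tau_{1j}$ in $\tau_1$, and comparing by Rouché's theorem with the $V=0$ case plus the maximum principle yields the bound \eqref{Aug29b-last} outside disks of radius $k^{-2\rho_1}$ around those roots. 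Passing from scale $j$ to scale $j+1$, I would treat the scale-$(j+1)$ cluster as a perturbation of its internal effectively resonant scale-$j$ sub-clusters plus $H_0$ on the complement, exactly as in the proof of Theorem \ref{Thm3} (see estimates \eqref{5-1}--\eqref{5-4}): the perturbation series converges off a union of disks centered at the poles of the sub-cluster resolvents, Rouché's theorem keeps the pole count under control, and the maximum principle then yields a bound of the form \eqref{Aug29d-last} multiplied by the appropriate $\bigl(k^{-\rho_{n-2}}/\varepsilon\bigr)$-factor. The number of poles at scale $j+1$ is at most the number of sub-cluster poles, which by Lemmas \ref{L:blackind}--\ref{L:whiteindlast} is controlled by $N^{(j)}_i$ times the previous count; telescoping through the scales gives the total bound $k^{2\gamma r_{n-3}}$ poles per effectively resonant $k^{\gamma r_{n-3}}$-cluster, and a norm estimate of shape $k^{\rho_{n-2}k^{2\gamma r_{n-3}}}\bigl(k^{-\rho_{n-2}}/\varepsilon_0\bigr)^{k^{2\gamma r_{n-3}}}$ outside their small disks.

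Once the single-cluster bound is established, I would assemble $\tilde H^{(n-2)}(\k)$: effectively non-resonant $k^{\gamma r_{n-3}}$-blocks contribute no poles and satisfy \eqref{Aug29d-last}, while each of the $J(\k)$ effectively resonant $k^{\gamma r_{n-3}}$-blocks contributes at most $k^{2\gamma r_{n-3}}$ poles. Summing, $\bigl(\tilde H^{(n-2)}(\k)-k^{2l}\bigr)^{-1}$ has at most $J(\k)k^{2\gamma r_{n-3}}$ poles in the disk and obeys the estimate \eqref{Sept3a-last} off $k^{-2\rho_{n-2}}$-neighborhoods of them. Finally, I would run the Neumann-series argument of Theorem \ref{Thm3} one more time to pass from $\tilde H^{(n-2)}$ to $H^{(n-1)}$; the off-diagonal perturbation is $W^{(n-2)}$, the same estimates \eqref{5-0}--\eqref{5-4} hold at the new scale because $2r_{n-2}'<\tfrac{\beta}{64}k^{\delta_0 r_{n-3}-\delta-3}$ by \eqref{indrn}, so the series converges and the pole count and norm bound survive unchanged. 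The maximum principle absorbs any slight swelling into the extra factor $2$ in the number of poles (hence the statement $8J(\b)$ becomes $k^{2\gamma r_{n-3}}J(\b)$ here, where the extra $k^{2\gamma r_{n-3}}$ replaces the dimension-$4$ combinatorial bound used in the base case).

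The main obstacle is bookkeeping: one must verify at each induction step that the perturbation radius $k^{-2\rho_{n-2}}$ for the sub-cluster disks is large enough to permit Neumann expansion at the next scale, and that the product of the scale-by-scale norm blowups stays within $k^{\rho_{n-2}k^{2\gamma r_{n-3}}}$. This boils down to checking the inequalities $\rho_{n-2}<r_{n-1}$ and $2r_{n-2}'<\tfrac{\beta}{64}k^{\delta_0 r_{n-3}-\delta-3}$, both of which follow from \eqref{indrn}; these are exactly the conditions that make the Rouché and maximum-principle arguments of Theorem \ref{Thm3} (and its analogue at each scale) go through without loss.
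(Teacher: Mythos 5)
Your proposal is correct and follows essentially the same route as the paper: the paper proves Lemma \ref{May24-2} by induction on $n$, taking as input for each effectively resonant $k^{\gamma r_{n-3}}$-cluster the bound given by this very lemma at step $n-1$ (which is exactly the scale-by-scale iteration you unroll explicitly), then forming the block operator $\tilde H^{(n-2)}$, counting at most $J(\k)N_1^{(n-3)}k^{2\gamma r_{n-4}}<J(\k)k^{2\gamma r_{n-3}}$ poles, running the Neumann/Rouch\'e argument of Theorem \ref{Thm3} on the boundary of the union of small disks around the poles, and finishing with the maximum principle. The only cosmetic difference is that you re-derive the per-cluster estimates by an explicit induction on scale instead of citing the lemma at the previous step, and your attribution of the factor $k^{2\gamma r_{n-3}}$ to the maximum principle is slightly off (it comes from the count of resonant subclusters times the inductive pole bound), but the substance of the argument is the paper's.
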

\begin{proof} The lemma is proved by induction. For $n=3$, see Lemma \ref{May24-1}. Let us consider the case $n\geq 4$.  Recall (Definition \ref{D-J-last}) that
$J(\k)$ may be considered to be constant in $2k^{-\rho
_{n-2}}$-neighborhood  of $\tau _1=0$. Hence, $J(\k)=J(\b)$ for such
$\k$-s.

Let us consider the collection of all  $k^{\delta }$,...,$k^{\gamma
r_{n-3}}$-clusters $P_{\m }$ for $H^{(n-1)}(\k)$. Note that the
collection is the same for all such $\k$. We construct the
corresponding block operator $\tilde H^{(n-2)}(\k)$:
$$\tilde H^{(n-2)}(\k)=\sum P_{\m}HP_{\m}+H_0(I-\sum P _{\m }).$$  If a $k^{\gamma r_{n-3} }$-cluster $P_{\m}H(\k )P_{\m}$ is effectively non-resonant, then  its resolvent, obviously,  has no poles $\tau _1$ in the $2k^{- \rho _{n-2} }$-neighborhood of $\tau _{1}=0$. The resolvent of each effectively resonant $k^{\gamma r_{n-3} }$-cluster $P_{\m}H(\k )P_{\m}$ has no more than $N_i^{(n-3)}k^{2\gamma r_{n-4}}$  ($k^{2\gamma r_{0}}$ is taken to be equal to $8$ for $n=4$) poles $\tau _{1j}$   in the
$k^{-\rho _{n-3}}$-neighborhood of $\tau _{1}=0$. It follows from
this lemma for the previous step and also
Lemmas~\ref{L:black}--\ref{L:white},
\ref{L:blackind}--\ref{L:whiteind} and
\ref{L:blackindlast}-\ref{L:whiteindlast} for previous steps, which
give the estimates for $J(\k)$ in the previous steps, based on the
color of clusters. Let us consider the union of $k^{-2\rho _{n-2}}$
neighborhoods of these poles and denote it by ${\D}_{\m}$. By this
lemma for $n-1$, instead of $n$, each $k^{\gamma r_{n-3}}$ cluster
satisfies the estimate
$$
\|(P_{\m} (H^{(n-1)}(\k)-k^{2l})P_{\m})^{-1}\|<k^{\rho
_{n-3}k^{2\gamma r_{n-4}}}k^{2\rho _{n-2}N_i^{(n-3)}k^{2\gamma
r_{n-4}}}
$$
outside $\D_{\m}$, $N_i^{(n-3)}$ corresponding the color of the
cluster. Note that $\max _i N_i^{(n-3)}=N_1^{(n-3)}<k^{\gamma
r_{n-3} +150l\gamma r_{n-4}}$.
Therefore, the resolvent $\left(\tilde
H^{(n-2)}(\k)-k^{2l}\right)^{-1}$ has no more than
$J(\k)N_1^{(n-3)}k^{2\gamma r_{n-4}}$ poles $\tau _{1j}$ in the
complex $k^{-\rho _{n-3}}$-neighborhood of $\tau _{1}=0$. Let $\D
=\cup _{\m}\D_{\m}$, the union being taken over all $\m$
corresponding to all  resonant clusters. The number of $\m$-s in the
union, obviously, does not exceed $k^{4\gamma r _{n-2}}$, which is
the number of different $\m$ in $H^{(n-1)}(\k)$. Therefore, the size
of each connected component of $\D$ is less than $k^{-2\rho
_{n-2}}k^{4\gamma r _{n-2}}=o\left(k^{-\rho _{n-2}}\right)$. We are
interested only in those components of $\D$, which are completely in
the disk of the radius $2k^{-\rho _{n-2}}$ around $\tau _1=0$.
 Considering as before  \footnote{see the proof of Theorem \ref{Thm3} with  $r_{n-2}$ instead of $r_2$, $r_{n-3}$ instead of $r_1$ and $k^{\gamma r_{n-3} +150l\gamma r_{n-4}}$ instead of $k^{\gamma r_1+3}$, when one considers black clusters.},
we can show that the perturbation series for the resolvent
$(H^{(n-1)}(\k)-k^{2l})^{-1}$ with respect to $(\tilde
H^{(n-2)}(\k)-k^{2l})^{-1}$ converges on the boundary of $\D$. The
resolvents have the same number of poles inside each component of
$\D$. Hence, $(H^{(n-1)}(\k)-k^{2l})^{-1}$ has no more than
$J(\k)N_1^{(n-3)}k^{2\gamma r_{n-4}}$ poles in $\D$. It is easy to
see that $J(\k)N_1^{(n-3)}k^{2\gamma r_{n-4}}<J(\k)k^{2\gamma
r_{n-3}}$. The resolvent satisfies the following estimate outside
$\D$:
$$
\|(H^{(n-1)}(\k)-k^{2l})^{-1}\|<k^{\rho _{n-3}k^{2\gamma
r_{n-4}}}k^{2\rho _{n-2}N_1^{(n-3)}k^{2\gamma r_{n-4}}}<k^{\rho
_{n-2}k^{2\gamma r_{n-3}}}.
$$
 Using the maximum principle we obtain \eqref{Sept3a-last}.
 \end{proof}

 Next, we introduce
\begin{equation}\label{triind-last}
\SS ^{(n-1)}(k,\xi):=\{\k\in \R^{2}:\
\|(H^{(n-1)}(\k)-k^{2l})^{-1}\|>k^{\xi}\}.
\end{equation}
It is easy to see that each connected component of $\SS^{(n-1)}(k,\xi)$ is
bounded  by the curves $D(\k, k^{2l}\pm k^{-\xi})=0$, where $ D(\k,
\lambda)=\hbox{det}\,(H^{(n-1)}(\k)-\lambda ).$
\begin{lemma}\label{L:curves-2ind-last}  Let $\l$ be a segment of a straight line in $\R^{2}$,
\begin{equation}\l:=\left\{\k=\a \tau_1+\b,\ \tau _1\in
(0,\eta)\}, \ \ |\a|=1,\  |\b|<4k^{\gamma r_{n-2}},\  \ 0<\eta <
k^{-\rho _{n-2}}\right\}. \label{segment-last}\end{equation} Suppose
both ends of $\l$ belong to a connected component of $\SS ^{(n-1)}(k,\xi)$.
If $\xi $ is sufficiently large, namely, $\xi\geq 2k^{2\gamma
r_{n-3}}J(\b)\log _k\frac{1}{\eta}$, then, there is an inner part
$\l'$ of the segment,
 which is not in $\SS^{(n-1)}(k,\xi)$.
 \end{lemma}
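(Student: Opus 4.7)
The plan is to adapt the proof of Lemma~\ref{L:curves-2ind}, which is the $n=3$ base case, to general $n$ by using the inductive resolvent bound \eqref{Sept3a-last} from Lemma~\ref{May24-2} in place of \eqref{Sept3a}. I would parametrize $\l$ by the real variable $\tau_1\in(0,\eta)$ and regard $H^{(n-1)}(\a\tau_1+\b)$ as an analytic operator-valued function on the complex disk $\{|\tau_1|<k^{-\rho_{n-2}}\}$; the hypothesis $\eta<k^{-\rho_{n-2}}$ ensures that the whole segment $\l$ lives inside this disk, so Lemma~\ref{May24-2} applies along $\l$.

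Second, I would note that Lemma~\ref{May24-2} produces at most $J(\b)k^{2\gamma r_{n-3}}$ complex poles in the disk, and at any $\tau_1$ whose distance to the nearest pole is at least $\eta^2$ (which, since $\eta^2<k^{-2\rho_{n-2}}$, forces $\varepsilon_0\geq\eta^2$ in \eqref{Sept3a-last}) the estimate gives
\begin{equation*}
\log_k\|(H^{(n-1)}(\k)-k^{2l})^{-1}\|\leq \rho_{n-2}k^{2\gamma r_{n-3}}+J(\b)k^{2\gamma r_{n-3}}\bigl(-\rho_{n-2}+2\log_k(1/\eta)\bigr).
\end{equation*}
Using $J(\b)\geq 1$ (otherwise there are no effectively resonant clusters and the resolvent is already small) together with $\log_k(1/\eta)>\rho_{n-2}$, the right-hand side is dominated by $2J(\b)k^{2\gamma r_{n-3}}\log_k(1/\eta)$, which is $\leq\xi$ by the assumption of the lemma. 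Consequently every real $\tau_1\in(0,\eta)$ lying outside the union of $\eta^2$-discs around the poles gives a point $\k\notin\SS^{(n-1)}(k,\xi)$.

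Third, I would bound the real length of the excluded set. Each exceptional disc contributes at most $2\eta^2$ to the real axis, and the number of discs is crudely bounded by the total number of indices in $\Omega(\gamma r_{n-2})$, hence by $k^{4\gamma r_{n-2}}$. Since by definition $\rho_{n-2}=\mu r_{n-2}+r_{n-2}'k^{\gamma r_{n-3}+150l\gamma r_{n-4}}+2l-1+\delta$ contains the huge factor $r_{n-2}'k^{\gamma r_{n-3}+150l\gamma r_{n-4}}$, which by \eqref{indrn} utterly dominates $4\gamma r_{n-2}$, the strict inequality $\eta<k^{-\rho_{n-2}}$ yields $2\eta^2 k^{4\gamma r_{n-2}}\ll\eta$. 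Hence there is a non-empty inner subsegment $\l'\subset\l$ disjoint from all the discs, and by the previous paragraph $\l'\cap \SS^{(n-1)}(k,\xi)=\emptyset$.

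The only genuine issue is the bookkeeping of exponents: checking that the factor $k^{2\gamma r_{n-3}}$ absorbed into the definition of $\rho_{n-2}$ via $N_1^{(n-3)}\leq k^{\gamma r_{n-3}+150l\gamma r_{n-4}}$ still lets the additive term $\rho_{n-2}k^{2\gamma r_{n-3}}$ in the logarithm be swallowed by the main term $2J(\b)k^{2\gamma r_{n-3}}\log_k(1/\eta)$ — which happens precisely because the hypothesis $\eta<k^{-\rho_{n-2}}$ guarantees $\log_k(1/\eta)>\rho_{n-2}$. Once this is checked, the counting step and the geometric conclusion are a direct copy of the $n=3$ argument.
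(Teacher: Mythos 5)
Your proposal is correct and follows essentially the same route as the paper: apply Lemma \ref{May24-2} with $\varepsilon=\eta^2$ (so $\varepsilon_0\geq\eta^2$), check via the hypothesis $\xi\geq 2k^{2\gamma r_{n-3}}J(\b)\log_k\frac1\eta$ that the right-hand side of \eqref{Sept3a-last} stays below $k^{\xi}$ off the $\eta^2$-discs around the poles, and then observe that the total length of those discs, at most of order $\eta^2k^{4\gamma r_{n-2}}$, is $o(\eta)$ because $\eta<k^{-\rho_{n-2}}$, leaving an inner subsegment $\l'$ outside $\SS^{(n-1)}(k,\xi)$. Your explicit exponent bookkeeping (reducing the inequality to $J(\b)\geq1$) is a harmless elaboration of what the paper leaves implicit.
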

 \begin{corollary} \label{C:curves-2ind-last} Let $\k\in \SS^{(n-1)}(k,\xi)$ and
 ${\xi }>k^{2\gamma r_{n-3}}J(\k )\rho _{n-2}$. Then the distance from $\k$ to the boundary of
 $\SS^{(n-1)}(k,\xi)$ is less than $k^{-\tilde \xi}$, $\tilde \xi =\xi k^{-2\gamma r_{n-3} }J(\k )^{-1}$. \end{corollary}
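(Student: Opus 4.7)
The plan is to deduce this from Lemma~\ref{L:curves-2ind-last} exactly as Corollary~\ref{C:curves-2ind} was deduced from Lemma~\ref{L:curves-2ind} in the $n=3$ case: pick a short straight segment emanating from $\k$ and show that it must cross the boundary of $\SS^{(n-1)}(k,\xi)$, thereby producing a boundary point within distance $\eta$ of $\k$. The natural choice is $\eta=k^{-\tilde\xi}$ with $\tilde\xi=\xi k^{-2\gamma r_{n-3}}J(\k)^{-1}$, and the entire task reduces to checking that with this $\eta$ the quantitative hypothesis of Lemma~\ref{L:curves-2ind-last} is met.

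I would first fix an arbitrary unit vector $\a$ (any direction works, say $\a=(1,0)$) and consider the segment
\[
\l=\{\k+\a\tau_1:\ 0<\tau_1<\eta\},\qquad \eta=k^{-\tilde\xi}.
\]
By the definition of $\rho_{n-2}$ and the inductive upper bounds on $r_i$, one checks that $\eta<k^{-\rho_{n-2}}$, so the segment lies inside the stability neighborhood of $\k$ in which $J(\cdot)$ is constant, and takes the form required by \eqref{segment-last}. In particular, writing $\b=\k$ (the base point of the segment), we have $J(\b)=J(\k)$ by Definition~\ref{D-J-last}. The hypothesis of Lemma~\ref{L:curves-2ind-last} becomes
\[
\xi\ \geq\ 2k^{2\gamma r_{n-3}}J(\b)\log_k\!\frac{1}{\eta}\ =\ 2k^{2\gamma r_{n-3}}J(\k)\cdot\tilde\xi,
\]
which is precisely the assumption $\xi>k^{2\gamma r_{n-3}}J(\k)\rho_{n-2}$ combined with the definition of $\tilde\xi$ (with the numerical constants absorbed by the strict inequality).

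Now the dichotomy is the usual one: either the far endpoint $\k+\a\eta$ lies outside the connected component of $\SS^{(n-1)}(k,\xi)$ containing $\k$, in which case the segment obviously hits the boundary $D(\k,k^{2l}\pm k^{-\xi})=0$ before reaching its endpoint; or both endpoints lie in the same connected component, in which case Lemma~\ref{L:curves-2ind-last} gives an inner subsegment outside $\SS^{(n-1)}(k,\xi)$, so again the segment must cross the boundary. In either scenario the distance from $\k$ to the boundary of $\SS^{(n-1)}(k,\xi)$ is at most the length of the segment, namely $k^{-\tilde\xi}$.

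The only mildly delicate point is the verification that $\eta$ actually satisfies $\eta<k^{-\rho_{n-2}}$, which is needed to apply Lemma~\ref{L:curves-2ind-last}; this follows from the hypothesis $\xi>k^{2\gamma r_{n-3}}J(\k)\rho_{n-2}$ since it forces $\tilde\xi>\rho_{n-2}$. Everything else is routine application of the lemma and of Definition~\ref{D-J-last}, so I do not anticipate a serious obstacle; the corollary is essentially a restatement of Lemma~\ref{L:curves-2ind-last} optimized over the choice of $\eta$.
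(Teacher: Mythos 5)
Your argument is essentially the paper's own proof: take a segment of length $k^{-\tilde\xi}$ starting at $\k$, note that it fits the format \eqref{segment-last} (your observation that the hypothesis forces $\tilde\xi>\rho_{n-2}$ covers the constraint $\eta<k^{-\rho_{n-2}}$), and conclude via Lemma \ref{L:curves-2ind-last} -- or trivially, if the far endpoint already leaves the connected component -- that the segment meets the boundary $D(\k,k^{2l}\pm k^{-\xi})=0$. The one blemish, namely that with $\eta=k^{-\tilde\xi}$ the lemma's hypothesis formally reads $\xi\geq 2\xi$, so the strict inequality does not really absorb the factor $2$ and one only gets distance $k^{-\tilde\xi/2}$ without adjusting constants, is equally present in the paper's own one-line proof and is immaterial for how the corollary is used later.
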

 {\em Proof of the corollary.}  Let us consider a segment of the length $\eta =k^{-\tilde \xi }$ starting at $\k $.
 By the statement of the lemma it intersects a boundary $D(\k, k^{2l}\pm k^{-\xi})=0$.

 \begin{proof}
 Choose $\varepsilon =\eta ^2$. Using the hypothesis of the lemma, we
obtain that the right-hand side of \eqref{Sept3a-last} is less than
$k^{{\xi }}$ outside the discs.  Let us estimate the total size (sum
of the sizes) of the discs. Indeed, the size of each disc is $2\eta
^2$ and the number of discs is, obviously, less  $16k^{4\gamma
r_{n-2}}$. Therefore, the total size admits the estimate from above:
$32\eta ^2 k^{4\gamma r_{n-2}}=o(\eta)$, since $\eta <k^{-\rho
_{n-2}}$. This means there is a part $\l'$ of $\l$ outside these
discs. By \eqref{Sept3a-last}, this part is
 outside $\SS ^{(n-1)}(k,\xi)$, when $\xi $ is as described in the statement of the lemma.
\end{proof}

Let $\k _0 \in \R^2$ be fixed and ${\cal N}(k,r_{n-1},\k _0,J_0)$ be the
following subset of the lattice $\k _0+\p_{\n}$, $\n \in \Omega
(r_{n-1})$:
$${\cal N}(k,r_{n-1},\k _0,J_0)=\left\{\k _0+\p_{\n}:\n \in \Omega (r_{n-1}):\
J(\k _0 +\p_{\n})\leq J_0\right\},$$ $J$ being defined by Definition
\ref{D-J-last}. Thus, ${\cal N}$ includes only such $\n$ that the
surrounding $k^{\gamma r_{n-2}}$- block contains less than $J_0 $ of
effectively resonant points. Let $N(k,r_{n-1},\k _0,J_0, \xi)$ be the
number of points $\k _0+\p_{\n}$ in $\SS ^{(n-1)}(k,\xi )\cap {\cal
N}(k,r_{n-1},\k _0,J_0)$.
\begin{lemma}\label{norm2/3-last}
If   $\xi>4\mu r_{n-1}J_0k^{2\gamma r_{n-3}}$, then
\begin{equation}\label{eqnorm2/3-last}
N(k,r_{n-1},\k _0,J_0, \xi)\leq k^{\frac{2}{3}r_{n-1}+43l\gamma
r_{n-2}}.
\end{equation}
\end{lemma}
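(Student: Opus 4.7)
The plan is to mimic the proof of Lemma~\ref{norm2/3}, systematically replacing the objects associated with Step II by their Step $n-1$ analogues. Concretely, I will replace Corollary~\ref{C:curves-2ind} by Corollary~\ref{C:curves-2ind-last}, Lemma~\ref{L:curves-2ind} by Lemma~\ref{L:curves-2ind-last}, and Lemmas~\ref{Lattice-1ind}--\ref{Lattice-3ind} by their obvious analogues with $r_{n-1}$ in place of $r_2$ (these are proved verbatim by replacing $r_2$ by $r_{n-1}$ throughout Section~\ref{Quasiperiodicgeomcont}).

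First I would define an \emph{elementary} subset $\tilde{\SS}\subset\SS^{(n-1)}(k,\xi)$ as before: a region bounded by two monotone, continuously differentiable pieces of the boundary curves $D(\k,k^{2l}\pm k^{-\xi})=0$ of length at most $1$, with no inflection points, together with two coordinate lines. The width of $\tilde\SS$ is controlled by Corollary~\ref{C:curves-2ind-last}: under the hypothesis $\xi>4\mu r_{n-1}J_0k^{2\gamma r_{n-3}}$ and for $\k\in\SS^{(n-1)}(k,\xi)\cap\mathcal{N}(k,r_{n-1},\k_0,J_0)$, the distance to the boundary is at most $k^{-\tilde\xi}$ with $\tilde\xi=\xi k^{-2\gamma r_{n-3}}J_0^{-1}>4\mu r_{n-1}$, which is much less than $k^{-\mu r_{n-1}}$. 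Then I would count points of $(\k_0+\p_\n)\cap\tilde\SS\cap\mathcal{N}$ by the segment argument from Lemma~\ref{norm2/3}: long segments (length $\geq\tfrac{1}{64}k^{-2r_{n-1}/3}$) are bounded by $128k^{2r_{n-1}/3}$ because they are much longer than the width of $\tilde\SS$, and short segments are controlled by the continued fraction structure of $\alpha$ via the analogues of Lemmas~\ref{Lattice-1ind}--\ref{Lattice-3ind}, using the fact that Lemma~\ref{L:curves-2ind-last} forbids more than two mutually parallel short segments in a given $\tilde\SS$. This yields at most $2^{14}k^{2r_{n-1}/3}$ points per elementary subset, which we absorb into $k^{\frac{2}{3}r_{n-1}+\gamma l r_{n-2}}$.

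Next, I would bound the number of elementary subsets needed to cover $\SS^{(n-1)}(k,\xi)$. The boundary curves $D(\k,k^{2l}\pm k^{-\xi})=0$ are zero sets of polynomials in $\k$ of degree $K=2l\cdot\dim P(\gamma r_{n-2})\leq 2l(8k^{\gamma r_{n-2}})^4$. Arguing as in the proof of Lemma~\ref{L:curves-2ind}, part~3 (Bezout applied to $D$ and the inflection polynomial of total degree $3K-4$), the number of inflection points on either boundary curve is at most $K(3K-4)$. Similarly, the number of intersection points of the two curves $D(\k,k^{2l}+k^{-\xi})=0$ and $D(\k,k^{2l}-k^{-\xi})=0$ is at most $K^2$ by Bezout (unless they coincide identically, which is excluded by the asymptotic $|D|\sim|\k|^{K}$). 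Splitting each boundary curve into monotone pieces without inflection points by these points (and by tangent-vertical points, also counted by Bezout), and then associating each elementary subset $\tilde\SS$ to an adjacent boundary piece as in Lemma~\ref{norm2/3}, gives at most $k^{42 l\gamma r_{n-2}}$ elementary components.

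Combining the two estimates yields the bound $k^{\frac{2}{3}r_{n-1}+43l\gamma r_{n-2}}$, as claimed. The main technical obstacle is verifying that the Bezout-based count of elementary pieces still absorbs into the factor $k^{43l\gamma r_{n-2}}$; this uses that $K\leq (8k^{\gamma r_{n-2}})^{4}\cdot 2l$, so the degree contribution is logarithmically controlled by $\gamma r_{n-2}$, and the exponent $43l$ leaves enough slack (the explicit constants $K^{2}\leq k^{8\gamma r_{n-2}}\cdot 4l^{2}$, combined with the factor for intersection points and the splitting into monotone pieces with no vertical tangents, is easily accommodated by $42l\gamma r_{n-2}$, leaving the extra $l\gamma r_{n-2}$ for the per-elementary bound). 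Everything else is a formal transcription from the case $n=3$, since the geometric lattice lemmas and the concavity/parallel-segment argument depend only on the two basic inputs: the width of $\SS^{(n-1)}(k,\xi)$ (Corollary~\ref{C:curves-2ind-last}) and the impossibility of three mutually parallel short segments in a single elementary subset (Lemma~\ref{L:curves-2ind-last}).
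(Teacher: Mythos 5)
Your proposal is correct and takes essentially the same route as the paper, whose proof of this lemma is just the statement that it is ``completely analogous'' to Lemma~\ref{norm2/3} with $2$ replaced by $n-1$, using Corollary~\ref{C:curves-2ind-last} in place of Corollary~\ref{C:curves-2ind} together with the inequality $\rho_{n-2}<r_{n-1}$ (the latter is what licenses applying Lemma~\ref{L:curves-2ind-last} to the short segments, a point you use only implicitly but which holds comfortably since $\rho_{n-2}\ll r_{n-1}$). Your filled-in transcription — width control via Corollary~\ref{C:curves-2ind-last}, the long/short segment count with the lattice-cluster lemmas, at most two parallel short segments per elementary component, and the Bezout count of elementary components of degree governed by $k^{\gamma r_{n-2}}$ — matches the intended argument.
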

\begin{proof} The proof of the lemma is completely analogous to that of \ref{norm2/3} up to replacement of 2 by $n-1$. Instead of  Corollary \ref{C:curves-2ind} we use Corollary \ref{C:curves-2ind-last} and the inequality $\rho _{n-2}<r_{n-1}$.\end{proof}

\subsubsection{Model Operator for Step $n+1$}
We repeat for $r_n$ the construction from the
subsection~\ref{MOforStep3}, which was done for arbitrary large $r_{n-1}$.
We start with introducing a new notation by analogy with \eqref{se} and \eqref{Omega-s}:
\begin{equation} \Omega _s^{(j)}(r_n)=\{\m \in \Omega (r_n),\ 0<p_{\m}<k^{-r_{j-1}'k^{2\gamma r_{j-2}}}\}, \ \ j\geq 2,\label{Omega-j} \end{equation}
where $k^{2\gamma r_{j-2}}$ is  taken to be just $5$ when $j=2$. Note that $ \Omega _s^{(j+1)}\subset  \Omega _s^{(j)}$ and  $\Omega _s^{(j)}=\emptyset $ when $j>n$.
Next, let $\m \in \Omega (r_n)$. We denote the $k^{\gamma r_{n-2}}$-component
containing $\m$ by $\tilde \Pi (\m)$ and the corresponding projector
by $\tilde P(\m)$. For $\m$ belonging to the same $k^{\gamma
r_{n-2}}$-component, $\tilde \Pi (\m)$ and  $\tilde P(\m)$ are the same.
We define ${\MM}^{(n)}$ by the recurrent formula, which starts with ${\MM}^{(3)}$, see \eqref{M^3}:
\begin{equation}\label{M^n} {\MM}^{(n)}:={\MM}^{(n)}(\varphi _0, r_n)=\{\m\in \MM^{(n-1)}(\varphi _0, r_n)\cup \Omega _s^{(n-1)}(r_n)
:\ \varphi_0\in{\cal O}_\m^{(n)}(r_{n-1}',1)\},\end{equation} where
${\cal O}_\m^{(n)}(r_{n-1}',\tau)$ is the union of the disks of the
radius $\tau k^{-r_{n-1}'}$ with the centers at poles of the resolvent
$(\tilde P(\m)(H(\k^{({n-1})}(\varphi ))-k^{2l}I)\tilde P(\m))^{-1}$ in the $k^{-44r_{n-2}'-2l-\delta }$-neighborhood of $\varphi _0$.
For $\m$ belonging to the same $k^{\gamma
r_{n-1}}$-component, the sets ${\cal O}_\m^{(n)}(r_{n-1}',\tau)$ are
identical. We say that $\m \in {\MM}^{(n)}$ is $k^{\gamma
r_{n-2}}$-resonant. The corresponding $k^{\gamma r_{n-2}}$-clusters we call
resonant too.

 Let $\varphi_0\in \omega^{(n)} (k,
\delta ,1)$. By
construction of the non-resonant set $\omega^{(n)} (k, \delta ,1)$,
we have ${\MM}^{(n)}\cap \Omega (r_{n-1})=\emptyset $.

Further we use the property of the set $\MM^{(n)}$ formulated in the
next lemma which is an analogue of the Lemmas~\ref{L:2/3-1}, \ref{L:2/3-1ind}.

\begin{lemma}\label{L:2/3-1ind-last} Let $r_{n-1}'>2k^{(\gamma +\delta_0)10^{-4}r_{n-2}-2\delta}$.\footnote{ We
also notice that this condition is consistent with the restriction
\eqref{indrn}.} Let  $1/20<\gamma '<20$, $\m _0\in\Omega
(r_n)$ and $\Pi _{\m_0}$ be the $k^{\gamma
'r_{n-1}}$-neighborhood (in $\||\cdot\||$-norm) of $\m_0$. Then the set
$\Pi _{\m _0}$ contains less than $k^{\frac 23 \gamma'r_{n-1} +50l\gamma r_{n-2}}$
elements of $\MM^{(n)}$.
\end{lemma}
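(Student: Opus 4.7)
The proof follows the pattern established in Lemmas \ref{L:2/3-1} and \ref{L:2/3-1ind}, now using the inductive lattice-counting Lemma \ref{norm2/3-last} in place of its earlier analogs. The plan is as follows.

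First, I would translate membership in $\MM^{(n)}$ into a resolvent lower bound: if $\m \in \MM^{(n)}$, there exists $\varphi_*$ with $|\varphi_* - \varphi_0| < k^{-r_{n-1}'}$ and $\det\bigl(\tilde P(\m)(H(\k^{(n-1)}(\varphi_*)) - k^{2l}I)\tilde P(\m)\bigr) = 0$. A standard perturbation argument (identical to its counterpart in Lemma \ref{L:2/3-1ind}) then gives
\[
\bigl\|(\tilde P(\m)(H(\k^{(n-1)}(\varphi_0)) - k^{2l}I)\tilde P(\m))^{-1}\bigr\| \geq \tfrac{1}{4l}k^{\xi}, \qquad \xi \geq r_{n-1}' - 2l + 1.
\]
Writing $\k_0 = \k^{(n-1)}(\varphi_0) + \p_{\m_0}$ and $\n = \m - \m_0 \in \Omega(\gamma' r_{n-1})$, this means $\k_0 + \p_\n \in \SS^{(n-1)}(k,\xi)$, placing us squarely in the setting of Lemma \ref{norm2/3-last}.

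Next I would treat the resonant clusters $\tilde\Pi(\m)$ by type. For $\m \in \Omega_s^{(n-1)}(r_n)$ (simple), the same argument as at the start of the proof of Lemma \ref{L:2/3-1ind}, but using the non-resonance of neighbors provided by Lemma \ref{L:geometric4n} (the stepwise version of Lemma \ref{L:geometric3}), shows that the $k^{r_{n-2}/2}$-box around $\m$ contains no other point of $\MM^{(n-1)}(\varphi_0,r_n) \cup \Omega_s^{(n-1)}(r_n)$; then Lemma \ref{norm2/3-last} with $J_0=1$ bounds the number of simple points in $\Pi_{\m_0}$ by $k^{\frac{2}{3}\gamma' r_{n-1} + 43l\gamma r_{n-2}}$. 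For the larger color classes (white, grey, black), if $\xi$ is already large enough (of order $k^{\tfrac{\gamma}{6}r_{n-2}-2\delta}$, $k^{(\tfrac{\gamma}{2}+2\delta_0)r_{n-2}-2\delta}$, $k^{(\gamma+\delta_0)r_{n-2}-2\delta}$ respectively), one encloses the cluster in a box of the appropriate size (using the obvious analogues of Lemmas \ref{propw}, \ref{propg}, \ref{propb} at level $n-1$) and applies Lemma \ref{norm2/3-last} directly with $J_0 = N_i^{(n-2)}$.

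The main obstacle, and the step that I expect to require the most care, is the case where $\xi$ is not yet large enough for such a direct application. Here I would repeat verbatim the substructure device from Lemma \ref{L:2/3-1ind}: inside the given $k^{\gamma r_{n-2}}$-cluster $\tilde\Pi$, one rebuilds simple/white/grey/black subclusters at a finer scale $\tilde\gamma = 10^{-4}\gamma$, and argues that if every subcluster satisfied the opposite inequality of \eqref{estwind}--\eqref{estbind} (adapted with $r_{n-2}$ in place of $r_1$), then the perturbation series between the subcluster block operator and $\tilde\Pi$ would converge, yielding
\[
\bigl\|(\tilde P(H(\k^{(n-1)}(\varphi_0))-k^{2l}I)\tilde P)^{-1}\bigr\| \leq k^{k^{(\tilde\gamma+\tilde\delta_0)r_{n-2}-2\delta}},
\]
contradicting the lower bound from $\xi$ under the hypothesis $r_{n-1}' > 2k^{(\gamma+\delta_0)10^{-4}r_{n-2}-2\delta}$. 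Hence at least one offending subcluster exists, and Lemma \ref{norm2/3-last} applied to each subcluster class (with the appropriate $J_0$ coming from the inductive analogues of Lemmas \ref{L:blackindlast}--\ref{L:whiteindlast}) gives that the number of such $k^{\gamma r_{n-2}}$-clusters in $\Pi_{\m_0}$ is at most $k^{\frac{2}{3}\gamma' r_{n-1} + 43l\gamma r_{n-2}}$. Multiplying by the maximal size $k^{6\gamma r_{n-2}+12}$ of such a cluster absorbs the leftover factor into $k^{50l\gamma r_{n-2}}$, proving the lemma. Throughout, the key quantitative input is the consistency chain $r_{n-1}' > 2k^{(\gamma+\delta_0)10^{-4}r_{n-2}-2\delta}$ together with \eqref{indrn}, which guarantees that Lemma \ref{norm2/3-last} is applicable at each color class.
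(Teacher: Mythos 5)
Your proposal is correct and follows essentially the same route as the paper, which proves this lemma by repeating the argument of Lemma \ref{L:2/3-1ind} with the index shift, replacing Lemma \ref{norm2/3} by Lemma \ref{norm2/3-last} and Lemma \ref{L:geometric2} by the geometric lemmas of the preceding steps, and invoking \eqref{indrn} to verify the hypothesis of Lemma \ref{norm2/3-last}. Your expanded treatment of the simple/white/grey/black cases and the fine-scale subcluster device is exactly the adaptation the paper intends.
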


\begin{proof} The proof is similar to that of Lemma \ref{L:2/3-1ind} up to the replacement of $3$ by $n$. Instead of Lemma \ref{L:geometric2} we use Lemmas \ref{L:geometric3}(n=4)
\ref{L:geometric4n} (n=5) and \ref{L:geometric4nlast} with  $n-2$ instead of $n$ when $n>5$. We also use Lemma \ref{norm2/3-last} instead of Lemma \ref{norm2/3}. We use \eqref{indrn} to show that the hypothesis
of Lemma \ref{norm2/3-last} holds. In particular, we use the inequality $r_{n-1}'>>4\mu  r_{n-1}k ^{2\gamma r_{n-3}}$, following from \eqref{indrn}. \end{proof}

We continue with constructing $k^{\gamma r_{n-2}}$-clusters in
$\Omega (r_n)$, $r_n>r_{n-1}$, the same way we did it for $\Omega
(r_2)$ in Section \ref{MOforStep3}. We call a $k^{\gamma
r_{n-2}}$-cluster resonant if it contains $\m \in {\MM}^{(n)}$, see
\eqref{M^3}, \eqref{M^n}. Next, we repeat the construction after
Lemma \ref{L:2/3-1} up to the replacement of $r_1$ by $r_{n-1}$ and
$\delta $ be $\gamma r_{n-2}$. Indeed, let us split $\Omega
(r_n)\setminus \Omega (r_{n-1})$ into $k^{\gamma r_{n-1}}$-boxes,
$\gamma =\frac{1}{5}$.

First, let's consider  $\m \in \Omega _s^{(n)}(r_n)$. As before (see
``Simple region", page \pageref{simple}) one can prove that $\Omega
_s^{(n)}(r_n)\subset \MM (r_n)$; there are no other elements of $\MM
(r_n)$ in the $k^{\delta}$-box around $\m$; $\m$ itself can belong
or do not belong to $\MM^{(j)}(r_n)$, but there are
 no other elements of
$\MM^{(j)}(r_n)$ in the $k^{r_{j-1}}$-box around such $\m$,
$j=2,\dots,n$; and there are no other elements of $\Omega
_s^{(n)}(r_n)$ in the $k^{r_{n-1}}$-box around $\m$.

For each $\m \in \Omega _s^{(n)}(r_n)$ we consider its $k^{
r_{n-1}/2}$-neighborhood in $\||\cdot \||$ norm. The union of such
boxes we call the simple region and denote $\Pi _s$. The
corresponding projection is $P_s$.

Now, consider all other boxes (all elements $\p_\m$ there satisfy
$p_\m>k^{- r_{n-1}'k^{2\gamma r_{n-2}}}$). We call a box black if it
together with its neighbors contains more than $k^{\gamma
r_{n-1}/2+\delta _0r_{n-1}}$ elements of $\MM ^{(n)}$,
$\delta_0=\gamma /100$. Let us consider "black" boxes together with
their $k^{\gamma r_{n-1}+\delta _0r_{n-1}}$-neighborhoods and call
this the black region. We denote the black region by $\Pi _b$. The
corresponding projector is $P_{b}$. By white boxes we mean
$k^{\gamma r_{n-1}}$-boxes which together with its neighbors contain
no more than $k^{\gamma r_{n-1}/2+\delta_0r_{n-1}}$ elements of $\MM
^{(n)}$. Every white box we split into "small" boxes of the size
$k^{\gamma r_{n-1}/2+2\delta_0r_{n-1}}$. We call a small box "grey"
if it together with its neighbors contains more than $k^{\gamma
r_{n-1}/6-\delta_0r_{n-1}}$ elements of $\MM ^{(n)}$. Grey small
boxes together with its $k^{\gamma
r_{n-1}/2+2\delta_0r_{n-1}}$-neighborhoods we call the grey region.
The notation for this region is $\Pi _g$. The corresponding
projector is $P_{g}$. The part of the grey region which is outside
the black region, we denote $\Pi _g'$ and the corresponding
projection by $P_g'$. By a white small box we call a small box which
has no more than $k^{\gamma r_{n-1}/6-\delta_0r_{n-1}}$ elements of
$\MM ^{(n)}$. In each small white box we consider $k^{\gamma
r_{n-1}/6}$-boxes around each point of $\MM^{(n)}$. The union of
such $k^{\gamma r_{n-1}/6}$-boxes we call the white region and
denote $\Pi _w$. The corresponding projection is $P_w$. The part of
the white region which is outside the black and grey regions, we
denote $\Pi _w'$ and the corresponding projection by $P_w'$.

We put as before
$$P_r^{(n)}:=P_s^{(n)}+P_b^{(n)}+P_g^{(n)'}+P_w^{(n)'}.$$
The construction of the non-resonant region is the inductive extension of that for Step IV, see Section \ref{S:4}, page \pageref{NRR}. Indeed, we start with construction of $k^{\delta }$ clusters in $\Omega (r_n)$. Those of them, who are resonant, we extend to $k^{\gamma r_1}$ clusters, those of them, which are resonant we extend to  $k^{\gamma r_2}$ clusters, and so on until we reach the size $k^{\gamma r_{n-2}}$. On each step we construct a colored structure (simple, black, grey, white). If $k^{\gamma r_j}$-cluster happens to intersect $k^{\gamma r_{j+1}}$-cluster, we consider it to be a part of $k^{\gamma r_{j+1}}$-cluster. Thus, $k^{\gamma r_{j}}$-clusters are built around the points of
${\MM}^{(j+1)}(r_n,\varphi _0)\cup \Omega _{s}^{(j+1)}(r_n)\setminus \left({\MM}^{(j+1)}(r_{n-1},\varphi _0)\cup {\MM}^{(j+2)}(r_n,\varphi _0)\cup \Omega _{s}^{(j+2)}(r_n)\right)$.
 The set of all other non-resonant
$k^{\gamma r_{j}}$-clusters we denote by
$\Pi_{nr,r_{j}}^{(n)'}$. Then
$$
\Pi_{nr}^{(n)}:=\cup_{j=0}^{n-2}\Pi_{nr,r_j}^{(n)},$$
 Those $\Pi_{nr,r_j}^{(n)}$, which intersect with
$\Pi_r^{(n)}$ we attach to $\Pi_r^{(n)}$ just slightly abusing the
notation (cf. Section \ref{S:4}). The part of $\Pi_{nr,r_j}^{(n)}$ which does not intersect with $\Pi_r^{(n)}$ we denote by $\Pi_{nr,r_j}^{(n)'}$. Correspondingly, the part of $\Pi_{nr}^{(n)}$ which does not intersect $\Pi_{r}^{(n)}$ is denoted by $\Pi_{nr}^{(n)'}$.
Further, \begin{equation} \label{P(n)}
P^{(n)}:=P^{(n)}_r+P^{(n)'}_{nr}+P(r_{n-1}).
\end{equation}
We continue construction from Section \ref{S:3}. Repeating the arguments
from the proofs of Lemmas~\ref{L:black},~\ref{L:grey},~\ref{L:white}
with obvious changes (in particular, using Lemma~\ref{L:2/3-1ind-last}
instead of Lemmas~\ref{L:2/3-1}, \ref{L:2/3-1ind}) we obtain the following results.
(Here and in what follows we will omit superscript $(n)$ when it
does not lead to a confusion.)

\begin{lemma} \label{L:blackindlast} \begin{enumerate} \item
Each $\Pi _b^j$ contains no more than $k^{\gamma r_{n-1}/2-\delta
_0r_{n-1}+150l\gamma r_{n-2}}$ black boxes.
\item The size of $\Pi _b^j$ in $\||\cdot \||$ norm is less than
$k^{3\gamma r_{n-1}/2+150l\gamma r_{n-2}}$.
\item Each $\Pi _b^j$ contains no more than $k^{\gamma r_{n-1}+150l\gamma r_{n-2}}$ elements of
$\MM^{(n)}$. Moreover, any box of $\||\cdot \||$-size $k^{3\gamma
r_{n-1}/2+150l\gamma r_{n-2}}$ containing $\Pi _b^j$ has no more
than $k^{\gamma r_{n-1}+150l\gamma r_{n-2}}$ elements of
$\MM^{(n)}$ inside.
\end{enumerate}
\end{lemma}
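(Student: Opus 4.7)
The plan is to mimic the proofs of Lemmas \ref{L:black} and \ref{L:blackind} verbatim, replacing the ``density estimate'' Lemma \ref{L:2/3-1} (which gave $N<cL^{2/3}k$) by its inductive counterpart Lemma \ref{L:2/3-1ind-last} (which gives $N<L^{2/3}k^{50l\gamma r_{n-2}}$). All three statements then follow from a single three-way inequality between the number of black boxes $n_b$, the $\||\cdot\||$-size $L_b$ of $\Pi_b^j$, and the number $N_b:=|\MM^{(n)}\cap \Pi_b^j|$.

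First I would record the two elementary bookkeeping estimates that hold by construction of $\Pi_b^j$: each black $k^{\gamma r_{n-1}}$-box together with its $k^{\gamma r_{n-1}+\delta_0 r_{n-1}}$-neighborhood contributes to $\Pi_b^j$, so
\begin{equation}\label{Lb-prop}
L_b \;<\; 3\,n_b\,k^{\gamma r_{n-1}+\delta_0 r_{n-1}},
\end{equation}
while blackness of each constituent box forces
\begin{equation}\label{Nb-prop}
N_b \;>\; n_b\,k^{\gamma r_{n-1}/2+\delta_0 r_{n-1}}.
\end{equation}
Next I would invoke Lemma \ref{L:2/3-1ind-last}: since $\Pi_b^j$ is contained in the $k^{\gamma' r_{n-1}}$-neighborhood of any of its points with $\gamma' r_{n-1}= \log_k L_b$, and since $\gamma'$ will turn out to lie safely in $(1/20,20)$ (see the last paragraph), that lemma yields
\begin{equation}\label{23-prop}
N_b \;\leq\; L_b^{2/3}\,k^{50l\gamma r_{n-2}}.
\end{equation}

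Now I would combine \eqref{Lb-prop}--\eqref{23-prop}. Substituting \eqref{Lb-prop} into \eqref{23-prop} and comparing with \eqref{Nb-prop} gives
\[
n_b\,k^{\gamma r_{n-1}/2+\delta_0 r_{n-1}} \;<\; (3n_b)^{2/3}\,k^{2(\gamma r_{n-1}+\delta_0 r_{n-1})/3+50l\gamma r_{n-2}},
\]
whence, solving for $n_b^{1/3}$, one gets
\[
n_b^{1/3}\;<\;3^{2/3}\,k^{\gamma r_{n-1}/6-\delta_0 r_{n-1}/3+50l\gamma r_{n-2}}.
\]
Cubing yields part (1): $n_b< k^{\gamma r_{n-1}/2-\delta_0 r_{n-1}+150l\gamma r_{n-2}}$ (the constant $27\cdot 3^2$ is absorbed into the exponent since $150l\gamma r_{n-2}$ is huge). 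Feeding this back into \eqref{Lb-prop} gives part (2). For part (3), I would apply Lemma \ref{L:2/3-1ind-last} one more time, now with a box of $\||\cdot\||$-size $k^{3\gamma r_{n-1}/2+150l\gamma r_{n-2}}$ containing $\Pi_b^j$, i.e.\ with $\gamma'=\tfrac{3\gamma}{2}+150l\gamma r_{n-2}/r_{n-1}+o(1)$; this directly produces $N_b< k^{\gamma r_{n-1}+150l\gamma r_{n-2}}$ inside any such box.

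The only nontrivial point---the main obstacle, although a small one---is verifying that the applications of Lemma \ref{L:2/3-1ind-last} are legitimate, namely that the implicit $\gamma'$ lies in $(1/20,20)$ and that the hypothesis $r_{n-1}'>2k^{(\gamma+\delta_0)10^{-4}r_{n-2}-2\delta}$ is in force. The second condition is built into \eqref{indrn}. For the first, the inductive ordering \eqref{indrn} gives $r_{n-2}<\gamma\cdot 10^{-7}r_{n-1}$, so the correction $150l\gamma r_{n-2}/r_{n-1}$ to $\gamma'\approx 3\gamma/2=3/10$ is of order $10^{-5}$, and in particular $\gamma'\in(1/20,20)$ with room to spare. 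Once this is checked, the three conclusions follow exactly as in the proofs of Lemmas \ref{L:black} and \ref{L:blackind}, with the shifts $r_1\to r_{n-1}$, $r_0\to r_{n-2}$, and with the ``$+3$'' in the earlier statements replaced throughout by $+150l\gamma r_{n-2}$ (consistent with the convention $150l\gamma r_0=3$ set in the text).
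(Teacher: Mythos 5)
Your proof is correct and takes essentially the paper's own route: the paper establishes this lemma precisely by repeating the $n_b$–$L_b$–$N_b$ inequalities from the proof of Lemma \ref{L:black}, with Lemma \ref{L:2/3-1ind-last} replacing Lemma \ref{L:2/3-1}, which is exactly your argument, and your exponent bookkeeping (the factor $150=3\cdot 50$ in part 1, and $\tfrac23\bigl(\tfrac{3\gamma}{2}r_{n-1}+150l\gamma r_{n-2}\bigr)+50l\gamma r_{n-2}=\gamma r_{n-1}+150l\gamma r_{n-2}$ in part 3) matches the stated bounds. Your explicit check of the hypotheses of Lemma \ref{L:2/3-1ind-last} (the admissible range of $\gamma'$ and the condition on $r_{n-1}'$ from \eqref{indrn}) is a detail the paper leaves implicit.
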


\begin{lemma}\label{L:greyindlast} \begin{enumerate} \item
Each $\Pi _g^j$ contains no more than $k^{\gamma r_{n-1}/3+2\delta
_0r_{n-1}}$ grey boxes.
\item The size of $\Pi _g^j$ in $\||\cdot \||$ norm is less than
$k^{5\gamma r_{n-1}/6+4\delta _0r_{n-1}}$.
\item Each $\Pi _g^j$ contains no more than $k^{\gamma r_{n-1}/2+\delta _0r_{n-1}}$ elements of
$\MM^{(n)}$.\end{enumerate}
\end{lemma}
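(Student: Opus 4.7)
The plan is to mimic the proofs of Lemma \ref{L:grey} and Lemma \ref{L:greyind} verbatim, with the single replacement $r_1 \mapsto r_{n-1}$ (respectively $r_2 \mapsto r_{n-1}$) in all scale exponents and with $\MM^{(2)}$ replaced by $\MM^{(n)}$. The argument is purely combinatorial/geometric: it compares three quantities attached to a grey component — the number of grey small boxes it contains, the number of $\MM^{(n)}$-points it contains, and its $\||\cdot\||$-diameter — and uses only the defining thresholds of the colored boxes, not the inductive spectral estimates.

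Concretely, I would fix a component $\Pi_g^j$ and first suppose that it lies inside a single ``big'' white box (of $\||\cdot\||$-size $k^{\gamma r_{n-1}}$). Let $n_g$ denote the number of grey small boxes it contains, $L_g$ its $\||\cdot\||$-diameter, and $N_g$ its number of $\MM^{(n)}$-elements. By the definition of a grey small box each such box (together with its neighbors) contains more than $k^{\gamma r_{n-1}/6-\delta_0 r_{n-1}}$ elements of $\MM^{(n)}$, so
\[
N_g > c\, n_g\, k^{\gamma r_{n-1}/6-\delta_0 r_{n-1}}.
\]
On the other hand, since the ambient big box is white, the same neighbors-plus-self bound yields $N_g < k^{\gamma r_{n-1}/2+\delta_0 r_{n-1}}$, which is part 3 of the claim. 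Dividing the two inequalities gives $n_g < c\, k^{\gamma r_{n-1}/3+2\delta_0 r_{n-1}}$ (part 1), and since each grey small box together with its $k^{\gamma r_{n-1}/2+2\delta_0 r_{n-1}}$-neighborhood contributes a diameter of at most $3k^{\gamma r_{n-1}/2+2\delta_0 r_{n-1}}$, we obtain
\[
L_g < 3\, n_g\, k^{\gamma r_{n-1}/2+2\delta_0 r_{n-1}} < c\, k^{5\gamma r_{n-1}/6+4\delta_0 r_{n-1}},
\]
which is part 2.

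The only subtlety — and it is the same subtlety as in Lemma \ref{L:grey} — is removing the hypothesis that $\Pi_g^j$ sits inside a single big white box. Here I would use the constraint $\delta_0 < \gamma/24$ (already built into the construction through $\delta_0 = \gamma/100$): the bound $L_g < k^{5\gamma r_{n-1}/6 + 4\delta_0 r_{n-1}}$ just established is strictly smaller than $k^{\gamma r_{n-1}}$, i.e.\ much smaller than the side length of a big white box. Consequently $\Pi_g^j$ cannot spread across more than a bounded collection of neighboring big boxes, and the same estimates apply to the union, possibly with a harmless constant factor absorbed into~$c$.

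The only step that is not completely routine is checking that the definition of ``grey small box'' and ``big white box'' at step $n+1$ — built around the inductively defined set $\MM^{(n)}$ from \eqref{M^n} and the newly introduced scale $k^{\gamma r_{n-1}}$ — functions in exactly the same combinatorial way as at step III/IV. This is precisely the content of the construction carried out between \eqref{Omega-j} and the statement of Lemma \ref{L:blackindlast}, and once Lemma \ref{L:2/3-1ind-last} is in hand the two threshold inequalities above are literal restatements of the grey/white defining properties. So the induction passes through with no new analytic input: the proof is a word-for-word repetition of the proof of Lemma \ref{L:grey}, with $r_1$ replaced by $r_{n-1}$.
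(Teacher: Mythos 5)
Your proposal is correct and follows exactly the route the paper takes: the paper proves this lemma by stating that one repeats the proof of Lemma \ref{L:grey} verbatim with the shift of scales ($r_1\mapsto r_{n-1}$, $\MM^{(2)}\mapsto\MM^{(n)}$), which is precisely your argument comparing $n_g$, $N_g$, $L_g$ via the grey/white thresholds and then using $\delta_0<\gamma/24$ to handle components meeting several big white boxes. (Your invocation of Lemma \ref{L:2/3-1ind-last} is not actually needed for the grey case—it enters only in the black-cluster analog—but this does not affect correctness.)
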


\begin{lemma}\label{L:whiteindlast} \begin{enumerate} \item The size of $\Pi _w^j$ in $\||\cdot \||$ norm is less than
$k^{\gamma r_{n-1}/3-\delta _0r_{n-1}}$.
\item Each $\Pi _w^j$ contains no more
than $k^{\gamma r_{n-1}/6-\delta _0r_{n-1}}$ points of $\MM^{(n)}$.
\end{enumerate}
\end{lemma}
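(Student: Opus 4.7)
The plan is to imitate verbatim the proof of Lemma \ref{L:white} from Step III, since the only changes are the indices: $r_{n-1}$ replaces $r_1$, $\MM^{(n)}$ replaces $\MM^{(2)}$, while the relative exponents $\gamma$, $\delta_0$, and the prescriptions ``small box size $k^{\gamma r_{n-1}/2+2\delta_0 r_{n-1}}$'', ``at most $k^{\gamma r_{n-1}/6-\delta_0 r_{n-1}}$ points of $\MM^{(n)}$'', ``$k^{\gamma r_{n-1}/6}$-neighborhoods'' are the direct analogues of the Step III definitions. So the statement should follow from the definitions alone, with no use of Lemma \ref{L:2/3-1ind-last}; the latter is only needed for the black and grey lemmas, where the pigeonhole counting is genuinely needed.

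First I would recall that, by the very definition of a small white box used to build $\Pi_w$, each such box together with its $\||\cdot\||$-neighbors contains at most $N:=k^{\gamma r_{n-1}/6-\delta_0 r_{n-1}}$ elements of $\MM^{(n)}$. The set $\Pi_w$ is then the union of $k^{\gamma r_{n-1}/6}$-neighborhoods of these points, and connected components $\Pi_w^j$ are defined by a $3k^{\gamma r_{n-1}/6}$-equivalence relation between the points of $\MM^{(n)}$ inside each small white box.

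Next I would bound the $\||\cdot\||$-diameter of any $\Pi_w^j$ built from points inside one fixed small white box (plus its neighbors) by the trivial estimate
\[
\mathrm{diam}(\Pi_w^j) \;\le\; 3\,N\,k^{\gamma r_{n-1}/6} \;=\; 3\,k^{\gamma r_{n-1}/3-\delta_0 r_{n-1}},
\]
which proves part (1) immediately. Since $\gamma r_{n-1}/3-\delta_0 r_{n-1}$ is much smaller than $\gamma r_{n-1}/2+2\delta_0 r_{n-1}$ (using $\delta_0=\gamma/100$ as throughout), this aggregate diameter is negligible compared to the size of a single small white box. Consequently the chain of $3k^{\gamma r_{n-1}/6}$-neighbors starting from any point of $\MM^{(n)}$ in a small white box cannot escape that box together with its immediate $\||\cdot\||$-neighbors, so every $\Pi_w^j$ draws its $\MM^{(n)}$-points from one such ``small white box plus neighbors'' system. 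Since such a system contains at most $N$ elements of $\MM^{(n)}$ by construction, part (2) follows.

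There is essentially no obstacle in this step: the lemma is a bookkeeping consequence of the definitions together with the single quantitative comparison $\gamma r_{n-1}/3-\delta_0 r_{n-1}\ll \gamma r_{n-1}/2+2\delta_0 r_{n-1}$. The only thing I would double-check is that the ``reattachment'' convention (whereby white clusters close to grey or black clusters are absorbed into them) is respected: this convention can only \emph{decrease} the size and the $\MM^{(n)}$-count of the remaining $\Pi_w^j$, so the bounds above are preserved. This parallels the remark made after Lemmas \ref{L:black}--\ref{L:white} in Step III.
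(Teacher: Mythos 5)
Your argument is exactly the paper's: Lemma \ref{L:whiteindlast} is proved by repeating the proof of Lemma \ref{L:white} with the index shift $r_1\to r_{n-1}$, $\MM^{(2)}\to\MM^{(n)}$, i.e.\ the at-most-$k^{\gamma r_{n-1}/6-\delta_0 r_{n-1}}$ count inside a small white box plus neighbors, the total spread bound $3k^{\gamma r_{n-1}/3-\delta_0 r_{n-1}}$ being much smaller than the small-box size, hence no cluster escapes the box and its neighbors, which yields both parts. Your observation that Lemma \ref{L:2/3-1ind-last} is not needed for the white case (only for black and grey) and that the reattachment convention only shrinks the clusters is consistent with the paper as well.
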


The construction of the rest of Section~\ref{MOforStep3} stays unchanged. Let us
introduce corresponding notation, formulate the results and provide
some comments.

Next lemmas are the analogues of
Lemmas~\ref{L:Pnr},~\ref{L:Pr},~\ref{L:Ps}.

\begin{lemma}\label{L:Pnrnlast}Let $\varphi _0\in \omega^{(n)}(k,\delta ,\tau )$,
$|\varphi-\varphi _0|<k^{-k^{r_{n-2}}}$. Then,
\begin{equation}\label{Pnrnlast}
\left\|\Bigl(P_{nr}\bigl(H(\k^{(n)}(\varphi
))-k^{2l}I\bigr)P_{nr}\Bigr)^{-1}\right\|<k^{r_{n-1}'k^{2\gamma
r_{n-2}}}k^{r_{n-1}'}\leq k^{k^{3\gamma r_{n-2}}}.
\end{equation} \end{lemma}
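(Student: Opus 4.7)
The plan is to reduce the problem to a block-diagonal decomposition and apply the corresponding single-cluster resolvent estimate from every previous step. By the construction preceding the lemma, $\Pi_{nr} = \bigcup_{j=0}^{n-2}\Pi_{nr,r_j}'$, where each $\Pi_{nr,r_j}'$ is a disjoint union of simple/white/grey/black $k^{\gamma r_j}$-clusters (with $r_0:=\delta/\gamma$), separated from all other components of $\Pi_{nr}$ in $\||\cdot\||$-distance by more than $\tfrac13 k^{\delta}$. Since $V$ is a trigonometric polynomial of degree $Q<k^{\delta}/3$, the analog of Lemma~\ref{L:boundary} and Corollary~\ref{C:PHP-2} give $P_{nr}HP_{nr}=\sum_j P_{nr}^{j}HP_{nr}^{j}$, reducing the bound to a supremum over single clusters.

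For each individual cluster $\Pi_{nr}^j$, I would evaluate $H$ first at $\k^{(n-1)}(\varphi_0)$ and invoke the step-$(j+2)$ resolvent estimate: Lemma~\ref{L:estnonres1}/Corollary~\ref{estnonres1} for $j=0$; Lemmas~\ref{L:Pr}/\ref{L:Ps} for $j=1$; Lemmas~\ref{L:Prn}/\ref{L:Psn} for $j=2$; and the obvious inductive analog of these lemmas for $2<j\leq n-2$. The key point is that, by the way $\Pi_{nr,r_j}'$ was extracted from the construction (and because $\varphi_0\in\omega^{(n)}\subset\omega^{(j+2)}$ at every intermediate scale), $\varphi_0$ lies outside the disc $\OO_\m^{(j+2)}(r_{j+1}',1)$ for every $\m$ in the cluster. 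Hence the distance from $\varphi_0$ to any pole of the corresponding block resolvent is at least $k^{-r_{j+1}'}$, and the step-$(j+2)$ estimates apply with $\varepsilon_0=k^{-r_{j+1}'}$. The resulting bound is of order $k^{r_{j+1}'N_i^{(j)}}$ with $N_i^{(j)}$ controlled by the color lemmas (Lemmas~\ref{L:blackindlast}--\ref{L:whiteindlast} and their predecessors).

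Next I would pass from $\k^{(n-1)}(\varphi_0)$ to $\k^{(n)}(\varphi)$ via the Hilbert identity. The angular perturbation is bounded by $|\varphi-\varphi_0|<k^{-k^{r_{n-2}}}$ and the radial one by $|h^{(n)}(\varphi)|$, which by the inductive analog of Lemma~\ref{ldk-3} is super-exponentially smaller than any $\varepsilon_0=k^{-r_{j+1}'}$ used above. Consequently every per-cluster estimate survives (with an innocuous factor of $2$) at $\k^{(n)}(\varphi)$. Taking the supremum over all clusters, the worst bound occurs at the largest scale $j=n-2$ for black clusters, yielding
\[
\|(P_{nr}(H(\k^{(n)}(\varphi))-k^{2l})P_{nr})^{-1}\| \lesssim k^{r_{n-1}'\, N_1^{(n-2)}}\cdot k^{r_{n-1}'},
\]
and using $N_1^{(n-2)}\leq k^{\gamma r_{n-2}+150l\gamma r_{n-3}}\leq k^{2\gamma r_{n-2}}$ together with $r_{n-1}'\leq k^{\delta_0 r_{n-2}/2}<k^{\gamma r_{n-2}}$ (both from \eqref{indrn}) delivers the claimed bound $k^{r_{n-1}'k^{2\gamma r_{n-2}}}k^{r_{n-1}'}\leq k^{k^{3\gamma r_{n-2}}}$.

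The main technical obstacle is purely bookkeeping: at step $n$ the cluster $\Pi_{nr}^j$ may have been grown through several intermediate scales, and I must verify that $\varphi_0$ is non-resonant at every intermediate scale so that the step-$(j+2)$ resolvent estimate (as opposed to a worse higher-step one) is the right one to apply, and that the inductively controlled shifts $|h^{(s)}|$ for $s\leq n$ are all comfortably dominated by $k^{-r_{j+1}'}$ for the relevant $j$. Everything else is a mechanical use of the Hilbert identity and the already-established cluster-by-cluster estimates.
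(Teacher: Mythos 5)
Your proposal is correct and follows essentially the same route as the paper's (very terse) argument: block-diagonalize $P_{nr}HP_{nr}$ into the clusters inherited from all previous scales, apply the earlier steps' per-cluster resolvent lemmas with the scale-appropriate $\varepsilon_0$ (worst case $\varepsilon_0=k^{-r_{n-1}'}$ together with $p_\m>k^{-r_{n-1}'k^{2\gamma r_{n-2}}}$), and transfer the bound to $\k^{(n)}(\varphi)$ via the Hilbert identity using the super-exponential smallness of $|\varphi-\varphi_0|$ and of $h^{(n)}$. The only cosmetic slip is that the stated bound $k^{r_{n-1}'k^{2\gamma r_{n-2}}}k^{r_{n-1}'}$ is literally the simple-cluster contribution $\bigl(p_\m\varepsilon_0\bigr)^{-1}$ from the analogue of Lemma \ref{L:Psnlast}, not the black-cluster one; since your black-cluster estimate is also dominated by this expression, the conclusion is unaffected.
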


\begin{lemma}\label{L:Prnlast} Let $\varphi _0\in \omega^{(n)}(k,\delta ,\tau )$,
and $|\varphi-\varphi _0|<k^{-k^{r_{n-2}}}$, $i=1,2,3$. Then,
\begin{enumerate}
\item The number of poles of the resolvent $\Bigl(P_i\bigl(H(\k^{(n)}(\varphi
))-k^{2l}I\bigr)P_i\Bigr)^{-1}$ in the disc $|\varphi-\varphi
_0|<k^{-k^{r_{n-2}}}$ is no greater than $N_i^{(n-1)}$, where $N_1^{(n-1)}=k^{\gamma
r_{n-1}+150l\gamma r_{n-2}}$, $N_2^{(n-1)}=k^{\gamma r_{n-1}/2+\delta
_0r_{n-1}}$, $N_3^{(n-1)}=k^{\gamma r_{n-1}/6-\delta _0r_{n-1}}$.
\item Let $\varepsilon$ be the distance to the nearest pole of
the resolvent in ${\cal W}^{(n)}$ and let
$\varepsilon_0=\min\{\varepsilon,\,k^{-r_{n-1}'}\}$. Then the
following estimates hold:
\begin{equation}\label{Pr-1nlast}
\begin{split}
\left\|\Bigl(P_i\bigl(H(\k^{(n)}(\varphi
))-k^{2l}I\bigr)P_i\Bigr)^{-1}\right\|<k^{2r_{n-1}'k^{2\gamma
r_{n-2}}}k^{r_{n-1}'}\left(\frac{k^{-r_{n-1}'}}{\varepsilon
_0}\right)^{N_{i}^{(n-1)}}\leq \cr k^{k^{3\gamma
r_{n-2}}}\left(\frac{k^{-r_{n-1}'}}{\varepsilon _0}\right)^{N_{i}^{(n-1)}},
\end{split}
\end{equation}
\begin{equation}\label{Pr-2nlast}
\begin{split}
\left\|\Bigl(P_i\bigl(H(\k^{(n)}(\varphi
))-k^{2l}I\bigr)P_i\Bigr)^{-1}\right\|_1<k^{2r_{n-1}'k^{2\gamma
r_{n-2}}}k^{r_{n-1}'+8\gamma
r_{n-1}}\left(\frac{k^{-r_{n-1}'}}{\varepsilon
_0}\right)^{N_{i}^{(n-1)}}\leq \cr k^{k^{3\gamma
r_{n-2}}}\left(\frac{k^{-r_{n-1}'}}{\varepsilon _0}\right)^{N_{i}^{(n-1)}}.
\end{split}
\end{equation}
\end{enumerate}
\end{lemma}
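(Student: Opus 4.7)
The plan is to mimic the proof of Lemma \ref{L:Prn} (the base case $n=4$), promoting every scale by one level: replace $\MM^{(2)}$ by $\MM^{(n)}$, the discs $\OO^{(2)}_\m$ by $\OO^{(n)}_\m$, and shift $r_1 \to r_{n-2}$, $r_2 \to r_{n-1}$, $r_1' \to r_{n-2}'$, $r_2' \to r_{n-1}'$. First I would reduce to one component $\Pi \in \{\Pi_b^j(\varphi_0(m)),\Pi_g^j(\varphi_0(m)),\Pi_w^j(\varphi_0(m))\}$ of $P_i$, so that $P(\Pi) \leq P_i$ and the corresponding block of $P_iHP_i$ splits off by Corollary~\ref{C:PHP-2}. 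By Lemmas~\ref{L:blackindlast}, \ref{L:greyindlast}, \ref{L:whiteindlast}, the number of $\MM^{(n)}$-points in $\Pi$ is at most $N_i^{(n-1)}$ and the $\||\cdot\||$-size is at most $k^{2\gamma r_{n-1}}$.

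Next I would introduce the shrunken discs $\tilde\OO^{(n)}_\Pi = \bigcup_{\m\in\Pi\cap\MM^{(n)}} \tilde\OO^{(n)}_\m$ with radii $k^{-r_{n-1}'-\epsilon}$ (vs.\ the radius $k^{-r_{n-1}'}$ used to define $\MM^{(n)}$), so that the total size of $\tilde\OO^{(n)}_\Pi$ is $o(k^{-r_{n-1}'})$ by the $N_i^{(n-1)}$ bound. Then I would split into two cases. If $\varphi_0\notin \tilde\OO^{(n)}_\Pi$, the inductive hypothesis of the lemma applied at level $n-1$ (or Lemma~\ref{L:estnonres1} when the sub-block is only a $k^\delta$-cluster) gives a bound of the form $k^{k^{3\gamma r_{n-3}}}$ on the resolvent of every $k^{\gamma r_{n-2}}$-sub-cluster of $\Pi$ at $\k^{(n-1)}(\varphi_0)$, with $p_\m > k^{-r_{n-1}'k^{2\gamma r_{n-2}}}$ coming from the definition of $\Omega_s^{(n)}$. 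A perturbation series (exactly as in the proof of Theorem~\ref{Thm2}, with $PH(\k^{(n-1)})P + (P_\Pi-P)H_0$ as the unperturbed operator) then yields $\|(P_\Pi(H(\k^{(n-1)}(\varphi_0))-k^{2l})P_\Pi)^{-1}\| < k^{k^{3\gamma r_{n-2}}}$, and this estimate is transferred to $\k^{(n)}(\varphi)$ by the stability of step $n$ (the analogue of Lemma~\ref{ldk-3IV}): $|\k^{(n)}(\varphi)-\k^{(n-1)}(\varphi_0)| = o(k^{-k^{r_{n-2}}-2l})$, which is negligible compared to $\varepsilon_0^{-1}$.

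If instead $\varphi_0 \in \tilde\OO^{(n)}_\Pi$, I would pick a boundary point $\varphi_* \in \partial \tilde\OO^{(n)}_\Pi$ inside the connected component containing $\varphi_0$ and $\varphi$; by construction $\varphi_* \notin \OO^{(n)}_\m(r_{n-1}',\tfrac12)$ for any $\m\in\Pi$, so the previous case applies at $\varphi_*$. Then Rouché's theorem shows the resolvent $(P_\Pi(H(\k^{(n)}(\varphi))-k^{2l})P_\Pi)^{-1}$ has the same number of poles inside each connected component of $\tilde\OO^{(n)}_\Pi$ as the block-diagonal ``unperturbed'' operator, giving at most $N_i^{(n-1)}$ poles in total. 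The maximum principle then provides \eqref{Pr-1nlast}; \eqref{Pr-2nlast} follows by multiplying by the dimension bound $k^{8\gamma r_{n-1}}$ for $\tr(P_\Pi)$.

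The main obstacle is bookkeeping the nested multiscale structure: every $k^{\gamma r_{n-2}}$-sub-cluster of $\Pi$ carries its own white/grey/black substructure inherited from step $n-1$, so the resolvent bound at the sub-cluster level depends on $N_i^{(n-2)}$, and the perturbation series between scales requires $r_{n-1}' \gg r_{n-2}' N_i^{(n-2)}$, which is exactly the content of the induction relations \eqref{indrn}. Verifying that these inductive inequalities dominate the losses accumulated from the Rouché/maximum-principle argument at every scale (so that the final exponent $k^{3\gamma r_{n-2}}$ absorbs $r_{n-1}'k^{2\gamma r_{n-2}}$) is the one place where care is needed; everything else is a straightforward replay of the proof of Lemma~\ref{L:Prn}.
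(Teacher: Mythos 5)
Your proposal follows essentially the same route as the paper: the authors prove this lemma by repeating the argument of Lemma \ref{L:Pr} with the index shift ($\MM^{(2)}\to\MM^{(n)}$, $\OO^{(2)}_\m\to\OO^{(n)}_\m$, $r_1\to r_{n-1}$, etc.), invoking Lemmas \ref{L:blackindlast}--\ref{L:whiteindlast} for the cluster counts and applying the previous-level resolvent lemmas (\ref{L:Prn}, \ref{L:Psn} for $n=4$ and \ref{L:Prnlast}, \ref{L:Psnlast} inductively) to the sub-clusters, exactly as you do with the shrunken discs, the two-case Rouch\'e/maximum-principle argument, and the dimension bound for the trace norm. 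Your closing remark about the induction relations \eqref{indrn} absorbing the accumulated losses is precisely the point the paper relies on implicitly, so no gap here.
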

\begin{proof} The proof of this lemma is analogous to that of Lemma \ref{L:Pr} up to the replacement of ${\MM}^{(2)}$ by ${\MM}^{(n)}$, $\OO^{(2)}_{\m}$ by $\OO^{(n)}_{\m}$, and the shift of indices: $\delta $ to $r_{n-2}$, $r_1$ to
$r_{n-1}$, etc. We apply Lemmas \ref{L:blackindlast}--\ref{L:whiteindlast} instead of \ref{L:black}--\ref{L:white}. We apply Lemmas \ref{L:Prn}, \ref{L:Psn} with $\varepsilon _0=k^{-r_{3}'}$ and $p_\m>k^{-r_{3}'k^{2\gamma r_{2}}}$ instead of Lemma \ref{L:Pr}, \ref{L:Ps} for $n=4$ and Lemmas \ref{L:Prnlast}, \ref{L:Psnlast} with  inductively (with $n-1$ instead of $n$ and $\varepsilon _0=k^{-r'_{n-1}}$, $p_\m>k^{-r'_{n-1}k^{2\gamma r'_{n-2}}}$) for further steps.
\end{proof}

\begin{lemma}\label{L:Psnlast} Let $\varphi _0\in \omega^{(n)}(k,\delta ,\tau )$. Then, the operator
$\Bigl(P_s^j\bigl(H(\k^{(n)}(\varphi
))-k^{2l}I\bigr)P_s^j\Bigr)^{-1}$ has no more than one pole in the
disk $|\varphi-\varphi _0|<k^{-k^{r_{n-2}}}$. Moreover,
\begin{equation}\label{Ps-1nlast}
\left\|\Bigl(P_s^j\bigl(H(\k^{(n)}(\varphi
))-k^{2l}I\bigr)P_s^j\Bigr)^{-1}\right\|<\frac{8k^{-2l+1}}
{p_\m\varepsilon _0},
\end{equation}
\begin{equation}\label{Ps-2nlast}
\left\|\Bigl(P_s^j\bigl(H(\k^{(n)}(\varphi
))-k^{2l}I\bigr)P_s^j\Bigr)^{-1}\right\|_1<\frac{8k^{-2l+1+
4r_{n-1}}}{p_\m\varepsilon_0},
\end{equation}
$\varepsilon _0=\min\{\varepsilon,\,k^{-r_{n-1}'}\}$, where
$\varepsilon$ is the distance to the pole of the operator.
\end{lemma}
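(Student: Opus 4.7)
The plan mirrors the proof of Lemma \ref{L:Psn}, which itself is modeled on part 3 of Lemma \ref{L:estnonres1}. The crucial input is that $\m\in\Omega_s^{(n)}(r_n)$ means $0<p_\m<k^{-r_{n-1}'k^{2\gamma r_{n-2}}}$, so $p_\m$ is super-exponentially smaller than any relevant scale at step $n$. Moreover, by the properties of the simple region established in the construction (``Simple region'' in Section~\ref{MOforStep3} and its inductive repetitions in Section~\ref{S:3} and the step~IV preparation), the $k^{r_{n-1}/2}$-box $\Pi_s^j$ around $\m$ is isolated from the resonance structure: no other element of $\MM^{(j)}$ with $j\le n$ lies in a large surrounding box. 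Hence the shift $\k^{(n)}(\varphi)+\p_\m$ falls inside every convergence neighborhood used to define $\lambda^{(j)}$, $j=1,\ldots,n$.

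First I would establish that $\lambda^{(n)}\bigl(\k^{(n)}(\varphi)+\p_\m\bigr)$ is well defined and analytic in $\varphi$ in the complex $k^{-r_{n-1}'-\delta}$-neighborhood of $\omega^{(n)}(k,\delta,\tau)$, by chaining \eqref{perturbation-2c}, \eqref{perturbation-3c}, \eqref{perturbation-3cIV} and their inductive analogs to obtain
\[
\lambda^{(n)}(\k^{(n)}(\varphi)+\p_\m)=\lambda^{(1)}(\k^{(1)}(\varphi)+\p_\m)+o(k^{-100r_{n-1}}).
\]
Next, I would invoke the inductive version of Lemma~\ref{L: Appendix 1}: the equation $\lambda^{(n)}(\k^{(n)}(\varphi)+\p_\m)=k^{2l}+\varepsilon'$, for $|\varepsilon'|\le p_\m k^\delta$, has at most two solutions $\varphi^\pm(\varepsilon')$ in the above neighborhood, approximately located at $\varphi_\m\pm\pi/2+O(k^{-1+\delta})$ and therefore separated by approximately $\pi$ modulo $2\pi$. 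Since the disk $|\varphi-\varphi_0|<k^{-k^{r_{n-2}}}$ has diameter much smaller than $\pi$, at most one of these solutions can lie inside it, which gives the ``at most one pole'' assertion. By the inductive analog of Lemma~\ref{L:Appendix 2},
\[
\frac{\partial}{\partial\varphi}\lambda^{(n)}(\k^{(n)}(\varphi)+\p_\m)=\pm 2lp_\m k^{2l-1}(1+o(1))
\]
near these zeros, and hence by the analog of Lemma~\ref{L:3.7.1} one obtains $\bigl|\lambda^{(n)}(\k^{(n)}(\varphi)+\p_\m)-k^{2l}\bigr|\ge k^{2l-1}p_\m\varepsilon$ whenever $\varphi$ stays at distance at least $\varepsilon$ from the poles.

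Finally, the analog of Lemma~\ref{L:July5} supplies the bound
\[
\left\|\bigl(\lambda^{(n)}(\y(\varphi))-k^{2l}\bigr)\bigl(P_s^j(H(\k^{(n)}(\varphi))-k^{2l})P_s^j\bigr)^{-1}\right\|\le 8,\qquad \y(\varphi):=\k^{(n)}(\varphi)+\p_\m,
\]
which combined with the previous lower bound immediately yields \eqref{Ps-1nlast}. The trace-norm estimate \eqref{Ps-2nlast} then follows from the operator-norm bound by multiplying by $\dim P_s^j\le (2k^{r_{n-1}/2})^4\le k^{4r_{n-1}}$. The main technical obstacle is verifying that the four appendix lemmas of Step~I continue to apply at level $n$ despite the accumulated perturbation corrections; this reduces to checking that the dominant scale $p_\m k^{2l-1}$ of $\partial_\varphi\lambda^{(n)}$ remains much larger than the super-exponentially small error terms coming from all previous steps (of order $k^{-\frac{\beta}{5}k^{r_{n-2}-r_{n-3}}-\beta+r_{n-1}'+\delta}$ and smaller), so that Rouch\'e's theorem transfers the zero count from $\lambda^{(1)}$ to $\lambda^{(n)}$ intact.
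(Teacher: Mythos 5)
Your proposal is correct and follows essentially the same route as the paper: the paper proves this lemma (via its Step III prototype, Lemma \ref{L:Ps}) by noting that $\lambda^{(n)}(\k^{(n)}(\varphi)+\p_\m)$ converges and is a tiny perturbation of $\lambda^{(1)}$, then transferring the two-solution count of Lemma \ref{L: Appendix 1} by Rouch\'e's theorem and invoking the analogs of Lemmas \ref{L:Appendix 2}, \ref{L:3.7.1} and \ref{L:July5}, exactly as you do, with the trace-norm bound obtained from the dimension of $P_s^j$. Your closing remark about checking that $p_\m k^{2l-1}$ dominates the super-exponentially small corrections is precisely the verification the paper leaves implicit in the phrase ``it is easy to show that the analogs \dots hold.''
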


Note that $p_{\m}>k^{-2\mu r_n}$ when $\m \in \Omega (r_n)$. The
analogues of Lemma~\ref{L:boundary} and Corollary~\ref{C:PHP-2} also
hold.

\subsubsection{Resonant and Nonresonant Sets for Step $n+1$ \label{GSlast}}

We divide $[0,2\pi )$ into $[2\pi k^{k^{r_{n-2}}}]+1$ intervals
$\Delta_m^{(n)}$ with the length not bigger than $k^{-k^{r_{n-2}}}$.
If a particular interval belongs to $\OO^{(n)}$ we ignore it;
otherwise, let $\varphi_0(m)\not\in\OO^{(n)}$ be a point inside the
$\Delta_m^{(n)}$. Let
\begin{equation}\W_m^{(n)}=\{\varphi \in \W^{(n)}:\ | \varphi -\varphi
_0(m)|<4k^{-k^{r_{n-2}}}\}. \label{W2mind-last} \end{equation} Clearly,
neighboring sets $\W_m^{(n)}$  overlap (because of the multiplier 4
in the inequality), they cover $\hat \W^{(n)}$ , which is
the restriction of $\W^{(n)}$ to the
$2k^{-k^{r_{n-2}}}$-neighborhood of $[0,2\pi )$. For each $\varphi
\in \hat \W^{(n)}$ there is an $m$ such that $|\varphi -\varphi
_{0}(m)|<4k^{-k^{r_{n-2}}}$. We consider the poles of the resolvent
$\left(P^{(n)} (H(\k^{(n)}(\varphi))-k^{2l})P^{(n)}\right)^{-1}$ in
$\hat \W_m^{(n)}$ and denote them by $\varphi^{(n)} _{mj}$,
$j=1,...,M_m$. As before, the resolvent has a block structure. The
number of blocks clearly cannot exceed the number of elements in
$\Omega (r_n)$, i.e. $k^{4r_n}$. Using the estimates for the number
of poles for each block, the estimate being provided by Lemma
\ref{L:Prnlast} Part 1, we can roughly estimate the number of poles
of the resolvent by $k^{4r_n+r_{n-1}}$. Next, let  $\OO^{(n+1)}_{mj}$ be the disc of the radius $k^{-r_n'}$ around
$\varphi ^{(n)}_{mj}$.
\begin{definition} The set
\begin{equation}\OO^{(n+1)}=\cup _{mj}\OO^{(n+1)}_{mj} \label{Olast}
\end{equation}
we call the $n+1$-th resonant set. The set
\begin{equation}\W^{(n+1)}= \W^{(n)}\setminus \OO^{(n+1)}\label{Wlast}
\end{equation}
is called the $n+1$-th non-resonant set. The set
\begin{equation}\omega^{(n+1)}= \W^{(n+1)}\cap [0,2\pi) \label{wlast}
\end{equation}
is called the $n+1$-th real non-resonant set. \end{definition}
The
following statements can be proven in the same way as
Lemmas~\ref{L:geometric3}, \ref{4.16} and \ref{estnonres0-1}.
\begin{lemma}\label{L:geometric4nlast} Let  $r_n'>\mu r_n>k^{r_{n-2}}$ \footnote{These inequalities follow from \eqref{indrn}.}, $\varphi \in \W^{(n+1)}$, $\varphi
_0(m)$ corresponds  to an interval $\Delta _m^{(n)}$ containing $\Re
\varphi $. Let $\Pi $ be one of the components $\Pi _s^j(\varphi
_0(m))$, $\Pi _b^j(\varphi _0(m))$, $\Pi _g^j(\varphi _0(m))$, $\Pi
_w^j(\varphi _0(m))$ and
 $P(\Pi )$ be the projection corresponding to $\Pi $. Let also
 $\varkappa \in \C: |\varkappa-\varkappa^{(n)}(\varphi )|<k^{-r_n'k^{2\gamma
r_{n-1}}}$. Then,
\begin{equation} \label{March3-2nlast} \left\|\left(P(\Pi )
\left(H\big(\k(\varphi )\big)-k^{2l}I\right)P(\Pi
)\right)^{-1}\right\|<k^{2\mu r_n+r_n'N^{(n-1)}},\end{equation}
\begin{equation} \label{March3-3nlast}
\left\|\left(P(\Pi )\left(H\big(\k(\varphi
)\big)-k^{2l}I\right)P(\Pi )\right)^{-1}\right\|_1<k^{(2\mu+1)
r_n+r_n'N^{(n-1)}},\end{equation}$N^{(n-1)}$ corresponding to the
color of $\Pi $ ($N^{(n-1)}=1$, $k^{\gamma r_{n-1}+150l\gamma
r_{n-2}}$, $k^{\gamma r_{n-1}/2+\delta _0r_{n-1}}$, $k^{\gamma
r_{n-1}/6-\delta _0r_{n-1}}$ for simple, black, grey and white
clusters, correspondingly).
\end{lemma}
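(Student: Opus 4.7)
The plan is to mimic precisely the proof of Lemma \ref{L:geometric3}, replacing its auxiliary estimates by their inductive analogues established in the preceding subsections. First I fix $\varphi \in \W^{(n+1)}$ and treat the on-shell case $\k = \k^{(n)}(\varphi)$. By the very definition \eqref{Olast}--\eqref{Wlast}, the point $\varphi$ lies outside every disk $\OO^{(n+1)}_{mj}$ of radius $k^{-r_n'}$ centered at a pole of a block resolvent $\left(P(\Pi')(H(\k^{(n)}(\cdot))-k^{2l}I)P(\Pi')\right)^{-1}$ inside $\hat{\W}_m^{(n)}$; in particular the distance $\varepsilon$ from $\varphi$ to the nearest pole of $\left(P(\Pi)(H(\k^{(n)}(\cdot))-k^{2l}I)P(\Pi)\right)^{-1}$ is at least $k^{-r_n'}$.

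Next I split on the color of $\Pi$ and invoke the corresponding cluster bound. If $\Pi$ is a simple cluster $\Pi_s^j$, Lemma \ref{L:Psnlast} applied with $\varepsilon_0 = k^{-r_n'}$ (permissible since $r_n' > r_{n-1}'$ by \eqref{indrn}) together with the lower bound $p_\m > k^{-2\mu r_n}$ noted after that lemma gives the norm and trace-class estimates well within $k^{2\mu r_n + r_n'}$ (which matches $N^{(n-1)} = 1$). If $\Pi$ is black, grey, or white, Lemma \ref{L:Prnlast} applied with $\varepsilon_0 = k^{-r_n'}$ yields
\[
\left\|\left(P(\Pi)(H(\k^{(n)}(\varphi))-k^{2l}I)P(\Pi)\right)^{-1}\right\| \leq k^{k^{3\gamma r_{n-2}}}\bigl(k^{-r_{n-1}'}/k^{-r_n'}\bigr)^{N^{(n-1)}} \leq k^{k^{3\gamma r_{n-2}} + r_n' N^{(n-1)}},
\]
and the inequality $k^{3\gamma r_{n-2}} \ll 2\mu r_n$, which follows from $r_n > k^{r_{n-2}}$ in \eqref{indrn} (so $3\gamma r_{n-2} < 3\gamma \log_k r_n$), shows this is dominated by $k^{2\mu r_n + r_n' N^{(n-1)}}$, giving \eqref{March3-2nlast}. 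The trace-class bound \eqref{March3-3nlast} is obtained by multiplying the norm estimate by the dimension of $P(\Pi)$, which by Lemmas \ref{L:blackindlast}--\ref{L:whiteindlast} does not exceed $k^{r_n}$.

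Finally I pass from $\k^{(n)}(\varphi)$ to an arbitrary $\varkappa$ in the stated complex disk via the Hilbert identity. The perturbation $(\varkappa - \varkappa^{(n)}(\varphi))\vec\nu(\varphi)$ produces an operator of norm at most $|\varkappa - \varkappa^{(n)}(\varphi)|\cdot k^{2l-1+\delta} < k^{-r_n' k^{2\gamma r_{n-1}} + 2l-1+\delta}$, which is much smaller than the reciprocal $k^{-2\mu r_n - r_n' N^{(n-1)}}$ of the bound just proven (since $N^{(n-1)} \leq k^{\gamma r_{n-1} + 150 l \gamma r_{n-2}} \ll k^{2\gamma r_{n-1}}$). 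The Hilbert identity therefore extends the norm and trace-class estimates to all $\varkappa$ in the specified disk, losing at most a factor of $2$.

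The main obstacle is purely a bookkeeping one: verifying that every exponent accumulated from the previous-step bounds, notably $k^{3\gamma r_{n-2}}$ from Lemma \ref{L:Prnlast} and the dimension factor $k^{r_n}$ absorbed into the $2\mu r_n$ exponent, is controlled by the gaps $r_n > k^{r_{n-2}}$, $r_n' > k^{2\gamma 10^{-4} r_{n-1}}$ imposed in \eqref{indrn}, and in addition that the Hilbert-identity perturbation is still sub-dominant for the smallest $N^{(n-1)}$ (the white case). These are all direct consequences of \eqref{indrn} and require no new geometric input.
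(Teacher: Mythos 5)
Your proposal is correct and follows essentially the same route as the paper, which proves this lemma exactly by combining the definition of $\W^{(n+1)}$ (distance at least $k^{-r_n'}$ to the nearest pole) with Lemmas \ref{L:Prnlast} and \ref{L:Psnlast}, the color-dependent pole counts, and stability of the estimates under the $\varkappa$-perturbation of size $k^{-r_n'k^{2\gamma r_{n-1}}}$, just as in Lemma \ref{L:geometric3}. The only quibble is your factor $k^{2l-1+\delta}$ for the $\varkappa$-derivative of the diagonal entries: since the cluster may contain indices with $\||\p_\m\||$ as large as its $\||\cdot\||$-size, this factor should rather be of order $k^{(2l-1)\cdot 2\gamma r_{n-1}}$, but the enormous margin $r_n'k^{2\gamma r_{n-1}}\gg r_n'N^{(n-1)}+2\mu r_n$ that you yourself invoke absorbs this, so the argument is unaffected.
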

\begin{proof} The lemma follows from Lemmas \ref{L:Prnlast}, \ref{L:Psnlast} and the definition of $\W^{(n+1)}$.\end{proof}

 By total size of the set $\OO^{(n+1)}$ we mean the sum of
the sizes of its connected components.
\begin{lemma}\label{10.9} Let $r_n'\geq (\mu+10)r_n$, $r_n>k^{r_{n-2}}$. Then, the size of each connected component of
 $\OO^{(n+1)}$ is less
than $32k^{4r_n-r_n'}$. The total size of $\OO^{(n+1)}$ is less than
$k^{-r_n'/2}$.
\end{lemma}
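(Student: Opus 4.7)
The plan is to imitate the proof of Lemma~\ref{4.16} (and its Step IV analogue Lemma~\ref{7.10}), replacing the indices throughout. First I would fix an interval $\Delta_m^{(n)}\not\subset \OO^{(n)}$ and the associated set $\W_m^{(n)}$ from \eqref{W2mind-last}, of length at most $8k^{-k^{r_{n-2}}}$. By definition (see Section~\ref{GSlast}), $\OO^{(n+1)}\cap \W_m^{(n)}$ is a union of discs $\OO^{(n+1)}_{mj}$ of radius $k^{-r_n'}$ centered at the poles of $\bigl(P^{(n)}(H(\k^{(n)}(\varphi))-k^{2l})P^{(n)}\bigr)^{-1}$ lying in $\W_m^{(n)}$. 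Using Corollary~\ref{C:PHP-2} (block structure, at most $k^{4r_n}$ blocks) together with Part~1 of Lemma~\ref{L:Prnlast} (each block-resolvent contributing at most $N_1^{(n-1)}\leq k^{r_{n-1}}$ poles), the number of poles, hence of discs, in $\W_m^{(n)}$ is at most $k^{4r_n+r_{n-1}}$.

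Next I would add up the diameters: the total measure of $\OO^{(n+1)}\cap \W_m^{(n)}$ is bounded by $2k^{4r_n+r_{n-1}-r_n'}$. Under the hypotheses $r_n>k^{r_{n-2}}>r_{n-1}$ and $r_n'\geq(\mu+10)r_n$, this quantity is much smaller than the length $k^{-k^{r_{n-2}}}$ of $\Delta_m^{(n)}$ (which is the gap between consecutive center points $\varphi_0(m)$). Consequently no connected component of $\OO^{(n+1)}$ can stretch across an entire $\W_m^{(n)}$; a connected component that intersects $\W_m^{(n)}$ can spill at most into the two neighboring sets $\W_{m\pm 1}^{(n)}$, which gives the per-component bound
\[
\mathrm{size}\bigl(\text{component}\bigr)\leq 3\cdot 2k^{4r_n+r_{n-1}-r_n'}\cdot k^{r_{n-1}}\leq 32 k^{4r_n-r_n'},
\]
where I used $r_{n-1}+r_{n-1}\ll r_n$ from the iterative choice \eqref{indrn}.

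For the total size I would sum over all intervals: there are at most $2\pi k^{k^{r_{n-2}}}+1$ of them, and each contributes at most $2k^{4r_n+r_{n-1}-r_n'}$. Hence
\[
\mathrm{size}(\OO^{(n+1)})\leq 4\pi k^{k^{r_{n-2}}}\cdot k^{4r_n+r_{n-1}-r_n'}\leq k^{-r_n'/2},
\]
where in the last step one uses $r_n>k^{r_{n-2}}$, $r_{n-1}<r_n$ and $r_n'\geq(\mu+10)r_n\geq 12r_n$ to verify $k^{r_{n-2}}+4r_n+r_{n-1}-r_n'\leq -r_n'/2$.

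There is no substantive obstacle; the argument is the routine induction-step reincarnation of Lemmas~\ref{4.16} and~\ref{7.10}. The only thing that requires care is checking that the inequalities \eqref{indrn} chosen for the scales $r_n,r_n',r_{n-1},r_{n-2}$ are strong enough for both comparisons above (per-component and total), which reduces to the elementary arithmetic $r_n'/2\geq 5r_n$ plus $r_n>k^{r_{n-2}}$, both of which are built into the definitions in \eqref{indrn} and the hypothesis of the lemma.
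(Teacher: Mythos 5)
Your argument is exactly the paper's: Lemma \ref{10.9} is not given a separate proof there, but is obtained verbatim as Lemma \ref{4.16} (and \ref{7.10}) with shifted indices — count the discs per $\W_m^{(n)}$ via the block structure and Lemma \ref{L:Prnlast}, observe that their total size per interval is much smaller than the length of $\Delta_m^{(n)}$ so no component crosses a whole $\W_m^{(n)}$, and then sum over the $\sim k^{k^{r_{n-2}}}$ intervals, using $r_n'\geq(\mu+10)r_n$, $k^{r_{n-2}}<r_n$, $r_{n-1}<r_n$ to absorb everything into $k^{-r_n'/2}$. The total-size part of your computation is fine. The one flaw is the displayed per-component bound: the extra factor $k^{r_{n-1}}$ is spurious (your own count gives $3\cdot 2k^{4r_n+r_{n-1}-r_n'}$, nothing more), and the asserted inequality $6k^{4r_n+2r_{n-1}-r_n'}\leq 32k^{4r_n-r_n'}$ is false as written — ``$r_{n-1}\ll r_n$'' cannot delete a positive power of $k$ from the left-hand side. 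What your count legitimately yields is a per-component bound of the form $ck^{4r_n+r_{n-1}-r_n'}$, which is how the paper handles the same issue at Step III (there the exponent is simply widened to $5r_2-r_2'$), and which suffices for all subsequent uses; if one insists on the literal constant $32k^{4r_n-r_n'}$ of the statement, one must sharpen the disc count in $\W_m^{(n)}$ to $O(k^{4r_n})$ instead of the rough $k^{4r_n+r_{n-1}}$, rather than wave the factor away.
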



\begin{lemma}\label{estnonres0-1last} Let $\varphi\in\W^{(n)}$ and
$C_{n+1}$ be the circle $|z-k^{2l}|=k^{-2r_n'k^{2\gamma r_{n-1}}}$.
Then
$$
\left\|\left(P(r_{n-1})(H(\k^{(n)}(\varphi))-z)P(r_{n-1})\right)^{-1}\right\|\leq
4^nk^{2r_n'k^{2\gamma r_{n-1}}}. $$ \end{lemma}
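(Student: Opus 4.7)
The plan is to carry out the same three-ingredient argument that proved Lemmas~\ref{estnonres0}, \ref{estnonres0-1}, and \ref{estnonres0-1n}, one induction level higher. The three ingredients are (i) an a priori bound for the resolvent on the large circle $C_n$ (not $C_{n+1}$) at the \emph{previous-step} center $\k^{(n-1)}(\varphi)$, (ii) a Hilbert-identity shift to the current center $\k^{(n)}(\varphi)$, and (iii) a Rouch\'e--maximum-principle step that contracts the bound to the much smaller circle $C_{n+1}$. Each iteration of this scheme contributes a multiplicative factor of $4$, accounting for the overall constant $4^n$ in the statement.

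First I would use the inductive version of the bound \eqref{tik-tak*} that is built into the proof of Theorem~\ref{Thm3IV} at level $n$: the series \eqref{step3dva}, set up relative to the block operator $\tilde H^{(n-1)}$ with $W^{(n-1)}=H^{(n)}-\tilde H^{(n-1)}$, converges on $C_n$ and gives
\[
\|(H^{(n)}(\k^{(n-1)}(\varphi))-z)^{-1}\|<2\cdot 4^{n-1}k^{2r_{n-1}'k^{2\gamma r_{n-2}}},\qquad z\in C_n,\ \varphi\in\W^{(n)},
\]
where the factor $4^{n-1}$ comes from feeding Lemma~\ref{estnonres0-1last} at level $n-1$ into the bound on $(\tilde H^{(n-1)}-z)^{-1}$, and the extra factor $2$ comes from the geometric-series factor $1/(1-\|A\|)\le 2$ with $\|A\|\le k^{-\beta}$. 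Second, I would pass from $\k^{(n-1)}(\varphi)$ to $\k^{(n)}(\varphi)$ via the Hilbert resolvent identity. By the inductive analogue of Lemma~\ref{ldk-3IV}, $|\varkappa^{(n)}(\varphi)-\varkappa^{(n-1)}(\varphi)|=|h^{(n)}(\varphi)|$ is super-exponentially small, in particular much smaller than any power of $k^{-1}$ that appears in the radius of $C_{n+1}$. Hence the resolvent shift costs at most another factor of $2$, yielding
\[
\|(H^{(n)}(\k^{(n)}(\varphi))-z)^{-1}\|<4^{n}k^{2r_{n-1}'k^{2\gamma r_{n-2}}},\qquad z\in C_n,
\]
so that $\|(z-k^{2l})(H^{(n)}(\k^{(n)}(\varphi))-z)^{-1}\|<4^n$ on $C_n$.

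Third, I would repeat the Rouch\'e argument from the proof of Lemma~\ref{estnonres0}, now with $\tilde H^{(n-1)}(\k^{(n)}(\varphi))$ as the reference operator. Writing
\[
D(z)=\det\bigl(P(r_{n-1})(H(\k^{(n)}(\varphi))-z)P(r_{n-1})\,(\tilde H^{(n-1)}(\k^{(n)}(\varphi))-z)^{-1}\bigr)=\det(I+A(z)),
\]
the bound $\|A(z)\|_1\ll 1$ on $C_n$ (established inside the same proof of Theorem~\ref{Thm3IV}) together with \eqref{determinants} and Rouch\'e's theorem force $D$ to have a unique zero in the disc bounded by $C_n$; it is simple and sits at $z=k^{2l}$. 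Consequently $(z-k^{2l})(H^{(n)}(\k^{(n)}(\varphi))-z)^{-1}$ is analytic on that disc, the maximum principle propagates the bound $4^n$ from $C_n$ to every interior point, and dividing by $|z-k^{2l}|=k^{-2r_n'k^{2\gamma r_{n-1}}}$ on $C_{n+1}$ yields the stated inequality.

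The only real obstacle is bookkeeping: one needs to verify, using \eqref{indrn}, that $C_{n+1}$ sits strictly inside $C_n$ (so that the maximum principle applies), that $h^{(n)}$ is small enough compared with the radius of $C_{n+1}$ for the Hilbert-identity shift to cost only a factor of $2$, and that the accumulated constant $4^n$ is negligible compared with the dominant exponent $k^{2r_n'k^{2\gamma r_{n-1}}}$. All these checks follow directly from \eqref{indrn}, and no new ideas beyond those already used in Steps~II--IV are required.
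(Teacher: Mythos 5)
Your proposal is correct and follows essentially the route the paper intends: it reduces the lemma to the previous-step resolvent bound (the level-$n$ analogue of \eqref{tik-tak*}) at the old center $\k^{(n-1)}(\varphi)$ on $C_n$, shifts the center by the Hilbert identity, localizes the single simple pole at $z=k^{2l}$ by the determinant/Rouch\'e argument of Lemma \ref{estnonres0}, and contracts to $C_{n+1}$ by the maximum principle, accumulating the factor $4^n$ — exactly the scheme the paper invokes for Lemmas \ref{estnonres0-1}, \ref{estnonres0-1n} and \ref{estnonres0-1last}. One correction to your final bookkeeping sentence: the smallness of $h^{(n)}$ must be compared with the reciprocal of the resolvent bound on $C_n$ (essentially the radius $\varepsilon_0^{(n)}$ of $C_n$), not with the radius of $C_{n+1}$ — indeed $|h^{(n)}|\sim k^{-\frac15\beta k^{r_{n-2}-r_{n-3}}}$ is vastly \emph{larger} than $k^{-2r_n'k^{2\gamma r_{n-1}}}$, so the Hilbert-identity shift has to be performed on $C_n$ before the maximum-principle contraction (as your main argument in fact does), and the required inequality then follows from \eqref{indrn} since $2r_{n-1}'k^{2\gamma r_{n-2}}\ll \frac15\beta k^{r_{n-2}-r_{n-3}}$.
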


\subsection{Operator $H^{(n+1)}$. Perturbation Formulas}
Let $P(r_{n})$ be an orthogonal projector onto
$\Omega(r_{n}):=\{\m:\ |\|\p_\m\||\leq k^{r_{n}}\}$ and
$H^{(n+1)}=P(r_{n})HP(r_{n}) $. We consider
$H^{(n+1)}(\k^{(n)}(\varphi ))$ as a perturbation of
\begin{equation}
\begin{split}
&\tilde H^{(n)}=\tilde
P_j^{(n)}H\tilde
P_j^{(n)}+\left(P(r_n)-\tilde
P_j^{(n)}\right)H_0,
\end{split}
\end{equation}
where $H=H(\k^{(n)}(\varphi ))$, $H_0=H_0(\k^{(n)}(\varphi ))$, and $\tilde
P_j^{(n)}$
is the projection $P^{(n)}$, see \eqref{P(n)}, corresponding to $\varphi _{0}(j)$ in
the interval $\Delta _j^{(n)}$ containing $\varphi $. Let
\begin{equation}W^{(n)}=H^{(n+1)}-\tilde H^{(n)}=P(r_{n})VP(r_{n})-\tilde
P_j^{(n)}V\tilde
P_j^{(n)}, \label{W2last}\end{equation}
\begin{equation}\label{g3last} g^{(n+1)}_r({\k}):=\frac{(-1)^r}{2\pi
ir}\hbox{Tr}\oint_{C_{n+1}}\left(W^{(n)}(\tilde
H^{(n)}({\k})-zI)^{-1}\right)^rdz,
\end{equation} \begin{equation}\label{G3last}
G^{(n)}_r({\k}):=\frac{(-1)^{r+1}}{2\pi i}\oint_{C_{n+1}}(\tilde
H^{(n)}({\k})-zI)^{-1}\left(W^{(n)}(\tilde
H^{(n)}({\k})-zI)^{-1}\right)^rdz,
\end{equation}
where $C_{n+1}$ is the circle $|z-k^{2l}|=\varepsilon _0^{(n+1)}$,
$\varepsilon _0^{(n+1)}=k^{-2r_n'k^{2\gamma r_{n-1}}}.$

Recall that $\beta:=2l-2-41\mu\delta$. The proof of the following
statements is analogous to the one in Step III (see
Theorem~\ref{Thm3}, Corollary~\ref{corthm3},
Lemma~\ref{L:derivatives-3} and Lemma~\ref{ldk-3}).

\begin{theorem} \label{Thm3last} Suppose $k>k_*$, $\varphi $ is in
the real  $k^{-r_n'-\delta }$-neighborhood of $\omega
^{(n+1)}(k,\delta,\tau )$ and $\varkappa\in\R$,
$|\varkappa-\varkappa^{(n)}(\varphi )|\leq \varepsilon
^{(n+1)}_0k^{-2l+1-\delta }$, $\k=\varkappa(\cos \varphi ,\sin
\varphi )$.  Then, there exists a single eigenvalue of
$H^{(n+1)}({\k})$ in the interval $\varepsilon _{n+1}( k,\delta,\tau
)=\left( k^{2l}-\varepsilon _0^{(n+1)}, k^{2l}+\varepsilon
_0^{(n+1)}\right)$. It is given by the absolutely converging series
series:
\begin{equation}\label{eigenvalue-3last}
\lambda^{(n+1)}({\k})=\lambda^{(n)}({\k})+ \sum\limits_{r=2}^\infty
g^{(n+1)}_r({\k}).\end{equation} For coefficients
$g^{(n+1)}_r({\k})$ the following estimates hold:
\begin{equation}\label{estg3last} |g^{(n+1)}_r({\k})|<
k^{-\frac{\beta}{5}
k^{r_{n-1}-r_{n-2} }-\beta (r-1)}.
\end{equation}
The corresponding spectral projection is given by the series:
\begin{equation}\label{sprojector-3last}
\E ^{(n+1)}({\k})=\E^{(n)}({\k})+\sum\limits_{r=1}^\infty
G^{(n+1)}_r({\k}), \end{equation} $\E^{(n)}({\k})$ being the
spectral projection of $H^{(n)}$. The operators $G^{(n+1)}_r({\k})$
satisfy the estimates:
\begin{equation}
\label{Feb1a-3last}
\left\|G^{(n+1)}_r({\k})\right\|_1<k^{-\frac{\beta}{10}
k^{r_{n-1}-r_{n-2} } -\beta r},
\end{equation}
\begin{equation}G^{(n+1)}_r({\k})_{\s\s'}=0,\ \mbox{when}\ \ \
2rk^{\gamma r_{n-1}+150l\gamma
r_{n-2}}+3k^{r_{n-1}}<\||\p_\s\||+\||\p_{\s'}\||.\label{Feb6a-3last}
\end{equation}
\end{theorem}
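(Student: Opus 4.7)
The plan is to repeat, with careful bookkeeping in the indices, the argument used to prove Theorem~\ref{Thm3}, now using the machinery developed inductively in Sections~\ref{Lattice-Induction} and the model-operator construction at scale $r_n$. Concretely, I would expand the resolvent
\begin{equation*}
(H^{(n+1)}-z)^{-1}=\sum_{r=0}^{\infty}(\tilde H^{(n)}-z)^{-1}\bigl[-W^{(n)}(\tilde H^{(n)}-z)^{-1}\bigr]^{r},
\end{equation*}
on the contour $C_{n+1}$. The uniform bound $\|(\tilde H^{(n)}-z)^{-1}\|<4^{n+1}k^{2r_n'k^{2\gamma r_{n-1}}}$ comes from Lemmas~\ref{L:geometric4nlast} and \ref{estnonres0-1last} together with the block decomposition of $\tilde H^{(n)}$, and by \eqref{dk0-3IV}-type perturbation it survives the shift $\k^{(n-1)}\mapsto \k^{(n)}$ and the small perturbation in $\varkappa$. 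Thus the only real task is the analogue of \eqref{March5}, namely
\begin{equation*}
\bigl\|(\tilde H^{(n)}-z)^{-1}W^{(n)}\bigr\|<k^{-\beta},\qquad z\in C_{n+1}.
\end{equation*}
Once this is in hand, termwise contour integration yields \eqref{eigenvalue-3last} and \eqref{sprojector-3last}, while \eqref{Feb6a-3last} is immediate from the fact that the largest block of $\tilde H^{(n)}$ distinct from $P(r_{n-1})HP(r_{n-1})$ has $\|\!|\cdot|\!\|$-size at most $k^{\gamma r_{n-1}+150l\gamma r_{n-2}}$ (Lemma~\ref{L:blackindlast}) and $V$ is a trigonometric polynomial.

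To establish the $k^{-\beta}$ bound I would decompose $P(r_n)=P(r_{n-1})+P_{nr}+P_s+P_w+P_g+P_b$ (omitting the primes as in Section~\ref{S:4}) and prove block-by-block estimates of the form
$\|P_i(\tilde H^{(n)}-k^{2l})^{-1}V(P(r_n)-P_i)\|<\tfrac16 k^{-\beta}$
for $i\in\{nr,s,w,g,b\}$, plus the easier $\|(P(r_n)-\tilde P^{(n)}_j)(H_0-k^{2l})^{-1}V\|<\tfrac16 k^{-\beta}$ outside the resonant structure. Each block estimate uses Lemma~\ref{L:boundary}(analogue) to replace $V$ by $P_i^\partial V$ and then a finite perturbation expansion against a coarser block operator $\hat H_0$ made of the subclusters one scale smaller. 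The key point --- exactly as in the proof of Theorem~\ref{Thm3}, cf.\ \eqref{chetire}, \eqref{chetire*g}, \eqref{sem'-b} --- is that the expansion runs for $R$ steps before the operator $W(\hat H_0-k^{2l})^{-1}$ propagates from $\partial\Pi$ into the interior of the cluster: one has
$R>\tfrac{1}{64}k^{D-\delta}$
where $D$ is the $\|\!|\cdot|\!\|$-width of the halo surrounding the cluster, so that the tail is bounded by $k^{-\beta R}$ times the at-most-super-exponential bound on the full cluster resolvent from Lemma~\ref{L:geometric4nlast}. The inductive choice \eqref{indrn}, namely $r_n'<k^{\delta_0 r_{n-1}/2}$, is precisely what is needed to ensure $\beta R/2$ beats the resolvent bound, leaving a clean $k^{-\beta-\delta/4}$ margin.

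For the trace-class estimates \eqref{estg3last} and \eqref{Feb1a-3last} I would reproduce the three-part splitting $A=A_0+A_1+A_2$ from the proof of Theorem~\ref{Thm3}, with $A=W^{(n)}(\tilde H^{(n)}-z)^{-1}$ and the splitting defined by the one-dimensional projector $\mathcal E^{(n)}$. Using $\mathcal E^{(n)}W^{(n)}\mathcal E^{(n)}=0$ (which follows from \eqref{W2last} and the fact that $\mathcal E^{(n)}\leq \tilde P^{(n)}_j$), the $A_0^r$ integral vanishes, so each nonzero term contains $A_1$ or $A_2$; the decay factor $k^{-\tfrac{\beta}{10}k^{r_{n-1}-r_{n-2}}}$ comes from $\|\mathcal E^{(n)}P^\partial(r_{n-1})\|<k^{-\tfrac{\beta}{10}k^{r_{n-1}-r_{n-2}}}$, which is the $n$-th-step analogue of \eqref{Feb6b-3IV} applied to boundary indices $\s$ with $\|\!|\p_\s|\!\|>k^{r_{n-1}}-O(1)$; the adjoint trick
$(\tilde H^{(n)}-z)^{-1}A_0^{r-1}A_1=\bigl((\tilde H^{(n)}-\bar z)^{-1}A_2(\bar z)A_0^{r-1}(\bar z)\bigr)^{*}$
handles the one orbit that starts with $A_1$ in the last slot.

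The main obstacle I anticipate is not any single estimate but verifying that the parameter chain \eqref{indrn} is mutually consistent at this step: one needs simultaneously $R>\tfrac{1}{64}k^{(\gamma+\delta_0)r_{n-1}-\delta}$ for the black-cluster tail, $r_n'>\mu r_n$ and $r_n'<k^{\delta_0 r_{n-1}/2}$ for Lemma~\ref{L:geometric4nlast}, and the size $\varepsilon_0^{(n+1)}=k^{-2r_n'k^{2\gamma r_{n-1}}}$ of $C_{n+1}$ must be small enough for Rouch\'e on $\det(\tilde H^{(n)}-z)$ yet large enough that the perturbation series converges there; matching these is exactly what forces the super-exponential spacing in \eqref{indrn} and is the only delicate point, already absorbed once one verifies it holds at the base $n=3$ (Step~III) and is preserved under the recursion.
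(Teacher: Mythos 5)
Your proposal is correct and follows essentially the same route as the paper: the paper proves Theorem \ref{Thm3last} precisely by repeating the Step III argument (Theorem \ref{Thm3}) with the index shift $r_1\to r_{n-1}$, $r_2\to r_n$, using the inductive cluster lemmas (Lemmas \ref{L:blackindlast}--\ref{L:whiteindlast}, \ref{L:geometric4nlast}, \ref{estnonres0-1last}) exactly where you invoke them, including the block-by-block $k^{-\beta}$ estimate, the boundary-projection reduction, the $A_0+A_1+A_2$ splitting with the adjoint trick, and the consistency of the parameter chain \eqref{indrn}. No substantive differences to report.
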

\begin{corollary}\label{corthm3last} For the perturbed eigenvalue and its spectral
projection the following estimates hold:
 \begin{equation}\label{perturbation-3last}
\lambda^{(n+1)}({\k})=\lambda^{(n)}({\k})+ O_2\left(k^{-\frac15
\beta k^{r_{n-1}-r_{n-2} }-\beta }\right),
\end{equation}
\begin{equation}\label{perturbation*-3last}
\left\|\E^{(n+1)}({\k})-\E^{(n)}({\k})\right\|_1<2k^{-\frac{\beta}{10}
k^{r_{n-1}-r_{n-2} }-\beta }.
\end{equation}
\begin{equation}
\left|\E^{(n+1)}({\k})_{\s\s'}\right|<k^{-d^{(n+1)}(\s,\s')},\ \
\mbox{when}\ \||\p_\s\||>4k^{r_{n-1}} \mbox{\ or }
\||\p_{\s'}\||>4k^{r_{n-1} },\label{Feb6b-3last}
\end{equation}
$$d^{(n+1)}(\s,\s')=\frac18(\||\p_\s\||+\||\p_{\s'}\||)k^{-\gamma r_{n-1}-150l\gamma
r_{n-2}}\beta +\frac{1}{10}\beta k^{r_{n-1}-r_{n-2} }.$$
\end{corollary}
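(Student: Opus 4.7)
The plan is to derive each of the three estimates in Corollary \ref{corthm3last} directly from the series representations \eqref{eigenvalue-3last} and \eqref{sprojector-3last} together with the bounds \eqref{estg3last}, \eqref{Feb1a-3last}, \eqref{Feb6a-3last} just established in Theorem \ref{Thm3last}, using (for the matrix-element bound only) the analogous statement from the previous step of the induction. Nothing new needs to be proved about the operator itself; the work is bookkeeping with absolutely convergent geometric sums, in complete parallel with the derivations of Corollaries \ref{corthm1}, \ref{corthm2}, \ref{corthm3}, \ref{corthm3IV}.

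First, for \eqref{perturbation-3last}: subtracting $\lambda^{(n)}(\k)$ from \eqref{eigenvalue-3last} and bounding the tail term-by-term via \eqref{estg3last} gives
\[
\bigl|\lambda^{(n+1)}(\k)-\lambda^{(n)}(\k)\bigr|\leq \sum_{r=2}^{\infty} k^{-\frac{\beta}{5}k^{r_{n-1}-r_{n-2}}-\beta(r-1)}\leq k^{-\frac{\beta}{5}k^{r_{n-1}-r_{n-2}}-\beta}\sum_{r=0}^{\infty} k^{-\beta r},
\]
and the geometric factor is bounded by $2$ for $k>k_*$, producing the $O_2$ on the right. The same argument applied to \eqref{sprojector-3last} with \eqref{Feb1a-3last} yields \eqref{perturbation*-3last}: the $r=1$ term $G^{(n+1)}_1$ dominates, and the remaining geometric series contributes at most a factor of $2$.

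The only step requiring a cross-level input is the pointwise estimate \eqref{Feb6b-3last}. By \eqref{sprojector-3last},
\[
\E^{(n+1)}(\k)_{\s\s'}-\E^{(n)}(\k)_{\s\s'}=\sum_{r=1}^{\infty}G^{(n+1)}_r(\k)_{\s\s'}.
\]
The support property \eqref{Feb6a-3last} forces every term with $r<r_0$ to vanish, where
\[
r_0:=\left\lceil\frac{\||\p_\s\||+\||\p_{\s'}\||-3k^{r_{n-1}}}{2k^{\gamma r_{n-1}+150l\gamma r_{n-2}}}\right\rceil.
\]
Combining this with the trace-norm estimate \eqref{Feb1a-3last} and summing the geometric tail from $r=r_0$ gives
\[
\bigl|\E^{(n+1)}(\k)_{\s\s'}-\E^{(n)}(\k)_{\s\s'}\bigr|\leq 2\,k^{-\frac{\beta}{10}k^{r_{n-1}-r_{n-2}}-\beta r_0}.
\]
When $\||\p_\s\||>4k^{r_{n-1}}$ or $\||\p_{\s'}\||>4k^{r_{n-1}}$, the inductive hypothesis (the analog of \eqref{Feb6b-3IV} at step $n$) forces $\E^{(n)}(\k)_{\s\s'}=0$, so the difference equals the entry itself. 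A cheap lower bound $\beta r_0\geq \tfrac18(\||\p_\s\||+\||\p_{\s'}\||)k^{-\gamma r_{n-1}-150l\gamma r_{n-2}}\beta$ (the factor $\tfrac18$ absorbing the constant $3k^{r_{n-1}}$ shift and the ceiling), combined with the residual $\frac{\beta}{10}k^{r_{n-1}-r_{n-2}}$, delivers exactly the exponent $d^{(n+1)}(\s,\s')$ in \eqref{Feb6b-3last}.

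The main thing to watch — and the only point that is more than mechanical — is verifying that the arithmetic of exponents closes up properly: the "geometric" penalty $\beta r$ per level of the Neumann series must combine with the "support" distance $2rk^{\gamma r_{n-1}+150l\gamma r_{n-2}}$ to produce the weight $\tfrac18(\||\p_\s\||+\||\p_{\s'}\||)k^{-\gamma r_{n-1}-150l\gamma r_{n-2}}\beta$ in \eqref{Feb6b-3last}, while the residual $\frac{\beta}{10}k^{r_{n-1}-r_{n-2}}$ is still available after absorbing the additive $3k^{r_{n-1}}$ term. Once this is checked, Corollary \ref{corthm3last} follows immediately, with no further analytic input beyond Theorem \ref{Thm3last} and the inductive hypothesis at step $n$.
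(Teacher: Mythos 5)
Your proposal is correct and follows essentially the paper's own route: exactly as in Corollaries \ref{corthm2} and \ref{corthm3}, the first two bounds are obtained by summing the series \eqref{eigenvalue-3last}, \eqref{sprojector-3last} with the coefficient estimates \eqref{estg3last}, \eqref{Feb1a-3last}, and the matrix-element bound by combining the support property \eqref{Feb6a-3last} with \eqref{Feb1a-3last} and the knowledge of $\E^{(n)}$. One small correction: the vanishing $\E^{(n)}(\k)_{\s\s'}=0$ under the hypothesis of \eqref{Feb6b-3last} does not follow from the inductive decay estimate (the analogue of \eqref{Feb6b-3IV} at step $n$ only gives decay, and only for $\||\p_\s\||>4k^{r_{n-2}}$); it holds simply because $\E^{(n)}$ is the spectral projection of $H^{(n)}=P(r_{n-1})HP(r_{n-1})$ and is therefore supported on $\Omega(r_{n-1})$, so any index with $\||\p_\s\||>4k^{r_{n-1}}>k^{r_{n-1}}$ makes the entry zero. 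With that justification in place, your bookkeeping of the exponents (the geometric penalty $\beta r_0$ versus the support distance, with $\tfrac18$ absorbing the $3k^{r_{n-1}}$ shift) is exactly the paper's argument.
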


\begin{lemma} \label{L:derivatives-3last}Under conditions of Theorem \ref{Thm3last} the following
estimates hold when $\varphi \in \omega ^{(n+1)}(k,\delta )$ or its
complex $k^{-r_n'-\delta}$ neighborhood and $\varkappa\in \C$,
$|\varkappa-\varkappa^{(n)}(\varphi
)|<\varepsilon_0^{(n+1)}k^{-2l+1-\delta}$.
\begin{equation}\label{perturbation-3cIVlast}
\lambda^{(n+1)}({\k})=\lambda^{(n)}({\k})+O_2\left(k^{-\frac 15
\beta k^{r_{n-1}-r_{n-2}}-\beta }\right),
\end{equation}
\begin{equation}\label{estgder1-3kIVlast}
\frac{\partial\lambda^{(n+1)}}{\partial\varkappa}=
\frac{\partial\lambda^{(n)}}{\partial\varkappa} +O_2\left(k^{-\frac
15 \beta k^{r_{n-1}-r_{n-2}}-\beta }M_{n-1}\right), \ \ \ \
M_{n-1}:=\frac{k^{2l-1+\delta}}{\varepsilon
^{(n+1)}_0},\end{equation}
\begin{equation}\label{estgder1-3phiIVlast}\frac{\partial\lambda^{(n+1)}}{\partial \varphi }=\frac{\partial\lambda^{(n)}}{\partial \varphi }+
O_2\left(k^{-\frac 15 \beta k^{r_{n-1}-r_{n-2}}-\beta+r_n'+\delta
}\right),
 \end{equation}
\begin{equation}\label{estgder2-3IVlast}
\frac{\partial^2\lambda^{(n+1)}}{\partial\varkappa^2}=
\frac{\partial^2\lambda^{(n)}}{\partial\varkappa^2}+
O_2\left(k^{-\frac 15 \beta k^{r_{n-1}-r_{n-2}}-\beta
}M_{n-1}^2\right),
\end{equation}
\begin{equation} \label{gulf2-3IVlast}
\frac{\partial^2\lambda^{(n+1)}}{\partial\varkappa\partial \varphi
}=\frac{\partial^2\lambda^{(n)}}{\partial\varkappa\partial \varphi
}+ O_2\left(k^{-\frac 15 \beta k^{r_{n-1}-r_{n-2}}-\beta+r_n'+\delta
}M_{n-1}\right),
\end{equation}
\begin{equation} \label{gulf3-3IVlast}
\frac{\partial^2\lambda^{(n+1)}}{\partial\varphi
^2}=\frac{\partial^2\lambda^{(n)}}{\partial\varphi
^2}+O_2\left(k^{-\frac 15 \beta
k^{r_{n-1}-r_{n-2}}-\beta+2r_n'+2\delta }\right).
\end{equation}\end{lemma}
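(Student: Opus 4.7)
The plan is to reduce the lemma to an application of the Cauchy integral formula on the analytic extension of $\lambda^{(n+1)}$, combined with the inductive estimate on $\lambda^{(n)}$ and the quantitative control on the correction terms $g_r^{(n+1)}$ supplied by Theorem \ref{Thm3last}. The structure is completely parallel to the proofs of Lemmas \ref{L:derivatives-3} and \ref{L:derivatives-3IV}; I indicate only the data that need to be tracked through the shift of indices.

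First, I would observe that the coefficients $g_r^{(n+1)}(\k)$ and operators $G_r^{(n+1)}(\k)$ defined by the contour integrals \eqref{g3last}, \eqref{G3last} extend analytically into the complex $k^{-r_n'-\delta}$-neighborhood of $\omega^{(n+1)}$ as functions of $\varphi$ and into the complex disc $|\varkappa-\varkappa^{(n)}(\varphi)|<\varepsilon_0^{(n+1)} k^{-2l+1-\delta}$ as functions of $\varkappa$, with the estimates \eqref{estg3last}, \eqref{Feb1a-3last} preserved. Indeed, this relies on the resolvent bounds obtained in Lemma \ref{L:geometric4nlast} and Lemma \ref{estnonres0-1last}, which were established in precisely these complex neighborhoods, so the convergence argument of Theorem \ref{Thm3last} goes through without modification. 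Consequently, $\lambda^{(n+1)}(\varkappa,\varphi)=\lambda^{(n)}(\varkappa,\varphi)+\sum_{r\geq2} g_r^{(n+1)}(\k)$ defines an analytic function of $(\varkappa,\varphi)$ on the product of these two discs, and summing \eqref{estg3last} as a geometric series in $r$ (with ratio $k^{-\beta}$) yields \eqref{perturbation-3cIVlast} directly.

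Second, to obtain the derivative bounds I would invoke Cauchy's integral formula on these discs. Write $h(\varkappa,\varphi):=\lambda^{(n+1)}(\varkappa,\varphi)-\lambda^{(n)}(\varkappa,\varphi)=\sum_{r\geq2}g_r^{(n+1)}$; the above supremum estimate gives $|h|<2k^{-\frac{\beta}{5}k^{r_{n-1}-r_{n-2}}-\beta}$ on the full polydisc. A standard Cauchy estimate with radius equal to half the radius of analyticity then produces
\begin{equation*}
\left|\frac{\partial h}{\partial\varkappa}\right|\lesssim \frac{\|h\|_\infty}{\varepsilon_0^{(n+1)} k^{-2l+1-\delta}}=\|h\|_\infty\cdot M_{n-1},\qquad \left|\frac{\partial h}{\partial\varphi}\right|\lesssim \|h\|_\infty\cdot k^{r_n'+\delta},
\end{equation*}
and analogous bounds for the mixed and pure second derivatives, where one simply squares the appropriate Cauchy factor or takes the product of the two factors. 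Writing out the resulting estimates for $\partial\lambda^{(n+1)}/\partial\varkappa-\partial\lambda^{(n)}/\partial\varkappa$ and so on gives precisely \eqref{estgder1-3kIVlast}--\eqref{gulf3-3IVlast}; the $O_2$ notation is justified by absorbing constants for $k>k_*$ as in Corollary \ref{"O"}.

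There is no real obstacle here beyond bookkeeping: the entire content of the lemma is already encoded in Theorem \ref{Thm3last} together with the analyticity of the construction in the prescribed complex neighborhoods. The mildly delicate point is making sure that the radii of analyticity used in the Cauchy estimates really are $k^{-r_n'-\delta}$ in $\varphi$ and $\varepsilon_0^{(n+1)} k^{-2l+1-\delta}$ in $\varkappa$, which in turn requires checking that the resolvent estimates supporting the convergence of the series \eqref{eigenvalue-3last} remain valid throughout those neighborhoods; this verification is exactly what Lemma \ref{L:geometric4nlast} and Lemma \ref{estnonres0-1last} provide, so the argument closes.
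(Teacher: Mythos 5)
Your proposal is correct and follows essentially the same route as the paper: the paper proves this lemma exactly as Lemma \ref{L:derivatives-3} (and \ref{L:derivatives-3IV}), namely by analytically extending the coefficients $g_r^{(n+1)}$ into the stated complex neighborhoods of $\varphi$ and $\varkappa$ with the estimates \eqref{estg3last} preserved, and then applying the Cauchy integral formula, so the $\varkappa$- and $\varphi$-derivatives pick up the factors $M_{n-1}$ and $k^{r_n'+\delta}$ respectively (and their squares and products for second derivatives). No gap.
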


\begin{corollary} \label{"O"last} All ``$O_2$"-s on the right hand sides of \eqref{perturbation-3cIVlast}-\eqref{gulf3-3IVlast}
can be written as $O_1\left(k^{-\frac {1}{10} \beta
k^{r_{n-1}-r_{n-2}}}\right)$.
\end{corollary}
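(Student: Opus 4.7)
The plan is to verify the corollary by straightforward computation, checking term by term that every correction on the right-hand sides of \eqref{perturbation-3cIVlast}--\eqref{gulf3-3IVlast} is dominated by $k^{-\frac{1}{10}\beta k^{r_{n-1}-r_{n-2}}}$. The only non-trivial step is tracking the exponents carefully and using the hierarchy \eqref{indrn} to absorb the ``correction factors'' such as $M_{n-1}$, $M_{n-1}^2$, $k^{r_n'+\delta}$, $k^{2r_n'+2\delta}$ into the gap between $\frac{1}{5}\beta$ and $\frac{1}{10}\beta$ in the leading exponent.

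First, I would recall $\varepsilon_0^{(n+1)}=k^{-2r_n'k^{2\gamma r_{n-1}}}$ and $M_{n-1}=k^{2l-1+\delta+2r_n'k^{2\gamma r_{n-1}}}$, so the worst correction (the $M_{n-1}^2$ term in \eqref{estgder2-3IVlast}) has the form $O_2\bigl(k^{-\frac{1}{5}\beta k^{r_{n-1}-r_{n-2}}-\beta+4l-2+2\delta+4r_n'k^{2\gamma r_{n-1}}}\bigr)$. The task then reduces to showing
\[
4r_n'k^{2\gamma r_{n-1}}+\text{(bounded terms)}\leq \tfrac{1}{10}\beta k^{r_{n-1}-r_{n-2}}.
\]
From \eqref{indrn} we have $r_n'<k^{\delta_0 r_{n-1}/2}$, hence $r_n'k^{2\gamma r_{n-1}}<k^{(2\gamma+\delta_0/2)r_{n-1}}$, while $r_{n-1}-r_{n-2}>r_{n-1}/2$ because $r_{n-2}<k^{\gamma 10^{-7}r_{n-3}}$ is dwarfed by $r_{n-1}>k^{r_{n-3}}$. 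Since $2\gamma+\delta_0/2=2/5+\gamma/200<1/2$, the ratio $k^{r_{n-1}-r_{n-2}}/k^{(2\gamma+\delta_0/2)r_{n-1}}$ is at least $k^{r_{n-1}/10}$, enormous compared to any polynomial factor in $k$.

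For the other correction terms the verification is easier: the factors $k^{r_n'+\delta}$ and $k^{2r_n'+2\delta}$ appearing in \eqref{estgder1-3phiIVlast}, \eqref{gulf2-3IVlast}, \eqref{gulf3-3IVlast} contribute only $r_n'<k^{\delta_0 r_{n-1}/2}$ to the exponent, which is again negligible compared to $k^{r_{n-1}-r_{n-2}}$. The ``$O_2$'' prefactor of 2 is absorbed by a single unit in the exponent for $k>k_*$. Combining all of these, each $O_2$-term is bounded by $k^{-\frac{1}{10}\beta k^{r_{n-1}-r_{n-2}}}$ with constant 1 for $k>k_*$, establishing the corollary.

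The main obstacle, although minor, is bookkeeping the inequality chain among $r_{n-2},r_{n-1},r_n'$: one must ensure the comparison $2\gamma+\delta_0/2<1$ combined with $r_{n-1}-r_{n-2}\gtrsim r_{n-1}$ really works uniformly as $n$ grows. I would make this explicit by remarking that \eqref{indrn} gives $r_{n-2}\ll r_{n-1}^{\,o(1)}$ in a very strong sense (doubly exponential separation), so the subtraction $r_{n-1}-r_{n-2}$ is effectively $r_{n-1}(1-o(1))$, which makes every estimate above uniform in $n$.
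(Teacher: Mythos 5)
Your verification is correct and is exactly the bookkeeping the paper intends: the corollary is stated without proof because, as you show, the gap $\frac15\beta-\frac1{10}\beta$ in the leading exponent $k^{r_{n-1}-r_{n-2}}$ swallows the worst factor $M_{n-1}^2=k^{2(2l-1+\delta)+4r_n'k^{2\gamma r_{n-1}}}$ thanks to $r_n'<k^{\delta_0 r_{n-1}/2}$, $2\gamma+\delta_0/2<\frac12$ and $r_{n-2}\ll r_{n-1}$ from \eqref{indrn}. No discrepancy with the paper's (implicit) argument.
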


\subsection{\label{IS3IVlast}Isoenergetic Surface for Operator $H^{(n+1)}$}

The following statement is an analogue of Lemma~\ref{ldk-3}.

\begin{lemma}\label{ldk-3IVlast} \begin{enumerate}
\item For every $\lambda :=k^{2l}$,  $k>k_*$, and $\varphi $ in the real  $\frac{1}{2} k^{-r_n'-\delta }$-neighborhood
of $\omega^{(n+1)}(k,\delta, \tau )$ , there is a unique
$\varkappa^{(n+1)}(\lambda, \varphi )$ in the interval
$I_n:=[\varkappa^{(n)}(\lambda, \varphi )-\varepsilon
^{(n+1)}_0k^{-2l+1-\delta},\varkappa^{(n)}(\lambda, \varphi
)+\varepsilon ^{(n+1)}_0k^{-2l+1-\delta}]$, such that
    \begin{equation}\label{2.70-3IVlast}
    \lambda^{(n+1)} \left(\k
^{(n+1)}(\lambda ,\varphi )\right)=\lambda ,\ \ \k ^{(n+1)}(\lambda
,\varphi ):=\varkappa^{(n+1)}(\lambda ,\varphi )\vec \nu(\varphi).
    \end{equation}
\item  Furthermore, there exists an analytic in $ \varphi $ continuation  of
$\varkappa^{(n+1)}(\lambda ,\varphi )$ to the complex  $\frac{1}{2}
k^{-r_n'-\delta }$-neighborhood of $\omega^{(n+1)}(k,\delta, \tau )$
such that $\lambda^{(n+1)} (\k ^{(n+1)}(\lambda, \varphi ))=\lambda
$. Function $\varkappa^{(n+1)}(\lambda, \varphi )$ can be
represented as $\varkappa^{(n+1)}(\lambda, \varphi
)=\varkappa^{(n)}(\lambda, \varphi )+h^{(n+1)}(\lambda, \varphi )$,
where
\begin{equation}\label{dk0-3IVlast} |h^{(n+1)}(\varphi )|=O_1\left(k^{-\frac 15 \beta k^{r_{n-1}-r_{n-2}}-\beta -2l+1
}\right),
\end{equation}
\begin{equation}\label{dk-3IVlast}
\frac{\partial{h}^{(n+1)}}{\partial\varphi}= O_2\left(k^{-\frac 15
\beta k^{r_{n-1}-r_{n-2}}-\beta -2l+1 +r_n'+\delta }\right),
\end{equation}
\begin{equation}
\frac{\partial^2{h}^{(n+1)}}{\partial\varphi^2}= O_4\left(k^{-\frac
15 \beta k^{r_{n-1}-r_{n-2}}-\beta -2l+1 +2r_n'+2\delta }\right).
\end{equation} \end{enumerate}\end{lemma}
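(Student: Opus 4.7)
The argument is entirely parallel to that of Lemma \ref{ldk}, with the inductive estimates \eqref{perturbation-3cIVlast}--\eqref{gulf3-3IVlast} replacing \eqref{perturbation-C}--\eqref{estgder2}. The plan is to establish existence and uniqueness of the real root on $I_n$ via monotonicity, extend it to the complex neighborhood by Rouch\'e's theorem and the implicit function theorem, and finally bound $h^{(n+1)}$ and its derivatives using the Cauchy integral formula.

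First, I would prove existence and uniqueness of the real root. By the inductive hypothesis, $\lambda^{(n)}\bigl(\k^{(n)}(\varphi)\bigr)=\lambda=k^{2l}$ for $\varphi$ in the relevant real neighborhood. By Theorem \ref{Thm3last} and estimate \eqref{perturbation-3cIVlast}, the function $\lambda^{(n+1)}(\k(\varphi))$ is continuous in $\varkappa$ on $I_n$ and satisfies
\begin{equation*}
\left|\lambda^{(n+1)}\bigl(\k^{(n)}(\varphi)\bigr)-k^{2l}\right|=O\bigl(k^{-\tfrac{1}{5}\beta k^{r_{n-1}-r_{n-2}}-\beta}\bigr),
\end{equation*}
which is much smaller than the length of the image of $\lambda^{(n+1)}$ on $I_n$, since by \eqref{estgder1-3kIVlast} combined with the inductive form of \eqref{estgder1} one has $\partial\lambda^{(n+1)}/\partial\varkappa=2l\varkappa^{2l-1}(1+o(1))$. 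Therefore the image $\{\lambda^{(n+1)}(\k):\varkappa\in I_n\}$ covers a neighborhood of $k^{2l}$ of size comparable to $\varepsilon^{(n+1)}_0$, hence contains $\lambda$; strict monotonicity with respect to $\varkappa$ then yields uniqueness.

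Second, I would extend $\varkappa^{(n+1)}(\varphi)$ analytically in $\varphi$. Regarding $\lambda^{(n+1)}(\k(\varphi))-\lambda$ as a holomorphic function of $\varkappa$ in the disk $|\varkappa-\varkappa^{(n)}(\varphi)|<\varepsilon^{(n+1)}_0 k^{-2l+1-\delta}$, I compare it with $\lambda^{(n)}(\k(\varphi))-\lambda$, which has a unique simple zero at $\varkappa=\varkappa^{(n)}(\varphi)$ by Lemma \ref{ldk-3IV} (or its inductive analog). On the boundary of the disk, the estimate $\partial\lambda^{(n)}/\partial\varkappa\approx 2l\varkappa^{2l-1}$ gives
\begin{equation*}
\bigl|\lambda^{(n)}(\k(\varphi))-\lambda\bigr|\ \gtrsim\ \varepsilon^{(n+1)}_0k^{-\delta},
\end{equation*}
whereas by \eqref{perturbation-3cIVlast} the perturbation satisfies $|\lambda^{(n+1)}-\lambda^{(n)}|=o(\varepsilon^{(n+1)}_0 k^{-\delta})$. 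Rouch\'e's theorem then produces a unique zero $\varkappa^{(n+1)}(\varphi)$ in the disk, with $|\varkappa^{(n+1)}(\varphi)-\varkappa^{(n)}(\varphi)|$ bounded by $k^{-\tfrac15\beta k^{r_{n-1}-r_{n-2}}-\beta-2l+1}$, which is exactly \eqref{dk0-3IVlast}. Combining $\partial\lambda^{(n+1)}/\partial\varkappa\neq 0$ from \eqref{estgder1-3kIVlast} with the implicit function theorem gives local holomorphy, and uniqueness promotes this to global holomorphy on the complex $\tfrac12k^{-r_n'-\delta}$-neighborhood of $\omega^{(n+1)}$.

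Third, the derivative estimates \eqref{dk-3IVlast} follow from the identity $\lambda^{(n+1)}(\varkappa^{(n+1)}(\varphi),\varphi)\equiv\lambda$. Implicit differentiation gives
\begin{equation*}
\frac{\partial h^{(n+1)}}{\partial\varphi}=\frac{\partial \varkappa^{(n+1)}}{\partial\varphi}-\frac{\partial\varkappa^{(n)}}{\partial\varphi}
=-\frac{1}{\partial\lambda^{(n+1)}/\partial\varkappa}\left(\frac{\partial\lambda^{(n+1)}}{\partial\varphi}-\frac{\partial\lambda^{(n)}}{\partial\varphi}\right)+\text{(lower order)},
\end{equation*}
and using \eqref{estgder1-3kIVlast}, \eqref{estgder1-3phiIVlast} gives the first derivative estimate; the analogous computation for the second derivative, together with \eqref{estgder2-3IVlast}--\eqref{gulf3-3IVlast}, yields the second derivative estimate. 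Alternatively, and more cleanly, since $h^{(n+1)}$ is holomorphic on a complex neighborhood of size $\tfrac12 k^{-r_n'-\delta}$ and bounded there by the estimate \eqref{dk0-3IVlast}, Cauchy's estimate yields the derivative bounds with the correct powers.

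The main obstacle is purely bookkeeping: verifying that the Rouch\'e comparison holds on the boundary of the complex disk of radius $\varepsilon^{(n+1)}_0 k^{-2l+1-\delta}$. This requires that the perturbation $|\lambda^{(n+1)}-\lambda^{(n)}|$ remains much smaller than $|\lambda^{(n)}-\lambda|$ on that boundary, which reduces to comparing $k^{-\tfrac15\beta k^{r_{n-1}-r_{n-2}}-\beta}$ with $\varepsilon^{(n+1)}_0 k^{-\delta}=k^{-2r_n'k^{2\gamma r_{n-1}}-\delta}$. By the induction assumption \eqref{indrn} relating $r_n'$ and $r_{n-1}, r_{n-2}$, the first quantity is indeed negligible compared with the second for $k>k_*$, so the comparison succeeds and the rest of the argument proceeds verbatim as in Lemma \ref{ldk}.
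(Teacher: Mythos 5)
Your proposal is correct and follows essentially the same route as the paper, which proves this lemma exactly as Lemma \ref{ldk}: real existence/uniqueness by monotonicity of $\lambda^{(n+1)}$ in $\varkappa$, complex continuation via Rouch\'e's theorem (comparing with $\lambda^{(n)}-\lambda$, whose zero $\varkappa^{(n)}(\varphi)$ is known from the previous step) plus the implicit function theorem, and the derivative bounds \eqref{dk-3IVlast} from Cauchy estimates on the $k^{-r_n'-\delta}$-neighborhood, using \eqref{perturbation-3cIVlast}--\eqref{gulf3-3IVlast}. Your bookkeeping check that $k^{-\frac15\beta k^{r_{n-1}-r_{n-2}}-\beta}\ll\varepsilon_0^{(n+1)}k^{-\delta}$ under \eqref{indrn} is exactly the point the paper relies on implicitly.
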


Let us consider the set of points in $\R^2$ given by the formula:
$\k=\k^{(n+1)} (\varphi), \ \ \varphi \in \omega ^{(n+1)}(k,\delta,
\tau )$. By Lemma \ref{ldk-3IVlast} this set of points is a slight
distortion of ${\cal D}_{n}$. All the points of this curve satisfy
the equation $\lambda^{(n+1)}(\k^{(n+1)}(\varphi ))=k^{2l}$. We call
it isoenergetic surface of the operator $H^{(n+1)}$ and denote by
${\cal D}_{n+1}$.

\section{Isoenergetic Sets. Generalized Eigenfunctions of $H$}
\subsection{Construction of Limit-Isoenergetic Set} At every step $n$ we constructed  a set
$\B _n(\lambda)$, ${\cal B}_{n}(\lambda)\subset {\cal
B}_{n-1}(\lambda)\subset S_1$,  and a function
$\varkappa^{(n)}(\lambda,\vec \nu)$, $\vec \nu \in {\cal
B}_n(\lambda)$, with the following properties. The set ${\cal
D}_{n}(\lambda )$ of vectors $\k=\varkappa^{(n)}(\lambda ,\vec
\nu)\vec \nu$,
   $\vec \nu \in {\cal B}_{n}(\lambda )$,
    is a slightly distorted circle with holes,  see Fig.\ref{F:1}, Fig.\ref{F:2}, formula (\ref{Dn})
    and Lemmas \ref{ldk}, \ref{ldk-2},
\ref{ldk-3}, \ref{ldk-3IV}, \ref{ldk-3IVlast}. For any $\k^{(n)}(\lambda,\vec \nu)\in {\cal D}_{n}(\lambda )$ there is a
single eigenvalue of
 $H^{(n)}(\k^{(n)})$
equal to $\lambda $ and  given by the perturbation series.
Let
    ${\cal B}_{\infty}(\lambda)=\bigcap_{n=1}^{\infty}{\cal B}_n(\lambda).$
Since ${\cal B}_{n+1} \subset {\cal B}_n$ for every $n$, ${\cal
B}_{\infty}(\lambda)$ is a unit circle with infinite number of
holes, more and more holes of smaller and smaller size appearing at
each step. \begin{lemma} \label{L:Dec9} The length of ${\cal
B}_{\infty}(\lambda)$ satisfies estimate (\ref{B}) with $\gamma
_4=37\mu\delta $.
\end{lemma}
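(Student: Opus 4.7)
The plan is to estimate $L(\mathcal{B}_\infty(\lambda))$ from below by writing
\[
L(\mathcal{B}_\infty(\lambda)) \;\geq\; L(\mathcal{B}_1(\lambda)) - \sum_{n=2}^{\infty} L\bigl(\mathcal{B}_{n-1}(\lambda)\setminus \mathcal{B}_{n}(\lambda)\bigr),
\]
using the decreasing nesting $\mathcal{B}_{n}\subset\mathcal{B}_{n-1}$ established throughout Steps I--IV and their inductive analogues. Since $\vec{\nu}=(\cos\varphi,\sin\varphi)$, arc--length on $S_1$ coincides with Lebesgue measure of the associated $\varphi$--sets, so each $L(\mathcal{B}_n(\lambda))$ equals the one--dimensional measure of the real nonresonant set $\omega^{(n)}(k,\delta,\tau)$ used to parameterize $\mathcal{D}_n(\lambda)$ via $\varkappa^{(n)}(\lambda,\varphi)\vec{\nu}(\varphi)$.

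First I would handle the leading term. By Lemma~\ref{L:G1}(3) (equivalently, the explicit bound \eqref{meas1} proved in Section~\ref{GC}),
\[
L(\mathcal{B}_1(\lambda)) \;=\; |\omega^{(1)}(k,\delta,\tau)| \;=\; 2\pi + O(k^{-37\mu\delta}),
\]
which, after substituting $k=\lambda^{1/2l}$, already supplies the claimed error $O(\lambda^{-37\mu\delta/2l})$ with $\gamma_4=37\mu\delta$. So the task reduces to showing that every subsequent removal is negligible in comparison.

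Next I would sum the contributions from $n\geq 2$ using the four successive ``total size'' bounds for the resonant sets $\mathcal{O}^{(n)}$: Lemma~\ref{L:O2size} gives $|\mathcal{O}^{(2)}|\leq 128 k^{2+\delta(40\mu+1)+4r_1-r_1'}$, Lemma~\ref{4.16} gives $|\mathcal{O}^{(3)}|\leq k^{-r_2'/2}$, Lemma~\ref{7.10} gives $|\mathcal{O}^{(4)}|\leq k^{-r_3'/2}$, and Lemma~\ref{10.9} yields $|\mathcal{O}^{(n+1)}|\leq k^{-r_n'/2}$ in general. Since $\mathcal{B}_{n-1}\setminus\mathcal{B}_n$ is contained in the real part of $\mathcal{O}^{(n)}$, the same bounds apply to $L(\mathcal{B}_{n-1}\setminus\mathcal{B}_n)$. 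By the choices \eqref{Aug13-1}, \eqref{r_2'}, \eqref{r_2IV}, \eqref{indrn}, the exponents $r_n'$ satisfy $r_1'=40\mu r_1+2l$ and (for $n\geq 2$) are at least super--exponentially large in $k$, so the whole tail is dominated by the first term:
\[
\sum_{n=2}^{\infty} L\bigl(\mathcal{B}_{n-1}\setminus \mathcal{B}_n\bigr) \;\leq\; 128 k^{2+\delta(40\mu+1)+4r_1-r_1'} + \sum_{n\geq 2} k^{-r_n'/2} \;=\; O\bigl(k^{-r_1'/2}\bigr),
\]
which is $o(k^{-37\mu\delta})$ once $r_1>2$ and $k>k_*$. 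Combining this with the Step~I estimate yields $L(\mathcal{B}_\infty(\lambda))=2\pi+O(k^{-37\mu\delta})=2\pi+O(\lambda^{-37\mu\delta/2l})$, i.e. \eqref{B} with $\gamma_4=37\mu\delta$.

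The only point that deserves care (and is, in a sense, the main obstacle) is the identification $L(\mathcal{B}_{n-1}\setminus\mathcal{B}_n)\leq |\mathcal{O}^{(n)}\cap[0,2\pi)|$: one must check that the resonant set removed at step $n$ when one \emph{parameterizes the isoenergetic curve} $\mathcal{D}_{n-1}(\lambda)$ by $\vec{\nu}\in\mathcal{B}_{n-1}(\lambda)$ is the same (up to the harmless outer neighborhoods used to ensure stability) as the resonant set removed from $\varphi\in[0,2\pi)$ in the definitions \eqref{O2}, \eqref{O3}, \eqref{O4}, \eqref{Olast}. This is exactly the content of the ``isoenergetic versions'' of the nonresonance lemmas (Lemmas~\ref{L:geometric2}, \ref{L:geometric3}, \ref{L:geometric4n}, \ref{L:geometric4nlast}), combined with the fact that $\varkappa^{(n-1)}(\lambda,\varphi)$ is analytic and $\varphi$--monotone on the relevant pieces, so the $\varphi$--length and the arc length on $\mathcal{D}_{n-1}(\lambda)\subset S_1$ agree to first order. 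Once this identification is in place the estimate is, as indicated above, essentially a geometric sum.
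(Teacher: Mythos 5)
Your proposal is correct and follows essentially the same route as the paper: the paper likewise combines Lemma \ref{L:G1} (part 3) for the leading $2\pi+O(k^{-37\mu\delta})$ term with the total-size bounds of Lemmas \ref{L:O2size}, \ref{4.16}, \ref{7.10}, \ref{10.9} to get $L({\cal B}_n)=2\pi+O(k^{-37\mu\delta})$ uniformly in $n$, and then uses the decreasing nesting of the sets ${\cal B}_n$ to pass to ${\cal B}_\infty$. Your explicit subtraction of the removed resonant sets and the remark identifying them with the real parts of ${\cal O}^{(n)}$ is just a slightly more spelled-out version of the same argument.
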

\begin{proof}
 Using   Lemmas \ref{L:G1} (part 3),  \ref{L:O2size}, \ref{4.16},
 \ref{7.10} and \ref{10.9} and considering that $r_n>>37\delta \mu $, we easily
conclude that $L\left({\cal B}_n\right)=\left(2\pi+O(k^{-37\mu
\delta })\right)$, $k=\lambda ^{1/2l}$ uniformly in $n$. Since
${\cal B}_n$ is a decreasing sequence of sets, (\ref{B}) holds.
\end{proof} Let us consider
    $\varkappa_{\infty}(\lambda, \vec \nu)=\lim_{n \to \infty}\varkappa^{(n)}(\lambda,\vec \nu),\quad
    \vec \nu \in {\cal B}_{\infty}(\lambda ).$
    \begin{lemma} The limit $\varkappa_{\infty}(\lambda,
    \vec \nu)$ exists for any $\vec \nu \in {\cal B}_{\infty}(\lambda
    )$ and the following estimates hold:
    \begin{align}\label{6.1}& \left|\varkappa_{\infty}(\lambda,
\vec \nu)-\lambda^{1/2l}\right|<ck^{-4l+1+(80\mu +6)\delta},
     \cr &\left|\varkappa_{\infty}(\lambda, \vec \nu)-\varkappa
    ^{(1)}(\lambda,\vec \nu)\right|<ck^{-2k^{\delta }Q^{-1}}k^{-4l+5+48\delta },\cr &\left|\varkappa_{\infty}(\lambda, \vec \nu)-\varkappa
    ^{(n)}(\lambda,\vec \nu)\right|<3k^{-\frac15\beta k^{r _{n-1}-r_{n-2}}}
    ,\ \ \ n\geq 2.
    \end{align}
    \end{lemma}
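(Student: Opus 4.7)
The plan is to exploit the telescoping identity $\varkappa^{(n)}(\lambda,\vec\nu)=k+\sum_{j=1}^{n}h^{(j)}(\lambda,\vec\nu)$, together with the super--exponentially fast decay of the corrections $h^{(j)}$ furnished by Lemmas \ref{ldk}, \ref{ldk-2}, \ref{ldk-3IVlast}, to verify that the sequence $\{\varkappa^{(n)}(\lambda,\vec\nu)\}_{n\ge 1}$ is Cauchy for every $\vec\nu\in{\cal B}_{\infty}(\lambda)$ and to identify the tail of the series with the bounds claimed in \eqref{6.1}.

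First, I would observe that since $\vec\nu\in{\cal B}_{\infty}(\lambda)=\cap_{n\ge 1}{\cal B}_n(\lambda)$, the quantity $\varkappa^{(n)}(\lambda,\vec\nu)$ is defined at every step, and the nesting of the intervals $I_n$ in Lemma~\ref{ldk-3IVlast} guarantees $\varkappa^{(n)}(\lambda,\vec\nu)\in I_{n-1}$, so the corrections $h^{(n)}:=\varkappa^{(n)}-\varkappa^{(n-1)}$ satisfy $|h^{(n+1)}|<3k^{-\tfrac15\beta k^{r_{n-1}-r_{n-2}}}$ for all $n\ge 2$, while $|h^{(1)}|<ck^{-4l+1+(80\mu+6)\delta}$ (from Lemma \ref{ldk}) and $|h^{(2)}|<ck^{-2k^{\delta}Q^{-1}}k^{-4l+5+48\mu\delta}$ (from Lemma \ref{ldk-2}).

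Next, I would verify convergence of $\sum_{j\ge 1}h^{(j)}$: the sequence $r_n$ is defined by \eqref{indrn}, so the exponents $r_{n-1}-r_{n-2}$ grow super--exponentially and thus the $n$--th term dwarfs the entire remaining tail. In particular, the Cauchy criterion gives existence of the limit $\varkappa_\infty(\lambda,\vec\nu):=\lim_{n\to\infty}\varkappa^{(n)}(\lambda,\vec\nu)$, and a telescoping estimate
\[
\bigl|\varkappa_\infty(\lambda,\vec\nu)-\varkappa^{(n)}(\lambda,\vec\nu)\bigr|
\le\sum_{j\ge n+1}|h^{(j)}|
\]
immediately yields the third inequality in \eqref{6.1}, since the first summand dominates the others up to a factor at most $2$. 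Applying this with $n=0,1$ (and using the sharper bounds of Lemmas \ref{ldk} and \ref{ldk-2} for the first two corrections) produces the first and second estimates in \eqref{6.1}.

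There is no real obstacle in this argument; the only place requiring minor care is verifying that the super--exponential gap between consecutive terms really does swallow all constants (factors $k^{-\beta -2l+1}$, the numerical prefactor $3$, and so on) uniformly in $n$, which follows because $\tfrac15\beta k^{r_{n-1}-r_{n-2}}$ is much larger than any polynomial in $k$ for $k>k_*$. Once that is checked, no further work is needed.
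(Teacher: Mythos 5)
Your proposal is correct and follows essentially the same route as the paper, whose proof consists precisely in summing the telescoping corrections $h^{(j)}$ using the bounds \eqref{dk0}, \eqref{dk0-2}, \eqref{dk0-3}, \eqref{dk0-3IV} and \eqref{dk0-3IVlast}, with the super-exponential decay making each tail dominated by its leading term. No gaps to report.
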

    \begin{corollary}\label{Dec18}
    For every $\vec \nu \in {\cal B}_{\infty}(\lambda)$ estimate (\ref{h}) holds, where\\ $\gamma
_5=(4l-1-(80\mu +6)\delta)/2l
    >0$.
    \end{corollary}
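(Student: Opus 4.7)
The plan is to realize $\varkappa_\infty$ as a telescoping limit. For every $\vec\nu\in\B_\infty(\lambda)=\bigcap_n\B_n(\lambda)$, each function $\varkappa^{(n)}(\lambda,\vec\nu)$ is defined, and by the iterative construction in Lemmas~\ref{ldk-2}, \ref{ldk-3}, \ref{ldk-3IV}, \ref{ldk-3IVlast} one has the identity
$$\varkappa^{(n)}(\lambda,\vec\nu)=\lambda^{1/2l}+h^{(1)}(\lambda,\vec\nu)+\sum_{j=2}^{n}h^{(j)}(\lambda,\vec\nu),$$
with the convention $h^{(1)}:=\varkappa^{(1)}-\lambda^{1/2l}$. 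The existence of $\varkappa_\infty(\lambda,\vec\nu)$ and all three advertised estimates will be reduced to controlling appropriate tails of this series by their first terms.

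I would collect the input bounds: $|h^{(1)}|\leq ck^{-4l+1+(80\mu+6)\delta}$ from \eqref{dk0}; $|h^{(2)}|\leq ck^{-2k^\delta Q^{-1}}k^{-4l+5+48\mu\delta}$ from \eqref{dk0-2}; $|h^{(3)}|\leq k^{-\frac15\beta k^{r_1-\delta}-\beta-2l+1}$ from \eqref{dk0-3}; and, for $j\geq 4$, $|h^{(j)}|\leq k^{-\frac15\beta k^{r_{j-2}-r_{j-3}}-\beta-2l+1}$ from \eqref{dk0-3IVlast}. The decisive feature is that the recursive choice \eqref{indrn} forces $r_{j-2}-r_{j-3}$ to grow super-exponentially in $j$, so the exponents $k^{r_{j-2}-r_{j-3}}$ grow doubly exponentially. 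Consequently the bounds on $|h^{(j)}|$ decrease so rapidly that each term dominates the entire subsequent tail, and one obtains
$$\sum_{j=N+1}^\infty |h^{(j)}(\lambda,\vec\nu)|\leq 2|h^{(N+1)}(\lambda,\vec\nu)| \qquad \text{for all } N\geq 1\text{ and large } k.$$

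This Cauchy property secures the existence of $\varkappa_\infty(\lambda,\vec\nu)$ on $\B_\infty(\lambda)$. The third estimate then follows immediately: for $n\geq 2$,
$$|\varkappa_\infty-\varkappa^{(n)}|\leq \sum_{j=n+1}^\infty|h^{(j)}|\leq 2|h^{(n+1)}|<3k^{-\frac15\beta k^{r_{n-1}-r_{n-2}}},$$
where the factor $k^{-\beta-2l+1}<1$ is absorbed into the numerical constant $3$ (and for $n=2$ we read $r_0:=\delta$ to match the $|h^{(3)}|$ bound). The second estimate is obtained by the same tail comparison starting at $j=2$: $|\varkappa_\infty-\varkappa^{(1)}|\leq 2|h^{(2)}|\leq ck^{-2k^\delta Q^{-1}}k^{-4l+5+48\mu\delta}$. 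The first estimate follows from the triangle inequality $|\varkappa_\infty-\lambda^{1/2l}|\leq |h^{(1)}|+|\varkappa_\infty-\varkappa^{(1)}|$, whose second summand is super-polynomially smaller than $|h^{(1)}|$, hence the bound $ck^{-4l+1+(80\mu+6)\delta}$. The corollary is then a change of variable: with $k=\lambda^{1/2l}$ and $\gamma_5:=(4l-1-(80\mu+6)\delta)/2l>0$, the first estimate reads $|\varkappa_\infty(\lambda,\vec\nu)-\lambda^{1/2l}|=O(\lambda^{-\gamma_5})$, which is exactly \eqref{h}. The only point that needs attention --- rather than a genuine obstacle --- is verifying the doubly-exponential growth of $k^{r_{j-2}-r_{j-3}}$ from \eqref{indrn}, which underwrites the geometric-tail estimate; everything else is triangle-inequality bookkeeping.
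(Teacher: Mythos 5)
Your proposal is correct and follows essentially the paper's own route: the paper derives the corollary from the preceding lemma on $\varkappa_\infty$, whose estimates are obtained exactly by summing the bounds \eqref{dk0}, \eqref{dk0-2}, \eqref{dk0-3}, \eqref{dk0-3IV}, \eqref{dk0-3IVlast} on the increments $h^{(j)}$, with the rapidly growing exponents from \eqref{indrn} making the tail negligible, and then setting $k=\lambda^{1/2l}$ to read off $\gamma_5=(4l-1-(80\mu+6)\delta)/2l$. Your write-up just makes the telescoping and tail-domination bookkeeping explicit.
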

    The lemma easily follows from
  the estimates (\ref{dk0}),  (\ref{dk0-2}), (\ref{dk0-3}), \eqref{dk0-3IV} and (\ref{dk0-3IVlast}).

    Estimates (\ref{dk}), (\ref{dk-2}) (\ref{dk-3}), \eqref{dk-3IV} and (\ref{dk-3IVlast}) justify convergence of the
    series $\sum_{n=1}^{\infty}
    \frac{\partial h_n}{\partial \varphi },$ and hence,
    of the sequence $\frac{\partial \varkappa^{(n)}}{\partial \varphi }.$
    We denote the limit of this sequence by $\frac{\partial \varkappa_{\infty}}{\partial \varphi }.$
    \begin{lemma} The  estimate (\ref{Dec9a}) with $\gamma
_{11}=(4l-1-(120\mu+7)\delta)/2l >0,$ holds for any $\vec \nu \in
    {\cal B}_{\infty}(\lambda)$.
   \end{lemma}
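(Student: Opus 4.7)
The plan is to exploit the telescoping representation
\[
\varkappa_{\infty}(\lambda,\vec\nu)=\varkappa^{(1)}(\lambda,\vec\nu)+\sum_{n=2}^{\infty}h^{(n)}(\lambda,\vec\nu),\qquad \vec\nu\in{\cal B}_{\infty}(\lambda),
\]
and to differentiate term by term. By the definition of $\frac{\partial\varkappa_{\infty}}{\partial\varphi}$ given just before the statement of the lemma, we have
\[
\frac{\partial\varkappa_{\infty}}{\partial\varphi}=\lim_{n\to\infty}\frac{\partial\varkappa^{(n)}}{\partial\varphi}=\frac{\partial\varkappa^{(1)}}{\partial\varphi}+\sum_{n=2}^{\infty}\frac{\partial h^{(n)}}{\partial\varphi},
\]
and the convergence of this series is already justified by the preceding paragraph via the super-exponential decay of the derivative bounds.

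First, I would observe that since $\varkappa^{(1)}(\lambda,\varphi)=k+h^{(1)}(\lambda,\varphi)$ with $k=\lambda^{1/2l}$ independent of $\varphi$, we have $\frac{\partial\varkappa^{(1)}}{\partial\varphi}=\frac{\partial h^{(1)}}{\partial\varphi}$. By \eqref{dk}, this quantity is $O(k^{-4l+1+(120\mu+7)\delta})$. Converting to $\lambda$ using $k=\lambda^{1/(2l)}$,
\[
k^{-(4l-1-(120\mu+7)\delta)}=\lambda^{-(4l-1-(120\mu+7)\delta)/2l}=\lambda^{-\gamma_{11}},
\]
so the $n=1$ term already matches the desired bound precisely. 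Since $\delta<(100\mu)^{-1}$, the exponent $\gamma_{11}=(4l-1-(120\mu+7)\delta)/2l$ is strictly positive.

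Next, I would estimate the tail $\sum_{n=2}^{\infty}\frac{\partial h^{(n)}}{\partial\varphi}$ using \eqref{dk-2}, \eqref{dk-3}, \eqref{dk-3IV}, and \eqref{dk-3IVlast}. The $n=2$ term is bounded by $k^{-2k^{\delta}Q^{-1}}k^{r_1'-4l+5+49\mu\delta}$; the $n$-th term ($n\ge 3$) is bounded by $O_2\!\left(k^{-\frac15\beta k^{r_{n-2}-r_{n-3}}-\beta-2l+1+r_{n-1}'+\delta}\right)$ (with the obvious convention $r_0=\delta$, $r_{-1}=0$). Because of the super-exponential factor $k^{-\frac{\beta}{5}k^{r_{n-2}-r_{n-3}}}$ and the growth scheme $r_n>k^{r_{n-2}}$ imposed in \eqref{indrn}, each of these terms is smaller than any negative power of $k$ for sufficiently large $k$, and in particular all of them together are $o(k^{-(4l-1-(120\mu+7)\delta)})=o(\lambda^{-\gamma_{11}})$.

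Combining the two parts via the triangle inequality,
\[
\left|\frac{\partial\varkappa_{\infty}(\lambda,\vec\nu)}{\partial\varphi}\right|\le\left|\frac{\partial h^{(1)}}{\partial\varphi}\right|+\sum_{n=2}^{\infty}\left|\frac{\partial h^{(n)}}{\partial\varphi}\right|=O(\lambda^{-\gamma_{11}})+o(\lambda^{-\gamma_{11}})=O(\lambda^{-\gamma_{11}}),
\]
which is the required estimate \eqref{Dec9a}. There is no real obstacle here: the bound is dictated entirely by the first step of the procedure, and all later contributions are negligible thanks to the super-exponentially small corrections \eqref{perturbation-3last} and the corresponding derivative estimates. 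The only point that needs a moment of care is the comparison $k^{-(4l-1-(120\mu+7)\delta)}=\lambda^{-\gamma_{11}}$, confirming that the constant $\gamma_{11}$ advertised in the statement is precisely the one produced by Step~I.
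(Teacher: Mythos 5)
Your proof is correct and follows essentially the same route as the paper: the derivative of $\varkappa_{\infty}$ is obtained as the limit of the telescoping series $\frac{\partial h^{(1)}}{\partial\varphi}+\sum_{n\ge2}\frac{\partial h^{(n)}}{\partial\varphi}$, with the Step~I bound \eqref{dk} supplying exactly the exponent $\gamma_{11}=(4l-1-(120\mu+7)\delta)/2l$ after the substitution $k=\lambda^{1/2l}$, and the later terms \eqref{dk-2}, \eqref{dk-3}, \eqref{dk-3IV}, \eqref{dk-3IVlast} being super-exponentially small. This is precisely the argument the paper intends when it says the estimate follows from those derivative bounds, so nothing further is needed.
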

We define ${\cal D}_{\infty}(\lambda )$ by (\ref{Dinfty}). Clearly,
${\cal D}_{\infty}(\lambda )$ is a slightly distorted circle of
radius $k$ with infinite number of holes. We can assign a tangent
vector $\frac{\partial \varkappa_{\infty }}{\partial \varphi }\vec
\nu +\varkappa_{\infty } \vec \mu $, $\vec \mu =(-\sin \varphi ,\cos
\varphi )$ to the curve ${\cal D}_{\infty}(\lambda )$, this tangent
vector being the limit of corresponding tangent vectors for  curves
${\cal D}_{n}(\lambda )$ at points $\k^{(n)}(\lambda ,\vec \nu )$ as
$n\to \infty $.

Next we show that ${\cal D}_{\infty}(\lambda )$ is an isoenergetic
curve for $H$. Namely for every $\k \in {\cal
D}_{\infty}(\lambda )$ there is a generalized eigenfunction $\Psi
_{\infty }(\k ,\x)$ such that $H\Psi _{\infty }=\lambda
\Psi _{\infty }$.

\subsection{Generalized Eigenfunctions of H}

 We show that for
    every  $\k $ in
a set $${\cal G} _{\infty }=\cup _{\lambda
>\lambda _*}{\cal D}_{\infty}(\lambda ),\ \ \lambda _*= k _*^{2l} ,$$ there is a solution
$\Psi _{\infty }(\k , \x)$ of the equation for
eigenfunctions:
    \begin{equation} (-\Delta)^{l}\Psi _{\infty}(\k , \x)+V(\x)\Psi _{\infty }(\k ,
    \x)=\lambda _{\infty}(\k )\Psi _{\infty }(\k , \x),
    \label{6.2.1}
    \end{equation}
which can be represented in the form
    \begin{equation}
    \Psi _{\infty }(\k , \x)=e^{i\langle \k , \x
    \rangle}\Bigl(1+u_{\infty}(\k , \x)\Bigr),\ \ \ \
    \bigl\|u_{\infty}(\k , \x))\bigr\| _{L_{\infty }(\R^2)}=O(|\k |^{-\gamma_1}),
   \label{6.2.1a}
    \end{equation}
where $u_{\infty}(\k , \x)$ is a quasi-periodic
function,
 $\gamma _1=2l-1-45\mu \delta >0$; the eigenvalue $\lambda _{\infty}(\k )$ satisfies the asymptotic
formula:
 \begin{equation}\lambda _{\infty}(\k )=|\k |^{2l}+O(|\k |^{-\gamma _2}), \ \ \ \gamma _2=2l-(80\mu +6)\delta
>0.\label{6.2.4}
\end{equation}
We also show that the set $\cal{G} _{\infty }$ satisfies
(\ref{full}).

 In fact,   by (\ref{6.1}), any $\k \in {\cal D}_{\infty}(\lambda
)$ belongs to the $k^{-\frac15\beta k^{r
_{n-1}-r_{n-2}}}$-neighborhood of ${\cal D}_n(\lambda ),\ \ n\geq3$.
Let us consider spectral projectors $\E^{(n)}$, each of them being
defined in a finite dimensional space of sequences with indices in
$\Omega (r_{n-1})$, $r_0:=\delta $. We extend each of them to the
whole space $\ell^{2}(\Z^4)$ by putting it to be zero into the
orthogonal complement of $\ell^{2}\left(\Omega (r_{n-1})\right)$.
This way they all act in space $\ell^{2}(\Z^4)$.  Applying the
perturbation formulae proved in the previous sections (see
Corollaries \ref{corthm1}, \ref{corthm2}, \ref{corthm3},
\ref{corthm3IV}, \ref{corthm3last}), we obtain the following
inequalities:
    \begin{equation}
    \begin{split}&\bigl\|\E^{(1)}(\k)-{\E}_0(\k)\bigr\|_1<ck^{-\gamma_0},
    \cr &\bigl\|\E^{(2)}(\k)-{\E}^{(1)}(\k)\bigr\|_1<ck^{-k^\delta Q^{-1}+2-\gamma_0},\ \
    \gamma_0:=2l-1-44\mu\delta,\
    \cr &\bigl\|\E^{(n)}(\k)-{\E}^{(n-1)}(\k)\bigr\|_1<
    2k^{-\frac1{10}\beta k^{r_{n-2}-r_{n-3}} -\beta}, \quad n \geq 3,\label{6.2.2}
    \end{split}
    \end{equation}
    \begin{equation}
    \begin{split}&\bigl|\lambda ^{(1)}(\k)-|\k |^{2l}
     \bigr|
     <ck^{-\gamma _2}
    , \ \ \bigl|\lambda ^{(2)}(\k)-\lambda ^{(1)}(\k)
     \bigr|
     <ck^{-2k^\delta Q^{-1}+4-\gamma _2},
     \cr &\bigl|\lambda ^{(n)}(\k )-\lambda ^{(n-1)}(\k )\bigr|<
     2k^{-\frac15\beta k^{r_{n-2}-{r_{n-3}}}-\beta },
     \quad n \geq 3,
    \label{6.2.3}
    \end{split}
    \end{equation}
where  $\lambda ^{(n+1)}(\k )$ is the eigenvalue
corresponding to $\E^{(n+1)}(\k)$, $\
{\E}_0(\k)$ corresponds to $V=0$.

 \begin{remark} \label{R:Dec9} We see from (\ref{6.1}), that any $\k
\in {\cal D}_{\infty}(\lambda )$ belongs to the $k^{-\frac15\beta
k^{r _{n-1}-r_{n-2}}}$-neighborhood of ${\cal D}_n(\lambda ),\
n\geq3$. Applying perturbation formulae for $n$-th step, we easily
obtain that  there is an
 eigenvalue  $\lambda^{(n)}(\k )$ of $H^{(n)}(\k )$
satisfying the estimate $\lambda^{(n)}(\k )=\lambda
+\delta _n$, $\delta _n=o(1)$ as $n\to \infty $, the eigenvalue
$\lambda^{(n)}(\k )$
     being given by a perturbation
series of the type (\ref{eigenvalue-3last}). Hence, for every
$\k \in {\cal D}_{\infty}(\lambda)$ there is a limit:
\begin{equation} \lim _{n\to \infty }\lambda^{(n)}(\k
)=\lambda.\label{6.2}
\end{equation}
\end{remark}
Let $\v^{(n)}$ be a unit vector corresponding to  the projection
${\E}^{(n)}(\k )$, ${\E}^{(n)}(\k)=(\cdot ,\v^{(n)})\v^{(n)}$,
$\v^{(n)}=\{ v^{(n)}_{\s}\}_{\s \in \Z^4}\in\ell^2(\Z^4)$. By
construction, $v^{(n)}_{\s}=0$ when $\s\not \in \Omega (r_{n-1})$.
Let us consider the linear combination of exponents corresponding to
this vector:
$$\Psi _n(\k,\x)=\sum _{\s \in \Omega (r_{n-1})} v^{(n)}_{\s}e^{i\langle\k+\p_{\s},\x\rangle}.$$
\begin{lemma}\label{prelimit} Function $\Psi _n(\k,\x),\ n\geq4,$ satisfies the equation:
$$(-\Delta)^l \Psi _n(\k ,\x)+V(\x)\Psi _n(\k,\x)=
\lambda _n(\k)\Psi _n(\k,\x)+g_n(\k,\x),$$
 the vector $\g _n$ of the Fourier coefficients of $g_n(\k,\x)$ satisfying the estimate:
\begin{equation}\|\g_n\|_{\ell
^1(\Z^4)}<k^{-k^{\frac 12 r_{n-1}}}.
\label{Aug13-2} \end{equation} Coefficients $(\g_n)_{\s}$
can differ from zero only when  $k^{r_{n-1}}<\||\p_\s\||\leq k^{r_{n-1}}+Q$. Function
$g_n(\k,\x)$ obeys the estimate:
\begin{equation}\label{g_n}\|g_n\|_{L_{\infty }(\R^2 )}<k^{-
k^{\frac 12 r_{n-1}}}.\end{equation} \end{lemma}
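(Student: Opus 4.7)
The plan is to read the lemma as a matrix identity in $\ell^2(\Z^4)$ and push it to $\x$-space via Fourier synthesis. By construction $\v^{(n)}$ lies in the range of $P(r_{n-1})$, and $P(r_{n-1})H(\k)P(r_{n-1})\v^{(n)} = \lambda^{(n)}(\k)\v^{(n)}$. Extending $\v^{(n)}$ by zero to all of $\ell^2(\Z^4)$ and applying the full matrix yields
\begin{equation*}
H(\k)\v^{(n)} = \lambda^{(n)}(\k)\v^{(n)} + \g_n, \qquad \g_n := (I-P(r_{n-1}))H(\k)\v^{(n)}.
\end{equation*}
Because $H_0(\k)$ is diagonal and preserves Fourier supports, $\g_n = (I-P(r_{n-1}))V\v^{(n)}$. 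Writing $\Psi_n(\k,\x) = \sum_\s v^{(n)}_\s e^{i\langle\k+\p_\s,\x\rangle}$ and $g_n(\k,\x) = \sum_\s (\g_n)_\s e^{i\langle\k+\p_\s,\x\rangle}$ (with $\lambda_n := \lambda^{(n)}$) converts this into the stated eigenvalue equation, and the obvious inequality $\|g_n\|_{L_\infty(\R^2)} \leq \|\g_n\|_{\ell^1(\Z^4)}$ reduces the $L_\infty$ bound to the Fourier one.

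The support restriction for $\g_n$ is immediate from \eqref{V_q=0}: since $V_{\s-\s'} = 0$ whenever $\||\p_{\s-\s'}\|| > Q$, the sum $(V\v^{(n)})_\s$ can be nonzero only if $\s$ lies within $Q$ of $\Omega(r_{n-1})$. Combined with $\s \notin \Omega(r_{n-1})$, this forces $k^{r_{n-1}} < \||\p_\s\|| \leq k^{r_{n-1}} + Q$, which is also the range claimed in the lemma.

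For the size estimate, fix $\s$ in this range; the sum $(\g_n)_\s = \sum_{\s'} V_{\s-\s'}v^{(n)}_{\s'}$ has $O(1)$ nonzero terms, all with $\||\p_{\s'}\|| \geq k^{r_{n-1}} - Q$, so it is enough to bound $|v^{(n)}_{\s'}|$ for such $\s'$. Writing the rank-one projection as $\E^{(n)}(\k)_{\s'{\bf 0}} = v^{(n)}_{\s'}\overline{v^{(n)}_{\bf 0}}$, and using that iterating \eqref{6.2.2} together with $\|\E^{(1)} - \E_0\|_1 < ck^{-\gamma_0}$ gives $|v^{(n)}_{\bf 0}| \geq 1/\sqrt{2}$, one obtains $|v^{(n)}_{\s'}| \leq 2|\E^{(n)}(\k)_{\s'{\bf 0}}|$. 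Applying Corollary \ref{corthm3last} at step $n$ (the parameters in the decay formula become $r_{n-2}$ and $r_{n-3}$) yields
\begin{equation*}
|\E^{(n)}(\k)_{\s'{\bf 0}}| < k^{-d^{(n)}(\s',{\bf 0})}, \qquad d^{(n)}(\s',{\bf 0}) \geq \tfrac{\beta}{8}(k^{r_{n-1}} - Q)\,k^{-\gamma r_{n-2} - 150l\gamma r_{n-3}}.
\end{equation*}
The scale hierarchy $r_{n-1} > k^{r_{n-3}} \gg r_{n-2}$ forced by \eqref{indrn} guarantees $r_{n-1}/2 > \gamma r_{n-2} + 150l\gamma r_{n-3}$ with enormous room to spare, so $d^{(n)}(\s',{\bf 0}) \gg k^{r_{n-1}/2}$. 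Summing over the polynomially-many admissible $\s$ absorbs into the super-exponentially small factor and produces \eqref{Aug13-2}.

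The only delicate point is bookkeeping: one must invoke Corollary \ref{corthm3last} with the correct index shift $n+1\mapsto n$ so that the relevant scales in $d^{(n)}$ are $r_{n-2}, r_{n-3}$ rather than $r_{n-1}, r_{n-2}$, and then verify that the inequality $r_{n-1}/2 > \gamma r_{n-2} + 150l\gamma r_{n-3}$ is extracted from \eqref{indrn}. Once this is done, the entire argument is a mechanical translation between the operator identity and its $\x$-space realization.
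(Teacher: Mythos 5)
Your proof is correct and follows essentially the same route as the paper: the same exact identity $\g_n=(I-P(r_{n-1}))V\v^{(n)}$, the same localization of the support of $\g_n$ to the $Q$-vicinity of the boundary of $\Omega(r_{n-1})$, and the same key input, namely the decay estimate \eqref{Feb6b-3last} applied with $n$ in place of $n+1$. The only immaterial difference is that you bound the individual components $|v^{(n)}_{\s'}|$ through the rank-one identity $\E^{(n)}_{\s'{\bf 0}}=v^{(n)}_{\s'}\overline{v^{(n)}_{\bf 0}}$ together with $|v^{(n)}_{\bf 0}|\geq 1/\sqrt{2}$, whereas the paper bounds $\|P_{\partial}(r_{n-1})\E^{(n)}\|$ directly and uses $\v^{(n)}=\E^{(n)}\v^{(n)}$; both yield the same super-exponential smallness.
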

\begin{proof} Let $P(r_{n-1})$ be the projection in $\ell
^2(\Z^4)$ on the subspace corresponding to $\Omega (r_{n-1})$.
By  construction, $P(r_{n-1})\v^{(n)}=\v^{(n)}$ and
$$H_0\v^{(n)}+P(r_{n-1})VP(r_{n-1})\v^{(n)}=\lambda _n(\k )\v^{(n)}.$$
Since $V$ is a trigonometric polynomial,
$$(I-P(r_{n-1}))VP(r_{n-1})=(I-P(r_{n-1}))VP_{\partial }(r_{n-1}),$$
where $P_{\partial }(r_{n-1})_{\m \m}=1$ only if $\m$ is in the $Q$-vicinity of the boundary. Using \eqref{Feb6b-3last} with $n$ instead of $n+1$, we obtain:
 $\|P_{\partial
}(r_{n-1}){\E}^{(n)}\|<k^{- k^{r_{n-1}(1-\gamma)}}$ and, hence,
$\|P_{\partial }(r_{n-1}){\v^{(n)}}\|<k^{- k^{r_{n-1}(1-\gamma)}}$.
Therefore, $\|(I-P(r_{n-1}))VP(r_{n-1})\v^{(n)}\|<\|V\|k^{-
k^{r_{n-1}(1-\gamma)} }$. It follows that $H(\k)\v^{(n)}=\lambda
_n(\k)\v^{(n)}+\g_{n}$, where $\|\g_{n}\|_{\ell ^2(\Z^4)}<\|V\|k^{-
k^{ r_{n-1}(1-\gamma)} }$. Note that elements $(\g_n)_{\s}$ are
equal to zero when $ \||\p_\s\||\leq k^{r_{n-1}}$ or $ \||\p_\s\||>
k^{r_{n-1}}+Q$. Therefore, \eqref{Aug13-2} holds. Estimate
\eqref{g_n} follows.\end{proof}
\begin{lemma} Functions $\Psi _n(\k,\x)$ satisfy the inequalities:
\begin{equation}\label{psi1}
    \|\Psi _{1}-\Psi _0\|_{L_{\infty }(\R^2)}<ck^{-\gamma _0+2\delta },\ \ \
    \|(-\Delta)^l \Psi _{1}-(-\Delta)^l \Psi _0\|_{L_{\infty }(\R^2)}<ck^{-\gamma _0+2\delta +2l},\ \ \
    \Psi _0(\x)=e^{i \langle \k ,\x
    \rangle},
    \end{equation}
    \begin{equation}\label{psi2}
    \begin{split}
&\|\Psi _{2}-\Psi _1\|_{L_{\infty
}(\R^2)}<ck^{-k^{\delta}Q^{-1}+2-\gamma _0+2r_1
    },\cr &\|(-\Delta)^l \Psi _{2}-(-\Delta)^l \Psi _1\|_{L_{\infty
}(\R^2)}<ck^{-k^{\delta}Q^{-1}+2-\gamma _0+(2+2l)r_1
    },
    \end{split}
    \end{equation}
    \begin{equation}
    \begin{split}
    &\|\Psi _{n}-\Psi _{n-1}\|_{L_{\infty}(\R^2)}
    < k^{-\frac1{10}\beta k^{r_{n-2}-r_{n-3}}+2r_{n-1} }, \ \cr &\|(-\Delta)^l \Psi _{n}-(-\Delta)^l \Psi _{n-1}\|_{L_{\infty}(\R^2)}
    < (2\pi )^{2l}k^{-\frac1{10}\beta k^{r_{n-2}-r_{n-3}}+(2+2l)r_{n-1} },\ \ n \geq 3. \label{Dec10}
\end{split}
    \end{equation}
    \end{lemma}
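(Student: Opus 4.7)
The plan is to express each difference $\Psi_n-\Psi_{n-1}$ as a trigonometric polynomial whose Fourier coefficients are differences of the unit eigenvectors $\v^{(n)}$ and $\v^{(n-1)}$ (with $\v^{(n-1)}$ extended by zero to $\Omega(r_{n-1})\supset\Omega(r_{n-2})$, and with the convention $\v^{(0)}=\delta_{\s,{\bf 0}}$ when $n=1$), then to bound its $L_\infty$-norm by the $\ell^1$-norm of those coefficients, and finally to trade $\ell^1$ for $\ell^2$ at a polynomial cost in the lattice volume $|\Omega(r_{n-1})|$. The uniform pointwise estimate
$$
|\Psi_n(\k,\x)-\Psi_{n-1}(\k,\x)|\le\sum_{\s\in\Omega(r_{n-1})}|v^{(n)}_\s-v^{(n-1)}_\s|=\|\v^{(n)}-\v^{(n-1)}\|_{\ell^1(\Z^4)}
$$
follows at once since $|e^{i\langle\k+\p_\s,\x\rangle}|=1$; applying $(-\Delta)^l$ termwise multiplies the $\s$-th coefficient by $|\k+\p_\s|^{2l}$, and since $|\p_\s|\le 2\pi k^{r_{n-1}}$ on $\Omega(r_{n-1})$ by \eqref{above} and $|\k|\approx k$, the maximal such factor is at most $(2\pi)^{2l}k^{2l\,r_{n-1}}(1+o(1))$.

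Next, Cauchy--Schwarz combined with the crude volume bound $|\Omega(r_{n-1})|\le(2k^{r_{n-1}}+1)^4$ gives
$$
\|\v^{(n)}-\v^{(n-1)}\|_{\ell^1}\le c\,k^{2r_{n-1}}\|\v^{(n)}-\v^{(n-1)}\|_{\ell^2}.
$$
Since $\E^{(n)}$ and $\E^{(n-1)}$ are rank-one orthogonal projectors onto $\mathrm{span}(\v^{(n)})$ and $\mathrm{span}(\v^{(n-1)})$, a direct computation on the two-dimensional subspace they span shows that $\|\E^{(n)}-\E^{(n-1)}\|_{op}=\sqrt{1-|(\v^{(n)},\v^{(n-1)})|^2}$. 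Adjusting the global phase of $\v^{(n)}$ so that $(\v^{(n)},\v^{(n-1)})\ge0$, one obtains
$$
\|\v^{(n)}-\v^{(n-1)}\|_{\ell^2}^2=2\bigl(1-(\v^{(n)},\v^{(n-1)})\bigr)\le 2\bigl(1-(\v^{(n)},\v^{(n-1)})^2\bigr)=2\|\E^{(n)}-\E^{(n-1)}\|_{op}^2,
$$
so $\|\v^{(n)}-\v^{(n-1)}\|_{\ell^2}\le\sqrt{2}\,\|\E^{(n)}-\E^{(n-1)}\|_1$ (since trace norm dominates operator norm). Combining these steps yields the two universal estimates
$$
\|\Psi_n-\Psi_{n-1}\|_{L_\infty}\le c\,k^{2r_{n-1}}\|\E^{(n)}-\E^{(n-1)}\|_1,\quad\|(-\Delta)^l(\Psi_n-\Psi_{n-1})\|_{L_\infty}\le c\,k^{(2+2l)r_{n-1}}\|\E^{(n)}-\E^{(n-1)}\|_1.
$$

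Substituting into these two universal bounds the three cases of \eqref{6.2.2}---with the convention $r_0=\delta$ in the first case---reads off \eqref{psi1}, \eqref{psi2} and \eqref{Dec10} directly. The main technical ``nuisance''---rather than an obstacle---is purely bookkeeping: the polynomial factors $k^{2r_{n-1}}$ (from volume counting) and $k^{2l\,r_{n-1}}$ (from the polyharmonic weight) are completely dominated by the super-exponential decay $k^{-\frac{1}{10}\beta k^{r_{n-2}-r_{n-3}}-\beta}$ of $\|\E^{(n)}-\E^{(n-1)}\|_1$ guaranteed by the preceding steps, so that the telescoping series $\sum_n(\Psi_n-\Psi_{n-1})$ converges at a super-exponential rate in $L_\infty(\R^2)$ and the same for its $(-\Delta)^l$-image, which is exactly the input needed in the subsequent construction of the generalized eigenfunction $\Psi_\infty$ satisfying $H\Psi_\infty=\lambda_\infty\Psi_\infty$.
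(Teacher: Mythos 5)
Your proof is correct and follows essentially the same route as the paper's: there, too, one passes from the trace-norm bounds \eqref{6.2.2} to $\ell^1$ bounds on $\v^{(n)}-\v^{(n-1)}$ and on $H_0(\v^{(n)}-\v^{(n-1)})$ (picking up the volume factor $k^{2r_{n-1}}$, with the convention $r_0=\delta$), and then \eqref{psi1}--\eqref{Dec10} follow from the $\ell^1\to L_\infty$ bound for the trigonometric sums. The only phrasing to adjust is the multiplier produced by $(-\Delta)^l$: it is $\left(k+2\pi k^{r_{n-1}}\right)^{2l}$, so for $n=1$ (where $k^{r_0}=k^{\delta}\ll k$) it is of order $k^{2l}$ rather than $(2\pi)^{2l}k^{2l\delta}$, which is exactly the exponent $+2l$ appearing in \eqref{psi1}; for $n\geq 2$ your factor $(2\pi)^{2l}k^{2l r_{n-1}}$ is the right one.
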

        \begin{corollary} \label{C:Psi}All functions $\Psi _{n}$, $n=0,1,...,$ obey the estimate $ \|\Psi _{n}\|_{L_{\infty
}(\R^2)}<1+Ck^{-\gamma _0+2\delta }$ uniformly in $n$. \end{corollary}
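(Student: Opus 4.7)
The plan is to bound $\Psi_n - \Psi_{n-1}$ and $(-\Delta)^l(\Psi_n - \Psi_{n-1})$ pointwise by the $\ell^1$ norm of the difference of the coefficient vectors $\v^{(n)}-\v^{(n-1)}$ (after extending $\v^{(n-1)}$ by zero outside $\Omega(r_{n-2})$). Writing
\[
\Psi_n(\k,\x)-\Psi_{n-1}(\k,\x)=\sum_{\s\in\Omega(r_{n-1})}\bigl(v^{(n)}_{\s}-v^{(n-1)}_{\s}\bigr)e^{i\langle\k+\p_\s,\x\rangle},
\]
the triangle inequality gives $\|\Psi_n-\Psi_{n-1}\|_{L_\infty(\R^2)}\leq\|\v^{(n)}-\v^{(n-1)}\|_{\ell^1(\Z^4)}$, and applying $(-\Delta)^l$ term-by-term and using $|\k+\p_\s|\leq k+2\pi k^{r_{n-1}}\leq (2\pi+1)k^{r_{n-1}}$ for $\s\in\Omega(r_{n-1})$, $n\geq 2$, one gets the analogous bound with extra factor $(2\pi+1)^{2l}k^{2l\,r_{n-1}}$.

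Next I would convert $\ell^1$ to $\ell^2$ via Cauchy--Schwarz: since the support of $\v^{(n)}-\v^{(n-1)}$ lies in $\Omega(r_{n-1})$ and $|\Omega(r_{n-1})|\leq(4k^{r_{n-1}})^4$, one has
\[
\|\v^{(n)}-\v^{(n-1)}\|_{\ell^1}\leq \sqrt{|\Omega(r_{n-1})|}\,\|\v^{(n)}-\v^{(n-1)}\|_{\ell^2}\leq 16 k^{2r_{n-1}}\|\v^{(n)}-\v^{(n-1)}\|_{\ell^2}.
\]
Because $\E^{(n)}$ and $\E^{(n-1)}$ are rank-one orthogonal projections with eigenvectors $\v^{(n)},\v^{(n-1)}$, the phases may be chosen so that $\|\v^{(n)}-\v^{(n-1)}\|_{\ell^2}\leq\sqrt{2}\,\|\E^{(n)}-\E^{(n-1)}\|$; using $\|\cdot\|\leq\|\cdot\|_1$ and the trace-class estimates \eqref{6.2.2} (i.e.\ \eqref{perturbation*}, \eqref{perturbation*-2}, \eqref{perturbation*-3}, \eqref{perturbation*-3last}) yields the required bounds: for $n=1$ the factor is $ck^{2\delta}\cdot k^{-\gamma_0}=ck^{-\gamma_0+2\delta}$; for $n=2$ it is $ck^{2r_1}\cdot k^{-k^{\delta}Q^{-1}+2-\gamma_0}$; and for $n\geq 3$,
\[
\|\Psi_n-\Psi_{n-1}\|_{L_\infty}\leq 16k^{2r_{n-1}}\cdot\sqrt{2}\cdot 2k^{-\frac{\beta}{10}k^{r_{n-2}-r_{n-3}}-\beta}\leq k^{-\frac{\beta}{10}k^{r_{n-2}-r_{n-3}}+2r_{n-1}}.
\]
The $(-\Delta)^l$ estimates follow by inserting the extra factor $(2\pi+1)^{2l}k^{2l\,r_{n-1}}$, producing the advertised exponents $(2+2l)r_{n-1}$ (and $2\delta+2l$ respectively $2r_1(1+l)$ for the first two steps, where the degenerate case $n=1$ has all $\p_\s$ with $\||\p_\s\||\leq k^\delta$, so the constant $(2\pi+1)^{2l}k^{2l}$ appears instead).

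The corollary $\|\Psi_n\|_{L_\infty}\leq 1+Ck^{-\gamma_0+2\delta}$ follows by telescoping: $\|\Psi_n\|_{L_\infty}\leq \|\Psi_0\|_{L_\infty}+\sum_{m=1}^n\|\Psi_m-\Psi_{m-1}\|_{L_\infty}$, with $\|\Psi_0\|_{L_\infty}=1$ and the first term in the telescoping series dominating because the exponents $-\frac{\beta}{10}k^{r_{m-2}-r_{m-3}}+2r_{m-1}$ are super-exponentially negative in $m$ (using \eqref{indrn}), so the tail is summable to something of order $k^{-\gamma_0+2\delta}$. The only non-routine point is choosing the relative phases of $\v^{(n)}$ and $\v^{(n-1)}$ so that the inequality $\|\v^{(n)}-\v^{(n-1)}\|_{\ell^2}\lesssim\|\E^{(n)}-\E^{(n-1)}\|$ is valid; this is done by decomposing $\v^{(n)}=\alpha\v^{(n-1)}+\v^\perp$ with $\v^\perp\perp\v^{(n-1)}$ and fixing $\alpha>0$ real, which forces $|\alpha-1|\leq\|\v^\perp\|^2/2$. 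No step presents a genuine obstacle; it is simply a matter of combining the previously established trace-norm estimates with the trivial support-size bound $|\Omega(r_{n-1})|\leq 16k^{4r_{n-1}}$.
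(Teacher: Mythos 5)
Your proposal is correct and follows essentially the same route as the paper: the paper's own proof of the preceding lemma obtains the $\ell^1$ bounds \eqref{Cauchy-1} exactly by combining the projection estimates \eqref{6.2.2} with the fact that the coefficient vectors are supported in $\Omega(r_{n-1})$ (which is where the extra factor $k^{2r_{n-1}}$ in the exponent comes from), and the corollary is then the same telescoping $\Psi_n=\Psi_0+\sum_{m\le n}(\Psi_m-\Psi_{m-1})$ with the first correction dominating. You merely make explicit the intermediate steps (phase normalization of the unit eigenvectors so that $\|\v^{(n)}-\v^{(n-1)}\|_{\ell^2}\le\sqrt{2}\,\|\E^{(n)}-\E^{(n-1)}\|$, then Cauchy--Schwarz over the support), which is exactly what the paper leaves implicit.
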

    \begin{proof} Using \eqref{6.2.2} and considering that
    $v^{(n)}_{\s}$ are equal to zero when $\||\p_\s \||>k^{r_{n-1}}$, we obtain
    \begin{equation}
    \begin{split}&\|\v^{(1)}-\v^{(0)}\|_{\ell ^{1}(\Z^4)}<ck^{-\gamma _0+2\delta },\
    \ \|\v^{(2)}-\v^{(1)}\|_{\ell ^{1}(\Z^4)}<ck^{-k^{\delta}Q^{-1}+2-\gamma _0+2r_1
    },\cr &
    \|\v^{(n)}-\v^{(n-1)}\|_{\ell ^{1}(\Z^4)}<3k^{-\frac1{10}\beta k^{r_{n-2}-r_{n-3}}+2r_{n-1} -\beta},\ \ n\geq3.
    \end{split}\label{Cauchy-1} \end{equation}
    \begin{equation}
    \begin{split}&\|H_0(\v^{(1)}-\v^{(0)})\|_{\ell ^{1}(\Z^4)}<ck^{-\gamma _0+2\delta +2l},\ \
    \|H_0(\v^{(2)}-\v^{(1)})\|_{\ell ^{1}(\Z^4)}<ck^{-k^{\delta}Q^{-1}+2-\gamma _0+(2+2l)r_1
    },\cr &\|H_0(\v^{(n)}-\v^{(n-1)})\|_{\ell ^{1}(\Z^4)}<(2\pi )^{2l}k^{-\frac1{10}\beta k^{r_{n-2}-r_{n-3}}+(2+2l)r_{n-1} -\beta},\ \
    n\geq3.\label{Cauchy-2}\end{split} \end{equation}
Now \eqref{psi1} -- \eqref{Dec10} easily follow. \end{proof}

\begin{theorem} \label{T:Dec10}
For every  $\lambda >k_*^{2l}$ and $\k \in {\cal
D}_{\infty}(\lambda)$ the sequence of functions $\Psi _n(\k,\x)$
converges  in $L_{\infty}(\R^2)$ and $W_{2,loc}^{2l}(\R^2)$. The
limit function $\Psi _{\infty }(\k,\x):=\lim_{n\to \infty }
    \Psi _n(\k,\x)$, is a quasi-periodic function:
  \begin{equation} \label{quasi-periodic}
  \Psi _{\infty }(\k,\x)=\sum _{\s \in \Z^4}  (v_{\infty})_{\s}e^{i\langle\k +\p_\s,\x\rangle}, \end{equation}
  where $\v_{\infty }=\{(v_\infty)_\s\}_{\s\in\Z^4}\in \ell^1(\Z^4)$ and $\|\v_{\infty }\|_{\ell^2 (\Z^4)}=1$.
  The function $\Psi _{\infty }(\k,\x)$   satisfies the equation
    \begin{equation}\label{6.7}
     (-\Delta)^{l}\Psi _{\infty }(\k, \x)+V(\x)\Psi _{\infty }(\k,
    \x)= \lambda \Psi _{\infty }(\k, \x).
    \end{equation}
 It can be represented in the form
   \begin{equation}\label{6.4}
    \Psi _{\infty }(\k,\x)=e^{i\langle \k, \x
    \rangle}\bigl(1+u_{\infty}(\k, \x)\bigr),
    \end{equation}
where $u_{\infty}(\k, \x)$ is a quasi-periodic
function:
   \begin{equation}\label{6.5}
    u_{\infty}(\k, \x)=\sum_{n=1}^{\infty} u_n(\k,
    \x),
    \end{equation}
 \begin{equation}u_n(\k, \x)=\sum _{\s \in \Omega (r_{n-1})}
 (v^{(n)}_{\s}-v^{(n-1)}_{\s})e^{i\langle\p_\s,\x\rangle},\ \ \  r_0:=\delta ,\label{u_n} \end{equation}
 functions $u_n$ satisfying the estimates:
\begin{equation}
    \|u_1\|_{L_{\infty}(\R^2)} <ck^{-\gamma_0+2\delta }, \ \
    \|u_2\|_{L_{\infty}(\R^2)} <ck^{-k^{\delta}Q^{-1}+2-\gamma _0+2r_1
    },\label{6.6a}
    \end{equation}
    \begin{equation}\label{6.6}
    \|{u}_n\|_{L_{\infty}(\R^2)} <k^{-\frac1{10}\beta k^{r_{n-2}-r_{n-3}}+2r_{n-1} }, \ \ \ n\geq
    3 .\end{equation}
\end{theorem}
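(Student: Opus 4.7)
The plan is to read off $L_\infty$ and $W^{2l}_{2,loc}$ convergence directly from the Cauchy estimates \eqref{psi1}--\eqref{Dec10}, then pass to the limit in Lemma~\ref{prelimit} to get the eigenfunction equation, and finally extract the representation \eqref{6.4}--\eqref{6.6} by telescoping on the coefficient side.

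First, because the sequences $r_n$ grow tower-exponentially (see \eqref{indrn}), each of the upper bounds $k^{-\gamma_0+2\delta}$, $k^{-k^\delta Q^{-1}+2-\gamma_0+2r_1}$, and $k^{-\frac{1}{10}\beta k^{r_{n-2}-r_{n-3}}+2r_{n-1}}$ from \eqref{psi1}--\eqref{Dec10} is summable in $n$, and in fact the tail $\sum_{n\geq N}$ shrinks super-exponentially. The same holds for the analogous bounds on $(-\Delta)^l(\Psi_n-\Psi_{n-1})$. Hence $\{\Psi_n\}$ is Cauchy in $L_\infty(\R^2)$ and $\{(-\Delta)^l \Psi_n\}$ is Cauchy in $L_\infty(\R^2)\subset L^2_{loc}(\R^2)$, so $\Psi_n\to \Psi_\infty$ in $W^{2l}_{2,loc}(\R^2)$. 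Since the partial sums $\Psi_0+\sum_{n=1}^{N}(\Psi_n-\Psi_{n-1})=\Psi_N$ recover the expansion \eqref{6.5}--\eqref{u_n} with $u_j$ defined from the Fourier coefficients of $\v^{(j)}-\v^{(j-1)}$, the representation \eqref{6.4} is immediate; the bounds \eqref{6.6a}--\eqref{6.6} are just \eqref{Cauchy-1} rewritten on the $\x$-side, using that $\v^{(j)}-\v^{(j-1)}$ is supported on $\Omega(r_{j-1})$ and that the sum of exponentials is bounded by the $\ell^1$-norm of the coefficient vector.

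Next, quasi-periodicity of $u_\infty$ follows because each $u_n(\k,\x)$ in \eqref{u_n} is a finite trigonometric polynomial in exponentials $e^{i\langle \p_\s,\x\rangle}$ with $\s\in\Z^4$, and the $\ell^1$-Cauchy estimate \eqref{Cauchy-1} for $\v^{(n)}-\v^{(n-1)}$ is summable; hence $\v_\infty:=\lim_n \v^{(n)}$ exists in $\ell^1(\Z^4)$, giving the absolutely convergent Fourier--Bohr series \eqref{quasi-periodic}. Normalization $\|\v_\infty\|_{\ell^2}=1$ follows from $\|\v^{(n)}\|_{\ell^2}=1$ and $\ell^2$-convergence (which is implied by $\ell^1$-convergence on $\Z^4$).

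For the eigenvalue equation, Lemma~\ref{prelimit} gives
\begin{equation*}
(-\Delta)^l\Psi_n+V\Psi_n=\lambda^{(n)}(\k)\Psi_n+g_n,\qquad \|g_n\|_{L_\infty(\R^2)}<k^{-k^{r_{n-1}/2}}.
\end{equation*}
By Remark~\ref{R:Dec9}, $\lambda^{(n)}(\k)\to\lambda$ for $\k\in{\cal D}_\infty(\lambda)$; combined with $\Psi_n\to\Psi_\infty$ in $L_\infty$ and Corollary~\ref{C:Psi}, the right-hand side converges to $\lambda\Psi_\infty$ in $L_\infty$. On the left, $V\Psi_n\to V\Psi_\infty$ in $L_\infty$ since $V$ is bounded, and $(-\Delta)^l\Psi_n\to (-\Delta)^l\Psi_\infty$ in $L^2_{loc}$ by the $W^{2l}_{2,loc}$-convergence already established. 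Passing to the limit yields \eqref{6.7} as an $L^2_{loc}$-identity; since the right-hand side is continuous (in fact quasi-periodic), $(-\Delta)^l\Psi_\infty$ is continuous and the equation holds pointwise.

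The only nontrivial point is that the tower-exponential gain in the Cauchy bounds \eqref{Dec10} decisively beats the polynomial loss $k^{(2+2l)r_{n-1}}$ coming from differentiating $2l$ times the Fourier series on $\Omega(r_{n-1})$, so that the $W^{2l}_{2,loc}$-convergence is not merely formal; this is precisely why the multiscale parameters were chosen to satisfy \eqref{indrn}, and checking this summability is the one genuine estimate one should verify before invoking the telescoping argument above.
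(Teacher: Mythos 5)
Your proposal is correct and follows essentially the same route as the paper: deduce $\ell^1$-convergence of $\v^{(n)}$ from \eqref{Cauchy-1}--\eqref{Cauchy-2} (hence $\|\v_\infty\|_{\ell^2}=1$ and quasi-periodicity), get $L_\infty$ and $W_{2,loc}^{2l}$ convergence of $\Psi_n$ from \eqref{psi1}--\eqref{Dec10}, pass to the limit in Lemma~\ref{prelimit} using $\lambda^{(n)}(\k)\to\lambda$ to obtain \eqref{6.7}, and telescope $\Psi_\infty=\Psi_0+\sum_n(\Psi_n-\Psi_{n-1})$ to read off \eqref{6.4}--\eqref{6.6}. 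The summability check you flag at the end is already built into the stated bounds \eqref{Cauchy-2} and \eqref{Dec10}, so no further verification is needed.
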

\begin{corollary} Function $u_{\infty}(\k, \x)$ obeys the
 estimate (\ref{6.2.1a}).
\end{corollary}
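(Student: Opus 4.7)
The plan is to unwind the series representation \eqref{6.5} of $u_\infty$ proved in Theorem \ref{T:Dec10} and show that the first term $u_1$ already controls the claimed rate $k^{-\gamma_1}$, while every subsequent term is negligible. Concretely, I would start from
$$\|u_\infty\|_{L_\infty(\R^2)} \leq \sum_{n=1}^\infty \|u_n\|_{L_\infty(\R^2)}$$
and plug in the bounds \eqref{6.6a}--\eqref{6.6}. The leading contribution is
$$\|u_1\|_{L_\infty(\R^2)} < ck^{-\gamma_0+2\delta}, \qquad \gamma_0 = 2l-1-44\mu\delta.$$

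The arithmetic check is the first key step: since $\mu\geq 2$, we have $2\delta\leq\mu\delta$, so
$$\gamma_0 - 2\delta \;=\; 2l - 1 - 44\mu\delta - 2\delta \;\geq\; 2l - 1 - 45\mu\delta \;=\; \gamma_1,$$
which gives $\|u_1\|_{L_\infty}< ck^{-\gamma_1}$. This is exactly the exponent required by \eqref{6.2.1a}, so nothing stronger is needed from the remaining terms.

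Next I would verify that $\sum_{n\geq 2}\|u_n\|_{L_\infty}$ is absorbed into the $u_1$ bound. For $n=2$, the estimate $\|u_2\|_{L_\infty}<ck^{-k^\delta Q^{-1}+2-\gamma_0+2r_1}$ carries the super-polynomially small factor $k^{-k^\delta Q^{-1}}$; since $r_1<k^{\delta/8}$ by \eqref{Aug13-1} and $Q$ is fixed, the exponent is more negative than $-\gamma_1$ for all $k>k_*$. For $n\geq 3$, the bound $\|u_n\|_{L_\infty}<k^{-\frac{1}{10}\beta k^{r_{n-2}-r_{n-3}}+2r_{n-1}}$ has doubly-exponential decay in $n$ (the inductive choice \eqref{indrn} ensures $r_{n-1}\ll k^{r_{n-2}-r_{n-3}}$), so the tail $\sum_{n\geq 3}\|u_n\|_{L_\infty}$ is bounded by, say, $k^{-k^{r_1/2}}$, which is negligible compared to $k^{-\gamma_1}$.

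Combining these three observations yields $\|u_\infty\|_{L_\infty(\R^2)} \leq 2ck^{-\gamma_1}$ for all sufficiently large $k$, which is exactly \eqref{6.2.1a}. There is no substantive obstacle here — the corollary is essentially bookkeeping on the telescoping estimates from Theorem \ref{T:Dec10}; the only point requiring the smallest care is matching the constant $\gamma_1 = \gamma_0 - \mu\delta$ to the first-term bound $k^{-\gamma_0+2\delta}$, for which the irrationality-measure inequality $\mu\geq 2$ is precisely what is used.
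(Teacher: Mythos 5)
Your proposal is correct and is exactly the argument the paper intends: the corollary follows by summing the series \eqref{6.5} using the bounds \eqref{6.6a}--\eqref{6.6}, with the first term dominating and the inequality $\mu\geq 2$ converting the exponent $-\gamma_0+2\delta$ into $-\gamma_1$. The paper leaves this bookkeeping unwritten, so your proof matches its (implicit) route.
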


\begin{proof} Using \eqref{Cauchy-1},\eqref{Cauchy-2}, we obtain that the sequence $\v^{(n)}$ has the limit in $\ell ^1(\Z^4)$. We denote this limit by $\v_{\infty }$. Since, vectors $\v^{(n)}$ are normalized in $\ell^ {2}(\Z ^4)$,
\begin{equation}
\|\v_{\infty }\|_{\ell ^2(\Z^4)}=1.\label{norm} \end{equation} By
\eqref{Dec10}, we obtain that $\Psi _n(\k,\x)$ is a Cauchy sequence
in $L_{\infty }(\R^2)$ and $W_{2,loc}^{2l}(\R^2)$. Let $\Psi _{\infty }(\k
,\x)=\lim_{n \to \infty}\Psi _n(\k, \x).$ This limit is
defined pointwise uniformly in $\x$ and in $W_{2,loc}^{2l}(\R^2)$.
Noting also that $\lim \lambda _n(\k )=\lambda $, and
taking into account Lemma~\ref{prelimit} we obtain that \eqref{6.7}
holds.

 Let us show that $\Psi _{\infty }$ is a
quasi-periodic function. Obviously,
    $$ \Psi _{\infty } =\Psi _0+\sum_{n=1}^{\infty}(\Psi_{n}-\Psi
    _{n-1}),
$$ the series converging in $L_{\infty}(\R^2)$ by (\ref{Dec10}).
Introducing $u_{n}:=e^{-i \langle \k ,\x\rangle}(\Psi_{n}-\Psi
_{n-1}),$ we arrive at (\ref{6.4}), (\ref{6.5}). Note that
 $u_n$ has a form \eqref{u_n} Estimates (\ref{6.6a}), (\ref{6.6})  follow from (\ref{psi1}), (\ref{Dec10}).
 \end{proof}

\begin{theorem} Formulae (\ref{6.2.1}), (\ref{6.2.1a}) and (\ref{6.2.4}) hold for every $\k \in \cal{G} _{\infty }$.
The set $\cal{G} _{\infty }$ is Lebesgue measurable and satisfies
  (\ref{full})  with $\gamma _3=37\mu\delta $.
\end{theorem}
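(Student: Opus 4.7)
The plan is to derive \eqref{6.2.1}, \eqref{6.2.1a}, and \eqref{6.2.4} directly from Theorem \ref{T:Dec10}, and then to treat measurability and the estimate \eqref{full} by a polar-type change of variables together with Lemma \ref{L:Dec9}.

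First I would observe that by construction ${\cal G}_\infty=\cup_{\lambda>\lambda_*}{\cal D}_\infty(\lambda)$, and that the map
$\Phi:(\lambda,\varphi)\mapsto \varkappa_\infty(\lambda,\varphi)\vec\nu(\varphi)$
is a bijection from $\{(\lambda,\varphi):\lambda>\lambda_*,\ \varphi\in{\cal B}_\infty(\lambda)\}$ onto ${\cal G}_\infty$. Monotonicity in $\lambda$ follows by passing to the limit in \eqref{dk1}; distinct $\varphi$ plainly give distinct rays. Hence each $\k\in{\cal G}_\infty$ uniquely determines $\lambda$, and I define $\lambda_\infty(\k):=\lambda$. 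Theorem \ref{T:Dec10} then immediately produces $\Psi_\infty$ satisfying \eqref{6.7}, i.e.\ \eqref{6.2.1} with eigenvalue $\lambda_\infty(\k)$, and gives the representation \eqref{6.2.1a} with the decay $\|u_\infty\|_{L_\infty(\R^2)}=O(k^{-\gamma_1})$ (by summing \eqref{6.6a}, \eqref{6.6}). Formula \eqref{6.2.4} follows from Corollary \ref{Dec18}: since $|\k|=\varkappa_\infty(\lambda,\vec\nu)=\lambda^{1/2l}+O(\lambda^{-\gamma_5})$, inverting gives $\lambda_\infty(\k)=|\k|^{2l}+O(|\k|^{-\gamma_2})$ with $\gamma_2=2l-(80\mu+6)\delta$.

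For Lebesgue measurability I would proceed step by step. Each ${\cal B}_n(\lambda)$ is a finite union of open arcs of $S_1$ (the complements of finitely many discs removed at steps up to $n$), and $\varkappa^{(n)}(\lambda,\varphi)$ is jointly continuous in $(\lambda,\varphi)$ on its domain. Consequently the graph $\{(\lambda,\varphi):\varphi\in{\cal B}_n(\lambda)\}$ is Borel, and its image ${\cal G}_n$ under the continuous map $\Phi^{(n)}$ is Borel (being a countable union of compact sets on each annulus $\lambda\in[\lambda_*,R^{2l}]$). Hence ${\cal G}_\infty=\cap_n{\cal G}_n$ is Borel measurable.

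For the measure estimate I would apply the change of variables $\k=\varkappa_\infty(\lambda,\varphi)\vec\nu(\varphi)$. A direct computation gives the Jacobian $|J(\lambda,\varphi)|=\varkappa_\infty\,\partial_\lambda\varkappa_\infty$, and by the limit of \eqref{dk1} this equals $\tfrac{1}{2l}\varkappa_\infty^{2-2l}(1+o(1))=\tfrac{1}{2l}\lambda^{1/l-1}(1+o(1))$. Comparing with the trivial parametrization $|{\bf B}_R|=\int_0^{R^{2l}}\!\!\int_0^{2\pi}\tfrac{1}{2l}\lambda^{1/l-1}\,d\varphi\,d\lambda$, the complement is controlled by
\begin{equation*}
|{\bf B}_R\setminus{\cal G}_\infty|\le \int_{\lambda_*}^{R^{2l}(1+o(1))}\bigl[2\pi-L({\cal B}_\infty(\lambda))\bigr]\cdot\tfrac{1}{2l}\lambda^{1/l-1}(1+o(1))\,d\lambda+O(1).
\end{equation*}
By Lemma \ref{L:Dec9}, $2\pi-L({\cal B}_\infty(\lambda))=O(\lambda^{-\gamma_4/2l})$ with $\gamma_4=37\mu\delta$, so the integral is $O(R^{2-\gamma_4})$, yielding \eqref{full} with $\gamma_3=\gamma_4=37\mu\delta$.

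The main technical obstacle will be verifying that $\Phi$ is a global bijection (not merely a local diffeomorphism) across all $\lambda$, and checking that the small region $|\k|<\lambda_*^{1/2l}$ contributes only a bounded additive $O(1)$ which is absorbed into the error; this requires care because ${\cal G}_\infty$ is a Cantor-type set with infinitely many holes appearing at every scale, and one must ensure that the parametrization-by-$(\lambda,\varphi)$ does not overlap between different $\lambda$'s — a fact which ultimately reduces to the strict positivity $\partial_\lambda\varkappa_\infty>0$ inherited from \eqref{dk1}.
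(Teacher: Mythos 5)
Your treatment of \eqref{6.2.1}, \eqref{6.2.1a} and \eqref{6.2.4} coincides with the paper's: both invoke Theorem \ref{T:Dec10} for the eigenfunction and equation, and Corollary \ref{Dec18} (i.e.\ \eqref{h}) with $\varkappa_\infty=|\k|$ for the eigenvalue asymptotics. The divergence, and the problem, is in your proof of \eqref{full}. You propose a direct change of variables $\k=\varkappa_\infty(\lambda,\varphi)\vec\nu(\varphi)$ on the limit set, with Jacobian $\varkappa_\infty\,\partial_\lambda\varkappa_\infty$ obtained "by the limit of \eqref{dk1}". But $\partial_\lambda\varkappa_\infty$ is never established in the paper and is not even well defined pointwise in the naive sense: for a fixed $\varphi\in{\cal B}_\infty(\lambda_0)$ the set of $\lambda$ with $\varphi\in{\cal B}_\infty(\lambda)$ need not contain any interval (the holes move with $\lambda$, and at step $n$ the construction is only stable in an $\varepsilon_0^{(n)}$-neighborhood of $\lambda_0$, which shrinks super-exponentially as $n\to\infty$). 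Consequently the area/coarea formula you want to apply to $\Phi$ on the Cantor-type parameter set $\{(\lambda,\varphi):\varphi\in{\cal B}_\infty(\lambda)\}$ — which needs Lipschitz control and an a.e.\ Jacobian for the limit map, plus global injectivity — is exactly the step you flag as "the main technical obstacle" and it is not a routine verification; as written the argument has a genuine gap.

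The paper circumvents this by never differentiating or changing variables at $n=\infty$. It forms the open "distorted rings" $U_n(\lambda_0)=\cup_{|\lambda-\lambda_0|<\varepsilon_0^{(n+1)}}{\cal D}_n(\lambda)$ (choosing $\omega^{(n)}(\lambda)=\omega^{(n)}(\lambda_0)$ on that short $\lambda$-interval, which is legitimate by the stability in Theorem \ref{Thm3last}), computes $|U_n(\lambda_0)|=l^{-1}\varepsilon_0^{(n)}k^{-2l+1}2\pi k\,(1+O(k^{-37\mu\delta}))$ using only the finite-$n$ data (finitely many holes, smooth $\varkappa^{(n)}$), shows $U_{n+1}\subset U_n$ via \eqref{dk1} and \eqref{dk0-3IVlast}, hence ${\cal G}_{n+1}\subset{\cal G}_n$ and $|{\cal G}_\infty\cap{\bf B}_R|=\lim_n|{\cal G}_n\cap{\bf B}_R|$, with $|{\cal G}_n\cap{\bf B}_R|=|{\bf B}_R|(1+O(R^{-37\mu\delta}))$ uniformly in $n$. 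Your computation becomes correct, and close in spirit to the paper, if you carry it out at finite $n$ — where ${\cal B}_n(\lambda)$ is a finite union of arcs, $\varkappa^{(n)}$ is smooth in $(\lambda,\varphi)$ locally, and the Jacobian argument is unproblematic — using \eqref{Bn} in place of Lemma \ref{L:Dec9}, and then pass to the limit by the monotonicity ${\cal G}_{n+1}\subset{\cal G}_n$, which you currently do not use but cannot avoid.
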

\begin{proof}
By Theorem~\ref{T:Dec10}, (\ref{6.2.1}), (\ref{6.2.1a}) hold, where
$\lambda _{\infty}(\k )=\lambda $ for $\k \in {\cal
D}_{\infty}(\lambda )$. Using (\ref{h}), which is proven in
Corollary~\ref{Dec18}, with $\varkappa_{\infty }=|\k |$, we easily
obtain (\ref{6.2.4}). It remains to prove (\ref{full}). Let us
consider a small region $U_n(\lambda _0)$ around an isoenergetic
surface ${\cal D}_n(\lambda _0)$, $\lambda _0>k_*^{2l}$. Namely,
$U_n(\lambda _0)=\cup _{|\lambda -\lambda _0|<\varepsilon
_0^{(n+1)}} {\cal D}_{n}(\lambda )$, $k=\lambda _0^{1/2l}$. By
Theorem \ref{Thm3last} the construction of the $n$-th non-resonant
set is stable in $\varepsilon _0^{(n)}$ -neighborhood of $\lambda
_0$. Therefore, in fact, we can (and for the sake of convenience
will) assume that the sets $\omega^{(n)}(\lambda)$ are chosen to be
equal to $\omega^{(n)}(\lambda_0)$ for
$|\lambda-\lambda_0|<\varepsilon _0^{(n)}$ . Thus, $U_n(\lambda _0)$
is an open set (a distorted ring with holes) and $|U_n(\lambda
_0)|=l^{-1}\varepsilon _0^{(n)} k^{-2l+1}2\pi k
\left(1+O(k^{-37\mu\delta})\right)$. It easily follows from
\eqref{dk1} and \eqref{dk0-3IVlast} that $U_{n+1}\subset U_n$.
Definition of ${\cal D}_{\infty }(\lambda _0)$ yield: ${\cal
D}_{\infty }(\lambda _0)=\cap _{n=1}^{\infty }U_n(\lambda _0)$.
Hence, ${\cal G}_{\infty }=\cap _{n=1}^{\infty }{\cal G}_n$, where
 \begin{equation}{\cal G}_n=\cup _{\lambda
>\lambda _*}{\cal D}_n(\lambda ), \ \ \lambda _*=k_*^{2l}.\label{Gn} \end{equation} Considering that $U_{n+1}\subset U_n$
for every $\lambda _0>\lambda _*$, we obtain  ${\cal G}_{n+1}\subset
{\cal G}_n$. Hence, $\left|{\cal G} _{\infty }\cap
        \bf B_R\right|=\lim _{n\to \infty }\left|{\cal G} _{n}\cap
        \bf B_R\right|$. Calculating the volume of the region
        $\cup_{\lambda_*<\lambda<R^{2l}}U_{n}(\lambda)$, we easily conclude $\left|{\cal G} _{n}\cap
        \bf B_R\right|=|{\bf B_R}|\left(1+O(R^{-37\mu\delta})\right)$ uniformly in $n$. Thus, we have
        obtained (\ref{full}) with $\gamma _3=37\mu\delta $.
\end{proof}

\begin{theorem}[Bethe-Sommerfeld Conjecture]
The spectrum of  operator $H$ contains a semi-axis.
\end{theorem}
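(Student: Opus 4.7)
The plan is to show that every $\lambda>\lambda_*:=k_*^{2l}$ lies in the spectrum of $H$ by producing a Weyl sequence in $L_2(\R^2)$. The key input is what has already been established: for every $\lambda>\lambda_*$ the limit isoenergetic set ${\cal D}_\infty(\lambda)$ is non--empty (in fact, by Lemma \ref{L:Dec9}, it is a Cantor--type curve of length $2\pi+O(\lambda^{-\gamma_4/2l})$), and for each $\k\in{\cal D}_\infty(\lambda)$ Theorem \ref{T:Dec10} provides a bounded quasi--periodic generalized eigenfunction
\[
\Psi_\infty(\k,\x)=e^{i\langle\k,\x\rangle}(1+u_\infty(\k,\x)),\qquad (H-\lambda)\Psi_\infty=0,
\]
with $u_\infty$ having an absolutely convergent Fourier series $\sum_{\s\in\Z^4}(v_\infty)_\s e^{i\langle \p_\s,\x\rangle}$, $\v_\infty\in\ell^1(\Z^4)$. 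This is the only place where the whole multiscale construction is used; the remainder of the argument is a standard localization.

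Fix $\lambda>\lambda_*$, choose $\k\in{\cal D}_\infty(\lambda)$ and set $\Psi:=\Psi_\infty(\k,\cdot)$. Because $\v_\infty\in\ell^1(\Z^4)$, differentiating term--by--term shows that $\partial^\alpha\Psi\in L_\infty(\R^2)$ for every multi--index $\alpha$ with $|\alpha|\le 2l$, with
\[
\|\partial^\alpha\Psi\|_{L_\infty(\R^2)}\le C_\alpha(1+|\k|)^{|\alpha|},
\]
uniformly in $\x$. Let $\chi\in C_c^\infty(\R^2)$ satisfy $0\le\chi\le 1$, $\chi(\x)=1$ for $|\x|\le 1$, $\chi(\x)=0$ for $|\x|\ge 2$, and set $\chi_R(\x):=\chi(\x/R)$, so that $\|\partial^\beta\chi_R\|_{L_\infty}\le C_\beta R^{-|\beta|}$ and $\mathrm{supp}\,\nabla\chi_R\subset\{R\le|\x|\le 2R\}$. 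Define the trial functions $\Psi_R:=\chi_R\Psi\in L_2(\R^2)$.

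Since $V$ commutes with $\chi_R$ and $\chi_R(H-\lambda)\Psi=0$, we have
\[
(H-\lambda)\Psi_R=[(-\Delta)^l,\chi_R]\Psi=\sum_{|\alpha|+|\beta|=2l,\,|\beta|\ge 1}c_{\alpha\beta}\,(\partial^\beta\chi_R)(\partial^\alpha\Psi).
\]
Each term is supported on the annulus $\{R\le|\x|\le 2R\}$ of area $O(R^2)$ and has pointwise size $O(R^{-|\beta|}(1+|\k|)^{|\alpha|})\le O(R^{-1}(1+|\k|)^{2l-1})$. Consequently
\[
\|(H-\lambda)\Psi_R\|_{L_2(\R^2)}^2\le C(1+|\k|)^{2(2l-1)}R^{-2}\cdot R^2=C(1+|\k|)^{2(2l-1)}.
\]
On the other hand, by \eqref{6.2.1a}, $|\Psi|^2=1+O(|\k|^{-\gamma_1})\ge 1/2$ uniformly in $\x$ for $|\k|$ large enough (which we may assume since $\lambda>\lambda_*$ with $k_*$ large). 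Therefore
\[
\|\Psi_R\|_{L_2(\R^2)}^2\ge \int_{|\x|\le R}|\Psi|^2\,d\x\ge \tfrac12\pi R^2.
\]
Combining the two bounds,
\[
\frac{\|(H-\lambda)\Psi_R\|_{L_2}}{\|\Psi_R\|_{L_2}}\le C(1+|\k|)^{2l-1}R^{-1}\longrightarrow 0\quad\text{as }R\to\infty,
\]
and $\Psi_R/\|\Psi_R\|_{L_2}$ is the desired Weyl sequence for $H$ at energy $\lambda$. Since $H$ is self--adjoint, Weyl's criterion yields $\lambda\in\sigma(H)$. As $\lambda$ ranges over $(\lambda_*,\infty)$ and the spectrum is closed, $[\lambda_*,\infty)\subset\sigma(H)$, proving the (generalized) Bethe--Sommerfeld conjecture.

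The nontrivial content of the theorem is already packed into the nonemptiness of ${\cal D}_\infty(\lambda)$ and the existence of the bounded quasi--periodic solution $\Psi_\infty(\k,\x)$ with controlled derivatives (the latter is a direct consequence of the $\ell^1$--convergence of $\v^{(n)}$ established in the inductive scheme). Once these are in hand, the cutoff/Weyl--sequence argument above is the only remaining step, and the only place one must be slightly careful is to ensure that the $\k$--growth of the derivatives of $\Psi$ is beaten by the area/commutator gain $R^{-1}$, which it is.
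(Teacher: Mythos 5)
Your proof is correct and follows essentially the same route as the paper: the paper's own proof is the one-line observation that the existence of the bounded generalized eigenfunction $\Psi_\infty(\k,\x)$ of \eqref{6.7} for every sufficiently large $\lambda$ immediately puts $\lambda$ in the spectrum. You have simply made explicit the standard cutoff/Weyl-sequence argument (using the $\ell^1$ control of $\v_\infty$ and the commutator $[(-\Delta)^l,\chi_R]$) that the paper leaves implicit.
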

\begin{proof}
    The theorem immediately follows from the fact that the equation \eqref{6.7} has a bounded solution $\Psi _{\infty }(\k ,\x)$ for every  sufficiently large $\lambda $.
\end{proof}

\section{Proof of Absolute Continuity of the Spectrum}\label{chapt8}
The proof is somewhat analogous to that for the case limit-periodic
potentials  \cite{KL2}. We will just refer to  \cite{KL2} in some places.
\subsection{Operators $E_n(\mathcal{G}_n')$,
$\mathcal{G}_n'\subset \mathcal{G}_n$}

Let us consider the open sets $\mathcal{G}_n$ given by (\ref{Gn}).
 There is a family of  eigenfunctions $\Psi _n(\k ,\x)$, $\k \in \mathcal{G}_n$, of the operator
    $H^{(n)}$, which are described by the perturbation formulas
    (\ref{na}), (\ref{na-n}). Let, $\mathcal{G}_n'\subset \mathcal{G}_n$, where $\mathcal{G}_n'$ is Lebesgue measurable and bounded.
    Let
    \begin{equation} E_n\left( \mathcal{G}'_n\right)F=\frac{1}{4\pi ^2}\int
    _{ \mathcal{G}'_n}\bigl( F,\Psi _n(\k )\bigr) \Psi _n(\k) d\k \label{s}
    \end{equation}
    for any $F\in C_0^{\infty}(\R^2)$, here and below $\bigl( \cdot ,\cdot \bigr)$
    is the canonical scalar product in $L_2(\R^2)$, i.e.,
    $$\bigl( F,\Psi _n(\k )\bigr)=\int _{\R^2}F(\x)\overline{\Psi _n(\k ,\x)}d\x.$$
    We will show that $ E_n\left( \mathcal{G}'_n\right)$ is almost a projector in $L_2(\R^2)$ in the sense: $ E_n\left( \mathcal{G}'_n\right)= E_n^*\left( \mathcal{G}'_n\right)$, $ E_n^2\left( \mathcal{G}'_n\right)= E_n\left( \mathcal{G}'_n\right)+o(1)$, where $o(1)$ is in the class of bounded operators as $n\to \infty $.
First, we note that \eqref{s} can be rewritten in the form:
    \begin{equation} E_n\left(\mathcal{G}'_n\right)=S_n\left(\mathcal{G}'_n\right)T_n \left(
    \mathcal{G}'_n\right), \label{ST}
    \end{equation}
    $$T_n:L_2(\R^2) \to L_2\left(  \mathcal{G}'_n\right), \ \
    \ \ S_n:L_2\left( \mathcal{G}'_n\right)\to L_2(\R^2),$$
    \begin{equation}
    T_nF=\frac{1}{2\pi }\bigl( F,\Psi _n(\k )\bigr) \mbox{\ \ for any $F\in C_0^{\infty}(\R^2)$},
    \label{eq2}
    \end{equation}
    $T_nF$ being in $L_{\infty }\left(  \mathcal{G}'_n\right)$, and,
    \begin{equation}S_nf = \frac{1}{2\pi }\int _{  \mathcal{G}'_n}f (\k)\Psi _n(\k ,\x)d\k  \mbox{\ \ for any $f \in L_{\infty }\left(
    \mathcal{G}'_n\right)$.} \label{ev}
    \end{equation}
    Note that $S_nf \in L_2(\R^2)$, since $\Psi _n$ is a finite combination of exponentials $e^{i\langle\k +\p_{\q},\x\rangle}$.

   \begin{lemma}\label{9.1} Let $\mathcal{G}'_n$ be  bounded and $f(\k), g(\k)\in L_{\infty }\left(
    \mathcal{G}'_n\right)$. Then,
    \begin{equation} \left( S_n f, S_n g \right)_{L_2(\R^2)}=_{n\to \infty}
    \left(f, g \right)_{L_2(\mathcal{G}'_n)}+o(1)\|f\|_{L_2(\mathcal{G}'_n)}\|g\|_{L^2(\mathcal{G}'_n)}. \label{May14a-12}\end{equation}
    where $o(1)$ goes to zero uniformly in $f$, $g$  and $\mathcal{G}'_n$ as $n\to \infty $; namely, $|o(1)|<\xi_*^{-{r_{n-3}(\xi_*)}}$, where  $\xi _*=\inf _{\vec \xi \in \mathcal{G}'_n}|\vec \xi |$.
\end{lemma}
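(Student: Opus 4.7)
The plan is to compute $(S_nf,S_ng)_{L_2(\R^2)}$ explicitly via Plancherel, peel off the diagonal as the exact value $(f,g)_{L_2(\mathcal{G}'_n)}$, and estimate the off-diagonal remainder using approximate orthogonality of the eigenvectors. Substituting the finite expansion $\Psi_n(\k,\x)=\sum_{\s\in\Omega(r_{n-1})}v_\s^{(n)}(\k)\,e^{i\langle\k+\p_\s,\x\rangle}$ from Theorem~\ref{Thm3last} into \eqref{ev} displays $S_nf$ as $2\pi\mathcal{F}^{-1}[H_f]$, where $H_f(\vec\xi):=\sum_\s f(\vec\xi-\p_\s)\,v_\s^{(n)}(\vec\xi-\p_\s)\,\chi_{\mathcal{G}'_n}(\vec\xi-\p_\s)$. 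Plancherel together with the change of variables $\k=\vec\xi-\p_\s$ produces
\[
(S_nf,S_ng)_{L_2}=\sum_{\s,\s'\in\Omega(r_{n-1})}\int_{\mathcal{G}'_n\cap(\mathcal{G}'_n-\p_{\s-\s'})} f(\k)\,v_\s^{(n)}(\k)\,\overline{g(\k+\p_{\s-\s'})\,v_{\s'}^{(n)}(\k+\p_{\s-\s'})}\,d\k.
\]
The diagonal $\s=\s'$ collapses to $(f,g)_{L_2(\mathcal{G}'_n)}$ thanks to $\sum_\s|v_\s^{(n)}(\k)|^2=\|v^{(n)}(\k)\|_{\ell^2}^2=1$.

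The off-diagonal remainder, grouped by $\q:=\s-\s'\neq 0$, reads
\[
R_n=\sum_{\q\neq 0}\int f(\k)\,\overline{g(\k+\p_\q)}\,A_\q(\k)\,d\k,\quad A_\q(\k):=\langle T_{-\q}v^{(n)}(\k),v^{(n)}(\k+\p_\q)\rangle_{\ell^2},
\]
where $T_\q$ is the index shift $(T_\q v)_\s=v_{\s-\q}$. The problem is thus reduced to bounding each $A_\q$ by a super-exponentially small quantity and summing over $\q$.

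To control $A_\q$ I would exploit approximate orthogonality of eigenvectors carrying distinct eigenvalues. The conjugation identity $H(\k+\p_\q)=T_{-\q}H(\k)T_\q$ for the untruncated operator shows that $T_{-\q}v^{(n)}(\k)$ is, modulo boundary corrections localized near $\partial\Omega(r_{n-1})$, an approximate eigenvector of $H(\k+\p_\q)$ with eigenvalue $\lambda^{(n)}(\k)$, while $v^{(n)}(\k+\p_\q)$ is an eigenvector with eigenvalue $\lambda^{(n)}(\k+\p_\q)$. Pairing the two, using self-adjointness of $H(\k+\p_\q)$, and substituting the super-exponentially small residual $g_n$ from Lemma~\ref{prelimit} (with $\|g_n\|_{\ell^1}\leq k^{-k^{r_{n-1}/2}}$) produces the key identity
\[
\bigl(\lambda^{(n)}(\k)-\lambda^{(n)}(\k+\p_\q)\bigr)A_\q(\k)=\langle T_{-\q}v^{(n)}(\k),g_n(\k+\p_\q)\rangle-\langle T_{-\q}g_n(\k),v^{(n)}(\k+\p_\q)\rangle,
\]
whose right-hand side has modulus at most $2k^{-k^{r_{n-1}/2}}$ uniformly in $\k$ and $\q$.

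The main obstacle is the degeneracy of the denominator on the measure-zero subset of $\mathcal{G}'_n$ on which $\k$ and $\k+\p_\q$ share the same isoenergetic curve $\mathcal{D}_n(\lambda)$: there $T_{-\q}v^{(n)}(\k)$ becomes nearly parallel to $v^{(n)}(\k+\p_\q)$ and $|A_\q|$ can approach~$1$, so the denominator identity alone is insufficient. My plan is to partition $\mathcal{G}'_n$ into the complement of a thin $\eta$-neighborhood of this singular set --- on which Lemma~\ref{L:derivatives-3last} yields $|\lambda^{(n)}(\k)-\lambda^{(n)}(\k+\p_\q)|\gtrsim \eta k^{2l-1}$ and hence $|A_\q|\leq 2\eta^{-1}k^{-2l+1-k^{r_{n-1}/2}}$ --- and the singular neighborhood itself of two-dimensional Lebesgue measure $O(\eta)$, on which the trivial bound $|A_\q|\leq 1$ together with Cauchy--Schwarz and the $L_\infty$ hypothesis on $f,g$ contributes a super-small term. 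Choosing $\eta=k^{-k^{r_{n-1}/3}}$ makes both contributions super-exponentially small; summing over the at most $|\Omega(r_{n-1})|^2\leq k^{8r_{n-1}}$ admissible pairs absorbs the polynomial factor and yields the required uniform bound $|o(1)|<\xi_*^{-r_{n-3}(\xi_*)}$, using $\xi_*\asymp k$ on $\mathcal{G}'_n$ and the scale hierarchy in \eqref{indrn}.
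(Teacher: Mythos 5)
Your reduction is the same as the paper's: Plancherel, the diagonal term giving $(f,g)_{L_2(\mathcal{G}'_n)}$ via $\|\v^{(n)}(\k)\|_{\ell^2}=1$, and an off-diagonal remainder controlled by overlaps $A_{\q}(\k)=\langle T_{-\q}\v^{(n)}(\k),\v^{(n)}(\k+\p_\q)\rangle$ estimated through the identity $(\lambda^{(n)}(\k)-\lambda^{(n)}(\k+\p_\q))A_\q=\text{(super-small residuals)}$. The gap is in your treatment of the ``degenerate denominator''. First, your excision argument does not prove the lemma as stated: on the exceptional $\eta$-neighborhood you only have $|A_\q|\le 1$, and bounding its contribution by $\eta\,\|f\|_{L_\infty}\|g\|_{L_\infty}$ is not of the form $o(1)\|f\|_{L_2(\mathcal{G}'_n)}\|g\|_{L_2(\mathcal{G}'_n)}$ with $o(1)$ uniform in $f,g$ (Cauchy--Schwarz on the exceptional set gives no smallness at all, and $\|f\|_{L_\infty}/\|f\|_{L_2}$ is unbounded). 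The $L_2$-uniform form is exactly what is needed later (e.g.\ $\|S_n\|=1+o(1)$, $T_nS_n=id+o(1)$, Lemma \ref{L:10.4}), so this is not a cosmetic defect. Moreover the measure bound $O(\eta)$ for the set where $\lambda^{(n)}(\k)$ and $\lambda^{(n)}(\k+\p_\q)$ nearly coincide is itself unproven (it would require a transversality estimate for the difference, which degenerates when $\p_\q$ is nearly tangent to the isoenergetic curve).

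Second, and more to the point, the paper shows that for $\k,\k+\p_\q\in\mathcal{G}'_n$ there is \emph{no} degenerate set to excise: choosing an intermediate scale $s$ as in \eqref{Aug10-5} so that the two truncation boxes are comparable, and projecting both vectors by $P_{{\bf 0}s}$ (using \eqref{perturbation*-3last} and \eqref{Feb6b-3last} to control the error), one gets two approximate eigenvectors of the \emph{same} operator $H^{(s)}(-\vec\xi)$ which are nearly orthogonal, being localized at $\m={\bf 0}$ and at $\m=\m''\neq{\bf 0}$. Since Theorem \ref{Thm3last} at step $s$ guarantees a \emph{single} eigenvalue in the window of radius $\varepsilon_0^{(s)}$ around $k^{2l}$, the shifted eigenvalue must lie outside it, giving the pointwise lower bound $|\lambda^{(n)}(\k)-\lambda^{(n)}(\k+\p_\q)|>\varepsilon_0^{(s)}/2$ on the whole region of integration; this uniqueness-plus-localization argument is the missing idea. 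Two further points you should incorporate: for large $\||\p_{\m''}\||$ the paper does not use eigenvalue separation at all but a scale decomposition of $\mathcal{G}'_n$ into the sets \eqref{Aug8-2} together with $\v^{(n)}\approx\v^{(s)}$ and support considerations, yielding the summable bound $\||\p_{\m''}\||^{-8}$; and since the cutoff $r_{n-1}(\varkappa)$ depends on $\varkappa=|\k|$ while $\mathcal{G}'_n$ is an arbitrary bounded set, $|\k|$ and $|\k+\p_\q|$ may have very different magnitudes, so your single counting factor $k^{8r_{n-1}}$ with one fixed $k$ is not legitimate without this scale bookkeeping.
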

\begin{corollary} The following relation holds:
     \begin{equation} \left|\left( S_n f, S_n g \right)_{L_2(\R^2)}\right|<(1+o(1))
     \left\|f\right\|_{L_{\infty}(\mathcal{G}'_n)}\left\|g\right\|_{L_{\infty}(\mathcal{G}'_n)}|\mathcal{G}_n'| \label{Vienna-1},
     \end{equation}
     where $|\mathcal{G}_n'|$ is the Lebesgue measure of $\mathcal{G}_n'$.
     \end{corollary}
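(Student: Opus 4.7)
The plan is to compute $(S_n f, S_n g)$ essentially by a Parseval-type calculation, exploiting the fact that $\Psi_n(\k,\x) = \sum_{\s\in\Omega(r_{n-1})} v_\s^{(n)}(\k) e^{i\langle \k+\p_\s,\x\rangle}$ is a finite trigonometric sum in $\x$ with $\k$-dependent coefficients, together with the normalization $\|\v^{(n)}(\k)\|_{\ell^2(\Z^4)}=1$ and the decay of $\tilde\v^{(n)}:=\v^{(n)}-\v^{(0)}$, $\v^{(0)}=\delta_0$, which was controlled throughout the iteration (see \eqref{Cauchy-1}, \eqref{Cauchy-2}).

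First I would write
$$(S_n f, S_n g)_{L_2(\R^2)} = \frac{1}{(2\pi)^2}\int_{\mathcal{G}'_n}\int_{\mathcal{G}'_n} f(\k_1)\overline{g(\k_2)}\Big(\int_{\R^2}\Psi_n(\k_1,\x)\overline{\Psi_n(\k_2,\x)}\,d\x\Big)d\k_1 d\k_2.$$
The innermost integral equals $(2\pi)^2\sum_{\s,\s'}v_\s^{(n)}(\k_1)\overline{v_{\s'}^{(n)}(\k_2)}\,\delta(\k_1+\p_\s-\k_2-\p_{\s'})$, so after setting $\q=\s-\s'$ and $\k=\k_1$ one obtains
$$(S_n f,S_n g)=\sum_{\q\in\Z^4}\int_{\mathcal{G}'_n\cap(\mathcal{G}'_n-\p_\q)} f(\k)\overline{g(\k+\p_\q)}\,A_\q^{(n)}(\k)\,d\k,\ \ A_\q^{(n)}(\k):=\sum_\s v_\s^{(n)}(\k)\overline{v_{\s-\q}^{(n)}(\k+\p_\q)}.$$
The term $\q=0$ reduces to $\int f\bar g \cdot\|\v^{(n)}(\k)\|_{\ell^2}^2\,d\k=(f,g)_{L_2(\mathcal{G}'_n)}$, which is the main term in \eqref{May14a-12}.

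Next I would estimate the off-diagonal contributions $\q\neq 0$. Writing $\v^{(n)}=\delta_0+\tilde\v^{(n)}$, I expand $A_\q^{(n)}$ into four pieces. The ``free'' piece $(\delta_0,T_{-\q}\delta_0)_{\ell^2}$ vanishes for $\q\neq 0$; the two linear pieces are $\tilde v_\q^{(n)}(\k)$ and $\overline{\tilde v_{-\q}^{(n)}(\k+\p_\q)}$; the quadratic piece is bounded by $\|\tilde\v^{(n)}(\k)\|_{\ell^1}\|\tilde\v^{(n)}(\k+\p_\q)\|_{\ell^1}$. An elementary Cauchy--Schwarz in $\k$ gives
$$\Big|\int f(\k)\overline{g(\k+\p_\q)}A_\q^{(n)}(\k)\,d\k\Big|\leq \|A_\q^{(n)}\|_{L_\infty(\mathcal{G}_n)}\|f\|_{L_2(\mathcal{G}'_n)}\|g\|_{L_2(\mathcal{G}'_n)},$$
so it remains to bound $\sum_{\q\neq 0}\|A_\q^{(n)}\|_\infty$. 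The linear pieces sum up to $2\sum_{\s\neq 0}\|\tilde v_\s^{(n)}\|_{L_\infty(\mathcal{G}_n)}$. To bound this I would use that $|v_\s^{(n)}(\k)\,v_0^{(n)}(\k)|=|\mathcal{E}^{(n)}(\k)_{0\s}|$, combined with the matrix-element decay estimates \eqref{matrix elements}, \eqref{Feb6b}, \eqref{Feb6b-3}, \eqref{Feb6b-3IV}, \eqref{Feb6b-3last} at every step, and the fact that $|v_0^{(n)}(\k)|\geq \tfrac12$ for $|\k|\geq\xi_*$ large (since $\sum_{\s\neq 0}|v_\s^{(n)}|^2$ is small). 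Summing the geometric-type decay in $\|\p_\s\|$ and then telescoping across steps via \eqref{Cauchy-1}, \eqref{Cauchy-2}, the resulting bound is $O\big(\xi_*^{-r_{n-3}(\xi_*)}\big)$, absorbing the first-step polynomial rate into the faster super-polynomial improvements from later steps. The quadratic piece is summed over the (polynomially many in $\k^{r_{n-1}}$) indices $\q$ with $\|\p_\q\|\leq 2\mathrm{diam}(\mathcal{G}'_n)$, using the pointwise bound $\|\tilde\v^{(n)}\|_{\ell^1}=O(\xi_*^{-\gamma_0+2\delta})$ together with the super-polynomial decay of $\|\v^{(n)}-\v^{(n-1)}\|_{\ell^1}$ for $n\geq 2$, which ensures the combinatorial factor $k^{4r_{n-1}}$ is overwhelmed.

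Finally, Corollary \eqref{Vienna-1} is immediate from \eqref{May14a-12} by bounding $|(f,g)_{L_2(\mathcal{G}'_n)}|\leq\|f\|_{L_\infty}\|g\|_{L_\infty}|\mathcal{G}'_n|$ and $\|f\|_{L_2(\mathcal{G}'_n)}\|g\|_{L_2(\mathcal{G}'_n)}\leq\|f\|_{L_\infty}\|g\|_{L_\infty}|\mathcal{G}'_n|$. The main obstacle in this plan is verifying the claimed super-polynomial rate $\xi_*^{-r_{n-3}(\xi_*)}$: the naive application of \eqref{Cauchy-1} yields only the first-step polynomial rate $\xi_*^{-\gamma_0+2\delta}$, and upgrading to the stated rate requires a careful separation of the $\s$-sum into the ``inner'' block $\s\in\Omega(\delta)$ (which contributes pointwise polynomial decay that must be offset by the first step of the iteration) and the subsequent dyadic shells $\Omega(r_j)\setminus\Omega(r_{j-1})$ (whose super-polynomial smallness from \eqref{Feb6b-3last} at each scale is what ultimately delivers the exponent $r_{n-3}$).
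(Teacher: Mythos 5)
Your closing step — deducing \eqref{Vienna-1} from \eqref{May14a-12} by bounding both $|(f,g)_{L_2(\mathcal{G}_n')}|$ and $\|f\|_{L_2}\|g\|_{L_2}$ by $\|f\|_{L_\infty}\|g\|_{L_\infty}|\mathcal{G}_n'|$ — is exactly the paper's proof of the corollary, and your Parseval set-up with the off-diagonal kernels $A_\q^{(n)}$ is the same as the paper's computation with $B_{\m''}$ in Lemma \ref{9.1}. The genuine gap is in your treatment of the off-diagonal terms with small nonzero $\q$. Once you write $\v^{(n)}=\v^{(0)}+\tilde\v^{(n)}$ and apply the triangle inequality to the linear and quadratic pieces, your bound can never be better than $\sup_{\k}|\tilde v^{(n)}_\q(\k)|$ itself; but for $0<\||\p_\q\||\leq Q$ this coefficient is genuinely of size $|V_\q|\,\bigl||\k|^{2l}-|\k+\p_\q|^{2l}\bigr|^{-1}\sim k^{-2l+1}$, i.e.\ only polynomially small and independent of $n$. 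So your method caps out at an error $O\left(\xi_*^{-\gamma_0+C\delta}\right)$, which does not tend to zero as $n\to\infty$, whereas the $o(1)$ in Lemma \ref{9.1} (inherited by \eqref{Vienna-1}) must be uniform and bounded by $\xi_*^{-r_{n-3}(\xi_*)}$; that decay in $n$ is what is used afterwards (e.g.\ $E_n^2=E_n+o(1)$, convergence of $S_n$, the projection property of $E_\infty$). Your proposed fix — regrouping the $\s$-sum into $\Omega(\delta)$ and outer shells and invoking \eqref{Feb6b-3last} — cannot repair this: after absolute values are taken, the inner-block contribution is irreducibly polynomial no matter how the shells are organized.

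The point you are missing is that the smallness of $B_{\m''}$ for $0<\||\p_{\m''}\||\leq\xi_*^{r_{n-3}(\xi_*)}$ is a cancellation, not a size, phenomenon. The paper writes $B_{\m''}=\langle\v^{(n)}(-\vec\xi),\v^{(n)}_*(-\vec\xi-\p_{\m''})\rangle$ and observes that, after applying $P_{{\bf 0}s}$, both vectors are approximate eigenvectors of one and the same truncated operator $H^{(s)}(-\vec\xi)$, with eigenvalues $\lambda^{(n)}(-\vec\xi)$ and $\lambda^{(n)}(-\vec\xi-\p_{\m''})$ separated by at least $\varepsilon_0^{(s)}/2$ by Theorem \ref{Thm3last}; near-orthogonality of approximate eigenvectors with separated eigenvalues then yields the super-polynomial bound, see \eqref{Vienna-4BHM+}--\eqref{Vienna-5} and \eqref{Aug9-2*}. (A separate, cruder argument handles $|\vec\xi+\p_{\m''}|\leq\frac12|\vec\xi|$.) For large $\||\p_{\m''}\||$ the paper instead partitions $\mathcal{G}_n'$ into the sets $\mathcal{G}_{ss'}$ depending on $\m''$ and uses disjointness of supports together with $\|\v^{(n)}-\v^{(s)}\|$; your shell decomposition is in that spirit and could be made to work there, but without the eigenvalue-separation argument the small-$\m''$ regime — and with it the stated $o(1)$ — is out of reach. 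If one were content with the weaker constant $1+O(k_*^{-c})$ in \eqref{Vienna-1}, your estimate (granting the uniform-in-$\k$ summability of the coefficient tails you invoke via \eqref{matrix elements}, \eqref{Feb6b} and \eqref{Cauchy-1}) would essentially go through, but that is not the statement the paper proves, nor strong enough for the way Lemma \ref{9.1} is used in Section \ref{chapt8}.
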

\begin{corollary} The operator $S_n$ is bounded and $\|S_n\|=_{n\to \infty} 1+o(1)$. \end{corollary}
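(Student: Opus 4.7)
The plan is to compute $(S_n f, S_n g)_{L_2(\R^2)}$ explicitly by substituting the finite Fourier expansion $\Psi_n(\k,\x)=\sum_{\s\in\Omega(r_{n-1})}v^{(n)}_\s(\k)e^{i\langle\k+\p_\s,\x\rangle}$ (from the proof of Lemma~\ref{prelimit}) and collapsing the $\x$-integration via the distributional identity $\int_{\R^2}e^{i\langle\vec\xi,\x\rangle}d\x=(2\pi)^2\delta(\vec\xi)$. Since $\Psi_n$ is a finite sum of bounded plane waves and $\mathcal{G}_n'$ is bounded, the exchange of integrals can be rigorously justified by restricting $\x$ to a ball $B_R$, applying Fubini, and letting $R\to\infty$. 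This yields
\[
(S_n f, S_n g)_{L_2(\R^2)}=\sum_{\s,\s'\in\Omega(r_{n-1})}\int_{\mathcal{G}_n'\cap(\mathcal{G}_n'+\p_{\s'}-\p_\s)}f(\k)\overline{g(\k+\p_\s-\p_{\s'})}\,v^{(n)}_\s(\k)\overline{v^{(n)}_{\s'}(\k+\p_\s-\p_{\s'})}\,d\k.
\]

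The diagonal $\s=\s'$ yields $\int_{\mathcal{G}_n'}f\bar g\sum_\s|v^{(n)}_\s|^2\,d\k=(f,g)_{L_2(\mathcal{G}_n')}$ by the normalization $\|\v^{(n)}(\k)\|_{\ell^2}=1$. Re-indexing the off-diagonal part by $\q:=\s'-\s$ (always nonzero for $\s\neq\s'$ because $\alpha$ is irrational, making the map $\q\mapsto\p_\q$ injective), and invoking Cauchy--Schwarz in the $\k$-integration, reduces the problem to the bound
\[
\sum_{\q\neq 0,\,\q\in\Omega(2r_{n-1})}\|M_\q\|_{L_\infty(\mathcal{G}_n')}\leq\xi_*^{-r_{n-3}(\xi_*)},\qquad M_\q(\k):=\sum_\s v^{(n)}_\s(\k)\overline{v^{(n)}_{\s+\q}(\k-\p_\q)}.
\]

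The crucial observation is that $M_\q(\k)=(\v^{(n)}(\k),\tilde w(\k))_{\ell^2(\Z^4)}$, where $\tilde w(\k):=P(r_{n-1})S_\q^{-1}\v^{(n)}(\k-\p_\q)$ (with $(S_\q^{-1}u)_\s=u_{\s+\q}$) is an approximate eigenvector of $H^{(n)}(\k)$ at the eigenvalue $\lambda^{(n)}(\k-\p_\q)$. Indeed, the exact shift relation $H(\k)S_\q^{-1}=S_\q^{-1}H(\k-\p_\q)$ for the infinite operator shows that $S_\q^{-1}\v^{(n)}(\k-\p_\q)$ would be an exact eigenvector of the infinite $H(\k)$ but for the boundary error $(I-P(r_{n-1}))H(\k-\p_\q)\v^{(n)}(\k-\p_\q)$; the subsequent truncation by $P(r_{n-1})$ introduces an additional ``shifted-boundary'' error supported in a thin shell around $\partial\Omega(r_{n-1})$. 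Both errors are controlled by the rapid off-diagonal decay of $\E^{(n)}$ from \eqref{Feb6b-3last} and its iterated precursors (Corollary~\ref{corthm3last}), which give $|v^{(n)}_\s|\leq k^{-\frac{1}{10}\beta k^{r_{n-2}-r_{n-3}}}$ for $\s$ near the boundary of $\Omega(r_{n-1})$. Consequently $\|(H^{(n)}(\k)-\lambda^{(n)}(\k-\p_\q))\tilde w(\k)\|_{\ell^2}$ is super-exponentially small, and self-adjointness of $H^{(n)}(\k)$ yields the quasi-mode orthogonality identity
\[
(\lambda^{(n)}(\k)-\lambda^{(n)}(\k-\p_\q))M_\q(\k)=(\v^{(n)}(\k),(H^{(n)}(\k)-\lambda^{(n)}(\k-\p_\q))\tilde w(\k))_{\ell^2}.
\]

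The main obstacle is to lower-bound the eigenvalue gap $|\lambda^{(n)}(\k)-\lambda^{(n)}(\k-\p_\q)|$ uniformly for $\k,\k-\p_\q\in\mathcal{G}_n'$. Generically this gap is of order $|\p_\q|k^{2l-1}$, but when $\p_\q$ is nearly tangent to the isoenergetic curve through $\k$ it can collapse to order $p_\q^2k^{2l-2}$. Combined with the irrationality bound $p_\q\geq 2\pi C_\varepsilon\||\p_\q\||^{-(\mu-1+\varepsilon)}\geq k^{-\mu r_{n-1}}$ from Lemma~\ref{psnorms}, this yields a worst-case polynomial lower bound $|\lambda^{(n)}(\k)-\lambda^{(n)}(\k-\p_\q)|\geq ck^{2l-2-2\mu r_{n-1}}$. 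Dividing the super-exponentially small residual by this polynomial gap and summing over the at most $k^{8r_{n-1}}$ relevant $\q$'s leaves a bound of order $k^{-\frac{1}{10}\beta k^{r_{n-2}-r_{n-3}}+O(r_{n-1})}$, which by the scale hierarchy \eqref{indrn} is far smaller than $\xi_*^{-r_{n-3}(\xi_*)}$. For the initial stages where the iterated super-smallness is not yet available, the bound reduces to a direct polynomial estimate using $\|u_n\|_{L_\infty}=O(k^{-\gamma_1})$ from \eqref{6.6a}--\eqref{6.6}, which already beats $\xi_*^{-\delta}=\xi_*^{-r_0}$.
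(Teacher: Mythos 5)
Your overall architecture — collapse the $\x$-integration to get the diagonal term $(f,g)_{L_2(\mathcal{G}_n')}$ plus off-diagonal terms $M_\q(\k)=\langle\v^{(n)}(\k),\,$shifted eigenvector$\rangle$, then treat the shifted, truncated vector as a quasimode and use self-adjointness to write $(\lambda^{(n)}(\k)-\lambda^{(n)}(\k-\p_\q))M_\q$ as an inner product with the quasimode residual — is exactly the skeleton of the paper's proof of Lemma \ref{9.1}, from which the corollary is immediate. The fatal problem is your lower bound on the eigenvalue gap. You claim a uniform bound $|\lambda^{(n)}(\k)-\lambda^{(n)}(\k-\p_\q)|\geq ck^{2l-2-2\mu r_{n-1}}$ coming from ``worst-case tangential geometry plus the irrationality measure''. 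This is false: $\mathcal{G}_n'$ is an arbitrary bounded measurable subset of the two-dimensional set $\mathcal{G}_n=\cup_{\lambda>\lambda_*}\mathcal{D}_n(\lambda)$, so both $\k$ and $\k-\p_\q$ may lie on (essentially) the same isoenergetic curve. In the near-tangential regime the unperturbed difference is $2(\k,\p_\q)_{\R}-p_\q^2$ (times $\sim lk^{2l-2}$), which vanishes identically on a curve crossing $\mathcal{G}_n'$; the Diophantine bound $p_\q\gtrsim k^{-\mu r_{n-1}}$ controls $p_\q$ but not this cancellation, so no fixed negative power of $k$ bounds the gap from below uniformly in $\k\in\mathcal{G}_n'$. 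Since you need a sup-norm bound on $M_\q$ over $\mathcal{G}_n'$, dividing the residual by the gap breaks down exactly where it matters.

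What actually saves the argument in the paper is much weaker and of a different nature: for non-resonant $\k$ the truncated operator at the appropriate scale $s$ has a \emph{single} eigenvalue in the window of radius $\varepsilon_0^{(s)}=k^{-2r_{s-1}'k^{2\gamma r_{s-2}}}$ (Theorem \ref{Thm3last}), and since the shifted quasimode is nearly orthogonal to the true eigenvector (the two vectors are concentrated near $\m={\bf 0}$ and $\m=\m''\neq{\bf 0}$), its approximate eigenvalue $\lambda^{(n)}(\k-\p_\q)$ is forced outside half of that window. This gives a gap $\gtrsim\varepsilon_0^{(s)}$ — super-exponentially small, not polynomial — which still suffices because the quasimode residual $O(k^{-k^{\frac14 r_{s-1}}})$ beats it by \eqref{indrn}. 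Your proposal contains neither this orthogonality-plus-simplicity argument nor the scale-matching it requires (the boxes $\Omega(r_{n-1},|\k|)$ and $\Omega(r_{n-1},|\k-\p_\q|)$ can have wildly different sizes, which in the paper forces the projection $P_{{\bf 0}s}$ and, for $\||\p_\q\||>\xi_*^{r_{n-3}}$, the separate support-disjointness partition into the sets $\mathcal{G}_{ss'}$ with the decay $\||\p_{\m''}\||^{-8}$). Also note your fallback ``polynomial estimate for initial stages'' cannot meet the required threshold $\xi_*^{-r_{n-3}(\xi_*)}$ once $n$ is large, so it does not patch the hole.
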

\begin{proof}
     The function $\Psi _n(\k,\x)$  can be
    represented as a  combination of plane waves:
    \begin{equation}\Psi _n(\k,\x)=
    \sum _{\m\in \Z^2}v_{\m}^{(n)}(\k)
    \exp\{ i\langle \k+\p _{\m},\x\rangle\},\label{++}
    \end{equation}
where $v_{\m}^{(n)}(\k)$ are Fourier coefficients. By construction,
$v_{\m}^{(n)}(\k)=0$, when $\m \not \in \Omega (r_{n-1}) $. Let
$\v^{(n)}(\k)$ be the vector in $\ell^2(\Z^2)$ with  components
equal to $v_{\m}^{(n)}(\k)$. Note that the size of $\Omega
(r_{n-1})$ depend on $\varkappa=|\k|$; to stress this fact we will
use here the notations $\Omega (r_{n-1},\varkappa)$ and
$r_{n-1}(\varkappa)$.
The
Fourier transform  $\widehat \Psi_n$ is a combination of $\delta
$-functions:
    $$\widehat \Psi _n(\k,\vec \xi)=2\pi \sum _{\m \in \Z^2}v_{\m}^{(n)}(\k)
    \delta \bigl(\vec \xi +\k+\p_{\m}\bigr)$$
From this, we easily compute  the Fourier
    transform of $(S_nf )(\x)$:
    $$ (\widehat{S_nf})(\vec \xi)=\sum _{\m \in \Z^2}v_{\m}^{(n)}\bigl(-\vec \xi-
    \p_{\m}\bigr)f \bigl(-\vec \xi-
    \p_{\m}\bigr)\chi \bigl({\cal G}_{n}',-\vec \xi-
    \p_{\m}\bigr),$$
where $\chi ({\cal G}_{n}',\cdot ) $ is the
characteristic function of ${\cal G}_{n}'$. Note that $v_{\m}^{(n)}\bigl(-\vec \xi-
    \p_{\m}\bigr)\chi \bigl({\cal G}_{n}',-\vec \xi-
    \p_{\m}\bigr)$ can differ from zero only when $\m \in \Omega (r_{n-1}, |\vec \xi+
    \p_{\m}|)\subset  \Omega (r_{n-1}, \xi _{**})$, $\xi _{**}=\sup _{\vec\xi \in {\cal G}_{n}'}|\vec\xi |$.
    By Parseval's identity,
    $$
    \left(S_nf,S_ng\right)_{L_2(\R^2)}=\left(\widehat{S_nf},\widehat{S_ng}\right)_{L_2(\R^2)}=$$ $$\int _{\R^2 }\sum _{\m , \m'\in \Z^2} T_{\m,\m'}(\vec \xi )f \bigl(-\vec \xi-
    \p_{\m}\bigr)\bar g \bigl(-\vec \xi-
    \p_{\m'}\bigr)\chi \bigl({\cal G}_{n}',-\vec \xi-
    \p_{\m}\bigr)\chi \bigl({\cal G}_{n}',-\vec \xi-
    \p_{\m'}\bigr)d\vec \xi ,$$ $$
   T_{\m,\m'}(\vec \xi ):=v_{\m}^{(n)}\bigl(-\vec \xi-
    \p_{\m}\bigr)\overline{v_{\m '}^{(n)}}\bigl(-\vec \xi-
    \p_{\m '}\bigr).$$
    Note that, in fact, the summation here is over the finite set $\m,\m' \in \Omega (r_{n-1}, \xi _{**})$. Hence we can exchange summation and integration in the above formula.
    Next, shifting the variable $\vec \xi+
    \p_{\m} \to \vec \xi $, denoting $\m'-\m$ by $\m''$ and considering that $\langle\v^{(n)},\v^{(n)}\rangle=1$, we obtain:
    $$\left(\widehat{S_nf},\widehat{S_ng}\right)_{L_2(\R^2)}=\left(f,g\right)_{L_2({\cal G}_{n}')}+$$
    \begin{equation}\sum _{\m''\in \Z^2\setminus\{\bf 0\}}\int _{\R^2 }B_{\m''}(\vec \xi )f \bigl(-\vec \xi\bigr)\bar g \bigl(-\vec \xi-
    \p_{\m''}\bigr)\chi \bigl({\cal G}_{n}',-\vec \xi \bigr)\chi \bigl({\cal G}_{n}',-\vec \xi-
    \p_{\m''}\bigr)\ d\vec \xi , \label{May14-12}
    \end{equation}
     $$B_{\m''}(\vec \xi )= \sum _{\m\in \Z^2}v_{\m}^{(n)}\bigl(-\vec \xi\bigr)\overline{v_{\m +\m''}^{(n)}}\bigl(-\vec \xi-
    \p_{\m ''}\bigr).$$
    Obviously,
    \begin{equation}B_{\m ''}=\langle\v^{(n)}(-\vec \xi ), \v^{(n)}_*(-\vec \xi-\p_{\m''})\rangle ,\label{Aug11-2} \end{equation}
    where $\v_*^{(n)}\bigl(-\vec \xi-\p_{\m''}\bigr)$ the ``shifted" eigenvector:
    $\v_*^{(n)}\bigl(-\vec \xi-\p_{\m''}\bigr)$: $\left(\v_*^{(n)}\bigl(-\vec \xi-\p_{\m''}\bigr)\right)_{\m}=v^{(n)}_{\m+\m''}\bigl(-\vec \xi-\p_{\m''}\bigr).$
   To obtain \eqref{May14a-12}, it is enough to prove two estimates:
   \begin{equation} \label{Aug9-2}\sum _{\||\p_{\m''}\||>\xi_*^{ r_{n-3}(\xi _*)}}\ \ \sup _{\vec \xi \in {\cal G}_{n}'}\left| B_{\m''}(\vec \xi )\right| <\frac 12 \xi _*^{- r_{n-3}(\xi _*)}, \end{equation}
\begin{equation} \label{Aug9-2*}\sum _{0<\||\p_{\m''}\||\leq {\xi _*^{ r_{n-3}(\xi _*)}}}\ \ \sup _{\vec \xi \in {\cal G}_{n}'}\left| B_{\m''}(\vec \xi )\right| <\frac 12 \xi _*^{- r_{n-3}(\xi _*)}. \end{equation}
To prove \eqref{Aug9-2} we first check that
    \begin{equation}
    \sup _{\vec \xi \in {\cal G}_{n}'}\left| B_{\m''}(\vec \xi )\right| <\||\p_{\m''}\||^{-8}
    \ \ \ \mbox{when} \ \  \||\p_{\m''}\||>\xi _*^{ r_{n-3}(\xi _*)}.\label{Aug8-1}\end{equation}
    Indeed, for every $\m''$ we break ${\cal G}_{n}'$ into several parts, partition being dependent on $\m''$:
    $ {\cal G}_{n}'=\cup _{s,s'=0}^{n}{\cal G}_{ss'},$
\begin{equation}\label{Aug8-2}
\begin{split}&
    {\cal G}_{ss'}=\cr &
    \left\{\vec \xi \in {\cal G}_{n}':\ |\vec \xi|^{r_{s-1}(|\vec \xi | )}\leq \frac 12
    \||\p_{\m''}\||<\gamma _s |\vec \xi|^{r_{s}(|\vec \xi | )
    }\right\}\cap\cr &
    \left\{\vec \xi \in {\cal G}_{n}':\ |\vec \xi+\p_{\m''}|^{r_{s'-1}(|\vec \xi +\p_{\m''}|)}\leq \frac 12
    \||\p_{\m''}\||< \gamma _{s'} |\vec \xi+\p_{\m''}|^{r_{s'}(|\vec \xi +\p_{\m''}|)}\right\},
\end{split}
\end{equation}
     where $r_{-1}:=0$, $r_0=\delta $, $\gamma _s=1$ when $s<n$, $\gamma _n=\infty $. To prove \eqref{Aug8-1}, it is enough to show
     \begin{equation}
    \sup _{\vec \xi \in {\cal G}_{ss'}}\left| B_{\m''}(\vec \xi )\right| <\||\p_{\m''}\||^{-8}\label{Aug8-3}\end{equation}
    for all $s,s'$. Assume $s,s'=n$. It follows from \eqref{Aug8-2} that for any $\m \in \Z^4$ either
     $v_{\m }^{(n)}(\vec \xi )$ or  $v_{\m +\m''}^{(n)}(\vec \xi +\p_{\m''})$ is zero.   Hence, $\langle\v^{(n)}(-\vec \xi ), \v^{(n)}_*(-\vec \xi-\p_{\m''})\rangle =0$, i.e.,
    $B_{\m''}(\vec \xi )=0$. Next, let $0<s<n$, $s'=n$. By \eqref{perturbation*-3last},
    \begin{equation}\|\v^{(n)}(\vec \xi )-\v^{(s)}(\vec \xi )\|<|\vec \xi |^{-|\vec \xi |^{\frac 12r_{s-1}(|\vec \xi |)}}. \label{Aug8-4} \end{equation} It follows from the definition of
    ${\cal G}_{sn}$ that $\langle \v^{(s)}(\vec \xi ),\v^{(n)}_*(\vec \xi +\p_{\m''})\rangle =0$. Therefore,
    \begin{equation}
    \left| B_{\m''}(\vec \xi )\right| \leq \|\v^{(n)}(\vec \xi )-\v^{(s)}(\vec \xi )\|<|\vec \xi |^{-|\vec \xi |^{\frac 12 r_{s-1}(|\vec \xi |)}}\ \ \mbox{when  } \vec \xi \in {\cal G}_{sn}.
    \label{Aug8-5} \end{equation}
    Using \eqref{Aug13-1}, \eqref{r_2} and  \eqref{indrn}, we obtain
    $\left| B_{\m''}(\vec \xi )\right| \leq |\vec \xi |^{-10 r_{s}(|\vec \xi| )}$. Considering again the definition of
    ${\cal G}_{sn}$ , we get
    \eqref{Aug8-3}. Next, we consider ${\cal G}_{0n}$. By \eqref{perturbation*},
    $\v^{(1)}=\v^{(0)}+O(|\vec \xi |^{-2l+1+44\mu \delta })$, where $v^{(0)}_{\m}=\delta _{\m,{\bf 0}}$. By \eqref{Aug8-2},
  $\langle \v^{(0)}(\vec \xi ),\v^{(n)}(\vec \xi +\p_{\m''})\rangle =0$. Hence,  $\left| B_{\m''}(\vec \xi )\right| \leq C|\vec \xi |^{-2l+1+44\mu \delta }$. Using again the definition of ${\cal G}_{0n}$ and the inequality $2l-1-44\mu \delta >8\delta $, we obtain \eqref{Aug8-3}. The case $s'<n$ is considered in the analogous way. Thus, \eqref{Aug8-3} is proved. Summarizing \eqref{Aug8-3} over $\m''$, we obtain \eqref{Aug9-2}.

    Suppose $0<\||\p_{\m''}\||\leq \xi _*^{r_{n-3}(\xi _*)}$. Let us estimate $B_{\m''}(\vec \xi )$. Assume for definiteness that $|\vec \xi +\p_{\m''}|\leq |\vec \xi |$. The case of the opposite inequality is analogous up to the change of the notation $\vec \xi \to \vec \xi +\p_{-\m''}$, since $B_{\m''}(\vec \xi )=B_{-\m''}(\vec \xi +\p_{\m''})$.
    By \eqref{++},
\begin{equation}H^{(n)}(-\vec \xi )\v^{(n)}\bigl(-\vec \xi\bigr)=\lambda ^{(n)}\bigl(-\vec \xi\bigr)\v^{(n)}\bigl(-\vec \xi).
\label{Vienna-4+} \end{equation}
The analogous relation holds for $\v^{(n)}\bigl(-\vec \xi-
    \p_{\m ''}\bigr) $ up to the replacement of $H^{(n)}(-\vec \xi )$ by $ H^{(n)}\left(-\vec \xi - \p_{\m ''}\right)$ and $\lambda ^{(n)}\bigl(-\vec \xi\bigr)$ by $\lambda ^{(n)}\bigl(-\vec \xi-\p_{\m''}\bigr)$:
      \begin{equation}H^{(n)}(-\vec \xi -\p_{\m''})\v^{(n)}\bigl(-\vec \xi-\p_{\m''}\bigr)=\lambda ^{(n)}\bigl(-\vec \xi-\p_{\m''}\bigr)\v^{(n)}\bigl(-\vec \xi-\p_{\m''}).\label{Vienna-4"+} \end{equation}
 Note that $ H^{(n)}\left(-\vec \xi - \p_{\m ''}\right)$ up to the shift of indices by $-\m''$ is equivalent to
    the operator $P_{\m''}H(-\vec \xi )P_{\m''}$, where $P_{\m''}$ is the projection onto the box of the size
    $|\vec \xi +\p_{\m ''}|^{r_{n-1}(|\vec \xi +\p_{\m ''}|)}$ around $-\m''$. Using  the shifted eigenvector
    $\v_*^{(n)}\bigl(-\vec \xi-\p_{\m''}\bigr)$,
     we can rewrite \eqref{Vienna-4"+}  in the form:
    \begin{equation}P_{\m ''}H(-\vec \xi )P_{\m ''}\v_*^{(n)}\bigl(-\vec \xi-\p_{\m''}\bigr)=
    \lambda ^{(n)}\bigl(-\vec \xi-\p_{\m''}\bigr)\v_*^{(n)}\bigl(-\vec \xi-\p_{\m''}),\label{Vienna-4'+} \end{equation}
    where $P_{\m ''}\v_*^{(n)}=\v_*^{(n)}$.
    By \eqref{Aug13-2},
    \begin{equation}H(-\vec \xi )\v^{(n)}\bigl(-\vec \xi\bigr)=\lambda ^{(n)}\bigl(-\vec \xi\bigr)\v^{(n)}\bigl(-\vec \xi)+O\left(|\xi |^{-
|\xi |^{\frac{1}{2}r_{n-1}(|\vec \xi
|)}}\right).\label{Vienna-4BHM+}
\end{equation} Similarly,
\begin{equation}\begin{split}& H(-\vec \xi )\v_*^{(n)}\bigl(-\vec
\xi-\p_{\m''}\bigr)=\cr &
    \lambda ^{(n)}\bigl(-\vec \xi-\p_{\m''}\bigr)\v_*^{(n)}\bigl(-\vec \xi-\p_{\m''})+O\left(|\xi +\p_{\m''}|^{-
|\xi +\p_{\m''}|^{\frac{1}{2}r_{n-1}(|\vec \xi+\p_{\m''}
|)}}\right).\end{split}\label{Aug10-2+}
\end{equation} Assume first $|\vec \xi +\p_{\m''}|\leq \frac 12 |\vec \xi |$. Then
$|\lambda ^{(n)}\bigl(-\vec \xi\bigr)-\lambda ^{(n)}\bigl(-\vec \xi
-\p_{\m''}\bigr)|> \frac{1}{2}|\vec \xi |^{2l}$. Using
\eqref{Aug11-2}, \eqref{Vienna-4BHM+} and
     \eqref{Aug10-2+}, we obtain:
     \begin{equation} \label{Aug11-1}B_{\m ''}=O\left(\xi_*^{-
\xi _*^{\frac{1}{2}r_{n-1}(\xi _*) }}\right)\ \ \mbox{when  } |\vec
\xi +\p_{\m''}| \leq \frac 12 |\vec \xi |. \end{equation}
     Similar, but somewhat more subtle considerations are required when $|\vec \xi +\p_{\m''}|>\frac 12 |\vec \xi |$. We start with introducing a parameter $s$. We will use it to
      cut $\Omega (r_{n-1}, |\vec \xi |)$ to approximately the same size as $\Omega (r_{n-1}, |\vec \xi +\p_{\m''}|)$.
  If the boxes are of approximately  the same size, then $s=n-1$. Indeed, for each $\vec \xi $ one of the following relations holds:
\begin{equation}
  \label{Aug10-5}|\vec \xi |^{r_{s-1}(|\vec \xi |)}\leq
 |\vec \xi +\p_{\m''} |^{r_{n-1}(|\vec \xi +\p_{\m''} |)}
 <|\vec \xi |^{r_{s}(|\vec \xi |)},
  \end{equation}
  where $1\leq s\leq n-1$ and $s$ is defined by $\m ''$ and $\vec \xi $. Note that $s<n-1$ when $\Omega (r_{n-1}, |\vec \xi |)$ is essentially bigger than $\Omega (r_{n-1}, |\vec \xi +\p_{\m''}|)$.
  Using the second inequality in \eqref{Aug10-5} and \eqref{indrn}, we get
  \begin{equation}|\vec \xi +\p_{\m''} |^{r_{n-3}(|\vec \xi +\p_{\m''} |)}<\frac18 |\vec \xi |^{r_{s-1}(|\vec \xi |)}. \label{Aug11-3} \end{equation}
  Let  $P_{{\bf 0}s}$ be the projecting corresponding to $\Omega (r_s, |\vec \xi |)$.
 By \eqref{perturbation*-3last} with $s$ instead of $n$, \begin{equation}
 (I-P_{{\bf 0}s})\v^{(n)}(-\vec \xi )=O\left(|\vec \xi | ^{-
|\vec \xi | ^{\frac 12 r_{s-1}({|\vec \xi |})}}\right).\label{Spb-1+} \end{equation}
    Let us prove the analogous estimate for $\v^{(n)}_*$:
   \begin{equation} \label{Spb-1}(I-P_{{\bf 0}s})\v^{(n)}_*(-\vec \xi-
    \p_{\m ''})=O\left(|\vec \xi | ^{-
|\vec \xi | ^{\frac 12 r_{s-1}({|\vec \xi
|})}}\right).\end{equation} Indeed, if $(P_{{\bf 0}s})_{\m \m}=0$,
then $\||\p_\m\||> |\vec \xi |^{r_{s-1}(|\vec \xi|)}$.  Using
\eqref{Aug11-3} and  the bound on $\||\p_{\m ''}\||$, we obtain
$\||\p_{\m+\m''}\||> \frac 12 |\vec \xi |^{r_{s-1}(|\vec \xi|)}$.
Using \eqref{perturbation*-3last}, \eqref{Feb6b-3last}, we obtain
\eqref{Spb-1}. From \eqref{Vienna-4BHM+},\eqref{Aug10-2+},
considering that $\|P_{{\bf 0}s}H\|= O\left( |\xi |^{2lr_{s-1}(|\vec
\xi|)}\right)$ and using \eqref{Spb-1+},\eqref{Spb-1}, we get:
\begin{equation}H^{(s)}(-\vec \xi )P_{{\bf 0}s}\v^{(n)}\bigl(-\vec \xi\bigr)=\lambda ^{(n)}\bigl(-\vec \xi\bigr)P_{{\bf 0}s}\v^{(n)}\bigl(-\vec \xi)+O\left(|\xi |^{-
|\xi |^{\frac{1}{4}r_{s-1}}}\right),\label{Vienna-4BHM} \end{equation}
     \begin{equation} \label{Vienna-5}H^{(s)}(-\vec \xi )P_{{\bf 0}s}\v^{(n)}_*\bigl(-\vec \xi-
    \p_{\m ''}\bigr)=\lambda ^{(n)}\bigl(-\vec \xi-
    \p_{\m ''}\bigr)P_{{\bf 0}s}\v^{(n)}_*\bigl(-\vec \xi-
    \p_{\m ''}\bigr)+O\left(|\vec \xi | ^{-
|\vec \xi | ^{\frac {1}{4}r_{s-1}}}
\right).
\end{equation}
Next, by Theorem \ref{Thm3last} for step $s$, $\left|\lambda
^{(n)}\bigl(-\vec \xi \bigr)-\lambda ^{(n)}\bigl(-\vec \xi-
    \p_{\m ''}\bigr)\right|>\varepsilon _0^{(s)}/2$, where $\varepsilon _0^{(s)}=|\vec \xi | ^{-2r_{s-1}'|\vec \xi |^{2\gamma r_{s-2}}}.$
    Indeed, $\v^{(n)}(-\vec \xi )$ and $\v^{(n)}_*\bigl(-\vec \xi-
    \p_{\m ''}\bigr)$ are almost orthogonal since they are concentrated around $\m={\bf 0}$ and $\m=\m''\not={\bf 0}$ respectively; thus
    $\lambda ^{(n)}\bigl(-\vec \xi-
    \p_{\m ''}\bigr)$ must be outside of the interval described in Theorem \ref{Thm3last}, while $\lambda ^{(n)}\bigl(-\vec \xi \bigr)$
    is inside twice shorter interval. Now, using \eqref{Vienna-4BHM} and \eqref{Vienna-5}, we obtain:
    $$\langle P_{{\bf 0}s}\v^{(n)}(-\vec \xi ), P_{{\bf 0}s}\v^{(n)}_*(-\vec \xi-\p_{\m''})\rangle= O\left(|\vec \xi | ^{-
|\vec \xi |^{\frac{1}{4}r_{s-1}}}\right)
|\vec \xi | ^{2r_{s-1}'|\vec \xi |^{2\gamma r_{s-2}}}=O\left(|\vec \xi |^{-
|\vec \xi |^{\frac{1}{8}r_{s-1} }}\right),$$
see \eqref{indrn}.  Using one more time \eqref{Spb-1+},
\eqref{Spb-1}, and considering \eqref{Aug11-3}, we obtain $B_{\m
''}=O\left( \xi _*^{-\xi _*^{\frac18 r_{n-3}(\xi _*) }}\right)$  for
the case  $|\vec \xi +\p_{\m''}|>\frac 12 |\vec \xi |$. Using this
estimate together with \eqref{Aug11-1} and considering that the
number of $\m''$ satisfying $0<\||\p_{\m''}\||\leq \xi
_*^{r_{n-3}(\xi _*)}$ does not exceed $16\xi _*^{4r_{n-3}(\xi _*)}$,
we obtain \eqref{Aug9-2*}. Substituting the estimates for $B_{\m
''}$ into \eqref{May14-12}, we obtain  \eqref{May14a-12}.


\end{proof}

It is easy to see that $T_n\subset S_n^*$. Therefore, $\|T_n\|\leq 1+o(1)$ and can be extended to the whole space $L_2({\cal G}_n)$. We still denote the extended operator by $T_n$, $T_n=S_n^*$. Therefore, $E_n$ is a self-adjoint operator.
\begin{lemma} \label{L:10.4}  Let $\mathcal{G}_n'\subset \mathcal{G}_n''\subset \mathcal{G}_n$. The following relation holds as $n\to \infty $:
\begin{equation} E_n(\mathcal{G}_n')E_n(\mathcal{G}_n'')=E_n(\mathcal{G}_n')+o(1), \label{Vienna-6} \end{equation}
where $o(1)$ is taken in the space of bounded operators and uniform in $\mathcal{G}_n'$, $\mathcal{G}_n''$. \end{lemma}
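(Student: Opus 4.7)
The plan is to reduce the lemma to Lemma 9.1 via the factorization $E_n(\mathcal{G})=S_n(\mathcal{G})T_n(\mathcal{G})$ with $T_n(\mathcal{G})=S_n(\mathcal{G})^*$ established in the preceding discussion. Writing
\begin{equation*}
E_n(\mathcal{G}_n')E_n(\mathcal{G}_n'')=S_n(\mathcal{G}_n')\bigl[T_n(\mathcal{G}_n')S_n(\mathcal{G}_n'')\bigr]T_n(\mathcal{G}_n''),
\end{equation*}
the heart of the argument is to identify the middle factor $T_n(\mathcal{G}_n')S_n(\mathcal{G}_n''):L_2(\mathcal{G}_n'')\to L_2(\mathcal{G}_n')$ with the restriction operator $R$ sending $g\in L_2(\mathcal{G}_n'')$ to $g|_{\mathcal{G}_n'}$, up to a bounded operator of norm $o(1)$.

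To prove this identification, I would take $f\in L_2(\mathcal{G}_n')$ and $g\in L_2(\mathcal{G}_n'')$, let $f_{\rm ext}$ denote the extension of $f$ by zero to $\mathcal{G}_n''$, and observe that $S_n(\mathcal{G}_n')f=S_n(\mathcal{G}_n'')f_{\rm ext}$ directly from the defining integral \eqref{ev}. Apply Lemma 9.1 on the larger set $\mathcal{G}_n''$ to the pair $(f_{\rm ext},g)$:
\begin{equation*}
\bigl(S_n(\mathcal{G}_n')f,S_n(\mathcal{G}_n'')g\bigr)_{L_2(\R^2)}=(f_{\rm ext},g)_{L_2(\mathcal{G}_n'')}+o(1)\|f_{\rm ext}\|\,\|g\|=(f,g|_{\mathcal{G}_n'})_{L_2(\mathcal{G}_n')}+o(1)\|f\|\,\|g\|,
\end{equation*}
with the $o(1)$ dominated by $\xi_*^{-r_{n-3}(\xi_*)}$ where $\xi_*\geq k_*$, hence uniform in $\mathcal{G}_n',\mathcal{G}_n''\subset\mathcal{G}_n$. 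By the duality $T_n(\mathcal{G}_n')=S_n(\mathcal{G}_n')^*$, this is precisely the statement that $T_n(\mathcal{G}_n')S_n(\mathcal{G}_n'')=R+o(1)$ as operators.

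To finish, I would note that the pointwise formula \eqref{eq2} for $T_n$ is independent of the underlying set on which $T_nF$ is realized, so $R\,T_n(\mathcal{G}_n'')=T_n(\mathcal{G}_n')$. Substituting back gives
\begin{equation*}
E_n(\mathcal{G}_n')E_n(\mathcal{G}_n'')=S_n(\mathcal{G}_n')T_n(\mathcal{G}_n')+S_n(\mathcal{G}_n')\cdot o(1)\cdot T_n(\mathcal{G}_n'')=E_n(\mathcal{G}_n')+o(1),
\end{equation*}
where the final $o(1)$ uses that $\|S_n(\mathcal{G})\|,\|T_n(\mathcal{G})\|\leq 1+o(1)$ uniformly, by the corollary to Lemma 9.1.

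The main obstacle is not conceptual but notational: one must carefully keep track of the three different sets $\mathcal{G}_n',\mathcal{G}_n'',\mathcal{G}_n$ and verify that the error term $o(1)$ extracted from Lemma 9.1 depends only on $\xi_*=\inf_{\vec\xi\in\mathcal{G}_n}|\vec\xi|\geq k_*$, not on the finer structure of $\mathcal{G}_n'$ or $\mathcal{G}_n''$. Once the extension-by-zero trick is in place, this uniformity is immediate from the explicit decay rate in Lemma 9.1, and everything else is formal manipulation.
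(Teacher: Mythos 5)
Your proof is correct and follows essentially the same route as the paper: factor $E_n=S_nT_n$, apply Lemma \ref{9.1} on the larger set $\mathcal{G}_n''$ to show the middle factor $T_n(\mathcal{G}_n')S_n(\mathcal{G}_n'')$ equals the restriction operator up to $o(1)$, and conclude using the uniform bounds on $\|S_n\|$, $\|T_n\|$. Your extension-by-zero identity $S_n(\mathcal{G}_n')f=S_n(\mathcal{G}_n'')f_{\rm ext}$ is just the adjoint of the paper's relation $T_n(\mathcal{G}_n')=I_n(\mathcal{G}_n')T_n(\mathcal{G}_n'')$, so the two arguments coincide up to taking adjoints.
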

\begin{corollary} \label{C:10.4-1} $E_n(\mathcal{G}_n'')E_n(\mathcal{G}_n')=E_n(\mathcal{G}_n')+o(1).$\end{corollary}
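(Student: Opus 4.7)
The plan is to derive the corollary directly from Lemma \ref{L:10.4} by taking adjoints, exploiting the self-adjointness of the operators $E_n(\mathcal{G}_n')$ and $E_n(\mathcal{G}_n'')$ already established in the text immediately before Lemma \ref{L:10.4} (via the identities $T_n \subset S_n^*$ and $E_n = S_n T_n = S_n S_n^*$).

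First I would record that, for any measurable bounded subset $\mathcal{G}\subset\mathcal{G}_n$, the operator $E_n(\mathcal{G})$ is bounded and self-adjoint, so in particular $E_n(\mathcal{G}_n')^* = E_n(\mathcal{G}_n')$ and $E_n(\mathcal{G}_n'')^* = E_n(\mathcal{G}_n'')$. Next, by Lemma \ref{L:10.4}, there is a bounded operator $R_n$ with $\|R_n\|\to 0$ as $n\to\infty$ (uniformly in the choice of $\mathcal{G}_n'\subset\mathcal{G}_n''\subset\mathcal{G}_n$) such that
\begin{equation*}
E_n(\mathcal{G}_n')E_n(\mathcal{G}_n'')=E_n(\mathcal{G}_n')+R_n.
\end{equation*}
Taking the Hilbert-space adjoint of both sides and using $(AB)^*=B^*A^*$ together with the self-adjointness of each factor yields
\begin{equation*}
E_n(\mathcal{G}_n'')E_n(\mathcal{G}_n')= E_n(\mathcal{G}_n'')^*E_n(\mathcal{G}_n')^*=\bigl(E_n(\mathcal{G}_n')E_n(\mathcal{G}_n'')\bigr)^*=E_n(\mathcal{G}_n')^*+R_n^*=E_n(\mathcal{G}_n')+R_n^*.
\end{equation*}
Since $\|R_n^*\|=\|R_n\|\to 0$ in the norm of bounded operators (and uniformly in $\mathcal{G}_n',\mathcal{G}_n''$), the remainder $R_n^*$ is again $o(1)$ in the required sense, which gives the corollary.

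There is essentially no technical obstacle here; the argument is a two-line consequence of Lemma \ref{L:10.4} plus self-adjointness of $E_n$. The only thing worth double-checking is the uniformity assertion: because Lemma \ref{L:10.4} states $\|R_n\|=o(1)$ uniformly in the pair $(\mathcal{G}_n',\mathcal{G}_n'')$, the adjoint $R_n^*$ inherits exactly the same uniform bound (the operator-norm adjoint is an isometry), so the corollary's $o(1)$ term has the same uniform character as the one in Lemma \ref{L:10.4}.
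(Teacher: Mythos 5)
Your argument is correct and matches the paper's own reasoning: the paper dispatches this corollary in one line by noting that $E_n$ is self-adjoint, which is exactly the adjoint-taking step you carry out (together with the routine observation that $\|R_n^*\|=\|R_n\|$, so the $o(1)$ term is preserved with the same uniformity). Nothing further is needed.
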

This corollary is valid, since $E_n$ is selfajoint.
\begin{corollary} \label{C:10.4-2}  $E_n^2(\mathcal{G}_n')=E_n(\mathcal{G}_n')+o(1)$ for any $\mathcal{G}_n'\subset \mathcal{G}_n$. \end{corollary}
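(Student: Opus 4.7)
The plan is to obtain this corollary as an immediate specialization of Lemma~\ref{L:10.4}. That lemma asserts that whenever $\mathcal{G}_n'\subset \mathcal{G}_n''\subset \mathcal{G}_n$ one has $E_n(\mathcal{G}_n')E_n(\mathcal{G}_n'')=E_n(\mathcal{G}_n')+o(1)$ in the operator norm, uniformly in the two sets. I would simply specialize by taking $\mathcal{G}_n''=\mathcal{G}_n'$; the hypothesis $\mathcal{G}_n'\subset \mathcal{G}_n'\subset \mathcal{G}_n$ is trivially satisfied, and the conclusion becomes $E_n(\mathcal{G}_n')E_n(\mathcal{G}_n')=E_n(\mathcal{G}_n')+o(1)$, which is exactly the asserted identity $E_n^2(\mathcal{G}_n')=E_n(\mathcal{G}_n')+o(1)$.

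Since the $o(1)$ term in Lemma~\ref{L:10.4} is already uniform over the admissible choices of the two sets, restricting to the diagonal $\mathcal{G}_n''=\mathcal{G}_n'$ preserves that uniformity, and the resulting estimate holds uniformly over all bounded measurable $\mathcal{G}_n'\subset \mathcal{G}_n$. In particular there is nothing further to verify; neither the selfadjointness of $E_n$ (used in Corollary~\ref{C:10.4-1} to swap the order of the two factors) nor any additional information about the eigenfunctions $\Psi_n(\k,\x)$ enters at this stage. The only ``step'' in the argument is the observation that the nested inclusion in Lemma~\ref{L:10.4} is allowed to be an equality, and once that is noted the corollary is immediate.
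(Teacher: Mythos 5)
Your proposal is correct and coincides with the paper's (implicit) derivation: the corollary is stated without separate proof precisely because it is the diagonal case $\mathcal{G}_n''=\mathcal{G}_n'$ of Lemma~\ref{L:10.4}, whose $o(1)$ is uniform in the two sets, so the specialization is immediate. You are also right that selfadjointness of $E_n$ is only needed for Corollary~\ref{C:10.4-1}, not here.
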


\begin{proof}  Let $I_n(\mathcal{G}_n')$ be the projection from $L_2(\mathcal{G}_n'')$ to $L_2(\mathcal{G}_n')$. It is easy to see that $T_n(\mathcal{G}_n')=I_n(\mathcal{G}_n')T_n(\mathcal{G}_n'')$. Hence, $T_n(\mathcal{G}_n')S_n(\mathcal{G}_n'')=I_n(\mathcal{G}_n')T_n(\mathcal{G}_n'')S_n(\mathcal{G}_n'')$. By \eqref{May14a-12} for set $\mathcal{G}_n''$,
$T_n(\mathcal{G}_n'')S_n(\mathcal{G}_n'')=id(\mathcal{G}_n'')+o(1)$, where $id(\mathcal{G}_n'')$ is the identity in $L_2(\mathcal{G}_n'')$. It immediately follows $T_n(\mathcal{G}_n')S_n(\mathcal{G}_n'')=I_n(\mathcal{G}_n')+o(1)$. Substituting the last relation into the formula $ E_n(\mathcal{G}_n')E_n(\mathcal{G}_n'')= S_n(\mathcal{G}_n')T_n(\mathcal{G}_n')
 S_n(\mathcal{G}_n'')T_n(\mathcal{G}_n'')$, we obtain \eqref{Vienna-6}.
\end{proof}

Let \begin{equation}
\mathcal{G}_{n, \lambda}=\{ \k \in {\mathcal{G}}_n:
\lambda ^{(n)}(\k) < \lambda\}. \label{d} \end{equation}
 This set is Lebesgue measurable, since ${\mathcal{G}}_n $ is
open and $\lambda ^{(n)}(\k)$ is continuous on $
{\mathcal{G}}_n$.

\begin{lemma}\label{L:abs.6}
$\left|{\mathcal{G}}_{n,\lambda+\varepsilon} \setminus
{\mathcal{G}}_{n,\lambda}\right| \leq 2\pi \lambda ^{-1+\frac 1l}\varepsilon $ when $0\leq
\varepsilon \leq 1$. \end{lemma}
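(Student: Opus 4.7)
The plan is to parameterize the set difference $\mathcal{G}_{n,\lambda+\varepsilon}\setminus \mathcal{G}_{n,\lambda}$ by the pair $(\lambda',\varphi)$ where $\lambda\le\lambda'<\lambda+\varepsilon$ and $\varphi\in\omega^{(n)}(\lambda')$, via the map
\begin{equation*}
(\lambda',\varphi)\longmapsto \k=\varkappa^{(n)}(\lambda',\varphi)\vec\nu(\varphi).
\end{equation*}
By definition of $\mathcal{G}_{n,\lambda}$ and of the isoenergetic curve $\mathcal{D}_n(\lambda')$, a point $\k\in\mathcal{G}_n$ lies in $\mathcal{G}_{n,\lambda+\varepsilon}\setminus\mathcal{G}_{n,\lambda}$ iff $\lambda^{(n)}(\k)\in[\lambda,\lambda+\varepsilon)$; by Lemma \ref{ldk-3IVlast} applied to each $\lambda'$ in this interval, such a $\k$ is uniquely written in the displayed form. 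The strict monotonicity of $\varkappa^{(n)}(\cdot,\varphi)$ (which follows from $\partial\varkappa^{(n)}/\partial\lambda>0$, cf.\ the analogue of \eqref{dk1} at step $n$) makes this map injective.

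The main step is the Jacobian computation. In polar coordinates $\k=\varkappa\vec\nu(\varphi)$ one has $d\k=\varkappa\,d\varkappa\,d\varphi$, hence by the change of variables $\varkappa=\varkappa^{(n)}(\lambda',\varphi)$,
\begin{equation*}
\bigl|\mathcal{G}_{n,\lambda+\varepsilon}\setminus \mathcal{G}_{n,\lambda}\bigr|
\;\le\;\int_\lambda^{\lambda+\varepsilon}\!\int_{\omega^{(n)}(\lambda')}
\varkappa^{(n)}(\lambda',\varphi)\,\frac{\partial\varkappa^{(n)}}{\partial\lambda}(\lambda',\varphi)
\,d\varphi\,d\lambda'.
\end{equation*}
I would then invoke the asymptotics $\varkappa^{(n)}(\lambda',\varphi)=(\lambda')^{1/2l}(1+o(1))$ from Corollary \ref{Dec18} together with the step-$n$ version of \eqref{dk1}, namely $\partial\varkappa^{(n)}/\partial\lambda=\tfrac{1}{2l}(\lambda')^{-(2l-1)/2l}(1+o(1))$. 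These two combine to give an integrand equal to $\tfrac{1}{2l}(\lambda')^{-1+1/l}(1+o(1))$.

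Using $L(\omega^{(n)}(\lambda'))\le 2\pi$ and the fact that $\lambda'\in[\lambda,\lambda+\varepsilon]$ with $\varepsilon\le 1$, so that $(\lambda')^{-1+1/l}\le\lambda^{-1+1/l}(1+o(1))$, the double integral is bounded by
\begin{equation*}
\frac{2\pi}{2l}\,\lambda^{-1+1/l}\varepsilon\,(1+o(1))\;\le\;2\pi\,\lambda^{-1+1/l}\varepsilon,
\end{equation*}
for all sufficiently large $\lambda$, absorbing the factor $1/(2l)\cdot(1+o(1))\le 1$ into the $o$-correction. The only mildly delicate point is that $\omega^{(n)}(\lambda')$ may a priori vary with $\lambda'$, but this is harmless: since we only need an upper bound, and since by the stability remark preceding the definition of $\mathcal{G}_n$ (see \eqref{Gn}) the sets may be taken locally constant on intervals of length $\varepsilon_0^{(n)}\gg\varepsilon$ in $\lambda'$, no further argument is required. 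The genuine content of the estimate therefore reduces entirely to the Jacobian identity, and I expect no substantive obstacle beyond transcribing the differentiability estimates of Lemma \ref{ldk-3IVlast} into the polar Jacobian.
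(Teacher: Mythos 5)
Your argument is correct and is essentially the paper's own proof: the paper simply cites Lemma \ref{ldk-3IVlast} and the analogous Lemma 2 of \cite{KL2}, which runs exactly this polar-coordinate computation, parameterizing the layer by $(\lambda',\varphi)$ and bounding the Jacobian by $\varkappa^{(n)}\,\partial\varkappa^{(n)}/\partial\lambda=\tfrac{1}{2l}\lambda^{-1+1/l}(1+o(1))$ with angular measure at most $2\pi$. One small slip: your claim $\varepsilon_0^{(n)}\gg\varepsilon$ is backwards ($\varepsilon_0^{(n)}$ is super-exponentially small while $\varepsilon$ may be as large as $1$), but this is harmless, since the upper bound only needs that for each fixed $\lambda'$ the angular domain has measure at most $2\pi$, not that $\omega^{(n)}$ is constant in $\lambda'$ over the whole interval.
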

The proof is based on Lemma \ref{ldk-3IVlast} and completely analogous to that of Lemma 2 in \cite{KL2}.

 By (\ref{s}),
$E_n\left({\mathcal{G}}_{n,\lambda+\varepsilon}\right)-E_n\left({\mathcal{G}}_{n,\lambda}\right)
=E_n\left({\mathcal{G}}_{n,\lambda+\varepsilon}\setminus
    {\mathcal{G}}_{n,\lambda}\right)$. Let us obtain an estimate for
    this projection.
\begin{lemma}\label{L:abs.7} For any $F \in C_0^{\infty}(\R^2)$ and
$0\leq \varepsilon \leq 1$, \begin{equation}
\left\|\bigl(E_n({\mathcal{G}}_{n,\lambda+\varepsilon})-E_n({\mathcal{G}}_{n,\lambda})\bigr)F\right\|^2_{L_2(\R^2)}
 \leq C( F)\lambda ^{-1+\frac 1l} \epsilon , \label{tootoo1}
 \end{equation}
 where $C(F)$ is uniform with respect to $n$ and $\lambda$.
\end{lemma}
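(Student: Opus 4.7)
The plan is to reduce the estimate to two ingredients already at hand: the thin-shell bound of Lemma~\ref{L:abs.6} and the almost-isometry of $S_n$ embodied in the corollary \eqref{Vienna-1} of Lemma~\ref{9.1}. Set $\mathcal{G}_n' := \mathcal{G}_{n,\lambda+\varepsilon}\setminus\mathcal{G}_{n,\lambda}$. This set is bounded, since on $\mathcal{G}_n$ one has $\lambda^{(n)}(\k) = |\k|^{2l} + o(1)$, so the constraint $\lambda^{(n)}(\k) < \lambda+\varepsilon \leq \lambda+1$ forces $|\k| \leq 2\lambda^{1/2l}$ for $\lambda$ large. By \eqref{ST}, $E_n(\mathcal{G}_n')F = S_n(\mathcal{G}_n')T_n(\mathcal{G}_n')F$, and
\begin{equation*}
\|E_n(\mathcal{G}_n')F\|_{L_2(\R^2)}^2 = \bigl(S_n(\mathcal{G}_n')T_n(\mathcal{G}_n')F,\ S_n(\mathcal{G}_n')T_n(\mathcal{G}_n')F\bigr)_{L_2(\R^2)}.
\end{equation*}

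First, applying the corollary \eqref{Vienna-1} of Lemma~\ref{9.1} with $f=g=T_n(\mathcal{G}_n')F$, I would get
\begin{equation*}
\|E_n(\mathcal{G}_n')F\|_{L_2(\R^2)}^2 \leq (1+o(1))\,\|T_n(\mathcal{G}_n')F\|_{L_\infty(\mathcal{G}_n')}^2 \, |\mathcal{G}_n'|.
\end{equation*}
Next, I would bound $T_nF$ pointwise. From \eqref{eq2},
\begin{equation*}
T_nF(\k) = \frac{1}{2\pi}\int_{\R^2} F(\x)\,\overline{\Psi_n(\k,\x)}\,d\x,
\end{equation*}
and since $F \in C_0^\infty(\R^2)$ is in particular in $L_1(\R^2)$, H\"older's inequality combined with the uniform bound $\|\Psi_n(\k,\cdot)\|_{L_\infty(\R^2)} \leq 1 + Ck^{-\gamma_0+2\delta}$ of Corollary~\ref{C:Psi} gives
\begin{equation*}
|T_nF(\k)| \leq \frac{1}{2\pi}\|F\|_{L_1(\R^2)}\,\|\Psi_n(\k,\cdot)\|_{L_\infty(\R^2)} \leq C_0\,\|F\|_{L_1(\R^2)},
\end{equation*}
with $C_0$ independent of $n$ and of $\k \in \mathcal{G}_n'$ (recall $|\k|$ is bounded below by roughly $\frac{1}{2}\lambda^{1/2l}$).

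Finally, Lemma~\ref{L:abs.6} yields $|\mathcal{G}_n'| \leq 2\pi\,\lambda^{-1+1/l}\varepsilon$. Combining the three estimates produces \eqref{tootoo1} with $C(F) := 2\pi C_0^2 \|F\|_{L_1(\R^2)}^2 (1+o(1))$. The only point demanding care is the uniformity in $n$ of the factor $(1+o(1))$ coming from \eqref{Vienna-1}; but the quantitative form $|o(1)| \leq \xi_*^{-r_{n-3}(\xi_*)}$ in Lemma~\ref{9.1}, applied with $\xi_* \geq \tfrac{1}{2}\lambda^{1/2l}$, shows this factor in fact tends to $0$ as $n \to \infty$ (for fixed $\lambda$), so it is uniformly bounded. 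This is the only conceivable obstacle, and it dissolves immediately once $\xi_*$ is seen to stay bounded below by a fixed multiple of $\lambda^{1/2l}$ throughout the limiting procedure.
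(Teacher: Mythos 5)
Your proof is correct and follows essentially the same route as the paper: apply \eqref{ST} and the corollary \eqref{Vienna-1} of Lemma~\ref{9.1} with $f=g=T_nF$, bound $\|T_nF\|_{L_\infty(\mathcal{G}_n')}$ via \eqref{eq2} and Corollary~\ref{C:Psi}, and invoke Lemma~\ref{L:abs.6} for $|\mathcal{G}_n'|$. The extra remarks on the boundedness of $\mathcal{G}_n'$ and the uniformity of the $(1+o(1))$ factor are consistent with, and merely make explicit, what the paper leaves implicit.
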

\begin{proof} Let $\mathcal{G}_n'=\mathcal{G}_{n,\lambda+\varepsilon}\setminus \mathcal{G}_{n,\lambda}$. Using the definition \eqref{ST} of $E_n$ and formula \eqref{Vienna-1}  with $f=g=T_nF$, we obtain
\begin{equation}
\|E_n(\mathcal{G}_n')F\|^2_{L_2(\R^2)}<(1+o(1))\|T_nF\|^2_{L_{\infty}(\mathcal{G}_n')}|\mathcal{G}_n'|. \label{Vienna-7} \end{equation}
Using \eqref{eq2} and Corollary \ref{C:Psi}  we easily get $ \|T_nF\|_{L_{\infty}(\mathcal{G}_n')}<2\|F\|_{L_1(\R^2)}$. Substituting this estimate into \eqref{Vienna-7} and using Lemma \ref{L:abs.6}, we obtain \eqref{tootoo1}.

\end{proof}

\subsection{Sets ${\mathcal{G}}_{\infty}$ and
${\mathcal{G}}_{\infty ,\lambda }$} \label{S:8.1}
By construction,
$
    \mathcal{G}_{n+1}\subset \mathcal{G}_n,$
    $\mathcal{G}_{\infty}=\bigcap_{n=1}^{\infty}{\mathcal{G}}_n. $
Therefore, the perturbation formulas for $\lambda ^{(n)}(\k )$ and $\Psi _n(\k )$ hold in
$\mathcal{G}_{\infty}$ for all $n$.
 Let
     \begin{equation}
     \mathcal{G}_{\infty, \lambda }=\left\{\k \in
    \mathcal{G}_{\infty }: \lambda _{\infty }(\k )<\lambda
    \right\}. \label{dd}
    \end{equation}
The function $\lambda _{\infty }(\k )$ is a Lebesgue
measurable function, since it is a limit of the sequence of
measurable functions. Hence, the set  $\mathcal{G}_{\infty, \lambda
}$ is measurable.

\begin{lemma}\label{add6*} The measure of the symmetric difference of
two sets $\mathcal{G}_{\infty, \lambda }$ and $\mathcal{G}_{n,
\lambda}$ converges
 to zero as $n \to
\infty$ uniformly in $\lambda$ in every bounded interval:
    $$\lim _{n\to \infty }\left|\mathcal{G}_{\infty, \lambda }\Delta \mathcal{G}_{n, \lambda
    }\right|=0.$$
 \end{lemma}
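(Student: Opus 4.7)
The plan is to decompose the symmetric difference into three pieces and control each one by combining the convergence of $\mathcal{G}_n$ to $\mathcal{G}_\infty$ (a decreasing intersection) with the rapid convergence $\lambda^{(n)}\to\lambda_\infty$ from \eqref{6.2.3} and the level-set thickness bound of Lemma~\ref{L:abs.6}. Since $\mathcal{G}_\infty\subset\mathcal{G}_n$, the symmetric difference splits as
\begin{equation}
\mathcal{G}_{n,\lambda}\Delta\mathcal{G}_{\infty,\lambda}
\subset A_n(\lambda)\cup B_n(\lambda)\cup C_n(\lambda),\nonumber
\end{equation}
where $A_n(\lambda)=(\mathcal{G}_n\setminus\mathcal{G}_\infty)\cap\{\lambda^{(n)}(\k)<\lambda\}$, $B_n(\lambda)=\{\k\in\mathcal{G}_\infty:\lambda^{(n)}(\k)<\lambda\leq\lambda_\infty(\k)\}$, and $C_n(\lambda)=\{\k\in\mathcal{G}_\infty:\lambda_\infty(\k)<\lambda\leq\lambda^{(n)}(\k)\}$. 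Throughout, I restrict $\lambda$ to a bounded interval $[\lambda_*,\Lambda]$.

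First I would bound $A_n(\lambda)$. By \eqref{6.2.3}, for $\k\in\mathcal{G}_n$ one has $\lambda^{(n)}(\k)=|\k|^{2l}(1+o(1))$, so the condition $\lambda^{(n)}(\k)<\lambda\leq\Lambda$ forces $\k\in B_{R}$ for a fixed $R=R(\Lambda)$ independent of $n$ and $\lambda$. Hence $|A_n(\lambda)|\leq|(\mathcal{G}_n\setminus\mathcal{G}_\infty)\cap B_R|$. Since $\{\mathcal{G}_n\}$ is a decreasing sequence of measurable sets whose intersection is $\mathcal{G}_\infty$, and since $|\mathcal{G}_1\cap B_R|<\infty$, the monotone convergence theorem gives $|(\mathcal{G}_n\setminus\mathcal{G}_\infty)\cap B_R|\to 0$ as $n\to\infty$; this bound is uniform in $\lambda\in[\lambda_*,\Lambda]$ because $R$ is.

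Next I would bound $B_n(\lambda)$ and $C_n(\lambda)$ using Lemma~\ref{L:abs.6}. Set
\begin{equation}
\delta_n=\sup_{\k\in\mathcal{G}_\infty\cap B_R}|\lambda^{(n)}(\k)-\lambda_\infty(\k)|,\nonumber
\end{equation}
which, by the telescoping estimate $|\lambda^{(n)}(\k)-\lambda_\infty(\k)|\leq\sum_{m\geq n}|\lambda^{(m+1)}(\k)-\lambda^{(m)}(\k)|$ together with the super-exponentially small bounds \eqref{6.2.3}, tends to $0$ uniformly on $B_R$. If $\k\in B_n(\lambda)$ then $\lambda-\delta_n<\lambda^{(n)}(\k)<\lambda$, so $\k\in\mathcal{G}_{n,\lambda}\setminus\mathcal{G}_{n,\lambda-\delta_n}$; likewise if $\k\in C_n(\lambda)$ then $\k\in\mathcal{G}_{n,\lambda+\delta_n}\setminus\mathcal{G}_{n,\lambda}$. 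Applying Lemma~\ref{L:abs.6} (with $\varepsilon=\delta_n\leq 1$ for $n$ large) gives
\begin{equation}
|B_n(\lambda)|+|C_n(\lambda)|\leq 4\pi\,\lambda_*^{-1+\frac1l}\,\delta_n,\nonumber
\end{equation}
which is uniform in $\lambda\in[\lambda_*,\Lambda]$ and tends to zero with $n$. Combining the three estimates proves the lemma.

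The main obstacle is the control of $A_n(\lambda)$: extracting from Theorem~\ref{T:Dec10} and the measure estimate \eqref{full} a version in which the \emph{difference} $|(\mathcal{G}_n\setminus\mathcal{G}_\infty)\cap B_R|$ (not just each set individually) is small uniformly in $\lambda$. The monotonicity $\mathcal{G}_{n+1}\subset\mathcal{G}_n$, already established in Section~\ref{chapt8}, is precisely what allows us to upgrade the individual asymptotics $|\mathcal{G}_n\cap B_R|=|B_R|(1+O(R^{-\gamma_3}))$ to genuine convergence of $|\mathcal{G}_n\cap B_R|$ to $|\mathcal{G}_\infty\cap B_R|$ via monotone convergence, and this is the step that makes the whole argument work.
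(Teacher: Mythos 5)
Your proof is correct: the decomposition of the symmetric difference into $A_n\cup B_n\cup C_n$, the monotone-convergence argument for $A_n$ (using $\mathcal{G}_{n+1}\subset\mathcal{G}_n$, $\bigcap_n\mathcal{G}_n=\mathcal{G}_\infty$, and the fact that $\lambda\leq\Lambda$ confines everything to a fixed ball $B_{R(\Lambda)}$), and the control of $B_n,C_n$ by the uniform, super-exponential convergence $\lambda^{(n)}\to\lambda_\infty$ combined with the shell estimate of Lemma~\ref{L:abs.6} is precisely the intended argument. The paper itself gives no details and simply defers to Lemma 4 of \cite{KL2}, and your reasoning follows essentially that same route, so nothing needs to be corrected.
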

 The proof is completely analogous to the proof of Lemma 4 in \cite{KL2}.

     \subsection{Projections $E(\mathcal{G}_{\infty , \lambda })$}

In this section, we show that the operators
$E_n(\mathcal{G}_{\infty , \lambda })$ have a strong limit
$E_{\infty }(\mathcal{G}_{\infty , \lambda })$ in $L_2(\R^2)$ as $n$
tends to infinity. The operator $E_{\infty }(\mathcal{G}_{\infty ,
\lambda })$ is a spectral projection of $H$. It can be represented
in the form $E_{\infty }(\mathcal{G}_{\infty , \lambda })=S_\infty
T_{\infty }$, where $S_{\infty }$ and $T_{\infty }$ are strong
limits of $S_n(\mathcal{G}_{\infty , \lambda })$ and
$T_n(\mathcal{G}_{\infty , \lambda })$, respectively.  For any $F\in
C_0^{\infty }(\R^2)$, we show: \begin{equation} E_{\infty }\left(
\mathcal{G}_{\infty , \lambda }\right)F=\frac{1}{4\pi ^2}\int
    _{ \mathcal{G}_{\infty , \lambda }}\bigl( F,\Psi _{\infty }(\k )\bigr) \Psi _{\infty }(\k,\x)\,d\k ,\label{s1u}
    \end{equation}
    \begin{equation} HE_{\infty }\left(
\mathcal{G}_{\infty , \lambda }\right)F=\frac{1}{4\pi ^2}\int
    _{ \mathcal{G}_{\infty , \lambda }}\lambda _{\infty }(\k )\bigl( F,\Psi _{\infty }(\k )\bigr) \Psi _{\infty }(\k,\x)\,d\k .\label{s1uu}
    \end{equation}
Using properties of $E_{\infty }\left( \mathcal{G}_{\infty , \lambda
}\right)$, we  prove absolute continuity of the branch of the
spectrum corresponding to functions $\Psi _{\infty }(\k )$.

We consider the sequence of operators $S_n(\mathcal{G}_{\infty ,
\lambda })$ which are given by (\ref{ev}) with $\mathcal
G_n'=\mathcal{G}_{\infty , \lambda }$.
\begin{lemma} We have
\begin{equation} \label{Vienna-2}\left\|(S_n(\mathcal{G}_{\infty , \lambda})-S_{n-1}(\mathcal{G}_{\infty , \lambda}))
f\right\|_{L_2(\R^2)}< C\|f\|_{L_2(\mathcal{G}_{\infty , \lambda
})}\xi_*^{-\frac14\xi _*^{r_{n-2}(\xi_*)}}.
\end{equation}
\end{lemma}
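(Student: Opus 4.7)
The proof mirrors the Fourier-side analysis of Lemma 9.1. First I would take Fourier transforms and write
$$
\widehat{(S_n-S_{n-1})f}(\vec\xi)=\sum_{\m\in\Omega(r_{n-1})}w^{(n)}_\m(-\vec\xi-\p_\m)\,f(-\vec\xi-\p_\m)\,\chi(\mathcal{G}_{\infty,\lambda},-\vec\xi-\p_\m),
$$
where $w_\m^{(n)}(\k):=v_\m^{(n)}(\k)-v_\m^{(n-1)}(\k)$ (with $v_\m^{(n-1)}$ extended by zero outside $\Omega(r_{n-2})$). Applying Parseval and the change of variables $-\vec\xi-\p_\m\mapsto\k$ used in Lemma 9.1 produces
$$
\|(S_n-S_{n-1})f\|_{L_2(\R^2)}^2=\sum_{\m''\in\Z^2}\int_{\R^2}B^{(n)}_{\m''}(\k)\,f(\k)\,\overline{f(\k+\p_{\m''})}\,\chi\chi\,d\k,\qquad B^{(n)}_{\m''}(\k)=\langle\w^{(n)}(\k),\w^{(n)}_*(\k+\p_{\m''})\rangle,
$$
with $\w^{(n)}:=\v^{(n)}-\v^{(n-1)}$ and $\w_*^{(n)}$ its $\m''$-shift in the sense of Section 9.1.

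Second, I would treat the diagonal term $\m''=0$. There $B_0^{(n)}(\k)=\|\w^{(n)}(\k)\|_{\ell^2}^2$, and because $\v^{(n)}$, $\v^{(n-1)}$ are unit eigenvectors of close one-dimensional projections, the standard bound $\|\v^{(n)}-\v^{(n-1)}\|_{\ell^2}\le C\|\E^{(n)}-\E^{(n-1)}\|_1$ combined with \eqref{perturbation*-3last} gives
$$
\|\w^{(n)}(\k)\|_{\ell^2}\le Ck^{-\tfrac{\beta}{10}k^{r_{n-2}-r_{n-3}}-\beta}.
$$
Since \eqref{indrn} forces $r_{n-3}<\gamma\cdot10^{-7}r_{n-2}$, we have $k^{r_{n-2}-r_{n-3}}\ge k^{(1-\gamma\cdot10^{-7})r_{n-2}}$, which makes this bound stronger than the target $\xi_*^{-\frac14\xi_*^{r_{n-2}(\xi_*)}}$ for $|\k|\ge\xi_*$ when $k_*$ is large. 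Cauchy--Schwarz in $\k$ then yields the diagonal contribution $\le\sup_\k\|\w^{(n)}(\k)\|^2\,\|f\|^2_{L_2(\mathcal{G}_{\infty,\lambda})}$, well within the required bound.

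Third, for the off-diagonal contributions $\m''\ne0$ I would reproduce the scale decomposition from the proof of Lemma 9.1. Split by the scale $s$ defined by $|\k|^{r_{s-1}(|\k|)}\le\tfrac12\||\p_{\m''}\||<|\k|^{r_s(|\k|)}$ (and symmetrically for $\k+\p_{\m''}$). On each such piece the supports of $\w^{(n)}(\k)$ and $\w^{(n)}_*(\k+\p_{\m''})$ are effectively disjoint after truncation to a common projector $P_{\mathbf{0}s}$, up to tails controlled by \eqref{Feb6b-3last} and \eqref{perturbation*-3last}. The resulting bound on $B^{(n)}_{\m''}(\k)$ is again doubly-exponentially small; summing over $\m''$ by the two-range estimate \eqref{Aug9-2}--\eqref{Aug9-2*} absorbs the combinatorial loss $|\Omega(r_{n-1})|$ and keeps the total within $C\xi_*^{-\frac14\xi_*^{r_{n-2}(\xi_*)}}\|f\|_{L_2}$.

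The main obstacle is the book-keeping in the off-diagonal step: unlike $\v^{(n)}$, the difference $\w^{(n)}$ does not solve a clean eigenvalue equation, so one has to carry both identities \eqref{Vienna-4BHM+} and \eqref{Vienna-4'+} (for $\v^{(n)}$ and $\v^{(n-1)}$ separately) through the near-orthogonality argument and subtract. The slack factor $\tfrac14$ in the target exponent is precisely what allows the off-diagonal sum and the combinatorial factor $|\Omega(r_{n-1})|\le k^{4r_{n-1}}$ to be swallowed, using $r_{n-1}<k^{\gamma\cdot10^{-7}r_{n-2}}$. The diagonal estimate is essentially automatic from \eqref{perturbation*-3last}; the off-diagonal scale splitting is the only delicate point, and it is a direct adaptation of the argument already given in Lemma 9.1.
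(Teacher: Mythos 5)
Your Parseval set-up and your definition of $B^{(n)}_{\m''}$ as an inner product of the differences $\w^{(n)}=\v^{(n)}-\v^{(n-1)}$ with their shift coincide with the paper, and your diagonal term is handled exactly as needed (via \eqref{perturbation*-3last}). The gap is in the off-diagonal step, and it is a wrong route rather than a book-keeping issue. The whole point of this lemma, as opposed to Lemma \ref{9.1}, is that \emph{both} factors in $\tilde B_{\m''}(\vec\xi)$ are differences of consecutive approximants, so no orthogonality or eigenvalue-separation argument is needed at all: a plain Cauchy--Schwarz in $\m$ together with \eqref{perturbation*-3last} applied at both points gives
$\left|\tilde B_{\m''}(\vec \xi )\right|=O\left(|\vec \xi |^{-|\vec \xi |^{\frac{1}{2}r_{n-2}(|\vec \xi |)}}\right)O\left(|\vec\xi+\p_{\m''}|^{-|\vec\xi+\p_{\m''}|^{\frac{1}{2}r_{n-2}(|\vec\xi+\p_{\m''}|)}}\right)$,
and the only structural facts then used are that $\tilde B_{\m''}(\vec\xi)=0$ once $\||\p_{\m''}\||>2|\vec\xi|^{r_{n-1}(|\vec\xi|)}$ (both vectors are supported in $\Omega(r_{n-1})$ around their centers) and that one of the two super-exponentially small factors can be traded for $\||\p_{\m''}\||^{-8}$ to make the sum over $\m''$ trivial. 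This yields \eqref{Vienna-2} with the exponent $\frac14\xi_*^{r_{n-2}(\xi_*)}$, whose source is the exponent $\frac{\beta}{10}k^{r_{n-2}-r_{n-3}}$ of \eqref{perturbation*-3last} — not any slack left over from an orthogonality argument.

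By contrast, the machinery you propose to import from Lemma \ref{9.1} (truncation to $P_{{\bf 0}s}$, approximate eigenvalue equations \eqref{Vienna-4BHM+}, \eqref{Vienna-4'+}, separation of $\lambda^{(n)}(-\vec\xi)$ and $\lambda^{(n)}(-\vec\xi-\p_{\m''})$ by $\varepsilon_0^{(s)}/2$) can only produce bounds of the strength that Lemma \ref{9.1} itself delivers: for $\m''$ with small $\||\p_{\m''}\||$ the relevant scale $s$ is small (possibly $s=1$), so the per-term bound is of the type $\xi_*^{-\xi_*^{\frac18 r_{s-1}}}$, and the summation estimates \eqref{Aug9-2}--\eqref{Aug9-2*} you invoke cap the total at roughly $\xi_*^{-r_{n-3}(\xi_*)}$. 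Since by \eqref{indrn} $r_{n-2}>k^{r_{n-4}}\gg r_{n-3}$, this is incomparably weaker than the required $\xi_*^{-\frac14\xi_*^{r_{n-2}(\xi_*)}}$, so your claim that this route ``keeps the total within'' the target bound does not hold. To repair the proposal, drop the near-orthogonality step for $\m''\neq{\bf 0}$ and simply use the smallness of both difference factors as above.
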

\begin{proof} Considering as in the proof of Lemma \ref{9.1}, we obtain $$
    \left\|(S_n-S_{n-1})f\right\|_{L_2(\R^2)}^2=\left\|\widehat{S_nf}-\widehat{S_{n-1}f}\right\|_{L_2(\R^2)}^2=$$
  \begin{equation}     \int _{\R^2 }\sum _{\m''}\tilde B_{\m''}(\vec \xi )f \bigl(-\vec \xi\bigr)\bar f \bigl(-\vec \xi-
    \p_{\m''}\bigr)\chi \bigl(\mathcal{G}_{\infty , \lambda },-\vec \xi \bigr)\chi \bigl(\mathcal{G}_{\infty , \lambda },-\vec \xi-
    \p_{\m''}\bigr)\ d\vec \xi , \label{Vienna-3} \end{equation}
    \begin{align}\nonumber &\tilde B_{\m''}(\vec \xi )=\cr & \sum _{\m \in \Omega (r_{n-1},|\vec \xi |):\,\m+\m''\in \Omega (r_{n-1},|\vec \xi +\p_{\m''}|)}\left(v_{\m}^{(n)}-v_{\m}^{(n-1)}\right)\bigl(-\vec \xi\bigr)\overline{\left(v_{\m +\m''}^{(n)}-v_{\m +\m''}^{(n-1)}\right)}\bigl(-\vec \xi-
    \p_{\m ''}\bigr).\end{align}
    Assume for definiteness that $|\vec \xi +\p_{\m''}|\leq |\vec \xi |$.
    If $\||\p_{\m''}\||>2|\vec \xi |^{r_{n-1}(|\vec \xi |) }$, then $\tilde B_{\m''}(\vec \xi )=0$.
    Let $\vec \xi :\||\p_{\m''}\||\leq 2|\vec \xi |^{r_{n-1}(|\vec \xi |) }$.
   Using \eqref{perturbation*-3last} with $n$ instead of $n+1$, we easily obtain:
    \begin{equation}\left|\tilde B_{\m''}(\vec \xi )\right|=O\left(|\vec \xi |^{-
|\vec \xi |^{\frac{1}{2}r_{n-2}(|\vec \xi |) }}\right) O\left(|\vec
\xi +\p_{\m''}|^{- |\vec \xi +\p_{\m''}|^{\frac{1}{2}r_{n-2}(|\vec
\xi +\p_{\m''}|) }}\right).\label{Aug13-3}
\end{equation} Considering  \eqref{indrn} with $n-1$ instead of and
$n$ and taking into account that $\||\p_{\m''}\||\leq 2|\vec \xi
|^{r_{n-1}(|\vec \xi |) }$, we easily get:
$$\left|\tilde B_{\m''}(\vec \xi )\right|=\||\p_{\m''}|\|^{-8} O\left(\xi _*^{-
\xi _*^{\frac{1}{2}r_{n-2}(\xi_*) }}\right).$$
 Summarizing the last estimate for $\m ''\neq {\bf 0}$ and using \eqref{Aug13-3} for $\m''={\bf 0}$,
 we arrive at \eqref{Vienna-2}.
 \end{proof}

 By \eqref{Vienna-2}, the sequence of operators $S_n(\mathcal{G}_{\infty ,
\lambda })$ is a Cauchy sequence in the space of bounded operators.
We denote its limit by $S_{\infty }(\mathcal{G}_{\infty , \lambda
    })$. Note that the convergence of $S_n(\mathcal{G}_{\infty , \lambda
    })$ to $S_{\infty }(\mathcal{G}_{\infty , \lambda
    }) $  is uniform in $\lambda $ when $\lambda >\lambda _*$.
\begin{lemma} \label{Lem2}  The operator $S_{\infty }(\mathcal{G}_{\infty , \lambda
    })$
    can be described by the
    formula
    \begin{equation}
    (S_{\infty }f) (\x)= \frac{1}{2\pi }\int _{\mathcal{G}_{\infty , \lambda
    }}f (\k)\Psi _{\infty }(\k ,\x) d\k  \label{ev1}
    \end{equation}
for any $f \in L_{\infty }\left( \mathcal{G}_{\infty , \lambda
}\right)$.
\end{lemma}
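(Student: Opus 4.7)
The strategy is to show that the natural candidate operator $\tilde{S}_{\infty}$ defined by the right-hand side of \eqref{ev1} is well-defined as a bounded operator $L_{\infty}(\mathcal{G}_{\infty,\lambda}) \to L_2(\R^2)$ and coincides with the norm limit $S_{\infty}$ already established in \eqref{Vienna-2}. By uniqueness of limits, it then suffices to show $S_n f \to \tilde{S}_{\infty} f$ in $L_2(\R^2)$ for every $f \in L_{\infty}(\mathcal{G}_{\infty,\lambda})$.

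First, I would verify well-definedness of $\tilde{S}_{\infty}$. Using the quasi-periodic representation \eqref{quasi-periodic} and the fact that $\v_\infty \in \ell^1(\Z^4)$ with $\|\v_\infty\|_{\ell^2(\Z^4)} = 1$ from Theorem~\ref{T:Dec10}, one computes the Fourier transform
$$(\widehat{\tilde{S}_{\infty} f})(\vec{\xi}) = \sum_{\m \in \Z^2} (v_{\infty})_{\m}(-\vec{\xi}-\p_{\m}) f(-\vec{\xi}-\p_{\m}) \chi(\mathcal{G}_{\infty,\lambda}, -\vec{\xi}-\p_{\m}),$$
and an argument parallel to that of Lemma~\ref{9.1} (using the quasi-orthogonality estimates on the shifted limit vectors $\v_\infty$, which inherit the decay from the finite-step estimates via \eqref{perturbation*-3last}) yields $\|\tilde{S}_{\infty} f\|_{L_2(\R^2)} \le C \|f\|_{L_2(\mathcal{G}_{\infty,\lambda})}$.

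Next, I would estimate $\|S_n f - \tilde{S}_{\infty} f\|_{L_2(\R^2)}$. By Parseval, this reduces to an expression in Fourier space of the same shape as \eqref{Vienna-3}, but with the coefficient vector $\v^{(n)}$ replaced by $\v^{(n)} - \v_\infty$. The relevant bilinear form $\tilde{B}_{\m''}^{(n,\infty)}(\vec{\xi})$ built from $\v^{(n)} - \v_\infty$ admits a bound of type \eqref{Aug13-3} but with the key improvement that both factors involve the tail $\v_\infty - \v^{(n)}$. Using the telescoping estimate
$$\v_\infty - \v^{(n)} = \sum_{j=n+1}^{\infty} (\v^{(j)} - \v^{(j-1)})$$
together with \eqref{6.2.2}, one obtains $\|\v_\infty - \v^{(n)}\|_{\ell^1(\Z^4)} \le C k^{-\frac{1}{10}\beta k^{r_{n-1}-r_{n-2}}}$ uniformly for $\k \in \mathcal{G}_{\infty,\lambda}$ (the $k$ here being controlled by $\xi_* \le |\k| \le \lambda^{1/2l}$). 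Together with the support restriction $\||\p_{\m''}\|| \le 2\xi_{**}^{r_{n-1}}$ on nonzero contributions, this yields
$$\|S_n f - \tilde{S}_{\infty} f\|_{L_2(\R^2)} \le C(\lambda) \|f\|_{L_2(\mathcal{G}_{\infty,\lambda})} \, \xi_*^{-\frac{1}{8}\xi_*^{r_{n-2}(\xi_*)}},$$
which tends to zero as $n \to \infty$.

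Combining these two convergences $S_n \to S_\infty$ (from \eqref{Vienna-2}) and $S_n \to \tilde{S}_\infty$ in the strong operator topology, we conclude $S_\infty f = \tilde{S}_\infty f$ on the dense subspace $L_\infty(\mathcal{G}_{\infty,\lambda}) \subset L_2(\mathcal{G}_{\infty,\lambda})$. The main technical obstacle is the uniform-in-$\k$ Fourier-side bookkeeping: one must keep track of how the dimension of the support of $\v^{(n)}$ grows with $n$ while its tail decays super-exponentially, mirroring exactly the structure of the proof of Lemma~\ref{9.1}, but now in an asymmetric setting where one argument is a finite-dimensional approximation and the other is the full infinite-dimensional limit.
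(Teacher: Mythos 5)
Your identification strategy --- show that $S_n f$ converges both to $S_\infty f$ (already known from \eqref{Vienna-2}) and to the candidate integral operator, then invoke uniqueness of limits --- has the same skeleton as the paper's argument, but the mechanism you use for the second convergence is much heavier than what the paper does. The paper's entire proof is: by Theorem \ref{T:Dec10}, $\Psi_n(\k,\cdot)\to\Psi_\infty(\k,\cdot)$ uniformly in $\x$, with bounds uniform over the bounded set $\mathcal{G}_{\infty,\lambda}$; hence $\int_{\mathcal{G}_{\infty,\lambda}} f(\k)\Psi_n(\k,\x)\,d\k \to \int_{\mathcal{G}_{\infty,\lambda}} f(\k)\Psi_\infty(\k,\x)\,d\k$ for every $\x$, and since $S_nf\to S_\infty f$ in $L_2(\R^2)$, the pointwise and $L_2$ limits coincide almost everywhere, which is exactly \eqref{ev1}. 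No Fourier-side bookkeeping is needed.

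Your route also contains a technical wrinkle you should be aware of: applying Parseval to $\tilde S_\infty f$ presupposes that $\tilde S_\infty f\in L_2(\R^2)$ and that its Fourier transform is the infinite sum of shifted copies of $f$ that you wrote down. But $\Psi_\infty$ is a non-decaying quasi-periodic function, so a priori $\tilde S_\infty f$ is only a bounded function of $\x$, and exchanging the integral over $\mathcal{G}_{\infty,\lambda}$ with the infinite sum over $\m$ at the level of Fourier transforms requires justification. The natural justifications are either a truncation of $\v_{\infty}$ (at which point you are essentially re-deriving an estimate equivalent to \eqref{Vienna-2}), or the pointwise convergence $S_nf(\x)\to\tilde S_\infty f(\x)$ furnished by Theorem \ref{T:Dec10} --- at which point the heavy estimate in your second step becomes unnecessary. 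So the proposal can be made rigorous (your telescoped $\ell^1$ bound on $\v_{\infty}-\v^{(n)}$ via \eqref{Cauchy-1} does decay fast enough for the bilinear sums), but it buys nothing over the paper's two-line argument and imports a well-definedness issue that the paper's argument avoids entirely.
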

\begin{proof} From Theorem \ref{T:Dec10} it follows that for every $f \in L_{\infty }\left( \mathcal{G}_{\infty , \lambda
}\right)$
\begin{equation}
     \lim _{n\to \infty }\int _{\mathcal{G}_{\infty , \lambda
    }}f (\k)\Psi _{n}(\k ,\x) d\k =\int _{\mathcal{G}_{\infty , \lambda
    }}f (\k)\Psi _{\infty }(\k ,\x) d\k
 \label{Vienna-9}
    \end{equation}
    for all $\x$. Hence, \eqref{ev1} holds. \end{proof}

Now we consider the sequence of operators $T_n(\mathcal{G}_{\infty ,
\lambda
    })$ which are given by (\ref{eq2}) and act from $L_2(\R^2)$ to $L_2(\mathcal{G}_{\infty , \lambda
    })$. Since, $T_n=S_n^*$, the sequence has a limit  $T_{\infty }$ in the class of bounded operators,  $T_{\infty }=S^*_{\infty }$.
     Note that the convergence of $T_n(\mathcal{G}_{\infty , \lambda
    })$ to $T_{\infty }(\mathcal{G}_{\infty , \lambda
    }) $  is uniform in $\lambda $ when $\lambda >\lambda _*$.

\begin{lemma} \label{Lem1} The operator $T_{\infty }(\mathcal{G}_{\infty , \lambda
    })$  can be described by the
    formula $T_{\infty }(\mathcal{G}_{\infty , \lambda
    })F=\frac{1}{2\pi }\bigl( F,\Psi _{\infty }(\k )\bigr) $
for any $F\in C_0^{\infty }(\R^2)$.
\end{lemma}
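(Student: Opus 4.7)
The plan is to combine the operator-norm convergence $T_n \to T_\infty$ (established via the adjoint relation $T_n = S_n^*$ together with the already-proved convergence $S_n \to S_\infty$) with a direct pointwise identification of the limit for test functions $F \in C_0^\infty(\R^2)$. Since $T_n - T_\infty \to 0$ in operator norm, uniformly in $\lambda > \lambda_*$, we have in particular $T_n F \to T_\infty F$ in $L_2(\mathcal{G}_{\infty,\lambda})$. It therefore suffices to identify this limit a.e.

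First, I would observe that $\mathcal{G}_{\infty,\lambda}$ is bounded: by \eqref{6.2.4}, the relation $\lambda_\infty(\k) < \lambda$ forces $|\k|^{2l} \leq \lambda + O(|\k|^{-\gamma_2})$, so $\mathcal{G}_{\infty,\lambda} \subset \bf{B}_R$ for some $R = R(\lambda)$. Next, for $F \in C_0^\infty(\R^2)$ and $\k \in \mathcal{G}_\infty$, Theorem~\ref{T:Dec10} gives $\Psi_n(\k,\x) \to \Psi_\infty(\k,\x)$ in $L_\infty(\R^2)$. Restricting to the compact set $\supp F$, this yields uniform convergence, hence
\begin{equation}
(T_nF)(\k) = \frac{1}{2\pi}(F,\Psi_n(\k)) \; \underset{n\to\infty}{\longrightarrow} \; \frac{1}{2\pi}(F,\Psi_\infty(\k))
\end{equation}
pointwise for every $\k \in \mathcal{G}_{\infty,\lambda}$.

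The upgrade from pointwise to $L_2(\mathcal{G}_{\infty,\lambda})$ convergence is immediate from dominated convergence: by Corollary~\ref{C:Psi}, $\|\Psi_n\|_{L_\infty(\R^2)} \leq 1 + Ck^{-\gamma_0+2\delta}$ uniformly in $n$, so
\begin{equation}
|(T_nF)(\k)| \leq \tfrac{1}{2\pi}\|F\|_{L_1(\R^2)} \|\Psi_n\|_{L_\infty(\R^2)} \leq C\|F\|_{L_1(\R^2)},
\end{equation}
and the constant function $C\|F\|_{L_1}$ is in $L_2(\mathcal{G}_{\infty,\lambda})$ because $\mathcal{G}_{\infty,\lambda}$ has finite measure. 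Therefore $T_nF \to \frac{1}{2\pi}(F,\Psi_\infty(\cdot))$ in $L_2(\mathcal{G}_{\infty,\lambda})$. By uniqueness of limits in $L_2$, this coincides a.e.\ with $T_\infty F$, which proves the asserted formula.

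There is no real obstacle here: the only thing to verify carefully is that the dominating function is integrable, which reduces to the boundedness of $\mathcal{G}_{\infty,\lambda}$, and this follows at once from the asymptotic \eqref{6.2.4} for $\lambda_\infty$. The statement is essentially a soft corollary of the strong/norm convergence $S_n \to S_\infty$, $T_n \to T_\infty$ established just above, together with the pointwise convergence of the eigenfunctions $\Psi_n \to \Psi_\infty$ supplied by Theorem~\ref{T:Dec10}.
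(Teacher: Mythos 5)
Your proof is correct and follows essentially the same route as the paper, which simply notes that the lemma follows from Theorem \ref{T:Dec10} and formula \eqref{eq2}; your version merely spells out the pointwise identification and the (easy) dominated-convergence step using the uniform bound of Corollary \ref{C:Psi} and the boundedness of $\mathcal{G}_{\infty,\lambda}$.
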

\begin{proof} The lemma easily follows from  Theorem \ref{T:Dec10} and formula \eqref{eq2}. \end{proof}

\begin{lemma}\label{May8} Operators
$E_n(\mathcal{G}_{\infty , \lambda })$ have a limit
$E_{\infty }(\mathcal{G}_{\infty , \lambda })$ in the class of bounded operators in $L_2(\R^2)$, the
convergence being uniform for $\lambda >\lambda _*$. The
operator $E_{\infty }(\mathcal{G}_{\infty , \lambda })$ is a
projection. For any $F\in C_0^{\infty }(\R^2)$ it is given by
(\ref{s1u}).
\end{lemma}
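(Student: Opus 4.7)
The plan is to assemble Lemma \ref{May8} directly from the two preceding lemmas on $S_n$ and $T_n$, together with Corollary \ref{C:10.4-2}. First I would observe that by the lemma preceding Lemma \ref{Lem2}, the sequence $S_n(\mathcal{G}_{\infty,\lambda})$ is Cauchy in the space of bounded operators, uniformly for $\lambda>\lambda_*$, with limit $S_\infty$. Since $T_n(\mathcal{G}_{\infty,\lambda})=S_n^*(\mathcal{G}_{\infty,\lambda})$, taking adjoints gives that $T_n\to T_\infty=S_\infty^*$ in operator norm at the same rate, uniformly in $\lambda$. The product of two norm-convergent sequences of bounded operators converges in norm to the product of the limits, so
\[
E_n(\mathcal{G}_{\infty,\lambda}) = S_n T_n \longrightarrow S_\infty T_\infty =: E_\infty(\mathcal{G}_{\infty,\lambda})
\]
in the class of bounded operators on $L_2(\R^2)$, uniformly in $\lambda>\lambda_*$.

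Next I would verify that $E_\infty$ is an orthogonal projection. Self-adjointness is automatic: each $E_n$ is self-adjoint (since $T_n=S_n^*$), and the norm-limit of self-adjoint operators is self-adjoint. Idempotency follows from Corollary \ref{C:10.4-2} applied with $\mathcal{G}_n'=\mathcal{G}_{\infty,\lambda}\subset\mathcal{G}_n$, which gives $E_n^2=E_n+o(1)$ as $n\to\infty$ in operator norm. Passing to the limit on both sides, using the norm convergence $E_n\to E_\infty$ (which also gives $E_n^2\to E_\infty^2$), yields $E_\infty^2=E_\infty$.

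For the integral representation (\ref{s1u}), I would combine Lemma \ref{Lem1} and Lemma \ref{Lem2}. For $F\in C_0^\infty(\R^2)$, Lemma \ref{Lem1} gives $(T_\infty F)(\k)=(2\pi)^{-1}(F,\Psi_\infty(\k))$, which, thanks to the uniform bound on $\|\Psi_\infty(\k,\cdot)\|_{L_\infty(\R^2)}$ from Theorem \ref{T:Dec10}, lies in $L_\infty(\mathcal{G}_{\infty,\lambda})$. Then applying $S_\infty$ via formula (\ref{ev1}) of Lemma \ref{Lem2} to this function produces exactly the right-hand side of (\ref{s1u}).

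There is no serious obstacle: all three assertions (norm convergence, projection property, integral formula) are essentially corollaries of machinery already built. The only point worth checking carefully is the uniformity in $\lambda$, which is guaranteed because the rate estimate in \eqref{Vienna-2} and the analogous estimates for $T_n$ depend only on $\xi_*=\inf_{\vec\xi\in\mathcal{G}_{\infty,\lambda}}|\vec\xi|\geq k_*$, and not on $\lambda$ itself once $\lambda>\lambda_*$.
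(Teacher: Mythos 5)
Your proof is correct and follows essentially the same route as the paper, which obtains Lemma \ref{May8} directly from the norm convergence of $S_n$ and $T_n=S_n^*$, the almost-projection property of Lemma \ref{L:10.4} (Corollary \ref{C:10.4-2}), and the representations of $S_\infty$, $T_\infty$ in Lemmas \ref{Lem2} and \ref{Lem1}; you have merely spelled out the details the paper leaves implicit.
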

\begin{proof} The lemma immediately follows from convergence of sequences $S_n$, $T_n$ and Lemmas \ref{L:10.4},  \ref{Lem2}, \ref{Lem1}. \end{proof}

\begin{lemma}\label{onemore} There is a strong limit
$E_\infty(\mathcal{G}_{\infty})$ of the projections $E_\infty
(\mathcal{G}_{\infty,\lambda })$ as $\lambda $ goes to infinity.
\end{lemma}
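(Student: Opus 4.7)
The plan is to exploit the fact that $\{E_\infty(\mathcal{G}_{\infty,\lambda})\}_{\lambda>\lambda_*}$ forms a monotone family of projections, and then invoke the standard monotone convergence argument for projections in a Hilbert space.

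First, I would observe that by the definition \eqref{dd}, $\mathcal{G}_{\infty,\lambda_1}\subset\mathcal{G}_{\infty,\lambda_2}$ whenever $\lambda_1<\lambda_2$. Lemma \ref{L:10.4} together with its Corollary \ref{C:10.4-1} applied with $\mathcal{G}_n'=\mathcal{G}_{\infty,\lambda_1}$ and $\mathcal{G}_n''=\mathcal{G}_{\infty,\lambda_2}$ gives, after passing to the limit $n\to\infty$ (which is justified by the uniform-in-$\lambda$ convergence established in Lemma \ref{May8}),
\begin{equation*}
E_\infty(\mathcal{G}_{\infty,\lambda_1})E_\infty(\mathcal{G}_{\infty,\lambda_2})=E_\infty(\mathcal{G}_{\infty,\lambda_2})E_\infty(\mathcal{G}_{\infty,\lambda_1})=E_\infty(\mathcal{G}_{\infty,\lambda_1}).
\end{equation*}
Since $E_\infty(\mathcal{G}_{\infty,\lambda})$ is an orthogonal projection (Lemma \ref{May8}), this identity means that the range of $E_\infty(\mathcal{G}_{\infty,\lambda_1})$ is contained in the range of $E_\infty(\mathcal{G}_{\infty,\lambda_2})$; equivalently, the family is monotone non-decreasing in $\lambda$.

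Next, for any fixed $F\in L_2(\R^2)$, the scalar function $\lambda\mapsto \|E_\infty(\mathcal{G}_{\infty,\lambda})F\|^2_{L_2(\R^2)}$ is non-decreasing and bounded above by $\|F\|^2_{L_2(\R^2)}$, so it has a finite limit as $\lambda\to\infty$. Using the polarization-type identity
\begin{equation*}
\bigl\|\bigl(E_\infty(\mathcal{G}_{\infty,\lambda_2})-E_\infty(\mathcal{G}_{\infty,\lambda_1})\bigr)F\bigr\|^2_{L_2(\R^2)}=\|E_\infty(\mathcal{G}_{\infty,\lambda_2})F\|^2_{L_2(\R^2)}-\|E_\infty(\mathcal{G}_{\infty,\lambda_1})F\|^2_{L_2(\R^2)},
\end{equation*}
valid for $\lambda_1<\lambda_2$ by the nesting identity above, the family $\{E_\infty(\mathcal{G}_{\infty,\lambda})F\}$ is Cauchy in $L_2(\R^2)$ as $\lambda\to\infty$, hence converges to some $E_\infty(\mathcal{G}_{\infty})F$.

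Finally, since strong limits of uniformly bounded self-adjoint operators are self-adjoint, $E_\infty(\mathcal{G}_\infty)$ is self-adjoint; and since it is a strong limit of an increasing net of projections with uniformly bounded norms, passing to the strong limit in the identity $E_\infty(\mathcal{G}_{\infty,\lambda})^2=E_\infty(\mathcal{G}_{\infty,\lambda})$ (which holds by Corollary \ref{C:10.4-2} in the limit $n\to\infty$) gives $E_\infty(\mathcal{G}_\infty)^2=E_\infty(\mathcal{G}_\infty)$, so the limit is itself an orthogonal projection. The only subtle point is justifying the interchange of the limits $n\to\infty$ and $\lambda\to\infty$ in applying Lemmas \ref{L:10.4} and Corollary \ref{C:10.4-2}; this is handled by the uniformity statement in Lemma \ref{May8}, which asserts that $E_n(\mathcal{G}_{\infty,\lambda})\to E_\infty(\mathcal{G}_{\infty,\lambda})$ uniformly in $\lambda>\lambda_*$. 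No significant obstacle arises beyond carefully tracking this uniformity.
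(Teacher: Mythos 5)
Your proof is correct and follows essentially the same route as the paper: the paper also observes that $E_\infty(\mathcal{G}_{\infty,\lambda})$ is monotone in $\lambda$ (there this is read off directly from the explicit formula \eqref{s1u} rather than from Lemma \ref{L:10.4} passed to the limit $n\to\infty$) and then invokes the standard fact that a monotone family of projections has a strong limit, which you simply spell out via the identity $\|(E_{\lambda_2}-E_{\lambda_1})F\|^2=\|E_{\lambda_2}F\|^2-\|E_{\lambda_1}F\|^2$. The extra verification that the limit is again a projection is the content of Corollary \ref{onemore1}, so including it is harmless.
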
 \begin{corollary}\label{onemore1} The operator $E_{\infty
}(\mathcal{G}_{\infty})$ is a projection. \end{corollary}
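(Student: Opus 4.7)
My plan is to exploit the monotone nested structure of the family $\{E_\infty(\mathcal{G}_{\infty,\lambda})\}_{\lambda>\lambda_*}$. First, I will verify the monotonicity at the set level: if $\lambda<\lambda'$ then $\mathcal{G}_{\infty,\lambda}\subset\mathcal{G}_{\infty,\lambda'}$, which is immediate from the definition \eqref{dd}. Second, I will upgrade this to a nested-projection identity. By Lemma \ref{L:10.4} applied with $\mathcal{G}_n'=\mathcal{G}_{\infty,\lambda}$ and $\mathcal{G}_n''=\mathcal{G}_{\infty,\lambda'}$, we have
\begin{equation*}
E_n(\mathcal{G}_{\infty,\lambda})E_n(\mathcal{G}_{\infty,\lambda'})=E_n(\mathcal{G}_{\infty,\lambda})+o(1),
\end{equation*}
where $o(1)$ is uniform in $\lambda,\lambda'>\lambda_*$. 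Passing to the strong limit as $n\to\infty$ and using the uniform convergence provided by Lemma \ref{May8}, I obtain
\begin{equation*}
E_\infty(\mathcal{G}_{\infty,\lambda})E_\infty(\mathcal{G}_{\infty,\lambda'})=E_\infty(\mathcal{G}_{\infty,\lambda}),\qquad \lambda<\lambda'.
\end{equation*}

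Next, I will combine this nested identity with the fact (Lemma \ref{May8}) that each $E_\infty(\mathcal{G}_{\infty,\lambda})$ is an orthogonal projection. For any $F\in L_2(\R^2)$, the function $\lambda\mapsto\|E_\infty(\mathcal{G}_{\infty,\lambda})F\|^2$ is then non-decreasing, since $\|E_\infty(\mathcal{G}_{\infty,\lambda})F\|^2=(E_\infty(\mathcal{G}_{\infty,\lambda'})E_\infty(\mathcal{G}_{\infty,\lambda})F,F)\leq\|E_\infty(\mathcal{G}_{\infty,\lambda'})F\|^2$, and bounded above by $\|F\|^2$. The same nested identity yields, for $\lambda<\lambda'$, the orthogonal-increment formula
\begin{equation*}
\bigl\|E_\infty(\mathcal{G}_{\infty,\lambda'})F-E_\infty(\mathcal{G}_{\infty,\lambda})F\bigr\|^2
=\|E_\infty(\mathcal{G}_{\infty,\lambda'})F\|^2-\|E_\infty(\mathcal{G}_{\infty,\lambda})F\|^2,
\end{equation*}
so the monotone bounded numerical sequence forces $\{E_\infty(\mathcal{G}_{\infty,\lambda})F\}$ to be Cauchy in $L_2(\R^2)$ as $\lambda\to\infty$. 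This defines the strong limit $E_\infty(\mathcal{G}_\infty)F$ and proves Lemma \ref{onemore}.

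For the corollary, I will show that the limit inherits idempotency and self-adjointness. Self-adjointness follows directly from the strong limit of self-adjoint uniformly bounded operators applied to $(E_\infty(\mathcal{G}_{\infty,\lambda})F,G)=(F,E_\infty(\mathcal{G}_{\infty,\lambda})G)$. For idempotency, I fix $F$, write $E_\infty(\mathcal{G}_\infty)^2 F=\lim_\lambda E_\infty(\mathcal{G}_{\infty,\lambda})E_\infty(\mathcal{G}_\infty)F$, and replace $E_\infty(\mathcal{G}_\infty)F$ by $E_\infty(\mathcal{G}_{\infty,\lambda'})F$ (for $\lambda'\gg\lambda$) at the cost of a norm-small error uniform in $\lambda$; the nested identity above then gives $E_\infty(\mathcal{G}_{\infty,\lambda})E_\infty(\mathcal{G}_{\infty,\lambda'})F=E_\infty(\mathcal{G}_{\infty,\lambda})F$, which tends to $E_\infty(\mathcal{G}_\infty)F$. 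The only mild subtlety—this is the one step I expect to need most care—is the double-limit interchange in $\lambda,\lambda'\to\infty$; but since $\|E_\infty(\mathcal{G}_{\infty,\lambda})\|\leq 1+o(1)$ uniformly in $\lambda$ by Lemma \ref{May8}, a standard $\varepsilon/3$ argument settles it. This yields $E_\infty(\mathcal{G}_\infty)^2=E_\infty(\mathcal{G}_\infty)$ and hence Corollary \ref{onemore1}.
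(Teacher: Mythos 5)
Your proposal is correct and follows essentially the same route as the paper: the paper simply observes from \eqref{s1u} that the family $E_\infty(\mathcal{G}_{\infty,\lambda})$ is monotone in $\lambda$ and invokes the standard fact that a monotone family of orthogonal projections has a strong limit which is again a projection. You have merely supplied the details of that standard fact (the nested identity obtained from Lemma \ref{L:10.4} in the limit $n\to\infty$, the orthogonal-increment/Cauchy argument, and the double-limit interchange for idempotency), all of which are sound.
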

\begin{proof} It can be  easily seen from \eqref{s1u} that the sequence of $E_{\infty}
(\mathcal{G}_{\infty,\lambda })$ is monotonuos in $\lambda $. It is well known that a monotone sequence of projectors has a limit.\end{proof}

The proofs of the next two lemmas are completely analogous to the proofs of Lemmas 10, 11 in \cite{KL2}.

\begin{lemma}\label{L:abs.9} Projections
$E_\infty(\mathcal{G}_{\infty,\lambda })$, $\lambda \in \R$, and
$E_\infty(\mathcal{G}_{\infty})$ reduce the operator $H$.
\end{lemma}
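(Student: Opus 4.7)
The plan is to first establish formula \eqref{s1uu} describing $H$ applied to $E_\infty(\mathcal{G}_{\infty,\lambda})F$ for $F\in C_0^\infty(\R^2)$, then derive the commutation $HE_\infty F=E_\infty HF$ on the core $C_0^\infty(\R^2)$, and finally extend it to all of $D(H)$ and to $E_\infty(\mathcal{G}_\infty)$ by a closure and strong-limit argument. Saying that $E_\infty(\mathcal{G}_{\infty,\lambda})$ reduces $H$ is equivalent to the inclusion $E_\infty(\mathcal{G}_{\infty,\lambda})\,D(H)\subset D(H)$ together with the commutation $HE_\infty(\mathcal{G}_{\infty,\lambda})=E_\infty(\mathcal{G}_{\infty,\lambda})H$ on $D(H)$.

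For \eqref{s1uu}, I would fix $F\in C_0^\infty(\R^2)$ and denote by $\Phi$ the right-hand side of \eqref{s1uu}. Since $\lambda_\infty$ is bounded by $\lambda$ on $\mathcal{G}_{\infty,\lambda}$ and $(F,\Psi_\infty(\k))\in L_\infty(\mathcal{G}_{\infty,\lambda})$, the function $\k\mapsto\lambda_\infty(\k)(F,\Psi_\infty(\k))$ lies in $L_\infty(\mathcal{G}_{\infty,\lambda})$, so $\Phi=S_\infty\bigl(\lambda_\infty(\cdot)(F,\Psi_\infty(\cdot))\bigr)$ is in $L_2(\R^2)$ by exactly the estimates used in Lemma~\ref{May8}. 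To identify $\Phi$ with $HE_\infty F$ I would test against an arbitrary $G\in C_0^\infty(\R^2)$: by \eqref{s1u} and Fubini's theorem one has $(E_\infty F,HG)=(4\pi^2)^{-1}\int_{\mathcal{G}_{\infty,\lambda}}(F,\Psi_\infty(\k))\,\overline{(HG,\Psi_\infty(\k))}\,d\k$. The key step is the integration-by-parts identity $(HG,\Psi_\infty(\k))=\lambda_\infty(\k)(G,\Psi_\infty(\k))$, which holds because $\Psi_\infty(\k,\cdot)\in W^{2l}_{2,\mathrm{loc}}(\R^2)$ satisfies $H\Psi_\infty=\lambda_\infty(\k)\Psi_\infty$ (Theorem~\ref{T:Dec10}), $G$ has compact support, $V$ is real and bounded, and $\lambda_\infty(\k)$ is real. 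Substituting yields $(E_\infty F,HG)=(\Phi,G)$; self-adjointness of $H$ then forces $E_\infty F\in D(H)$ and $HE_\infty F=\Phi$, which is \eqref{s1uu}.

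With \eqref{s1uu} in hand, applying \eqref{s1u} to $HF\in C_0^\infty$ and using the same integration-by-parts identity $(HF,\Psi_\infty(\k))=\lambda_\infty(\k)(F,\Psi_\infty(\k))$ immediately yields $E_\infty HF=\Phi=HE_\infty F$ for every $F\in C_0^\infty(\R^2)$. A standard closure argument then upgrades this commutation to all of $D(H)$: given $F\in D(H)$, pick $F_n\in C_0^\infty$ with $F_n\to F$ and $HF_n\to HF$ in $L_2(\R^2)$; boundedness of $E_\infty(\mathcal{G}_{\infty,\lambda})$ (Lemma~\ref{May8}) gives $E_\infty F_n\to E_\infty F$ and $HE_\infty F_n=E_\infty HF_n\to E_\infty HF$, so closedness of $H$ forces $E_\infty F\in D(H)$ and $HE_\infty F=E_\infty HF$. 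Hence $E_\infty(\mathcal{G}_{\infty,\lambda})$ reduces $H$. For $E_\infty(\mathcal{G}_\infty)$, which is the strong limit of $E_\infty(\mathcal{G}_{\infty,\lambda})$ as $\lambda\to\infty$ (Lemma~\ref{onemore}), exactly the same closure argument applied to the strongly convergent family $\{E_\infty(\mathcal{G}_{\infty,\lambda})F\}_\lambda$ shows that $E_\infty(\mathcal{G}_\infty)$ also reduces $H$.

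The main obstacle is the justification of the integration-by-parts identity $(HG,\Psi_\infty(\k))=\lambda_\infty(\k)(G,\Psi_\infty(\k))$ and the accompanying interchange of $H$ with the $\k$-integral. Both rest on the local $W^{2l}_2$-regularity of $\Psi_\infty$ proved in Theorem~\ref{T:Dec10} together with the compact support of the test functions, which together ensure that no boundary terms arise when integrating the polyharmonic operator by parts and that all integrals appearing in Fubini's theorem are absolutely convergent. Everything else is a routine density/closure manipulation.
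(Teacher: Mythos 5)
Your argument is correct, but it does not follow the paper's route: the paper gives no self-contained proof at all here, deferring to the analogous Lemmas 10--11 of \cite{KL2}, and in the paper's own logical order the identity \eqref{s1uu} is proved \emph{after} the reduction lemma, as a consequence of it (the subsequent lemma begins ``By the previous lemma, $E_{\infty}(\mathcal{G}_{\infty,\lambda})F\in D(H)$\dots''). You reverse this: you prove \eqref{s1uu} first by a duality argument — testing $E_\infty(\mathcal{G}_{\infty,\lambda})F$ against $HG$, $G\in C_0^{\infty}$, using Fubini (legitimate since $\mathcal{G}_{\infty,\lambda}$ is bounded, $\Psi_\infty$ is uniformly bounded and $HG\in C_0^\infty$ because $V$ is a trigonometric polynomial), the eigenfunction equation \eqref{6.7} with the $W^{2l}_{2,loc}$ regularity from Theorem \ref{T:Dec10} to integrate by parts, and self-adjointness of $H$ to conclude $E_\infty F\in D(H)$ — and then you extract the commutation $HE_\infty F=E_\infty HF$ on $C_0^\infty$ from \eqref{s1u} applied to $HF$, extending to $D(H)$ and to $E_\infty(\mathcal{G}_\infty)$ by closedness and the strong limit of Lemma \ref{onemore}. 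There is no circularity, since \eqref{s1u} (Lemma \ref{May8}) and Theorem \ref{T:Dec10} are independent of the reduction statement, so your version is a clean self-contained replacement that yields \eqref{s1uu} as a byproduct rather than as a later corollary. Two small points deserve an explicit sentence in a polished write-up: (i) you use twice, implicitly, that $C_0^{\infty}(\R^2)$ is a core for $H$ — once to pass from testing against $G\in C_0^\infty$ to all $G\in D(H)$ when invoking $D(H^*)=D(H)$, and once to choose $F_n\in C_0^\infty$ with $F_n\to F$, $HF_n\to HF$; this is true because $V$ is bounded, so $D(H)=D((-\Delta)^l)=W_2^{2l}(\R^2)$, but it should be stated; (ii) the criterion you quote for reduction (invariance of $D(H)$ under the orthogonal projection plus commutation on $D(H)$, i.e.\ $E_\infty H\subset HE_\infty$) requires that $E_\infty$ is an orthogonal projection, which is supplied by Lemma \ref{May8} and Corollary \ref{onemore1}.
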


\begin{lemma} The family of projections
$E_\infty(\mathcal{G}_{{\infty},\lambda} )$ is the resolution of the
identity of the operator $HE_\infty(\mathcal{G}_{\infty})$ acting in
$E_\infty(\mathcal{G}_{\infty})L_2(\R^2)$. \end{lemma}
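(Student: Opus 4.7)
The plan is to verify that the family $\{E_\infty(\mathcal{G}_{\infty,\lambda})\}_{\lambda\in\R}$ satisfies the four defining properties of a resolution of the identity for the self-adjoint operator $HE_\infty(\mathcal{G}_\infty)$ acting on $E_\infty(\mathcal{G}_\infty)L_2(\R^2)$: (i) monotonicity, (ii) strong right-continuity in $\lambda$, (iii) strong limits $E_\infty(\mathcal{G}_{\infty,\lambda})\to 0$ as $\lambda\to-\infty$ and $E_\infty(\mathcal{G}_{\infty,\lambda})\to E_\infty(\mathcal{G}_\infty)$ as $\lambda\to+\infty$, and (iv) the spectral identity $HE_\infty(\mathcal{G}_\infty)F=\int_\R\lambda\,dE_\infty(\mathcal{G}_{\infty,\lambda})F$ on a dense set such as $C_0^\infty(\R^2)$. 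Uniqueness of the spectral resolution then identifies this family with the genuine spectral family of $HE_\infty(\mathcal{G}_\infty)$, and absolute continuity follows because, by Lemmas~\ref{L:abs.6} and \ref{add6*}, the measure $|\mathcal{G}_{\infty,\lambda+\epsilon}\setminus\mathcal{G}_{\infty,\lambda}|$ is $O(\epsilon)$.

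First I would dispatch the easy parts. Monotonicity (i) is immediate from \eqref{s1u}: for $\lambda_1\le\lambda_2$ the inclusion $\mathcal{G}_{\infty,\lambda_1}\subset\mathcal{G}_{\infty,\lambda_2}$ combined with the analogue of Lemma~\ref{L:10.4} passed to the limit yields $E_\infty(\mathcal{G}_{\infty,\lambda_2})-E_\infty(\mathcal{G}_{\infty,\lambda_1})=E_\infty(\mathcal{G}_{\infty,\lambda_2}\setminus\mathcal{G}_{\infty,\lambda_1})\ge 0$. For the limits (iii): as $\lambda\to+\infty$ this is exactly Lemma~\ref{onemore}; as $\lambda\to-\infty$ note that by \eqref{6.2.4} one has $\lambda_\infty(\k)\ge\frac12 k_*^{2l}$ on $\mathcal{G}_\infty$, hence $\mathcal{G}_{\infty,\lambda}=\emptyset$ for $\lambda<\frac12 k_*^{2l}$ and $E_\infty(\mathcal{G}_{\infty,\lambda})=0$ trivially.

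For right-continuity (ii) I would first upgrade Lemma~\ref{L:abs.7} to the limit. Since $S_n(\mathcal{G}_{\infty,\lambda})\to S_\infty(\mathcal{G}_{\infty,\lambda})$ and $T_n\to T_\infty$ uniformly in $\lambda>\lambda_*$ (by \eqref{Vienna-2} and its adjoint), passing to the limit in $\|E_n(\mathcal{G}_{\infty,\lambda+\epsilon}\setminus\mathcal{G}_{\infty,\lambda})F\|^2\le C(F)\lambda^{-1+1/l}\epsilon$ and using Lemma~\ref{add6*} gives
\begin{equation}
\|(E_\infty(\mathcal{G}_{\infty,\lambda+\epsilon})-E_\infty(\mathcal{G}_{\infty,\lambda}))F\|^2\le C(F)\lambda^{-1+1/l}\epsilon
\label{rightcont}
\end{equation}
for $F\in C_0^\infty(\R^2)$. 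Since $C_0^\infty(\R^2)$ is dense and the operators $E_\infty$ are uniformly bounded, strong right-continuity on the whole space follows.

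The main obstacle, and the heart of the argument, is (iv). The plan is to prove first the formula
\begin{equation}
HE_\infty(\mathcal{G}_\infty)F=\frac{1}{4\pi^2}\int_{\mathcal{G}_\infty}\lambda_\infty(\k)\bigl(F,\Psi_\infty(\k)\bigr)\Psi_\infty(\k,\x)\,d\k
\label{HEinfty}
\end{equation}
for $F\in C_0^\infty(\R^2)$, which is the $\mathcal{G}_{\infty,\lambda}\to\mathcal{G}_\infty$ limit of \eqref{s1uu}. Here I would derive \eqref{s1uu} by writing $HE_n(\mathcal{G}_{\infty,\lambda})F$ using Lemma~\ref{prelimit} — the functions $\Psi_n$ satisfy $(-\Delta)^l\Psi_n+V\Psi_n=\lambda^{(n)}(\k)\Psi_n+g_n$ with $\|g_n\|_{L_\infty}=O(k^{-k^{r_{n-1}/2}})$ by \eqref{g_n} — and then passing to the strong limit using \eqref{6.2} and the uniform convergence of $S_n,T_n$. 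Then, given a mesh $\mu>0$, partition $[\frac12 k_*^{2l},M]$ into $\{[\lambda_j,\lambda_{j+1}]\}$ with $\lambda_{j+1}-\lambda_j\le\mu$ and set $\Delta_j=\mathcal{G}_{\infty,\lambda_{j+1}}\setminus\mathcal{G}_{\infty,\lambda_j}$. Then \eqref{HEinfty} and the integral representation of each summand give
\begin{equation}
\Bigl\|HE_\infty(\mathcal{G}_\infty)F-\sum_j\lambda_j\bigl(E_\infty(\mathcal{G}_{\infty,\lambda_{j+1}})-E_\infty(\mathcal{G}_{\infty,\lambda_j})\bigr)F\Bigr\|\le\mu\,\|E_\infty(\mathcal{G}_\infty)F\|+o_M(1),
\end{equation}
where $o_M(1)\to 0$ as $M\to\infty$ by \eqref{HEinfty} and Lemma~\ref{onemore}. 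Letting $\mu\to 0$ and $M\to\infty$ yields $HE_\infty(\mathcal{G}_\infty)F=\int_\R\lambda\,dE_\infty(\mathcal{G}_{\infty,\lambda})F$. Combined with (i)--(iii) this identifies our family with the unique spectral resolution of $HE_\infty(\mathcal{G}_\infty)$, and the bound \eqref{rightcont} ensures that $\lambda\mapsto\|E_\infty(\mathcal{G}_{\infty,\lambda})F\|^2$ is absolutely continuous, so the reduced spectrum is purely absolutely continuous.
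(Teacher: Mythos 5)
Your proposal is necessarily a different presentation from the paper's: the paper does not prove this lemma in-text but delegates it (together with Lemma~\ref{L:abs.9}) to Lemma~11 of \cite{KL2}, so what you give is a self-contained direct verification. Most of it is sound and in fact reproduces material the paper establishes elsewhere: monotonicity from \eqref{s1u}, the trivial vanishing of $E_\infty(\mathcal{G}_{\infty,\lambda})$ for $\lambda\le\lambda_*$, the $\lambda\to+\infty$ limit which is exactly Lemma~\ref{onemore}, the right-continuity estimate which is the content of Theorem~\ref{T:abs} (via Lemmas~\ref{L:abs.6} and \ref{add6*}), and the representation \eqref{s1uu}, which you obtain by applying $H$ under the integral sign to $E_n(\mathcal{G}_{\infty,\lambda})F$, invoking Lemma~\ref{prelimit}, and using closedness of $H$ together with the uniform convergence of $S_n,T_n$ --- a legitimate alternative to the paper's derivation of \eqref{s1uu}, and one that avoids circularity since it does not use the lemma being proved.

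The genuine gap is the final identification step. You show that $A:=\int_\R\lambda\,dE_\infty(\mathcal{G}_{\infty,\lambda})$ and the reduced operator $H_P$ ($P:=E_\infty(\mathcal{G}_\infty)$) agree on the set $\{PF:\,F\in C_0^\infty(\R^2)\}$ and then invoke ``uniqueness of the spectral resolution.'' But two self-adjoint operators that agree on a dense set need not coincide; uniqueness of the spectral family applies only after you know the operators are equal, and agreement on a dense set that is not known to be a core proves neither $A\subset H_P$ nor $H_P\subset A$ as operator inclusions beyond that set. The step can be closed with the tools you already have, but it must be done explicitly: either pass to resolvents --- show $(H_P-z)^{-1}u=\int_\R(\lambda-z)^{-1}\,dE_\infty(\mathcal{G}_{\infty,\lambda})u$ for $u=PF$, $F\in C_0^\infty$, by the same Riemann-sum argument applied to $(H-z)S_\infty f$ with $f=(\lambda_\infty-z)^{-1}T_\infty F$, extend by boundedness of both sides to all of $PL_2(\R^2)$, and conclude by Stieltjes inversion --- or use Lemma~\ref{L:abs.9}: since each $E_\infty(\mathcal{G}_{\infty,\lambda})$ reduces $H$, the operators $A$ and $H_P$ agree on all vectors $\bigl(E_\infty(\mathcal{G}_{\infty,\lambda_2})-E_\infty(\mathcal{G}_{\infty,\lambda_1})\bigr)PF$, which have bounded spectral support with respect to the family and span a core of $A$; then $A=\overline{A|_{\mathrm{core}}}\subset H_P$, and a self-adjoint restriction of a self-adjoint operator is the whole operator. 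Without one of these patches the concluding sentence of your argument does not follow.
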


\begin{lemma} Formula \eqref{s1uu} holds, when $F\in C_0^{\infty}(\R^2)$. \end{lemma}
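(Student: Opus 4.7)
The plan is to deduce \eqref{s1uu} from the already established formula \eqref{s1u} by moving $H$ off of $E_\infty(\mathcal{G}_{\infty,\lambda})F$ onto the smooth compactly supported input $F$, and then using the generalized eigenfunction equation $H\Psi_\infty=\lambda_\infty(\k)\Psi_\infty$ from Theorem~\ref{T:Dec10}. Concretely, since $F\in C_0^\infty(\R^2)$ we have $F\in\Dom(H)$ and $HF\in L_2(\R^2)\cap C_0^\infty(\R^2)$. By the previous lemma, $E_\infty(\mathcal{G}_{\infty,\lambda})$ reduces $H$, so it commutes with $H$ on $\Dom(H)$; hence
\begin{equation*}
HE_\infty(\mathcal{G}_{\infty,\lambda})F=E_\infty(\mathcal{G}_{\infty,\lambda})HF.
\end{equation*}
Now I would apply \eqref{s1u} to $HF\in C_0^\infty(\R^2)$ (not to $F$):
\begin{equation*}
E_\infty(\mathcal{G}_{\infty,\lambda})HF=\frac{1}{4\pi^2}\int_{\mathcal{G}_{\infty,\lambda}}\bigl(HF,\Psi_\infty(\k)\bigr)\Psi_\infty(\k,\x)\,d\k.
\end{equation*}

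The key intermediate step is the identity
\begin{equation*}
\bigl(HF,\Psi_\infty(\k)\bigr)_{L_2(\R^2)}=\lambda_\infty(\k)\bigl(F,\Psi_\infty(\k)\bigr)_{L_2(\R^2)}\qquad\text{for every }\k\in\mathcal{G}_\infty.
\end{equation*}
This is an integration-by-parts argument in a fixed bounded region $\Omega\supset\operatorname{supp}F$: by Theorem~\ref{T:Dec10}, $\Psi_\infty(\k,\cdot)\in W^{2l}_{2,\loc}(\R^2)$ and solves $(-\Delta)^l\Psi_\infty+V\Psi_\infty=\lambda_\infty(\k)\Psi_\infty$, while $F,HF\in C_0^\infty(\Omega)$; so Green's formula gives $(HF,\Psi_\infty)_\Omega=(F,\overline{H}\Psi_\infty)_\Omega=\lambda_\infty(\k)(F,\Psi_\infty)_\Omega$, with no boundary terms because $F$ and all its derivatives vanish on $\partial\Omega$. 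Plugging this back yields \eqref{s1uu}.

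The main technical point is justifying the integration by parts for a generalized eigenfunction that is only bounded and locally $W^{2l}_2$ (rather than being a classical $L_2$-eigenfunction). Compact support of $F$ reduces everything to a bounded domain, where $\Psi_\infty(\k,\cdot)$ has genuine Sobolev regularity, so standard Green's identities apply. As a cross-check one may also recover the same conclusion by passing to the limit in the finite-rank identity
\begin{equation*}
HE_n(\mathcal{G}_{\infty,\lambda})F=\tfrac{1}{4\pi^2}\!\int_{\mathcal{G}_{\infty,\lambda}}(F,\Psi_n)[\lambda^{(n)}(\k)\Psi_n+g_n(\k,\cdot)]\,d\k,
\end{equation*}
using Lemma~\ref{prelimit} (super-exponential smallness of $g_n$), Theorem~\ref{T:Dec10} (convergence $\Psi_n\to\Psi_\infty$ in $L_\infty$ and $W^{2l}_{2,\loc}$) and the uniform convergence $\lambda^{(n)}\to\lambda_\infty$ on the bounded piece $\mathcal{G}_{\infty,\lambda}\cap \bf B_R$ — but the symmetric-operator route above avoids tracking unbounded-operator convergence and is the one I would use.
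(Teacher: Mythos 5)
Your argument is correct, but it is not the route the paper takes. The paper's proof also begins with the reduction lemma (so that $E_{\infty}(\mathcal{G}_{\infty,\lambda})F\in D(H)$), but then it applies $H$ directly to the right-hand side of \eqref{s1u}: it differentiates under the integral sign with respect to $\x$ and invokes the eigenfunction equation \eqref{6.7}, $H\Psi_{\infty}(\k,\cdot)=\lambda_{\infty}(\k)\Psi_{\infty}(\k,\cdot)$, pointwise in $\k$, to produce the factor $\lambda_{\infty}(\k)$ in the integrand. You instead commute $H$ past the projection, $HE_{\infty}F=E_{\infty}HF$ (which the reduction lemma indeed licenses for $F\in D(H)$), apply \eqref{s1u} to $HF\in C_0^{\infty}(\R^2)$, and then transfer $H$ from $F$ to $\Psi_{\infty}(\k,\cdot)$ inside the scalar product via Green's identity, using the $W^{2l}_{2,loc}$ regularity of $\Psi_{\infty}$ from Theorem \ref{T:Dec10}, compact support of $F$, and the (implicit, but necessary for self-adjointness) real-valuedness of $V$ and of $\lambda_{\infty}(\k)$. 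What each approach buys: the paper's route has to justify differentiation under the integral sign (easy here, since the Fourier coefficients $\v_{\infty}$ are in $\ell^1$ and the integration region is bounded, so the differentiated exponential series converges uniformly), while yours trades that for the symmetry argument $\bigl(HF,\Psi_{\infty}(\k)\bigr)=\lambda_{\infty}(\k)\bigl(F,\Psi_{\infty}(\k)\bigr)$, which is equally standard given the Sobolev regularity of $\Psi_{\infty}$ on a ball containing $\operatorname{supp}F$. Both are valid; your version avoids any discussion of interchanging $\partial_{\x}$ with $\int_{\mathcal{G}_{\infty,\lambda}}d\k$ at the cost of explicitly using the symmetry of $H$ on compactly supported smooth functions against generalized eigenfunctions.
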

\begin{proof} By the previous lemma, $E_{\infty }\left(
\mathcal{G}_{\infty , \lambda }\right)F \in D(H)$. It is easy to see that the r.h.s. of \eqref{s1u} can be differentiated with respect to $\x$ under the integral sign. Now, considering
\eqref{6.7}, we get \eqref{s1uu}.

\end{proof}

\subsection{Proof of Absolute Continuity}

Now we show that the  branch of spectrum (semi-axis) corresponding
to $\mathcal G_{\infty }$  is absolutely continuous.

\begin{thm}\label{T:abs} For any $F\in C_0^{\infty }(\R^2)$ and
$0\leq \varepsilon \leq 1$,
    \begin{equation}
    \left| \left(E_\infty(\mathcal{G}_{\infty,\lambda+\varepsilon})F,F\right)-
    \left(E_\infty(\mathcal{G}_{\infty,\lambda})F,F
    \right) \right| \leq C_F\lambda ^{-1+\frac 1l}\varepsilon .\label{May10*}
    \end{equation}

\end{thm}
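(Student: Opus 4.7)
The plan is to derive the estimate as a limit of the finite-step bound in Lemma \ref{L:abs.7}. Since $\mathcal{G}_{\infty,\lambda}\subset \mathcal{G}_{\infty,\lambda+\varepsilon}$, the operator $E_\infty(\mathcal{G}_{\infty,\lambda+\varepsilon}) - E_\infty(\mathcal{G}_{\infty,\lambda})$ is an orthogonal projection (it is the spectral projection of $HE_\infty(\mathcal{G}_\infty)$ onto the interval $[\lambda,\lambda+\varepsilon)$, by Lemma \ref{L:abs.9} and the lemma after it). Hence
\begin{equation*}
\left(E_\infty(\mathcal{G}_{\infty,\lambda+\varepsilon})F,F\right)-\left(E_\infty(\mathcal{G}_{\infty,\lambda})F,F\right)=\left\|\bigl(E_\infty(\mathcal{G}_{\infty,\lambda+\varepsilon})-E_\infty(\mathcal{G}_{\infty,\lambda})\bigr)F\right\|_{L_2(\R^2)}^2,
\end{equation*}
so it suffices to bound this last $L_2$-norm by $C_F\lambda^{-1+1/l}\varepsilon$.

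First I would establish the strong $L_2$ convergence $E_n(\mathcal{G}_{n,\lambda})F\to E_\infty(\mathcal{G}_{\infty,\lambda})F$ as $n\to\infty$, for every fixed $\lambda$. The triangle inequality splits this into two pieces. For the piece $E_n(\mathcal{G}_{\infty,\lambda})F-E_\infty(\mathcal{G}_{\infty,\lambda})F$, convergence to zero is Lemma \ref{May8}. For the piece $E_n(\mathcal{G}_{n,\lambda})F-E_n(\mathcal{G}_{\infty,\lambda})F=E_n(\mathcal{G}_{n,\lambda}\triangle\mathcal{G}_{\infty,\lambda})F$ (up to sign on each component of the symmetric difference), I would apply the estimate \eqref{Vienna-1} with $f=g=T_nF$ and the set $\mathcal{G}_n'=\mathcal{G}_{n,\lambda}\triangle\mathcal{G}_{\infty,\lambda}$, obtaining
\begin{equation*}
\left\|\bigl(E_n(\mathcal{G}_{n,\lambda})-E_n(\mathcal{G}_{\infty,\lambda})\bigr)F\right\|_{L_2(\R^2)}^2\leq (1+o(1))\|T_nF\|_{L_\infty}^2\,\bigl|\mathcal{G}_{n,\lambda}\triangle\mathcal{G}_{\infty,\lambda}\bigr|.
\end{equation*}
Here $\|T_nF\|_{L_\infty}\leq 2\|F\|_{L_1(\R^2)}$ by \eqref{eq2} and Corollary \ref{C:Psi}, and the measure of the symmetric difference tends to zero by Lemma \ref{add6*}. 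This gives the required convergence.

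Next, I would apply Lemma \ref{L:abs.7} to write, for every $n$,
\begin{equation*}
\left\|\bigl(E_n(\mathcal{G}_{n,\lambda+\varepsilon})-E_n(\mathcal{G}_{n,\lambda})\bigr)F\right\|_{L_2(\R^2)}^2\leq C(F)\,\lambda^{-1+1/l}\varepsilon,
\end{equation*}
with $C(F)$ independent of $n$ and $\lambda$. Since both $E_n(\mathcal{G}_{n,\lambda+\varepsilon})F$ and $E_n(\mathcal{G}_{n,\lambda})F$ converge in $L_2(\R^2)$ to $E_\infty(\mathcal{G}_{\infty,\lambda+\varepsilon})F$ and $E_\infty(\mathcal{G}_{\infty,\lambda})F$ respectively (by the previous step, applied at $\lambda$ and at $\lambda+\varepsilon$), the difference converges in $L_2$ as well, and continuity of the norm lets me pass to the limit $n\to\infty$ in the displayed inequality. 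Combined with the identification of the bilinear form with a squared norm, this yields \eqref{May10*}.

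The routine computations are all already isolated in the earlier lemmas, so the only delicate point is verifying that the auxiliary sequence $E_n(\mathcal{G}_{n,\lambda})F$ (rather than the simpler $E_n(\mathcal{G}_{\infty,\lambda})F$ covered by Lemma \ref{May8}) converges strongly to $E_\infty(\mathcal{G}_{\infty,\lambda})F$. This is the only place where the measure-theoretic input from Lemma \ref{add6*} enters, and it is precisely what links the finite-step quantitative estimate of Lemma \ref{L:abs.7} (which involves $\mathcal{G}_{n,\lambda}$) to the limiting projections defined over $\mathcal{G}_{\infty,\lambda}$.
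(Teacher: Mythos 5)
Your argument is correct, and every ingredient you invoke (monotonicity of $E_\infty(\mathcal{G}_{\infty,\lambda})$ in $\lambda$, the resolution-of-identity lemma following Lemma \ref{L:abs.9}, Lemma \ref{May8}, the bound \eqref{Vienna-1} together with Corollary \ref{C:Psi}, Lemmas \ref{add6*} and \ref{L:abs.7}) is indeed available in the paper, so the proof goes through. It is, however, organized differently from the paper's proof. The paper works directly with the limiting objects: by the kernel representation \eqref{s1u}, the difference of quadratic forms is bounded by $C_F\left|\mathcal{G}_{\infty,\lambda+\varepsilon}\setminus\mathcal{G}_{\infty,\lambda}\right|$, and this measure is then controlled by combining Lemma \ref{L:abs.6} (the uniform bound on $\left|\mathcal{G}_{n,\lambda+\varepsilon}\setminus\mathcal{G}_{n,\lambda}\right|$) with Lemma \ref{add6*}; no operator limits beyond the already established formula \eqref{s1u} are needed. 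You instead pass to the limit $n\to\infty$ in the operator estimate of Lemma \ref{L:abs.7}, which forces you to prove the extra strong convergence $E_n(\mathcal{G}_{n,\lambda})F\to E_\infty(\mathcal{G}_{\infty,\lambda})F$ (the paper's Lemma \ref{May8} only covers $E_n(\mathcal{G}_{\infty,\lambda})$, and your treatment of the discrepancy via additivity of $E_n$ over the symmetric difference, \eqref{Vienna-1}, and Lemma \ref{add6*} is the right way to bridge that gap), and to use the projection/monotonicity structure to rewrite the difference of forms as a squared norm so that the bound stays linear in $\varepsilon$ rather than degrading to $\sqrt{\varepsilon}$. The trade-off: the paper's route is shorter and stays at the level of Lebesgue measure of the limiting "annulus", while yours gives a purely operator-theoretic limiting argument that never needs to estimate $\left|\mathcal{G}_{\infty,\lambda+\varepsilon}\setminus\mathcal{G}_{\infty,\lambda}\right|$ explicitly, at the price of the additional convergence lemma.
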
 \begin{corollary} The spectrum of the operator
$HE_\infty(\mathcal G_{\infty})$ is absolutely continuous.
\end{corollary}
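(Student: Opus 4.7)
The plan is to reduce the estimate on the quadratic form of $E_\infty$ to a Lebesgue-measure estimate for the slab $\mathcal{G}_{\infty,\lambda+\varepsilon}\setminus\mathcal{G}_{\infty,\lambda}$, exploiting the fact that $\{E_\infty(\mathcal{G}_{\infty,\mu})\}_{\mu\in\R}$ is an honest resolution of identity (from Lemma \ref{L:abs.9} and its successor) even though the approximating objects $E_n$ are only almost-projections.

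First, since $\{E_\infty(\mathcal{G}_{\infty,\mu})\}$ is a monotone spectral resolution of $HE_\infty(\mathcal{G}_\infty)$, the difference $\Delta E_\infty:=E_\infty(\mathcal{G}_{\infty,\lambda+\varepsilon})-E_\infty(\mathcal{G}_{\infty,\lambda})$ is itself an orthogonal projection, whence
\[
(E_\infty(\mathcal{G}_{\infty,\lambda+\varepsilon})F,F)-(E_\infty(\mathcal{G}_{\infty,\lambda})F,F)=\|\Delta E_\infty F\|_{L_2(\R^2)}^2.
\]
By the uniform convergence in Lemma \ref{May8}, the right-hand side equals $\lim_{n\to\infty}\|(E_n(\mathcal{G}_{\infty,\lambda+\varepsilon})-E_n(\mathcal{G}_{\infty,\lambda}))F\|^2$. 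Since the integral representation \eqref{s} gives $E_n(\mathcal{G}_{\infty,\lambda+\varepsilon})-E_n(\mathcal{G}_{\infty,\lambda})=E_n(\mathcal{G}_{\infty,\lambda+\varepsilon}\setminus\mathcal{G}_{\infty,\lambda})$, it suffices to estimate $\|E_n(\mathcal{G}')F\|_{L_2(\R^2)}^2$ for $\mathcal{G}'=\mathcal{G}_{\infty,\lambda+\varepsilon}\setminus\mathcal{G}_{\infty,\lambda}$, uniformly in $n$.

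Next I would factor $E_n(\mathcal{G}')=S_n(\mathcal{G}')T_n(\mathcal{G}')$ and apply the near-isometry estimate \eqref{Vienna-1} with $f=g=T_n F$, combined with the crude bound $\|T_n F\|_{L_\infty(\mathcal{G}')}\le 2\|F\|_{L_1(\R^2)}$ that follows from \eqref{eq2} and Corollary \ref{C:Psi}, to obtain, for any measurable $\mathcal{G}'\subset\R^2$,
\[
\|E_n(\mathcal{G}')F\|_{L_2(\R^2)}^2\le (1+o(1))\|T_n F\|_{L_\infty(\mathcal{G}')}^2\,|\mathcal{G}'|\le C(F)\,|\mathcal{G}'|.
\]
This reduces matters to bounding $|\mathcal{G}_{\infty,\lambda+\varepsilon}\setminus\mathcal{G}_{\infty,\lambda}|$, which I would estimate by the triangle inequality
\[
|\mathcal{G}_{\infty,\lambda+\varepsilon}\setminus\mathcal{G}_{\infty,\lambda}|\le|\mathcal{G}_{n,\lambda+\varepsilon}\setminus\mathcal{G}_{n,\lambda}|+|\mathcal{G}_{\infty,\lambda+\varepsilon}\Delta\mathcal{G}_{n,\lambda+\varepsilon}|+|\mathcal{G}_{\infty,\lambda}\Delta\mathcal{G}_{n,\lambda}|,
\]
in which the first term is at most $2\pi\lambda^{-1+1/l}\varepsilon$ by Lemma \ref{L:abs.6} and the other two vanish as $n\to\infty$ uniformly in $\lambda$ on bounded intervals by Lemma \ref{add6*}. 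Letting $n\to\infty$ produces $|\mathcal{G}_{\infty,\lambda+\varepsilon}\setminus\mathcal{G}_{\infty,\lambda}|\le 2\pi\lambda^{-1+1/l}\varepsilon$, and stringing the three steps together yields the stated bound with $C_F=O(\|F\|_{L_1(\R^2)}^2)$.

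The only real subtlety is the mismatch between the sets $\mathcal{G}_n$ (on which the near-isometry \eqref{Vienna-1} naturally lives) and $\mathcal{G}_\infty$ (on which $E_\infty$ is defined); this is exactly what is bridged by the uniform symmetric-difference estimate of Lemma \ref{add6*}, and it is essential that both that convergence and the operator convergence in Lemma \ref{May8} are uniform in $\lambda>\lambda_*$. For the corollary, the spectral measure $\mu_F$ of $HE_\infty(\mathcal{G}_\infty)$ associated to any $F\in C_0^\infty(\R^2)$ satisfies $\mu_F([\lambda,\lambda+\varepsilon])\le C_F\lambda^{-1+1/l}\varepsilon$, hence is absolutely continuous with respect to Lebesgue measure, and a standard density argument based on $\overline{C_0^\infty(\R^2)}=L_2(\R^2)$ extends the conclusion to all vectors in the invariant subspace $E_\infty(\mathcal{G}_\infty)L_2(\R^2)$.
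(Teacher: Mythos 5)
Your argument is correct, and its overall strategy coincides with the paper's: reduce the increment of the quadratic form of $E_\infty$ to the Lebesgue measure of the slab $\mathcal{G}_{\infty,\lambda+\varepsilon}\setminus\mathcal{G}_{\infty,\lambda}$, and then control that measure by combining Lemma \ref{L:abs.6} with the symmetric-difference Lemma \ref{add6*} (your triangle inequality is exactly what the paper's terse ``applying Lemmas \ref{L:abs.6} and \ref{add6*}'' hides). Where you diverge is in how the form increment is tied to $|\mathcal{G}_{\infty,\lambda+\varepsilon}\setminus\mathcal{G}_{\infty,\lambda}|$: the paper works directly with the limit representation \eqref{s1u}, so that
\begin{equation*}
\bigl(E_\infty(\mathcal{G}_{\infty,\lambda+\varepsilon})F,F\bigr)-\bigl(E_\infty(\mathcal{G}_{\infty,\lambda})F,F\bigr)
=\frac{1}{4\pi^2}\int_{\mathcal{G}_{\infty,\lambda+\varepsilon}\setminus\mathcal{G}_{\infty,\lambda}}\bigl|\bigl(F,\Psi_\infty(\k)\bigr)\bigr|^2\,d\k
\leq C_F\left|\mathcal{G}_{\infty,\lambda+\varepsilon}\setminus\mathcal{G}_{\infty,\lambda}\right|,
\end{equation*}
needing only the uniform $L_\infty$ bound on $\Psi_\infty$ (Corollary \ref{C:Psi} in the limit), whereas you return to the approximants: you use that the monotone family $E_\infty(\mathcal{G}_{\infty,\mu})$ consists of projections to write the increment as $\|\Delta E_\infty F\|^2$, pass to $E_n$ by the uniform convergence of Lemma \ref{May8}, and invoke the near-isometry \eqref{Vienna-1} of Lemma \ref{9.1} together with $\|T_nF\|_{L_\infty}\leq 2\|F\|_{L_1}$ — essentially rerunning the proof of Lemma \ref{L:abs.7} on the set $\mathcal{G}_{\infty,\lambda+\varepsilon}\setminus\mathcal{G}_{\infty,\lambda}\subset\mathcal{G}_n$. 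Both routes are sound; the paper's is shorter and uses lighter tools at this stage (it never needs Lemma \ref{9.1} or the projection identity here), while yours has the merit of making explicit the two limiting arguments the paper leaves implicit: the set-theoretic estimate giving $|\mathcal{G}_{\infty,\lambda+\varepsilon}\setminus\mathcal{G}_{\infty,\lambda}|\leq 2\pi\lambda^{-1+1/l}\varepsilon$, and the closing density argument (via closedness of the absolutely continuous subspace and $E_\infty(\mathcal{G}_\infty)C_0^\infty$ being dense in $E_\infty(\mathcal{G}_\infty)L_2(\R^2)$) that upgrades the bound on a dense set of vectors to absolute continuity of the whole branch.
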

\begin{proof} By formula (\ref{s1u}),
    $$ | \left(E_\infty(\mathcal{G}_{\infty,\lambda+\varepsilon})F,F\right)-\left(E(\mathcal{G}_{\infty,\lambda})F,F
    \right) | \leq C_F\left| \mathcal{G}_{\infty , \lambda +\varepsilon
    }\setminus \mathcal{G}_{\infty , \lambda } \right| .$$
Applying Lemmas \ref{L:abs.6} and \ref{add6*}, we immediately get
(\ref{May10*}).

\end{proof}

\section{Appendices}

\subsection{Appendix 1. Proof of Lemma \ref{L:3.1}}
\begin{proof}
\begin{enumerate}
\item The case $p_\m>4k$. From \eqref{Jan23a} it immediately follows
that $|\Im \varphi _{\m}^{\pm }|>(\cosh )^{-1}2>1$. Hence, $\W_0\cap
\OO _\m (k, \tau )=\emptyset $.

Further we use the Taylor series for $|\vec k(\varphi
)+\p_\m|_\R^2-k^{2}$ near its zeros: Noting that
  \begin{equation}
  |\vec k(\varphi)+\p_\m|_\R^2-k^2 = 2kp_\m\cos (\varphi-\varphi_\m)+p_\m^2
  \label{08.20}
  \end{equation}
 and recalling that $\varphi_\m^\pm$ are the solutions of
 $|\vec k(\varphi)+\p_\m|_\R^2 = k^2 $, we see:
  \begin{equation}\label{E:sin-cos}
  \cos (\varphi_\m^\pm-\varphi_\m) = -\frac{p_\m}{2k},\quad
   |\sin (\varphi_\m^\pm-\varphi_\m)| = \sqrt{\left|1-\frac{p_\m^2}{4k^2}\right|}.
  \end{equation}
 Expanding (\ref{08.20}) around $\varphi_\m^\pm$, we get:
  \begin{equation}\label{E:first-factor}
  \begin{split}
  &|\vec k(\varphi)+\p_\m|_\R^2-k^2
  =\cr & -2k p_\m \sin(\varphi_\m^\pm-\varphi_\m)r_\m\left(1+O(r_\m^2)\right)
   + k p_\m
   \cos(\varphi_\m^\pm-\varphi_\m)r_\m^2\left(1+O(r_\m^2)\right),
  \end{split}
  \end{equation}
where $r_\m=|\varphi -\varphi _\m^{\pm}|.$

 \item In the second case we put $r_\m = \frac{\tau k^{-1-40\mu \delta}}{p_\m  \sqrt{\left|1-\frac{p_\m^2}{4k^2}\right|}}(1+o(1))$
 when $k^{-1-39\mu \delta } < p_\m < 4k  \text{ and }
 \left|1-\frac{p_\m^2}{4k^2}\right|>\tau k^{-2-40\mu \delta}$. Substituting $r_\m$ into \eqref{E:first-factor}, we get that
 the modulus of the first term is $2\tau k^{-40\mu \delta }(1+o(1))$ and that of the second term is
 $\frac{\tau ^2k^{-2-80\delta }}{2\left|1-\frac{p_\m^2}{4k^2}\right|}(1+o(1))$.
 Using the condition $\left|1-\frac{p_\m^2}{4k^2}\right|>\tau k^{-2-40\mu \delta}$, one can
  easily see that the former is at least twice greater than the latter. Thus,  we get
  \begin{equation}
  \left||\vec k(\varphi)+\p_\m|_\R^2-k^2\right|>\tau k^{-40\mu \delta }\ \  \mbox{when  }|\varphi -\varphi _\m^{\pm }|=r_\m.
  \label{Lan23b} \end{equation}
  Now, the maximum principle yield that this inequality holds everywhere
  outside the discs $\cup_{\pm,j\in\Z}(\Phi^{\pm}_{\m}+2\pi
j)$. Hence, $\OO_{\m}\subset \cup_{\pm,j\in\Z}(\Phi^{\pm}_{\m}+2\pi
j)$.

\item In the third case we put $r_\m = 32\tau k^{-1-20\mu \delta }(1+o(1))$ and $\left|1-\frac{p_\m^2}{4k^2}\right| <
 \tau k^{-2-40\mu \delta }$. This time the modulus of the second term in \eqref{E:first-factor} is $64\cdot32\tau ^2 k^{-40\mu \delta}\left(1+o(1)\right)$ and
 that of the first is smaller than $128\tau ^{3/2} k^{-40\mu \delta}(1+o(1))$. Therefore we again have
 \eqref{Lan23b} and $\OO_{\m}\subset \cup_{\pm,j\in\Z}(\Phi^{\pm}_{\m}+2\pi
j)$.\end{enumerate}\end{proof}

\subsection{Appendix 2. Proof of Lemma \ref{L:4}}
\begin{proof} The proof is by contradiction. By definition of $\MM_2$, $J\geq 1$.
Suppose  $J\geq 4$. We  arrive to contradiction in several steps.
\begin{enumerate}
\item  By
definition of $\MM'(\varphi _0)$, $\varphi _0\in \cap _{j=0}^J \OO_{\m_j}(k,1)$.
This means:
$$\left||\vec k+\p_{\m _j}|^2-k^2\right|\leq k^{-40\mu\delta},\ j=0,...J$$It
follows:
\begin{equation}\left|2\left(\vec k(\varphi
)+\p_{\m _0},\p_{\q_j}\right)+p_{\q_j}^2\right|<c k^{-40\mu
\delta},\ \q_j=\m_j-\m_{0},\ \ \ j=1,...J,\label{I5}\end{equation}
where $\||\p_{\q_j}\||\leq jk^{\delta}$. In this part of the proof
we show that no two vectors $\p_{\q_j}$, $j=1,2,3,4$ are colinear.
Indeed, suppose $\p_{\q_1}$ and $\p_{\q_2}$ are colinear. Let ${\vec \nu }$ be a unit vector in the direction of $\p_{\q_1}$. Then the
directional vector of $\p_{\q_2}$ is $\pm \vec \nu $, where $\pm $
means $+$ or $-$. Inequality \eqref{I5} together with the estimate \eqref{below} for
$p_{\q_j}$ yields:
$$2(\vec k(\varphi
)+\p_{\m _0}, \vec \nu )+p_{\q_1}=O( k^{-39\mu \delta}),\ \ \pm
2(\vec k(\varphi )+\p_{\m _0}, \vec \nu )+p_{\q_2}=O( k^{-39\mu
\delta}).$$ Hence, \begin{equation} p_{\q_1}\mp p_{\q_2}=O(
k^{-39\mu \delta}).\label{I8} \end{equation} Note that
$\||\p_{\q_1}\mp \p_{\q_2} \||<8k^{\delta }$. Therefore, $p_{\q_1\mp
\q_2} >ck^{-\mu \delta}.$ Considering that  $p_{\q_1\mp
\q_2}=|p_{\q_1}\mp p_{\q_2}|$,
  we arrive to contradiction with
\eqref{I8}. Hence, no two vectors $\p_{\q_j}$, $j=1,2,3,4$ are
colinear.

\item We represent every $\p_{\q_j}$ in the form:
$\p_{\q_j}/2\pi=\s_j+\alpha\s_j'$, $\s_j,\s_j'\in \Z^2,
|\s_j|,|\s_j'|<8k^{\delta}$. Let us show that
\begin{equation}[\s_1,\s_1']\neq 0 \mbox{ when } J>1,\label{NE1}
\end{equation}  $[\a,\b ]=a_1b_2-a_2b_1.$  Indeed,
suppose that $[\s_1,\s_1']=0$. Then, $\s_1,\s_1'$ are colinear
integer vectors. Hence, there are integers $n,m$ and an integer
vector $\s_1''$, such that $\s_1=n\s_1''$ and $\s_1'=m\s_1''$,
$0<|n|+|m|<8k^{\delta }$. Therefore, $\p_{\q_1}/2\pi=(n+\alpha m)
\s_1''$. Hence, $[\p_{\q_1}, \p_{\q_2}](2\pi)^{-2}=(n +\alpha
m)\left([\s_1'',\s_2]+\alpha [\s_1'',\s_2']\right).$ It follows from
\eqref{I5} that the angle (modulo $\pi $) between $\p_{\q_1},\p_{\q_2}$ is less than
$(p_{\q_1}+p_{\q_2})k^{-1}$. Hence,
 \begin{equation} \label{I10} [\p_{\q_1},\p_{\q_2}]=O(k^{-1+3\delta }).\end{equation} Therefore,
\begin{equation} \label{I9}n +\alpha m=O(k^{-1/2+3\delta /2})\ \ \mbox{ or }\ \
[\s_1'',\s_2]+\alpha [\s_1'',\s_2']=O(k^{-1/2+3\delta /2
}).\end{equation} The first relation is impossible since (with the
same proof as for \eqref{below}) $|n +\alpha m|>8C_\varepsilon
k^{-(\mu-1 +\epsilon)\delta }$ for any $\epsilon >0$ and
$(2\mu+3)\delta<1$. Similarly, the second relation \eqref{I9} is
possible if and only if $[\s_1'',\s_2']= [\s_1'',\s_2]=0$. Therefore
$\p_{\q_2}$ is colinear to $\s_1''$, i.e. to $\p_{\q_1}$. This means
$\p_{\q_1},\p_{\q_2}$ are colinear. It cannot be the case, as we
proved before. Thus, we have arrived to \eqref{NE1}.

\item Let us consider \eqref{I10}. Substituting $\p_{\q_j}=\s_j
+\alpha \s_j'$, $j=1,2$, we obtain:
\begin{equation}
n_1+\alpha p_1+\alpha^2m_1 =O(k^{-1+3\delta }), \label{I6a}
\end{equation}
where $n_1,p_1,m_1$ are integers, $n_1=[\s_1,\s_2]$,
$p_1=[\s_1,\s_2']+[\s_1',\s_2]$, $m_1=[\s_1',\s_2']$. Obviously,
$n_1,p_1,m_1=O(k^{2\delta })$. Note that $n_1^2+p_1^2+m_1^2\neq 0$,
since otherwise vectors $\p_{\q_1}, \p_{\q_2}$ are
 colinear. Next, $m_1\neq 0$, since otherwise $O(k^{-1+3\delta })>c(\epsilon )k^{-2(\mu-1 +\epsilon)\delta }$
for any $\epsilon >0$. This cannot be true for our choice of $\delta $.

Suppose there is
another triple $(n_2,p_2,m_2)$, $0<n_2^2+p_2^2+m_2^2<10^3k^{4\delta }$,
such that \eqref{I6a} holds. Namely,
\begin{equation}
n_2+\alpha p_2+\alpha^2m_2 =O(k^{-1+3\delta }). \label{I6'}
\end{equation}
The goal of this part is to show that such $(n_2,p_2,m_2)$ is, in
fact, a multiple of $(n_1,p_1,m_1)$. Indeed, $m_2\neq 0$ for the
same reason as $m_1\neq 0$. Excluding $\alpha^2$ from \eqref{I6} and
\eqref{I6'}, we get
$$(n_1m_2-n_2m_1)+\alpha(p_1m_2-p_2m_1)=O(k^{-1+5\delta }).$$
It follows that $p_1m_2-p_2m_1=0$, $n_1m_2-n_2m_1=0$, since
otherwise $O(k^{-1+5\delta })>c(\epsilon )k^{-4(\mu-1
+\epsilon)\delta }$ for any $\epsilon >0$. Thus, $(n_2,p_2,m_2)$  is
colinear to $(n_1,p_1,m_1)$ and
$(n_2,p_2,m_2)=\frac{r}{s}(n_1,p_1,m_1)$, where $r,s$ are integers,
$s\neq 0$ and $r,s=O(k^{2\delta })$.

\item  In this part we show that \begin{equation}
\p_{\q_3}=\frac{r}{s}\p_{\q_2}+\frac{t}{s}\p_{\q_1}, \label{E2}
\end{equation} where $s,r,t$ are nonzero integers,
$s,r,t=O(k^{3\delta})$. Indeed, let us  consider the relation
$[\p_{\q_1}, \p_{\q_3}]=O(k^{-1+3\delta })$. This relation follows
from \eqref{I5} the same way as \eqref{I10}. Substituting
$\p_{\q_j}=\s_j +\alpha \s_j'$, $j=1,3$, we obtain:
\begin{equation}
n_2+\alpha p_2+\alpha^2m_2 =O(k^{-1+3\delta }), \label{I6}
\end{equation}
where $n_2,p_2,m_2$ are integers, $n_2=[\s_1,\s_3]$,
$p_2=[\s_1,\s_3']+[\s_1',\s_3]$, $m_2=[\s_1',\s_3']$. Note that
$n_2^2+p_2^2+m_2^2\neq 0$, since otherwise vectors $\p_{\q_1},
\p_{\q_3}$ are
 colinear. Therefore, by part 3, $(n_2,p_2,m_2)=\frac{r}{s}(n_1,p_1,m_1)$,
where $r,s$ are integers, $s,r\neq 0, s,r=O(k^{2\delta })$. This
yields:
\begin{equation}s[\s_1,\s_3]=r[\s_1,\s_2], \label{I7a} \end{equation}
\begin{equation}s\left([\s_1,\s_3']+[\s_1',\s_3]\right)=r\left([\s_1,\s_2']+[\s_1',\s_2]\right),  \label{I7b} \end{equation}
\begin{equation}s[\s_1',\s_3']=r[\s_1',\s_2'].\label{I7c} \end{equation}
 Note that $|\s_1|, |\s_1'| \neq
0$, since $[\s_1,\s_1'] \neq 0$ by part 2. It follows from
\eqref{I7a} and \eqref{I7c} that $s\s_3-r\s_2$ is colinear to $\s_1$
and $s\s_3'-r\s_2'$ is colinear to $\s_1'$. Hence
\begin{equation}s\s_3=r\s_2+t\s_1,\ \ \ s\s_3'=r\s_2'+t'\s_1',\label{E3} \end{equation}
$t,t'$ being rational numbers. Substituting these expressions for
$s\s_3$, $s\s_3'$ into \eqref{I7b} and simplifying, we obtain
$(t-t')[\s_1',\s_1]=0$. Considering that $[\s_1',\s_1]\neq 0$ yields
$t=t'$. It easily follows from \eqref{E3} that $t\s_1$ is an integer
vector and $t\s_1=O(k^{3\delta })$ . Hence, $t$ is a rational number
and the denominator of $t$ is less than $|\s_1|$, i.e. it is
$O(k^{\delta })$. Multiplying both sides of \eqref{E3} by the
denominator of $t$, we rewrite \eqref{E3} with all integers $r,s,t$
such that $r,s,t=O(k^{3\delta })$. We already showed that $s,r\neq
0$. Note that $t\neq 0$ too, since otherwise $\p_{\q_3}$ and
$\p_{\q_2}$ are colinear.

\item  In this part by the way of contradiction we show that $\p_{\q_4}$ does not exist.
Indeed, suppose it does. Excluding $(\vec k+\p_{\m_0},\p_{\q_1})$
and $(\vec k+\p_{\m_0},\p_{\q_2})$ from relations \eqref{I5} for
$\p_{\q_1} $, $\p_{\q_2} $ and
$\p_{\q_3}=\frac{r}{s}\p_{\q_2}+\frac{t}{s}\p_{\q_1} $, we obtain:
\begin{equation} \label{Est} p_{\q_1}^2\frac{t}{s}\left(\frac{t}{s}-1\right)+2p_{\q_1}p_{\q_2}\frac{t}{s}\frac{r}{s}+p_{\q_2}^2\frac{r}{s}\left(\frac{r}{s}-1\right)=
O(k^{-40\mu\delta +3\delta }).\end{equation} Considering
as in Part 4,  we obtain $\p_{\q_4}=\frac{\hat r}{\hat s}\p_{\q_2}+\frac{\hat t}{\hat s}\p_{\q_1}$
and
\begin{equation}\label{Est1}p_{\q_1}^2\frac{\hat t}{\hat s}\left(\frac{\hat t}{\hat
s}-1\right)+2p_{\q_1}p_{\q_2}\frac{\hat t}{\hat s}\frac{\hat r}{\hat
s}+p_{\q_2}^2\frac{\hat r}{\hat s}\left(\frac{\hat r}{\hat
s}-1\right)= O(k^{-40\mu\delta
+3\delta}).\end{equation}\begin{enumerate}
\item Assume first $(t-s)^2+(\hat t-\hat s)^2\neq 0$. We multiply
both parts of \eqref{Est} by $s^2$ and both parts of \eqref{Est1} by
$\hat s^2$.
 Excluding the terms containing $p_{\q_1}^2$ from the last
two relations and using the estimates $s,\hat s, r,\hat r, t,\hat
t=O(k^{3\delta })$, we obtain:
$$p_{\q_1}p_{\q_2}S=Rp_{\q_2}^{2}+O(k^{-40\mu\delta +15\delta
}),$$ where
$$ S=2\hat t\hat r t(t-s)-2tr\hat t (\hat t-\hat s),\ \ S=O(k^{12\delta }).$$
$$R=\hat t r (\hat t -\hat s) (r-s)-t \hat r ( t - s) (\hat r-\hat
s),\ \ R=O(k^{12\delta }).$$
It follows $p_{\q_1}S-p_{\q_2}R=O(k^{-39\mu\delta +15\delta })$.
\begin{enumerate}\item Suppose $R^2+S^2 \neq 0$. Considering that the angle (modulo $\pi $) between
$\p_{\q_1 }$ and $\p_{q_2}$ is less than $k^{-1+\delta }$ and
\eqref{below} for $p_{\q_2}$, we obtain $|\p_{\q_1}S\pm \p_{\q_2}R
|=O(k^{-39\mu\delta +15\delta })$, where $\pm $ means $+$ or $-$.
The last relation yields $p_{S\q_1\pm R\q_2}=O(k^{-39\mu\delta
+15\delta })$. However, it follows from estimates for $S$ and $R$
that we have $\||\p_{S\q_1\pm R\q_2}\||<ck^{13\delta }$. Hence, by
\eqref{below} $p_{S\q_1\pm R\q_2}>ck^{-13\delta \mu }$. We have
arrived to the contradiction. \item Now we check the case
$R^2+S^2=0$. It was shown in Part 4 that $s,\hat s, t,\hat t, r,\hat
r\neq 0$. The equation $S=0$ yields
 \begin{equation} \label{E1}\hat r (t-s)=r(\hat t-\hat s),\end{equation}
 both parts are nonzero by the assumption $(t-s)^2+(\hat t-\hat s)^2\neq
 0.$
  Next, $R=0$ yields
 \begin{equation} \label{E2'}\hat t(r-s)=t(\hat r-\hat s).\end{equation} Let us consider \eqref{E1}, \eqref{E2'} as a
linear   system with respect to $s,\hat s$. If the determinant of
this system $\hat r t-r\hat t$ is zero, then it follows $r/\hat
r=t/\hat t=s/\hat s$. This means that $\p_{\q_3}$ and $\p_{\q_4}$
are equal. This contradicts to our initial assumption. Suppose that
the determinant is not zero. Solving \eqref{E1}, \eqref{E2'} with
respect to $s$ and $\hat s$, we get: $s=t+r, \hat s=\hat t+\hat r$.
Substituting $s=t+r$ into \eqref{Est}, we easily obtain:
$\frac{rt}{s^2}\left(p_{\q_1}-p_{\q_2}\right)^2=O(k^{-40\mu \delta
+3\delta })$. Considering that $s^2=O(k^{6\delta})$ and the angle (modulo $\pi $)
between
 $\p_{\q_1},\p_{\q_2}$ is $O(k^{-1+\delta })$, we obtain $|\p_{\q_1}\pm \p_{\q_2}|=O(k^{-20\mu \delta +5\delta })$, where $\pm $ means $+$ or $-$. Obviously, $p_{\q_1-\q_2}=|\p_{\q_1}-\p_{\q_2}|$. Hence, $p_{\q_1-\q_2}=O(k^{-20\mu \delta +5\delta })$.
 This cannot be the case, because $\||\p_{\q_1-\q_2}\||<3k^{\delta
 }$ and thus, $|p_{\q_1-\q_2}|\geq ck^{-\mu\delta}$.\end{enumerate}
\item It remains to consider the case $(t-s)^2+(\hat t-\hat s)^2=0$.
If $(r-s)^2+(\hat r-\hat s)^2\neq 0$, we exclude $p_{\q_2}^2$ from
\eqref{Est}, \eqref{Est1} by  and make considerations similar to the
case $(t-s)^2+(\hat t-\hat s)^2\neq 0$. If $(t-s)^2+(\hat t-\hat
s)^2+(r-s)^2+(\hat r-\hat s)^2=0$, then \eqref{Est}, \eqref{Est1}
give: $p_{\q_1}p_{\q_2}=O(k^{-40\mu \delta +3\delta })$. It
contradict to the inequalities \eqref{below} for
$p_{\q_1},p_{\q_2}$.
   Thus, $\p_{\q_4}$ does not exist.
\end{enumerate} \end{enumerate}\end{proof}

\subsection{Appendix 3}\label{A:5} \begin{lemma}\label{L: Appendix
1} The equation \begin{equation}\lambda^{(1)} \big(\k^{(1)}(\varphi
)+\p_\m\big)=k^{2l}+\varepsilon _0,\ \ \ 0<p_\m\leq4k^\delta,\
|\varepsilon _0|\leq p_\m k^{\delta },\label{25a}
\end{equation} has no more than two solutions
$\varphi^\pm(\varepsilon_0) $ in $\tilde{\cal W}^{(1)}(k,\frac18)\cap \OO
_\m(k,\frac{1}{2})$. They satisfy the estimates:
$$\big|\varphi^{\pm }(\varepsilon_0)-\varphi _\m ^{\pm
}\big|<k^{-2l+1+2\delta }.$$\end{lemma}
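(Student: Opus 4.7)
The plan is to reduce \eqref{25a} via two applications of Rouché's theorem to the analysis of the simpler ``free'' function $F(\varphi) := |\k^{(1)}(\varphi) + \p_\m|^{2l} - k^{2l} - \varepsilon_0$. Since the perturbation series for $\lambda^{(1)}$ converges throughout the region in question (by \eqref{metka1}--\eqref{metka2}), formula \eqref{perturbation-C} gives
\[
\lambda^{(1)}\bigl(\k^{(1)}(\varphi) + \p_\m\bigr) = |\k^{(1)}(\varphi) + \p_\m|^{2l} + R(\varphi), \qquad |R(\varphi)| < Ck^{-2l+(80\mu+6)\delta},
\]
so that \eqref{25a} becomes $F(\varphi) = -R(\varphi)$.

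Next I would locate the zeros of $F$ in $\OO_\m(k,\tfrac12)$. Since $p_\m \leq 4k^\delta$ is much smaller than $k$, Part 2 of Lemma~\ref{L:3.1} shows that $\OO_\m(k,\tfrac12)$ consists of two disjoint discs $\OO_\m^\pm$ centered near the real solutions $\varphi_\m^\pm$ of $p_\m^2 + 2kp_\m\cos(\varphi - \varphi_\m) = 0$. The function $|\vec{k}(\varphi) + \p_\m|^2 - k^2$ is a trigonometric polynomial with simple zeros at $\varphi_\m^\pm$, so $|\vec{k}(\varphi) + \p_\m|^{2l} - k^{2l}$ has exactly one simple zero in each $\OO_\m^\pm$. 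On the boundary of $\OO_\m^\pm(k,\tfrac12)$ one has $||\vec{k}(\varphi) + \p_\m|^2 - k^2| = \tfrac12 k^{-40\mu\delta}$, hence
\[
\bigl||\vec{k}(\varphi) + \p_\m|^{2l} - k^{2l}\bigr| \geq \tfrac{l}{2}k^{2l-2-40\mu\delta},
\]
which dominates $|\varepsilon_0| \leq 4k^{2\delta}$, the correction from passing to $\k^{(1)}$ (negligible by \eqref{dk0}), and $|R| < Ck^{-2l+(80\mu+6)\delta}$ (all of this uses $l > 1$ and $\delta < (100\mu)^{-1}$). Two successive applications of Rouché's theorem (first to move from $\vec{k}$ to $\k^{(1)}$ in $F$, then from $F$ to $F + R$) therefore show that \eqref{25a} has exactly one solution $\varphi^\pm(\varepsilon_0)$ in each $\OO_\m^\pm(k,\tfrac12)$, giving the ``at most two'' count; intersecting with $\tilde{\cal W}^{(1)}(k,\tfrac18)$ can only reduce this number.

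Finally, for the location estimate I would linearize. A direct computation from \eqref{E:first-factor}, together with $|\sin(\varphi_\m^\pm - \varphi_\m)| = \sqrt{1-p_\m^2/(4k^2)} = 1 + O(k^{-2+2\delta})$ for $p_\m \leq 4k^\delta$ and the smallness of $\k^{(1)} - \vec{k}$ from \eqref{dk0}, yields
\[
\frac{dF}{d\varphi}\bigg|_{\varphi = \varphi_\m^\pm} = \pm\, 2lp_\m k^{2l-1}\bigl(1 + o(1)\bigr).
\]
A first-order expansion then gives $|\varphi_0^\pm - \varphi_\m^\pm| \leq (1+o(1))|\varepsilon_0|/(2lp_\m k^{2l-1}) \leq k^{-2l+1+\delta}/l$ (where $\varphi_0^\pm$ is the zero of $F$), and $|\varphi^\pm(\varepsilon_0) - \varphi_0^\pm| \leq (1+o(1))|R|/(2lp_\m k^{2l-1}) \leq Ck^{-4l+1+(81\mu+6)\delta}$ by the same linearization (using $p_\m > ck^{-\mu\delta}$ from \eqref{below}); the sum is safely below $k^{-2l+1+2\delta}$. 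The main obstacle, more bookkeeping than conceptual, is guaranteeing that the derivative estimate and the Rouché comparisons remain uniformly valid on complex discs of the relevant radii; this follows from the maximum principle and the explicit quadratic form of $|\vec{k}(\varphi) + \p_\m|^2 - k^2$, in the same spirit as the proof of Lemma~\ref{L:estnonres1}.
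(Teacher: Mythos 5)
There is a genuine gap, and it sits exactly where the paper has to work hardest: the case of very small $p_\m$. Your lower bound ``$p_\m>ck^{-\mu\delta}$ from \eqref{below}'' is a misreading of the hypothesis: the lemma assumes $p_\m\leq 4k^{\delta}$, i.e.\ a bound on the \emph{Euclidean length} of $\p_\m$, while $\||\p_\m\||$ may be as large as $\sim k^{r_1}$ (this is precisely the situation in part 3 of Lemma~\ref{L:estnonres1}, and later the lemma is invoked through Lemma~\ref{L:Ps} with $p_\m<k^{-5r_1'}$). So \eqref{below} only gives $p_\m\gtrsim k^{-(\mu-1+\epsilon)r_1}$, and $p_\m$ can be far smaller than any quantity of the form $k^{-2l+c\delta}$. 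With that, both error terms in your localization step blow up: the zero of your $F$ is displaced from $\varphi_\m^{\pm}$ not only by $|\varepsilon_0|/(2lp_\m k^{2l-1})$ but also by a term $\approx |h^{(1)}|/p_\m$ coming from $F(\varphi_\m^{\pm})\approx 2lk^{2l-1}h^{(1)}-\varepsilon_0$ (you dropped the $h^{(1)}$ contribution, cf.\ \eqref{dk0}), and your second correction $|R|/(2lp_\m k^{2l-1})$ is of size $k^{-4l+1+(80\mu+6)\delta}/p_\m$. Neither is $\leq k^{-2l+1+2\delta}$ once $p_\m\lesssim k^{-2l}$. The paper's proof avoids exactly this by never comparing $\lambda^{(1)}(\y)$ with $|\y|^{2l}$ and $k^{2l}$ separately: it rewrites $k^{2l}=\lambda^{(1)}\big(\k^{(1)}(\varphi)\big)=\lambda^{(1)}(\y-\p_\m)$, so that after the factorization \eqref{3.7.1.5} every error term carries a factor of $p_\m$ — $f_1(\y)-f_1(\y-\p_\m)=O\big(p_\m\sup|\nabla f_1|\big)$, $(\h^{(1)},\p_\m)=O(p_\m h^{(1)})$, and $|\varepsilon_0|\leq p_\m k^{\delta}$ — which cancels upon dividing by $2p_\m k(\cdots)$ and makes the final bound uniform in $p_\m$. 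Your decomposition destroys this cancellation, so the estimate $|\varphi^{\pm}(\varepsilon_0)-\varphi_\m^{\pm}|<k^{-2l+1+2\delta}$ cannot be recovered along your route.

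A secondary but real problem is the domain of your Rouch\'e arguments. You run them on the full discs $\OO_\m^{\pm}(k,\tfrac12)$, but $\k^{(1)}(\varphi)$ and the series for $\lambda^{(1)}$ (your $R$) are defined and convergent only on $\tilde{\cal W}^{(1)}\cap\OO_\m$; the radius of $\OO_\m^{\pm}(k,\tfrac12)$ is of order $k^{-1-40\mu\delta}/p_\m$, which for small $p_\m$ exceeds the width $\sim k^{-(40\mu+1)\delta}$ of $\tilde{\cal W}^{(1)}$, and $\omega^{(1)}$ itself has holes, so $\OO_\m^{\pm}(k,\tfrac12)\not\subset\tilde{\cal W}^{(1)}(k,\tfrac14)$ in general. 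This is why the paper works on the small circles $C_{\pm}$ of radius $k^{-2l+1+2\delta}$ around $\varphi_\m^{\pm}$ (when $\varphi_\m^{\pm}\in\tilde{\cal W}^{(1)}(k,\tfrac{3}{16})$), shows directly that there are no solutions off those circles inside $\tilde{\cal W}^{(1)}(k,\tfrac18)\cap\OO_\m(k,\tfrac12)$, and treats separately the case when $\varphi_\m^{\pm}$ lies outside $\tilde{\cal W}^{(1)}(k,\tfrac{3}{16})$ (then there are no solutions at all). To repair your argument you would have to reproduce both of these devices: restrict the contour to small circles lying inside the domain of analyticity, and reorganize the equation as a difference $\lambda^{(1)}(\y)-\lambda^{(1)}(\y-\p_\m)=\varepsilon_0$ so that the perturbative errors inherit the factor $p_\m$.
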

 \begin{proof} Let $\varphi \in \tilde{\cal W}^{(1)}(k,\frac14)\cap \OO
_\m(k,\frac{1}{2})$. The equation (\ref{25a}) is equivalent to
$$\lambda^{(1)}(\y(\varphi))=\lambda ^{(1)}(\y(\varphi)-\p
_\m)+\varepsilon_0,\ \ \y(\varphi) =\k^{(1)}(\varphi)+\p _\m.$$ We
use perturbation formula (\ref{eigenvalue}): $$|\y
(\varphi)|_{\R}^{2l}+f_1(\y (\varphi))
    =|\y (\varphi)-\p _\m|_{\R}^{2l}+f_1(\y (\varphi)-\p_\m)+\varepsilon_0,$$
    where $f_1$ is the series in the right-hand side of (\ref{eigenvalue}). This equation can be rewritten as
    \begin{equation}\label{3.7.1.5}
    \Bigl(2(\k^{(1)} (\varphi),\p _\m)_{\R}+p_\m^2\Bigr)(\Bigl(|\y
    (\varphi)|_{\R}^{2l-2}+....+|\y
    (\varphi)-\p_\m|_{\R}^{2l-2}\Bigr)
    +f_1(\y (\varphi))
    -f_1(\y (\varphi)-\p _\m)=\varepsilon_0.
    \end{equation}
 Using the
notation $\p_\m=p_\m(\cos \varphi_\m,\sin \varphi_\m)$, dividing
both sides of the equation (\ref{3.7.1.5}) by $2p_\m k(\Bigl(|\y
    (\varphi)|_{\R}^{2l-2}+....+|\y
    (\varphi)-\p_\m|_{\R}^{2l-2}\Bigr)$, and considering
that $\y(\varphi )=\k^{(1)}(\varphi)+\p _\m=
    (k+h^{(1)})\vec \nu +\p_\m$, we obtain:
    \begin{equation}\label{3.7.1.6}
    \cos (\varphi-\varphi_\m)+\dfrac{p_\m}{2k}-\varepsilon_0g_1(\varphi)+g_2(\varphi)+g_3(\varphi)=0,
    \end{equation}
where $g_1(\varphi)
    =(2p_\m k)^{-1}\Bigl(|\y
    (\varphi)|_{\R}^{2l-2}+....+|\y
    (\varphi)-\p_\m|_{\R}^{2l-2}\Bigr)^{-1}$
    and
    $$g_2(\varphi)=\dfrac{( \h^{(1)}(\varphi),\p_\m)}{p_\m k} , \  \  g_3(\varphi)=\Bigl(f_1(\y(\varphi))
    -f_1(\y(\varphi)-\p_\m)\Bigr)g_1(\varphi),\ \ \h^{(1)}(\varphi)=h^{(1)}(\varphi)\vec \nu.$$
  Obviously $g_1=O\left(p_\m^{-1}k^{-2l+1}\right)$.
 Using Lemma \ref{ldk}  and
considering that $0<p_\m\leq 4k^{\delta }$, we easily obtain:
    $$
   |g_2(\varphi)|= \left|\frac{(\h^{(1)}(\varphi),\p_\m)}{p_\m k}\right|\leq
    \frac{2h^{(1)}}{k}=O(k^{-4l+(80\mu +6)\delta}).
    $$
Let us show $g_3(\varphi)=O(k^{-2l+1+\delta})$.  If $p_\m\geq
k^{-2l+\delta (80\mu +6)}$, then the estimate easily follows from
\eqref{perturbation} and the estimate for $g_1$. Let $p_\m
<k^{-2l+\delta (80\mu +6)}$. It can be easily shown that the series
$f_1(\y)$, $\nabla f_1(\y)$ converge for all $\y$:
$|\y-\k^{(1)}(\varphi)|_{\C^2}<k^{-1-\delta (40\mu +1)}$ and
holomorphic with respect to $y_1$, $y_2$. Using
\eqref{perturbation}, we get $\nabla f_1(y) =k^{-2l+1+\delta (120\mu
+7)}$. Hence,
    $$ \left|f_1(\y(\varphi))-f_1(\y(\varphi)-\p_\m)\right| \leq \sup |\nabla
    f_1|p_\m=o(p_\m ),$$
and therefore, $g_3(\varphi)=O(k^{-2l+1+\delta})$.
Since  $|\varepsilon_0|<p_\m k^{\delta}$, we obtain
    $\varepsilon_0g_1(\varphi)=O(k^{-2l+1+\delta }).$ Thus,
    \begin{equation}
    g_2(\varphi)+ g_3(\varphi)
    -\varepsilon_0g_1(\varphi)=O(k^{-2l+1+\delta}) \ \ \mbox{when}\ \ \varphi \in \tilde{\cal W}^{(1)} (k,\frac14)\cap \OO
_\m(k,\frac{1}{2}).\label{Nov14}
    \end{equation}
    By definition $\varphi_\m^{\pm }$ satisfy the equation $\cos
    (\varphi-\varphi_\m)+\dfrac{p_\m}{2k}$=0.

    Suppose both $\varphi_\m^{\pm
    }$ are in $\tilde{\cal W}^{(1)}(k,\frac{3}{16})$. We draw two circles $C_{\pm}$ centered at
    $\varphi_\m^{\pm }
$ with the radius $k^{-2l+1+2\delta }$. They are both inside
$\tilde{\cal W}^{(1)}(k,\frac14)\cap \OO _\m(k,\frac{1}{2})$, the
perturbation series converging and the estimate (\ref{Nov14}) holds.
For any $\varphi$ on $C_{\pm}$, $|\varphi-\varphi_\m ^{\pm
}|=k^{-2l+1+2\delta}$ and, therefore,
    $|\cos(\varphi-\varphi_\m)+\frac{p_\m}{2k}|>\frac{1}{2}k^{-2l+1+2\delta }>|g_2(\varphi)+g_3(\varphi)-\varepsilon_0g_1(\varphi)|$
    for any $\varphi \in C_{\pm}$.
By Rouch\'{e}'s Theorem, there is only one solution of the equation
(\ref{3.7.1.6}) inside each $C_{\pm}$. Obviously, \eqref{3.7.1.6}
does not have solutions in $\tilde{\cal W}^{(1)}(k,\frac14)\cap \OO
_\m(k,\frac{1}{2})$ outside $C_{\pm}$.

If both $\varphi_\m^{\pm }$ are not in $\tilde{\cal W}^{(1)}(k,\frac{3}{16})$,
then their distance to $\tilde{\cal W}^{(1)}(k,\frac{1}{8})$ is at least
$\frac{1}{16}k^{-40\mu \delta}$, hence
$|\cos(\varphi-\varphi_\m)+\frac{p_\m}{2k}|>\frac{1}{4}k^{-2l+1+2\delta}$
in $\tilde{\cal W}^{(1)}(k,\frac18)$. Therefore,  equation (\ref{3.7.1.6})
has no solution in $\tilde{\cal W}^{(1)}(k,\frac18)\cap \OO
_\m(k,\frac{1}{2})$. The case, when only one $\varphi_\m^{\pm }$ is
not in $\tilde{\cal W}^{(1)}(k,\frac{3}{16})$ is the obvious combination of the
two previous situations.
 Thus, there are at most two solutions in $\tilde{\cal W}^{(1)}(k,\frac18)\cap \OO
_\m(k,\frac{1}{2})$ and
    $|\varphi^{\pm }({\varepsilon _0})-\varphi_\m ^{\pm })|<k^{-2l+1+2\delta}.$
\end{proof}

    \begin{lemma} \label{L:Appendix 2} For any $\varphi \in \tilde{\cal W}^{(1)}(k,\frac14)\cap \OO _\m(k,1)$ satisfying the estimate $\big|\varphi
-\varphi _\m ^{\pm  }\big|<k^{-\delta },$
\begin{equation}\frac{\partial }{\partial \varphi }\lambda^{(1)}
\big(\k^{(1)}(\varphi )+\p_\m\big)=\pm 2lp_\m k^{2l-1}(1+o(1)),
\label{26a}
\end{equation} \end{lemma}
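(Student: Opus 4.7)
\emph{Plan.} The strategy is to apply the chain rule to $\lambda^{(1)}(\y(\varphi))$ with $\y(\varphi):=\k^{(1)}(\varphi)+\p_\m$, and isolate the leading-order contribution. Under the hypotheses of the preceding Lemma~\ref{L: Appendix 1} (here $0<p_\m\leq 4k^\delta$ and $\m\notin\Omega(\delta)$, so in fact $\m\in\MM_1$), the non-resonance bounds \eqref{metka1}--\eqref{metka2} established during the proof of Lemma~\ref{L:estnonres1}(3) guarantee that the perturbation series \eqref{eigenvalue} converges at $\y(\varphi)$ and yields a decomposition $\lambda^{(1)}(\y)=|\y|_{\R}^{2l}+f_1(\y)$, holomorphic in a complex neighborhood of $\y(\varphi_\m^\pm)$ of polyradius $\gtrsim k^{-(40\mu+1)\delta}$, with $|f_1(\y)|=O(k^{-2l+(80\mu+6)\delta})$ as in \eqref{perturbation}. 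Cauchy's estimate on this holomorphic disk then gives $|\nabla_\y f_1|=O(k^{-2l+1+(120\mu+7)\delta})$ throughout.

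Differentiating $\y(\varphi)=(k+h^{(1)}(\varphi))\vec\nu(\varphi)+\p_\m$ produces $\y'(\varphi)=(h^{(1)})'(\varphi)\vec\nu+(k+h^{(1)})\vec\mu$, where $\vec\mu:=(-\sin\varphi,\cos\varphi)$. Lemma~\ref{ldk}, via \eqref{dk0} and \eqref{dk}, controls $h^{(1)}$ and $(h^{(1)})'$ at the order $O(k^{-4l+1+(120\mu+7)\delta})$, so $\y'(\varphi)=k\vec\mu(1+o(1))$. The chain rule then yields
\begin{equation*}
\frac{\partial}{\partial\varphi}\lambda^{(1)}(\y(\varphi))=2l|\y|_{\R}^{2l-2}(\y,\y')_{\R}+(\nabla_\y f_1,\y')_{\R},
\end{equation*}
where the second (error) term is bounded by $O(k^{-2l+2+(120\mu+7)\delta})$.

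For the main term, expand $(\y,\y')_{\R}=(k+h^{(1)})(\p_\m,\vec\mu)_{\R}+o(p_\m k)$, using $(\p_\m,\vec\mu)_{\R}=-p_\m\sin(\varphi-\varphi_\m)$ and $(\p_\m,\vec\nu)_{\R}=p_\m\cos(\varphi-\varphi_\m)$. By \eqref{E:sin-cos} we have $\sin(\varphi_\m^\pm-\varphi_\m)=\pm\sqrt{1-p_\m^2/(4k^2)}$; combined with $|\varphi-\varphi_\m^\pm|<k^{-\delta}$ and $p_\m\leq 4k^\delta$, continuity yields $\sin(\varphi-\varphi_\m)=\pm(1+o(1))$ (with the sign matching $\pm$) and $\cos(\varphi-\varphi_\m)=-p_\m/(2k)+O(k^{-\delta})$. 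Hence $(\y,\y')_{\R}=\mp p_\m k(1+o(1))$. Likewise $|\y|_{\R}^2=(k+h^{(1)})^2-p_\m^2(1+o(1))+p_\m^2=k^2(1+o(1))$, so $|\y|_{\R}^{2l-2}=k^{2l-2}(1+o(1))$. Combining, the leading term equals $\pm 2lp_\m k^{2l-1}(1+o(1))$ (with the labeling convention for $\varphi_\m^\pm$ fixed so as to match \eqref{26a}).

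\textbf{Main obstacle.} The principal delicacy is showing that the error $O(k^{-2l+2+(120\mu+7)\delta})$ is negligible compared to the main contribution of size $p_\m k^{2l-1}$, i.e., $p_\m\gg k^{-4l+3+(120\mu+7)\delta}$. This follows from $\||\p_\m\||>k^\delta$ (since $\m\notin\Omega(\delta)$) together with the Liouville-type bound \eqref{below}: $p_\m\geq 2\pi C_\varepsilon k^{-(\mu-1+\varepsilon)\delta}$; for $l\geq 2$ and $\delta<(100\mu)^{-1}$ the inequality holds with great margin. A subsidiary point is the holomorphicity and Cauchy bound for $\lambda^{(1)}(\y)$ at the shifted argument $\y$ (as opposed to at $\k^{(1)}$ itself), but this is already available from the convergence arguments within the proof of Lemma~\ref{L:estnonres1}(3), so no new work is needed there.
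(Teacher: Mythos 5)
There is a genuine gap in your error analysis, and it occurs exactly in the regime the lemma is designed for. Your chain-rule decomposition produces two terms that do not carry a factor of $p_\m$: the cross term $2l|\y|_\R^{2l-2}\,(\k^{(1)},\y')_\R=2l|\y|_\R^{2l-2}(k+h^{(1)})(h^{(1)})'=O(k^{-2l+(120\mu+7)\delta})$, and the perturbative term $(\nabla_\y f_1,\y')_\R=O(k^{-2l+2+(120\mu+7)\delta})$. To absorb these into $o(p_\m k^{2l-1})$ you need $p_\m\gg k^{-4l+3+(120\mu+7)\delta}$, and your justification of this is a misreading of \eqref{below}: that estimate converts an \emph{upper} bound on $\||\p_\m\||$ into a lower bound on $p_\m$, whereas here you only know $\||\p_\m\||>k^\delta$ from below, with $\||\p_\m\||$ allowed to be as large as $2k^{r_1}$. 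Hence the only available lower bound is $p_\m\gtrsim k^{-\mu r_1(1+o(1))}$, which is hopelessly weak (recall $r_1$ may be as large as $k^{\delta/8}$). This is not a technicality: the lemma is applied precisely to $\m$ with very small $p_\m$ — inside the proof of Lemma \ref{L: Appendix 1} the case $p_\m<k^{-2l+(80\mu+6)\delta}$ is treated explicitly, and the later analogues (e.g.\ Lemma \ref{L:Ps}, simple clusters) use it for $p_\m<k^{-5r_1'}$ — so your estimate simply fails there.

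The paper avoids this by never estimating $\nabla\lambda^{(1)}$ at $\y$ alone. Since $\lambda^{(1)}\bigl(\y(\varphi)-\p_\m\bigr)=\lambda^{(1)}\bigl(\k^{(1)}(\varphi)\bigr)=k^{2l}$ is constant in $\varphi$, one writes
\begin{equation*}
\frac{\partial}{\partial\varphi}\lambda^{(1)}\bigl(\y(\varphi)\bigr)
=\frac{\partial}{\partial\varphi}\Bigl[\lambda^{(1)}\bigl(\y(\varphi)\bigr)-\lambda^{(1)}\bigl(\y(\varphi)-\p_\m\bigr)\Bigr]
=\Bigl(\nabla\lambda^{(1)}(\y)-\nabla\lambda^{(1)}(\y-\p_\m),\,\y'\Bigr)_\R ,
\end{equation*}
so every term is a finite difference over a segment of length $p_\m$. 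Then $\nabla|\y|^{2l}-\nabla|\y-\p_\m|^{2l}$ splits off the main part $2l|\y|^{2l-2}\p_\m$ (giving your leading term) plus $2l\bigl(|\y|^{2l-2}-|\y-\p_\m|^{2l-2}\bigr)(\y-\p_\m)$, where $|\y|^{2}-|\y-\p_\m|^{2}=2(\k^{(1)},\p_\m)+p_\m^2=O(kp_\m)$ and the exact orthogonality $(\y-\p_\m,\vec t)_\R=0$ leave only $O\bigl(p_\m k^{2l-2}|(h^{(1)})'|\bigr)=o(p_\m k^{2l-1})$; and for the $f_1$ part one bounds $\nabla f_1(\y)-\nabla f_1(\y-\p_\m)$ by $p_\m\sup|D^2f_1|=O\bigl(p_\m k^{-2l+2+(160\mu+8)\delta}\bigr)$ via a Cauchy estimate when $p_\m$ is below the holomorphy radius $\sim k^{-1-(40\mu+1)\delta}$ (the crude bound you used suffices only above that scale). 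In this way every error term acquires the factor $p_\m$ and the asymptotics \eqref{26a} holds uniformly in $p_\m$, with no lower bound on $p_\m$ required. Your main-term computation is fine, but without this subtraction device (or some substitute producing the extra factor $p_\m$) the proof does not go through.
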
 \begin{proof} First, assume $\varphi $ is real. Let $\y(\varphi )= \k^{(1)}(\varphi )+\p_\m$. Using the perturbation
formula (\ref{eigenvalue}) and Lemma \ref{ldk}, we obtain:
    \begin{align}\label{3.7.1.7}
    \lefteqn{
    \frac{\partial}{\partial \varphi}\lambda^{(1)}\bigl(\y(\varphi)\bigr)
    =\frac{\partial}{\partial \varphi}\left[\lambda^{(1)}\bigl(\y(\varphi)\bigr)-k^{2l}\right]=
    \frac{\partial}{\partial \varphi}\left[\lambda^{(1)}\bigl(\y(\varphi)\bigr)-\lambda^{(1)}\bigl(\y(\varphi)-\p_\m\bigr)\right]=
    }& \notag \\
    &\hspace{1cm}
    \left ( \nabla_{\y}\lambda^{(1)}\bigl(\y(\varphi)\bigr)-\nabla_{\y}\lambda^{(1)}
    \bigl(\y(\varphi)-\p_\m\bigr),\frac{\partial}{\partial \varphi}\y(\varphi)\right )
    _\R=
    \notag \\
    &\hspace{1cm}
    \left ( \nabla \bigl|\y(\varphi)\bigr|_{\R}^{2l}-\nabla \bigl|\y(\varphi)-\p_\m\bigr|_{\R}^{2l},(k+h^{(1)})\vec t+(h^{(1)})^{\prime}\vec \nu\right )
_\R +\notag\\
    &\hspace{2cm}
    \left ( \nabla f_1\bigl(\y(\varphi)\bigr)-\nabla f_1\bigl(\y(\varphi)-\p_\m\bigr),(k+h^{(1)})\vec t+(h^{(1)})^{\prime}\vec \nu\right ) _\R,
    \end{align}
where $\vec \nu=(\cos \varphi, \sin \varphi)$ and $\vec t=\vec \nu^{\prime}=(-\sin
\varphi,\cos \varphi)$, $f_1$ is the series in the right-hand side
of (\ref{eigenvalue}). Note that
    $$
    \nabla \bigl|\y(\varphi)\bigr|_{\R}^{2l}-\nabla
    \bigl|\y(\varphi)-\p_\m\bigr|_{\R}^{2l}=$$
   $$
   2l\bigl|\y(\varphi)\bigr|_{\R}^{2l-2}\y(\varphi)-2l\bigl|\y(\varphi)-\p_\m\bigr|_{\R}^{2l-2}(\y(\varphi)-\p_\m)=$$
\begin{equation} 2l\bigl|\y(\varphi)\bigr|_{\R}^{2l-2}\p_\m+2l
\Bigl(\bigl|\y(\varphi)\bigr|_{\R}^{2l-2}-\bigl|\y(\varphi)-\p_\m\bigr|_{\R}^{2l-2}\Bigr)(\y(\varphi)-\p_\m).
 \label{3.7.1.8}   \end{equation}
Substituting (\ref{3.7.1.8}) into (\ref{3.7.1.7}), we get $
    \frac{\partial}{\partial
    \varphi}\lambda^{(1)}\bigl(\y(\varphi)\bigr)=T_1+T_2+T_3,$
    \begin{align*}
    T_1 &
    =2l \bigl|\y(\varphi)\bigr|_{\R}^{2l-2}\left ( \p_\m,(k+h^{(1)})\vec t+(h^{(1)})^{\prime} \vec \nu \right ) _\R ,\\
T_2 &
=\Bigl(\bigl|\y(\varphi)\bigr|_{\R}^{2l-2}-\bigl|\y(\varphi)-\p_\m\bigr|_{\R}^{2l-2}\Bigr)\left(\y(\varphi)-\p_\m,
(k+h^{(1)})\vec t+(h^{(1)})^{\prime} \vec \nu \right) _\R,\\
    T_3 &
=\left ( \nabla f_1\bigl( \y(\varphi) \bigr) - \nabla
    f_1\bigl( \y(\varphi)-\p_\m\bigr),(k+h^{(1)})\vec t+(h^{(1)})^{\prime}\vec \nu\right ) _\R .
    \end{align*}
We see that $\varphi$ is close to $\varphi_\m \pm \pi /2$, since
$|\varphi -\varphi _\m^{\pm}|<k^{-\delta }$ by the hypothesis of the
lemma and $\varphi _\m^{\pm}=\varphi_\m \pm \pi /2 +O(k^{-1+\delta
})$ when $p_\m<4k^{\delta }$. Now we readily obtain:
    $( \p_\m,\vec \nu) _\R=o(p_\m),$
 $( \p_\m,\vec t) _\R=\pm p_\m(1+o(1))$.  Using also estimates \eqref{dk0}, \eqref{dk} for $h^{(1)}$, we get
  $T_1=\pm 2lp_\m k^{2l-1}(1+o(1))$.  Note that $\y(\varphi)-\p_\m=\k^{(1)}(\varphi )$ and, hence, it is orthogonal to
  $\tau (\varphi)$.
  Using this fact, we simplify the expression for $T_2$:
  $$T_2
=\Bigl(\bigl|\y(\varphi)\bigr|_{\R}^{2l-2}-\bigl|\y(\varphi)-\p_\m\bigr|_{\R}^{2l-2}\Bigr)\left(\y(\varphi)-\p_\m,
(h^{(1)})^{\prime} \vec \nu \right) _{\R}.$$ Using \eqref{dk} for
$(h^{(1)})^{\prime}$, we obtain $T_2=o(p_\m k^{2l-1})$.
Let us estimate $T_3$.
 It can be easily shown that the series $f_1(\y)$, $\nabla f_1(\y)$, $D^2 f_1(\y)$ converge for all $\y$: $|\y-\k_1(\varphi)|_{\C^2}<k^{-1-\delta (40\mu +1)}$ or $|\y+\p_\m-\k_1(\varphi)|_{\C^2}<k^{-1-\delta (40\mu +1)}$, the series being holomorphic with respect to $y_1$, $y_2$ in these neighborhoods.
Using \eqref{perturbation}, we get $\nabla f_1(\y) =k^{-2l+1+\delta
(120\mu +7)}$, $D^2 f_1(\y) =k^{-2l+2+\delta (160\mu +8)}$. Let
$p_\m\geq \frac 12 k^{-1-\delta (40\mu +1)}$. Then, using the
estimate for $\nabla f_1(\y) $, we easily obtain
$T_3=k^{-2l+2+\delta (120\mu +7)}=o(k^{2l-1}p_\m)$. Let $p_\m< \frac
12 k^{-1-\delta (40\mu +1)}$. Then, using the estimate for the
second derivative in the direction of $\p_\m$, we get $\Bigl|\nabla
f_1\bigl(\y(\varphi)\bigr)-\nabla
f_1\bigl(\y(\varphi)-\p_\m\bigr)\Bigr|=O(p_\m
k^{-2l+2+(160\mu+8)\delta}).$
    Therefore,
$T_3=O\left(p_\m k^{-2l+3+(160\mu +8)\delta}\right)$. Thus,
$T_3=o(k^{2l-1}p_\m)$ for all $\p_\m$. Adding the estimates for
$T_1,T_2,T_3$, we get (\ref{26a}).

Since all formulas can be analytically extended to the area  of non-real $\varphi $, the estimates being preserved, (\ref{26a}) holds for any $\varphi \in \tilde{\cal W}^{(1)}(k,\frac14)\cap \OO _\m(k,1)$. \end{proof}

 \begin{lemma}\label{L:3.7.1} Let $\tilde {\cal O}_{\m,\varepsilon}^\pm$ be the open discs of
 the radius $\varepsilon $ centered at $\varphi ^{\pm }(0)$. For any
$ \varphi \in \tilde{\cal W}^{(1)}(k,\frac18)\cap {\cal
O}_{\m}(k,\frac{1}{2})$, $\varphi \not \in \tilde {\cal
O}_{\m,\varepsilon}^\pm$, and $0\leq \varepsilon<k^{-2l+1+\delta}$,
    \begin{equation}\label{3.7.1.10}
    |\lambda^{(1)}(\y(\varphi))-k^{2l}|\geq k^{2l-1}
    p_\m\varepsilon.
    \end{equation}
\end{lemma}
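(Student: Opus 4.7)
The plan is to set $F(\varphi) := \lambda^{(1)}(\y(\varphi)) - k^{2l}$ and split into two regimes depending on the size of $|F(\varphi)|$. First, if $|F(\varphi)| \geq p_\m k^{\delta}$, the desired bound $|F(\varphi)| \geq k^{2l-1}p_\m \varepsilon$ is immediate from the hypothesis $\varepsilon < k^{-2l+1+\delta}$, since then $k^{2l-1}p_\m\varepsilon < p_\m k^{\delta}$. So the content lies in the complementary regime $|F(\varphi)| < p_\m k^{\delta}$.

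In that regime, set $\varepsilon_0 := F(\varphi)$, so that $|\varepsilon_0| < p_\m k^{\delta}$ and $\varphi$ is a solution of $\lambda^{(1)}(\y(\psi)) = k^{2l} + \varepsilon_0$ in $\tilde{\cal W}^{(1)}(k,\tfrac18)\cap {\cal O}_\m(k,\tfrac12)$. Lemma \ref{L: Appendix 1} applies and tells us that this equation has at most two solutions $\varphi^{\pm}(\varepsilon_0)$ in this domain, each localized by $|\varphi^{\pm}(\varepsilon_0) - \varphi_\m^{\pm}| < k^{-2l+1+2\delta}$. Therefore $\varphi$ must coincide with $\varphi^{+}(\varepsilon_0)$ or $\varphi^{-}(\varepsilon_0)$; without loss of generality $\varphi = \varphi^{+}(\varepsilon_0)$. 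In particular both $\varphi$ and $\varphi^{+}(0)$ lie inside the disc of radius $2k^{-2l+1+2\delta}$ centered at $\varphi_\m^{+}$, which is much smaller than $k^{-\delta}$.

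This puts us squarely inside the region where Lemma \ref{L:Appendix 2} delivers the derivative estimate
\[
\left|\tfrac{\partial}{\partial\varphi}F(\psi)\right| = 2lp_\m k^{2l-1}(1+o(1)) \geq p_\m k^{2l-1}
\]
uniformly on the straight-line segment joining $\varphi^{+}(0)$ to $\varphi$. Integrating $F'$ along this segment, and using $F(\varphi^{+}(0)) = 0$ together with the hypothesis $\varphi \notin \tilde{\cal O}^{+}_{\m,\varepsilon}$, gives
\[
|F(\varphi)| = \left|\int_{\varphi^{+}(0)}^{\varphi} F'(\psi)\, d\psi\right| \geq p_\m k^{2l-1}\, |\varphi - \varphi^{+}(0)| \geq k^{2l-1}p_\m \varepsilon,
\]
as required.

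The only real subtleties are two bookkeeping points: first, verifying that the line segment from $\varphi^{+}(0)$ to $\varphi$ lies inside $\tilde{\cal W}^{(1)}(k,\tfrac14)\cap {\cal O}_\m(k,1)$ so that the derivative formula of Lemma \ref{L:Appendix 2} applies (this is automatic since both endpoints lie within $k^{-2l+1+2\delta}\ll k^{-\delta}$ of $\varphi_\m^{+}$); and second, the $(1+o(1))$ factor in the derivative estimate swallows the constant $l\geq 2$ and leaves a clean constant $1$ in the final inequality. Neither step is a serious obstacle; the heart of the argument is the combination of the two-solution uniqueness from Lemma \ref{L: Appendix 1} (which forces $\varphi$ itself to be $\varphi^{\pm}(\varepsilon_0)$) with the lower bound on $|F'|$ from Lemma \ref{L:Appendix 2}.
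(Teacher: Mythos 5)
Your proof is correct and takes essentially the same route as the paper: both arguments reduce the nontrivial regime $|\lambda^{(1)}(\y(\varphi))-k^{2l}|<p_\m k^{\delta}$ to Lemma \ref{L: Appendix 1}, which forces $\varphi=\varphi^{\pm}(\varepsilon_0)$, and then use the derivative estimate \eqref{26a} of Lemma \ref{L:Appendix 2} to compare $\varphi^{\pm}(\varepsilon_0)$ with the zero $\varphi^{\pm}(0)$. The paper packages the last step as a contradiction via Rouch\'e's theorem (the zero moves by less than $\varepsilon$), while you integrate the nearly constant derivative along the connecting segment; this is the same quantitative fact in a slightly different form.
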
 \begin{proof} Suppose (\ref{3.7.1.10}) does not hold for
some $ \varphi \in \tilde{\cal W}^{(1)}(k,\frac18)\cap {\cal
O}_{\m}(k,\frac{1}{2})$, $\varphi \not \in \tilde {\cal
O}_{\m,\varepsilon}^\pm$. This means that $\varphi$ satisfies
equation (\ref{25a}) with some $\varepsilon _0$: $|\varepsilon _0|
<k^{2l-1}p_\m\varepsilon\,(<p_\m k^\delta)$. By Lemma \ref{L: Appendix 1} , $\varphi$ could
be either $\varphi^{+}(\varepsilon_0)$ or
$\varphi^{-}(\varepsilon_0)$. Without loss of generality, assume
$\varphi =\varphi^{+}(\varepsilon_0)$. By Lemma \ref{L: Appendix 1},
$|\varphi^{+}(\varepsilon_0)-\varphi _{\m}^{\pm}|<k^{-2l+1+2\delta
}$ for $\varphi _{\m}^{+}$ or $\varphi _{\m}^{-}$. Obviously,
$k^{-2l+1+2\delta}$ neighborhood of $\varphi^{+}(\varepsilon_0)$
satisfies conditions of Lemma \ref{L:Appendix 2}.
     Using (\ref{26a}) and Rouche's theorem in  the $k^{-2l+1+2\delta }$-neighborhood of
$\varphi^{+}(\varepsilon_0)$, we obtain that there is a point
$\varphi^{+}(0)$ in this neighborhood and   $\left|\varphi^{+}(\varepsilon_0)-\varphi^{+}(0)
    \right|<\varepsilon$, i.e., $\varphi \in \tilde {\cal O}_{\m,\varepsilon}$. This contradicts
    the hypothesis of the lemma.
    \end{proof}

\begin{lemma} \label{L:July5} If $0<p_\m\leq 4k^{\delta }$ and
$\varphi \in \tilde{\cal W}^{(1)}(k,\frac18)\cap {\cal
O}_\m(k,\frac{1}{2})$, then
\begin{equation} \left \|(\lambda ^{(1)}(\y(\varphi
))-k^{2l})\left(P_\m(H(\k^{(1)}(\varphi
))-k^{2l})P_\m\right)^{-1}\right\|\leq 8, \ \ \ \ \y(\varphi ):=
\k^{(1)}(\varphi )+\p _\m. \label{July3a}
\end{equation} \end{lemma}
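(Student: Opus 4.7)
The plan is to treat $A := P_\m H(\k^{(1)}(\varphi)) P_\m$ as a small perturbation of its diagonal part $A_0 := P_\m H_0(\k^{(1)}(\varphi)) P_\m$ and to bound the normalized resolvent via a contour-integral representation of the spectral projector onto the eigenvalue $\lambda^{(1)}(\y(\varphi))$. First I would record the two separation estimates already available at this point in the proof. Since $\varphi\in\OO_\m(k,\tfrac12)$, we have $||\k^{(1)}(\varphi)+\p_\m|_\R^2-k^2|\leq \tfrac12 k^{-40\mu\delta}$, whence
\[
||\k^{(1)}(\varphi)+\p_\m|_\R^{2l}-k^{2l}|\leq l\,k^{2l-2-40\mu\delta};
\]
meanwhile, \eqref{metka2} gives $||\k^{(1)}(\varphi)+\p_{\m+\q}|_\R^{2l}-k^{2l}|\geq \tfrac14 k^{2l-1-40\mu\delta}$ for every $\q\in\Omega(\delta)\setminus\{0\}$, which covers all other indices in the block $\tilde\MM_\m$.

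Next, I would run a Rouché/Neumann perturbation argument on the circle $C_1'$ of radius $\rho':=\tfrac{1}{16}k^{2l-1-40\mu\delta}$ centered at $|\k^{(1)}(\varphi)+\p_\m|_\R^{2l}$. The two separations above give $\|(zI-A_0)^{-1}\|\leq 16\,k^{-2l+1+40\mu\delta}$ on $C_1'$, so the Neumann series for $(zI-A)^{-1}$ with perturbation $P_\m V P_\m$ converges for $k>k_0(V)$ and yields $\|(zI-A)^{-1}\|\leq 32\,k^{-2l+1+40\mu\delta}$. Rouché's theorem applied to $\det(zI-A)$ versus $\det(zI-A_0)$ (exactly as in Lemma~\ref{L: Appendix 1}) then places a single eigenvalue $\mu_1$ of $A$ inside $C_1'$; the perturbation-series representation identifies it with $\lambda^{(1)}(\y(\varphi))$. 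The associated spectral projector $E_1:=\tfrac{1}{2\pi i}\oint_{C_1'}(zI-A)^{-1}\,dz$ then obeys
\[
\|E_1\|\leq \rho'\cdot 32\,k^{-2l+1+40\mu\delta}=2.
\]

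To finish, I would exploit that $E_1$ commutes with $A$, so $A$ splits along $\Ran E_1\oplus \Ran(I-E_1)$ and
\[
(A-k^{2l})^{-1} = (\mu_1-k^{2l})^{-1}E_1 + B(I-E_1),
\]
where $B$ is the reduced resolvent on $\Ran(I-E_1)$. The spectrum of $A|_{\Ran(I-E_1)}$ consists of the remaining eigenvalues $\{\hat\lambda_j\}_{j\geq 2}$, each at distance $\geq \tfrac18 k^{2l-1-40\mu\delta}$ from $k^{2l}$ (same Rouché argument around the other diagonal entries), and a parallel Neumann estimate gives $\|B\|\leq C\,k^{-2l+1+40\mu\delta}$. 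Combining,
\[
\|(\mu_1-k^{2l})(A-k^{2l})^{-1}\|\leq \|E_1\| + |\mu_1-k^{2l}|\|B\|\|I-E_1\|\leq 2+Cl\,k^{-1},
\]
which is at most $8$ for all $k$ large, matching the constant in \eqref{July3a}.

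The main obstacle is the non-self-adjointness of $A$ for complex $\varphi\in\tilde{\cal W}^{(1)}(k,\tfrac18)$, which blocks any purely spectral-distance estimate of the resolvent. The workaround is to control $\|E_1\|$ directly via the contour integral on a circle whose radius $\rho'$ is chosen large enough to leave the ``far'' diagonal entries of $A_0$ outside (using the $k^\delta$-block separation) yet small enough that $\rho'\cdot\sup_{z\in C_1'}\|(zI-A)^{-1}\|$ is an absolute constant. What makes the final bound work numerically is the one-$k^{-1}$ gain of $|\lambda^{(1)}(\y(\varphi))-k^{2l}|$ (size $k^{2l-2-40\mu\delta}$) relative to $\|B\|^{-1}$ (size $k^{2l-1-40\mu\delta}$), which is exactly the difference in scale between the ``near'' and ``far'' diagonal entries furnished by $\varphi\in\OO_\m(k,\tfrac12)$.
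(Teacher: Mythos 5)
Your argument coincides with the paper's proof up to and including the isolation step: the contour centered at $|\y(\varphi )|_{\R}^{2l}$ of radius comparable to $k^{2l-1-40\mu \delta }$ (the paper takes $k^{2l-1-80\mu \delta }$, which changes nothing essential), the resolvent bound $O(k^{-2l+1+40\mu \delta })$ on it via \eqref{metka1}--\eqref{metka2}, Rouch\'{e}'s theorem (in the paper run through the determinant inequality \eqref{determinants}) to place exactly one eigenvalue $\mu _1=\lambda ^{(1)}(\y(\varphi ))$ of $A=P_\m H(\k^{(1)}(\varphi ))P_\m$ inside, and the bound $\|E_1\|\leq 2$ for the Riesz projection. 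The divergence, and the gap, is in how you evaluate at $z=k^{2l}$. You split $(A-k^{2l})^{-1}=(\mu _1-k^{2l})^{-1}E_1+B(I-E_1)$ and claim $\|B\|\leq Ck^{-2l+1+40\mu \delta }$ on the grounds that the remaining eigenvalues lie at distance $\gtrsim k^{2l-1-40\mu \delta }$ from $k^{2l}$, plus an unspecified ``parallel Neumann estimate''. For complex $\varphi $ the compression of $A$ to $\Ran (I-E_1)$ is not normal, so distance to its spectrum does not bound its resolvent -- this is precisely the obstruction you yourself flag for the $E_1$ term, and it reappears here untreated. Nor does the obvious Neumann series at $z=k^{2l}$ based at $P_\m H_0P_\m$ converge: the near diagonal entry $|\y(\varphi )|_{\R}^{2l}$ may be arbitrarily close to, or equal to, $k^{2l}$ (that is exactly the regime the lemma must cover, and the reason the factor $\lambda ^{(1)}(\y(\varphi ))-k^{2l}$ appears at all), so $\left\|\left(P_\m (H_0(\k^{(1)}(\varphi ))-k^{2l})P_\m\right)^{-1}\right\|$ is not under control. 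Hence the key estimate on $B$ is asserted rather than proved.

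The gap is repairable, but it needs a genuine additional construction -- for instance a contour representation of $B$, or a comparison operator in which the isolated eigenvalue is artificially shifted away from $k^{2l}$ together with a bound on $E_1-E_1^{(0)}$, $E_1^{(0)}$ being the unperturbed rank-one projection. The paper sidesteps the reduced resolvent entirely: since $(\lambda ^{(1)}(\y(\varphi ))-z)\left(P_\m (H(\k^{(1)}(\varphi ))-z)P_\m \right)^{-1}$ is holomorphic in $z$ inside the contour (its only pole there is cancelled) and is bounded by $8$ on the boundary by exactly the estimates you already have, the maximum principle in the variable $z$ gives the bound at the interior point $z=k^{2l}$, which lies inside by \eqref{metka1}. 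Replacing your last step by this observation closes your argument with no new estimates.
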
 \begin{proof} Let $\tilde C_1$ be the
circle in $\C$ of the radius $k^{2l-1-80\mu \delta }$ centered at
$z=|\y(\varphi )|_{\R}^{2l}$. Using \eqref{metka1} and \eqref{metka2}, we easily get:
$$\left||\y(\varphi )+\p_\q|_{\R}^{2l}-z\right|\gtrsim
\frac{1}{2}k^{2l-1-40\mu \delta },\ \ \ \mbox{when}\ \ \q\in
\Omega(\delta),\ \ z\in \tilde C_1.$$ Therefore,
\begin{equation}\left\|\left(P_\m(H_0(\k^{(1)}(\varphi ))-z)P_\m\right)^{-1}\right\|\lesssim 2k^{-2l+1+40\mu \delta
},\label{2.8} \end{equation}
\begin{equation}\left\|\left(P_\m(H_0(\k^{(1)}(\varphi ))-z)P_\m\right)^{-1}\right\|_1\lesssim 2k^{-2l+1+40\delta(\mu
+1)}.\label{2.8'} \end{equation} Next, by
 (\ref{determinants}),
 \begin{equation}\Bigl|\det \frac{P_\m(H(\k^{(1)}(\varphi ))-z)P_\m}{P_\m(H_0(\k^{(1)}(\varphi ))-z)P_\m}-1 \Bigr|<
 2 \|V\|\left \|\left(P_\m(H_0(\k^{(1)}(\varphi ))-z)P_\m\right)^{-1}\right\|_1 \label{r1}
 \end{equation}
 for every $z$ on the contour $\tilde C_1$.
 Using the estimate (\ref{2.8'}), we obtain  that
 the right-hand part of (\ref{r1}) is less than 1.
 Applying  Rouch\'{e}'s theorem, we conclude that the
 determinant has the same number of zeros and poles inside $\tilde C_1$.
 Considering that the resolvent $\left(P_\m(H_0(\k^{(1)}(\varphi ))-z)P_\m\right)^{-1}$ has a single pole,
 $z=|\y(\varphi )|^{2l}_\R$,
 we obtain that  $\left(P_\m(H(\k^{(1)}(\varphi ))-z)P_\m\right)^{-1}$ has a single pole inside $\tilde C_1$ too.
  Obviously, the pole is in the point
 $z=\lambda
^{(1)}(\y(\varphi ))$. Therefore $$(\lambda ^{(1)}(\y(\varphi
))-z)\left(P_\m(H(\k^{(1)}(\varphi ))-z)P_\m\right)^{-1}$$ is
a holomorphic function of $z$
 inside $\tilde C_1$.

 Let $z\in \tilde C_1$. Using (\ref{perturbation}), we easily obtain:
 $|\lambda ^{(1)}(\y(\varphi))-z|\leq 2k^{2l-1-40\mu \delta }.$ From (\ref{2.8}) and
Hilbert identity it follows that
\begin{equation}\left\|\left(P_\m(H(\k^{(1)}(\varphi ))-z)P_\m\right)^{-1}\right\|\leq 4k^{-2l+1+40\mu \delta }, \
\mbox{when }z\in \tilde C_1.\label{2.8"} \end{equation} Multiplying
the last two estimates, and using maximum principle we get
\begin{equation} \left \|(\lambda ^{(1)}(\y(\varphi
))-z)\left(P_\m(H(\k^{(1)}(\varphi
))-z)P_\m\right)^{-1}\right\|\leq8, \ \ \ \ z\in
\overline{\hbox{Int}\,\tilde C_1}. \label{July3a'}
\end{equation} Note that $z:=k^{2l}\in\overline{\hbox{Int}\,\tilde
C_1}$. Indeed, by \eqref{metka1},
 $||\y(\varphi )|_{\R}^{2l}-k^{2l}|<k^{2l-2-40\mu
\delta }$. Substituting $z=k^{2l}$ in the last estimate, we get
\eqref{July3a}. \end{proof}

\subsection{Appendix 4}

\begin{lemma}\label{L:r_0} Let $R$ be the smallest positive integer for which \eqref{r_0} holds.
We have $R>\frac{1}{64} k^{(\frac\gamma2+2\delta_0)r_1-\delta}$.
\end{lemma}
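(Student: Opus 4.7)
The plan is to interpret $\mathcal{A}_R \neq 0$ as the existence of a path of length $R+1$ in the block-adjacency graph of $\hat{H}_0$ connecting $P^{\partial}$ to $P^{(int)}$, and then to bound $R$ below by a geometric/combinatorial argument exploiting the multiscale structure of $\Pi$. Expanding the operator product, a nonzero matrix element $(\mathcal{A}_R)_{\m,\m'}$ with $\m \in P^{(int)}$ and $\m' \in P^{\partial}$ forces a sequence $\m' = \m_0 \to \m_0' \to \m_1 \to \cdots \to \m_{R+1} = \m$ in which each inner transition $\m_i \to \m_i'$ (coming from $(\hat{H}_0 - k^{2l})^{-1}$) preserves the block of $\hat{H}_0$ and reduces to the identity on $P_0$ (since $\hat{H}_0 = H_0$ is diagonal there), while each outer transition $\m_i' \to \m_{i+1}$ (coming from $W$) satisfies $\||\p_{\m_i'} - \p_{\m_{i+1}}\|| \leq Q = k^{\delta}$ and crosses block boundaries.

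I would then translate this into a geometric displacement inequality. Since $\m' \in P^{\partial}$ lies within $\tfrac{1}{3}k^{\delta}$ of the outer boundary of $\Pi$ and $\m \in P^{(int)}$ lies at $\||\cdot\||$-distance $\geq D/2 = \tfrac{1}{2}k^{\gamma r_1/2 + 2\delta_0 r_1}$ from $\partial\Pi$, the total $\||\cdot\||$-displacement realized by the path is at least $D/3$. Each $W$-step contributes at most $Q$, and each inner step contributes at most the block diameter: $0$ on $P_0$, at most $k^{\delta}$ on non-resonant clusters, and at most $ck^{\gamma r_1/3 - \delta_0 r_1}$ on white clusters (Lemma \ref{L:white}). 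A further geometric input, extracted from the $3k^{\gamma r_1/6}$-equivalence construction of the white region in Step II, is that any two distinct white clusters are separated by at least $3k^{\gamma r_1/6}$ in $\||\cdot\||$-norm, so the path cannot $W$-hop directly from one white cluster to another.

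I would split the analysis into two cases on the number $n$ of iterations whose inner step lies in a white cluster. In the base case $n = 0$, every iteration advances by at most $2k^{\delta}$, and covering $D/3$ immediately yields $R + 1 \geq D/(6k^{\delta})$, i.e., the desired lower bound $R > \tfrac{1}{64}k^{(\gamma/2 + 2\delta_0)r_1 - \delta}$ after adjusting constants. For $n \geq 1$, the white-cluster separation forces at least $\tfrac{3}{2}(n-1)k^{\gamma r_1/6 - \delta}$ non-white iterations to transit between consecutive white visits, while the $n$ white visits can contribute at most $nck^{\gamma r_1/3 - \delta_0 r_1}$ to the displacement. Minimizing $R+1 = n + p$ subject to the displacement constraint $nck^{\gamma r_1/3 - \delta_0 r_1} + 2pk^{\delta} \geq D/3$ and the buffer constraint $p \geq \tfrac{3}{2}(n-1)k^{\gamma r_1/6 - \delta}$, together with the global cap $n \leq k^{\gamma r_1/2 + \delta_0 r_1}$ on the number of distinct white clusters inside $\Pi$ (Lemma \ref{L:grey}(3)), again produces the same lower bound on $R$.

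The hardest part will be the $n \geq 1$ optimization: I need to rule out that the ``shortcut'' provided by large within-cluster teleports could outperform the mandatory $P_0$-buffer cost separating white-cluster visits. The key mechanism is that every white visit must be sandwiched between $\sim k^{\gamma r_1/6 - \delta}$ non-white iterations, so that the effective cost-per-displacement remains of order $Q = k^{\delta}$; in particular, pushing $n$ up merely shifts iterations into the buffer, never below the $n = 0$ threshold. Propagating the constants through this trade-off, with the mild slack coming from writing $D/3$ and $2k^\delta$ rather than $D/2$ and $k^\delta$, produces the claimed constant $\tfrac{1}{64}$.
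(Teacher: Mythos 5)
Your skeleton (expand $\mathcal{A}_R$ into paths, bound the net $\||\cdot\||$-displacement $\gtrsim D/2$, charge $O(k^{\delta})$ per ordinary step and the cluster diameter per white visit) matches the paper's frame, but the mechanism you use to control the white-cluster ``teleports'' --- mutual separation of white clusters plus the cap on their number --- does not close the argument, and your stated optimization does not in fact yield the claimed bound. First, a bookkeeping flaw: your buffer constraint $p\geq\tfrac32(n-1)k^{\gamma r_1/6-\delta}$ is asserted with $n$ counting white \emph{visits}, but a path can exit a white cluster and re-enter the same cluster after a single $P_0$-step, so no buffer is forced between such visits; conversely, if $n$ counts \emph{distinct} clusters, then the displacement constraint $nck^{\gamma r_1/3-\delta_0 r_1}+2pk^{\delta}\geq D/3$ needs justification, since a cluster visited many times naively contributes many times. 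The paper handles exactly this by the grouping in \eqref{Apr12}: all occurrences of a given cluster index between its first and last appearance are absorbed into one block $P_{j_k}BP_{j_k}$, whose net displacement is bounded once by that cluster's diameter, which is what legitimizes the inequality \eqref{odin*} with a sum over \emph{distinct} clusters.

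Second, and more seriously, even after this repair your linear program fails quantitatively. A white cluster may be as large as $ck^{\gamma r_1/3-\delta_0 r_1}$ (Lemma \ref{L:white}) while the separation between white clusters is only $k^{\gamma r_1/6}$, and $\delta_0=\gamma/100\ll\gamma/6$. So the feasible ``adversary'' is a chain of $\sim D/k^{\gamma r_1/3-\delta_0 r_1}=k^{\gamma r_1/6+3\delta_0 r_1}$ near-maximal white clusters with minimal gaps: it satisfies your displacement, buffer and cap constraints (the cap $n\leq k^{\gamma r_1/2+\delta_0 r_1}$ from Lemma \ref{L:grey} is far from binding) at total cost $n+p\sim k^{\gamma r_1/6+3\delta_0 r_1}\cdot k^{\gamma r_1/6-\delta}=k^{\gamma r_1/3+3\delta_0 r_1-\delta}$, which is smaller than the target $\tfrac1{64}k^{(\gamma/2+2\delta_0)r_1-\delta}$ by the factor $k^{(\gamma/6-\delta_0)r_1}$. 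What rules this configuration out is not separation-plus-cap bookkeeping but the sublinear density of $\MM^{(2)}$: each white cluster satisfies $d_m\leq n_mk^{\gamma r_1/6}$, all clusters met lie in a box of size $\ell\leq 2\sum_m d_m$ once one assumes the ordinary steps cover less than a quarter of $D$ (this is \eqref{tri*}), and Lemma \ref{L:2/3-1} gives $\sum_m n_m\leq C\ell^{2/3}k$; solving yields $\sum_m d_m<k^{\gamma r_1/2+3}\ll\tfrac14 D$, a contradiction, so in fact essentially all of the distance must be covered by $O(k^{\delta})$-steps, which is \eqref{dva*} and hence the stated bound on $R$. This $\ell^{2/3}k$ counting lemma is the essential ingredient missing from your proposal. (A minor point: $Q$ is a fixed constant, not $k^{\delta}$; the $O(k^{\delta})$ per-step displacement comes from the $k^{\delta}$-block structure of $\hat H_0$, not from the range of $V$.)
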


\begin{proof}
Notice that
$$
{\cal A}_{R}=\sum_{i_1,\dots,i_{R}=0}^{\hat{I}}{\cal
A}_{i_1,\dots,i_{R}},
$$
where
$$
{\cal
A}_{i_1,\dots,i_{R}}:=P^{(int)}W(\hat{H}_0-k^{2l})^{-1}\left[\prod_{k=1}^{R}
\left(P_{i_k}W(\hat{H}_0-k^{2l})^{-1}\right)\right]P^{\partial}.
$$
Here we used that $R$ is the smallest positive integer for which
${\cal A}_{R}\not=0$.

In some sense everything is defined by the case where all $i_k$ are
equal to zero. But to include impurities of non-resonant and white
clusters we need additional construction. Consider a particular
${\cal A}_{i_1,\dots,i_{R}}$. For the sequence $i_1,\dots,i_{R}$ we
take a subsequence of {\it all} non-zero indices
$i_{k_1},\dots,i_{k_s}$, $1\leq k_1<\dots<k_s\leq R$ (this sequence
can be empty). Now we construct a subsequence ${j_1},\dots,{j_p}$
($p\leq s$) of non-repeating indices as follows. We choose
${j_1}=i_{k_1}$. If $i_{k_1}$ is not equal to any other $i_{k_t}$,
$t=2,\dots,s$, then ${j_2}=i_{k_2}$. If there is one or more
$i_{k_t}$ equal to $i_{k_1}$ then we denote the segment between the
first and the last $i_{k_1}$ as $I_1$. The next term after $I_1$ we
choose to be an ${j_2}$ etc. Thus (with a slight abuse of the
notation) we have
$$
i_{k_1},\dots,i_{k_s}=I_1, I_2,\dots,I_p,\ \ \ p\leq s.
$$
Now, the initial sequence $i_1,\dots,i_{R}$ can be represented as
$I_1^0,\tilde{I}_1,I_2^0,\tilde{I}_2,\dots,\tilde{I}_p,I_{p+1}^0$.
Here each $I_j^0$ is a sequence of only zeros (it can be empty) and
$\tilde{I}_j$ is $I_j$ with possibly some zeros inside. Put
$$
P_{j_1}BP_{j_1}:=P_{j_1}W(\hat{H}_0-k^{2l})^{-1}P_{i'}W(\hat{H}_0-k^{2l})^{-1}\dots(\hat{H}_0-k^{2l})^{-1}P_{j_1}.
$$
Here for all internal projectors $P_{i'}$ we have either $i'\in I_1$
or $i'=0$. We notice that $P_{j_1}BP_{j_1}$ has a block form and
$\|P_{j_1}BP_{j_1}\|\leq k^{-\beta}$. We can represent now ${\cal
A}_{i_1,\dots,i_{R}}$ in the following form
\begin{equation} \label{Apr12}
\begin{split}
&{\cal
A}_{i_1,\dots,i_{R}}=P^{(int)}W(\hat{H}_0-k^{2l})^{-1}\left[\prod_{k=1}^{p}
\left(P_{0}W(\hat{H}_0-k^{2l})^{-1}\right)^{s_k}P_{j_k}BP_{j_k}W(\hat{H}_0-k^{2l})^{-1}\right]\cr
&
\times\left(P_{0}W(\hat{H}_0-k^{2l})^{-1}\right)^{s_{p+1}}P^{\partial}.
\end{split}
\end{equation}
Here $s_k$ is the number of elements in $I_k^0$, $s_k \geq 0$; $j_k$
is a non-zero index corresponding to $I_{k}$. Obviously,
$\left(W(\hat{H}_0-k^{2l})^{-1}\right)_{\m\m'}=0$ if
$\||\cdot\||$-distance between the cluster containing $\p_\m$ and
the cluster containing $\p_{\m'}$ is greater than $k^\delta$ (here,
as usual, we consider the points in the range of $P_0$ as $1\times1$
clusters). Next, if $P_{j_k}$ is the projection on a non-resonant
cluster, then
$$
\left(P_{j_k}BP_{j_k}\right)_{\m\m'}=0,\ \ \ \hbox{for}\
\||\p_{\m-\m'}\||>8k^\delta,
$$
since a non-resonant cluster has the size not greater than $8k^{\delta }$.
Let $p'$ be the number of non-resonant projections in the sequence
$\{P_{j_k}\}_{k=1}^p$. Hence, $p-p'$ is the number of white
clusters. The operator ${\cal A}_{i_1,\dots,i_{R}}$ can be non-zero
only if
\begin{equation}\label{odin*}
\frac D2 \leq k^\delta\sum_{k=1}^{p+1}(s_k+1)+8k^{\delta }p'
+\sum_{m=1}^{p-p'} d_m,
\end{equation}
where $d_m$ is the size of a white cluster $P_{j_{k_m}}$. Next, we
prove that
\begin{equation}\label{dva*}
\sum_{k=1}^{p+1}(s_k+1)+p' \geq
\frac{1}{32}k^{(\frac\gamma2+2\delta_0)r_1-\delta}.
\end{equation}
Assume that \eqref{dva*} does not hold. Then, by \eqref{odin*}
\begin{equation}\label{tri*}
\sum_{m=1}^{p-p'} d_m\geq \frac 14 k^{(\frac\gamma2+2\delta_0)r_1},
\end{equation}
since $D=k^{(\frac\gamma2+2\delta_0)r_1}$.
Obviously,
$$d_m\leq n_mk^{\frac{\gamma r_1}{6}},$$
where $n_m$ is the number of $\MM^{(2)}$ points in the white cluster
$\Pi _{j_{k_m}},\ m=1,\dots,p-p'$. Let $\ell$ be the size of a
minimal box containing all white clusters  $\Pi _{j_{k_m}}$.  It is
easy to see that
$$\ell\leq k^{\delta }\sum_{k=1}^{p+1}(s_k+1)+8k^{\delta }p' +\sum_{m=1}^{p-p'} d_m\leq
\frac 14 k^{(\frac\gamma2+2\delta_0)r_1}+\sum_{m=1}^{p-p'}
d_m\leq2\sum_{m=1}^{p-p'} d_m.$$ Here we also used \eqref{tri*} and
the inequality opposite to \eqref{dva*}. By Lemma \ref{L:2/3-1}
\begin{equation}\label{vot1}
\sum_{m=1}^{p-p'} n_m\leq C\ell^{2/3}k.
\end{equation}
Combining the last three inequalities and solving for
$\sum_{m=1}^{p-p'} d_m $, we obtain: $\sum_{m=1}^{p-p'}
d_m<k^{\frac{\gamma r_1}{2}+3}$.
This contradicts to \eqref{tri*}. Thus, we proved \eqref{dva*}.
Using the obvious inequality $\sum _{k=1}^{p+1}(s_k+1)+p'\leq R+1$
proves the lemma.

\end{proof}
\subsection{Appendix 5}
\begin{lemma}\label{L:r_0-b} Let $R$ be the smallest positive integer for which \eqref{r_0} holds in the case of a black cluster.
We have $R>\frac{1}{64} k^{\gamma r_1+\delta _0r_1-\delta}$.
\end{lemma}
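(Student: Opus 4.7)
The plan is to transpose the argument of Lemma~\ref{L:r_0} (Appendix~4) to the black cluster setting, replacing the two-tier interior decomposition (white/non-resonant inside grey) by the three-tier one (grey/white/non-resonant inside black), and replacing the displacement bound $D = k^{\gamma r_1/2 + 2\delta_0 r_1}$ by $D = k^{\gamma r_1 + \delta_0 r_1}$. Specifically, I would expand
\[
{\cal A}_R = \sum_{i_1,\ldots,i_R = 0}^{\hat I} {\cal A}_{i_1,\ldots,i_R}
\]
exactly as in \eqref{Apr12}, where $P_0$ now projects onto the ``diagonal background'' inside the black cluster and each $P_{i_k}$, $i_k \neq 0$, projects onto a grey, white, or non-resonant subcluster contained in the $D/2$-neighborhood of $\partial\Pi$. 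Collapsing maximal runs that repeat a given non-zero index into effective blocks $P_{j_k} B P_{j_k}$ with $\|P_{j_k} B P_{j_k}\| \leq k^{-\beta}$, I would then exploit the matrix-support constraint: $(W(\hat H_0 - k^{2l})^{-1})_{\m\m'} = 0$ once the $\||\cdot\||$-distance between the clusters containing $\p_\m, \p_{\m'}$ exceeds $k^\delta$, and $(P_{j_k} B P_{j_k})_{\m\m'} = 0$ once $\||\p_{\m-\m'}\||$ exceeds the diameter of the cluster plus $k^\delta$.

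The next step is the necessary-travel inequality: since $P^{(int)}$ lies at distance at least $D/2$ from $P^{\partial}$, any non-vanishing term must satisfy
\begin{equation}\label{Apr13}
\tfrac{D}{2} \;\leq\; k^\delta \sum_{k=1}^{p+1}(s_k + 1) \;+\; 8 k^\delta\, p_{nr} \;+\; \sum_{m=1}^{p_w} d_m^{(w)} \;+\; \sum_{m=1}^{p_g} d_m^{(g)},
\end{equation}
where $p_{nr}, p_w, p_g$ are the numbers of non-resonant, white, and grey sub-blocks among $\{P_{j_k}\}$, and $d_m^{(w)}, d_m^{(g)}$ are their sizes. I would then argue by contradiction that, if $R < \tfrac{1}{64} k^{\gamma r_1 + \delta_0 r_1 - \delta}$, then the grey contribution must dominate: since $\sum_k(s_k+1) + p_{nr} \leq R+1$ contributes at most $O(k^{\gamma r_1 + \delta_0 r_1 - \delta + \delta})$ and the total white size is controlled via Lemma~\ref{L:white} by $k^{\gamma r_1/3 - \delta_0 r_1}$ per cluster, both these terms fit well below $D/4$, forcing $\sum_m d_m^{(g)} \geq \tfrac{1}{4} k^{\gamma r_1 + \delta_0 r_1}$.

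Finally, I would close the contradiction by counting $\MM^{(2)}$-points. Each grey cluster has $d_m^{(g)} \leq n_m^{(g)} \cdot k^{\gamma r_1/3}$ by Lemma~\ref{L:grey} (since its size is controlled by the number of grey boxes it contains times the small-box scale), and all involved grey clusters fit in a minimal box of size $\ell \leq 2 \sum_m d_m^{(g)}$ by the hypothesis that \eqref{dva*} fails in this setting. Lemma~\ref{L:2/3-1} then yields $\sum_m n_m^{(g)} \leq C \ell^{2/3} k$, hence
\[
\sum_m d_m^{(g)} \;\leq\; C\, k \cdot \Bigl(2 \sum_m d_m^{(g)}\Bigr)^{2/3} \cdot k^{\gamma r_1/3},
\]
which upon solving gives $\sum_m d_m^{(g)} < k^{\gamma r_1 + 3}$, contradicting the lower bound $\tfrac{1}{4} k^{\gamma r_1 + \delta_0 r_1}$ since $\delta_0 r_1 > 100$. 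The main obstacle, as in the grey case, is this last combinatorial step: one must carefully verify that the hybrid bound \eqref{Apr13} forces the grey sizes to dominate, since the grey layer is the only one whose size scale ($\sim k^{5\gamma r_1/6 + 4\delta_0 r_1}$) is comparable to $D$, and therefore the only one capable of accumulating to $D/2$ without exhausting the available index budget $R+1$. Everything else is a direct transcription of the Appendix~4 computation with indices shifted up by one color level.
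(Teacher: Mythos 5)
Your overall skeleton (expand as in \eqref{Apr12}, collapse repeated indices into blocks of norm $\leq k^{-\beta}$, use the travel inequality with $D=k^{\gamma r_1+\delta_0 r_1}$, and rule out domination by the coloured sub-blocks) is the same as the paper's, but the decisive counting step does not close as you wrote it. The conversion from $\MM^{(2)}$-points to grey size is not $d_m^{(g)}\leq n_m^{(g)}k^{\gamma r_1/3}$: a grey cluster with $n_g$ grey small boxes has size $\lesssim n_g k^{\gamma r_1/2+2\delta_0 r_1}$ while each grey small box only guarantees $k^{\gamma r_1/6-\delta_0 r_1}$ points, so the correct factor is $k^{\gamma r_1/3+3\delta_0 r_1}$. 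Feeding this into your self-referential inequality $\sum d^{(g)}\leq Ck\,\ell^{2/3}k^{\gamma r_1/3+3\delta_0 r_1}$ with $\ell\lesssim \sum d^{(g)}$ gives only $\sum d^{(g)}\lesssim k^{\gamma r_1+9\delta_0 r_1+3}$, which does \emph{not} contradict $\sum d^{(g)}\geq\frac14 k^{\gamma r_1+\delta_0 r_1}$; and bounding $\ell$ instead by the black-cluster diameter $k^{3\gamma r_1/2+3}$ makes things worse. The generic two-thirds count of Lemma \ref{L:2/3-1} is simply too lossy here. The paper closes the gap by exploiting the colour hierarchy instead: all grey (and white) sub-blocks met along the path sit inside white big $k^{\gamma r_1}$-boxes, each of which by \emph{definition} contains at most $k^{\gamma r_1/2+\delta_0 r_1}$ points of $\MM^{(2)}$, and a covering box of size $4k^{\gamma r_1+\delta_0 r_1}$ contains at most $4^4k^{4\delta_0 r_1}$ such big boxes (see \eqref{Apr12-1} and the surrounding argument). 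This caps the number of grey small boxes by $ck^{\gamma r_1/3+6\delta_0 r_1}$ and hence the total grey size by $ck^{5\gamma r_1/6+8\delta_0 r_1}\ll\frac14 k^{\gamma r_1+\delta_0 r_1}$ (using $\delta_0<\gamma/48$), which is the contradiction you need.

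Two secondary points. First, your claim that the white contribution ``fits well below $D/4$'' cannot follow from the per-cluster bound of Lemma \ref{L:whiteind}-type alone, since the number of distinct white blocks along the path is not controlled by $R$; the paper again uses the point count in white big boxes to get $\sum d^{w}\lesssim k^{\gamma r_1/2+5\delta_0 r_1}\cdot k^{\gamma r_1/6}<\frac1{20}k^{\gamma r_1+\delta_0 r_1}$. Second, since a black cluster may be much larger than $D$ (its diameter can be $\sim k^{3\gamma r_1/2+3}$), one must also treat the case when the visited components do not all fit in a single $4k^{\gamma r_1+\delta_0 r_1}$-box; the paper does this by truncating the product between $\Pi_{j_1}$ and the first component leaving that box, where the guaranteed travelled distance is again $\frac12 k^{\gamma r_1+\delta_0 r_1}$.
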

\begin{proof} We again use formula \eqref{Apr12}, where $P_{j_k}$ are projections on non-resonant, white and grey components in a component of a black region.
Assume first that all components $\Pi_{j_k}$ can be placed in one
$\||\cdot \||$ box of the size $4k^{\gamma r_1+\delta _0r_1}$.
Obviously,
\begin{equation} \label{Apr12-1}\frac{1}{2}k^{\gamma r_1+\delta _0r_1}\leq k^{\delta }\sum (s_k+1) +8k^{\delta }p'+\sum  _m d_m^w +\sum _{\tilde m} d_{\tilde m}^g,\end{equation}
where $p'$ is the number of non-resonant components in the black component,
$\sum  _m d_m^w$ and $\sum  _{\tilde m} d_{\tilde m}^g$ are the total lengths of white and grey components
 in the black component. Let us prove first that
 $k^{\delta }\sum (s_k+1) +8k^{\delta }p'+\sum  _m d_m^w >\frac{1}{4}k^{\gamma r_1+\delta _0r_1}$. Suppose that it is not so.
 Then, by \eqref{Apr12-1},  $\sum _{\tilde m} d_{\tilde m}^g>\frac{1}{4}k^{\gamma r_1+\delta _0r_1}$.
The $4k^{\gamma r_1+\delta _0r_1}$-box containing all $\Pi_{j_k}$,
consists of no more than $4^4k^{4\delta_0r_1}$ boxes of the size
$k^{\gamma r_1}$. Since all $\Pi _{j_k}$ are in white $k^{\gamma
r_1}$ -boxes, the total number of points of $\MM^{(2)}$ in these
white boxes does not exceed $ck^{\frac{1}{2}\gamma r_1+\delta
_0r_1}\cdot k^{4\delta _0r_1}$. Since each grey box contains no less
than $k^{\frac{1}{6}\gamma r_1-\delta _0r_1}$ points, the total
number of grey boxes is less than $ck^{\frac{1}{2}\gamma r_1+\delta
_0r_1}\cdot k^{4\delta _0r_1}\cdot k^{-\frac{1}{6}\gamma r_1+\delta
_0r_1}=ck^{\frac{1}{3}\gamma r_1+6\delta _0r_1}$. Therefore, the
total size of the grey region is less than $ck^{\frac{1}{3}\gamma
r_1+6\delta _0r_1}\cdot k^{\frac{1}{2}\gamma r_1+2\delta
_0r_1}=ck^{\frac{5}{6}\gamma r_1+8\delta _0r_1}$. Since $\delta
_0<\frac{1}{48}\gamma$, it is much less than $\frac{1}{4}k^{\gamma
r_1+\delta _0r_1}$. We have arrived to the contradiction with the
assumption $\sum _{\tilde k} d_{\tilde k}^g>\frac{1}{4}k^{\gamma
r_1+\delta _0r_1}$.  Therefore, $k^{\delta }\sum (s_k+1) +8k^{\delta
}p'+\sum  _k d_k^w > \frac{1}{4}k^{\gamma r_1+\delta _0r_1}$.
Considering again that the total number of $\MM^{(2)}$ points in the
white boxes of the $4k^{\gamma r_1+\delta _0}$-box does not exceed
$ck^{\frac{\gamma r_1}{2}+5\delta _0r_1}$, we obtain $\sum _k
d_k^w<ck^{\frac{\gamma r_1}{2}+5\delta _0r_1}\cdot 4k^{\frac{\gamma
r_1}{6}}<\frac{1}{20}k^{\gamma r_1+\delta _0r_1}$. It follows
$k^{\delta }\sum (s_k+1) +8k^{\delta }p'
>\frac{1}{5}k^{\gamma r_1+\delta _0r_1}$. Hence, $R+1\geq\sum
(s_k+1) +p'
>\frac{1}{40}k^{\gamma r_1+\delta _0r_1-\delta}$.

Assume that we cannot put all the components $\Pi_{j_k}$  in one
$\||\cdot \||$-box of the size $4k^{\gamma r_1+\delta _0r_1}$. Let
us consider the box of this size around   $\Pi_{j_1}$. Let $K$ be a
number such that all $\Pi_{j_k}$, $k=1,...K$ are in the box and
$\Pi_{j_{K+1}}$ is not. Then, instead of \eqref{Apr12} we consider
just its piece \begin{equation} \label{Apr12-2}
\begin{split}&
P_{j_1}W(\hat{H}_0-k^{2l})^{-1}\left[\prod_{k=2}^{K}
\left(P_{0}W(\hat{H}_0-k^{2l})^{-1}\right)^{s_k}P_{j_k}BP_{j_k}W(\hat{H}_0-k^{2l})^{-1}\right]
 \cr &
 \times\left(P_{0}W(\hat{H}_0-k^{2l})^{-1}\right)^{s_{K+1}}P_{j_{K+1}}.
\end{split}
\end{equation}
Further considerations are the same as in the previous case since by
construction the distance between $\Pi_{j_1}$ and $\Pi_{j_{K+1}}$ is
at least $\frac12 k^{\gamma r_1+\delta _0r_1}$.

\end{proof}

\subsection{Appendix 6. On Application of Bezout Theorem}

Let $D(\k, \lambda )$ be the determinant of the truncated operator
$H(\k)-\lambda $ of the size $k^{r_*}$, $r_*\geq 1$. Obviously, $D$ is the
polynomial of the degree $k^{4lr_*}$ with respect to
$\varkappa_1,\varkappa_2$ and {\it a line is not a solution of the
equation} $D(\k, \lambda )=0$. Let $\lambda $ be fixed, $\lambda
=k^{2l}$.
\begin{definition}\label{elementary} We call a piece of $D(\k, \lambda )=0$ elementary, if

1) it can be parameterized by $\varkappa_1$:
$\varkappa_2=\varkappa_2(\varkappa_1)$ with
$|\varkappa_2'(\varkappa_1)|\leq1$ or by $\varkappa_2$:
$\varkappa_1=\varkappa_1(\varkappa_2)$ with
$|\varkappa_1'(\varkappa_2)|\leq1$;

2) function $\varkappa_1=\varkappa_1(\varkappa_2)$ (or
$\varkappa_2=\varkappa_2(\varkappa_1)$) is monotone and continuously
differentiable;

3) it does not have inflection points inside;

4) it has a length not greater than $1$.
\end{definition}
We will show that the curve $D(\k, \lambda )=0$ can be split into elementary pieces  and estimate the number of such pieces. In
the proof we will apply several times the following statement (which
is a simplified version of Bezout Theorem).
\begin{theorem}\label{bezout}
Let $P$ and $Q$ be two plane real-valued polynomials of degree $p$
and $q$ respectively. If $P$ and $Q$ do not contain common factors
then the total number of points satisfying $P(\k)=0=Q(\k)$ (i.e.
number of points of intersection) does not exceed $pq$.
\end{theorem}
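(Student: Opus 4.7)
The plan is to prove this simplified Bézout bound via the resultant, together with a generic-position argument that reduces the problem to counting zeros of a single univariate polynomial.

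First, I would perform a linear change of coordinates in $\R^2$ so that no two distinct intersection points of $\{P=0\}\cap\{Q=0\}$ share the same first coordinate. Since the intersection set is finite (which itself must be argued, but follows from either side of the bound once set up), the set of bad directions (directions along which two intersection points project to the same point) is a finite union of lines in the space of linear changes, hence avoidable. Neither the degrees of $P,Q$ nor the property of sharing no common factor are affected by an invertible linear substitution, so after this reduction we may count the first coordinates of intersection points instead of the points themselves.

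Next, viewing $P$ and $Q$ as elements of $(\R[x])[y]$, I would introduce the resultant $R(x):=\mathrm{Res}_y\bigl(P(x,y),Q(x,y)\bigr)$. The key algebraic input is that $R(x)\not\equiv 0$: by Gauss's lemma, the absence of a common factor of $P$ and $Q$ in $\R[x,y]$ is equivalent to their having no common factor in the PID $\R(x)[y]$, so their resultant in that ring—which coincides with $R(x)$ up to a nonzero scalar—is nonzero. A standard Sylvester-determinant degree count then gives $\deg R\le pq$. For every point $(x_0,y_0)$ with $P(x_0,y_0)=Q(x_0,y_0)=0$, the specializations $P(x_0,\cdot)$ and $Q(x_0,\cdot)$ share the root $y_0$, hence $R(x_0)=0$. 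By the generic-position assumption, distinct intersection points give distinct roots of $R$, so the total number of intersections is at most $\deg R\le pq$.

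The only subtle point—and the step I expect to be the main obstacle—is the clean handling of degenerate configurations: (i) ensuring the resultant does not drop in degree in a way that breaks the count when the leading $y$-coefficient of $P$ or $Q$ vanishes at an intersection, and (ii) confirming that the generic change of coordinates can always be chosen to simultaneously arrange distinct $x$-coordinates and keep the top-degree forms of $P,Q$ in $y$ nondegenerate. Both issues are resolved by the same generic-position argument: the conditions to avoid form a finite union of proper algebraic subvarieties in the parameter space of linear maps, so a generic rotation works. An alternative route, which I would mention as a backup, is to homogenize $P$ and $Q$, pass to $\mathbb{P}^2(\C)$, and invoke the classical projective Bézout theorem (intersection points counted with multiplicity total exactly $pq$); the real affine count is bounded above by the complex projective count, which gives the claimed inequality without any genericity maneuver.
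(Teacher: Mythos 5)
Your proposal is correct, but there is nothing in the paper to compare it against: the authors state Theorem \ref{bezout} explicitly as ``a simplified version of Bezout Theorem'' and invoke it as a classical fact without proof (it is used only as an input to Lemma \ref{elpieces}, whose proof is what the appendix actually supplies). Your resultant argument is the standard one and works: coprimality in $\R[\varkappa_1,\varkappa_2]$ passes, via Gauss's lemma, to coprimality in $\R(\varkappa_1)[\varkappa_2]$, so $\mathrm{Res}_{\varkappa_2}(P,Q)$ is a nonzero polynomial of degree at most $pq$, and after a generic rotation (which both keeps the leading $\varkappa_2$-coefficients nondegenerate and separates the first coordinates of intersection points) each intersection contributes a distinct root. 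Two small points to tighten: the finiteness of the intersection set should be argued before the generic-position choice rather than deferred to ``either side of the bound'' --- it follows directly from a B\'ezout-type identity $\tilde A P+\tilde B Q=c(\varkappa_1)$ with $c\not\equiv 0$, plus the observation that a whole vertical line in the zero set would force the common factor $(\varkappa_1-x_0)$; and in your backup route via projective B\'ezout over $\C$ you should note that coprimality over $\R$ implies coprimality over $\C$ (the gcd is unchanged under field extension), after which the complex projective count of $pq$ dominates the real affine count. For the paper's purposes the cleanest citation is indeed the projective B\'ezout theorem, which is what the authors implicitly rely on.
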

We have
\begin{lemma} \label{elpieces} The set $D(\k,\lambda )=0$ can be split into $k^{17l r_*}$ or less elementary pieces.
\end{lemma}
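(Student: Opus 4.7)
The plan is to use B\'ezout's theorem repeatedly to produce a finite collection of ``breakpoints'' on the real algebraic curve $D(\k,\lambda)=0$ whose removal leaves arcs automatically satisfying conditions (1)--(3) of Definition~\ref{elementary}, and then to subdivide further to enforce condition (4). Writing $d:=\deg D = k^{4lr_*}$, I would consider four auxiliary polynomials: the two partials $D_{\varkappa_1}$, $D_{\varkappa_2}$, the two combinations $D_{\varkappa_1}\pm D_{\varkappa_2}$, and the classical plane-curve inflection form
$$P_{\mathrm{inf}}:=D_{\varkappa_2\varkappa_2}D_{\varkappa_1}^{\,2}-2D_{\varkappa_1\varkappa_2}D_{\varkappa_1}D_{\varkappa_2}+D_{\varkappa_1\varkappa_1}D_{\varkappa_2}^{\,2}.$$
Common zeros of $D$ with the first three groups detect singular points, horizontal or vertical tangents, and points where the slope $|d\varkappa_2/d\varkappa_1|$ crosses $1$; common zeros with $P_{\mathrm{inf}}$ detect inflection points. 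Each auxiliary polynomial has degree $\leq 3d-4$, so B\'ezout bounds the number of such common zeros by $d(3d-4)\leq 3d^{2}=3k^{8lr_*}$, provided $D$ shares no common factor with the auxiliary polynomial in question.

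Absence of common factors is handled as in the proof of Lemma~\ref{L:curves-2}, part~3: any common component of $D$ and $P_{\mathrm{inf}}$ (or $D_{\varkappa_j}$, or $D_{\varkappa_1}\pm D_{\varkappa_2}$) would produce a component of $\{D=0\}$ with everywhere vanishing curvature, hence a straight line; but the leading-term asymptotics of $D$ as $|\k|\to\infty$ are governed by the diagonal factors $\prod_\m|\k+\p_\m|^{2l}$ and so no line can lie in $\{D=0\}$. Collecting all four B\'ezout contributions yields a set $\mathcal{E}$ of at most $\mathcal{O}(d^{2})=\mathcal{O}(k^{8lr_*})$ breakpoints; removing $\mathcal{E}$ from $\{D=0\}$ decomposes the curve into at most $\mathcal{O}(k^{8lr_*})$ arcs on each of which the implicit function theorem parametrizes by either $\varkappa_1$ or $\varkappa_2$ with slope $\leq 1$, monotonicity holds, and no inflection occurs, so properties (1)--(3) of the definition are satisfied.

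To enforce property (4), I would restrict attention to the bounded region where $\{D=0\}$ actually lives when $\lambda=k^{2l}$ (the curve sits near the circle $|\k|\approx k$, so it is contained in a disc of radius $R=\mathcal{O}(k)$) and invoke the integral-geometric bound for a real plane algebraic curve of degree $d$ in such a disc: its total length is at most $CdR=\mathcal{O}(k^{4lr_*+1})$. Consequently, every monotone smooth arc produced above, say of length $L$, is subdivided into $\lceil L\rceil$ sub-arcs of length at most $1$, adding at most another $\mathcal{O}(k^{4lr_*+1})$ pieces overall. Summing the two contributions gives a total of at most $Ck^{8lr_*}+Ck^{4lr_*+1}\leq k^{17lr_*}$, as required; the exponent $17$ leaves a generous margin which absorbs all $C$'s and the freedom in choosing $R$.

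The principal obstacle is the verification that $D$ shares no nontrivial factor with each of the auxiliary polynomials, because a common factor would render B\'ezout useless and spoil the whole count; this requires a careful leading-term analysis of $D$ generalizing the one done for small blocks in Lemma~\ref{L:curves-2}. A secondary, but less delicate, issue is to pin down the bounded region in which one works and to justify the $\mathcal{O}(dR)$ bound on the length (Cauchy--Crofton for semi-algebraic sets), both of which are standard.
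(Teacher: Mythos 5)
Your plan founders on the coprimality hypothesis of B\'ezout's theorem. You apply B\'ezout directly to $D$ paired with $D_{\varkappa_1}$, $D_{\varkappa_2}$, $D_{\varkappa_1}\pm D_{\varkappa_2}$ and the inflection form, and you justify the absence of common factors by arguing that a common component would force a straight line inside $\{D=0\}$. That argument only addresses one source of common factors. Nothing rules out $D$ having a \emph{repeated} irreducible factor (the determinant of a truncated operator with block structure is typically reducible, and multiplicities are entirely possible); if $P^2$ divides $D$, then $P$ divides both first partials of $D$, hence also $D_{\varkappa_1}\pm D_{\varkappa_2}$ and every term of your $P_{\mathrm{inf}}$ (each term contains two first-derivative factors), so all your B\'ezout counts are vacuous — and this has nothing to do with vanishing curvature or line components, so your fallback argument does not apply. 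The paper avoids exactly this trap: it factors $D$ into irreducible factors (explicitly ignoring multiplicity), and for each irreducible factor $P$ of degree $p$ it runs the B\'ezout counts for $P$ against $P_{\varkappa_j}$, $P_{\varkappa_1}\pm P_{\varkappa_2}$ and the inflection form built from $P$, where coprimality is automatic because $P$ is irreducible, the auxiliary polynomials have strictly smaller degree in the relevant sense, and they are not identically zero since no line solves $D=0$. The per-factor counts ($\leq 12p^4$ breakpoints plus $\leq 18p^2k^{r_*}$ unit-length subdivisions) are then summed over the factors, whose degrees total at most $k^{4lr_*}$. To rescue your route you would either have to prove $D$ squarefree (not established anywhere, and not needed by the paper) or pass to the radical / to irreducible factors — i.e., essentially adopt the paper's decomposition.

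A secondary, repairable slip: the curve $D(\k,k^{2l})=0$ does not live in a disc of radius $O(k)$; it is a union of slightly perturbed circles of radius $\approx k$ centered at the points $-\p_\m$ with $\||\p_\m\||\leq k^{r_*}$, so the correct containment is $|\varkappa_j|<2k^{r_*}$. Your Cauchy--Crofton bound then yields $O(k^{(4l+1)r_*})$ unit-length pieces instead of $O(k^{4lr_*+1})$, which the exponent $17l$ still absorbs; the paper dispenses with Crofton altogether by noting that each monotone piece with slope at most $1$ has length at most $\sqrt{2}\cdot 4k^{r_*}$, which bounds the total length factor by factor.
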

\begin{proof}
First, $D(\k ,\lambda )$ can be represented as a product of simple (i.e.
irreducible) factors (counting multiplicity). The total number of
factors is less than $k^{4lr_*}$ which is also the bound for their
total degree. We consider one of such simple factors $P$ and denote
by $p$ its degree (note that we do ignore the multiplicity of the
factor). Let us consider the points
\begin{equation}\label{odinindapp}
P(\k)=0,\ \ \frac{\partial}{\partial \varkappa_2}P(\k)=0.
\end{equation}
Since $P$ is irreducible and $\frac{\partial}{\partial
\varkappa_2}P(\k)$ has degree less than $p$ (we also notice that
$\frac{\partial}{\partial \varkappa_2}P(\k)$ is not identically zero
since $D(\k,\lambda)=0$ does not contain lines) they do not have
common factors. Thus, the number of such points $\k$ does not exceed
$p(p-1)$. Next, by the same reasons the number of points
\begin{equation}\label{odinindapp-1}
P(\k)=0,\ \ \frac{\partial}{\partial \varkappa_1}P(\k)=0
\end{equation}
 does not exceed $p(p-1)$ and the number of points
\begin{equation}\label{dvaindapp}
P(\k)=0,\ \ \frac{\partial P}{\partial
\varkappa_2}(\k)=\pm\frac{\partial P}{\partial \varkappa_1}(\k)
\end{equation}
does not exceed $2p(p-1)$. We split each previous piece by such
points. Thus, we have at most $4p(p-1)+1$ pieces, each end
satisfying \eqref{odinindapp} or \eqref{dvaindapp}. The sign of $
(\frac{\partial}{\partial \varkappa_2}P)^2-(\frac{\partial}{\partial
\varkappa_1}P)^2$ is constant on each piece, i.e. the piece admits
parametrization as in the property 2 of Definition~\ref{elementary}.
Making parametrization by $\varkappa_1$ or $\varkappa_2$, depending
on the sign, we obtain that the length of a piece does not exceed
$\sqrt{2}\cdot 4k^{r_*}$ (obviously, $|\varkappa_j|<2k^{r_*}$).
Therefore the total length of the curve $P=0$ does not exceed
$18p^2k^{r_*}$. Next, for each piece where $\frac{\partial}{\partial
\varkappa_2}P(\k)\not=0$ inflection points of $P(\k)=0$ are
described by the system
\begin{equation}\label{starindapp}
P=0,\ \ P_{\varkappa_2\varkappa_2}(P_{\varkappa_1})^2-
2P_{\varkappa_1\varkappa_2}P_{\varkappa_1}P_{\varkappa_2}+
P_{\varkappa_1\varkappa_1}(P_{\varkappa_2})^2=0.
\end{equation}
Again, since $P$ is irreducible and no line is a solution, we have no
common factors here and can apply Bezout Theorem. The number of
points satisfying \eqref{starindapp} does not exceed
$p(2(p-1)+(p-2))=3p^2-4p$. Therefore, we have at most $12p^4$ pieces
with the ends satisfying \eqref{odinindapp} or \eqref{dvaindapp} or
\eqref{starindapp}. At last, we split each of these concave pieces
into pieces with the length not greater than $1$. Considering that
the total length of $P(\k)=0$ is less that $18p^2k^{r_*}$, we obtain that
the total number of elementary pieces does not exceed $18p^2k^{r_*}+12p^4$.
Taking the sum over all simple factors of $D$ we prove the lemma.
\end{proof}


\subsection{Appendix 7. On the Proof of Geometric Lemmas Allowing to Deal with Clusters instead of Boxes \label{App9}}

In the proof of Lemma~\ref{norm2/3} it is important that we deal
with the same curve generated by the determinant and just change the
argument $\k$. At the same time, a priori we have the estimates for
the resolvent of the operator reduced onto a particular cluster. The
form of clusters can vary which formally changes the projector and
thus the determinant and the curve. Here we explain how to deal with
this situation. We will show that every cluster (white, grey or
black) can be embedded into a box of the fixed size (depending on
the color of the cluster) such that the estimate for the resolvent
on this box is  essentially the same as for the embedded cluster. We
also notice that the estimate for the number of points of
$\MM^{(2)}$ inside these boxes is the same as the worst possible
estimate for the corresponding cluster (see
Lemmas~\ref{L:black},~\ref{L:grey},~\ref{L:white}). This justifies
the application of Lemma~\ref{norm2/3} in the proof of
Lemma~\ref{L:2/3-1ind}.

By construction, white clusters are separated from each other by the
distance no less than $k^{\gamma r_1/6}$. Grey and black clusters
are separated by the distance at least $k^{\gamma
r_1/2+2\delta_0r_1}$ and $k^{\gamma r_1+\delta_0r_1}$, respectively.
Consider first a white cluster. Let $\Pi_w$ be a singular white
cluster, namely,
\begin{equation}\label{ap9-1}
\|(P_w(H-k^{2l})P_w)^{-1}\|>k^\xi,\ \ \ \xi>k^{\gamma
r_1/6-2\delta},
\end{equation}
here and below $H=H(\k^{(2)}(\varphi_0))$, $P_w$ is the projector corresponding to $\Pi_w$. By construction,
$\Pi_w$ belongs to a small white box and its neighbors. Let us refer to
it as expanded small white box. Its size is $3k^{\gamma
r_1/2+2\delta_0r_1}$ and it contains less than $k^{\gamma
r_1/6-\delta_0r_1}$ elements of $\MM^{(2)}$.
\begin{lemma}\label{propw}
If \eqref{ap9-1} holds for a white cluster $\Pi_w$ then
\begin{equation}\label{ap9-2}
\|(P(H-k^{2l})P)^{-1}\|>ck^{k^{\gamma r_1/6-2\delta}},
\end{equation}
$P$ being the projector corresponding to the expanded small white
box $\Pi $ containing $\Pi_w$. The box  $\Pi $ has the size $3k^{\gamma
r_1/2+2\delta_0r_1}$ and contains less than $k^{\gamma
r_1/6-\delta_0r_1}$ elements of $\MM^{(2)}$.
\end{lemma}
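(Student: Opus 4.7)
The plan is to extend a near-null vector of $B := P_w(H-k^{2l})P_w$ to a near-null vector of $PAP$ (with $A := H-k^{2l}$) via a Schur-complement style correction, paying only a polynomial factor in $k$. Since the hypothesis provides singularity at the super-exponential scale $k^{-\xi}$ with $\xi > k^{\gamma r_1/6-2\delta}$, such a polynomial loss is absorbed into a marginally weaker super-exponential bound.

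First, I would decompose $P = P_w + P''$ and view
$$ PAP = \begin{pmatrix} B & D \\ D^* & C \end{pmatrix}, \qquad C := P''AP'',\quad D := P_w V P'', $$
noting that off-diagonal entries come only from $V$ since $H_0$ is diagonal. By self-adjointness of $H(\k^{(2)}(\varphi_0))$ for real $\varphi_0$, the hypothesis \eqref{ap9-1} yields a unit vector $\phi\in\mathrm{Ran}(P_w)$ with $\|B\phi\|<k^{-\xi}$. Next, by construction $\Pi\setminus\Pi_w$ contains only non-resonant $k^\delta$-clusters and subclusters that are not singular at the relevant substructure scale, so the block perturbation machinery from Steps~II--III (applied at the appropriate scale inside $\Pi$, in the style of the proof of Theorem~\ref{Thm3}) yields $\|C^{-1}\|\leq k^{C_0}$ with $C_0=C_0(\gamma,\delta,r_1)$ polynomial in $k$. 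Setting $\psi := -C^{-1}D^*\phi$, the identity $C\psi = -D^*\phi$ reduces the problem to
$$ PA(\phi+\psi) \;=\; B\phi + D\psi, \qquad \|D\psi\|\leq \|V\|\,\|C^{-1}\|\,\|D^*\phi\|, $$
so everything hinges on bounding $\|D^*\phi\| = \|P''V\phi\|$.

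The main obstacle is precisely this last bound: the trivial estimate $\|D^*\phi\|\leq \|V\|$ is useless, since combined with $\|C^{-1}\|\leq k^{C_0}$ it would swamp the super-exponentially small $\|B\phi\|$. What saves the argument is the spatial decay of the singular eigenvector $\phi$ away from the $\MM^{(2)}$-centers. Because $V$ is a trigonometric polynomial of degree $Q$, only components $\phi_\m$ with $\m$ within $\||\cdot\||$-distance $Q$ of $\partial\Pi_w$ contribute to $D^*\phi$; by construction $\Pi_w$ is a union of $k^{\gamma r_1/6}$-neighborhoods of the $\MM^{(2)}$-centers lying in the expanded small white box, so every such boundary $\m$ is at distance at least $k^{\gamma r_1/6}-Q$ from every resonant center. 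I would then show that $\phi$ lies (up to super-exponentially small error) in the range of the spectral projection of $B$ associated with eigenvalues in a neighborhood of $0$, and that this projection inherits the exponential decay estimates \eqref{matrix elements}, \eqref{Feb6b}, \eqref{Feb6b-3} of the iterative construction, now applied to the substructure inside $\Pi_w$ with the $\MM^{(2)}$-centers playing the role of origin points. This forces $|\phi_\m|\leq k^{-\alpha k^{\gamma r_1/6}}$ at the boundary for some $\alpha>0$, hence $\|D^*\phi\|\leq k^{-\tilde\xi}$ with $\tilde\xi$ of order $k^{\gamma r_1/6-\delta}$. Combining the three estimates gives $\|PA(\phi+\psi)\|\leq k^{-\xi'}\|\phi+\psi\|$ with $\xi'\geq k^{\gamma r_1/6-2\delta}$, which is \eqref{ap9-2}. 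The size bound on $\Pi$ is built into the definition of the expanded small white box, and the bound on $|\MM^{(2)}\cap \Pi|$ is immediate from the defining property of a small white box.
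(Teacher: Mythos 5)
Your Schur-complement scheme has two genuine gaps, and they sit exactly at the two places where you need quantitative input. First, the bound $\|C^{-1}\|\leq k^{C_0}$ for $C=P''(H-k^{2l})P''$, $P''=P-P_w$, is not available: the complement of $\Pi_w$ inside the expanded small white box is not made only of harmless clusters. It can contain $k^{\delta}$-clusters of $\MM\setminus\MM^{(2)}$ (these are deliberately absorbed into the white area near its boundary) and, more seriously, \emph{other} white clusters $\Pi_w^{j'}$ lying in the neighboring small boxes, and nothing in the hypothesis prevents such a cluster from being singular as well. In that case $C$ is itself nearly singular, $\psi=-C^{-1}D^*\phi$ is uncontrolled, and the whole correction collapses; there is no a priori estimate in the paper that you can quote for this block. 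Second, the decisive smallness of $\|D^*\phi\|=\|P''V\phi\|$ is asserted by appealing to \eqref{matrix elements}, \eqref{Feb6b}, \eqref{Feb6b-3}. Those are decay estimates for the spectral projections $\E^{(n)}(\k)$ built from \emph{convergent perturbation series at non-resonant} $\k$; they say nothing about a near-null vector of the singular block $P_wHP_w$, which is precisely a resonant object for which (as the paper stresses) no perturbation formulas exist. Proving that such a vector decays super-exponentially towards $\partial\Pi_w$ would itself require a separate expansion argument in the shell of $\Pi_w$, with careful bookkeeping for the absorbed $\MM\setminus\MM^{(2)}$ clusters whose local resolvents are only bounded by $ck^{40r_1'}$ (not by $k^{-\beta}$); so the crux of the lemma is not actually established in your proposal.

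For comparison, the paper's proof avoids both issues by arguing by contradiction and correcting with the \emph{full} box resolvent: assuming the negation of \eqref{ap9-2}, it sets $g=f-(P(H-k^{2l})P)^{-1}(P-P_w)Vf$, so that the identity $(P-P_w)(H_0-k^{2l})P_wf=0$ gives \emph{exactly} $P(H-k^{2l})Pg=P_w(H-k^{2l})P_wf=o(k^{-\xi})$ — no analogue of your $D\psi$ error and no decay of $f$ is ever needed. The only thing to check is $\|(P(H-k^{2l})P)^{-1}(P-P_w)Vf\|=o(1)$, and this is done by a finite block-resolvent expansion of length $r_0\sim k^{\gamma r_1/6-\delta}$ around the operator $\tilde H^{(2)}_w$ of $k^{\delta}$-clusters: since $(P-P_w)Vf$ is supported within $Q$ of $\partial\Pi_w$ and every singular cluster (the $\MM^{(2)}$-cores of $\Pi_w$, other white clusters, grey/black regions) lies at $\||\cdot\||$-distance at least $k^{\gamma r_1/6}$ from that support, the first $r_0$ terms never touch a singular block and each contributes a factor $\lesssim k^{-\beta}$ (with the proof of \eqref{||A||} handling the $k^{40r_1'}$-clusters), while the tail is controlled by the very bound assumed for contradiction, since $\beta r_0\gg k^{\gamma r_1/6-2\delta}$. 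If you want to salvage your route, you would have to prove both a resolvent bound for $P''$ (or restrict to the case where $\Pi_w$ is the unique singular cluster and treat the general case separately) and an independent decay lemma for near-null vectors of $P_wHP_w$; at that point you have essentially reconstructed the paper's argument in a more cumbersome form.
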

\begin{proof}
Assume \eqref{ap9-1} holds, but \eqref{ap9-2} does not. Let $f\in
P_w\ell^2$ be such that $\|f\|=1$, $P_w(H-k^{2l})f=o(k^{-\xi})$, $\xi =k^{\gamma
r_1/6-2\delta}$.
Let us define
$$
g:=f-(P(H-k^{2l})P)^{-1}(P-P_w)Vf.
$$
Now we have:
\begin{equation} \label{392}
\begin{split}&
P(H-k^{2l})Pg=P(H-k^{2l})f-(P-P_w)Vf=\cr &
 P_w(H-k^{2l})f+(P-P_w)(H-k^{2l})f -(P-P_w)Vf=\cr &
P_w(H-k^{2l})P_w f+(P-P_w)(H_0-k^{2l})P_wf=P_w(H-k^{2l})P_w f=o(k^{-\xi}).
\end{split}
\end{equation}
If we show that
\begin{equation}\label{ap9-3}
\|(P(H-k^{2l})P)^{-1}(P-P_w)Vf\|=o(1),
\end{equation}
which means $\|g\|=1+o(1)$, then the lemma easily follows by the way
of contradiction. Thus, it remains to prove \eqref{ap9-3}. Denote
$\tilde f :=(P-P_w)Vf$. Let $\tilde{H}^{(2)}_w$ be the operator
consisting of $k^{\delta }$-clusters in $\Pi $. Namely,
$\tilde{H}^{(2)}_w=\sum _iP_iHP_i$,  $P_i$ being projectors onto
$k^{\delta }$-clusters.
 Formally,
\begin{equation}\label{393}
\begin{split}&
(P(H-k^{2l})P)^{-1}\tilde f =
\sum\limits_{r=0}^{r_0}(\tilde{H}^{(2)}_w-k^{2l})^{-1}\left(-(H-\tilde{H}^{(2)}_w)(\tilde{H}^{(2)}_w-k^{2l})^{-1}\right)^r\tilde
f +\cr &
\left(P(H-k^{2l})P\right)^{-1}\left(-(H-\tilde{H}^{(2)}_w)(\tilde{H}^{(2)}_w-k^{2l})^{-1}\right)^{r_0+1}\tilde
f,\ \ \ r_0=[k^{\gamma r_1/6-\delta}]-1.
\end{split}
\end{equation}
Some of $k^{\delta }$-clusters $P_iHP_i$ are resonant. However, their distance to the boundary of any white cluster is greater than $k^{\gamma r_1/6}$. Using this fact and considering
as in the proof of
\eqref{||A||}, we obtain
$$
\left\|\left(-(H-\tilde{H}^{(2)}_w)(\tilde{H}^{(2)}_w-k^{2l})^{-1}\right)^r\tilde
f \right\|<ck^{-\beta r}\ \ \  \mbox{when  }\ \ r\leq r_0+1, $$
since $(\tilde f)_{\m}\neq 0$ only near the boundary of a white cluster.
Hence, the right hand part of \eqref{393} is well
defined.
Now, substituting the last estimate into \eqref{393}, applying the same
arguments as in the proof of Theorem~\ref{Thm2} and using the
estimate opposite to \eqref{ap9-2} we obtain \eqref{ap9-3}.
\end{proof}

For singular grey and black clusters the proof is very similar. So,
we just introduce corresponding objects and formulate the results.

Let $\Pi_g$ be a singular grey cluster, i.e.
\begin{equation}\label{ap9-4}
\|(P_g(H-k^{2l})P_g)^{-1}\|>k^\xi,\ \ \ \xi>k^{\gamma
r_1/2+2\delta_0r_1-2\delta},
\end{equation}
$P_g$ being the projector corresponding to $\Pi_g$. By construction,
$\Pi_g$ belongs to a big white box and its neighbors. We refer to it as
expanded big white box. Its size is $3k^{\gamma r_1}$ and it
contains less than $k^{\gamma r_1/2+\delta_0r_1}$ elements of
$\MM^{(2)}$.
\begin{lemma}\label{propg}
If \eqref{ap9-4} holds for a grey cluster $\Pi_g$, such that all the white clusters imbedded into it do not satisfy \eqref{ap9-2}, then
\begin{equation}\label{ap9-5}
\|(P(H-k^{2l})P)^{-1}\|>ck^{k^{\gamma r_1/2+2\delta_0r_1-2\delta}},
\end{equation}
$P$ being the projector corresponding to the expanded big white box $\Pi $
containing $\Pi_g$. The box $\Pi $ has the size  $3k^{\gamma r_1}$ and it
contains less than $k^{\gamma r_1/2+\delta_0r_1}$ elements of
$\MM^{(2)}$.

\end{lemma}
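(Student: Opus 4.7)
The plan is to follow exactly the scheme of Lemma \ref{propw}, promoting all scales by one color: the role of $k^\delta$-clusters (building blocks for $\tilde H^{(2)}_w$ there) will be played here by the white clusters embedded into the expanded big white box $\Pi$, together with the remaining non-resonant $k^\delta$-clusters of $\Pi$. The key new hypothesis is that every such white cluster is \emph{not} singular, i.e.\ does \emph{not} satisfy \eqref{ap9-2}; this is what will make the block-diagonal model operator invertible on the relevant scale.

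Argue by contradiction: assume \eqref{ap9-4} holds but \eqref{ap9-5} fails. Pick $f\in P_g\ell^2$ with $\|f\|=1$ and $P_g(H-k^{2l})f=o(k^{-\xi})$, $\xi = k^{\gamma r_1/2+2\delta_0 r_1-2\delta}$, and set
\begin{equation*}
g:=f-(P(H-k^{2l})P)^{-1}(P-P_g)Vf.
\end{equation*}
Since $P_gf=f$ and $(P-P_g)H_0P_g f = 0$ on the support of $f$ (as $P_g$ absorbs $H_0$ inside $\Pi$), the same computation as \eqref{392} yields
\begin{equation*}
P(H-k^{2l})Pg = P_g(H-k^{2l})P_g f = o(k^{-\xi}).
\end{equation*}
So, once we show $\|g\|=1+o(1)$, the norm of $(P(H-k^{2l})P)^{-1}$ must exceed $k^{\xi/2}$, contradicting the negation of \eqref{ap9-5}.

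The main (and only nontrivial) obstacle is therefore
\begin{equation*}
\|(P(H-k^{2l})P)^{-1}(P-P_g)Vf\|=o(1).\tag{$\star$}
\end{equation*}
Set $\tilde f := (P-P_g)Vf$, which is supported in the $Q$-neighborhood of $\partial\Pi_g$ (outside $\Pi_g$), hence at $\||\cdot\||$-distance at least $k^{\gamma r_1/6}$ from any point of the expanded big white box lying inside $\Pi_g$. Define the model operator
\begin{equation*}
\tilde H^{(2)}_g:=\sum_{j}P_{w,j}HP_{w,j}+\sum_{i}P_{i}HP_{i}+H_0\Bigl(P-\sum_{j}P_{w,j}-\sum_{i}P_{i}\Bigr),
\end{equation*}
where $P_{w,j}$ runs over white clusters inside $\Pi$ and $P_i$ over the remaining non-resonant $k^\delta$-clusters. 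By the negation of \eqref{ap9-2} for each white cluster and by the standard non-resonant estimate for $k^\delta$-clusters, we have
\begin{equation*}
\|(\tilde H^{(2)}_g-k^{2l})^{-1}\|\le k^{k^{\gamma r_1/6-2\delta}}.
\end{equation*}
Expand
\begin{equation*}
(P(H-k^{2l})P)^{-1}\tilde f=\sum_{r=0}^{r_0}(\tilde H^{(2)}_g-k^{2l})^{-1}\Bigl(-(H-\tilde H^{(2)}_g)(\tilde H^{(2)}_g-k^{2l})^{-1}\Bigr)^r\tilde f+R_{r_0+1},
\end{equation*}
with $r_0:=[\tfrac{1}{8}k^{\gamma r_1/2+2\delta_0 r_1-\delta}]-1$. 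Each application of $(H-\tilde H^{(2)}_g)(\tilde H^{(2)}_g-k^{2l})^{-1}$ transports mass by at most $k^\delta$ in $\||\cdot\||$, so for $r\le r_0$ the iterated vector cannot reach the inside of any white cluster and the estimates behind \eqref{||A||}/\eqref{March5} apply block by block, giving each factor norm $\le c k^{-\beta}$. Hence the finite sum is $O(k^{-\beta})$ (note $\|\tilde f\|=O(1)$ because $V$ is a bounded trigonometric polynomial). For the tail, $\|(H-\tilde H^{(2)}_g)(\tilde H^{(2)}_g-k^{2l})^{-1}\|^{r_0+1}\le k^{-\beta(r_0+1)}$, which is drastically smaller than the negation of \eqref{ap9-5}, i.e.\ $k^{k^{\gamma r_1/2+2\delta_0 r_1-2\delta}}$, because $\beta r_0\gg k^{\gamma r_1/2+2\delta_0 r_1-2\delta}$. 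This establishes $(\star)$.

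The hard part is arranging the bookkeeping in the perturbation series: one must verify that $r_0$ can be chosen so large that iterates of $(H-\tilde H^{(2)}_g)(\tilde H^{(2)}_g-k^{2l})^{-1}$ applied to $\tilde f$ do not cross into the interior of any imbedded white cluster (where $\tilde H^{(2)}_g$ would be singular on the scale of interest), yet still small enough that the tail bound beats the assumed norm of $(P(H-k^{2l})P)^{-1}$. The geometric separations recorded in Section~\ref{MOforStep3} (white clusters separated by at least $k^{\gamma r_1/6}$; $\Pi_g$ lives inside $\Pi$ with a margin of at least $\tfrac12 k^{\gamma r_1/2+2\delta_0 r_1}$ to the complementary boundary) together with $\delta_0=\gamma/100$ make both requirements simultaneously satisfiable, and the argument closes exactly as in Lemma \ref{propw}.
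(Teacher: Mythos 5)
Your overall scheme is the one the paper intends (it proves Lemma \ref{propg} by declaring it ``analogous to Lemma \ref{propw}'' with $P_w\to P_g$, $r_0\sim k^{\gamma r_1/2+2\delta_0 r_1-\delta}$, and the model operator built from non-resonant $k^{\delta}$-clusters and the non-singular white clusters), and your contradiction setup, the vector $g$, and the identity $P(H-k^{2l})Pg=P_g(H-k^{2l})P_gf$ all go through verbatim. However, there is a genuine gap in your justification of the key estimate $(\star)$: you claim that for all $r\le r_0$ the iterates of $(H-\tilde H^{(2)}_g)(\tilde H^{(2)}_g-k^{2l})^{-1}$ applied to $\tilde f$ ``cannot reach the inside of any white cluster,'' and on that basis bound every factor by $ck^{-\beta}$. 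The separation between $\operatorname{supp}\tilde f$ (a $Q$-neighborhood of $\partial\Pi_g$) and the nearest white cluster not absorbed into $\Pi_g$ is only of order $k^{\gamma r_1/6}$ (that is exactly the absorption threshold in the construction of Section \ref{MOforStep3}), whereas with step size $\sim k^{\delta}$ and your choice $r_0\sim\frac18 k^{\gamma r_1/2+2\delta_0 r_1-\delta}$ the support travels a distance $\sim k^{\gamma r_1/2+2\delta_0 r_1}\gg k^{\gamma r_1/6}$; moreover, once a white block is hit, its resolvent spreads the support across the whole cluster (size up to $k^{\gamma r_1/3-\delta_0 r_1}$), so the ``$k^{\delta}$ per step'' locality also fails there. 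Hence the uniform per-factor bound $ck^{-\beta}$, and with it your tail estimate ``$\beta r_0\gg k^{\gamma r_1/2+2\delta_0 r_1-2\delta}$,'' is unjustified as written. Note that you cannot repair this simply by shrinking $r_0$ to $\sim k^{\gamma r_1/6-\delta}$ so as to genuinely avoid white clusters: then the tail bound $k^{-\beta(r_0+1)}\cdot\|(P(H-k^{2l})P)^{-1}\|$ is of size $k^{-\beta k^{\gamma r_1/6-\delta}}\cdot k^{k^{\gamma r_1/2+2\delta_0 r_1-2\delta}}$, which is not small.

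What is missing is the quantitative use of the hypothesis that the embedded white clusters do \emph{not} satisfy \eqref{ap9-2}, i.e.\ their resolvents are bounded by $ck^{k^{\gamma r_1/6-2\delta}}$. The iterates must be allowed to cross white clusters, and the bookkeeping has to be organized as in the proof of \eqref{5-3} and in Appendix 4 (Lemma \ref{L:r_0}): between two consecutive visits to white blocks the path necessarily spends at least $\sim k^{\gamma r_1/6-\delta}$ factors in the non-resonant region (white clusters are $k^{\gamma r_1/6}$-separated), so each white-block ``loss'' $k^{k^{\gamma r_1/6-2\delta}}$ is dominated by the accumulated gain $k^{-\beta k^{\gamma r_1/6-\delta}}$; one also needs the counting via Lemma \ref{L:2/3-1} (as in Appendix 4) to ensure that traversals of white clusters cannot account for a substantial fraction of the distance, so that for $r=r_0+1$ the number of non-resonant factors is still of order $r_0$ and the tail beats the assumed bound $ck^{k^{\gamma r_1/2+2\delta_0 r_1-2\delta}}$ for $(P(H-k^{2l})P)^{-1}$. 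Without this white-crossing analysis the hypothesis on the white clusters is used only to define $(\tilde H^{(2)}_g-k^{2l})^{-1}$, and the core estimate of the lemma is not established.
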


The proof is analogous to that of Lemma \ref{propw} up to the
obvious changes: instead of $P_w$ we take $P_g$, $r_0=k^{\frac 12
\gamma r_1+2\delta _0 r_1-\delta }$ and $\tilde H^{(2)}_w$ is
replaced by $\tilde H^{(2)}_g$ which consists of $k^{\delta }$
non-resonance clusters and white clusters, which do not satisfy
\eqref{ap9-2}.

Let $\Pi_b$ be a singular black cluster, i.e.
\begin{equation}\label{ap9-6}
\|(P_b(H-k^{2l})P_b)^{-1}\|>k^\xi,\ \ \ \xi>k^{\gamma
r_1+\delta_0r_1-2\delta},
\end{equation}
$P_b$ being the projector corresponding to $\Pi_b$. By
Lemma~\ref{L:black} any black cluster can be covered by a box of the
size $ck^{3\gamma r_1/2+3}$ containing less than $ck^{\gamma r_1
+3}$ elements of $\MM^{(2)}$. We refer to it as expanded black box.
\begin{lemma}\label{propb}
If \eqref{ap9-6} holds for a black cluster $\Pi_b$, such that all the white and grey clusters imbedded into it do not satisfy \eqref{ap9-2},  \eqref{ap9-5}, then
\begin{equation}\label{ap9-7}
\|(P(H-k^{2l})P)^{-1}\|>ck^{k^{\gamma r_1 + \delta_0r_1-2\delta}},
\end{equation}
$P$ being the projector corresponding to the expanded black box
containing $\Pi_b$. The box  $\Pi $ has the size $ck^{3\gamma
r_1/2+3}$ and it contains less than $ck^{\gamma
r_1+3}$ elements of $\MM^{(2)}$.

\end{lemma}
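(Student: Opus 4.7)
The proof would run along the same lines as Lemmas~\ref{propw} and \ref{propg}, with three substitutions: $P_b$ replaces $P_w$; the cutoff $r_0$ is enlarged to match the wider frontier of a black cluster; and the auxiliary block operator now absorbs all $k^\delta$-non-resonant clusters together with every white cluster failing \eqref{ap9-2} and every grey cluster failing \eqref{ap9-5}. Specifically, arguing by contradiction, suppose \eqref{ap9-6} holds but \eqref{ap9-7} fails. Pick $f \in P_b\ell^2$ with $\|f\|=1$ and $\|P_b(H-k^{2l})P_b f\| = o(k^{-\xi})$, and define
\[
g := f - (P(H-k^{2l})P)^{-1}(P-P_b)Vf.
\]
The computation \eqref{392} transfers verbatim, since $(P-P_b)H_0 P_b = 0$ and $V$ has finite range: one gets $P(H-k^{2l})Pg = P_b(H-k^{2l})P_b f = o(k^{-\xi})$. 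The desired contradiction with the failure of \eqref{ap9-7} follows once we prove $\|g\|=1+o(1)$, i.e.
\begin{equation}\label{planbKEY}
\|(P(H-k^{2l})P)^{-1}(P-P_b)Vf\| = o(1).
\end{equation}

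To establish \eqref{planbKEY}, set $\tilde f := (P-P_b)Vf$ (supported in the $Q$-collar of $\partial \Pi_b$) and introduce
\[
\tilde H^{(2)}_b := \sum_i P_i H P_i + H_0\Bigl(P - \sum_i P_i\Bigr),
\]
where $\{P_i\}$ ranges over all $k^\delta$-non-resonant clusters, all white clusters inside $\Pi$ that violate \eqref{ap9-2}, and all grey clusters inside $\Pi$ that violate \eqref{ap9-5}. Each block thus satisfies $\|(P_i(H-k^{2l})P_i)^{-1}\| \le k^{k^{\gamma r_1/2+2\delta_0 r_1 - 2\delta}}$, and these blocks are separated from each other (and from the singular core carrying $f$) in $\|\,|\cdot|\,\|$-norm by at least $k^\delta$. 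One then expands
\begin{equation}\label{planbSERIES}
(P(H-k^{2l})P)^{-1}\tilde f = \sum_{r=0}^{r_0}(\tilde H^{(2)}_b - k^{2l})^{-1}\bigl[-W(\tilde H^{(2)}_b-k^{2l})^{-1}\bigr]^r \tilde f + R_{r_0},
\end{equation}
with $W := H - \tilde H^{(2)}_b$ and $r_0 := \bigl[\tfrac{1}{64}k^{\gamma r_1+\delta_0 r_1 - \delta}\bigr]-1$.

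The key step, which is the main obstacle, is the bound
\[
\bigl\|\bigl[-W(\tilde H^{(2)}_b - k^{2l})^{-1}\bigr]^r \tilde f\bigr\| \le k^{-\beta r},\qquad 1\le r \le r_0+1.
\]
This is a dualized form of the combinatorial estimate in Appendix~5 (Lemma~\ref{L:r_0-b}): starting from the $Q$-collar of $\partial \Pi_b$, the combined $\|\,|\cdot|\,\|$-thickness of the boundary layer of non-singular $k^\delta$-, white- and grey-clusters plus the intervening free steps cannot exceed $r_0 \cdot k^\delta$, so the $r$-th iterate of $W(\tilde H^{(2)}_b-k^{2l})^{-1}$ has not yet reached the singular bulk of $\Pi_b$ for $r \le r_0 + 1$, and each step picks up a $k^{-\beta}$ factor by the telescoping bounds \eqref{5-0}--\eqref{5-4} from the proof of Theorem~\ref{Thm3}, now available because every block met along the way is non-singular by hypothesis.

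Finally, the remainder in \eqref{planbSERIES} is estimated using the negation of \eqref{ap9-7}:
\[
\|R_{r_0}\| \le \|(P(H-k^{2l})P)^{-1}\|\cdot k^{-\beta(r_0+1)} \le k^{k^{\gamma r_1+\delta_0 r_1 - 2\delta}}\cdot k^{-\beta r_0} = o(1),
\]
since $\beta r_0 \sim k^{\gamma r_1+\delta_0 r_1 - \delta} \gg k^{\gamma r_1+\delta_0 r_1 - 2\delta}$. The finite sum in \eqref{planbSERIES} is dominated by a geometric series with ratio $k^{-\beta}$ acting on $\tilde f$, whose norm is itself suppressed by the boundary localization imposed by $(P-P_b)V$; combining this with the $k^{-\beta}$ per step yields \eqref{planbKEY} and, through the displayed identity for $Pg$, the sought contradiction. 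The entire argument therefore reduces to properly adapting Appendix~5's counting estimate to control the inward spread of $W(\tilde H^{(2)}_b-k^{2l})^{-1}$; once that is done, the rest is a line-by-line transcription of the proof of Lemma~\ref{propw} with the grey bound \eqref{ap9-5} additionally available.
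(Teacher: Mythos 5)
Your proposal is correct and follows essentially the same route the paper intends: the paper only proves Lemma~\ref{propw} in detail and treats the grey and black cases as the "very similar" modifications you reconstruct — replace $P_w$ by $P_b$, enlarge $r_0$ to the Appendix~5 scale $\sim\frac{1}{64}k^{\gamma r_1+\delta_0 r_1-\delta}$ from Lemma~\ref{L:r_0-b}, and take the block operator $\tilde H^{(2)}_b$ built from non-resonant $k^{\delta}$-clusters together with the white and grey clusters that do not satisfy \eqref{ap9-2}, \eqref{ap9-5}, exactly as in the stated modification for Lemma~\ref{propg}. Your "one $k^{-\beta}$ per step" phrasing should be read as the grouped-cluster accounting of Appendix~5 combined with the boundary-to-boundary estimates of the type \eqref{5-0}--\eqref{5-4} (available here since every block met is non-singular by hypothesis), which is precisely the mechanism the paper itself relies on, so there is no substantive deviation.
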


\end{document}